
\newif\ifabstract
\abstracttrue
\newif\iffull
\ifabstract \fullfalse \else \fulltrue \fi

\documentclass[11pt]{article}
\usepackage{amsfonts}
\usepackage{amssymb}
\usepackage{amstext}
\usepackage{amsmath}
\usepackage{xspace}
\usepackage{ntheorem}
\usepackage{graphicx}
\usepackage{url}
\usepackage{graphics}
\usepackage{colordvi}
\usepackage{hyperref}
\usepackage{subfigure}
\usepackage{xcolor}
\usepackage{cleveref}

\textheight 9.3in \advance \topmargin by -1.0in \textwidth 6.7in
\advance \oddsidemargin by -0.8in
\newcommand{\myparskip}{3pt}
\parskip \myparskip

\newcommand{\td}{\tilde d}
\newcommand{\BC}{BC}

\newcommand{\constructexpander}{{\textsc{ConstructExpander}}\xspace}
\newcommand{\tk}{\tilde k}

\newcommand{\Otilde}{\widetilde O}

\newcommand{\HSS}{Hierarchical Support Structure\xspace}

\renewcommand{\cong}{\eta}
\newcommand{\transcript}{\mathbb{T}}

\newcommand{\matchorcut}{{\textsc{MatchOrCut}}\xspace}

\newcommand{\MMF}{\mbox{\sf{Maximum Multicommodity Flow}}\xspace}
\newcommand{\MM}{\mbox{\sf{Minimum Multicut}}\xspace}
\newcommand{\SSSP}{\mbox{\sf{SSSP}}\xspace}
\newcommand{\APSP}{\mbox{\sf{APSP}}\xspace}

\newcommand{\deledge}{\ensuremath{\operatorname{DeleteEdge}}\xspace}
\newcommand{\delexpedge}{\ensuremath{\operatorname{DeleteExpanderEdge}}\xspace}





\newcommand{\reducedegree}{{\textsc{ReduceDegree}}\xspace}

\newcommand{\ceil}[1]{\ensuremath{\left\lceil#1\right\rceil}}
\newcommand{\floor}[1]{\ensuremath{\left\lfloor#1\right\rfloor}}


\newcommand{\opt}{\mathsf{OPT}}
\newcommand{\hG}{\hat G}
\newcommand{\makecanonical}{{\textsc{MakeCanonical}}\xspace}
\newcommand{\set}[1]{\left\{ #1 \right\}}

\newcommand{\tset}{{\mathcal T}}
\newcommand{\iset}{{\mathcal{I}}}
\newcommand{\pset}{{\mathcal{P}}}
\newcommand{\qset}{{\mathcal{Q}}}

\newcommand{\cset}{{\mathcal{C}}}

\newcommand{\mset}{{ M}}
\newcommand{\jset}{{\mathcal{J}}}

\newcommand{\rset}{{\mathcal{R}}}

\newcommand{\hset}{{\mathcal{H}}}
\newcommand{\gset}{{\mathcal{G}}}
\newcommand{\sset}{{\mathcal{S}}}

\newcommand{\dset}{{\mathcal{D}}}


\newcommand{\be}{\begin{enumerate}}
\newcommand{\ee}{\end{enumerate}}
\newcommand{\bd}{\begin{description}}
\newcommand{\ed}{\end{description}}
\newcommand{\bi}{\begin{itemize}}
\newcommand{\ei}{\end{itemize}}

\newtheorem{theorem}{Theorem}[section]
\newtheorem{lemma}[theorem]{Lemma}
\newtheorem{observation}[theorem]{Observation}
\newtheorem{corollary}[theorem]{Corollary}
\newtheorem{claim}[theorem]{Claim}

\newtheorem{definition}{Definition}[section]
\newenvironment{proof}{\par \smallskip{\bf Proof:}}{\hfill\stopproof}
\def\stopproof{\square}
\def\square{\vbox{\hrule height.2pt\hbox{\vrule width.2pt height5pt \kern5pt
\vrule width.2pt} \hrule height.2pt}}

\newenvironment{proofof}[1]{\noindent{\bf Proof of #1.}}%
        {\hfill\stopproof}


\newenvironment{prog}[1]{
\begin{minipage}{5.8 in}
\begin{center}
{\sc #1}
\end{center}
}
{
\end{minipage}
}

\newcommand{\program}[3]{\begin{figure} \fbox{\vspace{2mm}\begin{prog}{#1} #3 \end{prog}\vspace{2mm}} 
			\caption{#1 \label{#2}} \end{figure}}

\renewcommand{\phi}{\varphi}
\newcommand{\eps}{\epsilon}

\newcommand{\half}{\ensuremath{\frac{1}{2}}}
\global\long\def\Otil{\tilde{O}}

\newcommand{\poly}{\operatorname{poly}}
\newcommand{\dist}{\mbox{\sf dist}}
\newcommand{\diam}{\mbox{\sf diam}}


\newenvironment{properties}[2][0]
{
\begin{enumerate} \setcounter{enumi}{#1}}{\end{enumerate}}

\setlength{\parskip}{2mm} \setlength{\parindent}{0mm}

\newcommand{\mynote}[2][red]{\textcolor{red}{\sc\bf{[#2]}}}


\newcommand{\vol}{\operatorname{Vol}}

\newcommand{\proccut}{\ensuremath{\mathsf{ProcCut}}\xspace}
\newcommand{\extractexpander}{\ensuremath{\mathsf{AlgExtractExpander}}\xspace}
\newcommand{\procsep}{\ensuremath{\mathsf{ProcSeparate}}\xspace}
\newcommand{\procpathpeel}{\ensuremath{\mathsf{ProcPathPeel}}\xspace}
\newcommand{\procpart}{\ensuremath{\mathsf{ProcPartition}}\xspace}

\newcommand{\distquery}{\mbox{\sf{dist-query}}\xspace}

\newcommand{\EST}{\mbox{\sf{ES-Tree}}\xspace}

\newcommand{\CMG}{{\sf{Cut-Matching Game}}\xspace}
\newcommand{\DMG}{{\sf{Distanced Matching Game}}\xspace}

\newcommand{\MBSC}{{\sf{Most Balanced Sparse Cut}}\xspace}
\newcommand{\sparsest}{{\sf{Sparsest Cut}}\xspace}
\newcommand{\mbc}{{\sf{Minimum Balanced Cut}}\xspace}
\newcommand{\mbcwc}{{\sf{Minimum Balanced Cut with Certificate}}\xspace}
\newcommand{\lowcond}{{\sf{Lowest Conductance Cut}}\xspace}
\newcommand{\expanderdec}{{\sf{Expander Decomposition}}\xspace}








\begin{document}

\begin{titlepage}
	
	\title{A Distanced Matching Game, Decremental APSP in Expanders, and Faster Deterministic Algorithms for Graph Cut Problems\footnote{SODA 2023, to appear.}}
	\author{Julia Chuzhoy\thanks{Toyota Technological Institute at Chicago. Email: {\tt cjulia@ttic.edu}. Supported in part by NSF grant CCF-2006464.}}
	\maketitle
\pagenumbering{gobble}
	
	\thispagestyle{empty}
	\begin{abstract}
		Expander graphs play a central role in graph theory and algorithms. With a number of powerful algorithmic tools developed around them, such as the Cut-Matching game, expander pruning, expander decomposition, and algorithms for decremental All-Pairs Shortest Paths (APSP) in expanders, to name just a few, the use of expanders in the design of graph algorithms has become ubiquitous. Specific applications of interest to us are fast deterministic algorithms for cut problems in static graphs, and algorithms for dynamic distance-based graph problems, such as APSP.
		
		Unfortunately, the use of expanders in these settings incurs a number of drawbacks. For example, the best currently known algorithm for decremental APSP in constant-degree expanders can only achieve a $(\log n)^{O(1/\eps^2)}$-approximation with  $n^{1+O(\eps)}$ total update time for any $\eps$.
		All currently known algorithms for the Cut Player in the Cut-Matching game are either randomized, or provide rather weak guarantees: expansion  $1/(\log n)^{1/\eps}$ with running time $n^{1+O(\eps)}$. This, in turn, leads to somewhat weak  algorithmic guarantees for several central cut problems:  the best current almost linear time deterministic algorithms for Sparsest Cut, Lowest Conductance Cut, and Balanced Cut can only achieve approximation factor $(\log n)^{\omega(1)}$. Lastly, when relying on expanders in distance-based problems, such as dynamic APSP, via current methods, it seems inevitable that one has to settle for approximation factors that are at least $\Omega(\log n)$. In contrast, we do not have any negative results that rule out any super-constant approximation with almost linear total update time.
		
		In this paper we propose the use of well-connected graphs, and introduce a new algorithmic toolkit for such graphs that, in a sense, mirrors the above mentioned algorithmic tools for expanders. 
		One of these new tools is the Distanced Matching game, an analogue of the Cut-Matching game for well-connected graphs. 
		We demonstrate the power of these new tools by obtaining  better results for several of the problems mentioned above.
		First, we design an algorithm for decremental APSP in expanders with significantly better guarantees: in a constant-degree expander, the algorithm achieves 
		$(\log n)^{1+o(1)}$-approximation, with total update time  $n^{1+o(1)}$.
We also obtain a deterministic algorithm for the Cut Player in the Cut-Matching game that achieves expansion $\frac{1}{(\log n)^{5+o(1)}}$ in time $n^{1+o(1)}$,  deterministic almost linear-time algorithms for Sparsest Cut, Lowest-Conductance Cut, and Minimum Balanced Cut with approximation factors $O(\poly\log n)$, as well as an improved deterministic algorithm for Expander Decomposition. We believe that the use of well-connected graphs instead of expanders in various dynamic distance-based problems (such as APSP in general graphs) has the potential of providing much stronger guarantees, since we are no longer necessarily restricted to superlogarithmic approximation factors.

	\end{abstract}
\end{titlepage}

\tableofcontents{}

\newpage

\pagenumbering{arabic}

\section{Introduction}

Expander graphs are a central graph theoretic object that has been studied extensively, and they are frequently used in the design of graph algorithms.  In recent years, a number of powerful algorithmic tools have been developed around expanders, including, for example, the \CMG of \cite{KRV}, expander pruning of \cite{expander-pruning}, \expanderdec, and algorithms for decremental All-Pairs Shortest Paths (\APSP) in expanders (see e.g.  \cite{APSP-old,APSP-previous}), to name just a few. This powerful algorithmic toolkit has led to many new algorithms for optimization problems in graphs. In this paper we are most interested in applications to fast deterministic algorithms for classical cut problems, such as  \sparsest, \lowcond, and \mbc, and to dynamic algorithms, especially for distance-based graph problems, such as \APSP. 

Unfortunately, the use of expander graphs with the currently available expander-related tools incurs a number of drawbacks in these settings. While the results of \cite{KRV} provide a near-linear time randomized algorithm for the Cut Player in the \CMG, which guarantees that the resulting graph is an $\Omega(1)$-expander, in the regime of deterministic algorithms, the best currently known results are significantly weaker: the algorithm of \cite{detbalanced} provides an implementation of the Cut Player in time $O\left (n^{1+O(\eps)}\right ) $, but only guarantees expansion of $\frac{1}{(\log n)^{1/\eps}}$ for a sufficiently large $\eps$.  As a result, despite having randomized  algorithms for \sparsest, \lowcond, and \mbc, that achieve $O(\log^2 n)$-approximation in $m^{1+o(1)}$ time, the best current deterministic algorithms for these problems with running time $m^{1+o(1)}$ can only achieve approximation factor $(\log n)^{\omega(1)}$. Similar issues, that we discuss in more detail below, arise in algorithms for \expanderdec. We note that  all these cut problems are used extensively, including in settings (such as, e.g. dynamic algorithms), where deterministic algorithms are especially desirable. In recent years, the above mentioned expander-based tools have also found many applications in dynamic algorithms; for now we focus on decremental All-Pairs Shortest Paths (\APSP). In this context, the use of expanders with the currently available algorithmic tools also has several drawbacks. First, in a typical use of expanders in dynamic \APSP, one cannot obtain a better than $\Theta(\log n)$ approximation factor; we discuss this in more detail below. Second, the use of expanders usually involves either the \CMG, or \expanderdec, which suffer from the drawbacks mentioned above. Lastly, in most uses of expander graphs for dynamic \APSP, one needs to rely on algorithms for decremental \APSP in expanders. However, the best such current algorithm only provides a rather weak tradeoff between the approximation factor and total update time: for bounded-degree expanders, the results of \cite{APSP-old,APSP-previous} achieve  $(\log n)^{O(1/\eps^2)}$-approximation with total update time $ n^{1+O(\eps)} $, where $\eps$ is a given precision parameter.

In this paper, we propose to study a different kind of graphs, that we call \emph{well-connected} graphs. Given an $n$-vertex graph $G$,  a set $S$ of its vertices called \emph{supported vertices}, and parameters $\eta,d>0$, we say that graph $G$ is $(\eta,d)$-well-connected with respect to $S$, if, for every pair $A,B\subseteq S$ of disjoint equal-cardinality subsets of supported vertices, there is a collection $\pset$ of paths in  $G$, that connect every vertex of $A$ to a distinct vertex of $B$, such that the length of each path in $\pset$ is at most $d$, and every edge of $G$ participates in at most $\eta$ paths in $\pset$. For intuition, it would be convenient to think of $d=2^{\poly(1/\eps)}$, $\eta=n^{O(\eps)}$, and $|S|\geq |V(G)|-n^{1-\eps}$, for some parameter $0<\eps<1$. In the discussion below, we will informally refer to a graph $G$ that is $(\eta,d)$-well-connected with respect to a set $S$ of its vertices, with the above setting of parameters, as a \emph{well-connected graph}. We develop an algorithmic toolkit for well-connected graphs that is, in a sense, analogous to some of the tools that are known for expander graphs. We then show that using well-connected graphs, together with these new algorithmic tools, allows us to overcome many of the            hurdles mentioned above. For example, we obtain deterministic almost linear-time $O(\poly\log n)$-approximation algorithms for \sparsest, \lowcond, and \mbc, as well as a better deterministic algorithm for the Cut Player in the \CMG, and better algorithms for \expanderdec and decremental \APSP in expanders. 
While our algorithmic toolkit immediately leads to strengthening the currently known algorithmic tools for expander graphs, it is our hope that it will eventually lead to replacing expanders with well-connected graphs in some of the applications mentioned above.

We note that a very recent paper of
\cite{haeupler2022hop} introduced a graph-theoretic object, called \emph{$h$-hop expander}, that appears closely related to the notion of well-connected graphs. Informally, a graph $G$ is an $h$-hop $\phi$-expander, if any unit demand between pairs of its vertices that are within distance at most $h$ from each other, can be routed via paths of length comparable to $h$, with congestion $\tilde O(1/\phi)$. Like with standard expanders, $h$-hop $\phi$-expanders can also be defined via a dual notion of cuts (called \emph{moving cuts} in this case), and  \cite{haeupler2022hop}, building on the work of \cite{haeupler2020network}, established a tight relationship between the two definitions, somewhat similar to the standard notion of the flow-cut gap. They also provide an algorithm to compute an $h$-hop expander decomposition of a given graph. While the notions of $h$-hop expanders and well-connected graphs seem similar in spirit, there are some technical differences. At the same time, an $h$-hop expander with small diameter is essentially a special case of a well-connected graph.
We also note that in this paper, our main focus is on routing via very short paths, of sublogarithmic length, and on deterministic algorithms, while \cite{haeupler2022hop} consider randomized algorithms for (oblivious) routing on paths of length $O(h\poly\log n)$.

The remainder of the Introduction is organized as follows. We start by describing the algorithmic tools that we developed for well-connected graphs. We then provide an overview of several applications of these results to static graph cut problems, and then discuss new results and potential future uses of these new algorithmic tools in dynamic \APSP.

Before we continue, we need to introduce some notation. Given a graph $G$, a \emph{cut} is a partition $(A,B)$ of its vertices into non-empty subsets. The \emph{sparsity} of the cut $(A,B)$ is $\frac{|E(A,B)|}{\min\set{|A|,|B|}}$. 
Throughout this paper, we say that a graph $G$ is a $\phi$-expander, if every cut in $G$ has sparsity at least $\phi$. 
All graphs discussed in this paper are unweighted and undirected, unless stated otherwise. We will informally say that the running time or a total update time of an algorithm is \emph{almost linear}, if it is bounded by $O(m^{1+o(1)})$, where $m$ is the number of edges in the input graph (or the initial number of edges, if the graph is decremental).
Given a graph $G$ and a subset $S$ of its vertices, the \emph{volume} of $S$, denoted by $\vol_G(S)$, is the sum of degrees of all vertices in $S$. The \emph{volume of the graph} $G$, denoted by $\vol(G)$, is $\vol(G)=\vol_G(V)=2|E(G)|$. Given a collection $\mset$ of pairs of vertices in graph $G$, and a collection $\pset$ of paths in $G$, we say that the paths in $\pset$ \emph{route} the pairs in $\mset$ if, for every pair $(u,v)\in \mset$ of vertices, there is a path in $\pset$ whose endpoints are $u$ and $v$. The \emph{congestion} of a collection $\pset$ of paths in a graph $G$ is the largest number of paths that contain the same edge.

\subsection{Algorithmic Tools for Well-Connected Graphs.}
 
\paragraph{The Distanced Matching Game.} 
The first tool that we develop is a \emph{\DMG}, that can be viewed as an analogue of the \CMG of \cite{KRV} for well-connected graphs. The input to the \DMG consists of an integer $n$, a distance parameter $d$, and another parameter $0<\delta<1$. The game uses the notion of a \emph{distancing}. Given an $n$-vertex graph $G$, a $(\delta, d)$-distancing for $G$ is a triple $(A,B,E')$, where $A$ and $B$ are disjoint equal-cardinality subsets of vertices of $G$ with $|A|\geq n^{1-\delta}$, and $E'$ is a subset of edges with $|E'|\leq |A|/16$. We require that the length of the shortest path connecting a vertex of $A$ to a vertex of $B$ in $G\setminus E'$ is at least $d$.

The \DMG is played between a \emph{Distancing Player} and a \emph{Matching Player}. Similarly to the \CMG, we start with a graph $H$ that contains $n$ vertices and no edges, and then iteratively add edges to $H$. In each iteration $i$, the Distancing Player needs to compute a $(\delta,d)$-distancing $(A_i,B_i,E'_i)$ in the current graph $H$, if it exists. The matching player then needs to compute an arbitrary matching $M_i\subseteq A_i\times B_i$, of cardinality at least $|A_i|/4$. The edges of $M_i$ are added to graph $H$, and we continue to the next iteration. The game terminates when no $(\delta,d)$-distancing exists in $H$ (alternatively, we may terminate it earlier, if we establish that graph $H$ has some desired properties, such as, e.g. it is well-connected).

Our first result shows that, if $d\geq 2^{4/\delta}$, then the number of iterations in the \DMG is bounded by $n^{8\delta}$, regardless of the strategies of the two players. We note that this is significantly larger than the number of iterations in the \CMG, which is typically bounded by $O(\poly\log n)$ (see, e.g. \cite{KRV,KhandekarKOV2007cut}). However, as we show later, we gain an advantage that, under some specific strategy of the Distancing Player that we provide below, the resulting graph $H$ is well-connected, and the distances between the vertices lying in the set $S$ of supported vertices are quite low, as opposed to the (super)-logarithmic distances that expander graphs guarantee.
We also note that, while it appears that the bound on the number of iterations is close to being tight, it is not clear whether the exponential dependence of the distance parameter $d$ on $1/\delta$ in our result is necessary.

We note that, in a very recent independent work, \cite{hop-constr-CMG} suggested and analyzed a variant of the \CMG for constructing $h$-hop expanders, whose diameter is small. The resulting expanders are similar to well-connected graphs that we considere here.

\paragraph{Hierarchical Support Structure and an algorithm for the Distancing Player.}
Our next result provides a deterministic algorithm for the Distancing Player. Suppose we are given a parameter $n$, and a precision parameter $\eps$. Let $\delta=4\eps^3$ and $d=2^{32/\eps^4}$, and let $H$ be an $n$-vertex graph, that we can think of as arising during the execution of the \DMG. The algorithm either computes a $(\delta,d)$-distancing in graph $H$, or it computes a large set $S\subseteq V(H)$ of its vertices, such that $H$ is $(\eta,\tilde d)$-well-connected for $S$, where $\eta=n^{O(\eps)}$ and $\tilde d=2^{O(1/\eps^5)}$. The running time of the algorithm is $O(|E(H)|^{1+O(\eps)})$.

In fact the above algorithm provides  stronger guarantees. We define the notion of a Hierarchical Support Structure for an $n$-vertex graph $H$. Informally, the structure consists of a collection $\set{H_1,\ldots,H_r}$ of $r=\Omega(n^{\eps})$ graphs, where $|V(H_i)|=\ceil{n^{1-\eps}}$ for all $i$, and an embedding of graph $\bigcup_{i=1}^rH_i$ into $H$ via short paths that cause low congestion. For each one of the graphs $H_i$, we must in turn be given a Hierarchical Support Structure, that can be used to define a set $S(H_i)$ of supported vertices for $H_i$; graph $H_i$ must be well-connected with respect to $S(H_i)$. We then let $S(H)=\bigcup_iS(H_i)$ be the set of supported vertices for graph $H$.
We note that a somewhat similar notion was (implicitly) used in \cite{detbalanced,APSP-old} in the context of expander graphs: that is, all graphs in the Hierarchical Support Structure were required to be expanders. 

Our deterministic algorithm for the Distancing Player either provides the desired $(\delta,d)$-distancing in graph $H$, or it constructs a \HSS for $H$, so that $H$ is well-connected with respect to the resulting set $S(H)$ of vertices. In all our subsequent results, we use the \DMG with this implementation of the Distancing Player. This ensures that, once the algorithm terminates, we obtain a \HSS for graph $H$, and a guarantee that $H$ is well-connected with respect to $S(H)$.

\paragraph{APSP in Well-Connected Graphs.}
The \HSS provides a convenient basis for decremental \APSP in well-connected graphs obtained via the \DMG. Indeed, suppose we are given such a graph $H$, that undergoes an online sequence of edge deletions. As edges are deleted from $H$, the graph may no longer be well-connected with respect to $S(H)$. We design a deterministic algorithm for decremental \APSP that, given such a graph $H$, can withstand up to $|V(H)|^{1-\Theta(\eps)}$ edge deletions, as it maintains a set $S'(H)\subseteq S(H)$ of supported vertices. Over the course of the algorithm, vertices may leave $S'(H)$ but they may not join it, and $|S'(H)|\geq |V(H)|/2^{O(1/\eps)}$ holds at all times. The algorithm supports short-path queries between pairs of vertices in $S'(H)$: given a pair $x,y\in S'(H)$ of such vertices, it must return a path $P$ of length at most $2^{O(1/\eps^6)}$ connecting them in the current graph $H$, in time $O(|E(P)|)$. The total update time of the algorithm is $O( |E(H)|^{1+O(\eps)})$. 
This algorithm can be thought of as mirroring similar algorithms for \APSP in expanders of \cite{APSP-old,APSP-previous} that we discuss below. For comparison, the best previous algorithm for \APSP in bounded-degree expanders could only report paths whose length is bounded by $(\log n)^{O(1/\eps^2)}$ in response to short-path queries, with total update time is $O\left (n^{1+O(\eps)}\right )$, though it could withstand a longer sequence of edge deletions. Interestingly, the algorithmic tools presented so far can also be used to obtain better algorithms for \APSP in expanders, as we discuss below. We note that it is currently not clear to us whether the exponential dependence of the lengths of the paths returned in response to short-path queries on $\poly(1/\eps)$ in our results is necessary. Improving this dependence may lead to further improvements in  applications discussed below.

\subsection{Applications to Other Algorithmic Tools.}

We  use the machinery that we have developed in order to design better implementations of existing algorithmic tools. The first such tool that we discuss is the \CMG of \cite{KRV}.

{\bf Cut-Matching Game.}
The \CMG was initially introduced by \cite{KRV}, as part of their fast approximation algorithms for \mbc and \sparsest. The game has a single parameter $n$, that is an even integer, and it is played between two players: a Cut Player and a Matching Player. The purpose of the game is to construct an $n$-vertex expander. The game starts with a graph $H$ containing $n$ vertices and no edges, and in every iteration edges are added to $H$, until we can certify that it becomes an expander. 
In the original version of the game suggested by \cite{KRV}, in every iteration $i$, the cut player needs to compute a partition of $V(H)$ into two equal-cardinality subsets $A_i$ and $B_i$, and the matching player needs to return an arbitrary perfect matching $\mset_i\subseteq A_i\times B_i$. The edges of $\mset_i$ are then added to graph $H$, and the algorithm continues to the next iteration. In their paper, \cite{KRV} provided an algorithm for the Cut Player, that computes, in every iteration, a partition $(A_i,B_i)$ of $V(H)$, so that, regardless of the strategy of the Matching Player, the algorithm is guaranteed to terminate with an $\Omega(1)$-expander after $O(\log^2n)$ iterations. 
While the algorithm of \cite{KRV} for the Cut Player is very efficient -- its running time is $\tilde O(n)$, it is unfortunately randomized, and it is unclear how to derandomize it, if our goal is to obtain an algorithm with an almost linear running time. 

Instead, we consider a variant of the \CMG due to \cite{KhandekarKOV2007cut}, that was also studied in \cite{detbalanced}. In this variant, in every iteration $i$, the Cut Player must either compute a partition $(A'_i,B'_i)$ of $V(H)$ with $|A'_i|\geq |B'_i|\geq n/4$ and $|E_H(A'_i,B'_i)|\leq n/10$; or it must compute a subset $X\subseteq V(H)$ of at least $n/2$ vertices, so that graph $H[X]$ is a $\phi$-expander (and we would like $\phi$ to be as close to $1$ as possible). In the former case, 
the matching player must compute an arbitrary partition $(A_i,B_i)$ of $V(H)$ with $|A_i|=|B_i|$ and $B'_i\subseteq B_i$, together with a perfect matching $\mset_i\subseteq A_i\times B_i$. The edges of $\mset_i$ are added to $H$, and we continue to the next iteration. In the latter case, the matching player must compute an arbitrary matching $\mset_i\subseteq X\times V(H)\setminus X$, of cardinality $|V(H)\setminus X|$. The edges of $\mset_i$ are then added to graph $H$, which is now guaranteed to be a $\phi/2$-expander, and the algorithm terminates. As shown by \cite{KhandekarKOV2007cut}, this variant of the game must terminate after $O(\log n)$ iterations\footnote{In fact the game presented here is a slight modification of the game suggested by \cite{KhandekarKOV2007cut}. A formal proof that the number of iterations in this variations is still bounded by $O(\log n)$ appears in \cite{detbalanced}.}. Note that the question of obtaining a fast deterministic algorithm for the Cut Player in this variation of the game leads to a sort of chicken and egg situation: the Cut Player essentially needs to solve a \mbc problem on the current graph $H$, and solving this problem efficiently typically requires running the \CMG, which in turn requires an efficient implementation of the Cut Player. 

In \cite{detbalanced}, a deterministic algorithm for the Cut Player was presented for the above setting. For a given precision parameter $\eps\geq \frac{\log\log n}{(\log n)^{1/2}}$, the algorithm has running time 
 $O\left (n^{1+O(\eps)}\right )$, and it ensures that the resulting graph has expansion $\phi\geq 1/(\log n)^{1/\eps}$. The algorithm proceeds by constructing a hierarchical system of expanders that are embedded into $H$, similarly to our \HSS, except that expanders are used instead of well-connected graphs. Unfortunately, it appears that the use of expanders forces one to lose a polylogarithmic in $n$ factor in the expansion (or alternatively, in the length of the embedding paths) with every recursive level, which eventually leads to a rather weak expansion guarantee. In this paper we design a deterministic algorithm for the Cut Player
 with running time $n^{1+o(1)}$, that ensures that the resulting graph has expansion $1/(\log n)^{5+o(1)}$.
Instead of using the approach of \cite{detbalanced}, we rely on another algorithmic tool introduced in this paper, that we call \emph{advanced path peeling}. We believe that this algorithmic tool is of independent interest, and in fact it can be used in order to either embed an expander graph into a given input graph $G$, or compute a sparse cut in $G$, thereby completely bypassing the \CMG. Before we provide more details on advanced path peeling, we briefly mention a typical implementation of the Matching Player in the \CMG, since it is related to the path peeling technique.

\paragraph{Path Peeling Algorithms.}
Typically, a \CMG is used in order to either embed a large expander graph $H$ into a given input graph $G$, or to compute a sparse cut in $G$. The game is played on graph $H$, that initially contains the set $V(G)$ of vertices and no edges. The Cut Player is implemented using one of the algorithms described above. In order to implement a Matching Player, in each iteration $i$, we try to construct a large matching $\mset_i\subseteq A_i\times B_i$, and its routing $\pset_i$ in $G$ via short paths that cause low congestion. Matching $\mset_i$ is then used in order to compute the response of the Matching Player. Paths in $\pset_i$ also define an embedding of the edges of $\mset_i$ into $G$, and so, as the algorithm progresses and new edges are added to  $H$, we  maintain an embedding of $H$ into $G$. Once graph $H$ becomes an expander, the algorithm terminates. Alternatively, if we are unable to compute a routing of a large matching $\mset_i\subseteq A_i\times B_i$, we would like to compute a sparse cut in $G$, and in this case the algorithm terminates with this cut. Therefore, in order to implement the Matching Player, we need an algorithm that, given a pair $A_i,B_i$ of disjoint sets of vertices in a graph $G$, either computes a routing $\pset_i$ of a large enough matching $\mset_i\subseteq A_i\times B_i$ via short paths that cause low  congestion, or returns a sparse cut in $G$.  In the original paper of \cite{KRV}, the Matching Player was implemented via an algorithm for computing maximum flow and minimum cut. However, as was later observed in \cite{fast-vertex-sparsest}, it is sufficient to compute a \emph{maximal} collection of such paths in $G$, by a simple greedy algorithm, that can be implemented very efficiently. We informally refer to such algorithms as \emph{basic path peeling}. The algorithm keeps greedily adding short paths connecting vertices of $A_i$ to vertices of $B_i$ to set $\pset_i$, while deleting from $G$ edges that already participate in many paths. If, at the end of the algorithm, $|\pset_i|$ is not sufficiently large, a simple application of the classical Ball Growing technique is used to compute a sparse cut in $G$. 
As an example, in one implementation of such an algorithm (Theorem 3.2 of \cite{detbalanced}), given a graph $G$, equal-cardinality sets $A_i,B_i$ of vertices of $G$, and parameters $z>0$ and $0<\phi<1$, the algorithm either computes a cut $(X,Y)$ of sparsity at most $\phi$ in $G$ with $|X|,|Y|\geq z/2$; or it constructs a routing $\pset_i$ of a matching $\mset_i\subseteq A_i\times B_i$, with $|\mset_i|\geq |A_i|-z$, such that paths in $\pset_i$ have length at most $O\left(\frac{\Delta\log n}{\phi}\right )$ each, and cause congestion at most $O\left(\frac{\Delta^2\log^2n}{\phi^2}\right )$, where $\Delta$ is  maximum vertex degree of $G$. The running time of the algorithm is $\tilde O(|E(G)|/\phi^3)$. Using different methods (that rely on algorithms for approximate Maximum Flow), \cite{detbalanced} design algorithms with better guarantees: the congestion is only bounded by $O\left(\Delta(\log n)/\phi\right )$, and the running time is $m^{1+o(1)}$. We note that a recent result of \cite{max-flow} provides a near-linear time randomized algorithm for the exact maximum flow and min-cost flow problems, with running time $m^{1+o(1)}$, where $m$ is the number of graph edges, if all edge capacities and costs are polynomially bounded. It is likely that one can obtain a randomized almost linear time algorithm for basic path peeling using this work as well.

Consider now a more general setting (that we refer to as \emph{advanced path peeling}), where we are given two sets $A$, $B$ of vertices, and we need to route a \emph{specific} matching $\mset\subseteq A\times B$. In this case, the input consists of an $m$-edge graph $G$, a collection $\mset=\set{(s_1,t_1),\ldots,(s_k,t_k)}$ of pairs of its vertices, and parameters $0<\phi<1$ and $z>0$. The goal is to either route a collection $\mset'\subseteq \mset$ of at least $k-z$ pairs of vertices via paths that are short and cause low congestion (compared to $1/\phi$), or return a cut $(X,Y)$ of sparsity at most $\phi$ in $G$, with $|X|,|Y|\geq \Omega(z)$. One could use the techniques employed in basic path peeling in order to design such an algorithm, but it seems inevitable that the running time of the algorithm can only be bounded by $O(mk)$ (unless we have a very efficient algorithm for decremental \APSP with low approximation factor). 

In this paper, we provide a deterministic algorithm for advanced path peeling. For a given precision parameter $\eps$, the algorithm has running time $O\left (\frac{m^{1+O(\eps)}}{\phi^3}\right )$, and, in case a routing $\pset$ is returned, the length of every path  is  bounded by 
$\frac{2^{O(1/\eps^6)}\cdot \Delta\cdot \log n}{\phi}$, with congestion bounded by $\frac{2^{O(1/\eps^6)}\cdot \Delta^2\cdot \log^2 n}{ \phi^2}$, where $\Delta$ is maximum vertex degree in $G$. 
The fact that we can pre-specify pairs of vertices to be routed makes advanced path peeling a much more powerful algorithmic tool than basic path peeling. For example, we can use it directly in order to either  embed some fixed low-degree expander $H$ into a given graph $G$, or to compute a sparse cut in $G$. This approach allows us to completely bypass the \CMG, and we use it in some of our algorithms. 

\subsection{Applications to Static Graphs.}

\paragraph{Sparsest Cut and Lowest-Conductance Cut.} Recall that the sparsity of a cut $(A,B)$ in a graph $G$ is $\frac{|E_G(A,B)|}{\min\set{|A|,|B|}}$. In the \sparsest problem, the goal is to compute a cut of minimum sparsity in the input graph $G$. 
A closely related notion is that of \emph{conductance}: the conductance of a cut $(A,B)$ in $G$ is $\frac{|E_G(A,B)|}{\min\set{\vol_G(A),\vol_G(B)}}$. In the \lowcond problem, given a graph $G$, the goal is to compute a cut of smallest conductance. We define the \emph{conductance} of a graph $G$, $\Psi(G)$, to be the smallest conductance value of any cut in $G$. Both \sparsest and \lowcond problems are among the most fundamental optimization problems, and are routinely used in the design of graph algorithms, for example, when divide-and-conquer paradigm is involved. The best current approximation algorithms for these problems, due to \cite{ARV}, achieve a factor-$O(\sqrt{\log n})$-approximation. Unfortunately, these algorithms are rather slow (though their running times are polynomially bounded), as they need to solve an SDP. 
The work of \cite{KRV}, that introduced the \CMG, provided a randomized $O(\log^2n)$-approximation algorithm for both problems, with running time $\tilde O(m+n^{3/2})$. The super-linear running time in this algorithm is mostly due to the rather slow state of the art algorithms for (approximate) maximum flow that were available at the time. With the more recent improvements in such algorithms, the running time of the algorithm of \cite{KRV} becomes almost linear. 
Since the \sparsest and \lowcond problems are so ubiquitous, it is however highly desirable to obtain fast \emph{deterministic} algorithms for them.
A number of deterministic algorithms, that are based on the Multiplicative Weights Update framework of \cite{GK98,Fleischer00,Karakostas08}, achieve a factor $O(\log n)$-approximation for both problems, in time $\Otil(m^2)$. 
Additionally, several algorithms in which the approximation factor is roughly $O(\phi^{1/2})$, where $\phi$ is the value of the optimal solution, are known (see e.g. \cite{Alon86, AndersenCL07,GaoLNPSY19}; the algorithms achieve running times $\tilde O(n^{\omega}),\tilde O(mn)$, and $O(m^{1.5+o(1)})$, respectively). In \cite{detbalanced}, deterministic algorithms for both \sparsest and \lowcond problems were presented, that, for a parameter $\frac{\log\log n}{(\log n)^{1/2}}\leq \eps<1$, achieve a factor $(\log n)^{O(1/\eps^2)}$-approximation in time $O(m^{1+\eps+o(1)})$.
Unfortunately, in time $m^{1+o(1)}$, these algorithms cannot achieve a polylogarithmic approximation factor. Our improved algorithm for the Cut Player in the \CMG immediately leads to deterministic algorithms for both \sparsest and \lowcond problems, with approximation factor $O(\log^7n \log\log n )$, and running time $m^{1+o(1)}$.

\paragraph{Minimum Balanced Cut.}
\mbc is another classical graph partitioning problem that is extensively used in algorithm design.
Given a graph $G$, we say that a cut $(A,B)$ in $G$ is $\beta$-balanced, if $\vol_G(A),\vol_G(B)\geq \vol(G)/\beta$. We say that the cut is \emph{balanced}, if it is $\beta$-balanced for $\beta=1/3$, and we say that it is \emph{almost balanced}, if it is $\beta$-balanced for some absolute constant $\beta$. In the \mbc problem, given a graph $G$, the goal is to compute a balanced cut $(A,B)$ minimizing $|E_G(A,B)|$. It is quite common to use bicriteria approximation algorithms for the problem: a factor-$\alpha$ bicriteria approximation algorithm must return an almost balanced cut $(A,B)$ with $|E_G(A,B)|\leq \alpha\cdot \opt$, where $\opt$ is the lowest possible value of $|E_G(A',B')|$ for any balanced cut $(A',B')$.
The seminal work of \cite{ARV} provides the best currently known bicriteria approximation algorithm for the problem, whose approximation factor is $O(\sqrt{\log n})$, though the algorithm is somewhat slow due to the need to solve an SDP.
As with the \sparsest and \lowcond problems, the randomized algorithm of \cite{KRV} can be used to obtain a factor-$O(\log^2n)$ bicriteria approximation in time $O(m^{1+o(1)})$. 
The best current deterministic algorithm for the problem, due to \cite{detbalanced}, obtained, for any 
$\frac{\log\log n}{(\log n)^{1/2}}\leq \eps<1$, a factor-$(\log n)^{O(1/\eps^2)}$ bicriteria approximation, in time $O(m^{1+\eps+o(1)})$. As in \sparsest and \lowcond problems, this algorithm can only achieve $(\log n)^{\omega(1)}$-approximation in time $m^{1+o(1)}$. In this paper we provide a deterministic bicriteria factor-$(\log n)^{8+o(1)}$ approximation algorithm, with running time $m^{1+o(1)}$.

We also consider an important variant of the problem, that we call \mbcwc. In this problem, we are given a graph $G$, and a target parameter $\psi$. The goal is to  compute a cut $(A,B)$ in $G$ with $|E_G(A,B)|\leq \alpha \psi\cdot \vol(G)$ (where $\alpha$ is the \emph{approximation factor} of the algorithm), such that either $\vol_G(A),\vol_G(B)\geq \vol(G)/3$; or $\vol_G(A)\geq 2\vol(G)/3$, and graph $G[A]$ has conductance at least $\psi$. In the latter outcome, if $\vol_G(B)\leq \vol(G)/4$, we can view graph $G[A]$ as a certificate that the value of the \mbc in $G$ is $\Omega(\psi\cdot |E(G)|)$. A factor-$\alpha$ approximation algorithm for \mbcwc can be easily converted into a factor-$O(\alpha)$ approximation algorithm for the \mbc problem, with running time that increases by at most factor $O(\log n)$ (this was shown in \cite{detbalanced}; we also provide more details in Section \ref{subsec: MBC and ED}). Algorithms for \mbcwc however appear to be significantly more powerful than those for \mbc, as they can be used in order to compute expander decomposition of a given graph efficiently. 
In \cite{detbalanced}, a deterministic algorithm for \mbcwc  that achieves approximation factor $\alpha=(\log n)^{O(1/\eps^2)}$, in time $O(m^{1+\eps+o(1)})$, for any $\frac{\log\log n}{(\log n)^{1/2}}\leq \eps<1$, was presented. We provide a deterministic algorithm for \mbcwc with approximation factor $(\log n)^{8+o(1)}$, whose running time is $O(m^{1+o(1)}/\psi)$. For $\psi\geq 1/m^{o(1)}$, which is a common setting used in algorithms for expander decomposition, the running time becomes $O(m^{1+o(1)})$. We provide another algorithm, whose running time is $O(m^{1+o(1)})$ for any value of $\psi$, and approximation factor remains the same, but it provides a somewhat weaker certificate: in case where $\vol_G(B)<\vol(G)/3$, it only guarantees that $G[A]$ contains a large subgraph with conductance at least $\psi$, but it does not compute such a graph.

\paragraph{Expander Decomposition.}
An \emph{$(\delta,\psi)$-expander decomposition} of a graph $G=(V,E)$
is a partition $\Pi=\{V_{1},\dots,V_{k}\}$ of the set $V$ of vertices, such that for all $1\leq i\leq k$, the conductance of graph $G[V_i]$ is at least $\psi$, and $\sum_{i=1}^k\delta_{G}(V_{i})\le\delta\cdot\vol(G)$.
Algorithms for expander decomposition are used extensively in the design of graph algorithms, in both static and dynamic settings. A long line of research  
 \cite{SpielmanT04,NanongkaiS17,Wulff-Nilsen17,expander-pruning,agassy2022expander} culminated in a randomized algorithm that computes a  $(\delta,\delta/ \poly(\log n))$-expander decomposition in  time $\Otil(m/\delta)$. 
The best previous deterministic algorithm, due to \cite{detbalanced}, computes a $(\delta,\phi)$-expander decomposition with  $\phi=\Omega(\delta/(\log m)^{O(1/\eps^2)})$,
	in time  $O\left ( m^{1+\eps+o(1)}  \right )$. This algorithm was in turn used  by \cite{detbalanced} in order to obtain the first deterministic algorithms for Dynamic Connectivity and Dynamic Minimum Spanning Forest,
	with $n^{o(1)}$ worst-case update time, making a significant progress on a major open question in the area of dynamic algorithms.
We provide a deterministic algorithm for computing a $\left(\delta,\psi \right)$-expander decomposition of $G$ with $\psi=\Omega\left (\frac{\delta}{(\log n)^{9+o(1)}}\right )$,
in time $O(m^{1+o(1)}/\delta)$.

\subsection{Applications to Dynamic Graphs: All-Pairs Shortest Paths (in Expanders).}

In the decremental All-Pairs Shortest Paths (\APSP) problem, the input is an $n$-vertex graph $G$ with non-negative length on edges, that undergoes an online sequence of edge deletions. The goal is to support (approximate) shortest path queries: given a pair $x,y$ of vertices, return  an (approximate) shortest path connecting $x$ to $y$ in the current graph $G$. We say that the algorithm achieves a factor-$\alpha$ approximation, if, in response to a shortest path query between $x$ and $y$, it is guaranteed to return a path 
of length at most $\alpha\cdot \dist_G(x,y)$. 
Decremental, and, more generally, fully dynamic \APSP is one of the most basic problem in the area of dynamic algorithms. It also has important connections to designing algorithms for classical cut and flow problems in the static graph model. For example,
by combining the standard primal-dual technique-based algorithm of \cite{GK98, Fleischer00} 
with an algorithm for a special case of decremental \APSP, called Single-Source Shortest Paths (\SSSP), one can obtain fast approximation algorithms for maximum $s$-$t$ flow, minimum-cost $s$-$t$ flow, minimum $s$-$t$ cut, and so on, in both edge- and vertex-capacitated settings (see e.g. \cite{fast-vertex-sparsest, APSP-previous}). Until recently, some of these algorithms provided the best available guarantees.  Additionally, by combining the same techniques with the ideas of \cite{Madry10_stoc} and the standard Ball Growing technique of \cite{LR,GVY}, we can essentially reduce the \MMF and \MM problems in unit-capacity graphs to decremental \APSP. Indeed, a recent algorithm for \APSP by \cite{APSP-previous} has led to fast deterministic algorithms for  \MMF and \MM in unit-capacity graphs:
 for any $\Theta(1/\log\log n)<\eps<1$, the algorithms achieve approximation factor $(\log m)^{2^{O(1/\eps)}}$, with running time $ O\left (m^{1+O(\eps)}(\log m)^{2^{O(1/\eps)}}+k/\eps\right )$, where $m$ is the number of edges in the input graph and $k$ is the number of the demand pairs. Most likely any further improvements in the current guarantees for  decremental \APSP will immediately lead to improved bounds for both these problems.
For comparison, the fastest previous approximation algorithms for \MMF, achieving $(1+\eps)$-approximation, had running times $O(k^{O(1)}\cdot m^{4/3}/\eps^{O(1)})$ \cite{KelnerMP12} and $\Otilde(mn/\eps^2)$ \cite{Madry10_stoc}, and we are not aware of any algorithms that achieve a faster running time with possibly worse approximation factors. 

We now turn to discuss the \APSP problem in more detail.
In addition to the approximation factor that the algorithm achieves and its total update time (the time it takes to maintain its data structures), two other parameters of interest are query time (the time the algorithm takes to respond to shortest path query), and whether the algorithm can withstand an adaptive adversary. The latter means that the input sequence of edge deletions may depend on the responses to the queries that the algorithm returned so far, and even on the inner state of the algorithm. This is in contrast to the oblivious-adversary setting, where the input sequence of edge deletions is fixed in advance. We note that a deterministic algorithm by definition can withstand an adaptive adversary. With four different parameters of interest to optimize, there is a vast amount of research achieving different tradeoffs between them; we will not attempt to present them all. Instead we will focus on a specific setting, where the algorithm must withstand an adaptive adversary (and is ideally deterministic), and the query time for shortest path query is bounded by $\tilde O(|E(P)|)$, where $P$ is the path returned; note that this is close to the best possible query time. 
Subject to these two constrains, we are interested in optimizing the tradeoff between the approximation factor that the algorithm achieves and its total update time. 
To the best of our knowledge, the best current algorithm for \APSP in this setting, due to \cite{APSP-previous}, is a deterministic algorithm that  achieves approximation factor $(\log m)^{2^{O(1/\eps)}}$, with total update time $O\left (m^{1+O(\eps)}\cdot (\log m)^{O(1/\eps^2)}\cdot \log L\right )$, for any $\Omega(1/\log\log m)\leq \eps< 1$; here, $L$ is the ratio of longest to shortest edge length. 
Another recent work\footnote{To the best of our knowledge, the two works are independent} \cite{bernstein2022deterministic} mostly focused on a special case of \APSP called Single-Source Shortest Path, but they also obtained a deterministic algorithm for decremental \APSP with approximation factor $m^{o(1)}$ and total update time $O(m^{1+o(1)})$; unfortunately, the tradeoff between the approximation factor and the total update time is not stated explicitly, though it is mentioned that the approximation factor is super-logarithmic.
Until recently, the best negative results only ruled out obtaining a better than factor-$4$ approximation in time $O(n^{3-\delta})$ and query time $O(n^{1-\delta})$, for any constant $0<\delta<1$, under either the Boolean Matrix Multiplication, or the Online Boolean Matrix-Vertex Multiplication conjectures \cite{DorHZ00,HenzingerKNS15}. 
A very recent result of \cite{dalirooyfard2022approximation} showed that, for any integer $k\geq 1$, under the combinatorial $k$-clique hypothesis, there is no combinatorial algorithm for \APSP in static unweighted undirected graphs, that achieves an approximation ratio better than $(1+1/k-\eps)$, and has running time  $O(m^{2-2/(k+1)}\cdot n^{1/(k+1)-\eps})$, even if approximate shortest-path queries are restricted to a specific collection of $n$ vertex pairs. The paper provides several other lower bound results with constant approximation factors, that are based on various conjectures.
Another very recent result of \cite{abboud2022hardness} provided new lower bounds for the dynamic \APSP problem, in the regime where only approximate distance queries need to be supported, under either the 3-SUM conjecture or the APSP conjecture. Let $k\geq 4$ be an integer, let $\eps,\delta>0$ be parameters, and let $c=\frac{4}{3-\omega}$ and $d=\frac{2\omega-2}{3-\omega}$, where $\omega$ is the exponent of matrix multiplication. Then \cite{abboud2022hardness}  show that, assuming either the 3-SUM Conjecture or the APSP Conjecture, there is no $(k-\delta)$-approximation algorithm for decremental \APSP with total update time  $O(m^{1+\frac{1}{ck-d}-\eps})$ and query time for \distquery bounded by $O(m^{\frac{1}{ck-d}-\eps})$. They also show that there is no $(k-\delta)$-approximation algorithm for fully dynamic \APSP that has   $O(n^3)$ preprocessing time, and then supports (fully dynamic) updates and \distquery queries in $O(m^{\frac{1}{ck-d}-\eps})$ time. 
Lastly, a very recent result of \cite{abboud2022stronger} shows that, under the 3-SUM conjecture, for any integer constant $k\geq 2$, there is
no approximate distance oracle for sparse graphs (in which $m=O(n)$) with stretch $k$, preprocessing time $\tilde O(m^{1+p})$ and query
time $\tilde O(m^{1+q})$, for all $p,q$ with $kp + (k + 1)q < 1$.
To summarize, current negative results do not rule out any superconstant approximation for decremental \APSP with total update time $m^{1+o(1)}$.

A very interesting special case of decremental \APSP is decremental \APSP in expanders. In this problem, the input graph $G$ has unit edge lengths, and it is initially an expander. 
It is well known that, if an $n$-vertex graph $G$ with maximum vertex degree $\Delta$ is a $\phi$-expander, then for any pair $x,y$ of its vertices, there is a path of length at most $O \left (\frac{\Delta\log n}{\phi}\right )$ connecting them. 
Assume now that we are given an $n$-vertex $\phi$-expander $G$ as above, that undergoes a sequence of edge deletions. 
We would like to design an algorithm that supports   short-path queries: given a pair $x,y$ of vertices of $G$, return a path connecting $x$ to $y$ in the current graph $G$, whose length is at most  $\alpha\cdot \frac{\Delta\log n}{\phi}$, where we refer to $\alpha$ as the approximation factor of the algorithm\footnote{Note that $\alpha$ is not necessarily an approximation factor strictly speaking, as it is possible that for a pair $x,y$ of vertices queried, $\dist_G(x,y)\ll \frac{\Delta\log n}{\phi}$. However, the approximation factors that we discuss are quite high, making this difference insignificant, and it is convenient for us to use the "approximation factor" notion for brevity.}. As graph $G$ undergoes edge deletions, it may no longer remain an expander, and  distances between some of its vertices may grow significantly. In order to overcome this difficulty, we only ask that the algorithm maintains a large enough subset $S(G)\subseteq V(G)$ of \emph{supported vertices}, and we restrict short-path queries to pairs of vertices in $S(G)$. We additionally require that the set $S(G)$ of supported vertices is decremental, that is, vertices may leave $S(G)$ over the course of the algorithm, but they may not join it. In its typical applications, the problem needs to be solved on expander graphs that arise from the \CMG; these are usually expanders with maximum vertex degree $\Delta\leq O(\log^2n)$ and expansion $\phi=\Theta(1)$.

Decremental \APSP in expanders is especially interesting for several reasons. First, it seems to be a relatively simple special case of  \APSP, and, if our goal is to obtain better algorithms for general \APSP, solving the problem in expander graphs is a natural first step. 
 Second, this problem arises in various algorithms for \emph{static} cut and flow problems, and seems to be intimately connected to efficient implementations of the \CMG. Third, expander graphs are increasingly becoming a central tool for designing algorithms for various dynamic graph problems, and obtaining good algorithms for \APSP in expanders will likely become a powerful tool in the toolkit of algorithmic techniques in this area. 
As such, we feel that it is crucial to obtain a good understanding of this problem. 

The best previous algorithm for \APSP in expanders due to  \cite{APSP-old} (see also \cite{APSP-previous}), uses techniques  similar to those in \cite{detbalanced}, and provides the following guarantees\footnote{The algorithm from \cite{APSP-old} was only analyzed for a specific setting of the parameters; a proof for the whole range of the parameters was provided in \cite{APSP-previous}.}.
Suppose we are given an $n$-vertex and $m$-edge graph $G$ with maximum vertex degree $\Delta$ that is a $\phi$-expander, which undergoes a sequence of at most $O\left (\frac{\phi m}{\Delta}\right)$  edge deletions, and a precision parameter $\eps$. The algorithm maintains a set $U$ of vertices of $G$, that is incremental: that is, vertices may be added to $U$ but not deleted from it. For every integer $t$, after $t$ edges are deleted from $G$, we are guaranteed that $|U|\leq O(t\Delta/\phi)$ holds. Throughout the algorithm, we let $S=V(G)\setminus U$ be the set of supported vertices. The algorithm supports short-path queries between pairs of vertices in $S$. Given such a pair $x,y\in S$, it returns a  path $P$ in $G[S]$ of length at most $O\left( \Delta^2(\log n)^{O(1/\eps^2)}/\phi\right )$, with query time $O(|E(P)|)$. 
The total update time of the algorithm is $O\left (n^{1+O(\eps)}\Delta^7(\log n)^{O(1/\eps^2)}/\phi^5\right ) $.	Assuming a typical setting where $\Delta\leq \poly\log n$, $\phi\geq (1/\poly\log n)$, and $\eps\geq 1/(\log n)^{1/3}$, the algorithm achieves approximation factor $(\log n)^{O(1/\eps^2)}$, with total update time $O\left (n^{1+O(\eps)}\right ) $. For example, if we wish to achieve a polylogarithmic approximation factor, then the total update time of the algorithm is only bounded by $O(n^{1+\delta})$ for some constant $\delta$, and if we would like the total update time of the algorithm to be bounded by $n^{1+o(1)}$, then the approximation factor must be super-polylogarithmic, that is, $(\log n)^{\omega(1)}$.

We use the algorithmic tools that we developed for well-connected graphs in order to design a deterministic algorithm for \APSP in expanders that achieves a better tradeoff between the approximation factor and total update time: our algorithm, when responding to short-path query between a pair $x,y\in S$ of vertices is guaranteed to return a path $P$ connecting $x$ to $y$, of length at most 
$\frac{2^{O(1/\eps^6)}\cdot \Delta^2\cdot\log n}{\phi}$, with query time $O(|E(P)|)$. 
The total update time of the algorithm is  $O\left(\frac{m^{1+O(\eps)}\cdot \Delta^5}{\phi^2}\right )$. If we consider again the setting where $\Delta\leq \poly\log n$ and $\phi\geq 1/\poly\log n$, the algorithm achieves approximation factor $O(\poly\log n)$ with total update time $n^{1+o(1)}$. On the negative side, our algorithm can only withstand a somewhat shorter sequence of edge deletions: at most $O\left (n\cdot \phi^2/\Delta^4\right )$ edges may be deleted, and it only guarantees that, after $t$ edge deletions from $G$, $|U|\leq O\left (\Delta^4 t/\phi^2\right )$. However, since the algorithm is typically used in the setting where $\Delta,1/\phi\leq O(\poly\log n)$, these drawbacks are generally insignificant. Another difference is that the path returned in response to a query by our algorithm is guaranteed to be contained in the current graph $G$, while the paths returned by the algorithm of \cite{APSP-old,APSP-previous} is contained in $G\setminus U$. We are not aware of any negative implications of this difference.

Returning to the \APSP problem in general graphs, the best current deterministic algorithm of \cite{APSP-previous} uses \APSP in expanders as its building block. As mentioned already, the algorithm of \cite{APSP-previous} achieves approximation factor  $(\log m)^{2^{O(1/\eps)}}$, with total update time $O\left (m^{1+O(\eps)}\cdot (\log m)^{O(1/\eps^2)}\cdot \log L\right )$. It seems conceivable that the techniques from \cite{APSP-previous} can be used in order to improve the approximation factor to $(\log m)^{O(1/\poly(\eps))}$ with similar total update time, but there are significant obstacles to further improvements. The first such obstacle is that the algorithm relies on the best previous algorithm for \APSP in expanders, in the setting where $\Delta\leq \poly\log n$ and $\phi\geq 1/\poly\log n$, whose approximation factor is $(\log n)^{O(1/\eps^2)}$  with total update time $O\left (n^{1+O(\eps)}\right ) $. Our new algorithm removes this obstacle. The second obstacle is that the algorithm from \cite{APSP-previous} is recursive. The number of recursive levels is $O(1/\eps)$, and in each recursive level, a factor $(\log n)^{O(1/\eps^2)}$ is lost in approximation (due to the algorithm for \APSP in expanders). Even when using our new algorithm for \APSP in expanders, at least a $\poly\log n$ factor must be lost in each recursive level, resulting in a $(\log n)^{\Theta(1/\eps)}$ approximation factor. One of the main reasons for this polylogarithmic loss  in each recursive level is that, when we rely on \APSP in expanders, we are committing ourselves to at least a logarithmic loss in the approximation factor. The reason is that we typically only require that, in response to short-path query, the algorithm returns a path whose length is within factor $\alpha$ of $\frac{\Delta\log n}{ \phi}$, a quantity that bounds the diameter of the expander, even if the two queried vertices are very close to each other. Furthermore, one of the typical ways to exploit expanders is to first embed a large expander into the given input graph $G$, for example, using the \CMG, and such an embedding typically does not preserve distances between vertices, except with a $\poly\log n$ distortion. We note that well-connected graphs do not suffer from this drawback.

A very recent follow-up work of \cite{APSP-new} provides a deterministic algorithm for fully-dynamic \APSP, that, for a given precision parameter $\eps$, achieves approximation $(\log\log n)^{2^{O(1/\eps^3)}}$, and has amortized update time $O(n^{\eps}\log L)$ per operation, where $L$ is the ratio of longest to shortest edge length, if the initial graph has no edges. Their work improves the algorithm of \cite{APSP-previous}, by first improving the approximation factor to $(\log\log n)^{2^{O(1/\eps^3)}}$, and then extending the result to fully dynamic graphs. The improvement in the approximation factor requires overcoming several major obstacles, one of which arises from the use of expander graphs, as described above. By replacing expander graphs with well-connected graphs, and exploiting the machinery introduced in this work, \cite{APSP-new} overcome this obstacle.

\paragraph{Organization.}
For convenience, we provide a formal statement of our main results, and a high-level overview of our techniques in \Cref{sec: overview}. We provide preliminaries in \Cref{sec: prelims}. In \Cref{sec: distanced matching game} we formally define the \DMG, and prove the upper bound on its number of iterations. 
We formally define Hierarchical Support Structure in \Cref{sec: HSS}, and provide an algorithm for the Distancing Player in \Cref{sec: HSS algorithm}. 
 We provide algorithms for decremental \APSP in well-connected graphs and in expanders in Sections \ref{sec: APSP} and \ref{sec: APSP on expanders} respectively. 
 We present our algorithm for advanced path peeling in \Cref{sec: advanced path peeling}. 
 Our algorithm for the Cut Player in the \CMG is presented in \Cref{sec: cut player}. Finally, we present our algorithms for \sparsest, \lowcond, \mbc, and \expanderdec in \Cref{sec: applications}.

\section{Overview of Our Results and Techniques}
\label{sec: overview}
\subsection{The Distanced Matching Game and Related Algorithmic Toolkit.}

Let $G$ be an $n$-vertex graph, and let $d>0$ and $1<\delta<1$ be parameters. A $(\delta,d)$-\emph{distancing} in  $G$ is a triple $(A,B,E')$, where $A,B$ are disjoint equal-cardinality subsets of vertices of $G$ with $|A|\geq n^{1-\delta}$, and $E'$ is a subset of edges of $G$ with $|E'|\leq |A|/16$. We require that, in graph $G\setminus E'$, the smallest distance between a vertex of $A$ and a vertex of $B$ is at least $d$.

We introduce the \DMG, that is played between two players: a Distancing Player and a Matching Player. The game can be thought of as an analogue of the \CMG for well-connected graphs. The input to the game consists of an integral parameter $n$, and two additional parameters, $0<\delta<1$ and $d$. The game starts with a graph $H$ that contains $n$ vertices and no edges, and then proceeds in iterations. In every iteration some edges are inserted into $H$. In order to execute the $i$th iteration, the Distancing Player must provide a $(\delta,d)$-distancing $(A_i,B_i,E'_i)$ in the current graph $H$. The matching player must return a matching $M_i\subseteq A_i\times B_i$ of cardinality at least $|A_i|/8$. The matching cannot contain any pairs of vertices $(x,y)$ for which an edge $(x,y)$ lies in $E_i'$. We then add the edges of $M_i$ to $H$, and continue to the next iteration. The game terminates when the distancing player can no longer compute a $(\delta,d)$-distancing, though we may choose to terminate it earlier, if graph $H$ has desired properties. Our first result bounds the number of iterations in the \DMG:

\begin{theorem}\label{thm: intro distancing-matching game - number of iterations}
	Consider a \DMG with parameters $n>0,0<\delta<1/4$  and $d$, such that $d\geq 2^{4/\delta}$ and $n^{\delta}\geq \frac{2^{14}\log n}{\delta^2}$.
	Then the number of iterations in the game is at most $n^{8\delta}$.
\end{theorem}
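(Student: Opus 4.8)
The plan is to use a potential-function argument, much in the spirit of the analysis of the original \CMG, but adapted to the distancing setting. Fix a vertex $u\in V(H)$. For each iteration $i$ and each vertex $v\in V(H)$, let $B_i(u)\subseteq V(H)$ denote the ball of radius roughly $d/2$ around $u$ in the current graph $H$ after iteration $i$ (measured in the graph, not in $H\setminus E'$). The key quantitative object is the growth of these balls: since every matching $M_i$ added has size at least $|A_i|/8\geq n^{1-\delta}/8$, and a $(\delta,d)$-distancing guarantees that $A_i$ and $B_i$ are at mutual distance at least $d$ even after removing the $\le |A_i|/16$ edges of $E_i'$, the matching edges added in iteration $i$ are ``long-range'' edges that drastically shrink distances. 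I would track a potential $\Phi_i = \sum_{u\in V(H)} \bigl(\text{number of vertices at distance} > \td \text{ from } u \text{ in } H \text{ after iteration } i\bigr)$, for an appropriate threshold $\td$ slightly below $d$. Each iteration must decrease $\Phi$ by a substantial amount — at least something like $n^{1-\delta}/8 \cdot (\text{factor})$ — because the $A_i$-side vertices, which were all far (distance $\ge d$) from the $B_i$-side, become close (distance $\le$ something like $O(\log n)$ along matching edges) to a distinct partner in $B_i$, and this closeness propagates. Since $0\le \Phi_i \le n^2$, this bounds the number of iterations.

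The main subtlety, and what I expect to be the chief obstacle, is that a single matching edge shortcuts distances only between its two endpoints; to argue that \emph{many} pairs of vertices get their distance reduced we need to understand how the newly added matching $M_i$ interacts with the ball structure. The right way to handle this is to observe that the condition $|E_i'|\le |A_i|/16$ together with the matching size $\ge |A_i|/8$ forces most matching edges to survive the removal of $E_i'$, and each such surviving matching edge $(a,b)$ with $a\in A_i, b\in B_i$ has the property that in $H\setminus E_i'$ (hence also in $H$, using new edges) the balls $B(a)$ and $B(b)$ of radius $\approx d/2$ were previously disjoint; after adding $(a,b)$ they merge, so every vertex in the old $B(a)$ now reaches every vertex in the old $B(b)$ within distance roughly $d/2 + 1 + d/2 = d+1$. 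To make this useful I would instead define balls of radius roughly $d/2 - O(\log n)$ or use a geometric (doubling) scale of radii, so that the merged ball still has controlled radius and we gain at least a constant-factor increase in some vertex's ``reachable set within radius $\td$.'' The hypothesis $d\ge 2^{4/\delta}$ is exactly what gives us enough ``room'' in the radius budget to absorb the additive logarithmic losses over the at most $n^{8\delta}$ iterations, and $n^\delta \ge 2^{14}\log n/\delta^2$ ensures the matchings are large enough relative to these losses.

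Concretely, I would proceed in these steps. \textbf{Step 1:} Set up the doubling-scale ball potential. For each vertex $u$ and each scale $j$ with $2^j \le d$, let $N_i^{(j)}(u)$ be the set of vertices within distance $2^j$ of $u$ in $H$ after iteration $i$; define a weighted potential $\Phi_i = \sum_u \sum_j w_j \cdot |N_i^{(j)}(u)|$ with weights $w_j$ chosen so that ``merging two balls at scale $j$ into one ball at scale $j+1$'' always increases $\Phi$ by a definite amount. \textbf{Step 2:} Show that in each iteration, a constant fraction of the matching edges of $M_i$ connect two vertices whose radius-$2^{j^*}$ balls (for $j^* \approx \log d - 1$) were disjoint before the iteration — this uses the distancing guarantee (distance $\ge d$ in $H\setminus E_i'$) plus $|E_i'| \le |A_i|/16 \le |M_i|/2$. \textbf{Step 3:} Conclude each such edge forces a potential increase of at least $\Omega(n^{1-\delta})$ (the size of the smaller merged ball, lower-bounded by... actually here we need that balls are either small — in which case there are many of them to merge — or already large; a win-win argument). \textbf{Step 4:} Since $\Phi_i$ is bounded above by $O(n^2 \log n)$ and increases by at least $\Omega(n^{2-2\delta})$ per iteration (summing the per-edge gains over the $\Omega(n^{1-\delta})$ matching edges and using that a potential increase at one vertex is matched at its ball-mate), the number of iterations is at most $O(n^{2\delta}\log n) \le n^{8\delta}$ under the stated hypothesis on $n^\delta$. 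The delicate bookkeeping is in Step 3, where one must rule out the degenerate case that balls never grow because they are always ``just barely'' below a scale threshold; the geometric scaling of radii, rather than a single threshold, is what makes this go through, and the factor $2^{4/\delta}$ in the hypothesis on $d$ is what guarantees we have $\Omega(1/\delta)$ usable scales.
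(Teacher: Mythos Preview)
Your proposal captures the right flavor — tracking how balls or clusters grow as matching edges are added — but it has two concrete gaps that the paper's proof is specifically designed to handle.

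\textbf{Gap 1: the graph in which you measure balls.} You define the potential via balls in the current graph $H$, but the distancing guarantee only says $\dist_{H\setminus E'_i}(A_i,B_i)\ge d$. Adding back even a \emph{single} edge of $E'_i$ can collapse distances between huge numbers of pairs, so ``$|E'_i|\le |A_i|/16$ is small'' does not imply that most matched pairs $(a,b)$ have disjoint radius-$(d/2)$ balls in $H$. Your Step~2 conflates the two graphs. The paper fixes this by working, in hindsight, in $H\setminus E^*$ where $E^*=\bigcup_i E'_i$ is the union over \emph{all} iterations; in that graph every distancing holds with $E'=\emptyset$, and a counting argument shows that at least $1/8$ of the iterations still have a surviving matching of size $\ge |M_i|/16$ in $H\setminus E^*$.

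\textbf{Gap 2: the per-iteration potential increase.} Even if you grant that matched pairs have disjoint balls, your Step~4 claim that the potential rises by $\Omega(n^{2-2\delta})$ per iteration is not justified. A matching edge $(a,b)$ contributes roughly $|B(a)|\cdot|B(b)|$ to your potential, but both balls could have size $O(1)$ (think of the first iteration), so the total gain is only $\Omega(n^{1-\delta})$, which would give the useless bound of $O(n^{1+\delta})$ iterations. Your ``win-win'' in Step~3 is precisely where the real work lies, and a doubling-scale weighting alone does not rescue it: there is no lower bound on ball sizes until many iterations have passed. The paper handles this by \emph{batching} iterations into phases of $\lceil n^{4\delta}\rceil$ good iterations each, maintaining a \emph{partition} of $V(H)$ into clusters at discrete levels (level-$s$ clusters have size in $(n^{(s-1)\delta},n^{s\delta}]$ and diameter $\le 4^s$), and proving that each phase \emph{collectively} promotes at least $n^{1-2\delta}$ vertices to a higher level. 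The key step builds a weighted contracted meta-graph on the clusters using the matchings of a single phase; the $n^{4\delta}$ iterations give enough total meta-edge weight that some cluster has $\ge n^{\delta}$ meta-neighbors, forcing a merge that jumps a level. Since each vertex can be promoted at most $\lceil 1/\delta\rceil$ times, this bounds the number of phases. The hypothesis $d\ge 2^{4/\delta}$ is used to guarantee that a level-$s$ cluster (diameter $\le 4^{\lceil 1/\delta\rceil}<d$) cannot straddle both sides of a distancing.
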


This theorem can be thought of as an analogue of similar results of \cite{KRV, KhandekarKOV2007cut}, that bound the number of iterations in the \CMG. The bound on the number of iteration that we obtain here is significantly higher that those for the \CMG, which are typically bounded by $O(\poly\log n)$. However, as we show below, we gain in other aspects -- specifically, by constructing a large enough subset $S$ of vertices of $H$, so that $H$ is well-connected with respect to $S$. This in turn allows us to achieve significantly shorter distances between the vertices of $S$ in $H$, than those guaranteed in expander graphs. 

Our proof of \Cref{thm: intro distancing-matching game - number of iterations} is completely different from the types of arguments that were used in order to bound the number of iterations in the \CMG by \cite{KRV, KhandekarKOV2007cut}. For all $i>0$, let $H_i$ denote graph $H$ at the beginning of iteration $i$. Let $E'=\bigcup_iE'_i$, and, for all $i>0$, let $H'_i=H_i\setminus E'$. We observe how the graphs $H'_1,H'_2,\ldots$ evolve over the course of the execution of the game (note that the set $E'$ of edges is computed in hindsight, after the game terminates, so in a sense we ``replay'' the game to observe the evolution of these graphs). For all $i$, we define a partition $\cset^i$ of the vertices of $H_i$ into clusters. We ensure that the set $\cset^{i+1}$ of clusters can only be obtained from set $\cset^i$ by merging existing clusters. We say that a cluster $C\in \cset^i$ belongs to level $j$, if $n^{\delta j}<|V(C)|\leq  n^{\delta(j+1)}$. We also ensure that, for all $j$, the diameter of every level-$j$ cluster in $\cset_i$ is at most $2^{O(j)}$. If $C\in \cset^i$ is a level-$j$ cluster, then we say that all vertices of $C$ lie at level $j$. If a vertex of $H$ lies at level $j$ of clustering $\cset^i$, and at a level $j'>j$ of clustering $\cset^{i+1}$, then we say that vertex $v$ has been \emph{promoted} over the course of iteration $i$. The key in the proof is to show that, once $\ceil{n^{4\delta}}$ iterations pass, a large number of vertices are promoted. Since every vertex may only be promoted at most $O(1/\delta)$ times, this is sufficient in order to bound the total number of iterations in the game.

Next, we define a Hierarchical Support Structure. The structure uses two main parameters: the base parameter $N>0$, and a level parameter $j>0$. We also assume that we are given a precision parameter $0<\eps<1$. The notion of Hierarchical Support Structure is defined inductively, using the level parameter $j$. If $H$ is a graph containing $N$ vertices, then a level-1 support structure for $H$ simply consists of a set $S(H)$ of vertices of $H$, with $|V(H)\setminus S(H)|\leq N^{1-\eps^4}$. Assume now that we are given a graph $H$ containing exactly $N^j$ vertices. A level-$j$ Hierarchical Support Structure for $H$ consists of a collection $\hset=\set{H_1,\ldots,H_r}$ of $r=N-\ceil{2N^{1-\eps^4}}$ graphs, such that for all $1\leq i\leq r$, $V(H_i)\subseteq V(H)$, and $V(H_1),\ldots,V(H_r)$ are all mutually disjoint. Additionally, it must contain, for all $1\leq i\leq r$, a level-$(j-1)$ Hierarchical Support Structure for $H_i$, which in turn must define the set $S(H_i)$ of supported vertices for graph $H_i$. We require that each such graph $H_i$ is $(\eta_{j-1},\td_{j-1})$-well-connected with respect to $S(H_i)$, where $\tilde d_{j-1}=2^{O(j/\eps^4)}$ and $\eta_{j-1}=N^{6+O(j\eps^2)}$. Lastly, the Hierarchical Support Structure for graph $H$ must contain an embedding of graph $H'=\bigcup_{i=1}^rH_i$ into graph $H$, via path of length at most $2^{O(1/\eps^4)}$, that cause congestion at most $N^{O(\eps^2)}$.  We then set $S(H)=\bigcup_{i=1}^rS(H_i)$, and we view $S(H)$ as the set of supported vertices for graph $H$, that is defined by the \HSS.

We provide an algorithm for the Distancing Player of the \DMG, that either produces the desired $(\delta,d)$-distancing in the current graph $H$, or constructs a level-$\ceil{1/\eps}$ \HSS for $H$, together with a large set $S(H)$ of supported vertices, such that $H$ is well-connected with respect to $S(H)$.

\begin{theorem}\label{thm: construct HSS last level}
	There is a deterministic algorithm, whose input consists of a parameter $0<\eps<1/4$, such that $1/\eps$ is an integer, an integer $N>0$, and a graph $H$ with $|V(H)|=N^{1/\eps}$, such that $N$ is sufficiently large, so that $\frac{N^{\eps^4}}{\log N}\geq 2^{128/\eps^5}$ holds. The algorithm computes one of the following:
	
	\begin{itemize}
		\item either a $(\delta,d)$-distancing $(A,B,E')$ in  $H$, where $\delta=4\eps^3$, $d=2^{32/\eps^4}$ and $|E'|\leq \frac{|A|}{N^{\eps^3}}$; or
		\item a level-$(1/\eps)$ \HSS for $H$, such that  graph $H$ is $(\eta,\td)$-well-connected with respect to the set $S(H)$ of vertices defined by the support structure, where $\eta=N^{6+O(\eps)}$ and $\td=2^{O(1/\eps^5)}$.
	\end{itemize}
	The running time of the algorithm is bounded by: $O(|E(H)|^{1+O(\eps)})$.
\end{theorem}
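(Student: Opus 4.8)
The plan is to prove a more general statement by induction on the level parameter $j$, for graphs on $N^j$ vertices, and then instantiate it with $j=1/\eps$. Concretely, I would design a recursive procedure that, given a graph $\hat H$ with $|V(\hat H)|=N^j$ (together with the level-$j$ settings of the parameters $\delta_j=\Theta(\eps^3)$, $d_j$, $\eta_{j-1}$, $\td_{j-1}$, $\eta_j$, $\td_j$), returns either a $(\delta_j,d_j)$-distancing $(A,B,E')$ in $\hat H$ with $|E'|\le|A|/N^{\eps^3}$, or a level-$j$ \HSS for $\hat H$ certifying that $\hat H$ is $(\eta_j,\td_j)$-well-connected with respect to $S(\hat H)$. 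After a standard degree-reduction preprocessing we may assume $\hat H$ has degree $N^{o(1)}$. For the base case $j=1$ (a graph on $N$ vertices), where a level-$1$ \HSS is just a set $S(\hat H)$, I would run the classical ball-growing technique: repeatedly look for a $\td_0$-length-bounded sparse cut isolating a vertex subset; if such a cut ever isolates a set of size $\ge N^{1-\delta_1}$ we output a distancing (taking $E'$ to be the boundary of an intermediate ball, which an averaging argument over the $\Theta(d_1)$ layers makes a $d_1^{-1}$-fraction of the ball, hence far below $|A|/N^{\eps^3}$ once $d_1$ is chosen huge), and otherwise we peel off the small confined set and continue. A standard charging argument bounds the number of peeled vertices by $N^{1-\eps^4}$, using that $d_1$ and $\eta_0$ are chosen large, and a $\td_0$-length-bounded max-flow/min-cut argument on the surviving set $S(\hat H)$, followed by rounding a fractional perfect matching, certifies $(\eta_0,\td_0)$-well-connectedness.

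For the inductive step ($j\ge 2$), I would partition almost all of $V(\hat H)$ into $\approx N$ disjoint groups $V_1,V_2,\dots$, each of size $N^{j-1}$, and process them one at a time. For each group $V_i$ I run a \DMG on the vertex set $V_i$ to build a graph $H_i$ with $V(H_i)=V_i$: the Distancing Player is implemented by the recursive call on the \emph{current} $H_i$ at level $j-1$, which in each iteration either returns a $(\delta_{j-1},d_{j-1})$-distancing $(\tilde A,\tilde B,\tilde E')$ in $H_i$ (the normal case, as long as $H_i$ is not yet well-connected), or certifies that $H_i$ is well-connected and hands back its level-$(j-1)$ \HSS, terminating the game for $V_i$; the Matching Player is implemented by the advanced-path-peeling algorithm of \Cref{sec: advanced path peeling}, run in $\hat H$ on the demand $\tilde A\times\tilde B$, which either routes a matching $M\subseteq\tilde A\times\tilde B$ of the prescribed cardinality through short, low-congestion paths in $\hat H$ --- these paths extend the embedding of $H_i$ into $\hat H$, and $M$ is the Matching Player's move --- or returns a sparse $\td_j$-length-bounded cut in $\hat H$ around $\tilde A$, in which case $V_i$ is declared a \emph{failed} group. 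By \Cref{thm: intro distancing-matching game - number of iterations}, applied with $\delta_j=\Theta(\eps^3)$ and with $d_j$ chosen so that $d_j\ge 2^{4/\delta_j}$ and $d_j$ exceeds the routing distance $\td_{j-1}$ of the pieces (so the recursive ``well-connected'' certificate genuinely precludes a $(\delta_j,d_j)$-distancing in $H_i$), each game for a successful group ends within $N^{O(\delta_j)}$ iterations. Hence the embedding of a successful $H_i$ into $\hat H$ has congestion at most $N^{O(\delta_j)}$ times the per-iteration path-peeling congestion, and path length at most $\td_{j-1}$ times the per-iteration path length; this is the origin of the recursions $\td_j\approx\td_{j-1}\cdot 2^{\Theta(1/\eps^4)}$ and $\eta_j\approx\eta_{j-1}\cdot N^{\Theta(\eps^2)}$, which telescope across the $1/\eps$ levels to $\td=2^{O(1/\eps^5)}$ and $\eta=N^{6+O(\eps)}$.

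To assemble the output: if at most $N^{1-\delta_j}$ groups fail, then at least $r=N-\lceil 2N^{1-\eps^4}\rceil$ of them succeed (using $\delta_j\ge\eps^4$), and these $H_i$'s with their level-$(j-1)$ \HSS's and the embeddings of $\bigcup_iH_i$ into $\hat H$ constitute a level-$j$ \HSS for $\hat H$, with $S(\hat H)=\bigcup_iS(H_i)$. If more than $N^{1-\delta_j}$ groups fail, I have more than $N^{1-\delta_j}$ pairwise-disjoint sets, each of size $\ge N^{(j-1)(1-\delta_j)}$ and each with a sparse $\td_j$-length-bounded boundary in $\hat H$; taking $A$ to be their union (so $|A|\ge N^{j(1-\delta_j)}$), $B$ an equal-cardinality set on the other side, and $E'$ the union of the boundaries, the hugeness of $d_j$ versus the tiny sparsity forces $|E'|\le|A|/N^{\eps^3}$ and $\dist_{\hat H\setminus E'}(A,B)\ge d_j$, so we output a $(\delta_j,d_j)$-distancing. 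It remains to verify that the assembled \HSS does certify $\hat H$ is $(\eta_j,\td_j)$-well-connected with respect to $S(\hat H)$: composing each $H_i$'s well-connectedness with its embedding gives well-connectedness of $\hat H$ with respect to each individual $S(H_i)$, and to route a perfect matching between arbitrary equal disjoint $A,B\subseteq\bigcup_iS(H_i)$ whose parts are split arbitrarily across the $H_i$'s, I would first gather $A\cap S(H_i)$ and $B\cap S(H_i)$ to canonical subsets of $S(H_i)$ using the per-$H_i$ routings, and then route between the canonical subsets across the $H_i$'s; the latter step requires that $\hat H$ also routes well between distinct $S(H_i)$'s, which I would guarantee by additionally running a direct ball-growing/path-peeling check on $\hat H$ restricted to the supported vertices, again outputting a distancing if it fails. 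For the running time, the dominant cost is the path-peeling calls in $\hat H$: at level $j$ there are $\approx N$ groups, each with $N^{O(\delta_j)}$ \DMG iterations, so $N^{1+O(\delta_j)}$ calls of cost $O(|E(\hat H)|^{1+O(\eps)}/\phi^3)$ with $1/\phi=N^{o(1)}$; the recursion $T(j)=N^{1+O(\delta_j)}\bigl(T(j-1)+|E(\hat H)|^{1+O(\eps)}\bigr)$ with $\delta_j=o(\eps)$ telescopes, using $N=|V(\hat H)|^{\eps}$, to $T(1/\eps)=O(|E(\hat H)|^{1+O(\eps)})$.

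I expect the main obstacle to be the global-routing step: upgrading per-$H_i$ well-connectedness into well-connectedness with respect to $\bigcup_iS(H_i)$ while keeping the congestion at $N^{6+O(j\eps^2)}$ even though the demand between the $\approx N$ pieces can be redistributed arbitrarily --- getting all the gathering, cross-routing and direct-check parameters mutually consistent is the delicate heart of the argument, and is what dictates the leading constant $6$ in the exponent of $\eta$. A secondary but pervasive difficulty is the parameter bookkeeping: each level's lifting of a distancing from a piece into $\hat H$ (and the ball-growing extraction) inflates $E'$ by the embedding length, so the stronger guarantee $|E'|\le|A|/N^{\eps^3}$ from the recursive call --- rather than merely $|E'|\le|A|/16$ --- must be maintained throughout, and it is precisely the size hypothesis $N^{\eps^4}/\log N\ge 2^{128/\eps^5}$ that leaves enough room for this over all $1/\eps$ levels.
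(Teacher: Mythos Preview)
Your inductive scheme matches the paper's: prove a more general level-$j$ statement (the paper's \Cref{thm: construct HSS}) and instantiate at $j=1/\eps$, with the Distancing Player implemented recursively, the Matching Player by path peeling in $\hat H$, and a second ``linking'' phase to certify cross-piece routability; you also correctly flag that global-routing phase as the crux. But several steps, as written, do not go through.

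First, invoking the advanced path-peeling of \Cref{sec: advanced path peeling} for the Matching Player is circular: that algorithm is built on \Cref{cor: HSS witness}, which itself rests on the theorem you are proving. The paper uses only \emph{basic} path peeling (\procpathpeel, \Cref{lem: path peel}) here --- sufficient because the Matching Player needs \emph{some} large matching in $\tilde A\times\tilde B$, not a prescribed one. Second, processing the groups $V_i$ one at a time with independent path-peeling calls in $\hat H$ gives no control on the congestion of the \emph{joint} embedding of $\bigcup_i H_i$; the \HSS definition requires congestion $N^{128\eps^2}$ for the union, not per piece. The paper runs all groups' games in parallel, issuing in each global iteration a single \procpathpeel call with a shared congestion budget across all current $(A_i,B_i)$ pairs; this parallelism is also what drives the ``many failures $\Rightarrow$ distancing'' step, since one then holds many \emph{unrouted} halves $A'_i,B'_i$ that are certifiably far apart in $\hat H\setminus E^*$, to which \procsep is applied --- whereas your collection of sparse cuts around the $\tilde A$'s does not directly yield two large sets at distance $\ge d$ in $\hat H\setminus E'$. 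Finally, the base case is much simpler than your sketch (a single \procsep call; if a large ball around some $v$ is found, set $S(H)=B_H(v,\Delta d)$, and well-connectedness is trivial because any two supported vertices are within distance $2\Delta d\le\td_1$ and the crude congestion bound $N\le\eta_1$ suffices), and the degree-reduction preprocessing is both unnecessary and harmful, as it would produce a \HSS for the wrong graph.
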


We note that our definition of the \HSS ensures that $|S(H)|\geq |V(H)|\cdot \left(1-\frac{1}{N^{\Omega(\eps^4)}}\right )$.
The proof of \Cref{thm: construct HSS last level} is similar to some of the arguments from \cite{detbalanced}, and arguments used in previous algorithms for decremental \APSP in expanders by \cite{APSP-old,APSP-previous}. We prove by induction on the level $j$ that there is a deterministic algorithm, that, given a graph $H$ with $|V(H)|=N^j$, either computes a $(\delta_j,d_j)$-distancing in graph $H$ (for appropriately chosen parameters $\delta_j$ and $d_j$), or computes a level-$j$ \HSS for $H$, such that $H$ is $(\eta_j,\td_j)$-well-connected with respect to the set $S(H)$ of vertices defined by the support structure, for appropriately chosen parameters $\eta_j,\td_j$. The algorithm for level $j$ proceeds as follows. We partition the vertices of $H$ into $N$ subsets $V_1,\ldots,V_N$, containing $N^{j-1}$ vertices each. We then let $\hset'=\set{H_1,\ldots,H_N}$ be an initial collection of graphs, where for all $1\leq i\leq N$, $V(H_i)=V_i$ and $E(H_i)=\emptyset$. We run the \DMG on all of the graphs of $\hset'$ in parallel, with the level parameter $(j-1)$; the algorithm for the Distancing Player is obtained from the induction hypothesis for level $(j-1)$. The algorithm for the Matching Player performs a routing in graph $H$ via basic path peeling, and is very similar to the algorithm employed together with \CMG in numerous previous results, e.g. \cite{fast-vertex-sparsest,APSP-old,detbalanced,APSP-previous}. If we successfully complete the \DMG on at least $r'=\Omega(r)$ graphs of $\hset$, that we denote by $\hset'=\set{H_{i_1},\ldots,H_{i_{r'}}}$, then we simultaneously obtain an embedding of graph $\bigcup_{z=1}^{r'}H_{i_z}$ into $H$, and also a guarantee that each graph $H_{i_z}\in \hset'$ is well-connected with respect to the corresponding set $S(H_{i_z})$ of vertices that is defined by its \HSS that the algorithm constructed  . We then attempt to connect, for every pair $1\leq z<z'\leq r'$ of indices, the sets $S(H_{i_z}),S(H_{i_{z'}})$ of vertices by many paths in graph $H$, so that the paths are sufficiently short and cause a low congestion. If we manage to do so for many such pairs $z,z'$ of indices, then we will obtain a collection $\hset''\subseteq \hset'$ of $r$ graphs, and a certificate that graph $H$ is well-connected with respect to the set $S(H)=\bigcup_{H_i\in \hset''}^rS(H_{i})$ of vertices. Otherwise, we will compute the required $(\delta_j,d_j)$-distancing in graph $H$. Lastly, if we fail to complete the \DMG on many of the graphs in $\hset$, then we will also compute the required $(\delta_j,d_j)$-distancing in graph $H$.

In all our subsequent algorithms, we will employ the \DMG with the algorithm for the Distancing Player implemented by \Cref{thm: construct HSS last level}. Therefore, when the algorithm terminates, it outputs a level-$(1/\eps)$ \HSS for the input graph $H$, together with a large set $S(H)$ of supported vertices, so that graph $H$ is well-connected with respect to $S(H)$.

Lastly, we provide an algorithm for decremental \APSP in a well-connected graph with a given \HSS. Specifically, we assume that we are given a graph $H$ that is an outcome of the \DMG, in which the Distancing Player is implemented by the algorithm from \Cref{thm: construct HSS last level}. Therefore, we are given a level-$(1/\eps)$ \HSS for $H$, together with a large set $S(H)$ of its vertices, so that $H$ is well-connected with respect to $S(H)$. We then assume that graph $H$ undergoes a sequence of edge deletions. As edges are deleted from $H$, the well-connectedness property may no longer hold, and the \HSS may be partially destroyed. Therefore, we only require that the algorithm maintains a large enough subset $S'(H)\subseteq S(H)$ of supported vertices, and that it can respond to short-path queries between pairs of vertices in $S'(H)$: given a pair $x,y$ of such vertices, the algorithm needs to return a path of length at most $2^{O(1/\eps^6)}$ in the current graph $H$ connecting them. We also require that the set $S'(H)$ is \emph{decremental}, so vertices can leave this set but they may not join it. The result is summarized in the following theorem.

\begin{theorem}\label{thm: APSP in HSS full}
	There is a deterministic algorithm, whose input consists of:
	
	\begin{itemize}
		\item a parameter $0<\eps<1/400$, so that $1/\eps$ is an integer;
		\item an integral parameter $N$ that is sufficiently large, so that $\frac{N^{\eps^4}}{\log N}\geq 2^{128/\eps^6}$ holds;
		\item  a graph $H$ with $|V(H)|=N^{1/\eps}$; and
		\item a level-$(1/\eps)$ hierarchical support structure for  $H$, such that $H$ is $(\eta,\td)$-well-connected with respect to the set $S(H)$ of vertices defined by the \HSS, where $\eta$ and $\td$ are the parameters from \Cref{thm: construct HSS last level}.  
	\end{itemize} 
	Further, we assume that graph $H$ undergoes an online sequence of at most $\Lambda=|V(H)|^{1-10\eps}$ edge deletions. The algorithm maintains a set $S'(H)\subseteq S(H)$ of vertices of $H$, such that, at the beginning of the algorithm, $S'(H)=S(H)$, and over the course of the algorithm, vertices can leave $S'(H)$ but they may not join it. The algorithm ensures that $|S'(H)|\geq \frac{|V(H)|}{2^{4/\eps}}$ holds at all times, and it supports short-path queries between supported vertices: given a pair $x,y\in S'(H)$ of vertices, return a path $P$ connecting $x$ to $y$ in the current graph $H$, whose length is at most $2^{O(1/\eps^6)}$, in time $O(|E(P)|)$. The total update time of the algorithm is $O( m^{1+O(\eps)})$, where $m=\max\set{|E(H)|,|V(H)|}$. 
\end{theorem}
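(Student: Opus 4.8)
The plan is to build the data structure recursively on the level $j$ of the \HSS, proving by downward induction on $j$ that for every $2\le j\le 1/\eps$ there is a deterministic algorithm with the analogous guarantees for a graph $H$ with $|V(H)|=N^{j}$ that is equipped with a level-$j$ \HSS and is $(\eta_j,\td_j)$-well-connected with respect to the resulting set $S(H)$: it tolerates $\Lambda_j=|V(H)|^{1-10\eps}=N^{j(1-10\eps)}$ edge deletions, maintains a decremental set $S'(H)$ with $|S'(H)|\ge |V(H)|/2^{c_j}$ for a budget $c_j$ with $c_j=c_{j-1}+O(1)$, and answers short-path queries with paths of length $2^{O(1/\eps^6)}$. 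The recursion bottoms out at level $1$: a well-connected graph on $N=|V(H)|^{\eps}$ vertices with no finer structure. Since such a graph has only $N$ vertices (and, in our setting, at most $N^{1+o(1)}$ edges and a deletion budget $N^{1-10\eps}$), we handle it by an essentially direct algorithm --- for instance, maintaining bounded-depth \ESTs from every vertex of $S(H)$ together with a pruning procedure that shrinks $S'(H)$ whenever well-connectedness is about to fail --- at a cost of $N^{2+o(1)}$ per base-case graph, which, summed over the $N^{(1/\eps)-1}$ base-case graphs of the whole recursion, is at most $|V(H_{\mathrm{top}})|^{1+\eps+o(1)}\le O(m^{1+O(\eps)})$.

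For the inductive step at level $j$, the \HSS supplies virtual graphs $H_1,\dots,H_r$ with $|V(H_i)|=N^{j-1}$, each carrying a level-$(j-1)$ \HSS and well-connected with respect to $S(H_i)$, together with an embedding of $\bigcup_iH_i$ into $H$ via paths of length at most $2^{O(1/\eps^4)}$ and congestion at most $N^{O(\eps^2)}$. We run the level-$(j-1)$ algorithm on each $H_i$ as a black box, and precompute the reverse map from each edge of $H$ to the (at most $N^{O(\eps^2)}$) embedding paths passing through it. When an edge $e$ is deleted from $H$, for each affected $H_i$ we delete the edge whose embedding path contained $e$ and propagate the deletion into the recursive data structure of $H_i$. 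This invariant guarantees that every edge still alive in some $H_i$ has an intact embedding path in the current $H$, so a path returned by a query to the level-$(j-1)$ algorithm on $H_i$ can be unfolded into a genuine path of the current $H$ at the cost of a $2^{O(1/\eps^4)}$ blow-up in length.

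Because the level-$(j-1)$ algorithm tolerates only $\Lambda_{j-1}=N^{(j-1)(1-10\eps)}$ deletions, we declare an $H_i$ \emph{dead} as soon as more than that many propagated deletions hit it and then remove $S'(H_i)$ from $S'(H)$. The total number of propagated deletions over all $H_i$ is at most $\Lambda_j\cdot N^{O(\eps^2)}=N^{j(1-10\eps)+O(\eps^2)}$, so the number of dead $H_i$'s is at most $N^{(1-10\eps)+O(\eps^2)}$, which is negligible compared to $r=N-O(N^{1-\eps^4})$. Since $\sum_i|V(H_i)|\ge(1-o(1))|V(H)|$ and each surviving $H_i$ satisfies $|S'(H_i)|\ge N^{j-1}/2^{c_{j-1}}$ by induction, we get $|S'(H)|\ge(1-o(1))|V(H)|/2^{c_{j-1}}$, so $c_j=c_{j-1}+1$ suffices, and working back from $c_{1/\eps}=4/\eps$ closes the count because the $1/\eps$ levels each cost only a constant factor and the target $2^{-4/\eps}$ is generous. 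A parallel, routine accounting shows that the recursion multiplies the edge count and the total work by a factor $N^{O(\eps^2)}$ per level, which compounds to $m^{O(\eps)}$ over the $1/\eps$ levels, keeping the total update time at $O(m^{1+O(\eps)})$.

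It remains to answer a short-path query between $x,y\in S'(H)$; say $x\in S'(H_{i_1})$ and $y\in S'(H_{i_2})$ for surviving indices $i_1,i_2$. If $i_1=i_2$ we query the level-$(j-1)$ algorithm on $H_{i_1}$ and unfold the returned path as above. If $i_1\ne i_2$ we must hop from $H_{i_1}$ to $H_{i_2}$ inside the current $H$; for this we maintain an auxiliary ``quotient'' structure with one node per live $H_i$, whose edges are short, low-congestion bundles of $H$-paths between the supported sets of distinct $H_i$'s, extracted during preprocessing from the routings that witness that $H$ is $(\eta_j,\td_j)$-well-connected with respect to $S(H)=\bigcup_iS(H_i)$. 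A query is resolved by routing $x$ inside $H_{i_1}$ to a port of $S'(H_{i_1})$, traversing a bounded number of quotient edges (each unfolding to an $H$-path of length at most $\td_j=2^{O(1/\eps^5)}$), and routing inside $H_{i_2}$ to $y$, for an overall length of $2^{O(1/\eps^6)}$. I expect the main obstacle to be exactly this cross-$H_i$ routing: maintaining the quotient structure under the $\Lambda_j$ deletions is essentially a decremental analogue of the well-connectedness of the single graph $H$, and one must argue that enough of the quotient survives --- removing further $H_i$'s from $S'(H)$ as necessary without violating the $|V(H)|/2^{4/\eps}$ bound --- while keeping both the number of hops (hence the path length) at $2^{O(1/\eps^6)}$ and the maintenance cost within $O(m^{1+O(\eps)})$. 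A secondary technical point is verifying that the per-level $N^{O(\eps^2)}$ amplification of the deletion count, compounded over the $1/\eps$ levels, still leaves the per-$H_i$ deletion totals within the tolerated budget $\Lambda_{j-1}$ for all but the negligible set of dead $H_i$'s.
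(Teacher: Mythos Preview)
Your recursive framework is the right one and matches the paper: prove a level-$j$ version by induction, propagate deletions from $H$ to the virtual graphs $H_i$ via the embedding, kill an $H_i$ once it absorbs $\Lambda_{j-1}$ propagated deletions, and set $S'(H)=\bigcup_{i\text{ alive}}S'(H_i)$. Your accounting for $|S'(H)|$, for the number of dead $H_i$'s, and for the $N^{O(\eps^2)}$ per-level blowups is correct in spirit.

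The genuine gap is exactly where you flagged it: the cross-$H_i$ routing. Your ``quotient structure'' proposal---precompute for every pair $(H_i,H_{i'})$ a bundle of $H$-paths from the well-connectedness witness and maintain this under deletions---does not work as stated. First, well-connectedness only asserts the \emph{existence} of such routings (with congestion $\eta_j=N^{6+O(j\eps^2)}$); the \HSS does not hand you these routings, and computing them would be far too expensive. Second, even granting the bundles, maintaining $\Theta(N^2)$ of them dynamically under deletions, and arguing that enough survive while keeping the path length at $2^{O(1/\eps^6)}$, is the entire problem restated---you would be solving decremental connectivity among the $H_i$'s without any new tool.

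The paper's mechanism is quite different and avoids precomputed bundles altogether. For each alive $H_i$ it maintains an \EST\ $\tau_i$ in the actual graph $H$ (not in $H_i$), rooted at a super-source attached to $S'(H_i)$, with depth $\Theta(d^*_j)$. Using $\tau_i$ one counts, for each $i'$, how many vertices of $S'(H_{i'})$ lie within depth $\Theta(d^*_j)$; call $H_i$ \emph{active} if at least $7N/8$ of the $H_{i'}$'s have a nonzero count, and \emph{inactive} otherwise. Any two active graphs then share a common close $H_{i''}$, which immediately yields a short $H$-path between their supported sets (this is the content of Observation~6.6). The set $S'(H)$ is taken to be $\bigcup_{H_i\text{ active}}S'(H_i)$, and queries are answered by walking down $\tau_i$ from $y$ to some $x'\in S'(H_i)$ and then querying the level-$(j-1)$ data structure inside $H_i$ for an $x$--$x'$ path. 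The one nontrivial point is bounding the number of inactive graphs by $N/32$ at all times; this is a contradiction argument that uses the $(\eta_j,\td_j)$-well-connectedness of the \emph{initial} $H$ together with the deletion budget $\Lambda_j$, not any maintained routing. So well-connectedness is used purely as a counting tool, never as a source of explicit paths.

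A minor point on your base case: the paper takes $j\le 8$ as the base (there $\Lambda_j\le 1$, so a single BFS tree from one supported vertex suffices, and no deletions need to be handled). Your level-$1$ base case with per-vertex \ESTs\ and cost $N^{2+o(1)}$ is heavier but would also work; it is not where the difficulty lies.
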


The algorithm for \Cref{thm: APSP in HSS full} is somewhat similar to the algorithm for \APSP in expanders from \cite{APSP-old}. Instead of proving \Cref{thm: APSP in HSS full} directly, we prove a more general theorem, that, for all $1\leq j\leq 1/\eps$, given a graph $H$ with $|V(H)|=N^j$ and a level-$j$ \HSS for $H$, such that $H$ is well-connected with respect to the set $S(H)$ of vertices defined by the \HSS, supports \APSP in $H$, as the graph undergoes a limited number of edge deletions. The proof of the theorem is by induction on $j$. In order to obtain an algorithm for  a fixed level $j$, we recursively maintain a data structure for \APSP in graphs $H_1,\ldots,H_r\in \hset$ that belong to the \HSS of graph $H$. We  also maintain, for all $1\leq i\leq r$, an Even-Shiloach Tree data structure in graph $H$, that is rooted at the vertices of $S'(H_i)$. These data structures allow us to maintain a large enough decremental set $S'(H)\subseteq \bigcup_iS'(H_i)$ of vertices, and to support short-path queries between pairs of vertices in $S'(H)$  efficiently.

We compare this algorithm to the best previous algorithm for \APSP in expanders, due to \cite{APSP-old,APSP-previous}. For  \APSP in expanders, we consider a typical setting where the maximum vertex degree is $\Delta=O(\poly\log n)$, and the expansion parameter is $\phi=\Omega(1/\poly\log n)$, where $n$ is the number of vertices in the input graph. For this setting, the algorithm of 
\cite{APSP-old,APSP-previous} could only return paths between pairs of vertices from the supported set of length at most $(\log n)^{O(1/\eps^2)}$, compared to path length $2^{O(1/\eps^6)}$ of the above algorithm. The running time of both algorithms in this setting (assuming that $\eps$ is not too small) is similar. On the negative side, our algorithm can only withstand  $n^{1-\Theta(\eps)}$ edge deletions, compared to the algorithm of \cite{APSP-old}, that can withstand up to $\Theta(m/\poly\log n)$ edge deletions. Also, the size $S(H)$ of supported vertices that the algorithm from \cite{APSP-old} is significantly larger: it is $\Omega(n)$, compared to our bound of $n/2^{O(1/\eps)}$. Interestingly, the tools that we developed here allow us to obtain better algorithms for the \APSP in expanders problem itself, as we show next.

\subsection{Decremental \APSP in Expanders.}

In the decremental \APSP in expanders problem, the input is a graph $G$, that is initially a $\phi$-expander. The graph undergoes an online sequence of edge deletions. The algorithm needs to maintain a partition $(S,U)$ of vertices of $G$ into a set $S$ of \emph{supported} vertices, and a set $U$ of \emph{unsupported} vertices. As the algorithm progresses, vertices may be moved from $S$ to $U$, but not in the opposite direction. The algorithm must support shorth-path query: given a pair $x,y\in S$ of supported vertices, return a short path $P$ connecting $x$ to $y$, in time $ O(|E(P)|)$. Ideally, we would like to ensure that the algorithm can withstand a long enough sequence of edge deletions, and that the set $S$ of supported vertices remains sufficiently large.
We prove the following theorem for decremental \APSP in expanders.

\begin{theorem}
	\label{thm: APSP on expanders main}
	There is a deterministic algorithm, whose input consists of
	an $n$-vertex graph $G$ with $|E(G)|=m$ that is a $\phi$-expander for some $0<\phi<1$, with maximum vertex degree at  most $\Delta$, and a parameter $\frac{2}{(\log n)^{1/12}}< \eps<\frac{1}{400}$, such that $1/\eps$ is an integer. We assume that graph $G$ undergoes an online sequence of at most $ \frac{n\cdot \phi^2}{2^{13} \Delta^4}$ edge deletions. The algorithm
	maintains a set $U\subseteq V(G)$ of vertices, such that, for every integer $t>0$, after $t$ edges are deleted from $G$, $|U|\leq \frac{2^{11}\Delta^4t}{\phi^2}$ holds. 
	Vertex set $U$ is incremental, so vertices may join it but they may not leave it.
	The algorithm also supports short-path query: given a pair of vertices $x,y\in V(G)\setminus U$, return an $x$-$y$  path $P$ in the current graph $G$, of length at most $\frac{2^{O(1/\eps^6)}\cdot \Delta^2\cdot\log n}{\phi}$, with query time $O(|E(P)|)$. 
	The total update time of the algorithm is  $O\left(\frac{m^{1+O(\eps)}\cdot \Delta^5}{\phi^2}\right )$.
\end{theorem}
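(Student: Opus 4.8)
The plan is to reduce decremental \APSP in the expander $G$ to decremental \APSP in a well-connected graph equipped with a \HSS, so that \Cref{thm: APSP in HSS full} can be applied essentially as a black box, and then to ``lift'' short paths and edge deletions back and forth between the two graphs.

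\textbf{Preprocessing: embedding a well-connected graph into $G$.} Choose an integer $N$ with $N^{1/\eps}\geq n$ and $N=n^{\eps+o(1)}$ (if $n$ is not an exact power, pad $V(G)$ with at most $N^{1/\eps}-n$ dummy vertices that are permanently placed in $U$; this is $o(n)$), set $\delta=4\eps^3$, $d=2^{32/\eps^4}$, and let $W$ be a graph on vertex set $V(G)$ that initially has no edges. Run the \DMG on $W$, implementing the Distancing Player by the algorithm of \Cref{thm: construct HSS last level}, and the Matching Player by a routing in $G$: given a $(\delta,d)$-distancing $(A_i,B_i,E_i')$ in the current $W$, use a basic path-peeling routine (e.g.\ Theorem~3.2 of \cite{detbalanced}) in $G$ with the set pair $(A_i,B_i)$ and parameters tuned to $\phi$. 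Since $G$ is a $\phi$-expander, no cut of sparsity at most $\phi$ with both sides of size $\Omega(n^{1-\delta})$ exists in $G$, so path peeling must succeed and, after discarding at most $|A_i|/16$ pairs that coincide with edges of $E_i'$, returns a matching $M_i\subseteq A_i\times B_i$ with $|M_i|\geq |A_i|/8$, realized by paths of length $O(\Delta\log n/\phi)$ and congestion $O(\Delta^2\log^2 n/\phi^2)$ in $G$. Add $M_i$ to $W$ and continue. By \Cref{thm: intro distancing-matching game - number of iterations}, after at most $n^{8\delta}$ iterations the Distancing Player of \Cref{thm: construct HSS last level} can no longer produce a distancing, hence it outputs a level-$(1/\eps)$ \HSS for $W$ and a supported set $S(W)$ with respect to which $W$ is $(\eta,\td)$-well-connected. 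The union of all the matching routings is an embedding $\mathcal{E}$ of $W$ into $G$ with path length $O(\Delta\log n/\phi)$ and congestion $\eta_{\mathcal E}=n^{O(\eps^3)}\cdot O(\Delta^2\log^2 n/\phi^2)$. Feed $W$ and its \HSS to the algorithm of \Cref{thm: APSP in HSS full}, which maintains a decremental supported set $S'(W)\subseteq S(W)$ of size at least $n/2^{4/\eps}$ under up to $\Lambda_W=n^{1-10\eps}$ edge deletions to $W$, answering short-path queries inside $W$ with paths of length $2^{O(1/\eps^6)}$. Initialize $U=V(G)\setminus S'(W)$ together with the dummies, and maintain the invariant $V(G)\setminus U\subseteq S'(W)$.

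\textbf{Handling deletions and queries.} When an edge $e$ of $G$ is deleted, the embedding paths of $\mathcal{E}$ that use $e$ are broken; for each affected $W$-edge we attempt to re-route it along a fresh short path (length $O(\Delta\log n/\phi)$) inside the part of $G$ that still behaves like an expander, and only if no such re-route exists do we delete the $W$-edge, report the deletion to the data structure of \Cref{thm: APSP in HSS full}, and move the endpoints that can no longer be served into $U$. To keep the running time near-linear, the re-routings caused by an entire batch of $G$-deletions are performed together by a single call to the advanced path-peeling algorithm of \Cref{sec: advanced path peeling} on the list of broken pairs (whose running time does not grow with the number of routed pairs). To answer a short-path query $(x,y)$ with $x,y\in V(G)\setminus U$, query \Cref{thm: APSP in HSS full} for an $x$-$y$ path $P_W$ in $W$ of length $2^{O(1/\eps^6)}$, replace every edge $f$ of $P_W$ by its current embedding path $\mathcal{E}(f)$ in $G$ of length $O(\Delta\log n/\phi)$, and shortcut the resulting $x$-$y$ walk to a simple path; its length is $\frac{2^{O(1/\eps^6)}\cdot\Delta^2\cdot\log n}{\phi}$ (the extra $\Delta$ factor over a crude count comes from composing the \HSS layers with the embedding), and the query time is linear in the output length.

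\textbf{Main obstacle and bookkeeping.} The crux is the decremental maintenance of $\mathcal{E}$ together with the accounting that makes the stated deletion tolerance and the bound on $|U|$ hold simultaneously. On one hand, a single $G$-deletion can break $\eta_{\mathcal E}=n^{o(1)}\cdot\poly(\Delta/\phi)$ embedding paths, so \emph{never} re-routing would push far more than $\Lambda_W=n^{1-10\eps}$ deletions into $W$ once $t$ approaches $\Theta(n\phi^2/\Delta^4)$; on the other hand, re-routing each broken path by an individual BFS is far too slow. The resolution must combine expander pruning on $G$ --- to certify that a short re-route of a $W$-edge exists as long as both its endpoints still lie in the pruned sub-expander, and to charge every $W$-vertex ever moved to $U$ against the $O(\Delta^4 t/\phi^2)$ pruned volume --- with a coarse enough batching of re-routings through advanced path peeling, plus an amortized argument that $W$-edges are destroyed slowly enough that at most $\Lambda_W$ of them are ever deleted over the allowed $O(n\phi^2/\Delta^4)$ deletions. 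Granting this, the total update time is the sum of the $O(|E(G)|^{1+O(\eps)})$ preprocessing of \Cref{thm: construct HSS last level}, the $O(m^{1+O(\eps)})$ total update time of \Cref{thm: APSP in HSS full} (using $|E(W)|\leq n\cdot n^{O(\eps^3)}$), and the cost of the batched re-routings, and one verifies that all three fit within $O\!\left(m^{1+O(\eps)}\cdot\Delta^5/\phi^2\right)$; correctness of the returned paths follows because $\mathcal{E}$ is maintained throughout, and the size bound on $S'(W)$ yields the bound on $|U|$.
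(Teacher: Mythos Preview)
Your proposal has the right starting picture --- embed a well-connected graph $W$ into $G$ via the \DMG and run \Cref{thm: APSP in HSS full} on $W$ --- but two structural pieces of the paper's argument are missing, and one of them causes the stated guarantees to fail outright.

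\textbf{The definition of $U$ is wrong, and there is no \EST.} You set $U=V(G)\setminus S'(W)$ and maintain the invariant $V(G)\setminus U\subseteq S'(W)$. But \Cref{thm: APSP in HSS full} only guarantees $|S'(W)|\geq n/2^{4/\eps}$, so your $U$ can have size close to $n$ even before any deletion, flatly contradicting the bound $|U|\leq 2^{11}\Delta^4 t/\phi^2$ already at $t=1$. The paper fixes this by \emph{not} identifying $U$ with the complement of $S'(W)$: instead it maintains an \EST in $G$ rooted at the (shrinking) set $S'(W)$ with depth $\Theta(\Delta\log n/\phi)$, and defines $U$ to be the vertices that drop out of this tree. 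The expander property of $G$ (together with the fact that few edges have been deleted) forces $U$ to be small, and any $x,y\notin U$ are first routed to $S'(W)$ via the tree and only then connected inside $W$. Without this \EST layer you cannot serve queries for vertices outside $S'(W)$ while keeping $|U|$ tiny.

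\textbf{Re-routing embedding paths versus restarting in phases.} Your mechanism for stretching the deletion tolerance from $\Lambda_W=n^{1-10\eps}$ up to $\Theta(n\phi^2/\Delta^4)$ is to re-route broken embedding paths in batches via advanced path peeling. This is plausible in spirit but the accounting is not made to work: a single $G$-deletion breaks up to $\eta_{\mathcal E}=n^{O(\eps^3)}\cdot\poly(\Delta/\phi)$ embedding paths, the broken endpoints need not form a matching (so \Cref{thm: main main advanced path peeling} does not apply directly), and you give no amortized bound on the number of re-routing calls or on the total $W$-edge deletions. The paper avoids all of this with a much simpler device: it first proves a one-phase version (Lemma~7.1 in the paper) that tolerates only $n^{1-20\eps}(\phi^*)^2/\Delta^2$ deletions, then partitions the full deletion sequence into $n^{O(\eps)}$ phases of that length, runs the one-phase algorithm from scratch in each phase, and uses expander pruning (\Cref{thm: expander pruning}) on a side copy $G'$ between phases to produce the graph $H_i=G'\setminus\tilde U_i$ on which the next phase is run. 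The global $U$ is then $\tilde U_i\cup U_i$, and the $\Delta^4/\phi^2$ factor in $|U|$ arises because the one-phase lemma is applied to a $\phi/(6\Delta)$-expander rather than a $\phi$-expander. Your sketch has neither the one-phase lemma nor the phase/pruning wrapper, and the re-routing substitute is not made rigorous.
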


For a typical setting where $\Delta,1/\phi=O(\poly\log n)$, the algorithm, in response to a short-path query, returns a path of length at most $2^{O(1/\eps^6)}\cdot \poly\log n$, with total update time $O(n^{1+O(\eps)})$. For the same setting, the best previous algorithm of \cite{APSP-old}, returned paths of length at most $(\log n)^{O(1/\eps^2)}$ in response to queries, and had similar total update time.  On the negative side, the algorithm of \cite{APSP-old,APSP-previous} could withstand a longer sequence of edge deletions, though in both cases it remains $\Omega(n/\poly\log n)$.  The cardinality of the set $U$ of unsupported vertices is somewhat lower in \cite{APSP-old}, though for this setting it remains in both cases $\Omega(t\cdot \poly\log n)$ after $t$ edge deletions. 
Note that, for constant-degree expanders, by letting $\eps=(1/\log\log\log n)$, we can ensure that the paths returned in response to short-path queries have length at most $(\log n)^{1+o(1)}$, and the total update time of the algorithm is $n^{1+o(1)}$.

\subsection{Advanced Path Peeling and Deterministic Algorithm for the Cut Player in the Cut-Matching Game.}

We prove the following theorem for advanced path peeling.

\begin{theorem}\label{thm: main main advanced path peeling}
	There is a deterministic algorithm, whose input consists of a connected $n$-vertex $m$-edge graph $G$,  a collection $\mset=\set{(s_1,t_1),\ldots,(s_k,t_k)}$ of pairs of vertices in $G$, such that $\mset$ is a matching, and  parameters $0<\alpha\leq 1/2$, $0<\phi<1$ and $\frac{4}{(\log n)^{1/24}}< \eps<\frac{1}{400}$. The algorithm computes one of the following:
	
	\begin{itemize}
		\item either a cut $(A,B)$ with $|E_G(A,B)|\leq \phi\cdot \min\set{|E_G(A)|,|E_G(B)|}$, and each of $A$, $B$ contains at least $\frac{\alpha k}{16}$ vertices of set $T=\set{s_1,t_1,\ldots,s_k,t_k}$; or

		\item a routing $\pset$ in $G$ of a subset $\mset'\subseteq \mset$ containing at least $(1-\alpha)k$ pairs of vertices, such that every path in $\pset$ has length at most $\frac{2^{O(1/\eps^6)}\cdot \log n}{\phi}$, and the total congestion caused by the paths in $\pset$ is at most $\frac{2^{O(1/\eps^6)}\cdot \log n}{ \phi^2}\cdot\min\set{\frac 1{\alpha},\log n}$.
	\end{itemize} 
	
	The running time of the algorithm is bounded by $O\left (\frac{m^{1+O(\eps)}}{\phi^3}\right )$.
\end{theorem}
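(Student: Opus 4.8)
The plan is to reduce advanced path peeling to the machinery we already built for well-connected graphs: the \DMG together with the Distancing-Player implementation of \Cref{thm: construct HSS last level}, and the decremental \APSP data structure of \Cref{thm: APSP in HSS full}. The high-level idea is that if we could embed a well-connected graph on (a large subset of) the terminal set $T$ into $G$, then routing the specific matching $\mset$ becomes easy: the well-connectedness property gives us, for the pair of sets $A = \set{s_1,\ldots,s_k}$ and $B=\set{t_1,\ldots,t_k}$, a low-congestion short-path routing \emph{inside} the well-connected graph, which we then translate back to $G$ via the embedding. The alternative, failure branch of the construction will hand us a sparse cut in $G$ that separates many terminals, which is exactly the first output.

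Concretely, first I would run the \DMG on a graph $H$ whose vertex set is a padded copy of $T$ (blown up to a power of $N^{1/\eps}$ by adding dummy vertices, with $N$ chosen polynomially in $k$), using the algorithm of \Cref{thm: construct HSS last level} for the Distancing Player. The Matching Player is implemented by \emph{basic} path peeling in $G$ — the standard greedy/Ball-Growing routine, run with the distance threshold $2^{O(1/\eps^4)}$ and congestion threshold $\poly(k)^{O(\eps)}/\phi$ dictated by the \HSS parameters, scaled by $\Delta$ and $1/\phi$ as in Theorem 3.2 of \cite{detbalanced}. Each iteration either returns a large enough matching in $A_i\times B_i$ together with its routing in $G$ (so the game proceeds and we accumulate an embedding of the growing graph $H$ into $G$), or it returns a cut in $G$ of sparsity at most $\phi$ that is balanced with respect to the current vertex subsets; by tracking how many terminals lie on each side, this yields the first alternative of the theorem (after possibly cleaning up with a Ball-Growing argument to make both sides contain $\Omega(\alpha k)$ terminals). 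By \Cref{thm: intro distancing-matching game - number of iterations} the game terminates after at most $k^{O(\eps)}$ iterations, so the total number of basic-path-peeling invocations, and hence the total work, is $m^{1+O(\eps)}/\phi^3$ as claimed. When the game terminates without producing a distancing, \Cref{thm: construct HSS last level} guarantees a level-$(1/\eps)$ \HSS for $H$ and a large supported set $S(H)$ with respect to which $H$ is $(\eta,\td)$-well-connected, where $\eta = k^{O(\eps)}$ and $\td = 2^{O(1/\eps^5)}$; moreover $|S(H)| \geq |V(H)|(1-k^{-\Omega(\eps^4)})$, so $S(H)$ misses at most $\alpha k/2$ of the real terminals (for the given range of $\eps$, since $\eps > 4/(\log n)^{1/24}$ forces $k^{\eps^4}$ to dominate).

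Next I would feed this $H$ together with its \HSS into the decremental \APSP data structure of \Cref{thm: APSP in HSS full} — but in fact I only need the \emph{static} query at time $0$: well-connectedness of $H$ with respect to $S(H)$ directly gives, for any two disjoint equal-size subsets of $S(H)$, a collection of paths of length $\le \td$ with congestion $\le \eta$ matching one to the other; I can also obtain an explicit such routing in time $m^{1+O(\eps)}$ by the same \DMG/path-peeling bootstrap applied one more level, or simply by invoking the query procedure of \Cref{thm: APSP in HSS full} repeatedly. Restricting to the $\ge (1-\alpha)k$ pairs $(s_i,t_i)$ whose \emph{both} endpoints lie in $S(H)$, I obtain a routing $\pset_H$ of $\mset' \subseteq \mset$, $|\mset'| \ge (1-\alpha)k$, inside $H$, with path length $\le \td = 2^{O(1/\eps^5)}$ and congestion $\le \eta$. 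Composing $\pset_H$ with the embedding of $H$ into $G$ (each $H$-edge replaced by its embedding path of length $\le 2^{O(1/\eps^4)}\cdot\Delta\log n/\phi$, causing congestion $\le k^{O(\eps)}\cdot\Delta^2\log^2 n/\phi^2$) and then uncrossing/shortcutting to remove self-intersections, yields a routing $\pset$ in $G$ of $\mset'$ with path length $2^{O(1/\eps^6)}\log n/\phi$ and congestion $2^{O(1/\eps^6)}\log n/\phi^2$; the extra $\min\set{1/\alpha,\log n}$ factor in the congestion bound comes from the fact that when $\alpha$ is small we must insist every $H$-edge is used by a bounded number of the $\mset'$-paths, which forces congestion $O(1/\alpha)$ on $H$ but never more than the $O(\log n)$ that the Ball-Growing fallback guarantees anyway.

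The main obstacle I expect is \textbf{correctly interleaving the two failure modes so that the cut, when produced, separates $\Omega(\alpha k)$ terminals on each side.} Basic path peeling naturally returns a cut that is balanced with respect to \emph{volume} or with respect to the sets $A_i,B_i$ currently being routed, not with respect to the original terminal multiset $T$; similarly the sparse cut that arises when too many pairs of supported sub-instances fail to be mutually well-connected (the ``otherwise'' branch inside the proof of \Cref{thm: construct HSS last level}) is balanced with respect to $V(H)$, which includes dummy vertices. Getting from there to a cut balanced in $T$ requires a careful accounting: either enough terminals already sit on the small side, or we re-run a Ball-Growing argument around the terminal set to grow the small side up to $\Omega(\alpha k)$ terminals while keeping sparsity $O(\phi)$ — and one must check this is always possible given that $G$ is connected and $|\mset|=k$. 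The second delicate point is ensuring the dummy vertices and the $\le \alpha k/2$ unsupported real terminals never pollute the final routing; this is handled by simply discarding any pair with an endpoint outside $S(H)$, but one must verify the count $|\mset'|\ge (1-\alpha)k$ survives both the $\alpha k/2$ lost to non-supported endpoints and the $\alpha k/2$ lost earlier in padding — which is why the theorem is stated with the single aggregate slack parameter $\alpha$ rather than two separate ones.
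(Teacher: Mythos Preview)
Your proposal has a genuine gap at the step where you extract the routing of the \emph{specific} matching $\mset$ from the well-connected graph $H$. The $(\eta,\td)$-well-connectedness property only says that for disjoint equal-cardinality sets $A,B\subseteq S(H)$ there exists \emph{some} one-to-one routing of $A$ to $B$ with congestion $\eta$; it does not let you prescribe which vertex of $A$ goes to which vertex of $B$. So taking $A=\{s_i\}$ and $B=\{t_i\}$ gives you a low-congestion routing of the wrong matching. Your fallback --- ``simply invoking the query procedure of \Cref{thm: APSP in HSS full} repeatedly'' --- does return a short $s_i$--$t_i$ path for each $i$, but gives no congestion bound across the $k$ queries: nothing prevents a single edge of $H$ from lying on all $k$ paths. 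Composing with the embedding then yields congestion $k\cdot k^{O(\eps)}/\phi^2$ in $G$, not the $2^{O(1/\eps^6)}\log n/\phi^2$ the theorem promises. (Your write-up in fact drops a $k^{O(\eps)}$ factor between one sentence and the next without justification.)

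The paper resolves this by making the routing itself \emph{decremental} and iterating. Inside a single phase (their \Cref{lem: inner advanced path peeling}), the data structure of \Cref{thm: APSP in HSS full} is maintained on $H$ while an \EST in $G$ is rooted at the supported set $S'(H)$; each time an $s_i$--$t_i$ path is found in $G$, every edge of $G$ that has reached congestion $\eta$ is deleted, which in turn deletes edges from $H$. Because the \HSS data structure survives only $|V(H)|^{1-\Theta(\eps)}$ edge deletions, one phase routes only $z\approx k^{1-22\eps}/d$ pairs. An outer loop (their \Cref{thm: outer advanced path peeling}) then rebuilds $H$ from scratch on the remaining pairs and repeats, for $\approx d\cdot k^{O(\eps)}\log k$ phases; the per-phase congestion $\eta$ thus accumulates to $\eta\cdot (\text{number of phases})$, which after the parameter choices becomes the claimed $2^{O(1/\eps^6)}\log n/\phi^2\cdot\min\{1/\alpha,\log n\}$. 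A further nontrivial Stage~2 is needed to glue the many small sparse cuts produced across phases into a single cut with $\Omega(\alpha k)$ terminals on each side --- the issue you flagged as the ``main obstacle'' is real, but it is downstream of the congestion-control mechanism you are missing.
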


The idea in the proof of the theorem is to attempt to embed a well-connected graph $H$, whose vertex set is $T$, into $G$, via the \DMG. We require that the embedding paths are short and cause low congestion. If we fail to do so, we will immediately obtain the desired sparse cut. Otherwise, we can rely on the algorithm for \APSP in well-connected graphs from \Cref{thm: APSP in HSS full}, together with an \EST in graph $G$ that is rooted at the set $S'(H)$ of supported vertices of $H$ that the algorithm from \Cref{thm: APSP in HSS full} maintains, in order to support approximate shortest path queries in graph $G$. We then greedily compute short paths routing pairs of vertices in $\mset$, while deleting edges that participate in too many paths from $G$. Once a large enough number of paths is routed (so the algorithm from \Cref{thm: APSP in HSS full} may no longer support short-path queries), we start the whole procedure from scratch.

Next, we provide the following deterministic algorithm for the Cut Player from the \CMG.

\begin{theorem}\label{thm: new cut player}
	There is a deterministic algorithm, that, given an $n$-vertex and $m$-edge graph $G=(V,E)$ with maximum vertex degree $\Delta$, and a parameter $\frac{2}{(\log n)^{1/25}}< \eps<\frac{1}{400}$, returns one of the following:
	
	\begin{itemize}
		\item either a cut $(A,B)$ in $G$ with $|A|,|B|\geq n/4$ and $|E_G(A,B)|\leq n/100$; or
		\item a subset $S\subseteq V$ of at least $n/2$ vertices, such that graph $G[S]$ is $\phi^*$-expander, for $\phi^*\geq \Omega\left ( \frac{1}{2^{O(1/\eps^6)}\cdot \Delta^3\cdot \log^2 n }\right )$.
	\end{itemize}
	
	The running time of the algorithm is $O\left(m^{1+O(\eps)}\cdot \Delta^7\right )$.
\end{theorem}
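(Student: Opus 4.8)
The plan is to implement the Cut Player without ever invoking the \CMG, instead using the advanced path‑peeling algorithm of \Cref{thm: main main advanced path peeling} to try to \emph{embed a fixed bounded‑degree expander} into an induced subgraph of $G$. I would fix, via a deterministic near‑linear time construction, a $d_0$‑regular $\psi_0$‑expander $\hat H$ on any prescribed vertex set, where $d_0,\psi_0$ are absolute constants, and edge‑color $E(\hat H)$ into $q=d_0+1$ near‑perfect matchings $M_1,\dots,M_q$, each of size $\Theta(|V(\hat H)|)$. The algorithm maintains a working set $W\subseteq V$, initially $W=V$, together with the discarded set $R=V\setminus W$, under the invariant that $(R,W)$ is an extremely sparse cut of $G$ and $|W|\ge 3n/4$ as long as the algorithm has not terminated. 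In each round it builds a copy of $\hat H$ on the current $W$ and runs \Cref{thm: main main advanced path peeling} on $G[W]$, once for each color class $M_i$, with a small absolute constant $\alpha$, sparsity parameter $\phi=\Theta\bigl(1/(2^{O(1/\eps^6)}\cdot\Delta\cdot\poly\log n)\bigr)$, and precision parameter $\eps'=\Theta(\eps)$ (one takes $\eps'=\max\{\eps,\,4/(\log n)^{1/24}\}$ up to a constant; a short computation shows this is still $O(\eps)$ for every admissible $\eps$, so $2^{O(1/\eps'^6)}=2^{O(1/\eps^6)}$ and $m^{O(\eps')}=m^{O(\eps)}$, and the hypotheses of \Cref{thm: main main advanced path peeling} hold).

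If for some $i$ the call returns a cut $(A,B)$ of $G[W]$ of sparsity at most $\phi$ with each side containing at least $\alpha|M_i|/16=\Omega(\alpha n)$ vertices of $W$, I would move the smaller side $Z$ of this cut from $W$ into $R$ and continue. Each such step enlarges $R$ by $\Omega(\alpha n)$, so after $O(1/\alpha)=O(1)$ rounds $|R|\ge n/4$; at that point both sides of $(R,W)$ have size between $n/4$ and $3n/4$, and telescoping the sparsity bounds of the $O(1/\alpha)$ removed cuts gives $|E_G(R,W)|\le O(\phi\cdot m)\le O(\phi\cdot\Delta n)\ll n/100$, so $(R,W)$ is output as the first alternative. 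If instead all $q$ calls succeed, I obtain an embedding into $G[W]$ of the subgraph $\hat H\setminus F$, where $F$ is the set of dropped matching edges with $|F|\le\alpha\cdot|E(\hat H)|=O(\alpha n)$, with congestion at most $\eta=O\!\bigl(\tfrac{2^{O(1/\eps^6)}\log n}{\phi^2\alpha}\bigr)$ and short dilation.

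It then remains to convert this embedding into a certificate that $G[S]$ is a $\phi^*$‑expander for some $S\subseteq W$ with $|S|\ge n/2$. Since the embedded graph spans all of $W$, the congestion bound yields $|E_{G[W]}(X,W\setminus X)|\ge\tfrac1\eta|E_{\hat H\setminus F}(X,W\setminus X)|\ge\tfrac1\eta\bigl(\psi_0\min\{|X|,|W\setminus X|\}-|F|\bigr)$ for every $X\subseteq W$, so $W$ is, up to an additive slack of $|F|/\psi_0=O(\alpha n)$, a $\Omega(\psi_0/\eta)$‑near‑expander in $G$. Choosing $\alpha$ small enough that $|F|\le\psi_0\vol(\hat H)/10$, a standard trimming/pruning argument (in the spirit of \cite{expander-pruning,detbalanced}) then produces $S\subseteq W$ with $|W\setminus S|$ only a tiny fraction of $|W|$ — hence $|S|\ge n/2$ — such that $G[S]$ is an honest expander; converting the conductance‑type guarantee obtained this way into the vertex‑based sparsity required in the statement costs another factor $\Delta$, so with $\phi$ chosen as above the resulting sparsity is $\Omega\bigl(1/(2^{O(1/\eps^6)}\Delta^3\log^2 n)\bigr)=\phi^*$. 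The running time is $O(1)$ rounds, each making $O(1)$ calls to \Cref{thm: main main advanced path peeling} on an $m$‑edge graph at cost $O(m^{1+O(\eps)}/\phi^3)$, plus a near‑linear trimming step, for a total of $O(m^{1+O(\eps)}/\phi^3)=O(m^{1+O(\eps)}\cdot\Delta^7)$.

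The main obstacle is exactly this last conversion. Advanced path peeling is inherently lossy: it must be allowed to drop $\Theta(\alpha k)$ of the prescribed pairs — forcing $\alpha$ toward $0$ would make any cut it returns too unbalanced to be of use — so the embedded graph $\hat H\setminus F$ is not itself an expander, and small sparse cuts of $G[W]$ are not yet ruled out. One therefore has to (i) trim $\hat H\setminus F$ down to a genuine expander while discarding only an $O(\alpha)$‑fraction of $W$, and, more delicately, (ii) ensure that the trimmed certificate actually lives inside $G[S]$, rather than merely inside $G$ with $S$ being hard to separate — which may require interleaving the trimming with one more embedding round on the shrunken working set. A secondary but unavoidable source of book‑keeping is the chain of $\Delta$‑dependent conversions between the edge‑based cut sparsity in which \Cref{thm: main main advanced path peeling} reports its output and the vertex‑based sparsity in which the Cut Player's guarantee is phrased; these must be tracked carefully to land at the stated $\phi^*$ and running time.
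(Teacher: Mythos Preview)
Your overall strategy is the paper's: build a fixed constant-degree expander $\hat H$ on the current working set, edge-colour it into $O(1)$ matchings, run advanced path peeling (\Cref{thm: main main advanced path peeling}) on each colour class, and either accumulate sparse cut pieces or, once every colour class is routed, extract an induced expander from the embedding via \Cref{lem: embedding expander w fake edges gives expander}. The paper also sets $\phi=\Theta(1/\Delta)$, as you effectively must, to make the accumulated cut satisfy $|E_G(R,W)|\le n/100$.

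The genuine gap is your choice of a \emph{constant} balance parameter $\alpha$. With $\alpha=\Theta(1)$ the set of unrouted pairs has $|F|=\Theta(\alpha n)=\Theta(n)$, but the extraction lemma you need (\Cref{lem: embedding expander w fake edges gives expander}) requires $|F|\le \psi_0|W|/(32\Delta_G\eta)$. Since $\eta\ge 2^{O(1/\eps^6)}\log n/\phi^2$ and $\phi=O(1/\Delta)$, the right-hand side is at most $n/(2^{O(1/\eps^6)}\Delta^3\log n)$, which a constant-fraction $|F|$ violates. The ``standard trimming/pruning'' you invoke only prunes $\hat H$: applying \Cref{thm: expander pruning} to $\hat H$ with $F$ deleted yields a genuine expander $\hat H'\subseteq\hat H$, but its embedding paths still traverse all of $G[W]$, not $G[V(\hat H')]$, so no induced subgraph of $G$ has been certified. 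You correctly name this as the obstacle, but the suggested fix of ``one more embedding round on the shrunken working set'' is not developed, and it is unclear it terminates without re-introducing the same slack.

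The paper resolves this not by a cleverer extraction but by choosing $\alpha$ tiny: it sets $\rho=n/(2^{\hat c/\eps^6}\Delta^3\log^2 n)$ and $\alpha=\rho/(2|\mset_i|)$, so $|F|\le 19\rho$ and the hypothesis of \Cref{lem: embedding expander w fake edges gives expander} holds outright. The price is that when path peeling returns a cut, each side is only guaranteed $\Omega(\rho)$ terminals, so the discarded set grows slowly and the algorithm needs $2^{O(1/\eps^6)}\Delta^3\log^2 n$ phases rather than $O(1)$. This is where the extra $\Delta^3$ in the $\Delta^7$ running time comes from: each phase costs $O(m^{1+O(\eps)}\Delta^4)$, and the phase count contributes another $\Delta^3$.
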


We note that, since the number of iterations in the \CMG is bounded by $O(\log n)$, we can assume that $\Delta\leq O(\log n)$. By setting $\eps=1/(\log\log\log n)^{1/6}$, we can then guarantee that $\phi^*\geq \frac{1}{(\log n)^{5+o(1)}}$, and the running time of the algorithm is bounded by $O(n^{1+o(1)})$. In contrast, the algorithm of \cite{detbalanced} could only achieve expansion $\phi^*\geq 1/(\log n)^{1/\eps}$ with running time $n^{1+O(\eps)}$, and so in time $n^{1+o(1)}$ it could only achieve expansion $1/(\log n)^{\omega(1)}$.

Our techniques are different from those of \cite{detbalanced}, who rely on a recursive application of the \CMG to smaller and smaller graphs. Instead, we compute a constant-degree $n$-vertex expander $H$, and then attempt to embed it into $G$ using the algorithm for advanced path peeling from \Cref{thm: main main advanced path peeling}. If we successfully embed most edges of $H$ into $G$, then, by invoking the expander pruning result of \cite{expander-pruning}, we can compute a large enough subset $X\subseteq V(G)$ of vertices, such that $G[X]$ is a $\phi^*$-expander. Otherwise, we obtain a sparse cut $(A,B)$ in $G$ with $|A|\geq |B|$. We then delete the vertices of $B$ from $G$, and repeat this procedure. The algorithm continues as long as $G$ contains at least $2n/3$ vertices. Once the number of vertices in $G$ falls below $2n/3$, if we did not successfully embed an expander into $G$ so far, then we obtain a sparse cut $(A',B')$ in $G$, where $A'$ contains all vertices that currently remain in $G$.

\subsection{Sparsest Cut and Lowest Conductance Cut.}

We prove the following result for the \sparsest and \lowcond problems.

\begin{theorem}\label{thm: sparsest and lowest cond}
	There are deterministic algorithms for the \sparsest and the \lowcond problems, that achieve a factor-$O(\log^7n \log\log n )$-approximation in time  $O\left (m^{1+o(1)}\right )$, where $n$ and $m$ are the number of vertices and edges, respectively, in the input graph.
\end{theorem}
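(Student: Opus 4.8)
The plan is to instantiate the classical Cut-Matching paradigm of \cite{KRV}, in the variant of \cite{KhandekarKOV2007cut} analyzed in \cite{detbalanced}, with the randomized Cut Player replaced by the deterministic algorithm of \Cref{thm: new cut player}, and with the Matching Player implemented by a deterministic path-peeling routine (the almost-linear-time, approximate-max-flow-based version from \cite{detbalanced}). Two routine reductions come first. (i) It suffices to solve the \emph{decision} problem: given $G$ and a target value $\phi$, either return a cut of $G$ of sparsity at most $O(\phi)$, or certify that every cut of $G$ has sparsity at least $\phi/(\log n)^{7+o(1)}$; binary search over $\phi\in\{2^{-i}:0\le i\le O(\log m)\}$ then yields the claimed approximation for \sparsest at the price of an $O(\log m)$ factor in the running time. (ii) \lowcond reduces to \sparsest on bounded-degree graphs by the standard vertex-splitting reduction --- replace each vertex $v$ by a constant-degree expander on $\deg_G(v)$ vertices, so that a cut of $G$ of conductance $c$ corresponds to a cut of the new graph of sparsity $\Theta(c)$ and vice versa --- and a similar preprocessing lets us assume the maximum degree of the input to \sparsest is $O(1)$. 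So it remains to solve the bounded-degree decision problem for \sparsest.

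To do so with target $\phi$: maintain a graph $H$ on $V(G)$ with initially no edges and play the \CMG in the \cite{KhandekarKOV2007cut}/\cite{detbalanced} form, which terminates within $k=O(\log n)$ iterations regardless of the players, so the maximum degree of $H$ stays $O(\log n)$ throughout. In iteration $i$ we run \Cref{thm: new cut player} on $H$ with a precision parameter $\eps$ chosen so that $1/\eps$ is an integer, $\eps>2/(\log n)^{1/25}$, $n^{O(\eps)}=n^{o(1)}$, and $2^{O(1/\eps^6)}=O(\log\log n)$ (e.g. $1/\eps=\Theta((\log\log\log n)^{1/6})$); its running time is then $n^{1+o(1)}$. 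If it returns a balanced cut of $H$, the Matching Player must route the corresponding (near-)perfect matching between the two sides in $G$: we invoke the path-peeling algorithm of \cite{detbalanced} with sparsity parameter $\phi$, which either returns a sufficiently balanced cut of $G$ of sparsity at most $\phi$ (we stop and output it), or routes all but at most $z$ of the requested pairs along paths of length $O(\log n/\phi)$ with congestion $O(\log n/\phi)$, where $z$ is the routing-failure threshold of path peeling and the at most $z$ unrouted pairs are inserted into $H$ as ``fake'' edges as in \cite{detbalanced}. If instead \Cref{thm: new cut player} returns a set $S\subseteq V(H)$ with $|S|\ge n/2$ and $H[S]$ a $\phi^*$-expander with $\phi^*\ge 1/(\log n)^{5+o(1)}$, the Matching Player routes the final matching between $S$ and $V(H)\setminus S$ in the same way (a routing failure again yields a balanced sparse cut of $G$ and we are done), after which $H$ --- up to $O(kz)$ fake edges --- is a $(\phi^*/2)$-expander on \emph{all} of $V(G)$, and we stop. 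The routing paths of all iterations together form an embedding of $H$ (minus fake edges) into $G$ of total congestion $\eta\le k\cdot O(\log n/\phi)=O(\log^2 n/\phi)$.

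If the game ends without ever producing a sparse cut of $G$, we have a $(\phi^*/2)$-expander $H$ on $V(G)$ embedded into $G$ with congestion $\eta$, with only $O(kz)$ fake edges removed. Since $\phi^*\ge 1/(\log n)^{5+o(1)}$ and $z$ was taken small enough, discarding the fake edges still leaves expansion $\Omega(\phi^*)$ on every cut relevant to the argument, so counting embedding paths that cross a cut $(X,Y)$ of $G$ gives $|E_G(X,Y)|\ge \frac{\Omega(\phi^*)}{\eta}\cdot\min\{|X|,|Y|\}$; hence the sparsest-cut value of $G$ is at least $\frac{\Omega(\phi^*)\phi}{O(\log^2 n)}=\frac{\phi}{(\log n)^{7+o(1)}}$, and with the calibrated $\eps$ the $o(1)$ in the exponent becomes a single $\log\log n$ factor. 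Feeding this into the binary search: at the transition value of $\phi$ we obtain a cut of $G$ of sparsity $O(\phi)$ together with the matching lower bound $\frac{\phi}{O(\log^7 n\log\log n)}$ on the optimum, i.e. an $O(\log^7 n\log\log n)$-approximation; pulling back through the vertex-splitting reduction and from sparsity to conductance changes this by $O(1)$. For the running time, each of the $O(\log m)$ binary-search rounds runs $O(\log n)$ iterations, each costing $n^{1+o(1)}$ (Cut Player) $+$ $m^{1+o(1)}$ (path peeling), for a total of $m^{1+o(1)}$.

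The conceptually easy part is that everything reduces to \Cref{thm: new cut player}; the actual work --- and the reason the exponent is exactly $7$ --- lies in the bookkeeping that couples the deterministic Cut Player with path peeling inside the \cite{KhandekarKOV2007cut}/\cite{detbalanced} game. One must use the \emph{improved} (approximate-max-flow-based) path peeling rather than the elementary greedy one (the latter loses an extra $\Theta(\log n/\phi)$ in congestion, which would push the exponent to $8$ or $9$); bound the number of fake edges accumulated over the $O(\log n)$ iterations so that their removal preserves $\Omega(\phi^*)$ expansion for all cuts that matter for the certificate; and calibrate $\eps$ precisely so that $2^{O(1/\eps^6)}$ is squeezed down to $O(\log\log n)$ while $n^{O(\eps)}$ stays $n^{o(1)}$ and the integrality and lower-bound constraints on $\eps$ from \Cref{thm: new cut player} are honoured. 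Finally, the vertex-splitting reductions must be arranged so that bounded maximum degree may be assumed at only an $O(1)$ cost in approximation. I expect the fake-edge accounting together with the exact calibration of $\eps$ to be the main obstacle; neither is deep, but both are needed to land at $O(\log^7 n\log\log n)$ rather than a larger power of $\log n$.
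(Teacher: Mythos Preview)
Your proposal is correct and follows essentially the same approach as the paper: run the \cite{KhandekarKOV2007cut}/\cite{detbalanced} variant of the Cut-Matching game with the deterministic Cut Player of \Cref{thm: new cut player} and the fast \matchorcut Matching Player of \cite{detbalanced} (\Cref{thm:efficient matching player}), combine $\phi^*\ge 1/(\log n)^{5+o(1)}$ with congestion $O(\log^2 n/\phi)$ to get the $\log^7 n$ exponent, binary-search over $\phi$, and reduce \lowcond to \sparsest via the vertex-splitting \reducedegree/\makecanonical machinery. One minor simplification the paper makes that you should adopt: for the pure \sparsest decision problem, set the path-peeling parameter $z=1$ so that the matching is always perfect and \emph{no} fake edges are ever introduced (this is \Cref{thm: sparse edge cut or expander}); then the embedded expander certifies a lower bound for \emph{every} cut, and the ``fake-edge accounting'' you flag as the main obstacle simply disappears.
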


The best previous deterministic algorithm for both problems, due to \cite{detbalanced}, achieved a factor $(\log n)^{1/\eps^2}$-approximation, in time $O(m^{1+\eps})$, for any  $\frac{\log\log n}{(\log n)^{1/2}}\leq \eps<1$.
Our algorithms for \sparsest and \lowcond are essentially identical to those of \cite{detbalanced}, except that we plug in our stronger algorithm for the Cut Player in the \CMG from \Cref{thm: new cut player} into their proof. 

As in \cite{detbalanced}, we first consider the \MBSC problem. The input to the problem is an $n$-vertex graph $G$, and a parameter $0<\phi\leq 1$. The goal is to compute a cut $(X,Y)$ in $G$ of sparsity at most $\phi$, while maximizing $\min\set{|X|,|Y|}$, that we refer to as the \emph{size of the cut}. 
An $(\alpha,\beta)$-bicriteria approximation algorithm for the problem, given parameters $0<\phi<1$ and $z\geq 1$, must either compute a cut $(X,Y)$ in $G$ of sparsity at most $\phi$ and size at least $z$; or correctly establish that every cut $(X',Y')$ whose sparsity is at most $\phi/\alpha$ has size at most $\beta\cdot z$.

The problem is a natural intermediate step for obtaining fast algorithms for \sparsest and \lowcond problems. It was first introduced independently by  \cite{NanongkaiS17} and \cite{Wulff-Nilsen17}, and has been studied extensively since (see e.g. \cite{fast-vertex-sparsest,ChangS19,detbalanced}). As observed in previous work, a fast bicriteria approximation algorithm for this problem can be obtained by employing the \CMG. In \cite{detbalanced} (see Lemma 7.3), an $(\alpha,\beta)$-bicriteria deterministic approximation algorithm was obtained for the \MBSC problem, with $\alpha=(\log n)^{O(1/\eps)}$ and $\beta=(\log n)^{O(1/\eps)}$, in time $O\left(m^{1+O(\eps)+o(1)}\cdot (\log n)^{O(1/\eps^2)}\right )$ for any $\frac 1 {c\log n}\leq \eps\leq 1$, for some fixed constant $c$.
We obtain a deterministic $(\alpha,\beta)$-bicriteria approximation algorithm with $\alpha=2^{O(1/\eps^6)}\cdot \log^7 n$ and $\beta= 2^{O(1/\eps^6)}\cdot \log^6 n$, with running time $O\left (m^{1+O(\eps)+o(1)}\right)$, for any $\eps>2/(\log n)^{1/24}$. For example, by setting $\eps=1/(\log\log\log n)^{1/6}$, we can obtain an $(\alpha,\beta)$-bicriteria approximation with $\alpha=(\log n)^{7+o(1)}$ and $\beta=(\log n)^{6+o(1)}$, and running time $m^{1+o(1)}$. In contrast, obtaining an $(\alpha,\beta)$-bicriteria approximation with $\alpha=O(\log^cn)$ and $\beta=O(\log^cn)$ for any constant $c$, using the algorithm of \cite{detbalanced} would result in a running time that can only be bounded by $m^{1+O(1/c)}$. 
Our algorithm for the \MBSC is essentially identical to that of \cite{detbalanced}, except that it uses our stronger algorithm for the Cut Player in the \CMG. Algorithms for \sparsest and \lowcond easily follow from the algorithm for \MBSC, as shown in \cite{detbalanced}.

\subsection{Minimum Balanced Cut and Expander Decomposition.}
\label{subsec: MBC and ED}

We provide a deterministic factor-$ (\log n)^{8+o(1)}$ approximation algorithm for \mbcwc problem, by proving the following theorem.

\begin{theorem}\label{thm: balanced cut high cond}
	There is a deterministic algorithm, that, given a graph $G$ with $n$ vertices and $m$ edges,
	and a parameter $0<\psi\leq 1$, computes a cut $(A,B)$ in $G$ with $|E_G(A,B)|\leq \psi\cdot (\log n)^{8+o(1)}\cdot \vol(G)$, such that one of the following holds:
	
	\begin{itemize}
		\item either $\vol_{G}(A),\vol_G(B)\ge \vol(G)/3$; or
		\item  $\vol_G(A)\geq 2\vol(G)/3$, and graph $G[A]$ has conductance at least $\psi$.
	\end{itemize}
	The running time of the algorithm is $O(m^{1+o(1)}/\psi)$.
\end{theorem}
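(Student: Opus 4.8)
The plan is to reduce \mbcwc to repeated invocations of the \CMG-based machinery, exactly following the template of \cite{detbalanced} but plugging in the stronger Cut Player of \Cref{thm: new cut player}. Concretely, I would try to \emph{embed} a low-degree expander $W$ on roughly the vertex set of $G$ into $G$ via the \CMG: run the game on an auxiliary graph $H$ on $V(G)$ starting with no edges; in each iteration call the Cut Player from \Cref{thm: new cut player} on the current $H$. If the Cut Player returns an expander subset $S\subseteq V(H)$ with $|S|\geq n/2$ and $H[S]$ a $\phi^*$-expander (with $\phi^* = \Omega(1/(\poly\log n))$ since $\Delta = O(\log n)$ throughout the game), the game ends and we have an embedded expander; if it returns a sparse balanced cut $(A_i,B_i)$ of $H$, the Matching Player must route a large matching $\mset_i\subseteq A_i\times B_i$ in $G$ via short, low-congestion paths. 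The routing step is where advanced path peeling (\Cref{thm: main main advanced path peeling}) enters: it either produces the routing $\pset_i$ (giving an embedding of the new matching edges of $H$ into $G$, with length $2^{O(1/\eps^6)}\log n/\phi$ and congestion $2^{O(1/\eps^6)}\log^2 n/\phi^2$), or it returns a sparse cut $(X,Y)$ in $G$ of sparsity $\leq \phi$ with both sides containing $\Omega(\alpha k)$ vertices of the matched set.

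Next I would set up the recursion/iteration to handle the case where $G$ genuinely has a balanced sparse cut. Whenever advanced path peeling returns a sparse cut $(X,Y)$ in $G$ with, say, $\vol_G(Y)\leq \vol_G(X)$, we \emph{remove} $Y$ (move it to the ``$B$'' side of the output), charge the edges of $E_G(X,Y)$ against the edge count, and recurse on $G[X]$; since $Y$ always carries an $\Omega(1/\poly\log n)$ fraction of the remaining volume (by choosing $\alpha$ a constant and tracking that the matched set $T$ has size $\Theta(n)$), after $O(\log n)$ rounds of peeling either the accumulated removed volume exceeds $\vol(G)/3$ — in which case the cut $(A,B)$ with $A$ the current remainder and $B$ everything removed is balanced, and the total number of cut edges is $O(\log n)$ peeling rounds times $\phi\cdot\vol(G)$, i.e. $\psi\cdot(\log n)^{O(1)}\vol(G)$ for the right choice of $\phi = \Theta(\psi)$ — or the process instead succeeds in embedding an expander $W$ into $G[A]$ for the current remainder $A$ with $\vol_G(A)\geq 2\vol(G)/3$. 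In the latter case I would apply the expander pruning result of \cite{expander-pruning} to the embedded expander, together with the bound on congestion/length of the embedding, to extract a subgraph of $G[A]$ of conductance $\geq\psi$; if pruning removes more than a constant fraction, that removed part is itself a sparse cut and we continue peeling, so after $O(\log n)$ such alternations we must terminate in one of the two promised outcomes. The precision parameter $\eps$ is then set (e.g. $\eps = 1/(\log\log\log n)^{1/6}$) so that all the $2^{O(1/\eps^6)}$ factors become $n^{o(1)}$ and $(\log n)^{o(1)}$, yielding approximation $\psi\cdot(\log n)^{8+o(1)}$.

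For the running time, the $O(\log n)$ outer rounds each invoke the Cut Player ($O(m^{1+O(\eps)}\Delta^7)$ time) $O(\log n)$ times and advanced path peeling ($O(m^{1+O(\eps)}/\phi^3)$ time) $O(\log n)$ times, with $1/\phi = \Theta(1/\psi)$; multiplying through and absorbing $\poly\log n$ and $n^{O(\eps)} = n^{o(1)}$ factors gives total time $O(m^{1+o(1)}/\psi^3)$, which I would then have to tighten to the claimed $O(m^{1+o(1)}/\psi)$. The standard fix here (used in \cite{detbalanced}) is to not re-run path peeling with the tiny conductance parameter $\phi = \Theta(\psi)$ directly, but to work with $\phi = \Theta(1)$ inside the \CMG and only pay the $1/\psi$ once, at the level of the balanced-cut reduction — i.e. the $1/\psi$ comes from the allowed number of deleted edges / total-volume budget rather than from $1/\phi^3$ inside peeling. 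Finally, the reduction from \mbcwc to \mbc (a factor $O(\log n)$ overhead, binary searching on $\psi$) is cited from \cite{detbalanced} and I would just invoke it.

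\textbf{Main obstacle.} The hard part will be bookkeeping the volume/edge budgets so that (a) the peeling terminates after only $O(\log n)$ rounds with the \emph{balanced} guarantee $\vol(G)/3$ rather than some weaker $\poly\log$ balance, and (b) the total number of cut edges telescopes to $\psi\cdot(\log n)^{8+o(1)}\vol(G)$ with the precise exponent $8$ — this requires carefully matching the $\log^7 n$-type loss from the Cut Player (\Cref{thm: new cut player}) against one extra $\log n$ from the $O(\log n)$ peeling rounds, and making sure the expander-pruning step in \cite{expander-pruning} does not inflate the exponent further. Getting the running time down to $O(m^{1+o(1)}/\psi)$ rather than $O(m^{1+o(1)}/\psi^3)$ is the second delicate point, and I expect it to require running the \CMG with constant target conductance and isolating the single $1/\psi$ factor at the outermost level, exactly as in the analogous argument of \cite{detbalanced}.
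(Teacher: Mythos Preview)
Your high-level plan (embed an expander via the \CMG, then prune) matches the paper's, but the proof in the paper differs in two crucial implementation points that resolve exactly the obstacles you flag.

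First, the running-time issue is real and your proposed fix does not work. You suggest running the \CMG with $\phi=\Theta(1)$, but then the cuts returned by the Matching Player have sparsity $\Theta(1)$, not $\Theta(\psi)$, so you cannot certify conductance $\geq\psi$ for $G[A]$ at the end. The paper's actual fix is orthogonal: it does \emph{not} use advanced path peeling for the Matching Player at all. Instead it uses \Cref{thm:efficient matching player} (\matchorcut), whose running time is $O(m^{1+o(1)})$ \emph{independent of the sparsity parameter $\phi$}, at the price of only guaranteeing the \emph{existence} of the embedding paths, not computing them explicitly. This is enough: once the \CMG terminates with a $\phi^*$-expander $H$ on vertex set $X^*$ and a set $F$ of $O(\rho\log n)$ fake edges, one argues directly (\Cref{obs: we get expander}) that $\hat G[X^*]\cup F$ is a $\Theta(\psi\log n)$-expander, and then applies \Cref{thm: expander pruning2} to delete $F$ and extract a subgraph of conductance $\geq\psi$. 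The single $1/\psi$ factor in the running time arises because each failed Matching-Player call removes at least $\rho=\Theta(\psi m/\log n)$ vertices, so the outer loop has $O(m/\rho)=O((\log n)/\psi)$ iterations, each of cost $m^{1+o(1)}$.

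Second, the paper first applies the \reducedegree reduction to pass to a bounded-degree graph $\hat G$ on $2m$ vertices, and works entirely there; this is what makes the expansion/conductance bookkeeping clean (in particular, $\vol_G(A)=|X^*|$ for canonical cuts) and lets the final pruning step land at conductance $\geq\psi$ in $G$ itself. There is also a separate easy special case for $\psi<(\log^4 m)/m$, handled by iterating the \lowcond approximation of \Cref{thm: sparsest and lowest cond}; in that regime $m^2\leq m/\psi\cdot\poly\log m$, so the naive $O(m^{2+o(1)})$ bound already suffices.
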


For $\psi\geq 1/m^{o(1)}$, which is a common setting used in algorithms for expander decomposition, our running time becomes $O(m^{1+o(1)})$.
As mentioned already, \cite{detbalanced} presented a deterministic algorithm for \mbcwc, that achieves approximation factor $\alpha=(\log n)^{O(1/\eps^2)}$, in time $O(m^{1+\eps})$, for any $\frac{\log\log n}{(\log n)^{1/2}}\leq \eps<1$.  

We provide another algorithm, that can be used in low-conductance regime, whose running time does not depend on $\psi$. Unfortunately, this algorithm provides a somewhat weaker certiciate if the cut that it returns is not balanced.

\begin{theorem}\label{thm: balanced cut low cond}
	There is a deterministic algorithm, that, given a graph $G$ with  $n$ vertices and  $m$ edges,
	and a parameter $0<\psi\leq 1$, computes a cut $(A,B)$ in $G$ with $|E_G(A,B)|\leq \psi\cdot (\log n)^{8+o(1)}\cdot \vol(G)$, such that one of the following holds:
	
	\begin{itemize}
		\item either $\vol_{G}(A),\vol_G(B)\ge \vol(G)/3$; or
		\item  $\vol_G(A)\geq 2\vol(G)/3$, and for every partition $(Z,Z')$ of $A$ with $\vol_G(Z),\vol_G(Z')\geq \vol(G)/100$, $|E_G(Z,Z')|\geq \psi\cdot \vol(G)$.
	\end{itemize}
	The running time of the algorithm is $O(m^{1+o(1)})$.
\end{theorem}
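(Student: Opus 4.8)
The plan is to reduce \Cref{thm: balanced cut low cond} to \Cref{thm: balanced cut high cond} by a standard ``iterated trimming'' (or recursive decomposition) argument that trades the strong conductance certificate of \Cref{thm: balanced cut high cond} for a running time that is independent of $\psi$. First I would invoke \Cref{thm: balanced cut high cond}, but with the \emph{volume threshold} reinterpreted: run the algorithm of \Cref{thm: balanced cut high cond} not on the whole of $G$ but recursively, so that the factor $1/\psi$ in its running time is charged against removed volume rather than the total volume. Concretely, maintain a current ``active'' vertex set $W$, initialized to $V(G)$; as long as the algorithm of \Cref{thm: balanced cut high cond} applied to $G[W]$ returns an \emph{unbalanced} cut $(A,B)$ with $\vol_G(A)\ge 2\vol(G[W])/3$ and $G[W][A]$ of conductance at least $\psi$, we are done for this branch; but if it returns a balanced cut, we recurse on the smaller side. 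Since each recursive call on the balanced side halves (up to constants) the volume, the recursion has depth $O(\log m)$, and summing the running times $O(|E(G[W])|^{1+o(1)}/\psi)$ across the recursion tree is where one must be careful --- the naive sum is still $\Omega(m^{1+o(1)}/\psi)$.

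The key idea to remove the $1/\psi$ dependence is the ``short-circuit after $t$ layers'' trick used in prior expander-decomposition work (e.g.\ \cite{detbalanced,agassy2022expander}): instead of driving the conductance certificate all the way down, we stop the recursion as soon as the removed volume is at least a constant fraction of $\vol(G)$, at which point we simply output the accumulated cut $(A,B)$ with $A$ the union of surviving active vertices; the weaker certificate ``every sub-partition of $A$ with both sides of volume $\ge \vol(G)/100$ has $\ge \psi\cdot\vol(G)$ crossing edges'' then follows because any such deeply balanced sub-partition would have been found and removed by one of the $O(1)$ many invocations of \Cref{thm: balanced cut high cond} we did run, each of which is run on a graph of volume $\Theta(\vol(G))$, so each costs only $O(m^{1+o(1)}/\psi)$ --- wait, this still has a $1/\psi$. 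The actual fix is to run \Cref{thm: balanced cut high cond} with a \emph{coarsened} conductance parameter: replace $\psi$ by $\psi' = 1/m^{o(1)}$ inside the calls (which makes each call cost $O(m^{1+o(1)})$), obtain a balanced-or-$\psi'$-expander dichotomy, and then observe that $\psi'\ge\psi$ always, so a $\psi'$-conductance guarantee on $G[A]$ immediately yields the desired weaker statement that no sub-partition of $A$ into parts of volume $\ge\vol(G)/100$ can have fewer than $\psi'\cdot(\text{smaller side volume})\ge\psi\cdot\vol(G)/100$ crossing edges; to upgrade $\psi\cdot\vol(G)/100$ to $\psi\cdot\vol(G)$ one simply reruns $O(1)$ times peeling off at most a $1/100$ fraction of volume each time, i.e.\ boosting the balance constant, as in \cite{detbalanced}.

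So the steps, in order, are: (1) set $\psi'=\max\{\psi,1/m^{o(1)}\}$ and note $\psi'\ge\psi$; (2) apply \Cref{thm: balanced cut high cond} with parameter $\psi'$ to $G$ --- if it returns a balanced cut $(A,B)$ with $\vol_G(A),\vol_G(B)\ge\vol(G)/3$ and $|E_G(A,B)|\le\psi'(\log n)^{8+o(1)}\vol(G) = \psi\cdot(\log n)^{8+o(1)}\cdot m^{o(1)}\cdot\vol(G) = \psi\cdot(\log n)^{8+o(1)}\vol(G)$ after absorbing $m^{o(1)}$ into $(\log n)^{o(1)}$, output it and stop; (3) otherwise we have $\vol_G(A)\ge 2\vol(G)/3$ with $G[A]$ of conductance $\ge\psi'\ge\psi$; (4) to get the claimed certificate for $A$, take any partition $(Z,Z')$ of $A$ with both $\vol_G(Z),\vol_G(Z')\ge\vol(G)/100$; since $\vol_{G[A]}(Z),\vol_{G[A]}(Z')$ differ from $\vol_G(Z),\vol_G(Z')$ by at most the at-most-$\psi\cdot(\log n)^{8+o(1)}\vol(G)$ boundary edges of $A$, both are $\ge\vol(G)/100 - \psi(\log n)^{8+o(1)}\vol(G)\ge\vol(G)/200$ (for $\psi$ small enough; the case $\psi$ not small is trivial since then even $(\log n)^{8+o(1)}\psi\ge 1$ and the conclusion is vacuous because $|E_G(Z,Z')|\le\vol(G)$ would need $\ge\psi\vol(G)$, handled separately), so $|E_{G[A]}(Z,Z')|\ge \psi'\cdot\min\{\vol_{G[A]}(Z),\vol_{G[A]}(Z')\}\ge\psi\cdot\vol(G)/200$; (5) finally, boost $\vol(G)/200$ to $\vol(G)$ by the standard argument: repeat the whole procedure on $G[A]$ a constant number ($O(1)$) of times, each time either confirming the certificate or peeling off a small balanced cut of volume between $\vol(G)/200$ and $\vol(G)/100$, which can happen at most $O(1)$ times before the balance constant in the first bullet is violated, with the total boundary staying $O(\psi(\log n)^{8+o(1)}\vol(G))$. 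The total running time is $O(1)$ calls to \Cref{thm: balanced cut high cond} with parameter $\psi'\ge 1/m^{o(1)}$, hence $O(m^{1+o(1)})$.

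The main obstacle I expect is step (5), reconciling the constants: \Cref{thm: balanced cut high cond} gives volume thresholds $1/3$ and $2/3$ and an exact conductance $\psi$, but the target statement in \Cref{thm: balanced cut low cond} wants threshold $1/100$ on the sub-partition and the \emph{full} $\psi\cdot\vol(G)$ lower bound (not a constant fraction of it). Getting from ``conductance $\ge\psi$ inside $G[A]$ so every sparse cut of $G[A]$ is large relative to $G[A]$'s volume'' to ``every $\vol(G)/100$-balanced cut of $A$ has $\ge\psi\vol(G)$ crossing edges in $G$'' requires care about (a) the difference between $\vol_G$ and $\vol_{G[A]}$, which is controlled because the boundary of $A$ is small, and (b) the gap between $\min$-side-volume and $\vol(G)$ in the conductance definition, which forces the constant-round boosting. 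All of this is routine and appears in essentially this form in \cite{detbalanced} (their reduction of \mbcwc to \mbc and vice versa, and their expander-decomposition analysis), so I would cite that and carry out only the bookkeeping; the genuinely new content --- the $O(\poly\log n)$ approximation --- is entirely inherited from \Cref{thm: balanced cut high cond}.
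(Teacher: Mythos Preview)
Your reduction has a genuine gap in step (2): when $\psi$ is small (say $\psi = 1/\sqrt{m}$), setting $\psi' = \max\{\psi, 1/m^{o(1)}\} = 1/m^{o(1)}$ and invoking \Cref{thm: balanced cut high cond} with $\psi'$ yields $|E_G(A,B)| \leq \psi' \cdot (\log n)^{8+o(1)} \cdot \vol(G)$, whereas the theorem demands $|E_G(A,B)| \leq \psi \cdot (\log n)^{8+o(1)} \cdot \vol(G)$. The ratio $\psi'/\psi = 1/(\psi \cdot m^{o(1)})$ can be polynomial in $m$; it certainly cannot be ``absorbed into $(\log n)^{o(1)}$''. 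So your cut-size bound is off by a polynomial factor precisely in the small-$\psi$ regime this theorem is meant to handle (when $\psi \geq 1/m^{o(1)}$, \Cref{thm: balanced cut high cond} already runs in $m^{1+o(1)}$ time and subsumes the present statement, so there is nothing to prove).

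The paper does \emph{not} reduce to \Cref{thm: balanced cut high cond} as a black box. Both theorems are derived from a common subroutine (\Cref{cor: balancd cut or near-expander}), which iterates the Cut-Matching Game on the degree-reduced graph $\hat G$ with sparsity threshold $\phi \propto \psi$ and a balance parameter $\rho$: each call either cuts off $\geq \rho$ vertices, or certifies that a $\phi^*$-expander $H$ embeds into the surviving set $X^*$ with congestion $O(\phi^*/(\psi\log n))$, modulo a set $F$ of $O(\rho\log n)$ fake edges. In \Cref{thm: balanced cut high cond} one takes $\rho \approx \psi m$, making $|F|$ small enough to prune via \Cref{thm: expander pruning2} and obtain an exact conductance certificate, at the cost of $\Theta(m/\rho) = \Theta(1/\psi)$ iterations. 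In \Cref{thm: balanced cut low cond} one instead takes $\rho = m^{1-2\eps}$, so the number of iterations is $O(m^{2\eps})$ and the running time is $m^{1+o(1)}$ regardless of $\psi$, \emph{while the cut sparsity stays proportional to $\psi$}. The fake-edge set $F$ is now too large for pruning, but the weaker certificate follows directly: any partition $(Z,Z')$ of $A$ with both sides of volume $\geq \vol(G)/100$ induces a cut of size $\geq \phi^*\vol(G)/100$ in $H$; since $|F| = O(m^{1-2\eps}\log n) \ll \phi^*\vol(G)/100$, at least half these $H$-edges are not fake, and dividing by the embedding congestion gives $|E_G(Z,Z')| \geq \Omega(\psi\vol(G)\log n) \geq \psi\vol(G)$. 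The point you missed is that the $1/\psi$ in the running time of \Cref{thm: balanced cut high cond} comes from the choice of $\rho$, not from the conductance parameter, so the right knob to turn is $\rho$, not $\psi$.
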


The algorithm from \Cref{thm: balanced cut low cond} can be easily used to obtain a deterministic bicriteria factor-$(\log n)^{8+o(1)}$ approximation algorithm for the \mbc problem in time $O(m^{1+o(1)})$. 
Given an input graph $G$, we perform a binary search on the parameter $\psi$, until we find a value for which the 
algorithm from \Cref{thm: balanced cut low cond}, when applied to $G$ and $\psi$, returns a cut $(A,B)$ with $|E_G(A,B)|\leq \psi\cdot (\log n)^{8+o(1)}\cdot \vol(G)$ and $\vol_{G}(A),\vol_G(B)\ge \vol(G)/4$; while, if applied to $G$ and $\psi/2$, it returns a cut $(A',B')$ with $|E_G(A',B')|\leq \psi\cdot (\log n)^{8+o(1)}\cdot \vol(G)$ and $\vol_G(B')< \vol(G)/4$. 
Note that $(A,B)$ is an almost balanced cut, with $|E_G(A,B)|\leq \alpha \psi\cdot \vol(G)$, where $\alpha= (\log n)^{8+o(1)}$. Let $(A^*,B^*)$ be the optimal balanced cut, so $\vol_G(A^*),\vol_G(B^*)\geq \vol(G)/3$. We claim that $|E_G(A^*,B^*)|\geq \frac{\psi}{2}\cdot \vol(G)$. This is since cut $(A^*,B^*)$ defines a partition of the set $A'$ of vertices, that we denote by $(Z,Z')$, for which $\vol_G(Z),\vol_G(Z')\geq \vol(G)/100$ must hold. Therefore, $|E_G(A^*,B^*)|\geq |E_G(Z,Z')|\geq \frac{\psi}{2}\cdot \vol(G)$. We conclude that cut $(A,B)$ is a factor-$(2\alpha)$ bicriteria solution to instance $G$ of \mbc.

Our proofs of \Cref{thm: balanced cut high cond} and \Cref{thm: balanced cut low cond} depart from that of \cite{detbalanced}, who iteratively used the algorithm for \MBSC.  The reason is that, while we obtain significantly better guarantees for the \MBSC problem, the approximation factor is still at least polylogarithmic  in $n$. Therefore, if we follow the framework of \cite{detbalanced}, who apply the algorithm for the \MBSC over the course of $O(1/\eps)$ iterations, we will still accumulate an approximation factor that is at least as high as $(\log n)^{\Theta(1/\eps)}$. Instead, we employ the \CMG directly and iteratively. In every iteration, we either cut off a large enough subgraph of $G$ via a low-conductance cut, or we (implicitly) embed a large expander into $G$.

Lastly, we consider expander decomposition.
Recall that
an \emph{$(\delta,\psi)$-expander decomposition} of a graph $G=(V,E)$
is a partition $\Pi=\{V_{1},\dots,V_{k}\}$ of the set $V$ of vertices, such that for all $1\leq i\leq k$, the conductance of graph $G[V_i]$ is at least $\psi$, and $\sum_{i=1}^k\delta_{G}(V_{i})\le\delta\cdot\vol(G)$. We prove the following theorem. 

\begin{theorem}
	\label{thm:expander decomp} There is a deterministic algorithm, that, given a graph
	$G$ with  $n$ vertices and $m$ edges, and a parameter $0<\delta<1$, where $c$ is a large enough constant,
	computes a $\left(\delta,\psi \right)$-expander decomposition of $G$ with $\psi=\Omega\left (\frac{\delta}{(\log n)^{9+o(1)}}\right )$,
	in time $O(m^{1+o(1)}/\delta)$.
\end{theorem}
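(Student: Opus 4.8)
The plan is to derive \Cref{thm:expander decomp} from the \mbcwc algorithm of \Cref{thm: balanced cut high cond} via the standard recursive reduction from expander decomposition to balanced cut (the same reduction underlying the algorithm of \cite{detbalanced}), with the only change being that we plug in the much stronger \mbcwc subroutine. Set
\[
\psi_0 \;=\; \Theta\!\left(\frac{\delta}{(\log m)\cdot (\log n)^{8+o(1)}}\right)\;=\;\Omega\!\left(\frac{\delta}{(\log n)^{9+o(1)}}\right),
\]
using $m\le n^2$, and define a recursive procedure on an induced subgraph $G'=G[W]$ of $G$ as follows. First split $G'$ into its connected components and recurse on each separately (this cuts no edges), so assume $G'$ is connected; if $|W|=1$, output $\{W\}$. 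Otherwise run the algorithm of \Cref{thm: balanced cut high cond} on $G'$ with parameter $\psi_0$, obtaining a cut $(A,B)$ with $|E_{G'}(A,B)|\le \psi_0\cdot(\log n)^{8+o(1)}\cdot\vol(G')$. If $\vol_{G'}(A),\vol_{G'}(B)\ge\vol(G')/3$, recurse on $G'[A]$ and on $G'[B]$ and return the union of the two partitions. Otherwise $\vol_{G'}(A)\ge 2\vol(G')/3$ and $G'[A]$ has conductance at least $\psi_0$; since $G'$ is an induced subgraph of $G$ we have $G'[A]=G[A]$, so we output $A$ as one cluster of the decomposition and recurse only on $G'[B]$.

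The conductance bound is immediate, since every output cluster is either a singleton or a set $A$ for which \Cref{thm: balanced cut high cond} certified $\Psi(G[A])\ge\psi_0=\Omega(\delta/(\log n)^{9+o(1)})$. For the boundary bound, note that in both branches the volume of each recursive child is at most $\tfrac{2}{3}\vol(G')$ (balanced case: both sides lie between $\tfrac{1}{3}$ and $\tfrac{2}{3}$ of $\vol(G')$; peeling case: the unique child $G'[B]$ has volume at most $\tfrac{1}{3}\vol(G')$), so the recursion tree has depth $O(\log m)$. On any fixed level the subgraphs are vertex-disjoint induced subgraphs of $G$, so their volumes sum to at most $\vol(G)$, whence the number of edges cut on that level is at most $\psi_0\cdot(\log n)^{8+o(1)}\cdot\vol(G)$. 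Since each edge is cut at most once over the whole recursion, the total number $C$ of cut edges satisfies $C\le O\!\big((\log m)\,\psi_0\,(\log n)^{8+o(1)}\big)\cdot\vol(G)$, and as $\sum_i\delta_G(V_i)=2C$, our choice of $\psi_0$ yields $\sum_i\delta_G(V_i)\le\delta\cdot\vol(G)$.

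For the running time, the work on any fixed level is $\sum_v O\!\big(m_v^{1+o(1)}/\psi_0\big)$, summed over the nodes $v$ of that level with $m_v$ the edge count of the corresponding subgraph; since each $m_v\le m$ and $\sum_v m_v\le m$, this is $O\!\big(m^{1+o(1)}/\psi_0\big)$. Multiplying by the $O(\log m)$ levels and substituting $1/\psi_0=O\!\big((\log n)^{9+o(1)}/\delta\big)=m^{o(1)}/\delta$ gives total running time $O(m^{1+o(1)}/\delta)$, as claimed. (The $o(1)$ terms here are inherited from \Cref{thm: balanced cut high cond}.)

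The only genuine subtlety is the extra logarithmic factor: because the subroutine charges its cut against $\vol(G')$ of the \emph{current} subgraph and cuts accumulate across all $O(\log m)=O(\log n)$ levels, $\psi_0$ must be a $\Theta(\log n)$ factor smaller than the $(\log n)^{8+o(1)}$ guarantee of \Cref{thm: balanced cut high cond}, which is exactly the source of the $(\log n)^{9+o(1)}$ in the statement. Beyond this, one should check the degenerate outputs of the subroutine — e.g.\ $B$ empty or a singleton when $G'$ is already a $\psi_0$-expander, or a disconnected $G'$ arising after edge removals — but these are absorbed by the connected-component step and the singleton base case and affect none of the bounds.
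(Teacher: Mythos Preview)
Your proposal is correct and follows essentially the same approach as the paper: both reduce expander decomposition to the \mbcwc subroutine of \Cref{thm: balanced cut high cond}, set the conductance parameter with an extra $\log n$ factor to absorb the $O(\log m)$ recursion depth, and bound the total cut edges level by level. The paper phrases it iteratively (maintaining active/inactive cluster sets and processing all active clusters in each round) rather than recursively, but the structure, the parameter choices, and the analysis are the same.
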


The best previous deterministic algorithm, due to \cite{detbalanced}, computes a $(\delta,\phi)$-expander decomposition with  $\phi=\Omega(\delta/(\log m)^{O(1/\eps^2)})$,
in time  $O\left ( m^{1+O(\eps)+o(1)}  \right )$.
Our algorithm  is very similar to the algorithm of \cite{detbalanced}, except that we use the algorithm from \Cref{thm: balanced cut high cond} for the \mbc problem, instead of its counterpart from \cite{detbalanced}.

\section{Preliminaries}
\label{sec: prelims}

All logarithms in this paper are to the base of $2$. All graphs are simple, undirected and unweighted, unless stated otherwise. Graphs with parallel edges are explicitly referred to as multigraph.
Throughout the paper, we use a $\tilde O(\cdot)$ notation to hide multiplicative factors that are polynomial in $\log n$, where $n$ is the number of vertices in the input graph.

We follow standard graph-theoretic notation.   Given a graph $G$ and two disjoint subsets $A,B$ of its vertices, we denote by $E_G(A,B)$ the set of all edges with one endpoint in $A$ and another in $B$, and by $E_G(A)$ the set of all edges with both endpoints in $A$. We also denote by $\delta_G(A)$ the set of all edges with exactly one endpoint in $A$. For a vertex $v\in V(G)$, we denote by $\delta_G(v)$ the set of all edges incident to $v$ in $G$, and by $\deg_G(v)$ the degree of $v$ in $G$.
We may omit the subscript $G$ when clear from context. Given a subset $S$ of vertices of $G$, we denote by $G[S]$ the subgraph of $G$ induced by $S$. We say that a subgraph $C$ of $G$ is a \emph{cluster}, if $C$ is a connected vertex-induced subgraph of $G$.

\paragraph{Matchings and Routings.}
If $G$ is a graph, and $\pset$ is a collection of paths in $G$, we say that the paths in $\pset$ cause congestion $\eta$ in $G$ if every edge $e\in E(G)$ participates in at most $\eta$ paths in $\pset$, and some edge $e\in E(G)$ participates in exactly $\eta$ such paths.

Let $G$ be a graph, and let $\mset=\set{(s_1,t_1),\ldots,(s_k,t_k)}$ be a collection of pairs of vertices of $G$. We say that $\mset$ is a \emph{matching} if every vertex $v\in V(G)$ participates in at most one pair in $\mset$, and for every pair $(s_i,t_i)\in \mset$, $s_i\neq t_i$. Note that we do not require that the pairs $(s_i,t_i)\in \mset$ correspond to edges of $G$. We say that a collection $\pset$ of paths is a \emph{routing} of the pairs in $\mset$ in graph $G$, if $|\pset|=k$, the paths in $\pset$ are simple paths that are contained in $G$, and for every pair $(s_i,t_i)\in \mset$ of vertices, there is a path $P_i\in \pset$ whose endpoints are $s_i$ and $t_i$.

Assume now that we are given a graph $G$, two disjoint sets $S,T$ of its vertices, and a collection $\pset$ of paths. We say that the paths in $\pset$ \emph{route} vertices of $S$ to vertices of $T$, or that $\pset$ is a \emph{routing} of $S$ to $T$, if $\pset=\set{P(s)\mid s\in S}$, and, for all $s\in S$, path $P(s)$ originates at vertex $s$ and terminates at some vertex of $T$. We say that $\pset$ is a \emph{one-to-one routing} of $S$ to $T$, if the endpoints of all paths in $\pset$ are distinct.


\paragraph{Embeddings of Graphs.}
Let $G$ and $X$ be two graphs with $V(X)\subseteq V(G)$. An \emph{embedding} of $X$ into $G$ is a collection $\pset=\set{P(e)\mid e\in E(X)}$ of paths in graph $G$, such that, for every edge $e=(x,y)\in E(X)$, path $P(e)$ connects vertex $x$ to vertex $y$. The \emph{congestion} of the embedding is the maximum, over all edges $e'\in E(G)$, of the number of paths in $\pset$ containing $e'$. 

Given graphs $G$ and $X$ as above, and a subset $E'\subseteq E(X)$ of edges of $X$, an embedding of $E'$ into $G$
is defined similarly: it is simply an embedding of the subgraph of $X$ induced by $E'$.

We will sometimes use a more general setting, where $V(X)\cap V(G)=\emptyset$. In this case, an embedding of $X$ into $G$ must include a mapping $\pi: V(X)\rightarrow V(G)$, where every vertex of $X$ is mapped to a distinct vertex of $G$. Additionally, it must include a collection $\pset=\set{P(e)\mid e\in E(X)}$ of paths in graph $G$, such that, for every edge $e=(x,y)\in E(X)$, path $P(e)$ connects vertex $\pi(x)$ to vertex $\pi(y)$. The congestion of this embedding is defined as before.

We will use the following easy observation.

\begin{observation}\label{obs: paths from embedding}
	There is a deterministic algorithm, whose input consists of a pair $H,G$ of graphs with $V(H)\subseteq V(G)$, an embedding $\pset$ of $H$ into $G$, so that the paths in $\pset$ have length at most $d$ each and cause congestion at most $\eta$, a collection $\Pi$ of pairs of vertices of $H$, and a collection $\qset=\set{Q(u,v)\mid (u,v)\in \Pi}$ of simple paths in $H$, such that, for every pair $(u,v)\in \Pi$ of vertices, path $Q(u,v)$ connects $u$ to $v$, the paths in $\qset$ have length at most $d'$ each, and cause congestion at most $\eta'$ in $H$. The algorithm computes a collection $\qset'=\set{Q'(u,v)\mid (u,v)\in \Pi}$ of paths in graph $G$, such that, for every pair $(u,v)\in \Pi$ of vertices, path $Q'(u,v)$ connects $u$ to $v$, the paths in $\qset'$ have length at most $d\cdot d'$ each, and cause congestion at most $\eta\cdot \eta'$ in $G$. The running time of the algorithm is at most $O\left (\min\set{|\Pi|\cdot d\cdot d', |E(G)|\cdot \eta\cdot \eta'}\right )$.
\end{observation}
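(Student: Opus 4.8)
The plan is to build each path $Q'(u,v)$ by substituting, into the path $Q(u,v)$ of $H$, the embedding paths of its edges, and then shortcutting the resulting walk to a simple path. Concretely, fix a pair $(u,v)\in\Pi$ and write $Q(u,v)=(e_1,e_2,\ldots,e_\ell)$ as a sequence of $\ell\le d'$ edges of $H$. Replacing each $e_j$ by the path $P(e_j)\in\pset$ — reversing its orientation if necessary so that consecutive pieces share an endpoint — yields a walk $W(u,v)$ in $G$ from $u$ to $v$; since $u,v\in V(H)\subseteq V(G)$ and every $P(e_j)$ has at most $d$ edges, $W(u,v)$ has at most $\ell\cdot d\le d\cdot d'$ edges. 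We then obtain $Q'(u,v)$ from $W(u,v)$ by the standard walk-to-path procedure: scan the walk once, and whenever a vertex is revisited, delete the closed sub-walk between the two occurrences. This produces a simple $u$-$v$ path contained in $G$, and it can only decrease the length and shrink the set of edges used, so $Q'(u,v)$ has at most $d\cdot d'$ edges.

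For the congestion bound, fix an edge $e'\in E(G)$ and let $F=\set{e\in E(H): e'\in P(e)}$; since $\pset$ has congestion at most $\eta$, we have $|F|\le\eta$. If $e'$ lies on $Q'(u,v)$ then it lies on $W(u,v)$, hence on $P(e_j)$ for some edge $e_j$ of $Q(u,v)$, i.e. $Q(u,v)$ contains an edge of $F$. For each fixed $e\in F$, the number of pairs $(u,v)\in\Pi$ with $e\in Q(u,v)$ is at most $\eta'$, since $\qset$ has congestion at most $\eta'$ in $H$. Summing over $e\in F$ shows that at most $\eta\cdot\eta'$ paths of $\qset'$ contain $e'$, as required.

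For the running time, the algorithm simply constructs the walks $W(u,v)$ and shortcuts them. The total length of all the walks is $\sum_{(u,v)\in\Pi}\sum_{e\in E(Q(u,v))}|E(P(e))|$. Bounding it by the outer sum gives at most $|\Pi|\cdot d\cdot d'$; alternatively, exchanging the order of summation gives $\sum_{e\in E(H)}|E(P(e))|\cdot\size{\set{(u,v)\in\Pi:e\in Q(u,v)}}\le\eta'\sum_{e\in E(H)}|E(P(e))|\le\eta'\cdot\eta\cdot|E(G)|$, where the last inequality again uses the congestion bound on the embedding $\pset$. Hence the total work — walk construction plus the shortcutting step, implemented in time linear in the walk length using an array indexed by $V(G)$ that records the first occurrence of each vertex — is $O\left(\min\set{|\Pi|\cdot d\cdot d',\ |E(G)|\cdot\eta\cdot\eta'}\right)$. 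There is no substantive obstacle here; the only points requiring a little care are orienting the embedding paths consistently when concatenating, and carrying out the walk-to-path conversion in time linear (rather than quadratic) in the walk length.
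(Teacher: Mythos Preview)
Your proof is correct and essentially identical to the paper's: both concatenate the embedding paths $P(e_j)$ along each $Q(u,v)$ and verify the length, congestion, and running-time bounds by the same double-counting arguments. The only minor difference is that you additionally shortcut the resulting walk to a simple path, whereas the paper simply returns the concatenation (the statement does not require the output paths to be simple).
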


\begin{proof}
	We process every pair $(u,v)\in \Pi$ one by one. When pair $(u,v)$ is processed, we consider the path $Q(u,v)\in \qset$, and we denote the sequence of edges on $Q(u,v)$ by $(e_1,e_2,\ldots,e_r)$, where $r\leq d'$. For all $1\leq i\leq r$, let $P(e_i)\in \pset$ be the path that serves as the embedding of edge $e_i$, whose length must be at most $d$. We obtain path $Q'(u,v)$ connecting $u$ to $v$ in $G$ by concatenating the paths $P(e_1),P(e_2),\ldots,P(e_r)$. It is immediate to verify that the length of path $Q'(u,v)$ is at most $d\cdot d'$. Let $\qset'=\set{Q'(u,v)\mid (u,v)\in \Pi}$  be the resulting set of paths. Consider any edge $e\in E(G)$, and let $S(e)$ be the collection of all edges $e'\in E(H)$ with $e\in P(e')$, where $P(e')\in \pset$ is the embedding path of $e'$. Then $|S(e)|\leq \eta$, and every edge $e'\in S(e)$ participates in at most $\eta'$ paths in $\qset$. Therefore, edge $e$ may participate in at most $\eta\cdot \eta'$ paths in $\qset'$, and so the congestion that the paths in $\qset'$ cause in $G$ is at most $\eta\cdot \eta'$. It is immediate to verify that every pair $(u,v)\in P$ of vertices can be processed in time $O(d\cdot d')$, and so the total running time of the algorithm is at most $O(|\Pi|\cdot d\cdot d')$. Since every edge of $E(G)$ belongs to at most $\eta\cdot \eta'$ paths in $\qset'$, it is also easy to verify that the running time is bounded by $O(\eta\cdot \eta'\cdot |E(G)|)$.
\end{proof}

\paragraph{Distances and Balls.}
Given a graph $G$, for a pair $u,v\in V(G)$ of its vertices, we denote by $\dist_G(u,v)$ the \emph{distance} between $u$ and $v$ in $G$, that is, the length of the shortest path between $u$ and $v$.  For a pair $S,T$ of subsets of vertices of $G$, we define the distance between $S$ and $T$ to be $\dist_G(S,T)=\min_{s\in S,t\in T}\set{\dist_G(s,t)}$.
For a vertex $v\in V(G)$, and a vertex subset $S\subseteq V(G)$, we also define the distance between $v$ and $S$ as $\dist_G(v,S)=\min_{u\in S}\set{\dist_G(v,u)}$.
The \emph{diameter} of the graph $G$, denoted by $\diam(G)$, is the maximum distance between any pair of vertices in $G$. 
For a vertex $v\in V(G)$ and a distance parameter $D\geq 0$, we denote by $B_G(v,D)=\set{u\in V(G)\mid \dist_G(u,v)\leq D}$ the \emph{ball of radius $D$ around $v$}.
Similarly, for a subset $S\subseteq V(G)$ of vertices, we let the ball of radius $D$ around $S$ be $B_G(S,D)=\set{u\in V(G)\mid \dist_G(u,S)\leq D}$.
We will sometimes omit the subscript $G$ when clear from context.

\subsection{Dynamic Algorithms}

\paragraph{Dynamic Graphs.}
Consider a graph $G$ that undergoes an online sequence $\Sigma=(\sigma_1,\sigma_2,\ldots)$ of edge deletions, that we may also refer to as \emph{updates}. After each update operation, the algorithm will perform some updates to the data structures that it maintains. 
We refer to different ``times'' during the algorithm's execution. The algorithm starts at time $0$. For each $t\geq 0$, we refer to ``time $t$ in the algorithm's execution'' as the time immediately after all updates to the data structures maintained by the algorithm following the $t$th edge deletion $\sigma_t\in \Sigma$ are completed. When we say that some property holds at every time during the algorithm's execution, we mean that the property holds at all times $t$ of the algorithm's execution, but it may not hold, for example, during the procedure that updates the data structures maintained by the algorithm, following some edge deletion $\sigma_t\in \Sigma$.
For $t\geq 0$, we denote by $G^{(t)}$ the graph $G$ at time $t$; that is, $G^{(0)}$ is the original graph, and for $t\geq 0$, $G^{(t)}$ is the graph obtained from $G$ after the first $t$ edge deletions $\sigma_1,\ldots,\sigma_t$.

We say that a set $S$ of elements is \emph{decremental} if, once it is initialized, elements can be deleted from $S$ but they may not be added to $S$. Similarly, we say that $S$ is \emph{incremental} if elements can be added to $S$ as the time progresses, but not deleted from $S$.

\paragraph{Even-Shiloach Trees~\cite{EvenS,Dinitz,HenzingerKing}.}
Suppose we are given a graph $G=(V,E)$ with integral lengths $\ell(e)\geq 1$ on its edges $e\in E$, a source vertex $s$, and a distance bound $D\geq 1$. Even-Shiloach Tree (\EST) algorithm maintains, for every vertex $v$ with $\dist_G(s,v)\leq D$, the distance $\dist_G(s,v)$, under the deletion of edges from $G$. Moreover, it maintains a shortest-path tree $\tau$ rooted at vertex $s$, that includes all vertices $v$ with $\dist_G(s,v)\leq D$. We denote the corresponding data structure by $\EST(G,s,D)$, or just \EST when clear from context. 
The total update time of the algorithm, including the initialization and all edge deletions, is $O(m\cdot D\log n)$, where $m$ is the initial number of edges in $G$ and $n=|V|$. 

\subsection{Cuts, Flows, Sparsity, Conductance and Expanders.}

Even though all graphs that we deal with are undirected, it will sometimes be useful to assign directions to paths in such graphs. In order to do so, for a path $P$ in an undirected graph $G$, we designate one of its endpoints (say $u$) as the \emph{first endpoint} of $P$, and the other endpoint (say $v$) as its \emph{last endpoint}. We may then say that path $P$ is \emph{directed from $u$ to $v$}, or that it originates at $u$ and terminates at $v$. If $\pset$ is a collection of path in an undirected graph $G$, and we have assigned a direction to each of the paths, we may refer to $\pset$ as a \emph{collection of directed paths}, even though graph $G$ is undirected.

\paragraph{Flows.}
Let $G$ be a graph, and let $\pset$ be a collection of directed paths in graph $G$. A \emph{flow} over the set $\pset$ of paths is an assignment of non-negative values $f(P)\geq 0$, called \emph{flow-values}, to every path $P\in \pset$. We sometimes refer to paths in $\pset$ as \emph{flow-paths for flow $f$}. For each edge $e\in E(G)$, let $\pset(e)\subseteq \pset$ be the set of all paths whose first edge is $e$, and let $\pset'(e)\subseteq \pset$ be the set of all paths whose last edge is $e$. We say that edge $e$ \emph{sends $z$ flow units} in $f$ if $\sum_{P\in \pset(e)}f(e)=z$, and we say that edge $e$ \emph{receives $z$ flow units} in $f$ if $\sum_{P\in \pset'(e)}f(P)=z$. Similarly, for a vertex $v\in V(G)$, we say that $v$ sends $z$ flow units in $f$ if the sum of flow-values of all paths $P\in \pset$ that originate at $v$ is $z$. We say that $v$ receives $z$ flow units in $f$ if the sum of the flow-values of all paths $P\in \pset$ that terminate at $v$ is $z$. The \emph{congestion} that flow $f$ causes on an edge $e$ is $\sum_{\stackrel{P\in \pset:}{e\in E(P)}}f(P)$, and the \emph{total congestion} of the flow $f$ is the maximum congestion that it causes on any edge $e\in E(G)$.

\paragraph{Cuts and Expansion.}
Given a graph $G=(V,E)$, a \emph{cut} in $G$ is a bipartition $(A,B)$ of the set $V$ of its vertices, with $A,B\neq \emptyset$. The \emph{sparsity} of the cut $(A,B)$ is $\phi_G(A,B)=\frac{|E_G(A,B)|}{\min \set{|A|,|B|}}$. We denote by $\Phi(G)$ the smallest sparsity of any cut in $G$, and we refer to $\Phi(G)$ as the \emph{expansion} of $G$.

\paragraph{Expanders.}
We define the notion of expanders using graph expansion.

\begin{definition}[Expander]
	We say that a graph $G$ is a $\phi$-expander, for a parameter $0<\phi<1$, if $\Phi(G)\geq \phi$.
\end{definition}

We will sometimes informally say that graph $G$ is an \emph{expander} if $\Phi(G)$ is a constant independent of $|V(G)|$. 
We use the following immediate observation, that was also used in previous works, (see e.g. Observation 2.3 in \cite{detbalanced}). 

\begin{observation}\label{obs: exp plus matching is exp}
	Let $G=(V,E)$ be an $n$-vertex graph that is a $\phi$-expander, and let $G'$ be another graph that is obtained from $G$ by adding to it a new set $V'$ of at most $n$ vertices, and a matching $M$, connecting every vertex of $V'$ to a distinct vertex of $G$. Then $G'$ is a $\phi/2$-expander.
\end{observation}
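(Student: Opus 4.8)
The plan is to verify the expansion bound directly from the definition. Let $(A,B)$ be an arbitrary cut of $G'=(V\cup V', E(G)\cup M)$, and assume without loss of generality that $|A|\le|B|$; the goal is to show $|E_{G'}(A,B)|\ge \frac{\phi}{2}|A|$. First I would split each side along $V$ and $V'$: set $A_1=A\cap V$, $A_2=A\cap V'$, $B_1=B\cap V$, $B_2=B\cap V'$. Since the only edges of $G'$ that are not edges of $G$ are the matching edges $M$, and $M$ joins vertices of $V'$ only to vertices of $V$, the cut edges split into the three pairwise-disjoint sets $E_G(A_1,B_1)$, $M(A_1,B_2)$ and $M(A_2,B_1)$. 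Writing $p=|M(A_1,B_2)|$ and $q=|M(A_2,B_1)|$, this gives $|E_{G'}(A,B)|\ge |E_G(A_1,B_1)|+\max\{p,q\}$.

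The single observation that makes everything work is that each vertex of $A_2$ is matched by $M$ to a \emph{distinct} vertex of $V=A_1\cup B_1$, and the ones whose partner lies in $B_1$ are precisely the $q$ crossing matching edges counted above; hence $|A_2|\le |A_1|+q$, and symmetrically $|B_2|\le |B_1|+p$. Consequently $|A|=|A_1|+|A_2|\le 2|A_1|+q$ and $|B|\le 2|B_1|+p$.

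Then I would finish by a short case analysis. If $A_1=\emptyset$, every matching edge incident to $A_2$ crosses the cut, so $q=|A_2|=|A|$ and the bound is immediate; likewise if $B_1=\emptyset$, since then $p=|B_2|=|B|\ge|A|$. Otherwise $(A_1,B_1)$ is a genuine cut of $G$, so the $\phi$-expansion of $G$ gives $|E_G(A_1,B_1)|\ge \phi\cdot\min\{|A_1|,|B_1|\}$. If $|A_1|\le|B_1|$, then $|E_{G'}(A,B)|\ge \phi|A_1|+q\ge \frac{\phi}{2}(2|A_1|+q)\ge \frac{\phi}{2}|A|$, where the middle inequality uses $\phi<2$ and the last uses $|A|\le 2|A_1|+q$; if instead $|B_1|<|A_1|$, then symmetrically $|E_{G'}(A,B)|\ge \phi|B_1|+p\ge \frac{\phi}{2}(2|B_1|+p)\ge \frac{\phi}{2}|B|\ge \frac{\phi}{2}|A|$. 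Since the cut $(A,B)$ was arbitrary, $\Phi(G')\ge \phi/2$, i.e. $G'$ is a $\phi/2$-expander.

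I do not expect a genuine obstacle — the statement is essentially immediate — but the one place to be careful is the factor-of-$2$ bookkeeping: one must see that the vertices of $V'$ that sit on the ``wrong'' side of the cut relative to their matched partners are paid for one-for-one by crossing matching edges, and then charge the cut against whichever of $A\cap V$, $B\cap V$ is the smaller part of the induced cut in $G$, while disposing of the degenerate cases $A\cap V=\emptyset$ and $B\cap V=\emptyset$ separately. An alternative route would be to start from a sparsest cut of $G'$ and use a local exchange argument to push each vertex of $V'$ to the side of its partner, but the direct accounting above seems cleanest.
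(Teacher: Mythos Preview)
Your argument is correct. The paper does not actually prove this observation --- it is stated as ``immediate'' with a pointer to Observation~2.3 in \cite{detbalanced} --- so there is nothing to compare against beyond confirming that your direct verification goes through, which it does; the only mildly delicate point is the case split on which of $A\cap V$, $B\cap V$ is smaller, and you handle it cleanly.
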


We also use the following theorem that provides a fast algorithm for an explicit construction of an expander, that is based on the results of Margulis \cite{Margulis} and Gabber and Galil \cite{GabberG81}. The proof was shown in \cite{detbalanced}.
\begin{theorem}[Theorem 2.4 in \cite{detbalanced}]
	\label{thm:explicit expander}
	There is a constant $\alpha_0 > 0$ and a deterministic algorithm, that we call \constructexpander, that, given an integer $n>1$, in time $O(n)$ constructs a graph $H_n$ with $|V(H_n)|=n$, such that $H_n$ is an $\alpha_0$-expander, and every vertex in $H_n$ has degree at most $9$.
\end{theorem}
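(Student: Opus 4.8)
The plan is to derive $H_n$ from a fixed explicit expander family on vertex sets of the form $\mathbb{Z}_m\times\mathbb{Z}_m$ (the classical constructions of Margulis and of Gabber--Galil), and then to correct the vertex count to exactly $n$ by attaching a small matching, invoking Observation~\ref{obs: exp plus matching is exp}. Throughout, ``expander'' will mean ``$\Phi(\cdot)\ge$ a universal constant''.

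First I would recall the Gabber--Galil construction. For an integer $m\ge 1$, let $\hG_m$ be the $8$-regular multigraph on $V_m=\mathbb{Z}_m\times\mathbb{Z}_m$ obtained by joining each vertex to its images under four fixed affine maps of $\mathbb{Z}_m^2$ and their inverses (all arithmetic mod $m$). The theorem of Gabber and Galil, building on Margulis, asserts that there is a universal $\gamma>0$ such that the second-largest eigenvalue of the normalized adjacency matrix of $\hG_m$ is at most $1-\gamma$ for every $m$. Let $G'_m$ be the \emph{simple} graph obtained from $\hG_m$ by deleting self-loops and collapsing each bundle of parallel edges to a single edge. Since the edge multiplicities in $\hG_m$ are bounded by an absolute constant, Cheeger's inequality applied to $\hG_m$ together with the spectral bound yields a universal constant $\alpha_1\in(0,1)$ with $\Phi(G'_m)\ge\alpha_1$; the maximum degree of $G'_m$ is at most $8$. (For the finitely many $m$ below any fixed threshold, this can be checked directly, or deferred to the small-$n$ case below.) Finally, $G'_m$ is explicitly constructible in time $O(m^2)$, since each edge is produced by a constant number of arithmetic operations modulo $m$.

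Next I would handle a general $n>1$. For $n$ below a sufficiently large constant $n_0$, output a hard-coded $n$-vertex graph of maximum degree at most $9$ and expansion at least $1$ (e.g.\ $K_n$ when $n\le 10$, and a fixed small expander in the remaining constantly many cases); this costs $O(1)$ time. For $n\ge n_0$, set $m=\lfloor\sqrt n\rfloor$, so $m^2\le n<(m+1)^2$ and hence $0\le n-m^2\le 2m\le m^2=|V(G'_m)|$ (using $m\ge 2$, which holds once $n_0\ge 4$). Take $G'_m$, add a set $V'$ of $n-m^2$ fresh vertices, and add a matching $M$ joining each vertex of $V'$ to a distinct vertex of $G'_m$; call the result $H_n$. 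Then $|V(H_n)|=n$; every vertex of $V'$ has degree $1$ and every vertex of $V_m$ gains at most one incident edge, so $H_n$ has maximum degree at most $9$; and by Observation~\ref{obs: exp plus matching is exp}, $\Phi(H_n)\ge\Phi(G'_m)/2\ge\alpha_1/2$. Setting $\alpha_0:=\alpha_1/2$, a universal constant, $H_n$ is an $\alpha_0$-expander. The total running time is $O(m^2)+O(n)=O(n)$.

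I expect the only substantive point to be the middle step: invoking the Gabber--Galil spectral bound uniformly in $m$ and correctly converting it into a lower bound on the \emph{combinatorial} expansion $\Phi(G'_m)$ of the simplified graph --- being careful about the normalization in Cheeger's inequality and about the (constant-factor) loss from deleting self-loops and merging parallel edges. The remaining ingredients --- the arithmetic guaranteeing $n-m^2\le m^2$, the finite list of small-$n$ cases, the degree tally $8+1\le 9$, and the appeal to Observation~\ref{obs: exp plus matching is exp} --- are entirely routine.
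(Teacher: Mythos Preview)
Your proposal is correct and follows exactly the approach the paper indicates: the paper does not prove this theorem itself but cites it as Theorem~2.4 of \cite{detbalanced}, noting only that the construction ``is based on the results of Margulis \cite{Margulis} and Gabber and Galil \cite{GabberG81}.'' Your argument---take the Gabber--Galil $8$-regular expander on $\lfloor\sqrt n\rfloor^2$ vertices, attach the at most $2\lfloor\sqrt n\rfloor\le\lfloor\sqrt n\rfloor^2$ leftover vertices via a matching, and invoke Observation~\ref{obs: exp plus matching is exp}---is precisely the natural way to unpack that citation, and the degree count $8+1=9$ and the $O(n)$ running time both check out.
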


\paragraph{Expander Pruning.}

We use an algorithm for expander pruning by \cite{expander-pruning}. We slightly rephrase it so it is defined in terms of graph expansion, instead of conductance that was used in the original paper. This variation of the original expander pruning theorem of~\cite{expander-pruning} was proved explicitly in \cite{APSP-previous} (see Theorem 2.2 in full version of the paper).

\begin{theorem}[Adaptation of Theorem 1.3 in~\cite{expander-pruning}; see Theorem 2.2 in \cite{APSP-previous}]\label{thm: expander pruning}
	There is a deterministic algorithm, that, given an access to the adjacency list of a graph $G$ that is a $\phi$-expander, for some parameter $0<\phi<1$, such that the maximum vertex degree in $G$ is at most $\Delta$, and a sequence $\Sigma=(e_1,e_2,\ldots,e_k)$ of $k\leq \frac{\phi |E(G)|}{10\Delta}$ online edge deletions from $G$, maintains a set $\tilde U\subseteq V(G)$ of vertices, with the following properties. Let $G^{(i)}$ denote the graph $G\setminus\set{e_1,\ldots,e_i}$; let $\tilde U_0=\emptyset$ be the set $\tilde U$ at the beginning of the algorithm, and for all $0<i\leq k$, let $\tilde U_i$ be the set $\tilde U$ after the deletion of the edges of $e_1,\ldots,e_i$ from graph $G$. Then, for all $1\leq i\leq k$: $\tilde U_{i-1}\subseteq \tilde U_i$;
		 $ |\tilde U_i|\leq \frac{8i\Delta}{\phi}$; and
		 graph $G^{(i)}\setminus\tilde U_i$ is a $\frac{\phi}{6\Delta}$-expander.
	The total running time of the algorithm is $\Otilde(k\Delta^2/\phi^2)$.
\end{theorem}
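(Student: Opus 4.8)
The plan is to obtain this statement as a direct corollary of the original expander-pruning theorem of~\cite{expander-pruning}, which is phrased in terms of \emph{conductance} rather than expansion, via the standard translation between the two parameters for bounded-degree graphs. First I would record the following elementary fact: if every vertex of $G$ has degree between $1$ and $\Delta$, then for every cut $(A,B)$ of $G$ we have $\min\set{\vol_G(A),\vol_G(B)}\leq \Delta\cdot \min\set{|A|,|B|}$ and $\min\set{|A|,|B|}\leq \min\set{\vol_G(A),\vol_G(B)}$. Indeed, assuming without loss of generality that $|A|\leq |B|$, whichever of $A,B$ has the smaller volume has volume at most $\vol_G(A)\leq \Delta|A|$, which gives the first inequality, while the second is immediate since $|A|\leq \vol_G(A)$. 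Dividing $|E_G(A,B)|$ by these quantities and taking the worst cut yields $\Phi(G)/\Delta\leq \Psi(G)\leq \Phi(G)$ for every such graph $G$. In particular, since the input graph $G$ is a $\phi$-expander with maximum degree at most $\Delta$, it has conductance at least $\phi/\Delta$.

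Next I would run the expander-pruning algorithm of~\cite{expander-pruning} on $G$ with conductance parameter $\phi'=\phi/\Delta$, feeding it the same online sequence $\Sigma=(e_1,\ldots,e_k)$. Its hypothesis on the number of updates, $k\leq \phi'|E(G)|/10$, is exactly $k\leq \frac{\phi|E(G)|}{10\Delta}$, so the algorithm applies. It maintains an incremental set $\tilde U$ with $\tilde U_{i-1}\subseteq \tilde U_i$ and $\vol_G(\tilde U_i)\leq 8i/\phi'=8i\Delta/\phi$ for all $1\leq i\leq k$; since every vertex has degree at least one, $|\tilde U_i|\leq \vol_G(\tilde U_i)\leq 8i\Delta/\phi$, as required. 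Its guarantee that $G^{(i)}[V(G)\setminus \tilde U_i]$ has conductance at least $\phi'/6=\phi/(6\Delta)$, combined with the inequality $\Phi\geq \Psi$ from the previous paragraph applied to this subgraph (which also has maximum degree at most $\Delta$), shows that $G^{(i)}\setminus \tilde U_i$ is a $\frac{\phi}{6\Delta}$-expander. Finally, the running time $\Otilde(k/(\phi')^2)$ of the pruning algorithm becomes $\Otilde(k\Delta^2/\phi^2)$, matching the claimed bound.

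The translation is entirely mechanical; the only points that need care are (i) passing from the volume bound on $\tilde U_i$ provided by~\cite{expander-pruning} to the cardinality bound stated here, which uses nothing more than $|\tilde U_i|\leq \vol_G(\tilde U_i)$, and (ii) checking that the numerical constants ($8$, $6$, $10$, and the logarithmic factors) appearing in~\cite{expander-pruning} survive the substitution $\phi\mapsto\phi/\Delta$ unchanged --- they do, since in each place $\phi$ and $\Delta$ enter only through the combination $\phi'$. I would expect no genuine obstacle: the statement is essentially a reparametrization of the conductance-based pruning theorem, which is exactly how it is derived in~\cite{APSP-previous}.
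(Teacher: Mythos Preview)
Your proposal is correct and matches the paper's treatment: the paper does not actually prove this theorem but simply cites it as a rephrasing of the conductance-based pruning theorem of~\cite{expander-pruning} in terms of expansion (deferring the explicit derivation to~\cite{APSP-previous}), and your mechanical translation via $\phi'=\phi/\Delta$ is precisely that rephrasing.
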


\paragraph{Graph Conductance.}
For a graph $G=(V,E)$ and a subset $S\subseteq V$ of its vertices, the \emph{volume} of $S$ is $\vol_G(S)=\sum_{v\in S}\deg_G(v)$. We denote by $\vol(G)=\vol_G(V)$.
The \emph{conductance} of a cut $(A,B)$ in $G$ is: $\psi_G(A,B)=\frac{|E_G(A,B)|}{\min\set{\vol_G(A),\vol_G(B)}}$. We denote by $\Psi(G)$ the smallest conductance of any cut in $G$, and we refer to $\Psi(G)$ as the \emph{conductance} of $G$.

\subsection{Embeddings with Fake Edges and Expansion.}

Typically, when using the Cut-Matching game, we either embed an expander graph $H$ with $V(H)=V(G)$ into the given graph $G$, or compute a sparse cut $(A,B)$ in $G$. Unfortunately, it is possible that one side of the cut, say $A$, is quite small in the latter case. This often poses challenges in applications of the Cut-Matching game where the goal is to obtain very efficient algorithms. This is since we essentially spend time  $\Omega(|E(G)|)$ in order to execute the Cut-Matching game, and end up computing a sparse cut whose one side may be very small. Ideally, for efficient algorithms, it is desirable that the sparse cut that we compute is as balanced as possible. A standard way to overcome this issue, that was suggested in the original paper of \cite{KRV} that introduced the Cut-Matching game, is to use \emph{fake edges}. Intuitively, we will augment the graph $G$ with a small number of edges, that we refer to as fake edges, to indicate that they do not actually lie in $G$. If $F$ is the set of fake edges, we will denote by $G+F$ the graph obtained by adding the edges of $F$ into $G$. We will use the Cut-Matching game to either compute a sparse cut in $G$, whose both sides are relatively large; or to compute an embedding of some expander graph $H$ into $G+F$. In the latter case, both the embedding and the set $F$ of fake edges are constructed during the Cut-Matching game, and we will then extract a large expander graph from $G$. 
%
%
%
The following lemma from \cite{detbalanced} provides an algorithm to extract a large expander graph from $G$ efficiently.

\begin{lemma}[Lemma 2.9 from \cite{detbalanced}]\label{lem: embedding expander w fake edges gives expander}
	Let $G$ be an $n$-vertex graph, and let $H$ be another graph with $V(H)= V(G)$, with maximum vertex degree $\Delta_H$, such that $H$ is a $\psi$-expander, for some $0<\psi<1$. Let $F$ be any set of $k$ fake edges for $G$, and let $\Delta_G$ be the maximum vertex degree in $G+F$. Assume that there exists an embedding $\pset=\set{P(e)\mid e\in E(H)}$ of $H$ into $G+F$, that causes congestion at most $\cong$, for some $\cong\geq 1$. Assume further that $k\leq \frac{\psi n}{32\Delta_G\cong}$. Then there is a subgraph $G'\subseteq G$ that is a $\psi'$-expander, for $\psi'\geq \frac{\psi}{6\Delta_G\cdot\cong}$, such that, if we denote by $A=V(G')$ and $B=V(G)\setminus A$, then  $|A|\geq n-\frac{4k\cong}{\psi}$ and $|E_G(A,B)|\leq 4k$. Moreover, there is a deterministic algorithm, that we call \extractexpander, that, given $G,H,\pset$ and $F$, computes such a graph $G'$ in time $\tilde O(|E(G)|\Delta_G\cdot\cong/\psi)$.
\end{lemma}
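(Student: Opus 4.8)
The plan is to move expansion from $H$ to $G+F$ via the embedding, then delete a small set of vertices so that no sparse cut survives, and finally use the expansion of $G+F$ together with the structure of the deleted set to bound both its size and its boundary.

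\textbf{Step 1: $G+F$ is a $(\psi/\cong)$-expander.} Fix any cut $(X,Y)$ of $V(G+F)=V(G)=V(H)$ with $X,Y\neq\emptyset$. Every edge of $H$ whose endpoints lie on opposite sides of $(X,Y)$ has its embedding path $P(e)\in\pset$ crossing the cut, hence using at least one edge of $E_{G+F}(X,Y)$; since each such edge lies on at most $\cong$ paths of $\pset$, we get $|E_{G+F}(X,Y)|\cdot\cong\ge |E_H(X,Y)|\ge \psi\cdot\min\set{|X|,|Y|}$, the last inequality because $H$ is a $\psi$-expander on vertex set $V(G)$. Thus $\Phi(G+F)\ge \psi/\cong$; note also that $G+F$ is connected, since it contains the embedded connected graph $H$.

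\textbf{Step 2: peel off sparse cuts.} Conceptually, set $A:=V(G)$ and, while $G[A]$ has some cut of sparsity below $\psi':=\frac{\psi}{6\Delta_G\cong}$, move the smaller side $X$ of such a cut out of $A$. When the process halts, $\Phi(G[A])\ge\psi'$, so $G':=G[A]$ is a $\psi'$-expander, matching the claimed bound on $\psi'$. Write $A_0=V(G),A_1,\dots,A_r=A$ for the successive sets and $X_j$ for the piece removed in step $j$; since $|X_j|\le\min\set{|X_j|,|A_{j-1}\setminus X_j|}$ we have $|E_{G[A_{j-1}]}(X_j,A_j)|<\psi'|X_j|$, and from $\delta_G(A_{j-1})=\delta_G(A_j)+\delta_G(X_j)-2|E_G(A_j,X_j)|$ together with $\delta_G(X_j)\ge |E_G(A_j,X_j)|=|E_{G[A_{j-1}]}(X_j,A_j)|$ we get $\delta_G(A_j)\le\delta_G(A_{j-1})+\psi'|X_j|$. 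Telescoping from $\delta_G(A_0)=0$ gives $|E_G(A,B)|=\delta_G(A)\le\psi'|B|$, where $B:=V(G)\setminus A$. On the other hand, applying Step 1 to the cut $(A,B)$ and using $|E_{G+F}(A,B)|\le |E_G(A,B)|+|F|=|E_G(A,B)|+k$, we obtain $\frac{\psi}{\cong}\min\set{|A|,|B|}\le \psi'|B|+k$. A short routine calculation (first checking $|B|<|A|$, which follows from $\psi'|B|\le\psi'n$, the hypothesis $k\le\frac{\psi n}{32\Delta_G\cong}$, and the resulting bound on $|B|$; then solving the self-referential inequality $\frac{\psi}{\cong}|B|\le\frac1{6\Delta_G}\cdot\frac{\psi}{\cong}|B|+k$) yields $|B|\le\frac{4k\cong}{\psi}$ and hence $|E_G(A,B)|=\delta_G(A)\le\psi'|B|\le 4k$, i.e. $|A|\ge n-\frac{4k\cong}{\psi}$ and $|E_G(A,B)|\le 4k$, as required.

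\textbf{Step 3: efficient implementation.} Finding sparse cuts directly is expensive, so instead of literally running the peeling of Step 2 I would observe that $G$ is obtained from the $(\psi/\cong)$-expander $G+F$ by deleting the $k$ fake edges, and that $k\le\frac{(\psi/\cong)|E(G+F)|}{10\Delta_G}$ holds (since $G+F$ is connected, $|E(G+F)|\ge n-1$, and $k\le\frac{\psi n}{32\Delta_G\cong}$). I would then invoke the expander pruning algorithm (\Cref{thm: expander pruning}) on $G+F$ with parameter $\phi=\psi/\cong$, maximum degree $\Delta_G$, and the sequence of $k$ fake-edge deletions. Its output $\tilde U$ satisfies $G[V(G)\setminus\tilde U]=G\setminus\tilde U$ is a $\frac{\psi/\cong}{6\Delta_G}=\psi'$-expander, and — because pruning internally removes vertex sets that constitute sparse cuts, exactly as in Step 2 — the same telescoping argument bounds $\delta_G(\tilde U)$ and then $|\tilde U|$ as above; so we may take $A=V(G)\setminus\tilde U$, $B=\tilde U$. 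Building the adjacency list of $G+F$ costs $O(|E(G)|+k)$, and the pruning runs in $\tilde O(k\Delta_G^2/\phi^2)=\tilde O(k\Delta_G^2\cong^2/\psi^2)$, which is $\tilde O(|E(G)|\Delta_G\cong/\psi)$ by the bound on $k$ and $n=O(|E(G)|)$, giving the claimed running time.

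The main obstacle I anticipate is the interface between Steps 2 and 3: the size bound I need, $|B|\le\frac{4k\cong}{\psi}$, is stronger than the size bound $|\tilde U|\le 8k\Delta_G\cong/\psi$ that the pruning theorem states outright, so the argument genuinely relies on combining the expansion of $G+F$ with a bound on the \emph{boundary} $\delta_G(\tilde U)$, and I must verify that the pruned set (not just an abstract ``no sparse cut remains'' set) inherits the telescoping boundary bound of Step 2. Closing the self-referential inequality with the right constants, and the WLOG $|B|<|A|$, are the only other places requiring care; the rest is bookkeeping.
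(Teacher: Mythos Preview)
The paper does not prove this lemma; it is quoted verbatim as Lemma~2.9 of \cite{detbalanced} and used as a black box. So there is no ``paper's proof'' to compare against, and your task is really to give a self-contained proof.

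Your approach is the right one and is essentially how the lemma is proved in \cite{detbalanced}: the embedding transfers expansion from $H$ to $G+F$ (your Step~1), and then one prunes away the $k$ fake edges using \cite{expander-pruning}. Your Step~2 is a correct existence argument, and the telescoping bound $|E_G(A,B)|\le\psi'|B|$ together with the expansion of $G+F$ does give the claimed inequalities (the $|B|<|A|$ check is not circular: argue by contradiction at the first peeling step where $|A_j|<n/2$, using that then $n/4\le|A_j|<|B_j|\le 3n/4$, and compare $\frac{\psi}{\eta}\cdot\frac{n}{4}$ with $\psi'\cdot\frac{3n}{4}+k$).

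The gap you flag in Step~3 is real if you insist on using \Cref{thm: expander pruning} as stated, but it dissolves once you use the original conductance formulation of pruning, which the paper also records as \Cref{thm: expander pruning2}: applied to $G+F$ (which has conductance at least $\psi/(\eta\Delta_G)$ by Step~1) with the $k$ fake edges as the deletion set, that theorem directly outputs $A$ with $|E_{G+F}(A,B)|\le 4k$, hence $|E_G(A,B)|\le 4k$, and $\vol_{G+F}(B)\le 8k\eta\Delta_G/\psi$. The volume bound gives $|B|\le n/4$ via the hypothesis on $k$, so $|B|\le|A|$, and then the expansion of $G+F$ yields $\frac{\psi}{\eta}|B|\le|E_{G+F}(A,B)|\le 4k$, i.e.\ $|B|\le 4k\eta/\psi$. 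The expansion of $G[A]$ is at least its conductance, which the theorem guarantees is at least $\psi/(6\eta\Delta_G)=\psi'$. The running-time bound follows exactly as you computed. So you do not need to peer inside the pruning algorithm; the boundary bound is already part of its statement.
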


\subsection{The Cut-Matching Game.}
\label{subsec:KKOV}

The \CMG was introduced by Khandekar, Rao, and Vazirani \cite{KRV} as a tool for obtaining fast approximation algorithms for the \sparsest and \mbc problems. We describe here a variant of this game, that was introduced by Khandekar et al. \cite{KhandekarKOV2007cut}, and later slightly modified by \cite{detbalanced}. The game is played between two players, the \emph{Cut Player}, and the \emph{Matching Player}. The game uses a parameter $n$, which is an even integer. The purpose of the game is to construct an $n$-vertex expander graph $H$. At the beginning of the game, graph $H$ contains $n$ vertices and no edges, and then in every iteration some edges are added to $H$. For intuition, it may be convenient to think of the Cut Player's goal being to construct the expander in as few iterations as possible, and the Matching Player's goal as trying to delay the construction of the expander. 

The game starts with graph $H$ containing $n$ vertices and no edges. The $i$th iteration is played as follows. The Cut Player either computes a partition $(A_i,B_i)$ of $V(H)$ with $|A_i|,|B_i|\geq n/4$ and $|E_H(A_i,B_i)|\leq n/100$; or it computes a set $X\subseteq V(H)$ of vertices with $|X|\geq n/2$, such that $H[X]$ is a $\phi$-expander, for some expansion parameter $0<\phi<1$. Assume first that the former happens, and  assume without loss of generality that $|A_i|\leq |B_i|$. The Matching Player must compute any partition $(A_i',B_i')$ of $V(H)$ with $|A'_i|=|B'_i|$, such that $A_i\subseteq A_i'$, and then it must compute an arbitrary perfect matching $M_i$ between $A_i'$ and $B_i'$. The edges of $M_i$ are then added to the graph $H$, and the algorithm continues to the next iteration. 
If the latter case happens, that is, the Cut Player returns a set $X\subseteq V(H)$ of at least $n/2$ vertices, so that $H[X]$ is a $\phi$-expander, denote $Y=V(H)\setminus X$. The Matching Player must then compute a matching $\mset_i\subseteq X\times Y$ with $|\mset_i|=Y$. The edges of $\mset_i$ are added to graph $H$, and the algorithm terminates. In this case, from \Cref{obs: exp plus matching is exp}, we are guaranteed that $H$ is a $\phi/2$-expander. The next theorem follows directly from the result of \cite{KhandekarKOV2007cut}, and was proved explicitly in \cite{detbalanced} (see Theorem 2.5 in the full version).

\begin{theorem}
	\label{thm:KKOV-new} There is a constant $c$, such that the algorithm described above terminates after at most $c\log n$ iterations.
\end{theorem}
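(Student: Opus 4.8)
The plan is to give a potential-function argument, following (and mildly adapting) the analysis of \cite{KhandekarKOV2007cut} (see also \cite{detbalanced}). Suppose towards contradiction that the game runs for more than $c\log n$ iterations, for a large enough absolute constant $c$; then every one of the first $c\log n$ iterations is of the first type, in which the Cut Player returns a sparse balanced cut (otherwise the game has already terminated). For $i\geq 1$ let $H_i$ denote $H$ at the beginning of iteration $i$, so that $H_i=M_1\cup\cdots\cup M_{i-1}$ as a multigraph; let $(A_i,B_i)$ be the Cut Player's cut, and let $(A_i',B_i')$ and $M_i$ be the Matching Player's response, with $A_i\subseteq A_i'$, $B_i'\subseteq B_i$, $|A_i'|=|B_i'|=n/2$, and $M_i$ a perfect matching between $A_i'$ and $B_i'$.

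The main object is an $n\times n$ doubly stochastic matrix $P^{(i)}$, where $P^{(i)}_{uv}$ is the amount of ``commodity $v$'' currently located at vertex $u$. Set $P^{(1)}=I$, and obtain $P^{(i+1)}$ from $P^{(i)}$ by replacing, for every $(u,w)\in M_i$, rows $u$ and $w$ by their average $\tfrac12(P^{(i)}_u+P^{(i)}_w)$; this keeps the matrix doubly stochastic. Two elementary observations drive the proof. First, since each matched pair $(u,w)$ replaces the column-$v$ entries $P^{(i)}_{uv},P^{(i)}_{wv}$ by their average and $-x\ln x$ is concave, the sum of the column entropies $\Phi_i:=\sum_v H\big((P^{(i)})_{\cdot v}\big)$ can only increase, and a short computation gives $\Phi_{i+1}-\Phi_i=2\sum_{(u,w)\in M_i}\mathrm{JS}\big(P^{(i)}_u,P^{(i)}_w\big)$, where $\mathrm{JS}$ is the Jensen--Shannon divergence of the two row distributions; by Pinsker's inequality this is at least $\sum_{(u,w)\in M_i}\big\|P^{(i)}_u-P^{(i)}_w\big\|_{TV}^2$. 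We have $\Phi_1=0$ and $\Phi_i\leq n\ln n$ at all times. Second, the averaging update describes a flow of congestion at most $1$ in $H_i$ that routes, for every $v$, one unit from $v$ to the distribution $(P^{(i)})_{\cdot v}$: an edge $(u,w)\in M_j$ carries exactly half a unit in each direction during iteration $j$ and is never used again. In particular, the total commodity mass that ever crosses the cut $(A_i,B_i)$ is at most $|E_{H_i}(A_i,B_i)|\leq n/100$.

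The heart of the argument is that each first-type iteration raises the potential by $\Omega(n)$. Tracking the mass flow, $\sum_{u\in A_i}\sum_{v\in B_i}P^{(i)}_{uv}$ and $\sum_{u\in B_i}\sum_{v\in A_i}P^{(i)}_{uv}$ are both at most $n/100$; since every row of $P^{(i)}$ sums to $1$ and $|A_i|,|B_i|\geq n/4$, this forces all but at most $n/10$ of the rows indexed by $A_i$ to place at least a $0.9$-fraction of their mass on the columns of $A_i$ (call these \emph{$A$-good rows}), and symmetrically all but at most $n/10$ of the rows indexed by $B_i$ to be \emph{$B$-good}. Since $A_i\subseteq A_i'$ and $|B_i\setminus B_i'|=|B_i|-n/2$, at least $\Omega(n)$ $A$-good rows lie in $A_i'$ and at least $\Omega(n)$ $B$-good rows lie in $B_i'$, so — as $M_i$ is a perfect matching between two sets of size $n/2$ — a simple counting argument yields $\Omega(n)$ pairs $(u,w)\in M_i$ with $u$ an $A$-good row and $w$ a $B$-good row. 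For any such pair, restricting to the columns of $A_i$ gives $\|P^{(i)}_u-P^{(i)}_w\|_{TV}\geq \tfrac12(0.9-0.1)=\Omega(1)$, hence $\Phi_{i+1}-\Phi_i\geq\Omega(n)$. Together with $\Phi_1=0$ and $\Phi_i\leq n\ln n$ this bounds the number of first-type iterations by $O(n\ln n)/\Omega(n)=O(\log n)$, and therefore the total number of iterations by $O(\log n)$, contradicting the choice of $c$.

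The main obstacle in carrying this out is that the Matching Player is adversarial and, moreover, gets to choose the balanced completion $(A_i',B_i')$ of the Cut Player's cut, so the potential must provably increase no matter how $M_i$ is chosen; this is exactly what forces the ``good rows'' detour, since one cannot rely on $M_i$ pairing rows that are far apart in $\ell_2$, but one \emph{can} rely on sparsity of the cut making most rows on each side concentrated on their own side, so that a constant fraction of \emph{any} matching across the (essentially balanced) cut pairs a concentrated-on-$A_i$ row with a concentrated-on-$B_i$ row. A secondary subtlety is the choice of potential: an $\ell_2$-type potential $\|P^{(i)}-\tfrac1n\mathbf{1}\mathbf{1}^{\top}\|_F^2$ has range only $\Theta(n)$ and an additive decrease of $\Omega(n)$ would give only $\Theta(1)$ iterations, whereas the entropy potential has range $\Theta(n\log n)$, which is precisely what produces the extra logarithmic factor and hence the $O(\log n)$ bound. (As noted in the statement, this theorem is essentially that of \cite{KhandekarKOV2007cut} with the minor game modification of \cite{detbalanced}, where a full proof appears; the modifications — the Cut Player additionally certifying a large induced expander when it cannot produce a sparse balanced cut, and the Matching Player's freedom in choosing $(A_i',B_i')$ — do not affect the argument above.)
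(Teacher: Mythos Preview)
Your argument is correct. The paper does not give its own proof of this theorem; it simply cites \cite{KhandekarKOV2007cut} and the explicit write-up in \cite{detbalanced}, and your entropy-potential argument is precisely a clean variant of that analysis (the doubly stochastic averaging matrix, the sparse-cut-implies-concentrated-rows observation, and the $\Omega(n)$-per-round potential increase are exactly the ingredients used there), so your proposal matches the intended proof.
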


\subsection{Graph Cutting and Partitioning.}

We use several graph cutting and partitioning procedures, that exploit standard tools. 
In all these procedures, the input is a graph $G$, with a subset $T$ of vertices of $G$ called terminals. The goal is to either compute a single cluster, or a collection of clusters in $G$ with some specific properties. Throughout, a subgraph $C\subseteq G$, we denote by $T_C=T\cap V(C)$ the set of all terminals contained in $C$.
We start with procedure \proccut, which is a variation of Leighton and Rao's ball growing technique \cite{LR}.

\subsubsection{Procedure \proccut.}

The input to the procedure is an $n$-vertex graph $G$, a set $T\subseteq V(G)$ of $k$ vertices called terminals, a specific terminal $t_C\in T$, and distance parameters $d$ and $\Delta$.

 The procedure returns a cluster $C\subseteq G$, and a subset $\hat T_C\subseteq T$ of terminals, for which the following properties hold:

\begin{properties}{C}
	\item $T_C\subseteq \hat T_C$; \label{prop: terminal subset}
	
	\item $|\hat T_C|\leq |T_C|\cdot k^{64/\Delta}$; \label{prop: few discarded terminals}
	
	\item $V(C)\subseteq B_G(t_C, \Delta\cdot d)$; \label{prop: small diam of cluster}
	
	\item $\hat T_C\subseteq B_G(t_C,\Delta\cdot d)$; and \label{prop: small dist between terminals}
	
	\item for every pair $x\in V(C)$, $t'\in T\setminus \hat T_C$ of vertices, $\dist_G(v,t')\geq 4d$. \label{prop: separation}
\end{properties}

The following lemma summarizes Procedure \proccut.

\begin{lemma}\label{lem: proccut} There is a deterministic algorithm called \proccut, that, given an $n$-vertex graph $G$, a subset $T\subseteq V(G)$ of $k$ vertices called terminals, a specific terminal $t_C\in T$, and parameters $d,\Delta>0$, computes a cluster $C\subseteq G$ together with a set $\hat T_C\subseteq T$ of terminals, for which properties (\ref{prop: terminal subset}) -- (\ref{prop: separation}) hold. The running time of the algorithm is $O(|E(C)|\cdot n^{64/\Delta})$.
\end{lemma}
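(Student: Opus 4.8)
The plan is to run a Leighton--Rao-style ball-growing procedure around the terminal $t_C$. Perform a BFS from $t_C$ in $G$ and, for $j=0,1,2,\dots$, consider the ball $B_j=B_G(t_C,4dj)$ together with the quantities $a_j=|T\cap B_j|$ and $w_j=\vol_G(B_j)$. Both sequences are non-decreasing, with $1\le a_0\le a_1\le\cdots\le k$ (since $t_C\in T\cap B_0$) and $w_0\le w_1\le\cdots\le\vol(G)\le n^2$ (the degenerate case of an isolated $t_C$ is handled trivially by returning $C=\{t_C\}$). I would then look for the \emph{smallest} index $j$ in the range $0\le j\le\floor{\Delta/4}-1$ for which \emph{both} growth conditions hold: $a_{j+1}\le a_j\cdot k^{64/\Delta}$ (condition (i)) and $w_{j+1}\le w_j\cdot n^{64/\Delta}$ (condition (ii)). Call this index $j^\ast$, and output $C=G[B_{j^\ast}]$ (which is a cluster: an induced subgraph on a ball is connected) together with $\hat T_C=T\cap B_{j^\ast+1}$.

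First I would argue that such a $j^\ast$ exists, via a telescoping count. If $j$ is ``bad for (i)'', i.e.\ $a_{j+1}>a_j\cdot k^{64/\Delta}$, then, since the $a_j$ are non-decreasing, having $m$ bad-for-(i) indices (in arbitrary positions) forces the overall growth of $a_\cdot$ to exceed $k^{64m/\Delta}$; combined with $a_\cdot\le k$ this gives $m<\Delta/64$. An identical argument using $w_\cdot\le n^2$ shows fewer than $\Delta/32$ indices are bad for (ii). Hence at most $\Delta/64+\Delta/32=3\Delta/64$ of the $\floor{\Delta/4}$ candidate indices are bad for (i) or (ii), and since $\floor{\Delta/4}>3\Delta/64$ once $\Delta$ exceeds a small constant (which is the case in all applications of the lemma; the few small cases, in particular $\Delta\le 4$ where only $j=0$ is a candidate, are checked directly, $j^\ast=0$ always working there), a valid $j^\ast$ exists.

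Given $j^\ast$, Properties (\ref{prop: terminal subset})--(\ref{prop: separation}) follow immediately. (\ref{prop: terminal subset}): $T_C=T\cap B_{j^\ast}\subseteq T\cap B_{j^\ast+1}=\hat T_C$. (\ref{prop: few discarded terminals}): $|\hat T_C|=a_{j^\ast+1}\le a_{j^\ast}\cdot k^{64/\Delta}=|T_C|\cdot k^{64/\Delta}$ by (i), and $a_{j^\ast}=|T\cap V(C)|=|T_C|$. (\ref{prop: small diam of cluster}): $V(C)=B_{j^\ast}\subseteq B_G(t_C,\Delta d)$ since $4j^\ast\le\Delta$. (\ref{prop: small dist between terminals}): $\hat T_C\subseteq B_{j^\ast+1}\subseteq B_G(t_C,\Delta d)$ since $4(j^\ast+1)\le 4\floor{\Delta/4}\le\Delta$. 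For (\ref{prop: separation}), let $x\in V(C)$ and $t'\in T\setminus\hat T_C$: then $t'\notin B_{j^\ast+1}$ gives $\dist_G(t_C,t')>4d(j^\ast+1)$, while $\dist_G(t_C,x)\le 4dj^\ast$, so by the triangle inequality $\dist_G(x,t')>4d(j^\ast+1)-4dj^\ast=4d\ge 4d$.

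For the running time: to determine $j^\ast$ the BFS only needs to build the balls up to $B_{j^\ast+1}$ (the counters $a_j,w_j$ can be updated incrementally, so each candidate $j$ is tested in $O(1)$ additional time once $B_{j+1}$ is available), and building $B_{j^\ast+1}$ costs $O(\vol_G(B_{j^\ast+1}))=O(w_{j^\ast+1})$. Condition (ii) at $j^\ast$ gives $w_{j^\ast+1}\le w_{j^\ast}\cdot n^{64/\Delta}$, and $w_{j^\ast}=\vol_G(V(C))=O(|E(C)|)$ up to the boundary edges of $C$ (which the BFS must read in any case), so the total running time is $O(|E(C)|\cdot n^{64/\Delta})$. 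The main point of the construction --- and the reason conditions (i) and (ii) are coupled at a single radius rather than only imposing the terminal condition (i) --- is precisely this running-time bound: without (ii) the ``probe'' ball $B_{j^\ast+1}$ could be far larger than the returned cluster $C=G[B_{j^\ast}]$, and the counting argument of the second paragraph is exactly what guarantees that both growth conditions can be met simultaneously while still staying inside the corridor of radius $\Delta d$.
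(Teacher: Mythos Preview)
Your approach is essentially the same as the paper's: a Leighton--Rao ball-growing around $t_C$ in steps of $4d$, stopping at the first radius where both the terminal count and an edge-related quantity grow by at most a factor $k^{64/\Delta}$ (resp.\ $n^{64/\Delta}$), and a counting argument to guarantee such a radius exists within the first $\Theta(\Delta)$ candidates. The correctness verification of (\ref{prop: terminal subset})--(\ref{prop: separation}) is identical.

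There are two minor technical slips in your running-time argument, both easily fixed by mimicking the paper more closely. First, you track $w_j=\vol_G(B_j)$ rather than the number of edges \emph{inside} $B_j$ (the paper's $m_j$). These differ by the boundary $|\delta_G(B_j)|$, which is not controlled by $|E(C)|$: your final step ``$w_{j^\ast}=\vol_G(V(C))=O(|E(C)|)$ up to boundary edges'' is exactly where the bound can fail (e.g.\ a star centered at $t_C$). If instead you set $w_j=|E(G[B_j])|$, condition (ii) gives the paper's bound directly, since the BFS to discover $B_{j^\ast+1}$ examines only edges with both endpoints in $B_{j^\ast+1}$. Second, allowing $j^\ast=0$ yields $C=\{t_C\}$ with $|E(C)|=0$, so the stated bound $O(|E(C)|\cdot n^{64/\Delta})$ is vacuous even though the BFS did work; the paper avoids this by taking $i\ge 2$ (your $j^\ast\ge 1$), which the counting argument still accommodates.
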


\begin{proof}	
We assume that we are given as input an $n$-vertex graph $G$, a set $T\subseteq V(G)$ of $k$ vertices called terminals, together with a specific terminal $t_C\in T$  that we denote by $t$ for simplicity, and distance parameters $d$ and $\Delta$. 
The procedure performs a breadth-first-search (BFS) from vertex $t$ in graph $G$, up to a certain depth, that will be determined later.

For all $i\geq 1$, we denote by $L_i$ the set of all vertices of $G$ that lie at distance $4(i-1)d+1$ to $4id$ from $t$ in $G$. In other words:

\[L_i=B_G(t,4id)\setminus B_G(t,4(i-1)d).   \]

We refer to the vertices of $L_i$ as \emph{layer $i$ of the BFS}. We denote by $k_i$ the number of terminals lying in $L_1\cup\cdots\cup L_i$.
We also denote by $m_i$ the total number of edges of $G$ whose both endpoints lie in $L_1\cup \cdots\cup L_i$. 
 The following definition is crucial for the description of Procedure \proccut.

\begin{definition}[Eligible Layer]
	For an integer $i>1$, we say that layer $L_i$ of the BFS is \emph{eligible} if both of the following two conditions hold:
	
	\begin{properties}{L}
		\item  $m_i\leq m_{i-1}\cdot n^{64/\Delta}$;  and \label{cut condition 1: number of edges}
		\item  $k_i\leq k_{i-1}\cdot k^{64/\Delta}$  \label{cut condition 2: terminals}
	\end{properties}
\end{definition}

We need the following claim, whose proof uses standard arguments.
\begin{claim}\label{claim: eligible layer}
	There exists an index $1<i\leq \Delta/8$, such that layer $L_i$ is eligible.
\end{claim}

\begin{proof}
Assume otherwise. Then for all $1< i\leq \Delta/8$, layer $L_i$ is ineligible. For each such index $i$, we say that layer $L_i$ is \emph{type-1 ineligible} if it violates Condition (\ref{cut condition 1: number of edges}). Otherwise, we say that it is \emph{type-2 ineligible}, in which case it must violate Condition (\ref{cut condition 2: terminals}). Since every layer $L_i$ with $1< i\leq \Delta/8$ is ineligible, either there are at least $\Delta/32$ type-1 ineligible layers $L_i$ with $1<i\leq \Delta/8$, or there are at least $\Delta/32$ type-2 ineligible layers $L_i$ with $1<i\leq \Delta/8$. We now consider each of the two cases and prove that they are impossible.

Assume first that there are at least $\Delta/32$ type-1 ineligible layers $L_i$ with $1<i\leq \Delta/8$, and denote their indices by $i_1,i_2,\ldots,i_z$, where $1<i_1\leq i_2\leq\cdots\leq i_z\leq \Delta/8$, and $z\geq \Delta/32$. But then, for all $1\leq a<z$, $m_{i_{a+1}}>m_{i_a}\cdot n^{64/\Delta}$. Therefore, $m_{i_z}>n^{64z/\Delta}\geq n^2$, a contradiction.

Assume now that  there are at least $\Delta/32$ type-2 ineligible layers $L_i$ with $1<i\leq \Delta/8$, and denote their indices by $j_1,j_2,\ldots,j_z$, where $1<j_1\leq j_2\leq\cdots\leq j_z\leq \Delta/8$, and $z\geq \Delta/32$. But then, for all $1\leq a<z$, $k_{j_{a+1}}>k_{j_a}\cdot k^{64/\Delta}$. Therefore, $k_{j_z}>k^{64z/\Delta}>k$, a contradiction.
\end{proof}

We are now ready to describe the algorithm for \proccut. The algorithm performs a BFS from the input terminal $t$ in graph $G$, until it encounters the first index $i>1$, such that layer $L_i$ is eligible. 
The algorithm then returns cluster $C$, which is a subgraph of $G$ induced by $L_1\cup L_2\cup\cdots\cup L_{i-1}$, and the set $\hat T_C$ of terminals, containing all terminals in $L_1\cup\cdots\cup L_i$.

We now show that properties (\ref{prop: terminal subset}) -- (\ref{prop: separation}) hold for this output. Properties (\ref{prop: terminal subset}), (\ref{prop: small diam of cluster}) and (\ref{prop: small dist between terminals})  follow immediately from the definition of cluster $C$ and set $\hat T_C$ of terminals, and from the fact that, from Claim \ref{claim: eligible layer}, $i\leq \Delta/8$.
Property (\ref{prop: few discarded terminals}) follows immediately from Condition (\ref{cut condition 2: terminals}) in the definition of an eligible layer, since $|T_C|=k_{i-1}$ and $|\hat T_C|=k_i$. Lastly, property (\ref{prop: separation}) follows immediately from the fact that the vertices of $C$ and the terminals of $T\setminus\hat T_C$ are separated by layer $L_i$, so every path connecting a vertex of $C$ and a terminal of $T\setminus \hat T_C$ must contain at least $4d$ edges.

Notice that the running time of the algorithm is $O(m_i)$. Since $|E(C)|=m_{i-1}$, from Condition (\ref{cut condition 1: number of edges}) of an eligible layer, we get that the running time of the algorithm is bounded by $O(m_{i-1}\cdot n^{64/\Delta})=O(|E(C)|\cdot n^{64/\Delta})$.
\end{proof}

Next, we describe a procedure called \procpart, that exploits Procedure \proccut in order to compute a number of clusters in the input graph $G$, that contain a large fraction of terminals, such that the diameter of every cluster is relatively small, but pairs of vertices lying in different clusters are sufficiently far away from each other. 

\subsubsection{Procedure \procpart.}
The input to Procedure \procpart consists of an $n$-vertex graph $G$,
a set $T\subseteq V(G)$ of $k$ vertices called terminals, and distance parameters $d$ and $\Delta$.

The output of the procedure is a collection $\cset$ of disjoint clusters of $G$, and, for every cluster $C\in \cset$, a center terminal $t_C\in T_C$, and two sets $T'_C$, $\hat T_C$ of terminals, such that the following properties hold.

\begin{properties}{R}
	\item for every cluster $C\in \cset$, $t_C\in T'_C$; $T'_C\subseteq V(C)$, and $T'_C\subseteq \hat T_C$; \label{prop: terminal subset new}
	
	\item for every cluster $C\in \cset$, $|\hat T_C|\leq |T'_C|\cdot k^{64/\Delta}$; \label{prop: few discarded terminals new}
	
	\item for every cluster $C\in \cset$,  $V(C)\subseteq B_G(t_C, \Delta\cdot d)$; \label{prop: small diam of cluster new}
	
	\item for every cluster $C\in \cset$, $\hat T_C\subseteq B_G(t_C,\Delta\cdot d)$; \label{prop: small dist between terminals new}
	
	\item for every pair $C,C'\in \cset$ of distinct clusters, for every pair $t'\in T'_C,t''\in T'_{C'}$ of terminals, $\dist_G(t',t'')\geq d$; and\label{prop: separated clusters}
	\item $\bigcup_{C\in \cset}\hat T_C=T$. \label{prop: cover all terminals}
\end{properties}

The following lemma summarizes Procedure \proccut.

\begin{lemma}\label{lem: procpar}
	There is a deterministic algorithm, called Procedure \procpart, whose input is an $n$-vertex graph $G$, a set $T\subseteq V(G)$ of $k$ vertices called terminals, and distance parameters $d$ and $\Delta$. 
	The output of the procedure is a collection $\cset$ of disjoint clusters of $G$, and, for every cluster $C\in \cset$, a center terminal $t_C\in T_C$ and sets $T'_C,\hat T_C$ of terminals, for which Properties( \ref{prop: terminal subset new})--(\ref{prop: cover all terminals}) hold. The running time of the procedure is $O(|E(G)|\cdot n^{64/\Delta})$.
\end{lemma}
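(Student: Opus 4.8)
The plan is to obtain \procpart by repeatedly invoking \proccut and peeling off clusters until all terminals have been accounted for. First I would initialize a working graph $G_0 = G$ and a working terminal set $T_0 = T$, and set $\cset = \emptyset$. As long as $T_j \neq \emptyset$, I would pick an arbitrary terminal $t \in T_j$, run \proccut on the current graph $G_j$ with terminal set $T_j$, center terminal $t$, and the same parameters $d, \Delta$, to obtain a cluster $C_j \subseteq G_j$ and a set $\hat T_{C_j} \subseteq T_j$. I would add $C_j$ to $\cset$, set its center to $t_{C_j} = t$, set $T'_{C_j} = T_j \cap V(C_j)$ (the terminals of the \emph{current} round lying in $C_j$) and $\hat T_{C_j}$ as returned. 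Then I would delete from the working graph all vertices of $C_j$ and remove from $T_j$ all terminals of $\hat T_{C_j}$: that is, $G_{j+1} = G_j \setminus V(C_j)$ and $T_{j+1} = T_j \setminus \hat T_{C_j}$. Since $t \in T_C \subseteq \hat T_{C_j}$ by Property~(\ref{prop: terminal subset}) of \proccut, at least one terminal is removed each round, so the process terminates after at most $k$ rounds, and Property~(\ref{prop: cover all terminals}) holds by construction since every terminal is eventually placed in some $\hat T_C$.

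Next I would verify the remaining properties. Properties~(\ref{prop: terminal subset new})--(\ref{prop: small dist between terminals new}) are essentially inherited directly from the corresponding Properties~(\ref{prop: terminal subset})--(\ref{prop: small dist between terminals}) of \proccut applied in round $j$, with one subtlety: \proccut guarantees $V(C_j) \subseteq B_{G_j}(t_{C_j}, \Delta d)$ and $\hat T_{C_j} \subseteq B_{G_j}(t_{C_j}, \Delta d)$ with distances measured in the \emph{subgraph} $G_j$, and since $G_j \subseteq G$, distances in $G_j$ dominate distances in $G$, so these ball-containment statements only get stronger when transferred to $G$. Property~(\ref{prop: few discarded terminals new}) follows from Property~(\ref{prop: few discarded terminals}) of \proccut since $T'_{C_j} = T_j \cap V(C_j) = T_{C_j}$ in the notation of \proccut's round-$j$ invocation. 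The disjointness of the clusters in $\cset$ is immediate because we delete $V(C_j)$ from the graph before the next round, so later clusters are vertex-disjoint from $C_j$.

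The main point requiring care — and the step I expect to be the real obstacle — is the separation Property~(\ref{prop: separated clusters}): for distinct clusters $C, C' \in \cset$ and terminals $t' \in T'_C$, $t'' \in T'_{C'}$, we need $\dist_G(t', t'') \geq d$. Say $C = C_a$ was extracted before $C' = C_b$, i.e.\ $a < b$. We have $t'' \in T'_{C_b} = T_b \cap V(C_b)$, so in particular $t'' \in T_b$, meaning $t''$ was \emph{not} discarded in round $a$, i.e.\ $t'' \in T_a \setminus \hat T_{C_a}$. Property~(\ref{prop: separation}) of \proccut's round-$a$ invocation then gives $\dist_{G_a}(v, t'') \geq 4d$ for every $v \in V(C_a)$, and in particular for $v = t' \in T'_{C_a} \subseteq V(C_a)$. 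Again since $G_a \subseteq G$, we get $\dist_G(t', t'') \geq 4d \geq d$. The only thing to double-check here is that $t''$ genuinely survives to round $b$ as a terminal, which holds because $T_{j+1} = T_j \setminus \hat T_{C_j}$ is decreasing and $t'' \in T_b$; and that $t'$ really lies in $V(C_a)$, which holds since $T'_{C_a} \subseteq V(C_a)$ by Property~(\ref{prop: terminal subset}). Finally, for the running time: in round $j$ the call to \proccut costs $O(|E(C_j)| \cdot n^{64/\Delta})$, and since the clusters $C_j$ are edge-disjoint (being vertex-disjoint in $G$), $\sum_j |E(C_j)| \leq |E(G)|$, giving total running time $O(|E(G)| \cdot n^{64/\Delta})$ as claimed, plus lower-order bookkeeping for maintaining the working graph.
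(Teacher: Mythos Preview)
Your algorithm and overall structure match the paper's exactly, and your verification of Properties (\ref{prop: terminal subset new})--(\ref{prop: small dist between terminals new}), (\ref{prop: cover all terminals}), disjointness, and the running time are all fine. The gap is in your argument for Property (\ref{prop: separated clusters}).

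You correctly derive $\dist_{G_a}(t',t'')\geq 4d$ from Property (\ref{prop: separation}) of the round-$a$ call, but then write ``since $G_a\subseteq G$, we get $\dist_G(t',t'')\geq 4d$.'' This implication is backwards: a subgraph has \emph{larger} distances, so $G_a\subseteq G$ only gives $\dist_{G_a}(t',t'')\geq \dist_G(t',t'')$, which is useless here. (Contrast this with your earlier, correct, use of $G_j\subseteq G$ to transfer the \emph{upper} bounds in (\ref{prop: small diam of cluster new}) and (\ref{prop: small dist between terminals new}).) A short $G$-path from $t'$ to $t''$ could in principle cut through an earlier removed cluster $V(C_c)$ with $c<a$, and you have to rule this out.

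The fix is an induction on the first cluster the path touches. Suppose $P$ is a $G$-path from $t'$ to $t''$ of length $<4d$, and let $c\leq a$ be the smallest index with $V(P)\cap V(C_c)\neq\emptyset$ (such $c$ exists since $t'\in V(C_a)$). Then $P$ avoids $V(C_0)\cup\cdots\cup V(C_{c-1})$, so $P\subseteq G_c$. Since $t''\in T_b\subseteq T_c\setminus\hat T_{C_c}$, Property (\ref{prop: separation}) of the round-$c$ call gives $\dist_{G_c}(w,t'')\geq 4d$ for any $w\in V(C_c)\cap V(P)$, contradicting $|P|<4d$. The paper packages exactly this argument as two running invariants (its (\ref{invariant}) and (\ref{invariant: separating clusters from rest})), but the one-shot version above is all you need.
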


\begin{proof} 
Throughout the algorithm, we maintain a set $\cset$ of disjoint clusters of $G$, and, for every cluster $C\in \cset$, we maintain a center terminal $t_C\in T_C$, and sets $T'_C,\hat T_C$ of terminals. We ensure that, throughout the algorithm, properties (\ref{prop: terminal subset new})--(\ref{prop: separated clusters}) hold. The algorithm terminates once we achieve Property (\ref{prop: cover all terminals}).

At the beginning of the algorithm, we set $\cset=\emptyset$ and $G_0=G$. We then iterate.
In iteration $j$, we add a new cluster $C_j$ to set $\cset$, and define the corresponding terminal $t_{C_j}$ and sets $T'_{C_j}$, $\hat T_{C_j}$ of terminals. We denote $G_j=G\setminus\left (V(C_1)\cup V(C_2)\cup\cdots\cup V(C_j)\right )$. As the algorithm progresses, we will also delete some terminals from the set $T$. Specifically, we set $T^{(0)}=T$, and, for all $j\geq 1$, we set $T^{(j)}=T\setminus \left(\hat T_{C_1}\cup \hat T_{C_2}\cup \cdots\cup \hat T_{C_j}\right )$. We will ensure that the following additional invariants hold at the end of iteration $j$:

\begin{properties}{I}
	\item $T^{(j)}\subseteq V(G_j)$, and for every pair $t,t'\in T^{(j)}$ of terminals, if $\dist_{G_j}(t,t')\geq 4d$, then $\dist_G(t,t')\geq 4d$.\label{invariant}
	
	\item for every pair $t,t'$ of terminals with $t\in \bigcup_{j'=1}^jT'_{C_{j'}}$ and $t'\in T^{(j)}$, $\dist_G(t,t')\geq d$.\label{invariant: separating clusters from rest}
\end{properties}

At the beginning of the algorithm, $\cset=\emptyset$, $G_0=G$, and $T^{(0)}=T$. Clearly, Properties (\ref{prop: terminal subset new})--(\ref{prop: separated clusters}) and invariants (\ref{invariant}) and (\ref{invariant: separating clusters from rest}) are satisfied. We perform iterations until Property (\ref{prop: cover all terminals}) holds. We now describe the execution of the $j$th iteration.

\paragraph{Execution of the $j$th iteration.}
We assume that we are given a set $\cset=\set{C_1,\ldots,C_{j-1}}$ of disjoint clusters, and, for every cluster $C_{j'}\in \cset$, a terminal $t_{C_{j'}}$, and sets $T'_{C_{j'}},\hat T_{C_{j'}}$ of terminals, such that  Properties (\ref{prop: terminal subset new})--(\ref{prop: separated clusters}), and Invariants (\ref{invariant}) and (\ref{invariant: separating clusters from rest}) hold. Recall that $G_{j-1}=G\setminus \left (V(C_1)\cup V(C_2)\cup\cdots\cup V(C_{j-1})\right )$ and $T^{(j-1)}=T\setminus \left(\hat T_{C_1}\cup \hat T_{C_2}\cup \cdots\cup \hat T_{C_{j-1}}\right )$. We assume that Property (\ref{prop: cover all terminals}) does not hold, so there must be at least one terminal in $T^{(j-1)}$; we let $t\in T^{(j-1)}$ be any such terminal.  

In order to execute the $j$th iteration, we apply Procedure \proccut from \Cref{lem: proccut}  to graph $G_{j-1}$, set $T^{(j-1)}$ of terminals, and terminal $t$, keeping the parameters $d$ and $\Delta$ unchanged. We denote by $C_j$ the cluster of $G_{j-1}$ that the algorithm returns, by $\hat T_{C_j}\subseteq T^{(j-1)}$ the resulting set of terminals, and by $T'_{C_j}=T^{(j-1)}\cap V(C)$. Notice that, equivalently, $T'_{C_j}=T_{C_j}\setminus \left(\hat T_{C_1}\cup \hat T_{C_2}\cup\cdots\cup \hat T_{C_{j-1}}\right )$. We also denote $t_{C_{j}}=t$, and we add $C_j$ to $\cset$, completing the iteration.

\paragraph{Analysis of the $j$th iteration.}
We now verify that all required properties hold at the end of iteration $j$, assuming that they held at the beginning of the iteration. 
Consider first Properties (\ref{prop: terminal subset new})--(\ref{prop: small dist between terminals new}). Let $C\in \cset$ be any cluster. If $C\neq C_j$, then these properties clearly continue to hold for $C$. Assume now that $C=C_j$. Properties (\ref{prop: terminal subset new}) and (\ref{prop: few discarded terminals new}) immediately follow from Properties (\ref{prop: terminal subset}) and (\ref{prop: few discarded terminals}) that are guaranteed by Procedure \proccut, and from the definition of the set $T'_C$ of terminals. Properties (\ref{prop: small diam of cluster new}) and (\ref{prop: small dist between terminals new}) similarly follow from Properties (\ref{prop: small diam of cluster}) and (\ref{prop: small dist between terminals}), since, for every pair $x,y\in V(G_j)$, $\dist_{G_{j-1}}(x,y)\geq \dist_G(x,y)$, and so $B_{G_{j-1}}(t_C,\Delta\cdot d)\subseteq B_{G}(t_C,\Delta\cdot d)$.


Consider now any pair $C_{j'},C_{j''}\in \cset$ of clusters, and a pair $t'\in T'_{C_{j'}}$, $t''\in T'_{C_{j''}}$ of terminals. If both $j',j''<j$, then we are guaranteed that $\dist_G(t',t'')\geq d$ from the fact that Property (\ref{prop: separated clusters}) held at the beginning of the iteration. Assume now w.l.o.g. that $j'<j$ and $j''=j$. In this case, $t''\in T^{(j-1)}$, and, since we have assumed that Invariant (\ref{invariant: separating clusters from rest}) held at the beginning of the iteration, we get that $\dist_G(t',t'')\geq d$, establishing Property (\ref{prop: separated clusters}).

Next, we establish Invariant (\ref{invariant}). Recall that $T^{(j)}=T\setminus \left(\hat T_1\cup\cdots\cup \hat T_{j}\right )=T^{(j-1)}\setminus \hat T_j$. Consider some terminal $t'\in T^{(j)}$, and assume for contradiction that $t'\not\in V(G_j)$. Recall that $V(G_j)=V(G)\setminus\left(V(C_1)\cup \cdots\cup V(C_j)\right )=V(G_{j-1})\setminus V(C_j)$. From Invariant (\ref{invariant}), since $t'\in T^{(j-1)}$, $t'\in V(G_{j-1})$ must hold. Since $t'\not\in V(G_j)$, it must be the case that $t'\in V(C_j)$. However, set $T'_{C_j}$ contains every terminal that lies in $V(C_j)\setminus \left(\hat T_{C_1}\cup \cdots\cup \hat T_{C_{j-1}}\right )$. Since $t'\not\in \hat T_{C_1}\cup \cdots\cup \hat T_{C_{j-1}}$, it must be the case that $t'\in T'_{C_j}$, and so $t'\in \hat T_{C_j}$. But then $t'\not\in T^{(j)}$, a contradiction.


Consider now some pair $t,t'\in T^{(j)}$ of terminals, and assume that $\dist_{G_j}(t,t')\geq 4d$. Note that, if $\dist_{G_{j-1}}(t,t')\geq 4d$, then, from the fact that Invariant (\ref{invariant}) held at the beginning of the iteration, we get that $\dist_G(t,t')\geq 4d$ must hold. We now show that 
$\dist_{G_{j-1}}(t,t')\geq 4d$ must hold. Indeed, assume otherwise. Let $P$ be any path of length less than $4d$ in graph $G_{j-1}$ connecting $t$ to $t'$. Since this path does not lie in graph $G_j$, at least one vertex $v\in V(P)$ must lie in $V(G_{j-1})\setminus V(G_j)=V(C_j)$. Therefore, graph $G_{j-1}$ contains a path $P'\subseteq P$ of length less than $4d$ between a vertex $v\in V(C_j)$ and a terminal $t\in T^{(j-1)}\setminus \hat T_{C_j}$. This is impossible from Property (\ref{prop: separation}) of Procedure \proccut. Therefore, Invariant (\ref{invariant}) continues to hold at the end of iteration $j$.

Finally, we establish Invariant (\ref{invariant: separating clusters from rest}). Consider a pair $t,t'$ of terminals with $t\in \bigcup_{j'=1}^jT'_{C_{j'}}$ and $t'\in T^{(j)}$. If $t\in  \bigcup_{j'=1}^{j-1}T'_{C_{j'}}$, then, since Invariant (\ref{invariant: separating clusters from rest}) held at the beginning of the current iteration, we get that $\dist_G(t,t')\geq d$. Therefore we assume that $t\in T'_{C_j}$. In this case, $t\in V(C_j)$, and $t'\in T^{(j-1)}\setminus \hat T_{C_j}$ holds, so from Property (\ref{prop: separation}) of Procedure \proccut, $\dist_{G_{j-1}}(t,t')\geq 4d$. From the fact that Invariant (\ref{invariant}) held at the beginning of the iteration, we then get that $\dist_G(t,t')\geq 4d$. 

We conclude that at the end of iteration $j$, Properties (\ref{prop: terminal subset new})--(\ref{prop: separated clusters}), and Invariants (\ref{invariant}) and (\ref{invariant: separating clusters from rest}) continue to hold.

We terminate Procedure \procpart once Property (\ref{prop: cover all terminals}) holds. Since, in every iteration, the number of vertices in the current graph $G_j$ decreases, we are guaranteed that the algorithm indeed terminates.

\paragraph{Running time analysis.}

Recall that, from \Cref{lem: proccut}, the running time of Procedure \proccut is bounded by $O(|E(C)|\cdot n^{64/\Delta})$, where $C$ is the cluster that the procedure returns.
Therefore, for all $j$, the running time of iteration $j$ is at most $O(|E(C_j)|\cdot n^{64/\Delta})$. The total running time of the algorithm is then bounded by $\sum_{C_j\in \cset} O(|E(C_j)|\cdot n^{64/\Delta})$. Since the clusters in $\cset$ are disjoint, we get that the total running time of Procedure \procpart is bounded by $O(|E(G)|\cdot n^{64/\Delta})$.
\end{proof}

\subsubsection{Procedure \procsep.}
Lastly, we provide Procedure \procsep, whose input is similar to that of Procedure \procpart. The goal of the procedure is to either produce two large subsets $T_1,T_2$ of terminals that are sufficiently far from each other in $G$, or to compute a single terminal $t\in T$ with set $B_G(t,\Delta\cdot d)$ containing many terminals.

\begin{lemma}\label{lem: find separation of terminals}
	There is a deterministic algorithm, that we call \procsep, whose input is an $n$-vertex graph $G$, a set $T\subseteq V(G)$ of $k$ vertices called terminals, distance parameters $d$ and $\Delta$, and an additional parameter $0<\alpha\leq 1$. The algorithm either computes a terminal  $t\in T$ with $|B_G(t,\Delta\cdot d)\cap T|>\alpha k$, or it computes two subsets $T_1,T_2$ of terminals, with $|T_1|=|T_2|$, such that $|T_1|\geq k^{1-64/\Delta}\cdot \min\set{(1-\alpha),\frac 1 3}$, and for every pair $t\in T_1$, $t'\in T_2$ of terminals, $\dist_G(t,t')\geq d$. The running time of the algorithm is $O(|E(G)|\cdot n^{64/\Delta})$.
\end{lemma}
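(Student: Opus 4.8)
The plan is to reduce the problem to Procedure \procpart. I would first run \procpart (\Cref{lem: procpar}) on $G$, the terminal set $T$, and the given parameters $d,\Delta$, obtaining a family $\cset$ of pairwise‑disjoint clusters together with, for each $C\in\cset$, a center terminal $t_C$ and terminal sets $T'_C\subseteq V(C)\cap T$ and $\hat T_C\subseteq T$. The relevant guarantees are: $T'_C\subseteq\hat T_C$; $|\hat T_C|\le |T'_C|\cdot k^{64/\Delta}$; $\hat T_C\subseteq B_G(t_C,\Delta d)$; for every two \emph{distinct} clusters $C\ne C'$ and every $t\in T'_C,\ t'\in T'_{C'}$ we have $\dist_G(t,t')\ge d$; and $\bigcup_{C\in\cset}\hat T_C=T$. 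From the last two facts I record two consequences used repeatedly: the sets $\{T'_C\}_{C\in\cset}$ are disjoint (they lie in disjoint clusters), and, writing $S=\sum_{C\in\cset}|T'_C|$, we have $S\ge \frac{1}{k^{64/\Delta}}\sum_C|\hat T_C|\ge \frac{k}{k^{64/\Delta}}=k^{1-64/\Delta}$.

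Next I branch on whether some cluster is ``heavy''. If there is a cluster $C$ with $|\hat T_C|>\alpha k$, I output $t=t_C$: since $\hat T_C\subseteq T\cap B_G(t_C,\Delta d)$, this certifies $|B_G(t,\Delta d)\cap T|>\alpha k$. Otherwise every cluster satisfies $|\hat T_C|\le\alpha k$, and I split $\cset$ into two disjoint sub‑families $\cset_1,\cset_2$, set $T_i'=\bigcup_{C\in\cset_i}T'_C$, and take $T_1,T_2$ to be arbitrary subsets of $T_1',T_2'$ of common size $\min\{|T_1'|,|T_2'|\}$. Separation is automatic: any $t\in T_1$ and $t'\in T_2$ lie in clusters of $\cset_1$ and $\cset_2$, hence in distinct clusters, so $\dist_G(t,t')\ge d$. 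It remains to choose $\cset_1,\cset_2$ so that both $|T_i'|$ are large. Let $C^*$ be a cluster maximizing $|T'_C|$. If $|T'_{C^*}|\le S/2$, then all weights $|T'_C|$ are at most $S/2$, and a standard balancing argument gives $|T_1'|,|T_2'|\ge S/3\ge k^{1-64/\Delta}/3$: if some single cluster has weight $\ge S/3$ put it alone (the remaining clusters then total $\ge S/2\ge S/3$), and otherwise every cluster has weight $<S/3$, so adding clusters to $\cset_1$ one by one until its total first reaches $S/3$ overshoots by less than $S/3$, leaving $\cset_2$ with total $>S-\tfrac{2S}{3}=S/3$. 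If instead $|T'_{C^*}|>S/2$, I set $\cset_1=\{C^*\}$ and $\cset_2=\cset\setminus\{C^*\}$; then $|T_2'|=\sum_{C\ne C^*}|T'_C|$, and since $\hat T_{C^*}$ contains at most $\alpha k$ terminals while the remaining $\ge(1-\alpha)k$ terminals of $T$ are covered by $\bigcup_{C\ne C^*}\hat T_C$, we get $\sum_{C\ne C^*}|\hat T_C|\ge(1-\alpha)k$, hence $|T_2'|\ge(1-\alpha)k^{1-64/\Delta}$, while $|T_1'|=|T'_{C^*}|>|T_2'|$. In both cases $|T_1|=|T_2|=\min\{|T_1'|,|T_2'|\}\ge \min\{1-\alpha,\tfrac13\}\cdot k^{1-64/\Delta}$, as required. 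The running time is dominated by the one call to \procpart, which is $O(|E(G)|\cdot n^{64/\Delta})$; computing the weights, the greedy split, and the truncation add only $O(|E(G)|)$.

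I expect the only genuine subtlety to be the case split in the second branch: a single cluster may contain nearly all of $T$, in which case no balanced pair of far‑apart terminal sets exists, which is precisely why the target degrades to the factor $\min\{1-\alpha,\tfrac13\}$. Placing that heavy cluster alone on one side and exploiting the $\ge(1-\alpha)k$ terminals it fails to cover on the other side is what makes the two bounds meet. The remaining ingredients — disjointness of the $T'_C$, the cross‑cluster distance guarantee, and the elementary two‑way weight‑balancing claim — are routine, and the lower bound $S\ge k^{1-64/\Delta}$ is exactly where the factor $k^{-64/\Delta}$ in the statement comes from.
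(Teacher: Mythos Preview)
Your proof is correct and follows essentially the same approach as the paper: invoke \procpart, return a center $t_C$ if some $|\hat T_C|>\alpha k$, and otherwise split the clusters into two groups and use the cross-cluster distance guarantee (Property~\ref{prop: separated clusters}) together with the ratio $|T'_C|\ge |\hat T_C|/k^{64/\Delta}$. The only cosmetic difference is that the paper balances the weights $|\hat T_C|$ (via a separate greedy observation) and applies the $k^{-64/\Delta}$ factor at the end, whereas you balance the weights $|T'_C|$ directly and invoke $|\hat T_{C^*}|\le\alpha k$ only in the heavy-cluster case; both routes yield the same bound.
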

\begin{proof}
	We use the following simple observation.
	
	\begin{observation}\label{obs: greedy partition}
		There is a deterministic algorithm, that, given a collection $\set{k_1,k_2,\ldots, k_r}$ of non-negative integers with $\sum_{j=1}^rk_j=k$ and $\max_j\set{k_j}\leq \alpha k$, computes a partition $(J_1,J_2)$ of the set $J=\set{1,\ldots,r}$ of indices, such that $\sum_{j\in J_1}k_j,\sum_{j\in J_2}k_j\geq k\cdot \min\set{(1-\alpha),1/3}$. The running time of the algorithm is $O(r)$.
	\end{observation}

\begin{proof}
	Assume w.l.o.g. that $k_1=\max_j\set{k_j}$. 
	Assume first that $k_1 \geq k/3$. In this case, we let $J_1=\set{k_1}$ and $J_2=J\setminus J_1$.  Clearly, $\sum_{j\in J_2}k_j\geq \sum_{j=1}^rk_j-k_1\geq (1-\alpha)k$, while $\sum_{j\in J_1}k_j\geq k/3$. 
	
	Assume now that $k_1<k/3$. In this case, we start with $J_1=J_2=\emptyset$, and process the indices of $J$ one by one. When index $j$ is processed, we add it to $J_1$ if $\sum_{j'\in J_1}k_{j'}\leq \sum_{j'\in J_1}k_{j'}$ currently holds, and we add it to $J_2$ otherwise. It is easy to verify that, at the end of this algorithm, $ |\sum_{j'\in J_1}k_{j'}-\sum_{j'\in J_2}k_{j'}|\leq \max_{j'\in J}\set{k_{j'}}\leq k/3$. Therefore, $\sum_{j\in J_1}k_j,\sum_{j\in J_2}k_j\geq k/3$.
\end{proof}

We are now ready to describe the algorithm for the proof of \Cref{lem: find separation of terminals}. We start by applying Procedure \procpart from \Cref{lem: procpar} to graph $G$, the set $T$ of terminals, and parameters $d$ and $\Delta$. Let $\left (\cset=\set{C_1,\ldots,C_r},\set{t_{C_j}}_{j=1}^r,\set{\hat T'_{C_j}}_{j=1}^r,\set{\hat T_{C_j}}_{j=1}^r\right )$ be the outcome of the procedure. For $1\leq j\leq r$, denote $k_j=|\hat T_j|$. If there is a cluster $C_j\in \cset$ with $k_j> \alpha k$, then we return terminal $t_{C_j}$; from Property (\ref{prop: small dist between terminals new}), $\hat T_{C_j}\subseteq B_G(t_{C_j},\Delta\cdot d)$, and so $|B_G(t_C,\Delta\cdot d)\cap T|\geq |\hat T_{C_j}|=k_j> \alpha k$ must hold. 

Assume now that, for all $1\leq j\leq r$, $k_j\leq \alpha k$. Using the algorithm from \Cref{obs: greedy partition}, we compute a partition $(J_1,J_2)$ of the set $J=\set{1,\ldots,r}$ of indices, such that $\sum_{j\in J_1}k_j,\sum_{j\in J_2}k_j\geq k\cdot \min\set{(1-\alpha),1/3}$.

Denote $\hat T^1=\bigcup_{j\in J_1}\hat T_{C_j}$, and $\hat T^2=\bigcup_{j\in J_2}\hat T_{C_j}$. Then $|\hat T^1|,|\hat T^2|\geq k\cdot \min\set{(1-\alpha),1/3}$.
Lastly, we let $T_1=\bigcup_{j\in J_1} T'_{C_j}$ and $T_2=\bigcup_{j\in J_2} T'_{C_j}$. From Property (\ref{prop: separated clusters}) of \procpart, 
	for every pair $t\in T_1$, $t'\in T_2$ of terminals, $\dist_G(t,t')\geq d$. From Property (\ref{prop: few discarded terminals new}), $|T_1|\geq |\hat T^1|/k^{64/\Delta}$, and $|T_2|\geq |\hat T^2|/k^{64/\Delta}$. We conclude that $|T_1|,|T_2|\geq k^{1-64/\Delta}\cdot \min\set{(1-\alpha),\frac 1 3}$. We discard terminals from the larger of the sets $T_1,T_2$ as needed, until the cardinalities of both sets become equal.
	
The running time of the algorithm is dominated by the running time of the algorithm from \Cref{lem: procpar}, and is bounded by $O(|E(G)|\cdot n^{64/\Delta})$.
\end{proof}

\subsection{Basic Path Peeling.}
\label{subsec: path peeling}

In this subsection we present an algorithm, that we refer to as \procpathpeel. 
This is a simple greedy algorithm for connecting pre-specified pairs of subsets of vertices to each other with short paths. Similar algorithms were used numerous times before (see e.g. Lemma 6.2 in \cite{fast-vertex-sparsest}, as well as \cite{APSP-old, APSP-previous}, and Theorems 3.2 and 3.8 in \cite{detbalanced}).

The input to Procedure \procpathpeel consists of a graph $G$, collections $A_1,B_1,\ldots,A_k,B_k$ of subsets of its vertices, and parameters $d,\eta>0$. The output of the procedure is collections $\pset_1,\pset_2,\ldots,\pset_k$ of paths in graph $G$, for which the following properties hold:

\begin{properties}{P}
	\item for all $1\leq i\leq k$, every path in $\pset_i$ connects a vertex of $A_i$ to a vertex of $B_i$, and the endpoints of all paths in $\pset_i$ are distinct; \label{pp: path endpoints}
	
	\item every path in set $\pset=\bigcup_{i=1}^k\pset_i$ has length at most $d$, and the paths in $\pset$ cause congestion at most $\eta$; and \label{pp: length and congestion}
	
	\item let $E'$ be the set of all edges $e\in E(G)$, such that exactly $\eta$ paths of $\pset$ use $e$. For all $1\leq i\leq k$, let $A'_i\subseteq A_i$, $B'_i\subseteq B_i$ be the sets of vertices that do not serve as endpoints of the paths in $\pset_i$. Then for all $1\leq i\leq k$, $\dist_{G\setminus E'}(A'_i,B'_i)> d$. 
	\label{pp: distancing}
	\end{properties}

We note that we allow a path of $\pset_i$ to contain vertices of $A_i\cup B_i$, and also vertices from other sets $A_j\cup B_j$ as inner vertices.
The following simple lemma summarizes our algorithm for \procpathpeel.


\begin{lemma}\label{lem: path peel}
	There is a deterministic algorithm, called \procpathpeel, that, given a graph $G$, collections $A_1,B_1,\ldots,A_k,B_k$ of subsets of its vertices, and parameters $d,\eta>0$, outputs sets $\pset_1,\pset_2,\ldots,\pset_k$ of simple paths in $G$, for which properties (\ref{pp: path endpoints}) -- (\ref{pp: distancing}) hold. The running time of the algorithm is $O(m\eta+mdk\log n)$, where $m=|E(G)|$ and $n=V(G)$.
\end{lemma}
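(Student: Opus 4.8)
The plan is to run a straightforward greedy ball-growing/BFS procedure, processing the index pairs one at a time and, within each index $i$, repeatedly searching for short $A_i$-$B_i$ paths until no more can be found. I would maintain a "congestion counter" $n(e)$ for every edge $e\in E(G)$, initialized to $0$, together with a set $E'$ of \emph{saturated} edges, i.e. those with $n(e)=\eta$. First, for $i=1,\dots,k$ in turn, and while there still exist a vertex of $A_i$ and a vertex of $B_i$ at distance at most $d$ in $G\setminus E'$ whose pair has not yet been used as endpoints, I would run a BFS in $G\setminus E'$ from the (current) set $A_i$ of not-yet-matched vertices, truncated at depth $d$; if it reaches a not-yet-matched vertex of $B_i$, extract one shortest such path $P$, add it to $\pset_i$, remove its two endpoints from the pools $A_i,B_i$, increment $n(e)$ for every $e\in E(P)$, and move newly-saturated edges into $E'$. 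When no such path is found, move on to index $i+1$. Output the sets $\pset_1,\dots,\pset_k$.

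The three properties should fall out almost by construction. Property (\ref{pp: path endpoints}) holds because every path we add connects an unmatched $A_i$-vertex to an unmatched $B_i$-vertex, and we immediately retire both endpoints from the respective pools, so endpoints across $\pset_i$ are distinct. Property (\ref{pp: length and congestion}): each path is a shortest path of BFS-depth at most $d$, hence has length at most $d$; and congestion is at most $\eta$ because we never route through an edge of $E'$, and an edge enters $E'$ precisely when its counter hits $\eta$, so $n(e)\le\eta$ is maintained as an invariant and the final count on each edge equals the number of paths of $\pset$ using it. Property (\ref{pp: distancing}) is the termination condition itself: when we finish processing index $i$, there is no path of length at most $d$ in $G\setminus E'$ between the remaining unmatched sets $A'_i,B'_i$ — but I need to argue this is still true at the \emph{end} of the whole algorithm, i.e. after later indices $j>i$ have possibly added more edges to $E'$. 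That is the one point requiring a sentence of care: $E'$ only grows over the course of the algorithm, so $G\setminus E'$ only loses edges, hence $\dist_{G\setminus E'}(A'_i,B'_i)$ only increases; combined with the fact that $A'_i,B'_i$ are frozen once we leave index $i$, the inequality $\dist_{G\setminus E'}(A'_i,B'_i)>d$ is preserved. This monotonicity observation is really the only non-mechanical step.

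For the running time, I would charge the work as follows. Each successful BFS that finds a path costs $O(md)$ in the worst case (exploring up to $d$ layers, each touching $O(m)$ edges), but one can do better by observing the search can stop as soon as a target is hit; conservatively $O(m d)$ suffices, and since every successful search produces a path with at least one fresh endpoint in $A_i$, there are at most $|A_i|$ of them for index $i$, giving $\sum_i |A_i|\cdot O(md)$. To match the claimed $O(m\eta + mdk\log n)$ I would instead run, for each index $i$, a single BFS-forest computation to find \emph{all} currently available short paths greedily in one sweep — i.e. a Dijkstra/BFS-style layered exploration from $A_i$ that, upon reaching $B_i$-vertices, peels off vertex-disjoint-at-the-endpoints paths — so that the per-index cost is $O(md\log n)$ (the $\log n$ covering heap/bookkeeping overhead), yielding $O(mdk\log n)$ total for the searches; and the total cost of incrementing edge counters and maintaining $E'$ over the whole run is $O(m\eta)$, since each edge can be incremented at most $\eta$ times before it is saturated and removed. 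The main obstacle is purely this accounting: making sure the greedy "find all short paths for index $i$" subroutine both (a) terminates with Property (\ref{pp: distancing}) genuinely satisfied for $A'_i,B'_i$, and (b) runs within $O(md\log n)$ rather than $O(md\cdot|A_i|)$; everything else is bookkeeping. Since essentially identical statements appear in \cite{fast-vertex-sparsest,APSP-old,APSP-previous,detbalanced}, I would, if pressed for space, simply cite those and give the short monotonicity argument for (\ref{pp: distancing}) in full.
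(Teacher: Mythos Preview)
Your high-level approach (greedy peeling with per-edge congestion counters, processing indices $1,\ldots,k$ in turn) and your correctness arguments for Properties (\ref{pp: path endpoints})--(\ref{pp: distancing}) are essentially those of the paper; in particular the monotonicity observation for (\ref{pp: distancing}) is exactly right.

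The gap is in the running-time argument. Your ``single BFS-forest computation to find all currently available short paths greedily in one sweep'' does not work as stated: within a single index $i$, each time you extract a path you may saturate edges, which must then be removed from the graph before the next path is sought. This invalidates the BFS forest, and a naive rebuild costs $\Theta(m)$ per extracted path, giving $O(m\cdot |A_i|)$ per index rather than $O(md\log n)$. The vague reference to ``heap/bookkeeping overhead'' does not fix this; the problem is not priority-queue cost but the fact that the graph is changing under you. The paper's fix is to recognize each phase as a \emph{decremental} single-source shortest-path instance and to use the Even--Shiloach tree: add a super-source $s_i$ adjacent to all of $A_i$ and a super-sink $t_i$ adjacent to all of $B_i$, maintain $\EST(G_i,s_i,d+2)$, and repeatedly read off the current $s_i$--$t_i$ tree path, then delete from $G_i$ the two endpoint edges and any newly saturated edges. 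The total update time of the \EST over all deletions in phase $i$ is $O(md\log n)$, yielding the $O(mdk\log n)$ term; your $O(m\eta)$ term for counter increments is correct as stated. Naming the \EST (or an equivalent decremental SSSP structure) is the one missing ingredient.
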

\begin{proof}
	We use a simple greedy algorithm combined with the \EST data structure. The algorithm consists of $k$ phases, where for all $1\leq i\leq k$, we construct the set $\pset_i$ of paths in phase $i$. At the beginning of the algorithm, we set, for all $1\leq i\leq k$, $\pset_i=\emptyset$. Throughout the algorithm, we maintain the set $\pset=\bigcup_{i=1}^k\pset_i$ of paths (that is set to $\emptyset$ at the beginning), and, for every edge $e\in E(G)$, we maintain a counter $n(e)$, whose value is equal to the number of paths in $\pset$ containing $e$. At the beginning of the algorithm, we initialize $n(e)=0$ for every edge $e$.
	
	We now describe the execution of the $i$th phase, for some $1\leq i\leq k$. We start by constructing a graph $G_i$. Initially, we let $G_i=G$. We then delete from $G_i$ every edge $e\in E(G)$ with $n(e)=\eta$. Additionally, we add a source vertex $s_i$ to $G_i$, that connects with an edge to every vertex of $A_i$, and a destination vertex $t_i$, that connects with an edge to every vertex of $B_i$. We also initialize an \EST data structure $\tset_i$ in graph $G_i$, with source vertex $s_i$, and distance bound $d+2$.
	
	We then perform iterations, as long as the distance between $s_i$ and $t_i$ in $\tset_i$ is bounded by $d+2$. In every iteration, we use the \EST data structure $\tset_i$ to compute the shortest $s_i$-$t_i$ path $P$ in the current graph $G_i$, whose length must be at most $d+2$. We delete the first and the last edges of $P$, obtaining a path $P'$ of length at most $d$, that connects some vertex $x\in A_i$ to some vertex $y\in B_i$. We add path $P'$ to $\pset_i$, and increase the counter $n(e)$ for every edge $e\in E(P')$. We then update the current graph $G_i$, by deleting the edges $(s_i,x),(y,t_i)$ from it, as well as every edge $e$ whose counter $n(e)$ has reached $\eta$. The \EST data structure $\tset_i$ is also updated with these deletions. Once the \EST data structure $\tset_i$ reports that the distance between $s_i$ and $t_i$ in the current graph $G_i$ is greater than $d+2$, the phase terminates.
	This completes the description of a phase, and of the algorithm.
	
	From the description of the algorithm, it is immediate to verify that properties (\ref{pp: path endpoints}) and (\ref{pp: length and congestion}) hold for the resulting sets $\pset_1,\ldots,\pset_k$ of paths. Property (\ref{pp: distancing}) is also easy to establish. Indeed, assume for contradiction that, for some index $1\leq i\leq k$, there is a path $P'$ of length at most $d$, connecting a vertex of $ A'_i$ to a vertex of $ B'_i$ in graph $G\setminus E'$. Then this path must have existed in graph $G_i$ at the end of the $i$th phase, and so the phase should not have terminated when it did.
	
	Lastly, we bound the running time of the algorithm.  Let $m=|E(G)|$.
	The time that is required to maintain a single \EST $\tset_i$ is bounded by $O(md\log n)$. Additionally, whenever a path $P$ is added to set $\pset$, the algorithm spends $O(|E(P)|)$ time on processing this path, and on increasing the counters $n(e)$ of edges $e\in E(G)$. Since the counter of an edge may be increased at most $\eta$ times, the total running time of the algorithm is bounded by $O(m\eta+mkd\log n)$.
\end{proof}
\section{The Distanced Matching Game}
\label{sec: distanced matching game}

The goal of the \DMG is to construct a well-connected graph, that we define next.

\begin{definition}[Well-connected graph]
	Let $G=(V,E)$ be a graph, let $S(G)\subseteq V$ be a subset of its vertices called \emph{supported vertices}, and let $\eta,D>0$ be parameters. We say that graph $G$ is \emph{$(\eta,D)$-well-connected  with respect to the set $S(G)$ of supported vertices} if, for every pair $A,B\subseteq S(G)$ of disjoint equal-cardinality subsets of supported vertices, there is a collection $\pset(A,B)$ of paths in graph $G$, routing every vertex of $A$ to a distinct vertex of $B$ (that is, $\pset(A,B)$ is a one-to-one routing of $A$ to $B$), such that the paths in $\pset(A,B)$ cause congestion at most $\eta$, and the length of every path is at most $D$.
\end{definition}

Unlike the \CMG, whose goal is to construct an expander graph, the goal of the \DMG is to construct a graph that is $(\eta,D)$-well-connected with respect to a set $S(G)$ of supported vertices; typically, for an $n$-vertex graph $G$, we will require that $|S(G)|\geq n-n^{1-\Theta(\epsilon)}$, $\eta\leq n^{O(\epsilon)}$, and $D=2^{O(1/\eps)}$, for a given parameter $0<\eps<1$.

The main component of the \DMG is a \emph{distancing}, that is defined next. Distancings play a role similar to that of cuts in the \CMG.

\begin{definition}[Distancing]
	Let $G$ be an $n$-vertex graph, and let $0<\delta<1,d>0$ be parameters. A $(\delta,d)$-distancing for $G$ is a triple $(A,B,E')$, where $A,B$ are disjoint subsets of $V(G)$ of cardinality at least $n^{1-\delta}$ each, with $|A|=|B|$, and $E'\subseteq E(G)$ is a subset of edges of cardinality at most $|A|/16$. We require that $\dist_{G\setminus E'}(A,B)\geq d$.
\end{definition}

While the notion of distancing, to the best of our knowledge, was never formally defined before, it is a well-known and widely used fact that one can efficiently obtain a sparse cut in a graph from a distancing. The following lemma, whose analogues have been widely used before, summarizes such an algorithm. The proof uses standard ball-growing technique, and appears in Section \ref{sec:ball growning} of Appendix.

\begin{lemma}\label{lem: distancing to sparse cut}
	There is a deterministic algorithm, whose input consists of a connected graph $G$ with $|V(G)|=n$ and $|E(G)|=m$, a parameter $0<\phi<1/2$, and a $(\delta,d)$-distancing $(X,Y,E')$ in $G$, where  $0<\delta<1$ is any parameter, $d\geq (32\log m)/\phi$, and $|E'|\leq \phi |X|/4$. The algorithm computes a cut $(X',Y')$ in graph $G$, with $X\subseteq X'$ and $Y\subseteq Y'$, such that $|E_G(X',Y')|\leq \phi\cdot \min\set{|E_G(X')|,|E_G(Y')|}$. 
	The running time of the algorithm is bounded by $O(n+\min\set{|E_G(X')|,|E_G(Y')|})$.
\end{lemma}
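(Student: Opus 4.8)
The plan is to run a ball-growing argument around the set $X$ inside the graph $G\setminus E'$, and then account for the few deleted edges of $E'$ at the end. Concretely, for $i=0,1,2,\dots$ let $S_i=B_{G\setminus E'}(X,i)$ be the ball of radius $i$ around $X$ in $G\setminus E'$, and let $E_i=E_{G\setminus E'}(S_i,\overline{S_i})$ be the set of edges leaving $S_i$ (in $G\setminus E'$). Since $\dist_{G\setminus E'}(X,Y)\ge d$, we have $Y\cap S_i=\emptyset$ for all $i<d$, so every $S_i$ with $i<d$ is a genuine cut with $X\subseteq S_i$ and $Y\subseteq \overline{S_i}$. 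The standard observation is that $|E(S_i)|$, the number of edges of $G\setminus E'$ inside $S_i$, is non-decreasing in $i$, and moreover $|E(S_{i+1})|\ge |E(S_i)|+|E_i|$ because every edge in $E_i$ has at least one endpoint in $S_i$ and the other endpoint in $S_{i+1}$, hence lies inside $S_{i+1}$. If $|E_i|\ge \phi\cdot|E(S_i)|$ held for all $i$ in a range of length $\Theta((\log m)/\phi)$, then $|E(S_i)|$ would grow by a factor $(1+\phi)$ at each step and exceed $m$ after $O((\log m)/\phi)$ steps, a contradiction. So there must be some index $i^*$ with $i^*\le (32\log m)/\phi\le d$ (using the hypothesis $d\ge (32\log m)/\phi$; here I would pick constants so that the geometric-growth count fits comfortably below $d$, and I'd also need the base case $|E(S_0)|\ge 1$, which holds since $G$ is connected and $|X|\ge 1$) such that $|E_{i^*}|<\phi\cdot |E(S_{i^*})|$. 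A symmetric argument grows a ball around $Y$; since the balls around $X$ and $Y$ stay disjoint for radii $<d/2$, one of the two sides also satisfies $|E(\overline{S_{i^*}})|$ being at least the "smaller" side — more carefully, I would run the ball-growing from whichever of $X,Y$ ends up on the smaller-volume side, or equivalently stop the growth once $|E(S_i)|$ would exceed $m/2$, which guarantees $\min\{|E_G(X')|,|E_G(Y')|\}$ is realized on the ball side.

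Having found such $i^*$, set $X'=S_{i^*}$ and $Y'=V(G)\setminus S_{i^*}$. Then $X\subseteq X'$, $Y\subseteq Y'$ by the distance bound, and in $G\setminus E'$ we have $|E_{G\setminus E'}(X',Y')|=|E_{i^*}|<\phi\cdot|E(S_{i^*})|\le \phi\cdot |E_{G}(X')|$. The only gap between this and the claimed bound in $G$ is that $E_G(X',Y')$ may also contain edges of $E'$. By hypothesis $|E'|\le \phi|X|/4\le \phi|X'|/4$, and since $|E_G(X')|\ge |X'|-1$ (connectivity of the induced region, or more simply $|E_G(X')|\ge |X'|/2$ once $|X'|\ge 2$, and the $|X'|=1$ case is trivial to handle separately), we get $|E'|\le \phi\cdot|E_G(X')|/2$ up to adjusting the constant in the hypothesis; adding this to the $G\setminus E'$ bound yields $|E_G(X',Y')|\le \phi\cdot\min\{|E_G(X')|,|E_G(Y')|\}$ after choosing which side to grow from so that the minimum is on the $X'$ side, or running the same accounting on the $Y'$ side. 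The running-time claim follows because the BFS only needs to explore the ball $S_{i^*}$ together with its boundary edges, which is $O(|E_G(X')|)$ work plus $O(n)$ overhead for initializing arrays, and we never touch the larger side; choosing to grow the smaller side (by alternately expanding around $X$ and around $Y$ and stopping as soon as one side's internal edge count would surpass $m/2$) keeps this bounded by $O(n+\min\{|E_G(X')|,|E_G(Y')|\})$.

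The main obstacle I anticipate is the bookkeeping at the boundary between the two cases (whether the small side is the $X$-ball or the $Y$-ball) together with making the constants line up: the hypothesis gives $|E'|\le \phi|X|/4$ and $d\ge 32(\log m)/\phi$, and I need the geometric growth to terminate strictly before radius $d$ while also leaving enough slack so that the $|E'|$ edges can be absorbed into the sparsity bound without changing the constant $\phi$ on the right-hand side. This is a routine but slightly delicate constant-chasing exercise; the structural content — iterated ball growing plus the observation that failing to find a sparse boundary forces exponential volume growth — is entirely standard, so I expect no conceptual difficulty, only careful constant management and a clean treatment of the degenerate cases $|X|=1$ or a ball that already exhausts half the graph.
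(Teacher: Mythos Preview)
Your approach is essentially the paper's: ball-growing in $H=G\setminus E'$, run in parallel from $X$ and from $Y$ so that the side that finishes first is the smaller-volume side, then absorb the deleted edges $E'$ at the end. Two specifics in your write-up need correction.

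First, your acceptability threshold is $\phi$, which leaves no slack. With that choice you get $|E_{H}(X',Y')|<\phi\,|E_H(X')|\le\phi\,|E_G(X')|$, and then adding $|E'|\le(\phi/2)|E_G(X')|$ yields $(3/2)\phi\,|E_G(X')|$, overshooting the target. The paper's fix is exactly what you anticipate in your last paragraph but do not carry out: declare a layer acceptable when $|\delta_H(L_i)|<(\phi/2)|E_H(L_i)|$. The hypothesis $d\ge 32(\log m)/\phi$ has enough room for the geometric growth to still terminate before radius $d/2$ with this halved threshold, and now the $|E'|$ contribution fits into the remaining $\phi/2$ of slack.

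Second, your justification that $|E_G(X')|\ge|X'|-1$ via ``connectivity of the induced region'' is not valid: the ball $B_H(X,i)$ around the \emph{set} $X$ is a union of balls around the vertices of $X$ and need not be connected in $H$ (or in $G$). The paper instead bounds $|E'|$ using that $|X|=|Y|$ together with connectivity of $G$, relating $|X|$ directly to $\min\{|E_G(X')|,|E_G(Y')|\}$ rather than going through $|X'|$.
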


We note that, if the maximum vertex degree in $G$ is bounded by $\Delta$, then $|E_G(X')|\leq \Delta\cdot |X'|$, and similarly $|E_G(Y')|\leq \Delta\cdot |Y'|$. Therefore, the algorithm guarantees that $|E_G(X',Y')|\leq \phi\cdot \Delta\cdot \min\set{|X'|,|Y'|}$.


A \DMG receives as input an integral parameter $n$, and two other parameters $0<\delta<1$ and $d\geq 2^{4/\delta}$. The game is played between a \emph{distancing player} and a \emph{matching player}, in iterations. Over the course of the game, a graph $G$ is constructed. Initially, graph $G$ contains a set $V$ of $n$ vertices and no edges. In every iteration, some edges are added to $G$.

The $i$th iteration is executed as follows. First, the distancing player  either computes a $(\delta,d)$-distancing $(A_i,B_i,E'_i)$ in the current graph $G$, or returns ``END''. If the distancing player returned ``END'', then the game terminates, and we say that the game lasted for $(i-1)$ iterations. Otherwise, the matching player computes a (possibly partial) matching $M_i$ between vertices of $A_i$ and vertices of $B_i$, of cardinality at least $|A_i|/8$. We require that $M_i$ does not contain pairs $(u,v)$ of vertices for which edge $(u,v)\in E'$. We note that $M_i$ is not a subset of edges of $G$; it is just a collection of pairs of vertices from $A_i\times B_i$, with every vertex of $A_i\cup B_i$ appearing in at most one pair in $M_i$. We add the edges of $M_i$ to graph $G$, completing iteration $i$, and proceed to the next iteration.

 We note that, once the current graph $G$ contains no $(\delta,d)$-distancing, the game must terminate. From the above description, graph $G$ remains a simple graph (that is, we never add parallel edges to it).

The main technical result of this section is the proof of \Cref{thm: intro distancing-matching game - number of iterations}, that bounds the number of iterations in the \DMG.  We restate the theorem here for convenience.

\begin{theorem}[Restatement of \Cref{thm: intro distancing-matching game - number of iterations}]\label{thm: distancing-matching game - number of iterations}
	Consider a \DMG with parameters $n>0,0<\delta<1/4$ and $d$, such that $d\geq 2^{4/\delta}$ and $n^{\delta}\geq \frac{2^{14}\log n}{\delta^2}$.
Then the number of iterations in the game is at most $n^{8\delta}$.
\end{theorem}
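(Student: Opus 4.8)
The plan is to follow the roadmap that the introduction already sketches: track, in hindsight, the evolution of the graph $H$ after the edges $E' = \bigcup_i E'_i$ are removed, maintain a hierarchical clustering of the vertices, and argue that every batch of roughly $n^{4\delta}$ iterations forces many vertices to be ``promoted'' to a higher level. Since each vertex can be promoted at most $O(1/\delta)$ times (there are only $O(1/\delta)$ levels, because level $j$ requires cluster size exceeding $n^{\delta j}$ and $n^{\delta(1/\delta)} = n$), a counting argument over all $n$ vertices then caps the total number of iterations.

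\textbf{Setup of the clustering.} For each iteration $i$ let $H_i$ be $H$ at the start of iteration $i$, let $E' = \bigcup_i E'_i$ (computed after the game ends), and let $H'_i = H_i \setminus E'$. I would define a partition $\cset^i$ of $V(H)$ into clusters, maintained monotonically so that $\cset^{i+1}$ is a coarsening of $\cset^i$ (clusters only merge, never split). A cluster $C$ is at \emph{level $j$} if $n^{\delta j} < |V(C)| \le n^{\delta(j+1)}$, and I want to maintain the invariant that every level-$j$ cluster has diameter (in $H'_i$, or in the graph spanned by cluster edges) at most $2^{cj}$ for an absolute constant $c$; the hypothesis $d \ge 2^{4/\delta}$ is exactly what makes $d$ exceed $2^{c j}$ for all relevant $j \le 1/\delta$, so distances of $d$ in $H \setminus E'$ cannot be ``bridged'' within a single cluster. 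The initialization is $\cset^1 = $ all singletons (level $0$), and the update rule upon adding matching $M_i$ is: add the matching edges, then repeatedly merge any two clusters that become close enough (within the diameter budget of the merged level), being careful that merges are triggered only when the size thresholds are respected.

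\textbf{The promotion argument.} This is where the real work lies, and I expect it to be the main obstacle. The claim to prove is: if $t := \lceil n^{4\delta}\rceil$ consecutive iterations occur without the game ending, then at least, say, $n^{1-\delta}/\mathrm{poly}$ vertices get promoted during those iterations. The idea: suppose not — then most vertices stay at their level across the whole block, so the clustering is essentially ``frozen.'' In each iteration $i$ of the block, the distancing player hands us $(A_i, B_i, E'_i)$ with $|A_i| \ge n^{1-\delta}$ and $\mathrm{dist}_{H_i \setminus E'_i}(A_i, B_i) \ge d$; the matching player returns $M_i \subseteq A_i \times B_i$ of size $\ge |A_i|/8$ avoiding pairs that are $E'$-edges. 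Since $E'_i \subseteq E'$, in $H'_i = H_i \setminus E'$ the sets $A_i, B_i$ are also at distance $\ge d$, so no cluster of $\cset^i$ of diameter $< d$ can contain both an endpoint in $A_i$ and the matched endpoint in $B_i$; hence every matched pair of $M_i$ currently lies in two \emph{different} clusters. Adding the $\ge n^{1-\delta}/8$ edges of $M_i$ therefore creates that many inter-cluster edges. I would argue a ``potential'' based on (number of clusters, weighted by level) or on $\sum_v (\text{level of } v)$: because $|A_i| \ge n^{1-\delta}$, in each iteration a $1/8$-fraction of a set of size $\ge n^{1-\delta}$ gets new cross-cluster edges; over $t = n^{4\delta}$ iterations, by pigeonhole many vertices accumulate enough cross-cluster edges to force their clusters to merge past a size threshold — using that the number of clusters at any fixed level is at most $n^{1-\delta j}$, and the condition $n^\delta \ge 2^{14}\log n / \delta^2$ to absorb logarithmic and constant losses from the ball-growing/diameter bookkeeping. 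Once a vertex's cluster grows past $n^{\delta(j+1)}$ it is promoted. Carefully, the bound should give: in $n^{4\delta}$ iterations, either the game ended, or $\Omega(n^{1-\delta})$ promotions happened (one may need a cleaner normalization — e.g.\ count ``promotion events'' and note $\ge n^{3\delta}$ of them per block after accounting for the $n^{1-\delta}$ vertices touched per iteration versus the $n$ slots).

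\textbf{Finishing.} With ``each block of $t \le n^{4\delta}$ iterations $\Rightarrow$ at least one promotion per, say, a suitable quota of vertices'' and ``total promotions $\le n \cdot (1/\delta)$'' (each of $n$ vertices promoted $\le 1/\delta$ times), the number of blocks is at most $n \cdot (1/\delta) / (\text{promotions per block})$, and multiplying by the block length $n^{4\delta}$ gives the total iteration count $\le n^{8\delta}$, provided the per-block promotion quota is at least $\Omega(n^{1-4\delta}/\delta)$ or so; the slack between $n^{4\delta}$ and $n^{8\delta}$ and the hypothesis $\delta < 1/4$ (so $8\delta < 2$) is what makes the arithmetic close. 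I would end by assembling these pieces: state and prove the diameter invariant of the clustering as one lemma, the ``block $\Rightarrow$ many promotions'' statement as the key lemma, and then the short counting argument to conclude $n^{8\delta}$. The delicate point throughout is choosing the cluster-merge rule so that (a) diameters stay bounded by $2^{O(j)} < d$, (b) merges are forced often enough by the $\ge |A_i|/8$ new matching edges, and (c) sizes respect the level thresholds — balancing these three is the crux.
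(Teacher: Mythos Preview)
Your high-level framework matches the paper's: a hierarchical clustering with level-$j$ clusters of size in $(n^{\delta j}, n^{\delta(j+1)}]$ and diameter $\le 4^j$, a block structure of $\lceil n^{4\delta}\rceil$ iterations, and a promotion-counting argument. But the proposal leaves out the two technical ideas that make the promotion lemma go through, and the ``pigeonhole'' sketch you give would not close on its own.

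First, a preprocessing step you omit: the paper buckets the iterations by $|M_i|$ into $O(\delta\log n)$ geometric classes, restricts to the densest class, and then further distinguishes \emph{good} iterations (where $|M_i\setminus E^*|\ge |M_i|/16$, with $E^*=\bigcup_i E'_i$) from bad ones, showing at least a $1/8$ fraction are good. Phases are defined to contain exactly $\lceil n^{4\delta}\rceil$ \emph{good} iterations. Without this normalization you cannot control how many matching edges survive the deletion of $E^*$, and the promotion accounting stalls.

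Second, and more importantly, your promotion argument (``over $n^{4\delta}$ iterations, by pigeonhole many vertices accumulate enough cross-cluster edges to force merges'') is where the real content is missing. Many inter-cluster edges do not by themselves force level-ups: all the edges of a phase could land between the same two large clusters, or could repeatedly hit the same small set of clusters without any single cluster exceeding the next size threshold. The paper gets around this with two ideas you do not have. It first classifies each good iteration by the dominant \emph{pair of levels} $(s,s')$ of the endpoints of its surviving matching edges, and then restricts to the dominant class of iterations (losing only $\operatorname{poly}(1/\delta)$ factors). It then builds a \emph{contracted} multigraph $J$ whose supernodes are the level-$s$ and level-$s'$ clusters and whose meta-edges record, per iteration, which pairs of clusters got connected; the crucial observation (the analogue of your missing merge rule) is that once an edge of $M'_i$ joins clusters $C$ and $C'$, the diameter bound forces $V(C)\cup V(C')$ to lie on a single side of every subsequent distancing, so \emph{no later iteration can add another meta-edge between the same pair} and $J$ is simple. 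Now a weight-vs-degree argument in $J$ shows that some level-$s'$ supernode has at least $n^{\delta}$ distinct neighbors, and merging it with those neighbors promotes all of their vertices; iterating yields $\ge n^{1-2\delta}$ promotions per phase.

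Finally, two smaller points: the paper updates the clustering only at phase boundaries (not after every matching), which is what keeps the diameter invariant clean; and the per-phase promotion count is $n^{1-2\delta}$, not $\Omega(n^{1-\delta})$, giving $\le 2n^{2\delta}/\delta$ phases and (after undoing the bucketing and good/bad losses) the bound $z\le n^{8\delta}$.
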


We note that we do not currently know whether the bounds in this theorem are tight, and in particular whether the requirement that $d\geq 2^{4/\delta}$ is necessary. It would be interesting to establish whether a similar theorem can be proved for values of $d$ that have a lower dependence on $1/\delta$. 

We now turn to prove \Cref{thm: distancing-matching game - number of iterations}. We assume that the paramters $n,\delta$ and $d$ are fixed. We let $V$ be a set of $n$ vertices, over which the game is played. 

Consider a \DMG that lasts for $z$ iterations. We can summarize the game via a \emph{transcript} $\transcript=\left ((A_1,B_1,E_1'),M_1,\ldots,(A_z,B_z,E'_z),M_z\right )$, where for $1\leq i\leq z$, $(A_i,B_i,E_i')$ is the distancing computed by the distancing player, and $M_i$ is the matching returned by the matching player in iteration $i$. For all $1\leq i\leq z$, we denote by $G_i$ the graph obtained after $i$ iterations of the game, so $V(G_i)=V$ and $E(G_i)=\bigcup_{i'=1}^iM_{i'}$. We also let $G_0$ be the graph with $V(G_0)=V$ and $E(G_0)=\emptyset$.

Consider now any subset $I\subseteq \set{1,\ldots,z}$ of indices, and assume that $I=\set{i_1,i_2,\ldots,i_q}$ with $i_1<i_2<\cdots<i_q$. Consider now the following sequence, that, intuitively corresponds to only executing the iterations of the \DMG whose indices lie in $I$: 

$$\transcript'=\left ((A_{i_1},B_{i_1},E_{i_1}''),M_{i_1},\ldots,(A_{i_q},B_{i_q},E''_{i_q}),M_{i_q}\right).$$

Here for all $1\leq j\leq q$, $E''_{i_j}$ is defined to be $E'_{i_j}\cap \left(\bigcup_{j'=1}^{j-1}M_{i_{j'}}\right )$.

We claim that $\transcript'$ is a valid transcript of a \DMG. In order to show this, for all $1\leq j\leq q$, let $G'_j$ be the graph obtained after $j$ iterations of the game, that is, $V(G'_j)=V$, and $E(G'_j)=\bigcup_{j'=1}^jM_{i_{j'}}$. We also let $G_0'$ be the graph containing the set $V$ of vertices and no edges. It is enough to show that, for all $1\leq j\leq q$, $(A_{i_j},B_{i_j},E_{i_j}'')$ is a valid $(\delta,d)$-distancing in graph $G'_{j-1}$. We now prove that this is indeed the case.

Since $(A_{i_j},B_{i_j},E_{i_j}')$ is a valid $(\delta,d)$-distancing in the graph $G_{i_j-1}$ obtained after $(i_j-1)$ iterations of the original \DMG, it is enough to show that there is no path of length less than $d$ connecting a vertex of $A_{i_j}$ to a vertex of $B_{i_j}$ in graph $G'_{i_{j-1}}\setminus E''_{i_j}$. Assume for contradiction that such a path  $P$  exists. Recall that $G'_{i_{j-1}}\subseteq G_{i_j-1}$, and that every edge $e\in E'_{i_j}\cap E(G'_{i_{j-1}})$ lies in $E''_{i_j}$. Therefore, path $P$ also lies in graph $G_{i_j-1}\setminus E_{i_j}'$, contradicting the fact that $(A_{i_j},B_{i_j},E_{i_j}')$ is a valid $(\delta,d)$-distancing in  graph $G_{i_j-1}$. 
We conclude that $(A_{i_j},B_{i_j},E_{i_j}'')$ is a valid $(\delta,d)$-distancing in graph $G'_{j-1}$, and $\transcript'$ is a valid transcript of a \DMG.

To summarize, we can select a subset of iterations from the transcript of the \DMG, and obtain a valid transcript of a \DMG, induced by these iterations. We say that the \DMG associated with transcript $\transcript'$ is \emph{defined by the set $I$ of indices}.

For the sake of the proof of \Cref{thm: distancing-matching game - number of iterations}, it would be convenient for us to assume that the cardinalities of the matchings $M_i$ returned in every iteration of the \DMG are roughly the same. In order to do so, we partition the set $I^*=\set{1,\ldots,z}$ of indices into at most $r=\ceil{16\delta{\log n}}$ subsets, as follows. Recall that for all $1\leq i\leq z$, we are guaranteed that $|M_i|\geq |A_i|/8\geq n^{1-\delta}/8$, and clearly $|M_i|\leq n$ must hold. For all $1\leq j\leq r$, we let $I_j\subseteq I^*$ contain all indices $i$, for which $\frac{n}{2^{j}}< |M_i|\leq \frac{n}{2^{j-1}}$. Clearly, there must be an index $j$, such that $|I_j|\geq \frac{z}{r}=\frac{z}{\ceil{16\delta\log n}}$. From now on we will focus on the \DMG that is defined by the set $I_j$ of indices, and we will bound the number of iterations in this game, that we denote by $z'\geq \frac{z}{\ceil{16\delta\log n}}$. For simplicity of notation, for all $1\leq i\leq z'$, we denote the distancing associated with the $i$th iteration of the game by $(A_i,B_i,E'_i)$ and the matching associated with iteration $i$ by $M_i$. As before, we denote by $G_0$ the graph whose vertex set is $V$ and edge set is empty, and for $1\leq i\leq z'$, we let $G_i$ be the graph obtained after $i$ iterations of the game, that is, $V(G_i)=V$ and $E(G_i)=M_1\cup\cdots\cup M_i$.
We also denote $I=\set{1,\ldots,z'}$ and $G=G_{z'}$.

Let $E^*=\bigcup_{i=1}^{z'}E'_i$, and for all $1\leq i\leq z'$, let $M'_i=M_i\setminus E^*$. We also denote $H_0=G_0$, and for all $1\leq i\leq z'$,  we let $H_i$ be a graph whose vertex set is $V$, and edge set is $E(G_i)\setminus E^*=\bigcup_{i'=1}^iM'_{i'}$. We also denote $H=H_{z'}$. The following observation is immediate from the definitions.

\begin{observation}\label{obs: distancing}
For all $1\leq i\leq z'$, $(A_i,B_i,\emptyset)$ is a $(\delta,d)$-distancing in graph $H_i$; in other words, there is no path of length less than $d$ connecting $A_i$ to $B_i$ in $H_i$.
\end{observation}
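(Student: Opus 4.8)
The plan is to realize $H_i$ as a subgraph of $G_{i-1}\setminus E'_i$ and then read the conclusion straight off the fact that $(A_i,B_i,E'_i)$ is a $(\delta,d)$-distancing in $G_{i-1}$. So the first thing I would establish is the inclusion $E(H_i)\subseteq E(G_{i-1})\setminus E'_i$. Recall that $E(H_i)$ is the edge set present at the start of iteration $i$ after the edges of $E^*$ are removed, i.e.\ $E(H_i)=\bigcup_{i'<i}M'_{i'}$ with $M'_{i'}=M_{i'}\setminus E^*$. Two elementary facts then combine: (a) $\bigcup_{i'<i}M_{i'}=E(G_{i-1})$, immediately from the definition of $G_{i-1}$; and (b) $E'_i\subseteq E^*=\bigcup_{i''}E'_{i''}$, so each $M'_{i'}$, having been stripped of all of $E^*$, is disjoint from $E'_i$, whence $M'_{i'}\subseteq M_{i'}\setminus E'_i$. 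Together these give
\[
E(H_i)=\bigcup_{i'<i}M'_{i'}\subseteq\bigcup_{i'<i}\bigl(M_{i'}\setminus E'_i\bigr)=\Bigl(\bigcup_{i'<i}M_{i'}\Bigr)\setminus E'_i=E(G_{i-1})\setminus E'_i,
\]
and since $V(H_i)=V(G_{i-1})=V$, the graph $H_i$ is a subgraph of $G_{i-1}\setminus E'_i$.

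Next I would invoke the distancing property in $G_{i-1}$. By construction $(A_i,B_i,E'_i)$ is a valid $(\delta,d)$-distancing in $G_{i-1}$, so by definition $\dist_{G_{i-1}\setminus E'_i}(A_i,B_i)\ge d$; in particular, no path of length less than $d$ in $G_{i-1}\setminus E'_i$ joins a vertex of $A_i$ to a vertex of $B_i$. Any $A_i$--$B_i$ path in $H_i$ is also such a path in $G_{i-1}\setminus E'_i$, because $H_i$ is a subgraph of the latter; hence no $A_i$--$B_i$ path of length less than $d$ exists in $H_i$ either, which is exactly the asserted conclusion $\dist_{H_i}(A_i,B_i)\ge d$. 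The remaining requirements for $(A_i,B_i,\emptyset)$ to be a $(\delta,d)$-distancing for $H_i$ are inherited for free: $A_i,B_i$ are disjoint with $|A_i|=|B_i|\ge n^{1-\delta}$ (the common vertex set $V$ has size $n$), and the empty edge set trivially satisfies $|\emptyset|=0\le|A_i|/16$ with $H_i\setminus\emptyset=H_i$.

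I expect no genuine obstacle here --- this is precisely the ``immediate from the definitions'' observation, and the entire content is keeping track of which edges live in which graph. The one spot that merits a moment of care is that the argument needs the \emph{strengthened} inclusion $H_i\subseteq G_{i-1}\setminus E'_i$, not merely the obvious $H_i\subseteq G_{i-1}$: deleting $E'_i$ from $G_{i-1}$ must not destroy any edge of $H_i$. This is exactly where the in-hindsight definition $M'_{i'}=M_{i'}\setminus E^*$, together with $E'_i\subseteq E^*$, does the work --- every edge that survives into $H_i$ has already had all of $E^*$, and hence $E'_i$, removed from it.
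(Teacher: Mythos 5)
Your argument is the intended one: the paper offers no proof beyond ``immediate from the definitions,'' and the chain $E(H_\cdot)\subseteq E(G_{i-1})\setminus E'_i$ followed by monotonicity of distances under passing to a subgraph is exactly the right definitional unwinding. The verification of the remaining distancing conditions ($|A_i|=|B_i|\ge n^{1-\delta}$, disjointness, $|\emptyset|\le |A_i|/16$) is also fine.

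One thing must be flagged rather than introduced with ``Recall that'': your stated definition $E(H_i)=\bigcup_{i'<i}M'_{i'}$ is not the paper's. The paper sets $E(H_i)=E(G_i)\setminus E^*=\bigcup_{i'=1}^{i}M'_{i'}$, which \emph{includes} the matching $M'_i=M_i\setminus E^*$ added during iteration $i$. Since $M_i\subseteq A_i\times B_i$, whenever $M'_i\neq\emptyset$ (which holds for every good iteration, hence for a constant fraction of them) the paper's $H_i$ contains an edge joining $A_i$ directly to $B_i$, and the observation as literally stated fails. The statement that is actually true --- and the only one invoked later, since both applications in the proof of the separation observation only use the distancing property inside $\hat H^{(k)}\subseteq H_{i-1}$ (respectively $H_{i'-1}$) --- is that $(A_i,B_i,\emptyset)$ is a $(\delta,d)$-distancing in $H_{i-1}=G_{i-1}\setminus E^*$. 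Your proof is precisely the proof of that corrected statement; your ``$H_i$'' is the paper's $H_{i-1}$. So the mathematics is sound, but you have silently shifted the index: you should state the discrepancy with the paper's definition explicitly, because under the paper's literal definition of $H_i$ the claim admits no proof.
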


Note that:

\begin{equation}\label{eq: many edges after deletion}
|E(H)|=|E(G)\setminus E^*|=\sum_{i=1}^{z'}|M_i|-\sum_{i=1}^{z'}|E'_i|\geq\sum_{i=1}^{z'}(|M_i|-|E'_i|)\geq \sum_{i=1}^{z'}\frac{|M_i|}{2} \geq \frac{|E(G)|} 2.
\end{equation}

We have used the fact that, from the definition of the \DMG, for all $i$, $|M_i|\geq |A_i|/8$, while $|E'_i|\leq |A_i|/16$, so $|E'_i|\leq |M_i|/2$ must hold.

 We say that an iteration $i\in I$ is \emph{bad} if $|M'_i|<|M_i|/16$; otherwise, iteration $i$ is good. We let $I^b\subseteq I$ be the set of all indices $i$, such that the $i$th iteration is bad, and we let $I^g=I\setminus I^b$ be the set of all indices of good iterations. We use the following simple observation.

\begin{observation}\label{obs: bad iterations}
	$|I^b|\leq 7z'/8$. 
\end{observation}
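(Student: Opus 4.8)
The plan is to bound the number of bad iterations by a counting argument on the edges deleted in $E^*$. Recall an iteration $i$ is bad when $|M'_i| < |M_i|/16$, which means that more than $\frac{15}{16}|M_i|$ of the matching edges added in iteration $i$ lie in $E^* = \bigcup_{i'} E'_{i'}$. Since $E'_{i'} \subseteq E(G_{i'-1})$ for every $i'$, an edge $e$ that was added as a matching edge in iteration $i$ can only belong to $E'_{i'}$ for iterations $i' > i$. So the edges of $M_i \cap E^*$ are ``charged'' to the distancing edge-sets $E'_{i'}$ with $i' > i$. The total budget on the right-hand side is $\sum_{i'=1}^{z'} |E'_{i'}| \le \sum_{i'=1}^{z'} |A_{i'}|/16 \le \sum_{i'=1}^{z'} |M_{i'}|/8$, using $|A_{i'}| \le 2|M_{i'}|$ from the fact that $|M_{i'}| \ge |A_{i'}|/8$... wait, that gives $|A_{i'}| \le 8|M_{i'}|$, so $|E'_{i'}| \le |A_{i'}|/16 \le |M_{i'}|/2$.

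First I would set up the inequality precisely. For each bad iteration $i$, at least $\frac{15}{16}|M_i|$ edges of $M_i$ lie in $E^*$; since the sets $M_1,\dots,M_{z'}$ are pairwise disjoint (as edges are never re-added in the \DMG), these contributions are disjoint. Each such edge lies in $E'_{i'}$ for some $i' > i$, so $\sum_{i \in I^b} \frac{15}{16}|M_i| \le \sum_{i' = 1}^{z'} |E'_{i'}| \le \sum_{i'=1}^{z'} \frac{|M_{i'}|}{2} \le \frac 12 \sum_{i=1}^{z'} |M_i|$. Here I use the standardization step: all iterations in $I$ have matchings of sizes within a factor of $2$ of each other, so $|M_i| \ge \frac 12 |M_{i'}|$... actually the cleaner route that avoids needing the factor-$2$ normalization at all is just: $\frac{15}{16} \sum_{i \in I^b} |M_i| \le \frac 12 \sum_{i \in I} |M_i|$, hence $\sum_{i \in I^b} |M_i| \le \frac{8}{15}\sum_{i \in I}|M_i| \le \frac{8}{15} \cdot (\text{max }|M_i|) \cdot z'$. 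Combined with each $|M_i| \ge (\text{max})/2$ for $i \in I^b$ (by the normalization), this gives $\frac{1}{2}(\text{max}) \cdot |I^b| \le \frac{8}{15}(\text{max}) \cdot z'$, so $|I^b| \le \frac{16}{15} z'$ — which is too weak. I need to be more careful: apply the normalization on both sides, getting $\sum_{i\in I^b}|M_i| \ge \frac 12(\text{max})|I^b|$ and $\sum_{i \in I}|M_i| \le (\text{max}) z'$, so from $\frac{15}{16}\cdot\frac12(\text{max})|I^b| \le \frac12(\text{max})z'$ we get $|I^b| \le \frac{16}{15}z'$. Still weak. The fix is that the right-hand sum should only count $\sum_{i'} |E'_{i'}|$ honestly: $|E'_{i'}| \le |A_{i'}|/16$ and $|A_{i'}| = $ the size of the distancing, while $|M_{i'}| \ge |A_{i'}|/8$, so $|E'_{i'}| \le |M_{i'}|/2$ is the only bound — but actually we want to compare $\sum_{i \in I^b}|M_i|$ against $\sum_{i'}|E'_{i'}|$ which is at most $\sum_{i'}|A_{i'}|/16$. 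The correct accounting: $\frac{15}{16}\sum_{i\in I^b}|M_i| \le \sum_{i'=1}^{z'}|E'_{i'}| \le \frac{1}{16}\sum_{i'=1}^{z'}|A_{i'}| \le \frac{1}{16}\sum_{i'=1}^{z'}\cdot 8|M_{i'}|$, giving $\sum_{i\in I^b}|M_i| \le \frac{8}{15}\sum_i |M_i|$; with normalization, $\frac12(\text{max})|I^b| \le \frac{8}{15}(\text{max})z'$... I think I am overcomplicating, and the actual intended bound $|I^b| \le 7z'/8$ must use a sharper input, likely $|A_{i'}| \le |M_{i'}|$ is false, so perhaps the normalization forces $|M_i| \ge \frac{1}{2}|A_i|$... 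Let me not grind: the structure is a disjointness-of-matching-edges charging argument against $\sum|E'_{i'}| \le \sum |A_{i'}|/16$, and the factor-$2$ matching-size normalization converts this into a bound on the \emph{count} $|I^b|$.

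The key steps in order: (1) observe $M_1,\dots,M_{z'}$ are edge-disjoint and each $E'_{i'} \subseteq E(G_{i'-1}) = \bigcup_{i<i'} M_i$; (2) for each bad $i$, at least $\frac{15}{16}|M_i|$ edges of $M_i$ lie in $E^*$, so these disjoint edge-sets inject into $\bigsqcup_{i'} E'_{i'}$; (3) sum to get $\frac{15}{16}\sum_{i \in I^b}|M_i| \le \sum_{i'=1}^{z'}|E'_{i'}| \le \frac{1}{16}\sum_{i'=1}^{z'}|A_{i'}|$; (4) use the \DMG matching lower bound $|M_{i'}| \ge |A_{i'}|/8$ together with the factor-$2$ normalization $|M_i| \ge \frac12 \max_j|M_j|$ (for $i$ in the selected index class) to turn the edge-count inequality into $|I^b| \le 7z'/8$; (5) conclude. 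The main obstacle I anticipate is getting the constants to line up: the clean injection only gives $\frac{15}{16}$ while the budget $\frac{1}{16}\sum|A_{i'}|$ needs to be related back to $\sum|M_{i'}|$ and then to $z'\cdot\max|M|$ — and squeezing out $7/8$ rather than something weaker requires exploiting that $|E'_i| \le |A_i|/16$ is a factor-$16$ slack against $|A_i|$, combined carefully with $|M_i| \ge |A_i|/8$ and the normalization. If the direct computation falls short of $7/8$, the resolution is almost certainly that one should charge against $\sum_{i'}|E'_{i'}|$ using the \emph{per-iteration} relation $|E'_i| \le |M_i|/2$ on good iterations and a matching-disjointness refinement, rather than the crude global sum; I would check both and take whichever yields the stated constant.
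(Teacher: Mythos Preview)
Your charging approach has a genuine gap. You correctly derive
\[
\frac{15}{16}\sum_{i\in I^b}|M_i| \;\le\; |E^*| \;\le\; \sum_{i'}|E'_{i'}| \;\le\; \frac{1}{2}\sum_{i}|M_i|,
\]
but then you lose a factor of $2$ \emph{twice} when converting sums to counts via the normalization $m < |M_i| \le 2m$ (where $m=n/2^j$): once in lower-bounding $\sum_{I^b}|M_i|$ by $m|I^b|$, and once in upper-bounding $\sum_i|M_i|$ by $2mz'$. This yields only $|I^b| < \tfrac{16}{15}z'$, which is vacuous. Neither of your suggested patches (a per-iteration refinement on good iterations, a matching-disjointness argument) recovers the loss: the inequality $|E'_{i'}| \le |M_{i'}|/2$ is already the best you can extract from $|E'_{i'}|\le|A_{i'}|/16$ and $|M_{i'}|\ge|A_{i'}|/8$, so the upper bound $\sum|E'_{i'}|\le mz'$ is essentially tight.

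The paper sidesteps this double loss by working with $|E(H)|=\sum_i|M'_i|$ and applying the normalization \emph{inside} the bound, separated by iteration type. Each bad iteration contributes at most $|M_i|/16 \le 2m/16 = m/8$ to $|E(H)|$, while each good one contributes at most $|M_i|\le 2m$; hence
\[
|E(H)| \;\le\; |I^b|\cdot \frac{m}{8} + |I^g|\cdot 2m.
\]
Combining with the lower bound $|E(H)| \ge \sum_i|M_i|/2 > mz'/2$ and writing $\beta=|I^b|/z'$ gives $1/2 < \beta/8 + 2(1-\beta)$, i.e.\ $\beta<4/5<7/8$. The point is that the upper bound on $|E(H)|$ already distinguishes bad from good \emph{by count}, so the normalization is invoked only once (on the lower-bound side), not twice.
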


\begin{proof}
	Denote $\frac{|I^b|}{z'}=\beta$, and assume for contradiction that $\beta> 7/8$.  	Recall that, for all $i\in I$, $\frac{n}{2^{j}}< |M_i|\leq \frac{n}{2^{j-1}}$. 
	Therefore:
	
	\[ \begin{split}
	|E(H)|&= \sum_{i\in I^b}|M'_i|+\sum_{i\in I^g}|M'_i|\\
	&\leq \beta\cdot z'\cdot\frac{n}{16\cdot 2^{j-1}}+(1-\beta)\cdot z'\cdot \frac{n}{2^{j-1}}\\
	&=\frac{2z'n}{2^j}- \frac{\beta z' n}{2^j}\cdot \frac{15}{8}\\
	&\leq \frac{2z'n}{2^j}- \frac{z' n}{2^j}\cdot \frac 7 8\cdot \frac{15}{8}\\
	&\leq \frac{23}{64}\cdot \frac{z'n}{2^j}\\
	&<\frac{z'n}{2^{j+1}}.
	\end{split}\]

	On the other hand, from Equation \ref{eq: many edges after deletion}:
	
	 \[|E(H)|\geq \sum_{i=1}^{z'}\frac{|M_i|}{2}\geq \frac{z'n}{2^{j+1}},\]

	 a contradiction. We conclude that $\beta\leq 7z'/8$ holds.
\end{proof}

To summarize, so far we have shown that:

\begin{equation}\label{eq: many good iteration}
|I^g|\geq \frac{z'}{8}\geq \frac{z}{256\delta\log n}.
\end{equation}

In order to bound $z$, it is now enough to bound the number of good iterations in the game. In order to do so, we partition the game into \emph{phases}, each of which (except for, possibly, the last one), contains exactly $\ceil{n^{4\delta}}$ good iterations. 
It is now enough to prove the following lemma.

\begin{lemma}\label{claim: bound number of phases}
	The number of phases is bounded by $\frac{2n^{2\delta}}{\delta}$.
\end{lemma}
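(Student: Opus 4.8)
The plan is to maintain, for every $i\in\{0,1,\dots,z'\}$, a partition $\cset^i$ of the vertex set $V$ into \emph{clusters}, together with a monotone ``promotion potential'', and to charge each phase a fixed amount of this potential. The clusterings will be built so that $\cset^0$ consists of $n$ singletons; $\cset^{i}$ is always a coarsening of $\cset^{i-1}$ (clusters only ever merge); every cluster $C$ receives a \emph{level} $\ell(C)$ defined by $n^{\delta\ell(C)}<|C|\le n^{\delta(\ell(C)+1)}$ (so levels lie in $\{-1,0,\dots,\lceil 1/\delta\rceil-1\}$, at most $1/\delta+2$ values); and every level-$\ell$ cluster induces a connected subgraph of $H_i$ of diameter at most $2^{c\ell}$ for a small absolute constant $c$. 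Since the largest possible level is below $1/\delta$ and $d\ge 2^{4/\delta}$, this guarantees that every cluster has diameter strictly less than $d$ in $H_i$. The update from $\cset^{i-1}$ to $\cset^i$ is done after the surviving matching edges $M'_i$ are inserted to form $H_i$: we contract the clusters of $\cset^{i-1}$ and then repeatedly merge a bounded-radius ball of clusters in the contracted graph into a single new cluster, doing so only when it raises the level of the ball's centre and preserves the diameter invariant (merging along a star or a constant-radius ball keeps the diameter geometric in the level, which is exactly why the bound $2^{c\ell}$ is stable).

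Let $\Phi_i=\sum_{v\in V}\ell^i(v)$, where $\ell^i(v)$ is the level of the cluster of $\cset^i$ containing $v$. Then $\Phi$ is non-decreasing, $\Phi_0=-n$, $\Phi_{z'}\le n/\delta$ since no cluster reaches level $1/\delta$, and $\Phi_i-\Phi_{i-1}$ equals the number of vertices \emph{promoted} in iteration $i$ (i.e.\ whose level strictly increases). Hence the total number of promotions throughout the game is at most $n(1/\delta+1)$. The core claim is that \emph{each complete phase forces at least $n^{1-2\delta}$ promotions}; granting this, the number of complete phases is at most $n(1/\delta+1)/n^{1-2\delta}=n^{2\delta}(1+\delta)/\delta\le 2n^{2\delta}/\delta-1$ (using $\delta<1/4$), and adding the at most one incomplete phase gives at most $2n^{2\delta}/\delta$ phases, which is the claimed bound. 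Combined with $z\le 256\delta\log n\cdot|I^g|$ and $|I^g|\le\ceil{n^{4\delta}}\cdot(\#\text{phases})$ this yields $z=O(\log n\cdot n^{6\delta})\le n^{8\delta}$ under the hypothesis $n^{\delta}\ge 2^{14}\log n/\delta^2$.

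To establish the promotion claim I would argue by contradiction. The key input is that every pair $(u,v)\in M'_i$ satisfies $\dist_{H_{i-1}}(u,v)\ge d$: indeed $H_{i-1}=G_{i-1}\setminus E^{*}\subseteq G_{i-1}\setminus E'_i$, $(A_i,B_i,E'_i)$ is a $(\delta,d)$-distancing in $G_{i-1}$, and $u\in A_i$, $v\in B_i$ (this is \Cref{obs: distancing}, applied to the graph just before the $i$-th matching is inserted). Since every cluster has diameter below $d$, the two endpoints of each such pair lie in distinct clusters of $\cset^{i-1}$, and since $G$ stays simple the pair is inserted in at most one iteration. A good iteration contributes $|M'_i|\ge|M_i|/16\ge n^{1-\delta}/128$ such pairs, so a phase accumulates at least $\ceil{n^{4\delta}}\cdot n^{1-\delta}/128\ge n^{1+3\delta}/128$ \emph{distinct} pairs, each joining two clusters that were far apart when the pair was inserted. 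Suppose fewer than $n^{1-2\delta}$ vertices are promoted in the phase; then the set $W$ of vertices whose level is unchanged over the phase has size at least $n-n^{1-2\delta}$, and for $v\in W$ the cluster of $v$ stays within one level's size window throughout. Because $M'_i$ is a matching, at least $n^{1-\delta}/128-n^{1-2\delta}\ge n^{1-\delta}/256$ of its pairs have both endpoints in $W$, so the phase produces $\approx n^{1+3\delta}/256$ distinct far-connecting edges among clusters frozen at their levels. One then shows that such a density of new inter-cluster edges is incompatible with the merging rule: it forces some cluster to acquire, in the contracted graph, a bounded-radius ball large enough that merging it raises the centre's level without violating the diameter invariant (the slack coming from $d\ge 2^{4/\delta}$), contradicting that the vertices involved were frozen.

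The main obstacle is precisely this last step: designing the merging rule — the ball radius, the choice of centre, and how to handle merges that jump several levels at once — so that ``few promotions in a phase'' genuinely contradicts ``$\approx n^{1+3\delta}$ distinct distance-$\ge d$ matching edges inserted during the phase'', all while keeping the invariant $2^{c\ell}<d$ stable under every merge. Everything else is bookkeeping: the dyadic bucketing that gave $z'\ge z/(256\delta\log n)$ and $|I^g|\ge z'/8$, the $\lceil 1/\delta\rceil+O(1)$ levels, and the comparison of $2^{c\ell}$ with $d$ have to be reconciled with the exponents $4\delta,2\delta,8\delta$ and with $n^{\delta}\ge 2^{14}\log n/\delta^2$, but these are routine once the promotion lemma is in hand.
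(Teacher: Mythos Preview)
Your high-level framework matches the paper's: define level-stratified clusterings, track a promotion potential bounded by $O(n/\delta)$, and reduce everything to the per-phase claim that at least $n^{1-2\delta}$ vertices are promoted. The arithmetic you do after granting that claim is correct. But the step you explicitly flag as ``the main obstacle'' is not a detail---it is the entire content of the lemma, and your sketch of it has a real gap.

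The paper fixes the clustering at the \emph{start} of each phase (it does not update per iteration) and then builds $\cset^{(k+1)}$ from $\cset^{(k)}$ in one shot, using three ingredients your sketch lacks. First, a structural observation you do not invoke: once an edge of some $M'_i$ joins clusters $C$ and $C'$, the set $V(C)\cup V(C')$ has diameter $<d$ in $H_{i'}$ for every $i'>i$, so no later distancing can place one in $A_{i'}$ and the other in $B_{i'}$; hence \emph{each ordered pair of clusters is connected in at most one iteration}. This is what makes the contracted graph on clusters simple across the whole phase. Second, a classification step: for each good iteration pick the dominant pair of levels $(s,s')$ its edges span, then pick the dominant $(s,s')$ across the phase. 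Third, a weighted bipartite contracted graph $J$ on clusters of levels $s$ and $s'$ only, with meta-edge weights recording how many real edges each iteration contributed. Simplicity of $J$ (from the first ingredient) means ``degree'' is meaningful; the merging rule is then explicit---merge any level-$s'$ supernode with $\ge n^{\delta}$ distinct neighbours into all of them---and a weight-counting comparison (total weight $\gg n^{1+\delta}$ initially, $\le n^{1+\delta}$ after no high-degree supernodes remain, and each deleted unit of weight charges to a promoted vertex times $|I''_k|$) yields the $n^{1-2\delta}$ bound directly, not by contradiction.

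Your sketch counts distinct \emph{edges} (vertex pairs), relying on simplicity of $G$, but this does not control degrees in the contracted graph: without the cluster-pair simplicity observation, many edges across many iterations can pile onto a few cluster pairs without forcing any bounded-radius ball to be large. And without the level-pair classification, clusters of disparate scales make the ``ball large enough to raise the level'' condition and the ``total weight'' bounds incomparable. So the merging rule you leave undesigned is not just a parameter choice; it requires the two structural inputs above to work at all.
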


Indeed, assume that \Cref{claim: bound number of phases} holds. Then $|I_g|\leq \frac{2n^{2\delta}}{\delta}\cdot \ceil{n^{4\delta}}\leq \frac{4n^{6\delta}}{\delta}$, and, from Equation \ref{eq: many good iteration}, we get that:

\[z\leq |I^g|\cdot 256\delta\log n\leq 1024n^{6\delta}\log n\leq  n^{7\delta},\]

since $n^{\delta}\geq 1024\log n$.

In order to complete the proof of \Cref{thm: distancing-matching game - number of iterations}, it is now enough to prove \Cref{claim: bound number of phases}, which we do next. We denote the number of phases in the game by $\hat z$.

For every integer $1\leq k\leq \hat z$, we denote by $\hat H^{(k)}$ the graph $H$ at the beginning of the $k$th phase. In other words, if the last iteration of phase $(k-1)$ is $i$, then $\hat H^{(k)}=H_i$. We will define, for every phase $k$, a collection $\cset^{(k)}$ of disjoint subgraphs of $H^{(k)}$, that we refer to as \emph{clusters}; we also refer to $\cset^{(k)}$ as a \emph{clustering} of graph $H^{(k)}$. For all integers $1\leq s\leq \ceil{1/\delta}$, we let $\cset^{(k)}_s\subseteq \cset^{(k)}$ be the set of all clusters $C\in \cset^{(k)}$, with $n^{(s-1)\delta}< |V(C)|\leq n^{s\delta}$, and let $\cset^{(k)}_0\subseteq \cset^{(k)}$ be the set of all clusters $C$ containing a single vertex. For $0\leq s\leq \ceil{1/\delta}$, we say that a cluster $C\in \cset^{(k)}_s$ \emph{lies at level $s$ of the clustering $\cset^{(k)}$}. If cluster $C$ lies at level $s$ of $\cset^{(k)}$, then we say that every vertex of $C$ lies at level $s$ of $\cset^{(k)}$.

We will ensure that the following invariants hold for every integer $1\le k\leq \hat z$:

\begin{properties}{I}
\item $V=\bigcup_{C\in \cset^{(k)}}V(C)$; \label{inv: every vertex covered by clustering}
	
\item for all $0 \leq s\leq \ceil{1/\delta}$, if $C\in \cset^{(k)}_s$ is a level-$s$ cluster of $\cset^{(k)}$, then for every pair $x,y\in V(C)$ of vertices, $\dist_C(x,y)\leq 4^s$ (and in particular $\dist_C(x,y)<d$ must hold, as $d\geq 2^{4/\delta}$); \label{inv: distances in clusters} and

\item if $C$ is a cluster of $\cset^{(k)}$, and $C'$ is a cluster of $\cset^{(k+1)}$, then either $V(C)\cap V(C')=\emptyset$, or $C\subseteq C'$. \label{inv: containment of clusters}
\end{properties}

Note that Invariant \ref{inv: containment of clusters} ensures that, if $C'$ is a cluster of $\cset^{(k+1)}$, then either $C'$ is a cluster of $\cset^{(k)}$, or $C'$ is obtained by taking the union of several clusters of $\cset^{(k)}$, and possibly adding some edges to the resulting graph. In particular, the level of any given vertex $v\in V$ may only grow from phase to phase. Consider now some vertex $v\in V$. If vertex $v$ lies at level $s$ of $\cset^{(k)}$, and at level $s'>s$ of $\cset^{(k+1)}$, then we say that vertex $v$ was \emph{promoted} by $\cset^{(k+1)}$, or that it is promoted during phase $k$.

Initially, we let $\cset^{(1)}$ contain, for every vertex $v\in V$, a separate cluster $C(v)$, consisting of only vertex $v$ iteslf, so $\cset^{(1)}=\set{C(v)\mid v\in V}$. Therefore, every vertex lies at level $0$ of $\cset^{(1)}$. Clearly, as the algorithm progresses, the level of a vertex may be at most $\ceil{1/\delta}$. The key in bounding the number of phases is to show that the clusterings $\cset^{(k)}$ can be constructed so that a large number of vertices are promoted in every phase. Since every vertex may only be promoted at most $\ceil{1/\delta}$ times, this will be sufficient in order to bound the number of phases.
In order to complete the proof of \Cref{claim: bound number of phases}, it is enough to prove the following claim.

\begin{claim}\label{claim: construct clusterings for a phase}
	Consider some integer $k\geq 1$, such that phase $k$ is not the last phase, and assume that we are given a clustering $\cset^{(k)}$ of graph $\hat H^{(k)}$, for which Invariants \ref{inv: every vertex covered by clustering} and \ref{inv: distances in clusters} hold. Then there is a clustering $\cset^{(k+1)}$ of graph $\hat H^{(k+1)}$, for which Invariants \ref{inv: every vertex covered by clustering} and \ref{inv: distances in clusters} hold. Additionally, for every pair $C\in \cset^{(k)}$, $C'\in \cset^{(k+1)}$ of clusters, either $V(C)\cap V(C')=\emptyset$ or $C\subseteq C'$ hold. Lastly, the number of vertices that are promoted in $\cset^{(k+1)}$ is at least $n^{1-2\delta}$.
\end{claim}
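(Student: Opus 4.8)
The plan is to prove \Cref{claim: construct clusterings for a phase} by examining the sequence of good iterations within phase $k$, and tracking how a connected-components structure on the graph $H$ (restricted to a bounded-diameter ``core'' in each cluster) evolves. Fix phase $k$, with its $\ceil{n^{4\delta}}$ good iterations, and let $\hat H^{(k)} = H^{\text{start}}$, $\hat H^{(k+1)} = H^{\text{end}}$ denote $H$ at the start and end of the phase. First I would set up the following bookkeeping. For each good iteration $i$ in the phase, we have a distancing $(A_i,B_i,E'_i)$ and the surviving matching $M'_i = M_i \setminus E^*$ with $|M'_i| \ge |M_i|/16 \ge n^{1-\delta}/(16\cdot 2^{j})$. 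Crucially, by \Cref{obs: distancing}, at iteration $i$ there is no path of length $< d$ in $H_i$ between $A_i$ and $B_i$. Since by Invariant \ref{inv: distances in clusters} every cluster of $\cset^{(k)}$ has internal diameter $< d$, this means that for each cluster $C \in \cset^{(k)}$, at most one of the two sets $A_i$, $B_i$ can intersect $V(C)$ — otherwise a short path inside $C$ would connect them. So each $M'_i$ edge of the surviving matching connects a vertex of some cluster $C$ to a vertex of a \emph{different} cluster $C'$.

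The key combinatorial step is to show that these cross-cluster matching edges, accumulated over the $\ceil{n^{4\delta}}$ good iterations, must merge clusters into larger clusters (promoting many vertices), unless a short $A$–$B$ path is forced, contradicting the distancing property. Concretely: define, for each good iteration $i$, a contracted graph where we contract each cluster of $\cset^{(k)}$ (or more precisely, a current dynamic clustering $\cset'$ that starts at $\cset^{(k)}$ and only gets coarser) to a single supernode, and add the surviving matching edges $M'_i$. If, over the phase, not many vertices get promoted, then most clusters of $\cset^{(k)}$ remain intact as clusters in $\cset^{(k+1)}$, i.e. very few merges happen. But each iteration contributes $\gtrsim n^{1-\delta}/2^{j+4}$ cross-cluster edges, all going between clusters that have $\le n^{s\delta}$ vertices for whatever level $s$ they are at, and since $A_i, B_i$ have size $\ge n^{1-\delta}$, the sets $A_i$ and $B_i$ together touch at least $\Omega(n^{1-\delta}/n^{s\delta}) = \Omega(n^{1-(s+1)\delta})$ distinct clusters. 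I would argue that, after contracting the current clustering, the union of the surviving matching edges over a batch of iterations either forms a connected subgraph spanning many supernodes — which, once its diameter is controlled, lets us declare a new larger cluster (going up one level) and promote all its vertices — or else there is a ``balanced'' cut in the contracted graph that translates back into a short $A_i$–$B_i$ path in $H_i$ for some iteration $i$, contradicting \Cref{obs: distancing}. The diameter control is where the factor-$4$ in Invariant \ref{inv: distances in clusters} comes from: a new level-$s$ cluster is built from level-$(s-1)$ clusters (diameter $\le 4^{s-1}$ each) joined by a single matching edge through a BFS-like structure of depth $O(1)$ in the contracted graph, so the real diameter is at most roughly $2 \cdot 4^{s-1} \cdot (\text{const}) \le 4^s$. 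I would formalize ``a batch of iterations suffices to promote'' by a potential/counting argument: each good iteration adds $\ge n^{1-\delta}/2^{j+4}$ new edges among a vertex set of size $n$ that is already partitioned into $\le n$ clusters; after $\ceil{n^{4\delta}}$ such iterations, either we have promoted $\ge n^{1-2\delta}$ vertices, or the total number of edges among the un-promoted part exceeds what is feasible without creating a short cross path — giving the contradiction.

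The main obstacle I expect is precisely the dichotomy step: showing that failing to promote $n^{1-2\delta}$ vertices forces a short $A_i$–$B_i$ path in some $H_i$ (contradicting the distancing). This requires a careful expander-mixing / ball-growing style argument on the contracted multigraph of clusters — essentially: if few merges happen, the contracted graph has many small ``frozen'' supernodes, many matching edges among them, and in such a graph one can find, for some iteration $i$, a short path between the $A_i$-side supernodes and the $B_i$-side supernodes by a greedy BFS whose depth stays $O(1)$, since each supernode has diameter $\le 4^{s} \ll d$ and the number of hops in the contracted graph needed is also $O(1)$ (by a volume argument using $d \ge 2^{4/\delta}$ and the $\ge n^{1-\delta}$ endpoint counts). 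Making the depth bookkeeping tight enough that the resulting real path has length strictly less than $d$ is the delicate part, and is where the exponential dependence $d \ge 2^{4/\delta}$ is used. The remaining steps — verifying Invariants \ref{inv: every vertex covered by clustering} and \ref{inv: distances in clusters} for $\cset^{(k+1)}$, and the nesting property that each old cluster is either disjoint from or contained in a new cluster — are then bookkeeping: Invariant \ref{inv: every vertex covered by clustering} holds because we only ever merge or keep clusters, never drop vertices; the nesting holds by construction since $\cset^{(k+1)}$ is obtained from $\cset^{(k)}$ by unions; and Invariant \ref{inv: distances in clusters} holds by the diameter estimate above, checking that the level bound $s \le \ceil{1/\delta}$ is respected because a level-$s$ cluster has $> n^{(s-1)\delta}$ vertices and there are only $n$ vertices total.
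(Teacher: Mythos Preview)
Your proposal has a genuine gap in the ``dichotomy step,'' and the intended contradiction with the distancing does not go through. You want to argue: if few vertices are promoted, then the accumulated cross-cluster matching edges over the phase force a short $A_i$--$B_i$ path in $H_i$ for some iteration $i$, contradicting \Cref{obs: distancing}. But the edges you are accumulating come from \emph{all} good iterations of the phase, including iterations $i' > i$; a path built from those edges lives in $H_{i'}$ (or later), not in $H_i$. The distancing $(A_i,B_i,\emptyset)$ is only a distancing in $H_i$, so finding a short $A_i$--$B_i$ path using edges from later iterations gives no contradiction at all. In fact, the matching $M'_i$ itself immediately connects $A_i$ to $B_i$ by paths of length $1$ in $H_{i+1}$. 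So no ball-growing or expander-mixing argument on the accumulated contracted multigraph can produce the contradiction you are aiming for.

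The paper avoids this trap entirely: it never tries to contradict the distancing. It uses the distancing only to extract two structural facts (\Cref{obs: separation in an iteration}): each cluster meets at most one of $A_i,B_i$, and---crucially---once an $M'_i$-edge connects clusters $C$ and $C'$, no later iteration can have both an $A_{i'}$-vertex and a $B_{i'}$-vertex in $V(C)\cup V(C')$, so no later matching edge reconnects the same cluster pair. This ``no parallel meta-edges'' fact is what you are missing. With it, the paper proceeds by a direct counting argument: classify each good iteration by the pair of levels $(s,s')$ its matching edges connect (losing a $(\lceil 1/\delta\rceil+1)^2$ factor), pick the dominant class, and build a simple bipartite contracted graph $J$ on supernodes $\cset^{(k)}_s\cup\cset^{(k)}_{s'}$ with total edge weight $\gtrsim n^{1-\delta}\cdot|I''_k|$. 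Then repeatedly merge any $s'$-supernode of degree $\ge n^{\delta}$ with all its neighbors (a star merge, which is what gives the clean diameter bound $\le 3\cdot 4^s+2\le 4^{s+1}$). When this process stops, every remaining $s'$-supernode has degree $<n^{\delta}$, which bounds the residual weight by $n^{1+\delta}$; the rest of the weight was deleted, and a per-supernode weight bound converts that into $\ge n^{1-2\delta}$ promoted vertices. There is no contradiction step anywhere. Your outline would need to replace the dichotomy by this level classification plus the no-parallel-edges observation and the degree-based merging.
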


Note that \Cref{claim: bound number of phases} follows immediately from \Cref{claim: construct clusterings for a phase}. Since every vertex may be promoted at most $\ceil{1/\delta}$ times, and every phase promotes at least $n^{1-2\delta}$ vertices, the total number of phases must be bounded by $\frac{n\cdot\ceil{1/\delta}}{n^{1-2\delta}}\leq \frac{2n^{2\delta}}{\delta}$. We now prove \Cref{claim: construct clusterings for a phase}.

\begin{proofof}{\Cref{claim: construct clusterings for a phase}}
Let $I_k\subseteq I$ be the collection of indices $i\in I$, such that iteration $i$ belongs to phase $k$, and let $I'_k=I_k\cap I^g$ be the set of indices corresponding to good iterations of phase $k$. Recall that $|I'_k|= \ceil{n^{4\delta}}$. We will use the following simple observation.

\begin{observation}\label{obs: separation in an iteration}
	Consider an iteration $i\in I_k$, and the corresponding distancing $(A_i,B_i,E'_i)$. Then for every cluster $C\in \cset^{(k)}$, either $A_i\cap V(C)=\emptyset$, or $B_i\cap V(C)=\emptyset$. Moreover, if there is a pair $C,C'\in \cset^{(k)}$ of clusters, such that some edge $e\in M'_i$ connects a vertex of $C$ to a vertex of $C'$, then for every subsequent iteration $i'>i$, either $A_{i'}\cap (V(C)\cup V(C'))=\emptyset$, or $B_{i'}\cap (V(C)\cup V(C'))=\emptyset$ must hold.
\end{observation}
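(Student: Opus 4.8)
The plan is to prove both halves of the observation by exploiting Invariant \ref{inv: distances in clusters}, which guarantees that every cluster of $\cset^{(k)}$ has internal diameter at most $4^{\ceil{1/\delta}} < d$, together with the distancing property of iteration $i$ (namely, that $\dist_{G_{i-1}\setminus E'_i}(A_i,B_i)\geq d$, and in particular, by Observation \ref{obs: distancing} applied at the appropriate graph, that $A_i$ and $B_i$ are at distance at least $d$ in the relevant edge-deleted graph). The key point is that a cluster of small diameter cannot "straddle" a distancing: if it contained one vertex of $A_i$ and one vertex of $B_i$, it would supply a short path between them.

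For the first claim, suppose toward contradiction that some cluster $C\in\cset^{(k)}$ contains a vertex $a\in A_i$ and a vertex $b\in B_i$. Since iteration $i$ belongs to phase $k$ and $\hat H^{(k)}$ is the graph $H$ at the start of phase $k$, we have $C\subseteq \hat H^{(k)} \subseteq H_{i-1}$ (clusters only grow, and the edge set of $H$ only grows, so the cluster's edges are all present in $H_{i-1}$). By Invariant \ref{inv: distances in clusters}, $\dist_C(a,b)\leq 4^{\ceil{1/\delta}} < d$ since $d\geq 2^{4/\delta}$. Hence there is a path of length less than $d$ connecting $a$ to $b$ inside $H_{i-1}$, and this path uses only edges of $E(H_{i-1}) = \bigcup_{i'<i} M'_{i'}$, all of which avoid $E^*\supseteq E'_i$. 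This contradicts the fact that $(A_i,B_i,E'_i)$ is a valid $(\delta,d)$-distancing in $G_{i-1}$ (equivalently, it contradicts Observation \ref{obs: distancing}). Therefore at most one of $A_i\cap V(C)$, $B_i\cap V(C)$ is nonempty.

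For the second claim, suppose an edge $e\in M'_i$ connects a vertex of $C$ to a vertex of $C'$, with $C,C'\in\cset^{(k)}$. Since $e\in M'_i = M_i\setminus E^*$, the edge $e$ is present in every graph $H_{i'}$ with $i'\geq i$, and in particular $e\notin E^* \supseteq E'_{i'}$ for all $i'$. Consider the subgraph $D = C\cup C'\cup\{e\}$. Its diameter is at most $\diam(C) + 1 + \diam(C') \leq 2\cdot 4^{\ceil{1/\delta}} + 1 < d$, again using $d\geq 2^{4/\delta}$ (a crude bound suffices here; one can be more careful, but $2\cdot 4^{\ceil{1/\delta}}+1 \leq 2^{2\ceil{1/\delta}+2} \leq d$). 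Now fix any iteration $i' > i$ and suppose for contradiction that $A_{i'}\cap (V(C)\cup V(C'))\neq\emptyset$ and $B_{i'}\cap(V(C)\cup V(C'))\neq\emptyset$. Pick $a\in A_{i'}\cap(V(C)\cup V(C'))$ and $b\in B_{i'}\cap(V(C)\cup V(C'))$. All edges of $D$ lie in $H_{i'-1}$ (the edges of $C,C'$ were present at the start of phase $k\leq$ the phase of iteration $i'$, hence in $H_{i'-1}$, and $e\in M'_i\subseteq E(H_{i'-1})$ since $i<i'$), and none of them lie in $E^*\supseteq E'_{i'}$. Hence $D$ provides a path of length less than $d$ between $a$ and $b$ in $H_{i'-1}\setminus E'_{i'}$, contradicting Observation \ref{obs: distancing} (or directly, the distancing property of iteration $i'$). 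This establishes the second claim.

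The main subtlety to get right — more bookkeeping than genuine obstacle — is the monotonicity argument that lets us transplant the distancing property (stated for $G_{i-1}$, the full graph with the $E^*$-edges present) into the edge-deleted graphs $H_{i'}$, and the observation that all relevant cluster edges and the matching edge $e$ are indeed present in $H_{i'-1}$. Here we lean on: (i) $\hat H^{(k)}\subseteq H_{i-1}$ for any $i$ in phase $k$, so cluster edges persist; (ii) $M'_i\subseteq E(H_{i'-1})$ whenever $i < i'$; and (iii) $E'_{i'}\subseteq E^*$ while $E(H_{i'-1})\cap E^* = \emptyset$, so the short path we exhibit survives the deletion of $E'_{i'}$. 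Once these inclusions are stated cleanly, both contradictions are immediate from Invariant \ref{inv: distances in clusters} and the definition of a distancing; no nontrivial computation is needed beyond checking $2\cdot 4^{\ceil{1/\delta}}+1 < d$, which follows from $d\geq 2^{4/\delta}$.
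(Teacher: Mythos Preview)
Your proof is correct and follows essentially the same approach as the paper: both parts are proved by contradiction, using Invariant~\ref{inv: distances in clusters} to bound cluster diameters by $4^{\ceil{1/\delta}}<d$ (respectively $2\cdot 4^{\ceil{1/\delta}}+1<d$ for the merged pair $C\cup C'\cup\{e\}$) and then invoking the distancing property via Observation~\ref{obs: distancing}. Your bookkeeping about which graph ($H_{i-1}$ versus $H_{i'-1}$) contains the relevant edges is in fact slightly more careful than the paper's, but the argument is the same.
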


\begin{proof}
	We fix an index $i\in I_k$, and consider the corresponding distancing $(A_i,B_i,E'_i)$. From \Cref{obs: distancing}, $(A_i,B_i,\emptyset)$ is a $(\delta,d)$-distancing in graph $H_i$. Therefore, if $P$ is any path in $H_i$ connecting a vertex of $A_i$ to a vertex of $B_i$, then the length of $P$ is at least $d$. Consider now some cluster $C\in \cset^{(k)}$, and assume that $C\in \cset^{(k)}_s$, for some $0\leq s\leq \ceil{1/\delta}$. Assume for contradiction that there is a pair $u,v\in V(C)$ of vertices, with $u\in A_i$ and $v\in B_i$. From Invariant \ref{inv: distances in clusters}, $\dist_C(u,v)\leq 4^s\leq 4^{\ceil{1/\delta}}<d$ (since $d\geq 2^{4/\delta}$). Since $C\subseteq \hat H^{(k)}\subseteq H_i$, this is a contradiction.

	Assume now that there is a pair  $C,C'\in \cset^{(k)}$ of clusters, such that some edge $e\in M'_i$ connects a vertex of $C$ to a vertex of $C'$. Using the same reasoning as above, and since edge $e$ is added to graph $H_{i+1}$, we get that, for every pair $u\in V(C),v\in V(C')$ of vertices, $\dist_{H_{i+1}}(u,v)\leq 1+2\cdot 4^{\ceil{1/\delta}}<d$. Therefore, for all $i'>i$, $\dist_{H_{i'}}(u,v)<d$ holds as well. Since, from \Cref{obs: distancing}, $(A_{i'},B_{i'},\emptyset)$ is a $(\delta,d)$-distancing in graph $H_{i'}$, we conclude that at most one of the sets $A_{i'},B_{i'}$ may contain a vertex of $V(C)\cup V(C')$.
\end{proof}

In the remainder of the proof of \Cref{claim: construct clusterings for a phase}, we consider  good iterations $i\in I'_k$. For each such iteration $i$, we will select a large enough subset $M''_i\subseteq M'_i$ of edges, and  integers $0\leq s', s\leq \ceil{1/\delta}$, so that every edge in $M''_i$ connects a vertex of $\bigcup_{C\in \cset^{(k)}_s}V(C)$ to a vertex of  $\bigcup_{C'\in \cset^{(k)}_{s'}}V(C')$.
We then say that iteration $i$ \emph{belongs to class $(s,s')$}. Since the number of possible classes is small, we can find a class $(s,s')$ to which many iterations of $I'_k$ belong. Assume w.l.o.g. that $s\geq s'$. We will then use the iterations of $I'_k$ from class $(s,s')$ in order to identify a large number of level-$s$ clusters that can be merged together, so that, on the one hand, Invariants \ref{inv: every vertex covered by clustering} and \ref{inv: distances in clusters} continue to hold, while, on the other hand, a large number of vertices are promoted. The remainder of the proof of \Cref{claim: construct clusterings for a phase} consists of three steps. In the first step, we define a subset $M''_i\subseteq M'_i$ of edges for each iteration $i\in I'_k$, and classify the iteration into some class $(s,s')$. Then in the second step we define a new contracted graph $J$, representing the clusters of $\cset^{(k)}_s\cup \cset^{(k)}_{s'}$, where $(s,s')$ is the most common class among the iterations of $I_k'$. Graph $J$ is then used in Step 3 in order to define the new  clustering $\cset^{(k+1)}$. We now present each of the three steps in turn.

\paragraph{Step 1: Iteration Classification.}
Consider an iteration $i\in I'_k$. Since iteration $i$ is good, $|M'_i|\geq \frac{|M_i|}{16}\geq \frac{|A_i|}{128}\geq \frac{n^{1-\delta}}{128}$.
Consider now an edge $e=(u,v)\in M'_i$, with $u\in A_i,v\in B_i$. 
From Invariant \ref{inv: every vertex covered by clustering}, there must be clusters $C,C'\in \cset^{(k)}$ with $u\in V(C)$ and $v\in V(C')$. Moreover, from \Cref{obs: separation in an iteration}, $C\neq C'$ must hold.
We say that edge $e$ is of type $(s,s')$, for a pair $0\leq s,s'\leq \ceil{1/\delta}$ of indices, if $C\in \cset^{(k)}_s$ and $C'\in \cset^{(k)}_{s'}$; note that it is possible that $s=s'$. Clearly, there is a pair $0\leq s,s'\leq \ceil{1/\delta}$ of indices, such that the number of edges of type $(s,s')$ in $M'_i$ is at least $\frac{|M'_i|}{(1+\ceil{1/\delta})^2}\geq \frac{n^{1-\delta}\cdot \delta^2}{512}$. From now on we fix this pair of indices, and we denote by $M''_i\subseteq M'_i$ the set of all edges of type $(s,s')$ in $M'_i$, so $| M''_i|\geq \frac{n^{1-\delta}\cdot \delta^2}{512}$. We say that iteration $i$ is an iteration of type $(s,s')$.

Note that there must be an ordered pair $(s,s')$ of indices, with 
$0\leq s,s'\leq \ceil{1/\delta}$, such the number of good iterations in $I'_k$ that are of type $(s,s')$ is at least: $\frac{|I'_k|}{(1+\ceil{1/\delta})^2}\geq \frac{\delta^2}{6}\cdot \ceil{n^{4\delta}}$.
From now on we fix this pair $(s,s')$ of indices, and we denote by $I''_k\subseteq I'_k$ the set of all good iterations $i$ from phase $k$ that are of type $(s,s')$, so $|I''_k| \geq \frac{\delta^2}{6}\cdot \ceil{n^{4\delta}}$. We assume w.l.o.g. that $s\geq s'$.

\paragraph{Step 2: Contracted Graph.}
In this step we construct a weighted contracted graph $J$, as follows. For every cluster $C\in \cset^{(k)}_s\cup \cset^{(k)}_{s'}$, we add a vertex $v(C)$ to graph $J$; we refer to vertices of $J$ as \emph{supernodes}, to distinguish them from vertices of $V$. The set of edges of $J$ is the union of the sets $\set{E_i\mid i\in I''_k}$ of edges that we define below. We refer to the edges of $J$ as \emph{meta-edges}. For every iteration $i\in I''_k$, every meta-edge $\hat e\in E_i$ represents some collection $S(\hat e)\subseteq M''_i$ of edges.

Consider some iteration $i\in I''_k$. For every pair $C\in \cset^{(k)}_s,C'\in \cset^{(k)}_{s'}$ of clusters, such that at least one edge of $M''_i$ connects a vertex of $C$ to a vertex of $C'$, we add a meta-edge $\hat e =(v(C),v(C'))$ to $E_i$. We let $S(\hat e)$ be the set of all edges of $M''_i$ that connect vertices of $C$ to vertices of $C'$, and we set the weight of the meta-edge $\hat e$ to be $w(\hat e)=|S(\hat e)|$. We note that, since $M''_i$ is a matching of cardinality at least $\frac{n^{1-\delta}\cdot \delta^2}{512}$, we get that $\sum_{\hat e\in E_i}w(e)\geq \frac{n^{1-\delta}\cdot \delta^2}{512}$. Moreover, for every cluster $C\in \cset^{(k)}_s\cup \cset^{(k)}_{s'}$, the total weight of all meta-edges of $E_i$ incident to $C$ is at most $|V(C)|$.

Consider a pair $i,i'\in I''_k$ of indices with $i<i'$. Note that, if a meta-edge $(v(C),v(C'))$ belongs to $E_i$, then, from \Cref{obs: separation in an iteration}, meta-edge $(v(C),v(C'))$ may not lie in $E_{i'}$.
We set $E(J)=\bigcup_{i\in I''_k}E_i$. From the above discussion, graph $J$ contains no parallel edges. For every supernode $v(C)\in V(J)$, the total weight of all meta-edges incident to $v(C)$ is bounded by $|V(C)|\cdot |I''_k|$. The total weight of all meta-edges in graph $J$ is at least $\frac{n^{1-\delta}\cdot \delta^2}{512}\cdot |I''_k|$.

\paragraph{Step 3: Constructing the Clustering $\cset^{(k+1)}$.}

In order to construct clustering $\cset^{(k+1)}$, we start with $\cset^{(k+1)}=\cset^{(k)}$, and then iteratively merge some clusters of $\cset^{(k+1)}$. In every iteration, we consider the graph $J$. Assume that there is a cluster $C'\in \cset^{(k)}_{s'}$, such that supernode $v(C')$ has at least $n^{\delta}$ neighbor vertices in graph $J$. We denote the neighbor vertices of $v(C')$ by $v(C_1),\ldots,v(C_q)$. Note that for all $1\leq a\leq q$, $C_a\in \cset^{(k)}_s$ must hold. We delete clusters $C',C_1,\ldots,C_q$ from $\cset^{(k+1)}$, and instead add a single cluster $C^*$, whose vertex set is $V(C')\cup V(C_1)\cup\cdots\cup V(C_q)$, and edge set is the union of $E(C')\cup E(C_1)\cup\cdots\cup E(C_q)$ with the set $\bigcup_{a=1}^qS(v(C'),v(C_a))$ of edges. 
Note that $C^*\subseteq \hat H^{(k+1)}$ holds. We then delete vertices $v(C'),v(C_1),\ldots,v(C_q)$ from graph $J$.

Observe that, for every vertex $x\in V(C^*)$, the level of $x$ in $\cset^{(k)}$ was either $s$ or $s'\leq s$. Moreover, if $C\in \cset^{(k)}$ is the cluster that contained $x$, then $|V(C^*)|>n^{\delta}\cdot |V(C)|$. Therefore, the level of $x$ in $\cset^{(k+1)}$ is strictly greater than that in $\cset^{(k)}$. We conclude that every vertex in $V(C^*)$ is promoted in the current phase, and the level of cluster $C^*$ is at least $(s+1)$. We use the following simple observation, that will allow us to establish Invariant \ref{inv: distances in clusters}.

\begin{observation}\label{obs: distance grows slowly}
	For every pair $x,y\in V(C^*)$ of vertices, $\dist_{C^*}(x,y)\leq 4^{s+1}$.
\end{observation}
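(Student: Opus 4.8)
The plan is to bound the diameter of $C^*$ via the triangle inequality, using the level-$s'$ cluster $C'$ as a ``hub'' through which all paths will be routed. Recall that $C^*$ is formed by merging $C'\in\cset^{(k)}_{s'}$ with the level-$s$ clusters $C_1,\dots,C_q\in\cset^{(k)}_s$ (where we assumed $s\geq s'$), that $V(C^*)=V(C')\cup\bigcup_{a}V(C_a)$, and that for each $a$ the set $S(v(C'),v(C_a))\subseteq E(C^*)$ is nonempty, with every edge in it joining a vertex of $C_a$ to a vertex of $C'$. First I would fix, for every $1\le a\le q$, one such edge $e_a=(u_a,w_a)$ with $u_a\in V(C_a)$ and $w_a\in V(C')$, and record the easy fact that each $C_a$ and $C'$ are subgraphs of $C^*$ (same vertices, and all their edges are present in $C^*$), so distances measured inside $C'$ or $C_a$ are only decreased when measured in $C^*$.

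The first main step is to show that every $x\in V(C^*)$ has a ``projection'' $\pi(x)\in V(C')$ with $\dist_{C^*}(x,\pi(x))\le 4^s+1$: if $x\in V(C')$ take $\pi(x)=x$; otherwise $x\in V(C_a)$ for some $a$, and Invariant \ref{inv: distances in clusters} applied to the level-$s$ cluster $C_a$ gives $\dist_{C_a}(x,u_a)\le 4^s$, so following the edge $e_a$ reaches $\pi(x)=w_a\in V(C')$ within $4^s+1$ steps. The second step is to apply Invariant \ref{inv: distances in clusters} to the level-$s'$ cluster $C'$, which gives $\dist_{C^*}(\pi(x),\pi(y))\le\dist_{C'}(\pi(x),\pi(y))\le 4^{s'}\le 4^s$ for any $x,y\in V(C^*)$. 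Combining the two through the triangle inequality yields $\dist_{C^*}(x,y)\le (4^s+1)+4^s+(4^s+1)=3\cdot 4^s+2$, and it remains only to check that $3\cdot 4^s+2\le 4^{s+1}$.

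This last inequality is equivalent to $2\le 4^s$, which holds whenever $s\ge 1$; the single place where the generic estimate falls short — and hence essentially the only subtlety in the whole argument — is the corner case $s=0$. There I would argue separately: since $s\ge s'\ge 0$, we must also have $s'=0$, so $C'$ and all the $C_a$ are single-vertex clusters, $C^*$ is a star centered at the unique vertex of $C'$, and its diameter is at most $2\le 4=4^{0+1}$. Combining the two cases establishes $\dist_{C^*}(x,y)\le 4^{s+1}$ for all $x,y\in V(C^*)$, which is exactly the claim; this bound, together with the fact (already noted before the observation) that $\operatorname{level}(C^*)\ge s+1$, is what is needed to re-establish Invariant \ref{inv: distances in clusters} for the new cluster $C^*$ in $\cset^{(k+1)}$.
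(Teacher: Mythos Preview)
Your proof is correct and follows essentially the same route as the paper: route any pair $x,y$ through the hub cluster $C'$ via the bridging edges from $S(v(C'),v(C_a))$, then invoke Invariant~\ref{inv: distances in clusters} on each piece to get the bound $3\cdot 4^s+2$. You are in fact slightly more careful than the paper, which writes the total as $3\cdot 4^s+1$ (an off-by-one slip) and does not explicitly address the $s=0$ corner case that you handle separately.
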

\begin{proof}
	If both $x$ and $y$ belong to a single cluster of $\set{C',C_1,\ldots,C_q}$, then, since Invariant \ref{inv: distances in clusters} held for $\cset^{(k)}$, and since each such cluster is contained in $C^*$, $\dist_{C^*}(x,y)\leq 4^{s}$ must hold. Assume now that $x$ and $y$ belong to different clusters. We assume w.l.o.g. that $x\in V(C_1)$ and $y\in V(C_2)$; the other cases are treated similarly.
	
	From the definition of cluster $C^*$, meta-edges $\hat e_1=(v(C'),v(C_1))$, $\hat e_2=(v(C'),v(C_2))$ lie in $J$. Consider any real edge $e_1\in S(\hat e_1)$ and $e_2\in S(\hat e_2)$. Both edges must lie in $\hat H^{(k+1)}$, and in $C^*$. We denote $e_1=(x_1,y_1)$ with $x_1\in V(C_1)$, and $e_2=(x_2,y_2)$ with $y_2\in V(C_2)$. In particular, $y_1,x_2\in V(C')$ must hold.  From Invariant \ref{inv: distances in clusters}, there is a path $P_1$ of length at most $4^s$ connecting $x$ to $x_1$ in $C_1$; a path $P'$ of length at most $4^{s'}\leq 4^s$ connecting $y_1$ to $x_2$ in $C'$; and a path $P_3$ of length at most $4^s$ connecting $y_2$ to $y$ in $C_2$. By combining these three paths with edges $e_1$ and $e_2$, we obtain a path in cluster $C^*$, connecting $x$ to $y$, whose length is at most $3\cdot 4^s+1\leq 4^{s+1}$. Therefore, $\dist_{C^*}(x,y)\leq 4^{s+1}$.
\end{proof}

The algorithm terminates once,  every supernode $v(C')$ of $J$ with $C'\in \cset^{(k)}_{s'}$, the number of meta-edges incident to $v(C')$ in the current graph $J$ is less than $n^{\delta}$.

From the above discussion, once the algorithm terminates, Invariant \ref{inv: distances in clusters} holds for the final clustering $\cset^{(k+1)}$, as every newly added cluster to $\cset^{(k+1)}$ belongs to level $(s+1)$ or higher. It is immediate to verify that Invariant \ref{inv: every vertex covered by clustering} holds for the final set $\cset^{(k+1)}$ of clusters. It now only remains to prove that sufficiently many vertices are promoted in the current iteration.

Consider the graph $J$ that is obtained at the end of the algorithm. In this graph, for every cluster $C'
\in \cset^{(k)}_{s'}$, its corresponding supernode $v(C')$ has fewer than $n^{\delta}$ meta-edges incident to $v(C')$. Since, for every meta-edge $\hat e\in E(J)$ that is incident to a supernode $v(C)$, $w(\hat e)\leq |V(C)|$ must hold, we get that the total weight of all edges remaining in graph $J$ at the end of the algorithm is bounded by:

\[ \sum_{C'\in \cset^{(k)}_{s'}}|V(C')|\cdot n^{\delta}\leq n^{1+\delta}. \]

Recall that the total weight of all meta-edges of $J$ at the beginning of the algorithm was at least: 

\[\frac{n^{1-\delta}\cdot \delta^2}{512}\cdot |I''_k|\geq\frac{n^{1+3\delta}\cdot \delta^4}{2^{12}}>2n^{1+\delta},\]

since $|I''_k| \geq \frac{\delta^2}{6}\cdot \ceil{n^{4\delta}}$ and $n^{\delta}\geq 2^{14}/\delta^2$.

Therefore, the total weight of the meta-edges that remain at the end of the algorithm in $J$ is less than half the original total weight. In other words, the total weight of all meta-edges that were deleted from $J$ is at least $\frac{n^{1-\delta}\cdot \delta^2}{1024}\cdot |I''_k|$. Each of the deleted meta-edges is incident to some supernode $v(C)$, with $C\in \cset^{(k)}_s$ that was deleted from $J$. Recall that every vertex of such a cluster $C$ is promoted in the current phase.

Consider now some supernode $v(C)$ that was deleted from $J$ in the current phase. The total weight of all meta-edges incident to $v(C)$ in the original graph $J$ was at most $|V(C)|\cdot |I''_k|$, and each of the vertices of $C$ was promoted in phase $k$. Therefore, if we denote by $U$ is the set of all vertices of $V$ that were promoted in phase $k$, then the total weight of all meta-edges that were deleted from $J$ is at most $|U|\cdot |I''_k|$. Since, as shown above, the total weight of all such edges is at least $\frac{n^{1-\delta}\cdot \delta^2}{1024}\cdot |I''_k|$, and since we have assumed that $n^{\delta}\geq 2^{14}/\delta^2$, 
 we get that $|U|\geq \frac{n^{1-\delta}\cdot \delta^2}{1024} \geq n^{1-2\delta}$.
\end{proofof}

This concludes the proof of \Cref{thm: distancing-matching game - number of iterations}.
We obtain the following immediate corollary of the theorem.

\begin{corollary}\label{cor: distancing matching game number of edges}
	Consider a \DMG with parameters $n>0,0<\delta<1/4$ and $d$, such that $d\geq 2^{4/\delta}$, $n^{\delta}\geq \frac{2^{14}\log n}{\delta^2}$. Let $G$ be the graph that is obtained at the end of the game. Then $|E(G)|\leq n^{1+8\delta}$ holds, and every vertex of $G$ has degree at most $n^{8\delta}$.
	\end{corollary}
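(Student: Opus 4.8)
The plan is to derive everything from \Cref{thm: distancing-matching game - number of iterations} together with two completely elementary per-iteration bounds, so the argument will be very short. First I would note that the hypotheses of the corollary ($0<\delta<1/4$, $d\geq 2^{4/\delta}$, and $n^{\delta}\geq \frac{2^{14}\log n}{\delta^2}$) are exactly those of \Cref{thm: distancing-matching game - number of iterations}; hence the game lasts for some number of iterations $z$ with $z\leq n^{8\delta}$.

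Next I would bound $|E(G)|$. In iteration $i$ the matching player returns a matching $M_i\subseteq A_i\times B_i$, and since $A_i,B_i$ are disjoint, equal-cardinality subsets of the fixed $n$-vertex set $V$, we have $|M_i|\leq |A_i|\leq n/2$. By construction, the edge set of the final graph is $E(G)=\bigcup_{i=1}^{z}M_i$ (and the game keeps $G$ simple, so no parallel edges are ever created), so $|E(G)|=\sum_{i=1}^{z}|M_i|\leq z\cdot \tfrac n2\leq n^{8\delta}\cdot n = n^{1+8\delta}$, using $z\leq n^{8\delta}$ and the crude slack $n/2\leq n$.

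For the degree bound I would fix a vertex $v\in V$. In each iteration $i$, since $M_i$ is a matching, $v$ participates in at most one pair of $M_i$, hence at most one edge incident to $v$ is added in iteration $i$. Summing over all $z\leq n^{8\delta}$ iterations gives $\deg_G(v)\leq n^{8\delta}$. There is no real obstacle here: the only points worth stating explicitly are that the corollary's hypotheses coincide with the theorem's (so no extra conditions are needed) and that each $M_i$ contributes new edges only, so there is no overcounting. This completes the proof.
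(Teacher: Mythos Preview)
Your proof is correct and is essentially identical to the paper's own argument, which consists of the single observation that $E(G)$ is partitioned into at most $n^{8\delta}$ matchings (the responses of the matching player). You have simply spelled out that one-line justification in a bit more detail.
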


The corollary follows from the fact that the set $E(G)$ of edges is partitioned into at most $n^{8\delta}$ matchings -- the responses of the matching player in the game.

\section{Hierarchical Support Structure}
\label{sec: HSS}

A \HSS uses two parameters, an integer $N>0$, and another parameter $0<\eps\leq 1/4$. Throughout, we denote $\eps'=\eps^4$. For an integer $j\geq 1$, we also let $\eta_j=N^{6+256j\eps^2}$ and $\td_j=2^{cj/\eps^4}$, where $c$ is a sufficiently large constant.


For all $1\leq j\leq \ceil{1/\eps}$, we define a \emph{level}-$j$ \HSS for a graph containing $N^j$ vertices. The definition of the support structure is recursive.

\paragraph{Level-$1$ \HSS.} Given a graph $H$ with $|V(H)|=N$, level-$1$ \HSS for $H$ consists of a subset $S(H)$ of vertices of $H$, such that $|V(H)\setminus S(H)|\leq N^{1-\eps^4}$.

\paragraph{Level-$j$ \HSS.}
Consider now some integer $1<j\leq \ceil{1/\eps}$. Let $H$ be a graph with $|V(H)|=N^j$. A \emph{level-$j$ \HSS} for graph $H$ consists of the following:

\begin{itemize}
	\item a collection $\hset=\set{H_1,\ldots,H_{r}}$ of $r=N-\ceil{2N^{1-\eps^4}}$ graphs,  such that all vertices in sets $V(H_1),V(H_2),\ldots,V(H_{r})$ are mutually disjoint, and additionally, for all $1\leq i\leq r$:
 $V(H_i)\subseteq V(H)$; $|V(H_i)|=N^{j-1}$; and $|E(H_i)|\leq N^{j-1+32\eps^2}$ hold;

\item an embedding of the graph $H'=\bigcup_{i=1}^{r}H_i$ into graph $H$ via paths of length at most $2^{64/\eps^4}$, that causes congestion at most $N^{128\eps^2}$; 
\item for all $1\leq i\leq r$, a level-$(j-1)$ \HSS for graph $H_i$; and
\item a set $S(H)\subseteq V(H)$ of vertices, where $S(H)=\bigcup_{H_i\in \hset}S(H_i)$, and, for all $H_i\in \hset$, $S(H_i)$ is the set of vertices that is given as part of the level-$(j-1)$ \HSS for $H_i$.
\end{itemize}

Additionally, we require every graph $H_i\in \hset$ is $(\eta_{j-1},\td_{j-1})$-well-connected with respect to the set $S(H_i)$ of vertices.
We say that $\hset$ is the set of graphs associated with the level-$j$ \HSS for graph $H$.

This completes the definition of a \HSS. We will need to use the following simple claim.

\begin{claim}\label{claim size of supported}
	Let $1\leq j\leq \ceil{1/\eps}$ be an integer, and let $H$ be a graph with $|V(H)|=N^j$, together with a level-$j$ \HSS. Then $|V(H)\setminus S(H)|\leq |V(H)|\cdot \frac{4j}{N^{\eps^4}}$.
\end{claim}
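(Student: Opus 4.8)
The plan is to prove the bound by induction on the level $j$, tracking the fraction of "unsupported" vertices that accumulates across the recursive levels of the \HSS.

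\textbf{Base case.} For $j=1$, the level-$1$ \HSS directly guarantees $|V(H)\setminus S(H)|\leq N^{1-\eps^4} = |V(H)|\cdot N^{-\eps^4}$, which is at most $|V(H)|\cdot \frac{4\cdot 1}{N^{\eps^4}}$, so the claim holds.

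\textbf{Inductive step.} Suppose the claim holds for level $j-1$, and let $H$ be a graph with $|V(H)|=N^j$ equipped with a level-$j$ \HSS, with associated collection $\hset = \set{H_1,\ldots,H_r}$ where $r = N - \ceil{2N^{1-\eps^4}}$, each $|V(H_i)| = N^{j-1}$, the sets $V(H_i)$ are mutually disjoint, and $S(H) = \bigcup_{i=1}^r S(H_i)$. The vertices of $H$ not in $S(H)$ are of two types: (a) those lying in some $V(H_i)$ but not in $S(H_i)$, and (b) those lying in no $V(H_i)$ at all. For type (a), by the induction hypothesis applied to each $H_i$, we have $|V(H_i)\setminus S(H_i)| \leq N^{j-1}\cdot \frac{4(j-1)}{N^{\eps^4}}$, and summing over the $r \leq N$ graphs gives a contribution of at most $N\cdot N^{j-1}\cdot\frac{4(j-1)}{N^{\eps^4}} = N^j\cdot\frac{4(j-1)}{N^{\eps^4}}$. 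For type (b), the number of vertices of $H$ outside $\bigcup_i V(H_i)$ is $N^j - r\cdot N^{j-1} = N^j - (N-\ceil{2N^{1-\eps^4}})N^{j-1} = \ceil{2N^{1-\eps^4}}\cdot N^{j-1} \leq 3N^{j-\eps^4}$ (using $\ceil{2N^{1-\eps^4}} \le 3N^{1-\eps^4}$, valid since $N$ is sufficiently large), which is $N^j\cdot\frac{3}{N^{\eps^4}} \le N^j \cdot \frac{4}{N^{\eps^4}}$.

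\textbf{Combining.} Adding the two contributions yields
\[
|V(H)\setminus S(H)| \leq N^j\cdot\frac{4(j-1)}{N^{\eps^4}} + N^j\cdot\frac{4}{N^{\eps^4}} = N^j\cdot\frac{4j}{N^{\eps^4}} = |V(H)|\cdot\frac{4j}{N^{\eps^4}},
\]
completing the induction. The only mild subtlety — and the one place to be careful — is the bookkeeping on the type-(b) "leftover" vertices: one must check that $\ceil{2N^{1-\eps^4}}\cdot N^{j-1}$ is bounded by roughly $4N^{j-\eps^4}$, which follows from the standing assumption that $N$ is large enough that ceiling effects are negligible. Everything else is a straightforward summation, so there is no real obstacle here; the claim is essentially a telescoping accounting of the $N^{1-\eps^4}$-sized "loss" incurred at each of the $j$ recursive levels.
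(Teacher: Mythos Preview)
Your proof is correct and follows essentially the same approach as the paper: induction on $j$, splitting $V(H)\setminus S(H)$ into vertices uncovered by any $V(H_i)$ (your type (b), the paper's $V_2$) and vertices in some $V(H_i)\setminus S(H_i)$ (your type (a), the paper's $V_1''$), then bounding each piece separately. The only cosmetic difference is that the paper uses the cruder bound $\ceil{2N^{1-\eps^4}}\le 4N^{1-\eps^4}$ where you use $3N^{1-\eps^4}$, but both suffice.
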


\begin{proof}
	The proof is by induction on $j$. When $j=1$, then $|V(H)|=N$, and, from the definition of level-$1$ \HSS, $|V(H)\setminus S(H)|\leq N^{1-\eps^4}=\frac{|V(H)|}{N^{\eps^4}}\leq|V(H)|\cdot \frac{4j}{N^{\eps^4}} $.
	
	Consider now some integer $j>1$, and assume that the claim holds for $j-1$. Let $H$ be a graph with $|V(H)|=N^j$, for which a level-$j$ \HSS is given, and let $\hset$ be the collection of graphs associated with the structure.  Let $V_1=\bigcup_{H_i\in \hset}V(H_i)$ and $V_2=V(H)\setminus V_1$. From the definition of level-$j$ \HSS, $|\hset|=N-\ceil{2N^{1-\eps^4}}\geq N-4N^{1-\eps^4}$. Therefore, $|V_1|\geq N^j-\frac{4N^j}{N^{\eps^4}}$, and $V_2\leq \frac{4N^j}{N^{\eps^4}}=\frac{4|V(H)|}{N^{\eps^4}}$.
	
	Let $V'_1=\bigcup_{H_i\in \hset}S(H_i)$ and $V''_1=V_1\setminus V'_1$. Clearly, $V(H)\setminus S(H)=V''_1\cup V_2$. We now bound $|V''_1|$.
	
	Since, for every graph $H_i\in \hset$, by the induction hypothesis, $|V(H_i)\setminus S(H_i)|\leq |V(H_i)|\cdot \frac{4(j-1)}{N^{\eps^4}}=\frac{4(j-1)\cdot N^{j-1}}{N^{\eps^4}}$, and since $|\hset|<N$, we get that $|V''_1|\leq \frac{4(j-1)\cdot N^{j}}{N^{\eps^4}}=\frac{4(j-1)\cdot |V(H)|}{N^{\eps^4}}$.
	
	Altogether, we get that:
	
	\[|V(H)\setminus S(H)| =|V_1''|+|V_2|\leq \frac{4(j-1)\cdot |V(H)|}{N^{\eps^4}}+ \frac{4|V(H)|}{N^{\eps^4}}=\frac{4j\cdot |V(H)|}{N^{\eps^4}}\]
\end{proof}

The following theorem provides an algorithm for the Distancing Player in the Distanced Matching game. 

\begin{theorem}\label{thm: construct HSS}
	There is a large enough constant $c$, and a deterministic algorithm, whose input consists of a parameter $0<\eps<1/4$, a pair $N$, $1\leq j\leq \ceil{1/\eps}$ of integers, and a graph $H$ with $|V(H)|=N^j$, such that $N$ is sufficiently large, so that $\frac{N^{\eps^4}}{\log N}\geq 2^{128/\eps^5}$ holds. The algorithm computes one of the following:
	
	\begin{itemize}
		\item either a $(\delta_j,d)$-distancing $(A,B,E')$ in graph $H$, where $\delta_j=4j\eps^4$, $d=2^{32/\eps^4}$ and $|E'|\leq \frac{|A|}{N^{j\eps^4}}$; or
		\item a level-$j$ \HSS for $H$, such that  graph $H$ is $(\eta_j,\td_j)$-well-connected with respect to the set $S(H)$ of vertices defined by the support structure, where $\eta_j=N^{6+256j\eps^2}$ and $\td_j=2^{cj/\eps^4}$.
	\end{itemize}
	The running time of the algorithm is bounded by: \[cj\cdot N^{j(1+64\eps^2)+7}+c|E(H)|\cdot N^6. \]
\end{theorem}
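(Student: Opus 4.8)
The plan is to prove the statement by induction on the level parameter $j$, following the recursive structure of the \HSS definition. The base case $j=1$ is essentially trivial: given a graph $H$ with $|V(H)|=N$, if $H$ has a $(\delta_1,d)$-distancing we simply output it; otherwise we take $S(H)=V(H)$, and a level-$1$ \HSS is nothing more than this vertex set together with the bound $|V(H)\setminus S(H)|=0\le N^{1-\eps^4}$. (A small amount of care is needed because we also need to verify the trivial well-connectedness and running-time bounds, but these are immediate since the relevant routing is over a single vertex set.)

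For the inductive step, assume the theorem holds for level $j-1$, and let $H$ be given with $|V(H)|=N^j$. First I would partition $V(H)$ arbitrarily into $N$ subsets $V_1,\ldots,V_N$ of size $N^{j-1}$ each, and set up the \DMG (with level parameter $j-1$, distance parameter $d$, and the appropriate $\delta_{j-1}$) simultaneously on the $N$ empty graphs $H_1^{(0)},\ldots,H_N^{(0)}$ with vertex sets $V_1,\ldots,V_N$. The Distancing Player in each of these games is implemented by the inductive algorithm for level $j-1$, which either reports a $(\delta_{j-1},d)$-distancing in some $H_i$, or certifies that $H_i$ is well-connected with respect to a set $S(H_i)$ via a level-$(j-1)$ \HSS. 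The Matching Player is implemented via basic path peeling (\procpathpeel, Lemma~\ref{lem: path peel}) inside $H$: in each round, for every game that produced a distancing $(A_i,B_i,E'_i)$, we try to route a large matching $M_i\subseteq A_i\times B_i$ in $H$ along short paths of bounded congestion, after removing the edges of $\bigcup_i E'_i$. Crucially, the bound on the number of iterations of the \DMG from Corollary~\ref{cor: distancing matching game number of edges} (applied with $n=N^{j-1}$) guarantees that after at most $(N^{j-1})^{8\delta_{j-1}}=N^{O(j\eps^2)}$ rounds, each surviving game terminates, and the accumulated path-peeling routing gives an embedding of $\bigcup H_i$ into $H$ with congestion $N^{O(\eps^2)}$ and length $2^{O(1/\eps^4)}$, as required by the \HSS definition. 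If at any round the path peeling fails to route enough pairs, the leftover sets $A_i',B_i'$ together with the high-congestion edge set $E'$ form a distancing in $H$ (Property~\ref{pp: distancing}), and after aggregating over the $N$ games this yields the desired $(\delta_j,d)$-distancing in $H$ — here one checks the cardinality bookkeeping, using that $\delta_j=4j\eps^4$ is set so that $N^{1-\eps^4}$-sized sets at level $j-1$ aggregate to $(N^j)^{1-\delta_j}$-sized sets at level $j$.

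Assuming instead that the \DMG completes on a collection $\hset'=\{H_{i_1},\ldots,H_{i_{r'}}\}$ of $r'=\Omega(N)$ graphs, each well-connected with respect to its $S(H_{i_z})$, the remaining task is to certify that $H$ itself is well-connected with respect to $S(H)=\bigcup_z S(H_{i_z})$. For this I would, for every pair $z<z'$, attempt (again via path peeling in $H$, or via routing through the already-embedded graphs $H_{i_z}$ using Observation~\ref{obs: paths from embedding}) to connect $S(H_{i_z})$ to $S(H_{i_{z'}})$ by many short low-congestion paths. If this succeeds for enough pairs, a counting/merging argument extracts a subcollection $\hset''\subseteq\hset'$ of exactly $r=N-\ceil{2N^{1-\eps^4}}$ graphs such that any two $S(H_{i})$, $S(H_{i'})$ within $\hset''$ are richly connected in $H$; combined with the internal well-connectedness of each $H_i$, this yields, for any disjoint equal-size $A,B\subseteq S(H)$, a one-to-one routing of $A$ to $B$ in $H$ of the required length $\td_j$ and congestion $\eta_j$ — the length and congestion are obtained by composing an internal route inside the relevant $H_i$ with an inter-cluster route, so $\td_j$ and $\eta_j$ grow by a bounded multiplicative factor per level, giving $\td_j=2^{cj/\eps^4}$ and $\eta_j=N^{6+256j\eps^2}$. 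If instead too few pairs can be connected, the unconnected pairs of supported sets furnish a large distanced pair of vertex subsets in $H$ with a small separating edge set, i.e.\ the required $(\delta_j,d)$-distancing. The running-time bound $cj\cdot N^{j(1+64\eps^2)+7}+c|E(H)|\cdot N^6$ is assembled additively across levels: each level contributes the cost of running $N$ parallel \DMG's (each with $N^{O(j\eps^2)}$ rounds of path peeling, each round costing roughly $|E(H)|$ times small polynomial-in-$N$ factors) plus the all-pairs inter-cluster connection step, whose cost is dominated by $O(N^2)$ path-peeling invocations.

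The main obstacle I anticipate is the bookkeeping in the ``failure $\Rightarrow$ distancing'' direction at each stage: one must show that \emph{whenever} path peeling (either in the Matching Player implementation, or in the inter-cluster connection step) fails to route enough, the residual sets are large enough — at least $(N^j)^{1-\delta_j}$ vertices on each side — and the separating edge set is small enough, at most $|A|/N^{j\eps^4}$ of it; getting the exact exponents to line up requires carefully tracking how the $N^{1-\eps^4}$-type losses at level $j-1$ propagate up, and choosing the path-peeling congestion/length parameters $d,\eta$ at each level so the \DMG iteration bound from Corollary~\ref{cor: distancing matching game number of edges} still applies (this is exactly why we need $d\ge 2^{4/\delta}$ and the large-$N$ hypothesis $N^{\eps^4}/\log N\ge 2^{128/\eps^5}$). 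A secondary obstacle is ensuring the composed routings for well-connectedness genuinely give a \emph{one-to-one} routing of $A$ to $B$, which requires the inter-cluster connection step to produce matchings (not just flows) between the supported sets; this is handled by rounding/integralizing via the path-peeling output, but needs the ``few discarded terminals'' type guarantees analogous to Properties~\ref{prop: terminal subset new}--\ref{prop: cover all terminals}.
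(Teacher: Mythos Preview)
Your inductive outline matches the paper's two-phase structure (parallel \DMG on $N$ subgraphs, then pairwise connection), but there are two genuine gaps.

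\textbf{The base case is wrong.} You claim that if no $(\delta_1,d)$-distancing exists you may take $S(H)=V(H)$ and well-connectedness is ``immediate''. It is not. The absence of a $(\delta_1,d)$-distancing only says that no \emph{large} pair of vertex sets (each of size $\ge N^{1-\delta_1}$) can be $d$-separated by deleting few edges; it says nothing about singletons. Well-connectedness with respect to $S(H)$ requires, for \emph{every} disjoint equal-cardinality pair $A,B\subseteq S(H)$---including $|A|=|B|=1$---a routing of length at most $\td_1$. So two individual vertices of $S(H)$ can be at distance $\gg\td_1$ even when no distancing exists. The paper handles $j=1$ by running \procsep on $H$ with $T=V(H)$: either it outputs two separated sets (the distancing), or it outputs a single vertex $v$ whose ball $B_H(v,\Delta d)$ contains a $(1-1/N^{\eps'})$-fraction of $V(H)$, and one sets $S(H)=B_H(v,\Delta d)$. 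Only because every vertex of $S(H)$ is then within $2\Delta d\le\td_1$ of every other does well-connectedness follow (congestion $\le N<\eta_1$ is trivial). Your base case is also not algorithmic: ``if $H$ has a distancing we output it'' needs a procedure to find one.

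\textbf{The Phase-2 selection mechanism is different, and your version is underspecified.} You propose to check, for each pair $z<z'$, whether $S(H_{i_z})$ and $S(H_{i_{z'}})$ can be richly connected, and then ``a counting/merging argument extracts a subcollection $\hset''$'' of pairwise-connected graphs. The paper does \emph{not} select $\hset''$ this way. Instead it runs path peeling on every pair (accumulating a small edge set $E'$), then applies \procsep to $H\setminus E'$ with $T=V(H)$. Either \procsep returns the distancing directly, or it returns a vertex $v$ with $|B_{\tilde H}(v,\Delta d)|\ge (1-1/(8N^{\eps'}))|V(H)|$, and $\hset'$ is defined as those $H_i$ with at least $7/8$ of their vertices in this ball. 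Rich connectivity between pairs in $\hset'$ is then a \emph{consequence} (Observation~\ref{obs: good graphs can route}), not the selection criterion. Your pair-counting approach would need an additional argument that ``many pairs unconnectable'' yields a single global distancing of the right size in $H$, which is not obvious---the unconnectable pairs could be scattered.

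Finally, you gloss over the well-connectedness certificate (the paper's Lemma~\ref{lem: certificate}). Getting a genuine one-to-one routing between arbitrary $A,B\subseteq S(H)$ is not just ``compose internal with inter-cluster''; the paper needs a three-step argument with a bipartite routing meta-graph and an integral-flow rounding (Lemma~\ref{lem: global}) to balance how many paths enter and leave each $H_i$, precisely the issue you flag as your ``secondary obstacle''.
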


We prove \Cref{thm: construct HSS} in \Cref{sec: HSS algorithm}. Note that \Cref{thm: construct HSS last level} follows from the theorem directly, by setting $j=1/\eps$. 

The following immediate corollary of the theorem can be used in order to either embed a large graph $H$ into an input graph $G$, and construct a \HSS for $H$, so that graph $H$ is well-connected with respect to the resulting set $S(H)$ of vertices given by the \HSS; or to compute a distancing in graph $G$. The latter can in turn be used in order to compute a sparse cut in $G$, via \Cref{lem: distancing to sparse cut}. We state the corollary in a slightly more general form that will be helpful for us later: we assume that, together with graph $G$, we are given a subset $T$ of its vertices called terminals, and that we are interested in embedding a large graph $H$ into $G$ with $V(H)\subseteq T$.

\begin{corollary}\label{cor: HSS witness}
	There is a deterministic algorithm, whose input consists of an $n$-vertex graph $G$, a set $T$ of $k$ vertices of $G$ called terminals, and parameters $\frac{2}{(\log k)^{1/12}}< \eps<\frac{1}{400}$, $d>1$ and $\eta>1$, such that $1/\eps$ is an integer. The algorithm computes one of the following:

	\begin{itemize}
		\item either a pair $T_1,T_2\subseteq T$ of disjoint subsets of terminals, and a set $E'$ of edges of $G$, such that:
		
		\begin{itemize}
			\item $|T_1|=|T_2|$ and $|T_1|\geq \frac{k^{1-4\eps^3}}{4}$;
			\item $|E'|\leq \frac{d\cdot |T_1|}{\eta}$; and
			\item for every pair $t\in T_1,t'\in T_2$ of terminals, $\dist_{G\setminus E'}(t,t')>d$;
		\end{itemize}

	\item or a graph $H$ with $V(H)\subseteq T$, $|V(H)|=N^{1/\eps}\geq k-k^{1-\eps/2}$, where $N=\floor{k^{\eps}}$,  and maximum vertex degree at most   $k^{32\eps^3}$, together with an embedding $\pset$ of $H$ into $G$ via paths of length at most $d$ that cause congestion at most $\eta\cdot k^{32\eps^3}$, and a level-$(1/\eps)$ \HSS for $H$, such that $H$ is $(\eta',\td)$-well-connected with respect to the set $S(H)$ of vertices defined by the support structure, where $\eta'=N^{6+256\eps}$, and $\td=2^{c/ \eps^5}$, with $c$ being the constant used in the definition of the \HSS.
	\end{itemize}
The running time of the algorithm is  $O\left (k^{1+O(\eps)}+|E(G)|\cdot k^{O(\eps^3)}\cdot(\eta+d\log n)\right )$.
\end{corollary}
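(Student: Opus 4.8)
The plan is to derive the corollary from the \DMG of \Cref{sec: distanced matching game}, using \Cref{thm: construct HSS last level} (equivalently \Cref{thm: construct HSS} with $j=1/\eps$) as the Distancing Player, and Procedure \procpathpeel (\Cref{lem: path peel}) as the Matching Player, routing inside $G$. Set $N=\floor{k^{\eps}}$ and $n'=N^{1/\eps}$. Since $N\le k^{\eps}$ we have $n'\le k$, so we may fix an arbitrary $V_H\subseteq T$ with $|V_H|=n'$, and a short computation from the hypothesis $\eps>2/(\log k)^{1/12}$ gives $n'\ge k-k^{1-\eps/2}$. We run a \DMG on the vertex set $V_H$ with parameters $n'$, $\delta=4\eps^3$, and $d^*=2^{32/\eps^4}$; call the graph being built $H$ (initially edgeless). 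In iteration $i$, the Distancing Player invokes \Cref{thm: construct HSS last level} on the current $H$ (with base parameter $N$): it returns either a $(\delta,d^*)$-distancing $(A_i,B_i,E'_i)$ with $|E'_i|\le|A_i|/N^{\eps^3}$, or a level-$(1/\eps)$ \HSS certifying that $H$ is $(N^{6+256\eps},2^{c/\eps^5})$-well-connected with respect to $S(H)$. Since $d^*=2^{32/\eps^4}\ge 2^{1/\eps^3}=2^{4/\delta}$ and $|E'_i|\le|A_i|/N^{\eps^3}\le|A_i|/16$, these distancings are legal \DMG moves.

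\textbf{Matching Player.} In iteration $i$, apply \procpathpeel to $G$ with the single pair of vertex sets $(A_i,B_i)$, length bound $d$, and congestion bound $\eta$, obtaining a routing $\pset_i$. If $|\pset_i|\ge|A_i|/4$: discard from $\pset_i$ the at most $|E'_i|\le|A_i|/N^{\eps^3}\le|A_i|/8$ routed pairs $(u,v)$ with $(u,v)\in E'_i$ (these are the only pairs of $A_i\times B_i$ that can be edges of the current $H$, by the distancing guarantee, so this also keeps $H$ simple); the remaining pairs form a matching $M_i\subseteq A_i\times B_i$ with $|M_i|\ge|A_i|/8$ containing no pair of $E'_i$ — a valid Matching-Player response. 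Add $M_i$ to $H$ and append $\pset_i$ to the embedding $\pset:=\bigcup_i\pset_i$. If instead $|\pset_i|<|A_i|/4$, stop and output the first alternative: by Property (\ref{pp: distancing}) the unrouted sets $A'_i\subseteq A_i$, $B'_i\subseteq B_i$ satisfy $|A'_i|=|B'_i|>\frac34|A_i|$ and $\dist_{G\setminus E'}(A'_i,B'_i)>d$, where $E'$ is the set of edges used by exactly $\eta$ paths of $\pset_i$; since $\eta|E'|\le\sum_{P\in\pset_i}|E(P)|\le|\pset_i|d<\frac14|A_i|d$ we get $|E'|\le d|A'_i|/\eta$, and since $|A_i|\ge(n')^{1-\delta}$ and $n'\ge k-k^{1-\eps/2}\ge k/2$ we get $|A'_i|\ge k^{1-4\eps^3}/4$. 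So $T_1=A'_i$, $T_2=B'_i$, $E'$ is the required output.

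\textbf{Termination, second alternative, and running time.} As long as the Matching Player never stops, every iteration is a legal \DMG move, so by \Cref{thm: distancing-matching game - number of iterations} (whose preconditions $d^*\ge 2^{4/\delta}$ and $(n')^{\delta}\ge 2^{14}\log n'/\delta^2$ again follow from $\eps>2/(\log k)^{1/12}$) there are at most $(n')^{8\delta}=N^{32\eps^2}\le k^{32\eps^3}$ iterations; hence the Distancing Player eventually returns the level-$(1/\eps)$ \HSS, and we output the second alternative: the graph $H$ (with $V(H)=V_H\subseteq T$, $|V(H)|=n'\ge k-k^{1-\eps/2}$), this \HSS, and the embedding $\pset$. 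Each vertex of $V_H$ lies in at most one pair of $M_i$ per iteration, so $H$ has maximum degree at most $N^{32\eps^2}\le k^{32\eps^3}$; each $\pset_i$ has length at most $d$ and congestion at most $\eta$ in $G$, so $\pset$ has length at most $d$ and congestion at most $\eta N^{32\eps^2}\le\eta k^{32\eps^3}$; and the well-connectedness parameters $\eta'=N^{6+256\eps}$, $\td=2^{c/\eps^5}$ are exactly those of \Cref{thm: construct HSS} for $j=1/\eps$. For the running time: there are at most $k^{32\eps^3}$ iterations; by \Cref{cor: distancing matching game number of edges} $|E(H)|\le(n')^{1+8\delta}\le k^{1+O(\eps^3)}$, so each Distancing-Player call costs $O(|E(H)|^{1+O(\eps)})=O(k^{1+O(\eps)})$, and each \procpathpeel call costs $O(|E(G)|\eta+|E(G)|d\log n)$; summing gives $O\!\left(k^{1+O(\eps)}+|E(G)|\cdot k^{O(\eps^3)}\cdot(\eta+d\log n)\right)$.

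\textbf{Main obstacle.} The conceptual content — "\DMG with \Cref{thm: construct HSS last level} as Distancing Player and \procpathpeel as Matching Player" — is clean, and the real work is bookkeeping. The two points needing the most care are: (i) checking that the single hypothesis $\eps>2/(\log k)^{1/12}$ is strong enough to simultaneously meet every "$N$ is large enough" precondition of \Cref{thm: construct HSS last level} and of \Cref{thm: distancing-matching game - number of iterations} (the unusual exponent $1/12$ is tuned precisely so that $\eps^5\log k$ dominates both $128/\eps^5$ and the lower-order $\log\log k$-type terms coming from $\tfrac{N^{\eps^4}}{\log N}\ge 2^{128/\eps^5}$); and (ii) the exponent chase turning the $N^{\Theta(\eps^2)}$ bounds on the number of iterations, the degree of $H$, and the congestion of $\pset$ into the stated $k^{\Theta(\eps^3)}$ bounds via $N\le k^{\eps}$, while keeping the Matching-Player threshold $|A_i|/4$ consistent with both the \DMG requirement $|M_i|\ge|A_i|/8$ (after discarding the pairs in $E'_i$) and the output requirement $|T_1|\ge k^{1-4\eps^3}/4$. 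None of these steps is deep; the main risk is an off-by-a-constant in one of the exponents.
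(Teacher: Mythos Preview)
Your proposal is correct and follows essentially the same approach as the paper: run the \DMG on a subset of $T$ of size $N^{1/\eps}$, use \Cref{thm: construct HSS last level} as the Distancing Player and \procpathpeel (with parameters $d,\eta$ in $G$) as the Matching Player, terminating either with the \HSS (second alternative) or when \procpathpeel fails to route enough pairs (first alternative). The only cosmetic differences are your choice of threshold $|\pset_i|\ge|A_i|/4$ versus the paper's $|A_i|/2$, and your use of the sharper bound $|E'_i|\le|A_i|/N^{\eps^3}$ from \Cref{thm: construct HSS} rather than the distancing definition's $|A_i|/16$; both choices work and the resulting constants are absorbed as you indicate.
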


\begin{proof}
	Let $q=1/\eps$, let $N=\floor{k^{\eps}}$, and let $T'\subseteq T$ be any subset of $N^q$ terminals. 
	Observe that:
	
\begin{equation}\label{eq: number of terminals}
N^q= \floor{k^{ \eps}}^{1/ \eps}\geq \left(k^{ \eps}-1\right )^{1/ \eps}=k\cdot \left(1-\frac{1}{k^{ \eps}}\right )^{1/ \eps}\geq k\cdot \left (1-\frac{1}{k^{ \eps}\cdot \eps}\right)\geq k-\frac{k^{1- \eps}}{ \eps}\geq k-k^{1-\eps/2} 
\end{equation}
	
		(we have used the fact that for all $0<\delta<1$ and $a>2$, $(1-\delta)^a\geq 1-\delta a$).

	We will attempt to construct a graph $H$ with $V(H)=T'$, together with an embedding $\pset$ of $H$ into $G$ with congestion at most $\eta\cdot k^{32\eps^3}$ and path lengths at most $d$, and a  level-$q$ \HSS for $H$, such that $H$ is $(\eta',\td)$-well-connected with respect to the set $S(H)$ of vertices defined by the support structure. If we fail to construct such a graph $H$, we will compute the sets $T_1,T_2\subseteq T$ of terminals and the set $E'\subseteq E(G)$ of edges as required.

	We will employ the distanced-matching game on graph $H$ with parameters $n=|T'|$, $\delta=4\eps^3$, and distance parameter $d'=2^{32/\eps^4}$.

	 We will use the algorithm from 
	\Cref{thm: construct HSS} for the distancing player, with parameter $j=q$, and parameters $N$ and $\eps$ remaining unchanged. In order to be able to use the algorithm, we need to verify that
	$\frac{N^{\eps^4}}{\log N}\geq 2^{128/\eps^5}$ holds. 
Recall that, from the conditions of \Cref{cor: HSS witness}, $\frac{2}{(\log k)^{1/12}}< \eps<\frac{1}{400}$. Therefore, $\log k>(2/\eps)^{12}$
	and $k>2^{(2/\eps)^{12}}$. Moreover, from the above calculations, $\frac{\log k}{\log\log k}>\frac{1}{\eps^8}$ holds, and so $k^{\eps^8}>\log k$ must hold.
	Altogether, we get that:

\begin{equation}\label{eq: param math}
 \frac{N^{\eps^4}}{\log N}\geq \frac{k^{ \eps^5}}{2\cdot \log k}\geq 
k^{ \eps^6}\geq 2^{128/\eps^5}. 
\end{equation}

We will bound the number of iterations of the Distanced Matching game via \Cref{thm: distancing-matching game - number of iterations}. In order to use \Cref{thm: distancing-matching game - number of iterations}, we need to verify that $\frac{|T'|^{\delta}}{\log (|T'|)}\geq \frac{2^{14}}{\delta^2}$, or, equivalently: $\frac{|T'|^{4\eps^3}}{\log (|T'|)}\geq \frac{2^{10}}{\eps^6}$.
Recall that $|T'|=N^q=N^{1/\eps}$, and so 
$\frac{|T'|^{4\eps^3}}{\log (|T'|)}=\frac{\eps\cdot N^{4\eps^2}}{\log N}\geq \frac{N^{\eps^4}}{\log N}\geq 2^{128/\eps^5}\geq \frac{2^{10}}{\eps^6}$ from Equation \ref{eq: param math}, and since $\eps\leq 1/400$. We also need to verify that $d'\geq 2^{4/\delta}$. Since $d'=2^{32/\eps^4}$ and  $\delta=4\eps^3$, this is immediate to verify.
From \Cref{thm: distancing-matching game - number of iterations}, we can now conclude that the number of iterations in a Distanced Matching game with parameters $n=|T'|$, $\delta=4\eps^3$, and distance parameter $d'=2^{32/\eps^4}$ is bounded by $|T'|^{8\delta}\leq k^{32\eps^3}$.

We start with a graph $H$ whose vertex set is $V(H)=T'$, and edge set is $E(H)=\emptyset$, and then iterate. In each iteration $i$, we will add some set $E_i$ of edges to graph $H$, and we will define an embedding $P(e)$ for every edge $e\in E_i$. We now describe the execution of a single iteration.

\subsection*{Execution of Iteration $i$.}
We apply the algorithm from \Cref{thm: construct HSS} to the current graph $H$, with parameter $j=q$, and parameters $N,\eps$ remaining unchanged. Note that $|V(H)|=|T'|=N^q$, and, as we have established above, $\frac{N^{\eps^4}}{\log N}\geq 2^{128/\eps^5}$ holds.

We now consider two cases. The first case is when the algorithm from \Cref{thm: construct HSS} returns a $(\delta_q,d')$-distancing $(X_i,Y_i,E'_i)$ in graph $H$, where $\delta_q=4j\eps^4=4q\eps^4=4\eps^3=\delta$ (since $q=1/\eps$), and $d'=2^{32/\eps^4}$. In this case, we say that iteration $i$ is \emph{regular}. We view this distancing as the response of the distancing player in iteration $i$ of the Distanced Matching game that we play on graph $H$.

We then apply Procedure \procpathpeel from \Cref{lem: path peel} to graph $G$ and sets $A_1=X_i,B_1=Y_i$ of its vertices, together with parameters $d$ and $\eta$ from the statement of \Cref{cor: HSS witness}. Let $\qset_1$ denote the collection of paths that the algorithm returns.
Recall that every path in $\qset_1$ connects some vertex of $X_i$ to a vertex of $Y_i$, and that every vertex of $X_i\cup Y_i$ may serve as an endpoint of at most one such path. We let $M_i\subseteq X_i\times Y_i$ be the matching that is defined by the paths in $\qset_1$: a pair $(x,y)$ of vertices with $x\in X_i,y\in Y_i$ is added to $M_i$ iff some path $Q(x,y)\in \qset_1$ has endpoints $x,y$. 
 We again consider two cases. The first case happens if $|M_i|\geq |X_i|/2$. In this case, we say that iteration $i$ is successful. We obtain a collection $E_i\subseteq M_i$ of edges as follows: we start with $E_i=M_i$, and we delete from $E_i$ all pairs $(x,y)$ of vertices where edge $(x,y)$ lies in $E'_i$. Since, from the definition of distancing, $|E'_i|\leq |X_i|/16$, we get that $|E_i|\geq |X_i|/4$. We let $\pset_i=\set{Q(x,y)\mid (x,y)\in E_i}$ be the collection of paths that route the pairs of vertices in $E_i$. We add the edges of $E_i$ to graph $H$, and we view $E_i$ as the response of the matching player in iteration $i$. We also view the set $\pset_i$ of paths in graph $H$ as an embedding of the set $E_i$ of edges. Recall that each path in $\pset_i$ has length at most $d$, and the paths in $\pset_i$ cause congestion at most $\eta$. We then continue to the next iteration.
 
 The second case happens if $|M_i|<|X_i|/2$. In this case we say that iteration $i$ is unsuccessful. Let $E'$ be the set of all edges $e$ in graph $G$ that participate in exactly $\eta$ paths in $\qset_1$. Let $X'\subseteq X_i$ and $Y'\subseteq Y_i$ be the sets of vertices that do not serve as endpoints of the paths in $\qset_1$. Recall that \Cref{lem: path peel} guarantees (via Property \ref{pp: distancing}) that 
 the length of the shortest path connecting a vertex of $X'$ to a vertex of $Y'$ in $G\setminus E'$ is greater than $d$.
 
 Recall that $|X'|=|Y'|\geq \frac{|X_i|}{2}\geq \frac{|T'|^{1-\delta}}{2}\geq \frac{k^{1-\delta}}{4}\geq \frac{k^{1-4\eps^3}}{4}$. Since every path in $\qset_1$ has length at most $d$, we get that $\sum_{Q\in \qset_1}|E(Q)|\leq d\cdot \frac{|X_i|}{2}\leq d\cdot |X'|$. Since set $E'$ contains edges that participate in $\eta$ paths in $\qset_1$, we get that: $|E'|\leq \frac{d\cdot |X'|}{\eta}$. We return the set $E'$ of edges and the sets $T_1=X'$, $T_2=Y'$ of terminals. From the above discussion, $|T_1|=|T_2|$, $|T_1|\geq \frac{k^{1-4\eps^3}}{4}$, and $|E'|\leq \frac{d\cdot |T_1|}{\eta}$ hold as required. Moreover, for every pair $t\in T_1,t'\in T_2$ of terminals, $\dist_{G\setminus E'}(t,t')>d$.
 
 It remains to consider the second case, when the algorithm from \Cref{thm: construct HSS}  constructs a level-$q$ \HSS for $H$, such that  graph $H$ is $(\eta_q,\td_q)$-well-connected with respect to the set $S(H)$ of vertices defined by the support structure, where  $\td_q=2^{cq/\eps^4}=2^{c/\eps^5}=\td$, and:

\[\eta_q=N^{6+256q\eps^2}=N^{6+256\eps}=\eta'.\]

In this case, we say that iteration $i$ is irregular.
Recall that $|V(H)|=|T'|=N^q\geq k-k^{1-\eps/2}$ from Inequality \ref{eq: number of terminals}. As observed already, the number of iterations in the Distanced Matching game is bounded by $k^{32\eps^3}$, and so the maximum vertex degree in $H$ is bounded by  $k^{32\eps^3}$, and the number of edges in graph $H$ is bounded by $k^{1+32\eps^3}$ throughout the algorithm. If we denote by $z\leq k^{32\eps^3}$ the number of iterations in the Distanced Matching game, then for all $1\leq i\leq z$, we have constructed a set $\pset_i$ of paths in graph $H$ embedding the edges of $E_i$. The paths in $\pset_i$ have length at most $d$ each, and they cause congestion at most $\eta$ in $G$. By letting $\pset=\bigcup_{i=1}^z\pset_i$, we obtain an embedding of graph $H$ into $G$ via paths of length at most $d$, that cause congestion at most $\eta\cdot z\leq \eta\cdot k^{32\eps^3}$. We output graph $H$, its embedding $\pset$, and the level-$q$ \HSS for $H$.

This completes the description of a single iteration. It now remains to bound the running time of the algorithm.

\subsection*{Running Time Analysis}
Recall that, throughout the algorithm, $|E(H)|\leq k^{1+32\eps^3}$ holds, and that the number of regular successful iterations in the algorithm is at most $k^{32\eps^3}$. Additionally, there could be at most one irregular iteration, and at most one regular but unsuccessful iteration.

We now bound the running time of a single iteration. This running time is dominated by the running times of the algorithms from \Cref{thm: construct HSS} and \Cref{lem: path peel}.

The former is bounded by:

\[O(q\cdot N^{q(1+64\eps^2)+7}+|E(H)|\cdot N^6)\leq O\left (k^{1+O(\eps)}+k^{1+32\eps^3}\cdot k^{O(\eps)}\right )\leq O\left (k^{1+O(\eps)}\right ). \]

The latter is bounded by:
$O(|E(G)|(\eta+d\log n))$.

The total running time of the algorithm is then bounded by:

\[  O\left (k^{1+O(\eps)}+|E(G)|\cdot k^{O(\eps^3)}\cdot(\eta+d\log n)\right ).\]
\end{proof}

\section{Algorithm for the Distancing Player -- Proof of \Cref{thm: construct HSS}}
\label{sec: HSS algorithm}

This section is dedicated to proving \Cref{thm: construct HSS}. 
Throughout the proof we use the following three parameters: $\eps'=\eps^4$;  $\Delta=64/\eps'$; and $d'=2d\cdot \Delta$. Note that:

\begin{equation}\label{eq: bound on d'}
d'=2\Delta d=\frac{128d}{\eps'}=\frac{128\cdot 2^{32/\eps^4}}{\eps^4}\leq 2^{64/\eps^4}<\frac 1 4\cdot 2^{c/\eps^4}, 
\end{equation}

since $c$ is large enough.

The proof is by induction on $j$.
We start with the base case, where $j=1$.

\subsection*{Base Case: $j=1$}

We assume that we are given a graph $H$ on $N$ vertices. Our goal is to  either compute a $(\delta_1,d)$-distancing in graph $H$, or to construct a set $S(H)\subseteq V(H)$ of vertices, such that $|V(H)\setminus S(H)|\leq N^{1-\eps^4}$, and graph $H$ is $(\eta_1,\td_1)$-well-connected with respect to $S(H)$.

We apply Algorithm \procsep from \Cref{lem: find separation of terminals} to graph $H$, with the set $T=V(H)$ of terminal vertices, with distance parameters $d$ and $\Delta$ remaining the same, and parameter $\alpha=1-\frac{1}{N^{\eps'}}$. 

Assume first that the outcome of the algorithm is a pair $T_1,T_2\subseteq V(H)$ of subsets of vertices, with $|T_1|=|T_2|$, such that for every pair $t\in T_1$, $t'\in T_2$ of vertices, $\dist_H(t,t')\geq d$, and additionally:

\[|T_1|\geq N^{1-64/\Delta}\cdot \min\set{(1-\alpha),\frac 1 3}\geq N^{1-2\eps'}.\]

(recall that $\Delta=64/\eps'$ and $N^{\eps'}>3$).
In this case, since $\delta_1=4\eps'$, we obtain a $(\delta_1,d)$-distancing $(T_1,T_2,\emptyset)$ in graph $H$. We return this distancing as the outcome of the algorithm.

Otherwise, Procedure \procsep must return a vertex $v\in V(H)$, with $|B_H(v,\Delta\cdot d)|>\alpha\cdot N=N\cdot \left(1-\frac{1}{N^{\eps'}}\right )$.
In this case, we set $S(H)=B_H(v,\Delta\cdot d)$, and we report that graph $H$ is $(\eta_1,\td_1)$-well-connected with respect to $S(H)$. We also return $S(H)$ as level-$1$ \HSS for $H$. Note that $|V(H)\setminus S(H)|\leq N^{1-\eps'}=N^{1-\eps^4}$ as required. It remains to show that graph $H$ is indeed $(\eta_1,\td_1)$-well-connected with respect to $S(H)$. Recall that $\eta_1>N$ and $d'<\td_1$ from Inequality \ref{eq: bound on d'}, since $\td_1=2^{c/\eps^4}$. Let $A,B\subseteq S(H)$ be any pair of equal-cardinality subsets of vertices. We define an arbitrary perfect matching $M\subseteq A\times B$ between vertices of $A$ and vertices of $B$. Consider now any pair $(a,b)\in M$ of matched vertices. Since $a,b\in B_H(v,\Delta\cdot d)$, there is a path $P(a,b)$ connecting $a$ to $b$ in $H$ of length at most $2\Delta \cdot d=d'\leq \td_1$. We then let $\pset(A,B)=\set{P(a,b)\mid (a,b)\in M}$ be the resulting collection of paths, that routes every vertex of $A$ to a distinct vertex of $B$. The length of every path in $\pset(A,B)$ is at most $\td_1$, and the congestion caused by the paths in $\pset(A,B)$ is at most $|\pset(A,B)|\leq |V(H)|\leq N<\eta_1$.

The running time of the algorithm is dominated by the running time of Procedure \procsep, which is bounded by $O(|E(H)|\cdot N^{64/\Delta})\leq O(|E(H)|\cdot N^{\eps'})<c|E(H)|\cdot N^6$, if $c$ is large enough, since $\Delta=64/\eps'$.

\subsection*{Step: $1<j\leq h$}
We now assume that we are given some integer $1<j\leq \ceil{1/\eps}$, such that the statement of \Cref{thm: construct HSS} holds for $j-1$, and we prove the statement of the theorem for $j$. Recall that we are given as input a graph $H$ with $|V(H)|=N^j$.

We partition the set $V(H)$ of vertices into $N$ subsets $V_1,\ldots,V_N$, each of which contains exactly $N^{j-1}$ vertices. The algorithm consists of two phases. In the first phase, we run the \DMG in parallel on $N$ graphs $H_1,\ldots,H_N$, where for all $1\leq i\leq N$, $V(H_i)=V_i$. We will add edges to graphs $H_1,\ldots,H_N$ gradually via the \DMG, while computing an embedding of all resulting edges into graph $H$, so that the resulting embedding paths have sufficiently low length and cause sufficiently low congestion. In this phase, we will either compute the required $(\delta_j,d)$-distancing in  $H$, or we will be able to successfully complete the \DMG on a sufficiently large collection $\hset'\subseteq \set{H_1,\ldots,H_N}$ of the graphs. In the latter case, for each such graph $H_i\in \hset'$, we will also compute a level-$(j-1)$ \HSS for $H_i$. If the outcome of the first phase is a $(\delta_j,d)$-distancing for $H$, then we terminate the algorithm and return this distancing. Otherwise, we continue to the second phase. In the second phase we will exploit the graphs in $\hset'$ to either compute a $(\delta_j,d)$-distancing in graph $H$, or to compute a  
subset $\hset''\subseteq \hset'$ of $r$ graphs, so that, if we let $S(H)=\bigcup_{H_i\in \hset''}S(H_i)$, then graph $H$ is $(\delta_j,\td_j)$-well-connected with respect to the set $S(H)$ of vertices. 
We now describe each of the two phases in turn.

\subsection{Phase 1: Construction of Smaller Well-Connected Graphs}

In this phase, we gradually construct a collection $\hset=\set{H_1,\ldots,H_N}$ of graphs, over the course of at most $N^{64\eps^2}$ iterations, by running the \DMG over these graphs in parallel with parameters $\delta=\delta_{j-1}$ and $d=2^{32/\eps^4}$. Initially, for all $1\leq i\leq N$, we let $V(H_i)=V_i$ and $E(H_i)=\emptyset$. In every iteration $q\geq 1$, we will compute, for all $1\leq i\leq N$, a partial matching $E_i^q$ over the vertices of $V_i$. We will ensure that either $E_i^q=\emptyset$, or $|E_i^q|\geq N^{(j-1)(1-\delta_{j-1})}/8$. The edges of $E_i^q$ are then added to graph $H_i$. Additionally, in the $q$th iteration, we will compute an embedding $\pset^q$ of all edges in $\bigcup_{i=1}^NE_i^q$ into $H$. 
We now describe a single iteration $q$.

\subsubsection{Description of Iteration $q$}
Using the induction hypothesis, we apply the algorithm from \Cref{thm: construct HSS} to each of the graphs $H_i\in \hset$, with parameter $(j-1)$. We denote by $\hset^1\subseteq \hset$ the collection of all graphs $H_i$, for which the algorithm returned a $(\delta_{j-1},d)$-distancing in $H_i$, and we denote by $\hset^2=\hset\setminus \hset^1$ all remaining graphs. Recall that, for each graph $H_i\in \hset^2$, the algorithm computes a level-$(j-1)$ \HSS. 
We now consider two cases, depending on whether $|\hset^1|>N^{1-\eps'}$.

\paragraph{Case 1: $|\hset^1|>N^{1-\eps'}$.}
If this case happens, then we say that iteration $q$ is \emph{regular}. Recall that, for every graph $H_i\in \hset^1$, the algorithm from \Cref{thm: construct HSS} returned a $(\delta_{j-1},d)$- distancing $(A_i,B_i,E'_i)$, where $|A_i|=|B_i|\geq |V(H_i)|^{1-\delta_{j-1}}=N^{(j-1)(1-\delta_{j-1})}$, and $|E'_i|\leq \frac{|A_i|}{N^{(j-1)\eps'}}$.

Let $z'=\ceil{\frac{2\log N}{\eps}}$.
We further partition the collection $\hset^1$ of graphs into subsets $\hset^1_0,\ldots,\hset^1_{z'}$, where for all $0\leq z\leq z'$, class $\hset^1_z$ contains all graphs $H_i\in \hset^1$, for which $\frac{N^{j-1}}{2^{z+1}}\leq  |A_i|< \frac{N^{j-1}}{2^{z}}$ holds. Clearly, there must be an index $0\leq z\leq z'$, such that:

\begin{equation}
|\hset^1_z|\geq \frac{|\hset^1|\cdot \eps}{4\log N}\geq \frac{N^{1-\eps'}\cdot \eps}{4\log N}\geq N^{1-2\eps'}.\label{H1z bound}
\end{equation}

(since $\frac{N^{\eps'}}{\log N}\geq 2^{128/\eps^5}$ from the statement of \Cref{thm: construct HSS}).

From now on we fix this index $z$. 
Since, for all $H_i\in \hset^1$, $|A_i|\geq N^{(j-1)(1-\delta_{j-1})}\geq N^{j-j\delta_{j-1}-1}$, we get that:

\begin{equation}\label{eq: bound on z}
2^z\leq N^{j\delta_{j-1}}.
\end{equation}

For convenience, we denote by $I\subseteq \set{1,\ldots,N}$ the set of all indices $i$ with $H_i\in \hset^1_z$.
Next, we apply Algorithm \procpathpeel from \Cref{lem: path peel}, to graph $H$, collections of subsets $\set{(A_i,B_i)}_{i\in I}$ of its vertices, length parameter $d'$, and congestion parameter $\eta=N^{8\eps^3}$. Consider the resulting collections $\set{\pset^q_i}_{i\in I}$ of paths. Recall that, from  \Cref{lem: path peel}, for all $i\in I$, every path in the corresponding set $\pset^q_i$ has length at most $d'$, and it connects a vertex of $A_i$ to a vertex of $B_i$, so that the endpoints of all paths in $\pset_i$ are distinct. Therefore, we can use the paths in $\pset^q_i$ in order to define a partial matching $M_i^q$ between vertices of $A_i$ and vertices of $B_i$: a pair $(a,b)\in A_i\times B_i$ of vertices is added to $M_i^q$ iff some path of $\pset^q_i$ has endpoints $a,b$. We denote by $A'_i\subseteq A_i$ and $B'_i\subseteq B_i$ the subsets of vertices of $A_i$ and $B_i$ respectively, that do not serve as endpoints of the paths in $\pset_i^q$. Let $E^*\subseteq E(H)$ be the set of all edges $e$ that participate in exactly $\eta$ paths of $\bigcup_{i\in I}\pset^q_i$. Recall that \Cref{lem: path peel} further guarantees that the paths of $\bigcup_{i\in I}\pset^q_i$ cause congestion at most $\eta$ in $H$. Moreover, if we consider graph $H\setminus E^*$, then for all $i\in I$, the length of a shortest path connecting a vertex of $A'_i$ to a vertex of $B'_i$ is greater than $d'$.

Running time of Algorithm \procpathpeel is $O(|E(H)|\cdot (\eta+jNd'\log N))\leq O(|E(H)|\cdot (N^{8\eps^3}+jNd'\log N))\leq O(|E(H)|\cdot jNd'\log N)$, since $|\hset|=N$.

We say that a graph $H_i\in \hset^1_z$ is \emph{successful} if $|\pset_i^q|\geq |A_i|/4$, and it is unsuccessful otherwise. We let $\gset^s\subseteq \hset^1_z$ be the collection of all successful graphs, and $\gset^u=\hset^1_z\setminus \gset^s$ the collection of all unsuccessful graphs. For convenience, we also partition the collection $I$ of indices into a set $I^s$ containing all indices $i\in I$ where $H_i\in \gset^s$ and $I^u=I\setminus I^s$. 
We consider again two cases, depending on whether $|\gset^s|\geq  |\hset^1_z|/2$.

\paragraph{Case 1a: $|\gset^s|\geq |\hset^1_z|/2$.}
If Case 1a happens then we say that iteration $q$ is \emph{successful}. Consider some index $i\in I^s$. 
Recall that the matching $M_i^q$ that we have defined contains at least $|A_i|/4$ pairs of vertices (that we refer to as edges), while $|E'_i|\leq |A_i|/16$ by the definition of distancing. We discard from $M_i^q$ all edges that lie in $E'_i$; note that $|M_i^q|\geq |A_i|/8$ continues to hold. We add the edges of the resulting matching $M_i^q$ to graph $H_i$, and we say that graph $H_i$ \emph{received a matching} in iteration $q$. We view this matching as the response of the matching player in the \DMG. We let $\pset^q=\bigcup_{i\in I^s}\pset_i^q$. Notice that the paths in $\pset^q$ can be viewed as an embedding of the set $\bigcup_{i\in I^s}M_i^q$ of edges into graph $H$. The length of each path is at most $d'$, and the congestion of the embedding is at most $\eta$. Observe that, if an iteration is successful, then the number of graphs in $\hset$ that receive matchings is at least $|\gset^s|\geq |\hset^1_z|/2\geq N^{1-2\eps'}/2$ (from Equation \ref{H1z bound}). We terminate the current iteration, and proceed to the next iteration.

\paragraph{Case 1b: $|\gset^s|<|\hset^1_z|/2$.} If Case 1b happens, then we say that the current iteration is \emph{unsuccessful}.
In this case, we will construct a $(\delta_j,d)$-distancing $(A,B,E^*)$ in graph $H$, where $E^*$ is the set of edges that we have defined above. We will ensure that $|E^*|\leq |A|/N^{j\eps^4}$. The current iteration then becomes the last iteration of the algorithm, and we return $(A,B,E^*)$ as its output.
We need the following simple observation bounding $|E^*|$:

\begin{observation}\label{obs: bound deleted edges}
	$|E^*|\leq \frac{N^{j}\cdot d'}{\eta\cdot 2^z}$.
\end{observation}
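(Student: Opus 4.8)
The plan is to bound $|E^*|$ by a simple counting argument on the paths produced by Procedure \procpathpeel. Recall from \Cref{lem: path peel} that the total collection of paths $\bigcup_{i\in I}\pset^q_i$ has length at most $d'$ per path, and that $E^*$ consists precisely of those edges of $H$ that are used by exactly $\eta$ of these paths. The total number of (path, edge) incidences is $\sum_{i\in I}\sum_{P\in \pset^q_i}|E(P)| \le d'\cdot \sum_{i\in I}|\pset^q_i|$, and every edge in $E^*$ contributes $\eta$ to this sum. Therefore $|E^*|\le \frac{d'}{\eta}\cdot\sum_{i\in I}|\pset^q_i|$.

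It then remains to bound $\sum_{i\in I}|\pset^q_i|$. For each $i\in I$ we have $H_i\in \hset^1_z$, which by definition of the class $\hset^1_z$ means $|A_i| < \frac{N^{j-1}}{2^z}$, and clearly $|\pset^q_i|\le |A_i|$ since the endpoints of the paths in $\pset^q_i$ lie in $A_i$ and are distinct. Since $|I|=|\hset^1_z|\le N$ (as $\hset$ contains exactly $N$ graphs), we get $\sum_{i\in I}|\pset^q_i| \le N\cdot \frac{N^{j-1}}{2^z} = \frac{N^j}{2^z}$. Combining, $|E^*|\le \frac{d'}{\eta}\cdot \frac{N^j}{2^z} = \frac{N^j\cdot d'}{\eta\cdot 2^z}$, which is exactly the claimed bound.

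I expect no real obstacle here; the only thing to be careful about is the exact definition of the class $\hset^1_z$ (whether the inequality $|A_i| < N^{j-1}/2^z$ or $\le$ is the relevant one, but since we only need an upper bound, the weaker $|A_i|\le N^{j-1}/2^z$ suffices) and the fact that each path in $\pset^q_i$ has a distinct endpoint in $A_i$, so $|\pset^q_i|\le |A_i|$. Thus the proof is a two-line double-counting argument. Concretely:

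\begin{proof}
Recall that $E^*$ is the set of edges $e\in E(H)$ that participate in exactly $\eta$ paths of $\bigcup_{i\in I}\pset^q_i$, and that by \Cref{lem: path peel} every path in this collection has length at most $d'$. Counting incidences between paths and edges, we have
\[
\eta\cdot |E^*| \;\le\; \sum_{i\in I}\sum_{P\in \pset^q_i}|E(P)| \;\le\; d'\cdot \sum_{i\in I}|\pset^q_i|.
\]
For each $i\in I$, the paths in $\pset^q_i$ have distinct endpoints in $A_i$, so $|\pset^q_i|\le |A_i|$. Moreover $H_i\in \hset^1_z$ implies $|A_i| < N^{j-1}/2^z$, and $|I|=|\hset^1_z|\le N$ since $\hset$ consists of exactly $N$ graphs. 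Hence
\[
\sum_{i\in I}|\pset^q_i| \;\le\; \sum_{i\in I}|A_i| \;\le\; N\cdot \frac{N^{j-1}}{2^z} \;=\; \frac{N^j}{2^z}.
\]
Combining the two displays yields $|E^*| \le \frac{N^j\cdot d'}{\eta\cdot 2^z}$, as claimed.
\end{proof}
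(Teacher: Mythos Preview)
Your proof is correct and is essentially identical to the paper's own argument: both bound $\sum_{i\in I}|\pset^q_i|$ via $|\pset^q_i|\le |A_i|<N^{j-1}/2^z$ and $|I|\le N$, then divide the total path-edge incidence count (at most $d'\cdot N^j/2^z$) by $\eta$.
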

\begin{proof}
Note that for every graph $H_i\in \hset^1_z$, $|\pset^q_i|\leq |A_i|\leq \frac{N^{j-1}}{2^z}$.  Since the length of each such path is at most $d'$, the total number of edges on all paths in $\bigcup_{i\in I}\pset^q_i$ is $\sum_{i\in I}\sum_{P\in \pset_i^q}|E(P)|\leq d'\cdot  |\hset^1_z|\cdot \frac{N^{j-1}}{2^z}\leq \frac{N^{j}\cdot d'}{2^z}$ (as $|\hset^1_z\leq N$). Since set $E^*$ only contains edges that lie on $\eta$ path of $\pset^q$, the observation follows.
\end{proof}

Recall that, if $H_i\in \gset^u$, then $|A'_i|\geq \frac{|A_i|}{2}\geq \frac{N^{j-1}}{2^{z+2}}$. Recall also that $|B'_i|=|A'_i|$, and $\dist_{H\setminus E^*}(A'_i,B'_i)>d'$. 

We denote $T=\bigcup_{i\in I^u}(A'_i\cup B'_i)$, and we call the vertices of $T$ \emph{terminals}. 
Since we have assumed that $|\gset^s|<|\hset^1_z|/2$, we get that:

\begin{equation}\label{eq: bound T}
|T|\geq |\gset^u|\cdot  \frac{N^{j-1}}{2^{z+1}}\geq  |\hset^1_z|\cdot \frac{N^{j-1}}{2^{z+2}}\geq \frac{N^{j-2\eps'}}{2^{z+2}}.
\end{equation}

(we have used Equation \ref{H1z bound}.)

We need the following simple observation.

\begin{observation}\label{obs: small ball}
	For every terminal $t\in T$, at most $|T|/2$ terminals may lie in $B_{H\setminus E^*}(t,d'/2)$.
\end{observation}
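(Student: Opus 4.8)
The plan is to derive the bound purely from the distancing guarantee of the unsuccessful graphs. Recall that for every index $i\in I^u$ we have $\dist_{H\setminus E^*}(A'_i,B'_i)>d'$, that $|A'_i|=|B'_i|$, and that $A'_i\cup B'_i\subseteq V_i$, so that the sets $\{A'_i\cup B'_i\}_{i\in I^u}$ are pairwise disjoint (since $V_1,\ldots,V_N$ partition $V(H)$) and their union is exactly $T$. First I would fix a terminal $t\in T$ and write $B=B_{H\setminus E^*}(t,d'/2)$ for the ball in question.

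The key step is a per-graph claim: for each $i\in I^u$, the ball $B$ cannot contain a vertex of $A'_i$ together with a vertex of $B'_i$. Indeed, if $a\in A'_i\cap B$ and $b\in B'_i\cap B$, then the triangle inequality in $H\setminus E^*$ gives $\dist_{H\setminus E^*}(a,b)\le \dist_{H\setminus E^*}(a,t)+\dist_{H\setminus E^*}(t,b)\le d'/2+d'/2=d'$, contradicting $\dist_{H\setminus E^*}(A'_i,B'_i)>d'$. Hence $B\cap(A'_i\cup B'_i)$ is contained entirely in $A'_i$ or entirely in $B'_i$, and in either case, using $|A'_i|=|B'_i|$ together with the disjointness of $A'_i$ and $B'_i$ (which holds since $A'_i\subseteq A_i$, $B'_i\subseteq B_i$ and $A_i,B_i$ are disjoint by definition of a distancing),
\[
\bigl|B\cap(A'_i\cup B'_i)\bigr|\le |A'_i|=\frac{|A'_i\cup B'_i|}{2}.
\]
Summing this inequality over all $i\in I^u$ and using $T=\bigcup_{i\in I^u}(A'_i\cup B'_i)$ (a disjoint union) yields $|B\cap T|\le \frac{1}{2}|T|$, which is exactly the claim.

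I do not expect a genuine obstacle here; every ingredient — the distancing bound $>d'$, the equality $|A'_i|=|B'_i|$, and the disjointness of the terminal sets across indices — has already been established immediately before the observation. The only subtlety worth stating explicitly is that the triangle inequality must be applied in the graph $H\setminus E^*$, which is the very graph in which both the ball and the distance lower bound live, so no comparison between distances in $H$ and in $H\setminus E^*$ is needed.
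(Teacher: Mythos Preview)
Your proof is correct and follows essentially the same approach as the paper: both argue that the ball $B_{H\setminus E^*}(t,d'/2)$ can meet at most one of $A'_i,B'_i$ for each $i\in I^u$ (via the triangle inequality and the $>d'$ separation), and then use $|A'_i|=|B'_i|$ to conclude. Your write-up is simply more explicit about the summation step and the disjointness of the terminal sets across indices, whereas the paper states the conclusion directly.
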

\begin{proof}
Consider any terminal $t\in T$, and denote $B^t=B_{H\setminus E^*}(t,d'/2)$. Recall that for all $H_{i'}\in \gset^u$, $\dist_{H\setminus E^*}(A'_{i'},B'_{i'})> d'$. Therefore, $B^t$ may not contain a vertex of $A'_{i'}$ and a vertex of $B'_{i'}$. We conclude that, for every terminal $t\in T$, $|B^t\cap T|\leq |T|/2$.
\end{proof}

We apply Algorithm \procsep from \Cref{lem: find separation of terminals} to graph $G=H\setminus E^*$, the set $T$ of terminals, distance parameters $d$ and $\Delta$ that remain unchanged, and $\alpha=1/2$. 
The running time of the algorithm is $O(|E(H)|\cdot |V(H)|^{64/\Delta})\leq O(|E(H)|\cdot N^{j\eps'})$.

From \Cref{obs: small ball}, the algorithm may not return a terminal $t\in T$ with $|B_{H\setminus E^*}(t,\Delta\cdot d)\cap T|=|B_{H\setminus E^*}(t,d'/2)\cap T|>\alpha \cdot |T|$. Therefore, it must compute two subsets $T_1,T_2$ of terminals, with $|T_1|=|T_2|$, such that for every pair $t\in T_1,t'\in T_2$ of terminals, $\dist_{H\setminus E^*}(t,t')\geq d$. Moreover:

\[\begin{split}
|T_1| \geq \frac{|T|^{1-64/\Delta}}3
\geq \frac{|T|^{1-\eps'}}3
\geq \frac{N^{(j-2\eps')(1-\eps')}}{3\cdot 2^{(z+2)(1-\eps')}}
\geq \frac{N^{j-2j\eps'}}{2^{z+4}}
\end{split} \]

(we have used the fact that $\Delta=64/\eps'$, and Equation \ref{eq: bound T}).

Recall  that, from \Cref{obs: bound deleted edges},
$|E^*|\leq \frac{N^{j}\cdot d'}{\eta\cdot 2^z}$.
Recall also that $\eta=N^{8\eps^3}$, $j\leq \ceil{1/\eps}$, and $d'\leq 2^{64/\eps^4\leq N^{\eps'}}$ from Inequality \ref{eq: bound on d'}, and since $\frac{N^{\eps'}}{\log N}\geq 2^{128/\eps^5}$ from the statement of \Cref{thm: construct HSS}. Therefore, we get that:

\[
\begin{split}
|E^*|\leq \frac{N^j}{2^{z}}\cdot\frac{N^{\eps^4}}{N^{8\eps^3}}
\leq \frac{N^j}{2^{z+4}}\cdot \frac{1}{N^{6\eps^3}} \leq \frac{N^j}{2^{z+4}}\cdot \frac{1}{N^{3\eps^4\cdot \ceil{1/\eps}}}
\leq \frac{N^j}{2^{z+4}}\cdot\frac{1}{N^{3\eps'j}}
\leq
\frac{|T_1|}{N^{j\eps'}}.
\end{split}\]

Lastly, recall that we have shown in Inequality \ref{eq: bound on z}, that $2^z\leq N^{j\delta_{j-1}}$. Therefore, we get that: 

\[|T_1|\geq \frac{N^j}{16N^{j(2\eps'+\delta_{j-1})}}\geq 
\frac{N^j}{16N^{j(4(j-1)\eps'+2\eps')}}\geq \frac{N^j}{N^{j(4j\eps')}}=N^{j(1-\delta_j)}.\]

 We conclude that $(T_1,T_2,E^*)$ is a $(\delta_j,d)$-distancing, with $|E^*|\leq |T_1|/N^{j\eps'}$ as required. We return this distancing and terminate the algorithm.

\paragraph{Case 2: $|\hset^1|\leq N^{1-\eps'}$.}
If this case happens, then we say that the current iteration is \emph{irregular}. In this case, we terminate Phase 1. The outcome of the phase is the collection $\hset^2\subseteq \hset$ of at least $N-N^{1-\eps'}$ graphs.
 Recall that, for each graph $H_i\in \hset^2$, we computed a level-$(j-1)$
 \HSS, that includes a subset $S(H_i)\subseteq V(H_i)$ of vertices, such that $H_i$ is $(\eta_{j-1},\td_{j-1})$-well-connected with respect to $S(H_i)$.
Let $H'=\bigcup_{H_i\in \hset^2}H_i$. Notice that the sets $\pset^q$ of paths that are computed in each iteration provide an embedding of graph $H'$ into $H$. The length of each resulting path is bounded by $d'$. We use the following observation in order to both bound the congestion of this embedding, and the running time of the algorithm.

\begin{observation}\label{obs: num of iterations}
	At the end of Phase 1, for each graph $H_i\in \hset$, $|E(H_i)|\leq N^{j-1+32\eps^2}$. The number of iterations in Phase 1 is at most $N^{64\eps^2}$.
\end{observation}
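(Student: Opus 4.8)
The plan is to obtain both bounds by applying \Cref{thm: distancing-matching game - number of iterations} (the iteration bound for a single \DMG) in parallel to the $N$ sub-games being played on $H_1,\dots,H_N$, each of which is a game on $N^{j-1}$ vertices run with parameters $\delta=\delta_{j-1}=4(j-1)\eps^4$ and $d=2^{32/\eps^4}$.

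First I would fix an index $i$ and observe that the sequence of matchings ever added to $H_i$ over the course of Phase 1 is a valid (possibly truncated) transcript of a \DMG on $V(H_i)$: in every iteration $q$ in which $H_i$ receives a matching, the algorithm from \Cref{thm: construct HSS} at level $j-1$ (applicable by the induction hypothesis) returns a genuine $(\delta_{j-1},d)$-distancing $(A_i,B_i,E'_i)$ in the current $H_i$, and we then add to $H_i$ a matching $M_i^q\subseteq A_i\times B_i$ with $|M_i^q|\ge|A_i|/8$ that avoids all pairs corresponding to edges of $E'_i$ — precisely the moves allowed in a \DMG. Before invoking \Cref{thm: distancing-matching game - number of iterations} for these sub-games I must verify its hypotheses: $\delta_{j-1}<1/4$ is immediate; $d=2^{32/\eps^4}\ge 2^{4/\delta_{j-1}}$ holds because $4/\delta_{j-1}=1/((j-1)\eps^4)\le 1/\eps^4$; and $(N^{j-1})^{\delta_{j-1}}\ge \tfrac{2^{14}\log(N^{j-1})}{\delta_{j-1}^2}$ follows from the standing assumption $N^{\eps^4}/\log N\ge 2^{128/\eps^5}$ after a short estimate (the left side is $\ge N^{4\eps^4}$ and the right side is $\le 2^{10}\eps^{-8}\log N$). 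Hence the number of matchings ever added to $H_i$ is at most $(N^{j-1})^{8\delta_{j-1}}=N^{32(j-1)^2\eps^4}\le N^{32\eps^2}$, using $j-1\le\ceil{1/\eps}-1\le 1/\eps$. Since each matching on $N^{j-1}$ vertices has at most $N^{j-1}/2$ edges, $|E(H_i)|\le N^{j-1+32\eps^2}$ at all times during Phase 1 (equivalently, one could simply quote \Cref{cor: distancing matching game number of edges}, noting that the cited edge bound holds throughout a \DMG and not only at its natural end). This settles the first assertion.

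For the bound on the number of iterations of Phase 1 I would run a double-counting argument across the $N$ parallel games. Every iteration of Phase 1 is either a regular iteration (Case 1) or an irregular one (Case 2); an irregular iteration immediately terminates Phase 1, and an unsuccessful regular iteration (Case 1b) terminates the entire algorithm, so at most one iteration of Phase 1 fails to be a \emph{successful} regular iteration. In each successful regular iteration, at least $|\gset^s|\ge|\hset^1_z|/2\ge N^{1-2\eps^4}/2$ of the graphs receive a matching, by the lower bound on $|\hset^1_z|$ from \eqref{H1z bound}. Summing the per-graph bound from the previous paragraph over all $N$ graphs, the total number of (graph, iteration) pairs in which a matching is received is at most $N\cdot N^{32\eps^2}=N^{1+32\eps^2}$; dividing, the number of successful regular iterations is at most $2N^{32\eps^2+2\eps^4}$, which is at most $N^{64\eps^2}-1$ since $N$ is enormous (it suffices that, say, $N^{30\eps^2}\ge 3$). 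Adding the single exceptional iteration gives the claimed bound of $N^{64\eps^2}$.

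The calculation is essentially routine; the one spot that needs care is checking the third hypothesis of \Cref{thm: distancing-matching game - number of iterations} for the $N^{j-1}$-vertex sub-games, so that the per-game iteration bound is legitimate, and the conceptual point — the only genuinely "new" step — is the amortization in the last paragraph: every successful regular iteration must advance a $1-o(1)$ fraction of the $N$ parallel games, so the games collectively cannot run for more than about $N^{O(\eps^2)}$ iterations.
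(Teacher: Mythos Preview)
Your proposal is correct and follows essentially the same approach as the paper: verify the hypotheses of \Cref{thm: distancing-matching game - number of iterations} for each sub-game on $N^{j-1}$ vertices, conclude that each $H_i$ can receive at most $N^{32\eps^2}$ matchings (hence the edge bound), and then double-count (graph, matching) pairs against the fact that every successful regular iteration advances at least $N^{1-2\eps'}/2$ games. The paper's write-up is nearly identical, with $\eps'=\eps^4$ used in place of your $\eps^4$.
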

\begin{proof}
Recall first that $\delta_{j-1}=4(j-1)\eps'\geq 4 \eps^4$, while $d=2^{32/\eps^4}$. Therefore, $d\geq 2^{4/\delta_{j-1}}$ holds. We can then view our algorithm from Phase 1 as running the \DMG simultaneously over the graphs in $\hset$. 
Note that for every graph $H_i\in \hset$:

\[|V(H_i)|^{\delta_{j-1}}=N^{(j-1)\delta_{j-1}}\geq N^{4\eps'}\geq \frac{2^{128}\log N}{\eps^{11}}\geq \frac{2^{124}\log N}{16(\eps')^2\eps^3}\geq \frac{2^{14}(j-1)\log N}{\delta_{j-1}^2}=\frac{2^{14}\log (|V(H_i)|)}{\delta_{j-1}^2},
\]

since
$\delta_{j-1}=4(j-1)\eps'$, $2\leq j\leq \ceil{1/\eps}$, and $\frac{N^{\eps^4}}{\log N}\geq 2^{128/\eps^5}$ from the statement of \Cref{thm: construct HSS}.
We conclude that $|V(H_i)|^{\delta_{j-1}}\geq \frac{2^{14}\log (|V(H_i)|)}{\delta_{j-1}^2}$ holds, satisfying the conditions of \Cref{thm: distancing-matching game - number of iterations}.
From \Cref{thm: distancing-matching game - number of iterations}, the number of iterations in a \DMG in a single graph $H_i\in \hset$ is bounded by:

\[|V(H_i)|^{8\delta_{j-1}}\leq N^{(j-1)\cdot 8\delta_{j-1}}\leq N^{32(j-1)^2\eps'}\leq N^{32\eps^2},\]

since $\eps'=\eps^4$,  $\delta_{j-1}=4(j-1)\eps'$, and $j\leq \ceil{1/\eps}$.
Therefore, a graph $H_i\in \hset$ may receive a matching in at most $N^{32\eps^2}$ iterations, and the cardinality of each such matching is at most $|V(H_i)|/2\leq N^{j-1}/2$. We conclude that at the end of Phase 1, for each graph $H_i\in \hset$, $|E(H_i)|\leq N^{j-1}\cdot N^{32\eps^2}$.

Next, we bound the number of iterations in Phase 1.
Note that at most one iteration of Phase 1 may be irregular, and at most one iteration may be regular and unsuccessful. It is now enough to bound the number of regular and successful iterations. In every regular and successful iteration, at least $N^{1-2\eps'}/2$ graphs in $\hset$ receive matchings. Therefore, every regular and successful iteration of Phase 1 results in the completion of a single iteration of the \DMG in at least $N^{1-2\eps'}/2$ graphs of $\hset$. 
At the same time, the number of pairs $(i,q)$, where graph $H_i\in \hset$ receives a matching in iteration $q$ must be bounded by $|\hset|\cdot 	 N^{32\eps^2}\leq N^{1+32\eps^2}$. Since in every regular and successful iteration at least $N^{1-2\eps'}/2$ graphs in $\hset$ receive matchings, we get that the number of iterations is bounded by $2N^{32\eps^2+2\eps'}\leq N^{64\eps^2}$.
\end{proof}

Since every set $\pset^q$ of paths causes congestion at most $\eta=N^{8\eps^3}$, and the number of iterations is bounded by $N^{64\eps^2}$, we obtain an embedding of graph $H'$ into $H$ via paths of length at most $d'\leq 2^{64/\eps^4}$ (from Equation \ref{eq: bound on d'}), that cause total congestion at most $N^{64\eps^2}\cdot N^{8\eps^3}\leq N^{128\eps^2}$. We denote this embedding by $\pset$.

Note that the Phase 1 can either terminate with a regular unsuccessful iteration, in which case we terminate the algorithm and return the resulting distancing for graph $H$, or with an irregular iteration, in which case we proceed to Phase 2 of the algorithm. Before we describe Phase 2 of the algorithm, we analyze the running time of Phase 1.

\subsubsection*{Running Time Analysis of Phase 1.}

We start by bounding the running time of a single iteration. Over the course of the iteration, we apply the algorithm from \Cref{thm: construct HSS} to each of the graphs $H_i\in \hset$, with parameter $(j-1)$. Recall that, from \Cref{obs: num of iterations}, for each graph $H_i\in \hset$, $|E(H_i)|\leq N^{j-1+32\eps^2}$. 
From the induction hypothesis, the running time of the algorithm from  \Cref{thm: construct HSS} on a single graph $H_i\in \hset$ is bounded by:

\[
\begin{split}
c(j-1)\cdot N^{(j-1)(1+64\eps^2)+7}&+c|E(H_i)|\cdot N^6\\
&\leq c(j-1)\cdot N^{j+64(j-1)\eps^2+6}+cN^{j-1+32\eps^2}\cdot N^6\\
&=  c(j-1)\cdot N^{j+64(j-1)\eps^2+6}+cN^{j+5+32\eps^2}. \\
&\leq cj\cdot N^{j+64(j-1)\eps^2+6}.
\end{split}
\]

Since $|\hset|=N$, the running time of this part of the algorithm is bounded by:

\[  cj\cdot N^{j+64(j-1)\eps^2+7}\]

If the iteration is regular, then  we apply Algorithm \procpathpeel from \Cref{lem: path peel}, to graph $H$, collections of subsets $\set{(A_i,B_i)}_{i\in I}$ of its vertices, length parameter $d'$, and congestion parameter $\eta=N^{8\eps^3}$. As observed above, the running time of Algorithm \procpathpeel is $O(|E(H)|\cdot jNd'\log N)=O(|E(H)|\cdot N\cdot 2^{64/\eps^4}\cdot \log N/\eps)\leq O(|E(H)|\cdot N^2)$, since $d'\leq 2^{64/\eps^4}$ from Equation \ref{eq: bound on d'} and  $N^{\eps'}>2^{64/\eps^5}\cdot \log N$ from the statement of \Cref{thm: construct HSS}.

If the iteration is regular and unsuccessful, then
we apply Algorithm \procsep from \Cref{lem: find separation of terminals}, whose running time, as observed above, is bounded by $O(|E(H)|\cdot N^{j\eps'})\leq O(|E(H)|\cdot N^{2\eps^3})$, as $j\leq \ceil{1/\eps}$ and $\eps'=\eps^4$.

Overall, the running time of a single iteration is bounded by:

\[ cj\cdot N^{j+64(j-1)\eps^2+7}+ O\left (|E(H)|\cdot N^2\right ).\]

Since, from \Cref{obs: num of iterations}, the number of iterations in Phase 1 is at most  $N^{64\eps^2}$,
we get that the total running time of Phase 1 is bounded by:

\[ cj\cdot N^{j+64j\eps^2+7}+ O\left (|E(H)|\cdot N^3\right ).\]

\subsection{Phase 2: Distancing or Well-Connectedness}

The starting point of Phase 2 is the collection $\hset^2\subseteq \hset$ of at least $N-N^{1-\eps'}$ graphs that was computed in Phase 1.
Recall that, for each graph $H_i\in \hset_2$, we computed a level-$(j-1)$
\HSS, that includes a subset $S(H_i)\subseteq V(H_i)$ of vertices, such that $H_i$ is $(\eta_{j-1},\td_{j-1})$-well-connected with respect to $S(H_i)$. Additionally, we have computed an embedding $\pset$ of graph $H'=\bigcup_{H_i\in \hset^2}H_i$, so that the length of each path in $\pset$ is at most $d'\leq 2^{64/\eps^4}$, and the paths in $\pset$ cause congestion at most  $N^{128\eps^2}$. 

In this phase we will either compute a subset $\hset'\subseteq \hset^2$ of $r$ graphs, such that graph $H$ is $(\eta_j,\td_j)$-well-connected with respect to the set $S(H)=\bigcup_{H_i\in \hset'}S(H_i)$ of vertices; or we compute a $(\delta_j,d)$-distancing $(A,B,E')$ in graph $H$, with $|E'|\leq \frac{|A|}{N^{j\eps^4}}$.

We consider every pair $H_i,H_{i'}\in \hset^2$ of graphs, with $i<i'$ one by one. When the pair $(H_i,H_{i'})$ of graphs is considered, we apply Procedure \procpathpeel from \Cref{lem: path peel} to graph $H$, and two sets $A_1=S(H_i),B_1=S(H_{i'})$ of its vertices, with distance parameter $d'$ and congestion parameter $\eta=N^4$. Recall that the running time of the procedure is $O(|E(H)|(N^4+jd'\log N))\leq O(|E(H)|\cdot N^4)$. We denote by $\qset_{i,i'}$ the set of paths that the algorithm returns, and by $E'_{i,i'}$ the set of all edges of $H$ that participate in $N^4$ paths of $\qset_{i,i'}$. We also denote by $A'_{i,i'}\subseteq S(H_i)$ and $B'_{i,i'}\subseteq S(H_{i'})$ the sets of vertices that do not serve as endpoints of paths in $\qset_{i,i'}$. Recall that the paths in $\qset_{i,i'}$ have length at most $d'$ each, and they cause congestion at most $N^4$. Every path in $\qset_{i,i'}$ connects a vertex of $S(H_i)$ to a vertex of $S(H_{i'})$, and every vertex of $S(H_i)\cup S(H_{i'})$ may serve as an endpoint of at most one path in $\qset_{i,i'}$. Moreover, the length of the shortest path in $H\setminus E'_{i,i'}$ connecting a vertex of $A'_{i,i'}$ to a vertex of $B'_{i,i'}$ is greater than $d'$. Observe also that, since $|\qset_{i,i'}|\leq |S(H_i)|\leq |V(H_i)|=N^{j-1}$, and since the length of every path in $\qset_{i,i'}$ is at most $d'$, we get that $\sum_{Q\in \qset_{i,i'}}|E(Q)|\leq d'\cdot N^{j-1}$. Since $E'_{i,i'}$ only contains edges that participate in $N^4$ paths in $\qset$, we get that $|E'_{i,i'}|\leq d'\cdot N^{j-5}$.

Let $E'$ be the union of all sets $E'_{i,i'}$ of edges, over all pairs $H_i,H_{i'}\in \hset^2$ of graphs with $i<i'$. Clearly, $|E'|\leq d'\cdot N^{j-3}$. Moreover, for every pair $H_i,H_{i'}\in \hset^2$ of graphs with $i<i'$, the length of the shortest path in $H\setminus E'$ connecting a vertex of $A'_{i,i'}$ to a vertex of $B'_{i,i'}$ is greater than $d'$.

Next, we apply procedure \procsep from \Cref{lem: find separation of terminals} to graph $\tilde H=H\setminus E'$, with the set $T=V(H)$ of terminal vertices, distance parameters $d$ and $\Delta$ that remain unchanged, and parameter $\alpha=1-\frac{1}{8N^{\eps'}}$. Recall that the running time of the algorithm is 
$O(|E(H)|\cdot N^{64j/\Delta})\leq O(|E(H)|\cdot N^{2\eps^3})$ (since $\Delta=64/\eps'$, $j\leq \ceil{1/\eps}$, and $\eps'=\eps^4$).

We now consider two cases. The first case is that Procedure \procsep returns two subsets $A,B\subseteq V(H)$ of vertices, such that $|A|=|B|$, and for every pair $v\in A, u\in B$ of vertices, $\dist_{\tilde H}(u,v)\geq d$. Recall that in this case, the algorithm also ensures that:

\begin{equation}\label{eq: bound deleted edges}
\begin{split}
 |A|&\geq |V(H)|^{1-64/\Delta}\cdot \min\set{(1-\alpha),\frac 1 3} \\
 &\geq N^{j(1-\eps')}\cdot \frac{1}{8N^{\eps'}}\\
 &\geq N^{j(1-\eps')-2\eps'}\\
 &\geq N^{j(1-4j\eps')}\\
 &=|V(H)|^{1-\delta_j}.
 \end{split}
\end{equation}

(We have used the fact that $\Delta=64/\eps'$ and $\delta_j=4j\eps'$).

Recall that:

\[|E'|\leq d'\cdot N^{j-3}\leq  N^{j-3+\eps'}.\]

(since $d'\leq 2^{64/\eps^3}$ from Equation \ref{eq: bound on d'} and $N^{\eps^4}\geq 2^{128/\eps^5}$ from the statement of \Cref{thm: construct HSS}.)
Since, from Equation \ref{eq: bound deleted edges}, $|A|\geq N^{j-j\eps'-2\eps'}\geq N^{j-3\eps^3}$ (as $\eps'=\eps^4$), we get that $|E'|\leq |A|/N^{j\eps^4}$.

We conclude that $(A,B,E')$ is a valid $(\delta_j,d)$-distancing in graph $H$, with $|E'|\leq |A|/N^{j\eps^4}$. We return this distancing and terminate the algorithm.

From now on we assume that Procedure \procsep computed a vertex $v\in V(H)$, such that $|B_{\tilde H}(v,\Delta\cdot d)|>\alpha\cdot |V(H)|=N^{j}\cdot \left (1-\frac{1}{8N^{\eps'}}\right )$.
For convenience, we denote $B^*=B_{\tilde H}(v,\Delta\cdot d)$.

In this case, we construct a set $\hset'\subseteq \hset^2$ of graphs as follows: we add graph $H_i$ to $\hset'$ iff $|B^*\cap V(H_i)|\geq \frac{7|V(H_i)|}{8}$.
We need the following observation.

\begin{observation}\label{obs: many good graphs}
	$|\hset'|\geq N-2N^{1-\eps'}$.
\end{observation}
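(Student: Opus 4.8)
\textbf{Proof proposal for Observation \ref{obs: many good graphs}.}
The plan is a straightforward counting argument based on the disjointness of the vertex sets of the graphs in $\hset^2$ and the fact that $B^*$ covers almost all of $V(H)$. First I would record the two facts we have already established: the graphs $\set{H_i\in \hset^2}$ have pairwise disjoint vertex sets, each of size $N^{j-1}$, and $|B^*| > \alpha\cdot N^j = N^j\cdot\paren{1-\frac{1}{8N^{\eps'}}}$, so that the complement satisfies $|V(H)\setminus B^*| < \frac{N^{j}}{8N^{\eps'}} = \frac{N^{j-\eps'}}{8}$. I would also recall that Phase 1 guarantees $|\hset^2|\geq N-N^{1-\eps'}$.

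Next I would call a graph $H_i\in \hset^2$ \emph{bad} if $H_i\notin \hset'$, i.e. if $|B^*\cap V(H_i)| < \frac{7|V(H_i)|}{8} = \frac{7N^{j-1}}{8}$; equivalently, $|V(H_i)\setminus B^*| > \frac{N^{j-1}}{8}$. Since the vertex sets of distinct graphs in $\hset^2$ are disjoint, the sets $V(H_i)\setminus B^*$ for bad graphs $H_i$ are disjoint subsets of $V(H)\setminus B^*$. Summing the lower bound $\frac{N^{j-1}}{8}$ over all bad graphs and comparing with $|V(H)\setminus B^*| < \frac{N^{j-\eps'}}{8}$, I get that the number of bad graphs is strictly less than
\[
\frac{N^{j-\eps'}/8}{N^{j-1}/8} = N^{1-\eps'}.
\]

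Finally I would conclude that $|\hset'| = |\hset^2| - |\set{\text{bad graphs}}| > \paren{N-N^{1-\eps'}} - N^{1-\eps'} = N - 2N^{1-\eps'}$, which is exactly the claimed bound. There is essentially no obstacle here — the only point requiring care is to make sure we are counting vertices outside $B^*$ correctly, using disjointness of the $V(H_i)$'s and that each $V(H_i)$ has exactly $N^{j-1}$ vertices; everything else is elementary arithmetic with the parameters.
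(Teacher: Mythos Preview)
Your proposal is correct and follows essentially the same counting argument as the paper: bound $|V(H)\setminus B^*|$ from above, observe that each graph in $\hset^2\setminus\hset'$ contributes at least $N^{j-1}/8$ disjoint vertices to $V(H)\setminus B^*$, and combine with $|\hset^2|\geq N-N^{1-\eps'}$. The only cosmetic difference is that the paper works with non-strict inequalities throughout.
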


\begin{proof}
	Recall that $|\hset^2|\geq N-N^{1-\eps'}$. 
	Notice that, if $H_i\in \hset^2\setminus \hset'$, then $|V(H_i)\setminus B^*|\geq |V(H_i)|/8= N^{j-1}/8$.
	Since $|V(H)\setminus B^*|\leq N^{j-\eps'}/8$, we get that $|\hset^2\setminus \hset'|\leq \frac{N^{j-\eps'}/8}{N^{j-1}/8}=N^{1-\eps'}$.
	Therefore, $|\hset'|\geq |\hset^2|-N^{1-\eps'}\geq N-2N^{1-\eps'}$.
\end{proof}

We discard arbitrary graphs from $\hset'$, until $|\hset'|=N-\ceil{2N^{1-\eps'}}$ holds. Let $S(H)=\bigcup_{H_i\in \hset'}S(H_i)$. Note that we have now obtained a level-$j$ \HSS for graph $H$, whose associated collection of graphs is $\hset'$, with $|\hset'|=N-\ceil{2N^{1-\eps'}}=r$. We use the embedding $\pset$ of the graph $\bigcup_{H_i\in \hset^2}H_i$ that we have computed in Phase 1. By discarding paths that are no longer needed from $\pset$, we obtain an embedding $\pset$ of graph $\bigcup_{H_i\in \hset'}H_i$ into $H$, such that every path in the embedding has length at most $d'\leq 2^{64/\eps^4}$, and the paths in $\pset$ cause congestion at most $N^{128/\eps^3}$.
We prove the following lemma in \Cref{subsec: proof of certificate}.

\begin{lemma}\label{lem: certificate}
	Graph $H$ is $(\eta_j,\td_j)$-well-connected with respect to $S(H)$.
\end{lemma}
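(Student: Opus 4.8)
The plan is to show that $H$ is $(\eta_j,\td_j)$-well-connected with respect to $S(H)=\bigcup_{H_i\in \hset'}S(H_i)$ by taking an arbitrary pair $A,B\subseteq S(H)$ of disjoint equal-cardinality subsets and constructing a one-to-one routing of $A$ to $B$ in $H$ with congestion at most $\eta_j$ and path length at most $\td_j$. The routing will be built in three ``hops''. First I would use the well-connectedness of the individual graphs $H_i\in \hset'$ to route, inside each $H_i$, the vertices of $A\cap S(H_i)$ and $B\cap S(H_i)$ towards a ``canonical'' portion of $S(H_i)$; then route across graphs using the paths $\qset_{i,i'}$ computed in Phase 2 (which connect $S(H_i)$ to $S(H_{i'})$ for every pair with short paths and low congestion in $H$); and finally, since the embedding $\pset$ of $H'=\bigcup_{H_i\in \hset'}H_i$ into $H$ has short paths and low congestion, translate any routing happening ``inside'' the $H_i$'s into a routing inside $H$ via \Cref{obs: paths from embedding}.

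The key steps, in order, are: (1) For each $H_i\in \hset'$, let $a_i=|A\cap S(H_i)|$ and $b_i=|B\cap S(H_i)|$; these are not balanced graph-by-graph, so I would first fix a bijection $\pi$ between $A$ and $B$ that is ``as local as possible'' — but more robustly, route every vertex of $A\cap S(H_i)$ and every vertex of $B\cap S(H_i)$ to a distinguished set of roughly $|S(H_i)|/2$ ``hub'' vertices of $H_i$ using the $(\eta_{j-1},\td_{j-1})$-well-connectedness of $H_i$ (splitting $S(H_i)$ into two equal halves and pairing arbitrarily). (2) Use the paths $\qset_{i,i'}$ to move flow between the hub sets of different graphs so that, after this middle hop, each unit originating at a vertex of $A$ is matched one-to-one with a unit terminating at a vertex of $B$; here I would set up the routing as a fractional/integral flow problem across the ``cluster graph'' on $\hset'$ and observe it is essentially a complete-graph routing with capacities $|S(H_i)|$, which can always be satisfied when total demand is balanced. (3) Concatenate: the hops that live inside the $H_i$'s get pushed through the embedding $\pset$ via \Cref{obs: paths from embedding}, multiplying length by at most $d'\le 2^{64/\eps^4}$ and congestion by at most $N^{128\eps^2}$; the cross-graph hops already live in $H$ with length $\le d'$ and congestion $\le N^4$ per pair, and there are at most $\binom{N}{2}\le N^2$ pairs, each vertex of a given $S(H_i)$ used at most once per pair. (4) Sum up: total length is $O(d'\cdot \td_{j-1})=2^{O(j/\eps^4)}\le \td_j$ for $c$ large enough, and total congestion is at most $N^{128\eps^2}\cdot \eta_{j-1}\cdot (\text{constant number of hops}) + N^{2}\cdot N^4 \le N^{6+256j\eps^2}=\eta_j$, again by the choice of $\eta_{j-1}=N^{6+256(j-1)\eps^2}$ and a slack of $N^{256\eps^2}$.

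The main obstacle I expect is the bookkeeping of the \emph{cross-graph matching step}: the sets $A\cap S(H_i)$ need not have the same size as $B\cap S(H_i)$, so one genuinely has to route between graphs, and I must make sure (a) that the cross-graph demand is realizable given that each pair $(H_i,H_{i'})$ supplies only $|S(H_i)|$ edge-disjoint-ish connections and some of those connections were ``lost'' (the vertices in $A'_{i,i'}, B'_{i,i'}$ that did not get routed), and (b) that the hub vertices of a single $H_i$ are not overused across the many pairs $i'$. For (a), the crucial point is that for every pair the procedure routed \emph{all but} a set that is separated by distance $>d'$ in $H\setminus E'$, and since $v$ is within $\Delta d = d'/2$ of at least $|V(H_i)|\cdot 7/8$ vertices of each $H_i\in \hset'$, a counting argument (mirroring \Cref{obs: small ball}) shows that only a negligible fraction of each $S(H_i)$ fails to be connected to each other $S(H_{i'})$, so we can afford to drop a tiny fraction of $A$ and reroute. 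Actually, to keep the routing \emph{exact} (one-to-one on all of $A$), I would instead route through $v$ and its ball $B^*$: every vertex of $S(H_i)$ for $H_i\in\hset'$ that lies in $B^*$ can reach $v$ within $d'/2$ in $\tilde H=H\setminus E'$, and at least $\tfrac78$ of each $S(H_i)$ is in $B^*$; combining the ``into $B^*$'' routing inside each $H_i$ (via well-connectedness of $H_i$) with the ``$B^*$ to $v$'' paths in $H$ gives every needed vertex a short path to $v$, and we route $A\to v\to B$. The congestion at $v$ is then $|A|\le N^j$, which is too large unless we instead use the ball $B_{\tilde H}(v,\Delta d)$ as a ``well-connected core'' in its own right — so the cleanest route is: show $B^*$ induces a subgraph of $\tilde H$ of diameter $\le 2\Delta d = d'$, hence $(\,|B^*|, d'\,)$-well-connected trivially (route any matching via $v$... no, same problem). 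Thus the genuinely necessary argument is the edge-disjoint one: combine the $(\eta_{j-1},\td_{j-1})$-routings inside the $H_i$'s, pushed through $\pset$, with the $\qset_{i,i'}$ paths, and absorb the $O(N^{1-\eps'})$ bad graphs and the dropped endpoints into the slack, which is affordable precisely because $A,B\subseteq S(H)$ and $|V(H)\setminus S(H)|$ is small by \Cref{claim size of supported}. I would present this last version.
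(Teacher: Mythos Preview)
Your high-level three-step plan---route locally inside each $H_i$ via its $(\eta_{j-1},\td_{j-1})$-well-connectedness, route across graphs via the $\qset_{i,i'}$ paths from Phase~2, and translate in-$H_i$ routings into $H$ via the embedding $\pset$---is exactly the paper's structure. The length and congestion bookkeeping you sketch is also essentially right. But your handling of the cross-graph step has two concrete gaps.

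First, the claim that the $B^*$ argument shows ``only a negligible fraction of each $S(H_i)$ fails to be connected to each other $S(H_{i'})$'' is an overstatement. What the argument actually yields (this is the paper's Observation~\ref{obs: good graphs can route}) is $|\qset_{i,i'}|\ge N^{j-1}/4$ for every $H_i,H_{i'}\in\hset'$: the unrouted sets $A'_{i,i'},B'_{i,i'}$ are at distance $>d'$ in $H\setminus E'_{i,i'}\supseteq \tilde H$, while each graph in $\hset'$ has at least $7N^{j-1}/8$ vertices in $B^*$ (whose $\tilde H$-diameter is $\le d'$); so once either unrouted set exceeds $N^{j-1}/8$ it meets $B^*$, and if both do you get two points of $B^*$ at $\tilde H$-distance $>d'$, a contradiction. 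The unrouted fraction is therefore a \emph{constant}, not $o(1)$, and you cannot ``absorb the dropped endpoints into the slack.''

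Second, and this is the real missing idea, your cluster-level flow only decides \emph{how much} to send from $H_i$ to $H_{i'}$; it does not control \emph{which vertices} of $S(H_i)$ are used as endpoints of those cross-paths. For the subsequent local routing inside $H_i$ (from the residual $A$-vertices $Y_i$ to the chosen cross-path endpoints) to be a bounded number of one-to-one routings---which is all that well-connectedness provides---you need each $x\in S(H_i)$ to be an endpoint of only $O(1)$ cross-paths. Without that bound, a single vertex could absorb $D_i$ cross-paths and the local step is uncontrolled. The paper supplies this via a second, \emph{vertex-level} flow argument (Lemma~\ref{lem: global}): after fixing the cluster-level weights $w(v_i,v'_{i'})$, it spreads each $w(v_i,v'_{i'})\le N^{j-1}$ uniformly over the $\lceil N^{j-1}/4\rceil$ matched pairs in $M_{i,i'}$ and rounds by flow integrality, obtaining multiplicity $k(x)\le 4$ at every vertex. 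With $k(x)\le 4$ in hand, the local step is four applications of well-connectedness per graph, and your congestion estimate $N^{128\eps^2}\cdot O(\eta_{j-1})+O(N^6)\le \eta_j$ goes through. This vertex-level distribution is the step your plan is missing.
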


In order to complete the proof of \Cref{thm: construct HSS}, it is now enough to show that the running time of the algorithm is suitably bounded.

\subsubsection*{Running Time Analysis}

The algorithm performs at most $N^2$ calls to Procedure \procpathpeel. As observed above, the running time of each such call is at most $O(|E(H)|\cdot N^4)$. The running time of Procedure \procsep, as shown above, is at most $ O(|E(H)|\cdot N^{2\eps^3})$. Overall, the running time of Phase 2 is $O(|E(H)|\cdot N^6)$.

Altogether, the running time of the whole algorithm is bounded by:

\[ cj\cdot N^{j+64j\eps^2+7}+ O\left (|E(H)|\cdot N^6\right )\leq cj\cdot N^{j(1+64\eps^2)+7}+c|E(H)|\cdot N^6,\]

if $c$ is a large enough constant.

In order to complete the proof of \Cref{thm: construct HSS}, it is now enough to prove \Cref{lem: certificate}, which we do next.
\subsection{Proof of \Cref{lem: certificate}}
\label{subsec: proof of certificate}

We will use the following simple observation.

\begin{observation}\label{obs: good graphs can route}
For every pair $H_i,H_{i'}\in\hset'$ with $i<i'$, $|\qset_{i,i'}|\geq N^{j-1}/4$. (Here, $\qset_{i,i'}$ is the set of paths that was computed in Phase 2 of the algorithm.)
\end{observation}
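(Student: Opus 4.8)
The plan is to show that for each pair $H_i,H_{i'}\in\hset'$ with $i<i'$, Procedure \procpathpeel, when applied to $H$ and the sets $S(H_i),S(H_{i'})$, must have routed at least $N^{j-1}/4$ pairs of vertices, i.e. $|\qset_{i,i'}|\geq N^{j-1}/4$. The key tool is the contrapositive: if only few paths were routed, then the unrouted sets $A'_{i,i'}\subseteq S(H_i)$, $B'_{i,i'}\subseteq S(H_{i'})$ are both large, and by Property~(\ref{pp: distancing}) of \Cref{lem: path peel} they are at distance more than $d'$ from each other in $\tilde H=H\setminus E'$, where $E'$ is the union of all saturated-edge sets $E'_{i,i'}$. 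I would then contradict the fact that $H_i,H_{i'}\in \hset'$, which means that at least a $7/8$-fraction of the vertices of each of $H_i$ and $H_{i'}$ lies in the ball $B^*=B_{\tilde H}(v,\Delta\cdot d)$ of radius $\Delta\cdot d = d'/2$ around the center vertex $v$.

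In detail: suppose for contradiction $|\qset_{i,i'}|< N^{j-1}/4$. Since $|S(H_i)|\geq |V(H_i)|\cdot(1-\tfrac{4(j-1)}{N^{\eps^4}})\geq \tfrac{7}{8}N^{j-1}$ by \Cref{claim size of supported} (using $N^{\eps^4}$ large), and similarly for $S(H_{i'})$, the number of vertices of $S(H_i)$ not used as endpoints is $|A'_{i,i'}|=|S(H_i)|-|\qset_{i,i'}|> \tfrac 7 8 N^{j-1}-\tfrac14 N^{j-1}=\tfrac 5 8 N^{j-1}$, and likewise $|B'_{i,i'}|>\tfrac 5 8 N^{j-1}$. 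Now since $H_i\in\hset'$, at least $\tfrac 7 8 N^{j-1}$ vertices of $V(H_i)$ lie in $B^*$; combined with $|A'_{i,i'}|>\tfrac 5 8 N^{j-1}$, there is a vertex $a\in A'_{i,i'}\cap B^*$. Symmetrically there is $b\in B'_{i,i'}\cap B^*$. Both $a$ and $b$ lie within distance $\Delta\cdot d=d'/2$ of $v$ in $\tilde H$, so $\dist_{\tilde H}(a,b)\leq d'$. But $E'_{i,i'}\subseteq E'$, so $\dist_{H\setminus E'_{i,i'}}(A'_{i,i'},B'_{i,i'})\leq \dist_{\tilde H}(A'_{i,i'},B'_{i,i'})\leq d'$, which should actually be the other direction — I need the distance in $\tilde H$ to be an \emph{over}estimate of the distance in $H\setminus E'_{i,i'}$, i.e. $\tilde H\subseteq H\setminus E'_{i,i'}$ since $E'\supseteq E'_{i,i'}$, hence $\dist_{H\setminus E'_{i,i'}}(a,b)\leq \dist_{\tilde H}(a,b)\leq d'$. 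This contradicts Property~(\ref{pp: distancing}) of \Cref{lem: path peel}, which guarantees $\dist_{H\setminus E'_{i,i'}}(A'_{i,i'},B'_{i,i'})>d'$.

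I expect the main subtlety to be bookkeeping the various edge sets and distance directions correctly: one must be careful that $\tilde H = H\setminus E'$ is a \emph{subgraph} of $H\setminus E'_{i,i'}$ (since $E'_{i,i'}\subseteq E'$), so distances can only be \emph{larger} in $\tilde H$, and therefore a short path in $\tilde H$ yields a short path in $H\setminus E'_{i,i'}$, which is exactly what is forbidden. A second point to verify is the numerical slack: we need $\tfrac 7 8 + \tfrac 7 8 - \tfrac 5 8 > 1$-type counting (equivalently, that $|A'_{i,i'}| + |B^*\cap V(H_i)| > |V(H_i)|$ and similarly for $i'$), which holds comfortably with the $7/8$ and $5/8$ bounds, and that \Cref{claim size of supported} indeed gives $|S(H_i)|\geq \tfrac 7 8|V(H_i)|$ under the hypothesis $\tfrac{N^{\eps^4}}{\log N}\geq 2^{128/\eps^5}$ (so $N^{\eps^4}$ is much larger than $32/\eps$). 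These are all routine once the logic is set up, so the observation follows.
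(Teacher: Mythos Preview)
Your proposal is correct and follows essentially the same approach as the paper's own proof: assume fewer than $N^{j-1}/4$ paths were routed, use \Cref{claim size of supported} to lower-bound $|S(H_i)|$ and hence the unrouted sets $A'_{i,i'},B'_{i,i'}$, intersect each with $B^*$ via the $7/8$-membership criterion for $\hset'$, and then derive a distance-$d'$ contradiction with Property~(\ref{pp: distancing}). The paper uses slightly sharper constants ($|S(H_i)|\geq\tfrac{63}{64}N^{j-1}$ and $|A'_{i,i'}|>\tfrac{2}{3}N^{j-1}$), but your $7/8$ and $5/8$ work equally well, and your careful handling of the inclusion $\tilde H\subseteq H\setminus E'_{i,i'}$ (so that a short path in $\tilde H$ forces one in $H\setminus E'_{i,i'}$) is exactly right.
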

\begin{proof}
	Recall that, from \Cref{claim size of supported}, $|V(H_i)\setminus S(H_i)|\leq |V(H_i)|\cdot \frac{4(j-1)}{N^{\eps^4}}<\frac{|V(H_i)|}{64}$ (since $j\leq \ceil{\frac{1}{\eps}}$). Therefore, $|S(H_i)|\geq \frac{63|V(H_i)|}{64}=\frac{63\cdot N^{j-1}}{64}$, and similarly, $|S(H_{i'})|\geq \frac{63\cdot N^{j-1}}{64}$.
	
	Assume now for contradiction that $|\qset_{i,i'}|< \frac{N^{j-1}}{4}$. Recall that we have defined sets $A'_{i,i'}\subseteq S(H_i)$ and $B'_{i,i'}\subseteq S(H_{i'})$ of vertices that do not serve as endpoints of paths in $\qset_{i,i'}$. If $|\qset_{i,i'}|< \frac{N^{j-1}}{4}$, then:
	
	 \[|A'_{i,i'}|\geq |S(H_i)|-|\qset_{i,i'}|\geq \frac{63\cdot N^{j-1}}{64}-\frac{N^{j-1}}{4}> \frac{2\cdot N^{j-1}}{3}.\]
	 
	 Similarly, $|B'_{i,i'}|> \frac{2\cdot N^{j-1}}{3}$. Since $H_i$ was added to $\hset'$, $|B^*\cap V(H_i)|\geq \frac{7\cdot N^{j-1}}{8}$. So at least one vertex $u\in A'_{i,i'}$ must lie in $B^*$. For similar reasons, at least one vertex $u'\in B'_{i,i'}$ must lie in $B^*$. From the definition of $B^*$, $\dist_{\tilde H}(u,u')\leq 2\Delta\cdot d=d'$. Recall however that $\tilde H=H\setminus E'$, and $E'_{i,i'}\subseteq E'$. Procedure \procpathpeel guarantees that the shortest path connecting $u$ to $u'$ in graph $H\setminus E'_{i,i'}$ has length greater than $d'$. So $\dist_{\tilde H}(u,u')>d'$ must hold, contradicting our previous claim that $\dist_{\tilde H}(u,u')\leq d'$.
\end{proof}

We now turn to the proof of \Cref{lem: certificate}.
We assume that we are given two disjoint equal-cardinality subsets $A,B$ of vertices of $S(H)$. Our goal is to prove that there exists a set $\pset^*$ of paths in graph $H$, routing every vertex of $A$ to a distinct vertex of $B$, such that the paths in $\pset^*$ cause congestion at most $\eta_j$ in $H$, and the length of every path is at most $\td_j$. We will prove that such a set of paths exists by exploiting the fact that, every graph $H_i\in \hset'$ is $(\eta_{j-1},\td_{j-1})$-well-connected with respect to the set $S(H_i)$ of its vertices, together with the embedding of these graphs into $H$, and the sets $\qset_{i,i'}$ of paths that we have computed in Phase 2 of the algorithm for every pair $H_i,H_{i'}\in \hset'$ of graphs with $i<i'$.

The remainder of the proof of \Cref{lem: certificate} consists of three steps. In the first step, we route some pairs in $A\times B$ within the graphs $H_i\in \hset'$. After the completion of this step, for every graph $H_i\in \hset'$, either all vertices of $S(H_i)$ that remain to be routed lie in $A$, or all such vertices lie in $B$. In the remaining two steps we complete the routing of these remaining vertices. Specifically, in Step 2 we define a ``meta-graph'' $\hat G$, whose vertices represent the graphs $H_i\in \hset'$, with weights on its edges. Intuitively, if an edge connecting two vertices that represent graphs $H_i$ and $H_{i'}$ has weight $w(e)$, then we intend to construct $w(e)$ paths that connect vertices of $S(H_i)\cap A$ to vertices of $S(H_{i'})\cap B$. In this step, we also perform a ``global routing'': for every pair $H_i,H_{i'}\in \hset'$ of graphs, whose corresponding edge in $\hat G$ has weight $w$,   we connect vertices of $S(H_i)$ to vertices of $S(H_{i'})$ with $w$ paths. In the third and the final step, we complete the construction of the set $\pset^*$ of paths by using ``local routing'', in which some pairs of vertex subsets are routed within each graph $H_i\in \hset'$. We now describe each of the three steps in turn.

\subsubsection*{Step 1: Initial Routing within the Graphs of $\hset'$}
We process every graph $H_i\in \hset'$ one by one. When graph $H_i$ is processed, we denote $N^A_i=|A\cap S(H_i)|$ and $N^B_i=|B\cap S(H_i)|$. Denote $\beta_i=\min\set{N^A_i,N^B_i}$. Next, we select two arbitrary subsets $X_i\subseteq A\cap S(H_i)$ and $Y_i\subseteq B\cap S(H_i)$, each of which contains exactly $\beta_i$ vertices.
Since our algorithm ensures that graph $H_i$ is $(\eta_{j-1},\td_{j-1})$-well-connected with respect to $S(H_i)$, there exists a set $\rset_i$ of paths in graph $H_i$, which is a one-to-one routing of vertices of $X_i$ to vertices of $Y_i$. Every path in $\rset_i$ has length at most $\td_{j-1}$, and the paths in $\rset_i$ cause congestion at most $\eta_{j-1}$ in $H_i$.

Let $H'=\bigcup_{H_i\in \hset'}H_i$. Recall that we have computed, in Phase 1 of the algorithm, an embedding $\pset$ of $H'$ into $H$, where every path in the embedding has length at most $d'$, and the paths in $\pset$ cause congestion at most $N^{128\eps^2}$ in $H$.

Consider now the set $\bigcup_{H_i\in \hset'}\rset_i$ of paths in graph $H'$. This set of paths defines a one-to-one routing of vertex set $X=\bigcup_{H_i\in \hset'}X_i$ to vertex set $Y=\bigcup_{H_i\in \hset'}Y_i$, where the length of every path is at most $\td_{j-1}$, and the paths in $\rset$ cause congestion at most $\eta_{j-1}$ in $H'$. 
We now use the embedding $\pset$ of $H'$ into $H$ in order to compute a set $\pset'$ of paths in graph $H$, that route every vertex of $X$ to a distinct vertex of $Y$, via the algorithm from \Cref{obs: paths from embedding}. 
We are then guaranteed that the length of every path in $\pset'$ is at most $\td_{j-1}\cdot d'$. Recall that $\td_{j-1}=2^{c(j-1)/\eps^4}$, while $d'<\frac 1 4\cdot 2^{c/\eps^4}$ from Inequality \ref{eq: bound on d'}. Therefore, the length of every path in $\pset'$ is bounded by:

\[\td_{j-1}\cdot d'\leq  2^{c(j-1)/\eps^4}\cdot 2^{c/\eps^4}=2^{cj/\eps^4}=\td_j.\]



 The algorithm from \Cref{obs: paths from embedding} also guarantees that the congestion caused by the paths in $\pset'$ in $H$ is at most $\eta_{j-1}\cdot N^{128\eps^2}$. Since $\eta_{j-1}=N^{6+256(j-1)\eps^2}$, we get that the congestion caused by the paths in $\pset'$ is bounded by:
 
 \[ \eta_{j-1}\cdot N^{128\eps^2}\leq  N^{6+256(j-1)\eps^2}\cdot N^{128\eps^2}
 \leq \frac{N^{6+256j\eps^2}}2=\frac{\eta_j} 2.\]

We have now obtained a set $\pset'$ of paths in graph $H$, that routes every vertex of $X$ to a distinct vertex of $Y$, so that the length of every path is at most $\td_j$, and the congestion caused by the paths in $\pset'$ is bounded by $\eta_j/2$.

We partition the  graphs of $\hset'$ into three subsets. Set $\hset^N$ contains all graphs $H_i\in \hset'$, in which $N_i^A=N_i^B$. We no longer need to route any vertices in such graphs, as for each such graph $\beta_i=N^A_i=N^B_i$; $X_i=A\cap S(H_i)$; and $Y_i=B\cap S(H_i)$ must hold. Set $\hset^A$ contains graphs $H_i\in \hset'$ with $N_i^A>N_i^B$. For each such graph $H_i$, we denote by $D(H_i)=N_i^A-N_i^B$, and by $X'_i=( A \cap S(H_i))\setminus X_i$ -- the set of vertices of $S(H_i)$ that remain to be routed. Clearly, $|X'_i|=D(H_i)$. Similarly, set $\hset^B$ contains graphs 
$H_i\in \hset'$ with $N_i^B>N_i^A$. For each such graph $H_i$, we denote by $D(H_i)=N_i^B-N_i^A$, and by $X'_i=( B \cap S(H_i))\setminus Y_i$ -- the set of vertices that remain to be routed. As before, $|X'_i|=D(H_i)$ holds. Notice also that, since $|A|=|B|$, $\sum_{H_i\in \hset^A}D(H_i)=\sum_{H_i\in \hset^B}D(H_i)$ must hold.  Denote $\hat A=\bigcup_{H_i\in \hset^A}X'_i$ and $\hat B=\bigcup_{H_i\in \hset^B}X'_i$. 
It is now enough to prove the following lemma.

\begin{lemma}\label{lem: good routing}
	There is a set $\pset''$ of paths in graph $H$, routing every vertex of $\hat A$ to a distinct vertex of $\hat B$, so that the length of every path in $\pset''$ is at most $\td_j$, and the paths in $\pset''$ cause congestion at most $\eta_j/2$ in $H$.
\end{lemma}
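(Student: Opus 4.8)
\textbf{Proof plan for \Cref{lem: good routing}.}
The plan is to route the residual vertices $\hat A$ to $\hat B$ in two stages, a \emph{global} stage that moves flow between the graphs $H_i\in\hset^A$ and $H_{i'}\in\hset^B$ using the paths $\qset_{i,i'}$ constructed in Phase~2, and a \emph{local} stage that redistributes inside each $H_i$ using its well-connectedness with respect to $S(H_i)$. First I would set up the meta-graph $\hat G$: one node $v_i$ for each $H_i\in\hset'$, and a fractional "transportation" between the nodes of $\hset^A$ and those of $\hset^B$, where node $v_i$ with $H_i\in\hset^A$ has supply $D(H_i)=|X_i'|$ and node $v_{i'}$ with $H_{i'}\in\hset^B$ has demand $D(H_{i'})$. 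Since $\sum_{H_i\in\hset^A}D(H_i)=\sum_{H_{i'}\in\hset^B}D(H_{i'})$, and since $D(H_i)\le |V(H_i)|=N^{j-1}$ for every $i$, I can pick an integral transportation plan $w:(\hset^A\times\hset^B)\to\Z_{\ge 0}$ respecting these supplies/demands (e.g. greedily), so that $\sum_{i'}w(i,i')=D(H_i)$ and $\sum_i w(i,i')=D(H_{i'})$, and each individual $w(i,i')\le N^{j-1}$.

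For the global routing, for each pair $(H_i,H_{i'})$ with $w(i,i')>0$ I use $w(i,i')$ of the paths from $\qset_{i,i'}$. This is legitimate because \Cref{obs: good graphs can route} guarantees $|\qset_{i,i'}|\ge N^{j-1}/4$ — but I need more paths than that in the worst case, so here I would instead invoke the well-connectedness of $H_i,H_{i'}$ together with the embedding $\pset$ more carefully, exactly as in Step~1: I do not try to route $S(H_i)$ to $S(H_{i'})$ directly in $H$, but rather designate, inside $H_i$, an arbitrary set $Z_i\subseteq S(H_i)$ of size $D(H_i)$ containing $X_i'$, route $X_i'$ one-to-one to the appropriate sub-blocks of $Z_i$ using $H_i$'s well-connectedness, then use $\qset_{i,i'}$ only to carry a modest number of "hub" paths, and finish inside $H_{i'}$. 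Concretely: the cleanest route is to observe that $|\qset_{i,i'}|\ge N^{j-1}/4 \ge D(H_i)$ is false in general, so I sum the $w(i,i')$ over a fixed $i$ and note $\sum_{i'}w(i,i')=D(H_i)\le N^{j-1}$; hence for each $i$ I need at most $N^{j-1}$ global path-segments emanating from $S(H_i)$, and since for each individual $i'$ we only need $w(i,i')\le |\qset_{i,i'}|$ after possibly composing with one extra round of local rerouting inside $H_i$ to spread the demand onto the endpoints of $\qset_{i,i'}$, this works. The paths $\qset_{i,i'}$ have length $\le d'$ and congestion $\le N^4$ in $H$; summing over all $\le N^2$ pairs gives total congestion $\le N^6$ from this layer. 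The local rerouting inside each $H_i$ (to align $X_i'$ and the $\qset$-endpoints) is a one-to-one routing between two equal-size subsets of $S(H_i)$, hence has length $\le\td_{j-1}$ and congestion $\le\eta_{j-1}$ in $H_i$; pushed through the embedding $\pset$ via \Cref{obs: paths from embedding}, it contributes length $\le\td_{j-1}\cdot d'\le\td_j$ and congestion $\le\eta_{j-1}\cdot N^{128\eps^2}$ in $H$.

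Concatenating: each final path in $\pset''$ is (local segment in $H_i$) $+$ (one $\qset_{i,i'}$ path) $+$ (local segment in $H_{i'}$), of total length at most $2\td_{j-1}d' + d' \le \td_j$ (using $d'<\tfrac14\cdot 2^{c/\eps^4}$ from \eqref{eq: bound on d'} and $\td_{j-1}=2^{c(j-1)/\eps^4}$), and total congestion at most $2\eta_{j-1}N^{128\eps^2} + N^6 \le \eta_j/2$ since $\eta_j=N^{6+256j\eps^2}$ and $\eta_{j-1}=N^{6+256(j-1)\eps^2}$, giving $\eta_{j-1}N^{128\eps^2}\le\tfrac12 N^{6+256j\eps^2}$ with room to spare. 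Finally one checks that the combined routing is a genuine one-to-one routing of $\hat A$ to $\hat B$: the transportation plan $w$ ensures the multiplicities match on every node, the local routings are one-to-one by well-connectedness, and distinct $\qset$-paths have distinct endpoints, so composing them yields distinct endpoints overall. The main obstacle I anticipate is exactly the bookkeeping in the global stage: making sure that the number of $\qset_{i,i'}$-paths actually available ($\ge N^{j-1}/4$) suffices to carry $w(i,i')$ after the local spreading step — i.e. choosing the decomposition of $D(H_i)$ among the $i'$ so that no single pair is overloaded, and that the local rerouting inside $H_i$ can simultaneously serve all the pairs $(i,i')$ without exceeding $H_i$'s well-connectedness congestion bound $\eta_{j-1}$. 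This is handled by noting that the union, over all $i'$, of the demands routed inside $H_i$ is at most $|S(H_i)|$, so it is a single one-to-one routing request inside $H_i$, which $H_i$'s well-connectedness satisfies in one shot.
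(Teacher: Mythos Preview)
Your high-level plan matches the paper's exactly: build a bipartite meta-graph on $\hset^A\cup\hset^B$, route globally via the Phase-2 path sets $\qset_{i,i'}$, route locally inside each $H_i$ via its well-connectedness, and concatenate. You also correctly spot the central obstacle: $w(i,i')$ can be as large as $N^{j-1}$ while $|\qset_{i,i'}|$ is only guaranteed to be $\ge N^{j-1}/4$. But your resolution of this obstacle has a genuine gap.

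You write that ``the union, over all $i'$, of the demands routed inside $H_i$ is at most $|S(H_i)|$, so it is a single one-to-one routing request inside $H_i$.'' This is not correct. The total demand $\sum_{i'}w(i,i')=D_i\le |S(H_i)|$ is indeed small, but the \emph{target multiset} inside $H_i$ --- the $S(H_i)$-endpoints of the chosen global paths --- need not consist of distinct vertices. For a fixed $i$, the $S(H_i)$-endpoints of the various $\qset_{i,i'}$ can all come from the same $\approx N^{j-1}/4$ vertices, so if $D_i$ is close to $N^{j-1}$ some endpoint must be hit up to four times. A routing of $X_i'$ to such a target multiset is \emph{not} a one-to-one routing between two subsets of $S(H_i)$, and well-connectedness says nothing about it directly.

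The paper closes this gap with an explicit flow construction (its \Cref{lem: global}): it builds a fractional flow in an auxiliary network, uses the fact that each vertex of $S(H_i)$ is an endpoint of at most one path in each $\qset_{i,i'}$ to show every vertex carries flow at most $4$, rounds to an integral flow, and thereby selects multisets $M'_{i,i'}\subseteq M_{i,i'}$ so that every vertex $x$ has endpoint-multiplicity $k(x)\in\{0,\ldots,4\}$. Only then can the local routing be decomposed into four genuine one-to-one routings inside each $H_i$, each handled by well-connectedness. This is why the paper's congestion bound carries a factor $8\eta_{j-1}\cdot N^{128\eps^2}$ (four routings on each side) rather than your $2\eta_{j-1}\cdot N^{128\eps^2}$. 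Without an argument of this kind bounding $k(x)$ by a constant, your local step does not go through.
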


Indeed, by letting $\pset^*=\pset'\cup \pset''$, we obtain a set of paths in graph $H$ that defines a one-to-one routing of the set $A$ of vertices to the set $B$ of vertices via paths of length at most $\td_j$, so that the paths in $\pset^*$ cause congestion at most $\eta_j$. In order to complete the proof of \Cref{lem: certificate}, it is now enough to prove \Cref{lem: good routing}. We focus on the proof of \Cref{lem: good routing} in the remainder of this section.

We will start by constructing a ``meta-graph'' representing the graphs of $\hset^A\cup \hset^B$, that will guide the construction of global routing.

\subsection*{Step 2: Meta-Graph and Global Routing} 


Abusing the notation, for simplicity, in the remainder of this proof we denote $\hset^A=\set{H_1,H_2,\ldots,H_q}$, and for all $1\leq i\leq q$, we denote $D(H_i)$ by $D_i$. 
We also denote $\hset^B=\set{H'_1,H'_2,\ldots,H'_{q'}}$, and for $1\leq i'\leq q'$, we denote $D(H'_{i'})$ by $D'_{i'}$. 
For all $1\leq i\leq q$, we denote the set $X'_i\subseteq S(H_i)$ of $D_i$ vertices that remains to be routed by $Y_i$, and for all $1\leq i'\leq q'$, we denote the corresponding subset of $D'_i$ vertices of $S(H'_{i'})$ by $Y'_{i'}$.
We now define a \emph{routing meta-graph}, that will be used in order to guide the construction of the paths in $\pset^*$, and show that such a graph exists.

\subsection*{Routing Meta-Graph}

We start by defining a routing meta-graph.
\begin{definition}[Routing Meta-Graph]
A bipartite graph  $\hat G=(U,U',\hat E)$ with integral weights $w(e)\geq 0$ on its edges $e\in \hat E$ is a \emph{routing meta-graph} if:

	\begin{itemize}
		\item  $U=\set{v_1,\ldots,v_q}$;
		\item $U'=\set{v'_1,\ldots,v'_{q'}}$;
		\item for every vertex $v_i\in U$, $\sum_{e\in \delta_{\hat G}(v_i)}w(e)=D_i$; and 
		
		\item for every vertex $v_{i'}'\in U'$, $\sum_{e\in \delta_{\hat G}(v'_{i'})}w(e)=D'_{i'}$.
		\end{itemize}
		We refer to vertices of $\hat G$ as supernodes and edges of $\hat G$ as meta-edges.
\end{definition}

We use the following claim to show that a routing meta-graph exists.

\begin{claim}\label{claim: routing meta-graph}
	There exists a routing meta-graph.
\end{claim}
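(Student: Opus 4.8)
The plan is to construct the routing meta-graph by a standard flow argument. Since $\sum_{i=1}^q D_i = \sum_{i'=1}^{q'} D'_{i'}$ (which was established at the end of Step 1, as $|A|=|B|$), we can think of the $D_i$'s as "supplies" and the $D'_{i'}$'s as "demands" of equal total value. I would build an auxiliary directed flow network: a source $s$ connected to each $v_i$ with capacity $D_i$, a sink $t$ connected from each $v'_{i'}$ with capacity $D'_{i'}$, and an edge from every $v_i$ to every $v'_{i'}$ with infinite (or sufficiently large) capacity. The value $\sum_i D_i$ is clearly achievable as a fractional $s$-$t$ flow: route $\frac{D_i \cdot D'_{i'}}{\sum_j D_j}$ units along each path $s \to v_i \to v'_{i'} \to t$; one checks directly that this saturates all source and sink edges. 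Since all capacities are integral, by the integrality of maximum flow there is an integral flow of the same value. Setting $w(v_i, v'_{i'})$ to be the (integral) amount of flow on the edge $(v_i, v'_{i'})$ — and keeping only the edges carrying positive flow as the meta-edge set $\hat E$ — yields a bipartite graph in which $\sum_{e \in \delta_{\hat G}(v_i)} w(e) = D_i$ for every $i$ (because the source edge into $v_i$ is saturated and flow is conserved at $v_i$), and similarly $\sum_{e \in \delta_{\hat G}(v'_{i'})} w(e) = D'_{i'}$ for every $i'$. This is exactly a routing meta-graph.

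Carrying this out: first I would set up the network precisely, being careful that "infinite capacity" can be replaced by capacity $\sum_j D_j$ so everything stays finite. Next I would exhibit the explicit fractional flow of value $\sum_j D_j$ to certify the max-flow value, then invoke the integral max-flow theorem (max-flow/min-cut with integer capacities) to get an integral flow of the same value. Finally I would read off the weights $w(e)$ from the integral flow and verify the two weighted-degree conditions in the definition of a routing meta-graph via flow conservation at the intermediate vertices. Alternatively — and perhaps even cleaner — one can avoid invoking flow machinery and argue directly by a greedy/induction argument: repeatedly pick any $v_i$ with $D_i > 0$ and any $v'_{i'}$ with $D'_{i'} > 0$, set $w(v_i, v'_{i'}) = \min\{D_i, D'_{i'}\}$, decrement both $D_i$ and $D'_{i'}$ by that amount, and recurse on the smaller instance; each step zeroes out at least one supply or demand, so the process terminates, and the invariant $\sum D_i = \sum D'_{i'}$ is preserved. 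Either way works.

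Honestly there is no real obstacle here — this claim is a routine combinatorial/transportation-polytope fact, and the only thing to be slightly careful about is bookkeeping (making sure $q, q' \geq 1$ whenever $\hat A, \hat B$ are nonempty, and that the degenerate case $\hat A = \hat B = \emptyset$ makes the statement vacuous with the empty graph). The substance of the overall argument — turning this meta-graph into the actual paths $\pset''$ with the required length and congestion bounds — lies in the subsequent steps (Step 3, local routing), not in the existence of the meta-graph itself. So I would keep the proof of \Cref{claim: routing meta-graph} short: one paragraph presenting the greedy construction and verifying the degree equalities.

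\begin{proof}
Recall that $\sum_{i=1}^q D_i = \sum_{i'=1}^{q'} D'_{i'}$, since this common value equals $|\hat A| = |\hat B|$ (as $|A|=|B|$). If this value is $0$ then $U=U'=\emptyset$ works vacuously, so assume it is positive, and in particular $q,q'\geq 1$. We construct the weights greedily. Initialize $\hat E = \emptyset$, and maintain residual supplies $D_i$ and residual demands $D'_{i'}$, starting from their given values. While some $D_i > 0$, pick any such index $i$ and any index $i'$ with $D'_{i'} > 0$ (such an $i'$ exists, since $\sum_{i'} D'_{i'} = \sum_i D_i > 0$). Add a meta-edge $e = (v_i, v'_{i'})$ to $\hat E$ with weight $w(e) = \min\{D_i, D'_{i'}\}$, and decrease both $D_i$ and $D'_{i'}$ by $w(e)$. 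Each step reduces at least one of the residual values to $0$, so the process terminates after at most $q + q'$ steps; when it terminates, all residual supplies are $0$, and hence, since the total residual supply always equals the total residual demand, all residual demands are $0$ as well. By construction, for every $v_i \in U$ the sum of weights of meta-edges incident to $v_i$ equals the total amount subtracted from $D_i$ over the execution, which is its initial value $D_i$; thus $\sum_{e \in \delta_{\hat G}(v_i)} w(e) = D_i$. Symmetrically, $\sum_{e \in \delta_{\hat G}(v'_{i'})} w(e) = D'_{i'}$ for every $v'_{i'} \in U'$. Therefore $\hat G = (U, U', \hat E)$ with these weights is a routing meta-graph.
\end{proof}
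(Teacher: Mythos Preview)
Your proposal is correct and takes essentially the same approach as the paper: the paper's proof is precisely the greedy construction you give in your final paragraph (pick any $i,i'$ with positive residuals, add an edge of weight $\min\{D_i,D'_{i'}\}$, decrement, repeat). Your write-up is slightly more detailed on termination and the degree-equality verification, but the argument is identical.
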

\begin{proof}
	We start with the graph  $\hat G=(U,U',\hat E)$, where  $U=\set{v_1,\ldots,v_q}$,  $U'=\set{v'_1,\ldots,v'_{q'}}$, and $\hat E=\emptyset$, and then iterate, as long as there exist indices $1\leq i\leq q$ and $1\leq i'\leq q'$, such that $D_i>0$ and $D'_{i'}>0$ holds.
	
	In order to execute an iteration, we consider any pair of such indices $(i,i')$. Let $\hat \Delta=\min\set{D_i,D'_{i'}}$. We add an edge $(v_i,v'_{i'})$ to $\hat E$, whose weight is $\hat \Delta$, and we decrease $D_i$ and $D'_{i'}$ by $\hat \Delta$. Once the algorithm terminates, since $\sum_{i=1}^qD_i=\sum_{i'=1}^{q'}D'_{i'}$, it is immediate to verify that the resulting graph $\hat G$ is a valid routing meta-graph. 
\end{proof}

\subsection*{Global Routing}

Consider some pair $v_i\in U$, $v'_{i'}\in U'$ of supernodes in graph $\hat G$. From \Cref{obs: good graphs can route}, there exists a collection of paths in graph $H$, that we denote, abusing the notation, by $\qset_{i,i'}$, such that the following hold:

\begin{itemize}
	\item every path in $\qset_{i,i'}$ originates at a vertex of $S(H_i)$ and terminates at a vertex of $S(H'_{i'})$; 
	\item every vertex of $S(H_i)\cup S(H'_{i'})$ is an endpoint of at most one path in $\qset_{i,i'}$;
	\item $|\qset_{i,i'}|=\ceil{N^{j-1}/4}$;
	\item each path in $\qset_{i,i'}$ has length at most $d'$; and
	\item the paths in $\qset_{i,i'}$ cause congestion at most $N^4$ in $H$.
\end{itemize}

The set $\qset_{i,i'}$ of paths naturally defines  a matching $M_{i,i'}\subseteq S(H_i)\times S(H'_{i'})$: we include a pair $(x,y)$ of vertices in $M_{i,i'}$ if $x\in S(H_i)$, $y\in S(H'_{i'})$, and some path in $\qset_{i,i'}$ has $x$ and $y$ as its endpoints. Clearly, $|M_{i,i'}|=\ceil{N^{j-1}/4}$. Notice that for every meta-edge $e=(v_i,v'_{i'})$ in graph $\hat G$, $w(e)\leq D_i\leq |S(H_i)|\leq N^{j-1}$ must hold. We will select, for every edge $e=(v_i,v'_{i'})\in \hat G$, a multi-set $M'_{i,i'}$ of pairs $(x,y)\in M_{i,i'}$ of vertices, of cardinality $w(e)$. (We note that a pair $(x,y)\in M_{i,i'}$ of vertices may be added to $M'_{i,i'}$ multiple times).  We will ensure that, for every supernode $v_i\in U$, a vertex $x\in S(H_i)$ may participate in at most four pairs in $\bigcup_{e=(v_i,v'_{i'})\in \delta_{\hat G}(v_i)}M'_{i,i'}$, and the same holds for supernodes of $U'$. For every meta-edge $e=(v_i,v'_{i'})\in \hat E$, we will then use the paths of $\qset_{i,i'}$ whose endpoints lie in $M'_{i,i'}$ to define a \emph{global routing}. Let $\qset^0$ denote the resulting collection (multiset) of all such paths. So for every meta-edge $(v_i,v'_{i'})\in \hat E$, for every pair $(x,y)\in M'_{i,i'}$ of vertices, $\qset^0$ contains the path of $\qset_{i,i'}$ whose endpoints are $x$ and $y$. If $(x,y)$ appears multiple times in $M'_{i,i'}$, then $\qset^0$ contains multiple copies of this path. For every graph $\tilde H\in \hset^A\cap \hset^B$, for every vertex $x\in S(\tilde H)$, we denote by $k(x)$ the number of paths of $\qset^0$, for which $x$ serves as an endpoint. Our construction will guarantee that, for all $1\leq i\leq q$, $\sum_{x\in S(H_i)}k(x)=D_i$, and similarly, for all $1\leq i'\leq q'$, $\sum_{x\in S(H'_{i'})}k(x)=D'_{i'}$. As mentioned already, we will ensure that, for every graph $\tilde H\in \hset'$, for every vertex $x\in S(\tilde H)$, $k(x)\in \set{0,\ldots,4}$. In our last step, we will perform \emph{local routing}, in which, for all $1\leq i\leq q$, we connect every vertex of $Y_i$ to some vertex of $S(H_i)$ by a path, such that every vertex $x\in S(H_i)$ is the last endpoint of exactly $k(x)$ such paths. We perform a similar routing in graphs of $\hset^B$. This local routing exploits the fact that every graph $\tilde H\in \hset'$ is $(\eta_{j-1},\td_{j-1})$-well-connected, together with the embedding $\pset$ of the graph $H'=\bigcup_{\tilde H\in \hset'}\tilde H$ into $H$ that we have computed.

In order to simplify the notation, for all $1\leq i\leq q$, we denote by $\hat E_i\subseteq\hat E$ the set of all meta-edges of $\hat G$ that are incident to supernode $v_i$ in $\hat G$. Similarly, for all $1\leq i'\leq q'$, we denote by $\hat E_{i'}'\subseteq\hat E$ the set of all meta-edges of $\hat G$ that are incident to supernode $v'_{i'}$ in $\hat G$.
We prove the following lemma, that allows us to perform global routing.

\begin{lemma}\label{lem: global}
	For every meta-edge $e=(v_i,v'_{i'})\in \hat E$, there is a multiset $M'_{i,i'}$ of pairs of vertices of $S(H_i)\times S(H'_{i'})$, for which the following hold. For all $1\leq i\leq q$, for every vertex $x\in S(H_i)$, let $k(x)$ be the total number of pairs in $\bigcup_{(v_i,v'_{i'})\in \hat E_i}M'_{i,i'}$, in which vertex $x$ participates. Similarly, for all $1\leq i'\leq q'$, for every vertex $x\in S(H'_{i'})$, let $k(x)$ be the total number of pairs in $\bigcup_{(v_{i},v'_{i'})\in \hat E'_{i'}}M'_{i,i'}$, in which vertex $x$ participates. Then:
	
	\begin{itemize}
		\item for every meta-edge $e=(v_i,v'_{i'})\in E(\hat G)$, a pair $(x,y)$ of vertices may only belong to $M'_{i,i'}$ if $(x,y)\in M_{i,i'}$ (but it may be added to $M'_{i,i'}$ multiple times);
		
		\item for every vertex $x\in \left (\bigcup_{i=1}^qS(H_i)\right )\cup \left (\bigcup_{i'=1}^{q'}S(H'_{i'})\right )$, $k(x)\in \set{0,\ldots,4}$;
		
		\item for all $1\leq i\leq q$, $\sum_{x\in S(H_i)}k(x)=D_i$; and
		
		\item for all $1\leq i'\leq q'$, $\sum_{x\in S(H'_{i'})}k(x)=D'_{i'}$.
	\end{itemize}
\end{lemma}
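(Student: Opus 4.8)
The plan is to reduce the lemma to a single integral $s$--$t$ flow computation in an auxiliary network, and to use the routing meta-graph $\hat G$ only to certify that the maximum flow is large. For a meta-edge $e=(v_i,v'_{i'})\in\hat E$ and a pair $p\in M_{i,i'}$, let $c(e,p)\ge 0$ be the multiplicity with which $p$ is placed into $M'_{i,i'}$. The key observation is that the conclusion does \emph{not} require $|M'_{i,i'}|=w(e)$: bullets (3)--(4) only constrain the per-graph totals, and $\sum_{x\in S(H_i)}k(x)=\sum_{e\in\hat E_i}|M'_{i,i'}|=\sum_{e\in\hat E_i}\sum_{p\in M_{i,i'}}c(e,p)$. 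So it suffices to produce nonnegative integers $c(e,p)$ with (i) $\sum_{e,p:\,v\in p}c(e,p)\le 4$ for every vertex $v$, (ii) $\sum_{e\in\hat E_i}\sum_{p\in M_{i,i'}}c(e,p)=D_i$ for every $i$, and (iii) $\sum_{e\in\hat E'_{i'}}\sum_{p\in M_{i,i'}}c(e,p)=D'_{i'}$ for every $i'$; bullet (1) is then automatic, since a pair $p=(x,y)$ lies in a unique graph $H_i\ni x$ and a unique $H'_{i'}\ni y$, hence is assignable only to the meta-edge $(v_i,v'_{i'})$.

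Next I would build a network $F$ with source $s$, sink $t$, a node $u_i$ for each $v_i\in U$, a node $u'_{i'}$ for each $v'_{i'}\in U'$, a node $L_x$ for each $x\in\bigcup_i S(H_i)$, and a node $R_y$ for each $y\in\bigcup_{i'}S(H'_{i'})$, with edges $s\to u_i$ of capacity $D_i$; $u_i\to L_x$ of capacity $4$ for $x\in S(H_i)$; $L_x\to R_y$ of capacity $+\infty$ for every pair $(x,y)$ belonging to some $M_{i,i'}$; $R_y\to u'_{i'}$ of capacity $4$ for $y\in S(H'_{i'})$; and $u'_{i'}\to t$ of capacity $D'_{i'}$. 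An integral $s$--$t$ flow $f$ of value $D:=\sum_i D_i=\sum_{i'}D'_{i'}$ saturates every edge $s\to u_i$ and $u'_{i'}\to t$; then setting $c(e,p):=f(L_x\to R_y)$ for the pair $p=(x,y)$ and its unique meta-edge $e$ gives (i) from the capacity-$4$ edges at $u_i,u'_{i'}$, (ii) from $\sum_{x\in S(H_i)}f(u_i\to L_x)=f(s\to u_i)=D_i$, and (iii) symmetrically. By integrality of maximum flow, such an integral flow exists once we know the maximum flow value is $D$.

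The one place where the hypotheses enter is exhibiting a \emph{fractional} flow of value $D$ in $F$, using $\hat G$ as a template. For each meta-edge $e=(v_i,v'_{i'})$ with weight $w(e)$, recall $|M_{i,i'}|=\ceil{N^{j-1}/4}=:m$ with underlying bijection $\sigma_{i,i'}$, and route $w(e)/m$ units along each of the $m$ paths $u_i\to L_x\to R_{\sigma_{i,i'}(x)}\to u'_{i'}$, $x\in A_{i,i'}$. Summing over $\hat E_i$, the load on $s\to u_i$ is $\sum_{e\in\hat E_i}w(e)=D_i$ (so that edge is exactly saturated), the load on $u_i\to L_x$ is at most $\frac1m\sum_{e\in\hat E_i}w(e)=D_i/m$, and since $D_i=|X'_i|\le|S(H_i)|\le|V(H_i)|=N^{j-1}\le 4\ceil{N^{j-1}/4}=4m$ this is at most $4$; symmetrically the loads on $R_y\to u'_{i'}$ are at most $D'_{i'}/m\le 4$, and $u'_{i'}\to t$ is saturated at $D'_{i'}$. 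Hence $F$ has a flow of value $D$ (and $\le D$ by the cut $\{s\to u_i\}$), so the maximum flow is exactly $D$, which completes the argument.

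\textbf{Main obstacle.} The difficulty is conceptual, not computational: a priori the task looks like a $3$-dimensional-matching--type problem, since each chosen pair simultaneously charges its left endpoint, its right endpoint, and — if one insists on $|M'_{i,i'}|=w(e)$ — its meta-edge, which is not expressible by a single-commodity flow. The resolution is the observation above that bullets (3)--(4) only need the per-graph totals, so the meta-edge budgets can be dropped; then ``a pair charging both endpoints'' is realized precisely by a flow crossing $L_x$ and then $R_y$, and the uniform width $m=\ceil{N^{j-1}/4}$ of all the routings from \Cref{obs: good graphs can route} is exactly what keeps the induced loads at $u_i$ and $u'_{i'}$ below the capacity $4$ that appears in the lemma. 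After this is seen, $\hat G$ is used only to write down the explicit fractional flow of value $D$.
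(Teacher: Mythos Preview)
Your proposal is correct and follows essentially the same approach as the paper's proof: both build an $s$--$t$ flow network with per-graph ``hub'' nodes carrying capacities $D_i$ and $D'_{i'}$, per-vertex nodes carrying capacity $4$, and middle edges indexed by the pairs in $\bigcup_{(v_i,v'_{i'})\in\hat E}M_{i,i'}$; both exhibit the identical fractional flow (spread $w(e)$ uniformly over the $\lceil N^{j-1}/4\rceil$ pairs of $M_{i,i'}$, so each per-vertex edge carries at most $D_i/\lceil N^{j-1}/4\rceil\le 4$) and then invoke integrality of maximum flow. The only cosmetic difference is that the paper puts capacity $4$ on the middle pair-edges while you put capacity $+\infty$ there, which is immaterial since the adjacent capacity-$4$ edges already bound the flow through any pair.
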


\begin{proof}
	We construct the following directed flow network. We start with a bipartite graph $\tilde G=(\tilde X,\tilde Y,\tilde E)$, where $\tilde X=\bigcup_{i=1}^qS(H_i)$, $\tilde Y=\bigcup_{i'=1}^{q'}S(H'_{i'})$, and  $\tilde E =\bigcup_{(v_i,v'_{i'})\in E(\hat G)}M_{i,i'}$. All edges are directed from vertices of $\tilde X$ towards vertices of $\tilde Y$, and they have capacity $4$ each. For all $1\leq i\leq q$, we add a vertex $s_i$, that connects to every vertex in $S(H_i)$ with an edge of capacity $4$. For all $1\leq i'\leq q'$, we add a vertex $t_{i'}$, to which every vertex of $S(H'_{i'})$ connects with an edge of capacity $4$. Lastly, we add a source vertex $s$, and a destination vertex $t$. For all $1\leq i\leq q$, we add an edge $(s,s_i)$ of capacity $D_i$, and for all $1\leq i'\leq q'$, we add an edge $(t_i,t)$ of capacity $D'_{i'}$.
	
	We claim that this network as a valid $s$-$t$ flow $f$ of value $D=\sum_{i=1}^qD_i=\sum_{i'=1}^{q'}D'_{i'}$. We obtain this flow as follows. Consider a meta-edge $e=(v_i,v'_{i'})\in E(\hat G)$. Recall that we are given an integral weight $w(e)\leq N^{j-1}$, and a matching $M_{i,i'}\subseteq S(H_i)\times S(H'_{i'})$ of cardinality $\ceil{N^{j-1}/4}$. For every edge $e'=(x,y)\in M_{i,i'}$, we set the flow $f(e')=\frac{w(e)}{|M_{i,i'}|}=\frac{w(e)}{\ceil{N^{j-1}/4}}$. Notice that this ensures that the total flow on all edges of $M_{i,i'}$ is precisely $w(e)$, and for every edge $e'\in M_{i,i'}$, $f(e')\leq 4$. Once we process every meta-edge of $\hat G$, we finalize the flow values $f(e')$ for all edges $e'\in \tilde E$.
	
	Consider now some index $1\leq i\leq q$, and some vertex $x\in S(H_i)$. We claim that the total flow on all edges of $\tilde E$ that are incident to $x$ in the flow network is at most $4$. Indeed, recall that $\hat E_i$ is the set of all meta-edges of $\hat G$ that are incident to supernode $v_i$. From the definition of a routing meta-graph, we are guaranteed that $\sum_{e\in \hat E_i}w(e)=D_i$. For every meta-edge $e=(v_i,v_{i'}')\in \hat E_i$, if some edge of $M_{i,i'}$ is incident to $x$, then the flow on this edge is $\frac{w(e)}{\ceil{N^{j-1}/4}}$. Therefore, the total flow on all edges of $\tilde E$ that are incident to $x$ is bounded by:
	
	\[\sum_{e\in \hat E_i} \frac{w(e)}{\ceil{N^{j-1}/4}}= \frac{D_i}{\ceil{N^{j-1}/4}}\leq 4,\]
	
	since $D_i\leq |S(H_i)|\leq N^{j-1}$ must hold.
	
	We set the flow on edge $(s_i,x)$ to be equal to the total amount of flow on all edges of $\tilde E$ that are incident to $x$ in the flow network, which, from the above discussion, is bounded by $4$.
	
	From similar arguments, for every index $1\leq i'\leq q'$, and every vertex $y\in S(H'_{i'})$, the total flow on all edges of $\tilde E$ that are incident to $y$ in the flow network is at most $4$. We set the flow on the edge $(y,t_i)$ to be the total flow on all edges of $\tilde E$ that are incident to $y$ in the flow network.

	Next, we consider an index $1\leq i\leq q$. We set the flow on edge $(s,s_i)$ to be $D_i$. We claim that $\sum_{x\in S(H_i)}f(s_i,x)=D_i$. Indeed, from our construction:

	\[ \sum_{x\in S(H_i)}f(s_i,x)=\sum_{(v_i,v'_{i'})\in \hat E_i}\sum_{(x,y)\in M_{i,i'}}f(x,y)= \sum_{e=(v_i,v'_{i'})\in \hat E_i}\sum_{(x,y)\in M_{i,i'}}\frac{w(e)}{\ceil{N^{j-1}/4}}=\sum_{e=(v_i,v'_{i'})\in \hat E_i}w(e)=D_i.  \]
	
	(we have used the fact that $|M_{i,i'}|=\ceil{N^{j-1}/4}$ for every meta-edge $(v_i,v'_{i'})\in E(\hat G)$).
	
	Similarly, we consider an index $1\leq i'\leq q'$. We set the flow on edge $(t_{i'},t)$ to be $D'_{i'}$. Using the same reasoning as above, $\sum_{y\in S(H'_{i'})}f(y,t_i)=D'_{i'}$.
		We conclude that we have obtained a valid $s$-$t$ flow in the above flow network, whose value is $D$. Since all edge capacities in the flow network are integral, from the integrality of flow, there is an integral $s$-$t$ flow $f'$ of value $D$ in this flow network. 
		 
		We are now ready to define the multisets $M'_{i,i'}$ of pairs of vertices from $S(H_i)\times S(H'_{i'})$, for all $(v_i,v'_{i'})\in E(\hat G)$. Consider any meta-edge $(v_i,v'_{i'})\in E(\hat G)$, and some pair $(x,y)\in M_{i,i'}$ of vertices. If $f'(x,y)>0$, then we include $f'(x,y)$ copies of the pair $(x,y)$ to $M'_{i,i'}$. 
		
		We now verify that all requirements hold for this definition of the multisets $M'_{i,i'}$ for all $(v_i,v'_{i'})\in E(\hat G)$. Clearly, a pair $(x,y)$ of vertices may only be added to $M'_{i,i'}$ if $(x,y)\in M_{i,i'}$.

		Consider now some vertex $x\in \bigcup_{i=1}^qS(H_i)$. Recall that $k(x)$ is the total number of pairs in $\bigcup_{(v_i,v'_{i'})\in \hat E}M'_{i,i'}$ in which $x$ participates. This is equal to the total flow leaving vertex $x$ in $f'$, which, in turn, is equal to the flow on edge $(s_i,x)$. From our definition, the capacity of this edge is $4$, so $k(x)\in\set{0,\ldots,4}$. If $x\in 	\bigcup_{i'=1}^{q'}S(H'_{i'})$, then $k(x)\in \set{0,\ldots,4}$ for similar reasons.
		
		Consider now some index $1\leq i\leq q$. From the above discussion,  $\sum_{x\in S(H_i)}k(x)=\sum_{x\in S(H_i)}f'(s_i,x)$. In other words, $\sum_{x\in S(H_i)}k(x)$ is the total amount of flow leaving vertex $s_i$ in $f'$. From conservation of flow this must be equal to the total amount of flow entering $s_i$. Since we send $D=\sum_{i=1}^qD_i$ flow units from $s$ to $t$, and since, for all $1\leq i\leq q$, the capacity of the edge $(s,s_i)$ is $D_i$, we must send $D_i$ flow units on edge $(s,s_i)$. In other words,  for all $1\leq i\leq q$, $\sum_{x\in S(H_i)}k(x)=D_i$ must hold. From similar arguments,  for all $1\leq i'\leq q'$, $\sum_{x\in S(H'_{i'})}k(x)=D'_{i'}$ must hold.
\end{proof}

We are now ready to define the global routing.
For every meta-edge $e=(v_i,v'_{i'})\in E(\hat G)$, we consider the resulting collection $M'_{i,i'}\subseteq S(H_i)\times S(H_{i'})$ of pairs of vertices. We define a (multi)-set $\qset'_{i,i'}$ of paths, as follows. Consider any pair $(x,y)\in M'_{i,i'}$, and assume that the number of times that it appears in $M'_{i,i'}$ is $N(x,y)$. Recall that $(x,y)\in M_{i,i'}$ must hold, so there must be a path $Q(x,y)\in \qset_{i,i'}$ connecting $x$ to $y$. We add $N(x,y)$ copies of this path to $\qset'_{i,i'}$. Note that, from \Cref{lem: global}, $N(x,y)\leq k(x)\leq 4$ must hold.

We then let $\qset^0=\bigcup_{(v_i,v'_{i'})\in E(\hat G)}\qset'_{i,i'}$ (again, set $\qset^0$ is a multiset, so if some path appears several times in some set $\qset'_{i,i'}$, then it will appear several times in $\qset^0$).

Recall that, for every pair $H_i,H'_{i'}\in \hset'$ of graphs, the paths in $\qset_{i,i'}$ have length at most $d'$ each, and they cause congestion at most $N^4$ in $H$. Since $\qset'_{i,i'}$ contains at most four copies of each path in $\qset_{i,i'}$, and since $|E(\hat G)|\leq |\hset'|^2\leq N^2$, the paths in $\qset^0$ cause congestion at most $4N^6$ in $H$, and every path has length at most $d'$ as before. 
For every vertex $x\in \left (\bigcup_{i=1}^qS(H_i)\right )\cup \left (\bigcup_{i'=1}^{q'}S(H'_{i'})\right )$, we use the definition of the value $k(x)$ from \Cref{lem: global}. The number of paths in $\qset^0$, in which $x$ serves as an endpoint is then precisely $k(x)$, and, from \Cref{lem: global}, $k(x)\in \set{0,\ldots,4}$. Recall also that,
for all $1\leq i\leq q$, $\sum_{x\in S(H_i)}k(x)=D_i$, and
 for all $1\leq i'\leq q'$, $\sum_{x\in S(H'_{i'})}k(x)=D'_{i'}$.

\subsection*{Step 3: Local Routing}

Consider some graph $H_i\in \hset^A$. We have defined a set $Y_i\subseteq S(H_i)$ of $D_i$ vertices of $H_i$ that need to be routed. For every vertex $x\in S(H_i)$, we are also now given a value $k(x)\in \set{0,\ldots,4}$, which is exactly the number of paths in $\qset^0$ for which vertex $x$ serves as an endpoint. We are also guaranteed that $\sum_{x\in S(H_i)}k(x)=D_i$. We can then construct four sets $Z_i^1,Z_i^2,Z_i^3,Z_i^4$ of vertices of $S(X)$, such that for every vertex $x\in S(X)$, each of the four sets contains at most one copy of $x$; and the number of sets in $\set{Z_i^1,\ldots,Z_i^4}$ containing $x$ is exactly $k(x)$. Clearly, $\sum_{a=1}^4|Z_i^a|=D_i$. We also partition the set $Y_i$ of vertices into four subsets $Y_i^1,\ldots,Y_i^4$ arbitrarily, so that, for all $1\leq a\leq 4$, $|Y_i^a|=|Z_i^a|$. Since $Y_i\subseteq S(H_i)$, from the fact that graph $H_i$ is $(\eta_{j-1},\td_{j-1})$-well-connected with respect to $S(H_i)$, for all $1\leq a\leq 4$, there is a set $\hat\pset_i^a$ of $|Z_i^a|$ paths in graph $H_i$, routing every vertex of $Y_i^a$ to a distinct vertex of $Z_i^a$, such that the length of every path is at most $\td_{j-1}$, and the paths cause congestion at most $\eta_{j-1}$ in graph $H_i$. 
Let $\hat \pset_i=\bigcup_{a=1}^4\hat \pset_i^a$. We think of the paths in $\hat \pset_i$ as being directed away from vertices of $Y_i$. Notice that the paths in $\hat \pset_i$ route every vertex of $Y_i$ to some vertex of $S(H_i)$, such that, for every vertex $x\in S(H_i)$, exactly $k(x)$ paths of $\hat \pset_i$ terminate at $x$. The paths in $\hat \pset_i$ cause congestion at most $4\eta_{j-1}$ in graph $H_i$, and have length at most $\td_{j-1}$ each.

Consider the graph $H'=\bigcup_{H_i\in \hset'}H_i$. Recall that we have computed, in Phase 1 of the algorithm, an embedding $\pset$ of $H'$ into $H$, where every path in the embedding has length at most $d'$, and the paths in $\pset$ cause congestion at most $N^{128\eps^2}$ in $H$.

Consider now the set $\hat \pset=\bigcup_{1\leq i\leq q}\hat \pset_i$ of paths in graph $H'$. This set of paths routes every vertex of $\bigcup_{1\leq i\leq q}Y_i$ to some vertex of $\bigcup_{1\leq i\leq q}S(H_i)$, such that, for every vertex $x\in \bigcup_{1\leq i\leq q}S(H_i)$, exactly $k(x)$ paths of $\hat \pset$ terminate at $x$. Additionally, the length of every path in $\hat \pset$ is is at most $\td_{j-1}$, and the paths in $\hat \pset$ cause congestion at most $4\eta_{j-1}$ in $H'$. 

We now use the algorithm from \Cref{obs: paths from embedding}
with the collection $\hat \pset$ of paths, and
 the embedding $\pset$ of graph $H'=\bigcup_{H_i\in \hset'}H_i$ into $H$, in order to compute a set $\qset^1$ of paths in graph $H$, routing every vertex $\bigcup_{1\leq i\leq q}Y_i$ to some vertex of $\bigcup_{1\leq i\leq q}S(H_i)$, such that, for every vertex $x\in \bigcup_{1\leq i\leq q}S(H_i)$, exactly $k(x)$ paths of $\hat \pset$ terminate at $x$. 
The algorithm ensures that the length of every path in $\hat \pset$ is at most $\td_{j-1}\cdot d'$, and the paths in $\hat \pset$ cause congestion at most $4\eta_{j-1}\cdot N^{128\eps^2}$. 

For every index $1\leq i'\leq q'$, we similarly compute a set $\hat \pset'_{i'}$ of paths in graph $H'_{i'}$, that route some vertices of $S(H_{i'})$ to vertices of $Y'_{i'}$, so that for every vertex $y\in Y'_{i'}$, exactly one path in $\hat \pset'_{i'}$ terminates at $y$, and for every vertex $x\in S(H'_{i'})$, exactly $k(x)$ paths of $\pset'_{i'}$ originate at $x$.
We use the embedding $\pset$ of graph $H'$ into $H$ exactly as before,  in order to compute a set $\qset^2$ of paths in graph $H$, 
routing vertices of $\bigcup_{1\leq i'\leq q'}S(H'_{i'})$ to vertices of $\bigcup_{1\leq i'\leq q'}Y'_{i'}$, such that, for every vertex $x\in \bigcup_{1\leq i\leq q}S(H_i)$, exactly $k(x)$ paths of $\qset^2$ originate at $x$, and for every vertex $y\in \bigcup_{1\leq i'\leq q'}Y'_{i'}$, exactly one path of $\qset^2$ terminates at $y$. As before, we can ensure that the length of each every path in $\qset^2$ is at most $\td_{j-1}\cdot d'$, and the paths in $\qset^2$ cause congestion at most $4\eta_{j-1}\cdot N^{128\eps^2}$ in $H$.

By concatenating the paths of $\qset^1,\qset^0$ and $\qset^2$, we obtain the final set $\pset''$ of paths, that defines a one-to-one routing between vertices of $\bigcup_{i=1}^qY_i$ and vertices of $\bigcup_{i'=1}^{q'}Y'_{i'}$. The length of every path in $\pset''$ is bounded by $2\td_{j-1}\cdot d'+d'\leq 3\td_{j-1}\cdot d'$.

Recall that $\td_{j-1}=2^{c(j-1)/\eps^4}$, while $d'<\frac 1 4\cdot 2^{c/\eps^4}$ from Inequality \ref{eq: bound on d'}. Therefore, we get that the length of every path in $\pset''$ is at most:

\[3\td_{j-1}\cdot d'\leq 2^{c(j-1)/\eps^4}\cdot 2^{c/\eps^4}=2^{cj/\eps^4}=\td_j.\]

 The total congestion caused by the paths in $\pset''$ is bounded by:

\[ 4N^6+8\eta_{j-1}\cdot N^{128\eps^2}=4N^6+8\cdot N^{6+256(j-1)\eps^2}\cdot N^{128\eps^2}
\leq \frac{N^{6+256j\eps^2}} 2=\frac{\eta_j} 2, \]

since $\eta_{j-1}=N^{6+256(j-1)\eps^2}$, and $N$ is sufficiently large.

This completes the proof of \Cref{lem: good routing}

\section{APSP in Well-Connected Graphs -- Proof of  \Cref{thm: APSP in HSS full}}
\label{sec: APSP}

The goal of this section is to prove \Cref{thm: APSP in HSS full}. We do so using the following theorem. 

\begin{theorem}\label{thm: APSP in HSS}
	There are large enough constants $c',c''$, and a deterministic algorithm, whose input consists of:
	
	\begin{itemize}
		\item a parameter $0<\eps<1/400$;
		\item a pair $N$, $1\leq j\leq \ceil{1/\eps}$ of integers,  such that $N$ is sufficiently large, so that $\frac{N^{\eps^4}}{\log N}\geq 2^{128/\eps^6}$ holds;
		\item  a graph $H$ with $|V(H)|=N^j$; and
		\item a level-$j$ \HSS for graph $H$, such that graph $H$ is $(\eta_j,\td_j)$-well-connected with respect to the set $S(H)$ of vertices defined by the \HSS.  
	\end{itemize} 
Further, we assume that graph $H$ undergoes an online sequence of less than $\Lambda_j=N^{j-8-300j\eps^2}$ edge deletions. The algorithm maintains a set $S'(H)\subseteq S(H)$ of vertices of $H$, called \emph{supported vertices}, such that, at the beginning of the algorithm, $S'(H)=S(H)$, and over the course of the algorithm, vertices can leave $S'(H)$ but they may not join it. The algorithm ensures that $|S'(H)|\geq \frac{N^j}{16^j}$ holds over the course of the algorithm, and it supports short-path queries between supported vertices: given a pair $x,y\in S'(H)$ of vertices, return a path $P$ connecting $x$ to $y$ in the current graph $H$, whose length is at most $d^*_j=2^{c'j/\eps^5}$, in time $O(|E(P)|)$. The total update time of the algorithm is bounded by $2c''jN^{j+3}\cdot 2^{4c'/\eps^6}+c''m\cdot N^2\cdot 2^{4c'/\eps^6}$, where $m$ is the number of edges in graph $H$ at the beginning of the algorithm. (If $\Lambda_j\leq 1$, then the algorithm only needs to support short-path queries until the first edge deletion).
\end{theorem}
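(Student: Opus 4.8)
\textbf{Proof plan for \Cref{thm: APSP in HSS}.} The plan is to prove the theorem by induction on the level parameter $j$, mirroring the inductive structure of the \HSS and of \Cref{thm: construct HSS}. The base case $j=1$ is essentially trivial: the level-$1$ \HSS simply provides a set $S(H)$ with $|V(H)\setminus S(H)|\leq N^{1-\eps^4}$, and, since $H$ is $(\eta_1,\td_1)$-well-connected with respect to $S(H)$ with $\td_1=2^{c/\eps^4}$, any two vertices of $S(H)$ are connected by a short path in $H$. We can therefore maintain an \EST-based data structure (or even recompute from scratch, given that $\Lambda_1$ is tiny) to answer short-path queries directly; as edges are deleted, we move from $S'(H)$ any vertex that is no longer within distance $d^*_1$ of a fixed reference vertex, and the bound $|S'(H)|\geq N/16$ follows because the number of deletions $\Lambda_1$ is far too small to disconnect more than a negligible fraction of $S(H)$ (each deletion can ``kill'' only boundedly many vertices via a ball-growing / \EST argument).

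For the inductive step $1<j\leq\ceil{1/\eps}$, I would set up the recursive data structure as follows. The level-$j$ \HSS gives us graphs $H_1,\dots,H_r\in\hset$ with $r=N-\ceil{2N^{1-\eps^4}}$, each with its own level-$(j-1)$ \HSS, each $(\eta_{j-1},\td_{j-1})$-well-connected with respect to $S(H_i)$, and an embedding $\pset$ of $H'=\bigcup_i H_i$ into $H$ via paths of length $\leq 2^{64/\eps^4}$ with congestion $\leq N^{128\eps^2}$. When an edge $e$ of $H$ is deleted, at most $N^{128\eps^2}$ embedding paths are destroyed, so at most $N^{128\eps^2}$ edges of $H'$ (hence of the graphs $H_i$) become unavailable. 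We recursively run the algorithm from the induction hypothesis on each $H_i$ (its deletion budget $\Lambda_{j-1}=N^{(j-1)-8-300(j-1)\eps^2}$ must cover $\Lambda_j\cdot N^{128\eps^2}$ total deletions; this is exactly the kind of bookkeeping the exponents $8+300j\eps^2$ are tuned for, and must be checked carefully). This maintains, for each $i$, a set $S'(H_i)\subseteq S(H_i)$ of size $\geq N^{j-1}/16^{j-1}$, together with short-path queries inside $H_i$ returning paths of length $\leq d^*_{j-1}$. In addition, I would maintain, for each $i$, an \EST data structure in the \emph{whole} graph $H$ (with appropriate small distance bound), rooted at (a super-source attached to) the current set $S'(H_i)$; this lets us detect when a vertex of $H$ is cut off from $S'(H_i)$. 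The set $S'(H)$ is taken to be a large subset of $\bigcup_i S'(H_i)$ satisfying a ``global well-connectedness witness'' analogous to Phase 2 of \Cref{thm: construct HSS}: we keep vertices in $S'(H)$ only as long as enough of the graphs $H_i$ remain mutually connected by short low-congestion paths in $H$ (using the $\qset_{i,i'}$-type routings, re-verified/repaired via basic path peeling). When too many such connections are destroyed we evict the offending $S'(H_i)$'s wholesale; a potential / counting argument against the total deletion budget $\Lambda_j$ shows this can happen only rarely, which gives the invariant $|S'(H)|\geq N^j/16^j$.

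To answer a short-path query for $x,y\in S'(H)$: locate the graphs $H_i,H_{i'}$ with $x\in S'(H_i)$, $y\in S'(H_{i'})$, use the maintained $H$-paths between $S'(H_i)$ and $S'(H_{i'})$ (the global routing) to reduce to finding paths from $x$ to the ``port'' vertex in $S'(H_i)$ and from the port in $S'(H_{i'})$ to $y$, and answer those two sub-queries recursively inside $H_i$ and $H_{i'}$ (lifting the resulting $H_i$-paths to $H$ via the embedding $\pset$, using \Cref{obs: paths from embedding}). Concatenation gives a path whose length is at most $O(1)\cdot d^*_{j-1}\cdot 2^{64/\eps^4}+ (\text{global hop length})$; choosing $c'$ large enough makes this $\leq 2^{c'j/\eps^5}=d^*_j$. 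Query time is linear in the returned path length because each recursive sub-query is, and the lifting step is linear per edge. The total update time is the sum of the $r\leq N$ recursive calls at level $j-1$, plus the cost of maintaining $\leq N$ \ESTs in $H$ (each $O(mj d^*_j\log N)$), plus the path-peeling repairs in Phase-2-style global routing; collecting these and using $r\leq N$, $\Lambda_j$ small, yields the claimed bound $2c''jN^{j+3}\cdot2^{4c'/\eps^6}+c''m N^2\cdot 2^{4c'/\eps^6}$ for suitable $c''$.

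\textbf{Main obstacle.} The delicate part is the maintenance of the set $S'(H)$ and the proof that it stays large: unlike the static construction in \Cref{thm: construct HSS}, here the adversary deletes edges, each deletion cascades (through the embedding $\pset$ with congestion $N^{128\eps^2}$) into many deletions one level down, and we must argue that the ``global connectivity witness'' among the $S'(H_i)$'s degrades slowly enough. This requires a careful amortized/potential analysis in which the deletion budget $\Lambda_j$ at level $j$ is balanced against the congestion blow-up $N^{128\eps^2}$ and the well-connectedness congestion parameter $\eta_{j-1}=N^{6+256(j-1)\eps^2}$, so that the number of ``repair'' or ``eviction'' events is controlled — this is exactly where the somewhat mysterious exponents ($8+300j\eps^2$ in $\Lambda_j$, the $16^j$ factor, $d^*_j=2^{c'j/\eps^5}$) are forced, and getting all of these to close consistently through the induction is the crux of the proof.
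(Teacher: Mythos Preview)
Your inductive skeleton is right, and you correctly identify the recursive data structures on the $H_i$'s, the cascading of deletions through the congestion-$N^{128\eps^2}$ embedding, and the \EST s in $H$ rooted at the sets $S'(H_i)$. But the mechanism you propose for maintaining $S'(H)$ --- keeping the Phase-2 routings $\qset_{i,i'}$ alive and ``re-verifying/repairing'' them via path peeling --- is not what the paper does, and trying to make it work would be both more expensive (there are $\Theta(N^2)$ pairs) and harder to analyze than necessary.

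The paper's approach avoids maintaining any global routings. It partitions $\hset$ into \emph{active}, \emph{inactive}, and \emph{destroyed} graphs and sets $S'(H)=\bigcup_{H_i\in\hset^A}S'(H_i)$. A graph $H_i$ is destroyed once it has absorbed $\lceil\Lambda_{j-1}\rceil$ deletions; a simple counting bound gives $|\hset^D|<N/32$. A graph $H_i$ is moved to $\hset^I$ when, according to the counters maintained via the \EST\ $\tau_i$, at most $7N/8$ of the graphs $H_{i'}\in\hset\setminus\hset^D$ still have some vertex of $S'(H_{i'})$ within distance $d^*_j/32$ of $S'(H_i)$ in the current $H$. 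The crucial step (Claim~\ref{claim: few inactive graphs}) is that $|\hset^I|\leq N/32$ throughout, and this is proved \emph{non-constructively}: if $|\hset^I|$ were larger, Procedure \procsep\ applied to a suitably sampled set of supported vertices would produce two large sets $T_1,T_2$ at distance $>\td_j$ in the current $H$; but the \emph{original} $(\eta_j,\td_j)$-well-connectedness of $H$ gave a low-congestion routing between $T_1$ and $T_2$, so destroying all those paths would require $\geq |T_1|/\eta_j>\Lambda_j$ deletions. This is where the exponents $\eta_j=N^{6+256j\eps^2}$ and $\Lambda_j=N^{j-8-300j\eps^2}$ are balanced, and it is purely an existence/counting argument --- no repair is ever performed.

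Given $|\hset^A|\geq 59N/64$, any two active graphs $H_i,H_{i'}$ share a common ``nearby'' graph $H_{i''}$ (pigeonhole), which gives $\dist_H(S'(H_i),S'(H_{i'}))\leq d^*_j/8$. A query for $x\in S'(H_i)$, $y\in S'(H_{i'})$ then uses only the \EST\ $\tau_i$ to walk from $y$ to some $x'\in S'(H_i)$, followed by a single recursive query in $H_i$ from $x$ to $x'$, lifted to $H$ via the embedding; no global routing is consulted. Your base case is also overcomplicated: the paper takes $j\leq 8$ as the base, where $\Lambda_j\leq 1$ so the structure is static and a single BFS tree suffices.
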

	
It is immediate to verify that \Cref{thm: APSP in HSS full} follows from \Cref{thm: APSP in HSS}, by substituting $j=1/\eps$.
In the remainder of this section we prove \Cref{thm: APSP in HSS}.
	The proof is by induction on $j$.

\subsection{Base Case: $j\leq 8$}
	We first consider the base case, where $j\leq 8$. In this case, $\Lambda_j\leq 1$ holds, and so we only need to support short-path queries until the first edge deletion.
	 
	In this case, the level-$j$ Hierarchical Support Structure for graph $H$ defines a set $S(H)$ of vertices, and, from \Cref{claim size of supported}, $|V(H)\setminus S(H)|\leq |V(H)|\cdot \frac{4j}{N^{\eps^4}}$.
	Since $j\leq \ceil{1/\eps}$, and $N^{\eps^4}\geq 2^{128/\eps^5}$ from the statement of \Cref{thm: APSP in HSS}, we get that $|S(H)|\geq |V(H)|/2=N^j/2$. We set $S'(H)=S(H)$, and this set remains unchanged throughout the algorithm.
	
	 Recall that are guaranteed that graph $H$ is $(\eta_j,\td_j)$-well-connected with respect to $S(H)$, where $\td_j=2^{cj/\eps^4}\leq \frac{2^{c'j/\eps^5}}{2}= \frac{d^*_j} 2$ (if $c'>c$), and $\eta_j=N^{6+256j\eps^2}$. In particular, for every pair $x,y\in S'(H)$ of supported vertices, there is a path of length at most $d^*_j/2$ connecting $x$ to $y$ in $H$.
	 We let $s\in S(H)$ be an arbitrary vertex, and we construct a BFS tree $\tau$ rooted at vertex $s$, whose depth is bounded by $d^*_j/2$. Computing the tree takes time $O(|E(H)|)$. In order to respond to a short-path query between a pair $x,y$ of vertices of $H$, we simply compute the unique simple path $P$ connecting $x$ to $y$ in the tree $\tau$, which can be done in time $O(|E(P)|)$. Since the depth of the tree is bounded by $d^*_j/2$, the length of the path is at most $d^*_j$.

\subsection{Step: $j>8$}
We assume that we are given a graph $H$ with $|V(H)|=N^j$, together
 level-$j$ hierarchical support structure for graph $H$, whose associated collection of graphs is $\hset=\set{H_1,\ldots,H_r}$. Recall that $r=N-\ceil{2N^{1-\eps^4}}$, and we are guaranteed that graph $H$ is $(\eta_j,\td_j)$-well-connected with respect to the set $S(H)=\bigcup_{H_i\in \hset}S(H_i)$ of vertices. Additionally, the hierarchical support structure contains an embedding $\pset$ of the graph $H'=\bigcup_{i=1}^{r}H_i$ into graph $H$ via paths of length at most $2^{64/\eps^4}$, that causes congestion at most $N^{128\eps^2}$. For every edge $e\in E(H')$, we denote by $P(e)\in \pset$ the unique path that serves as the embedding of $e$ in $G$.
 
 Our algorithm will maintain a set $S'(H)\subseteq S(H)$ of supported vertices, where initially $S'(H)=S(H)$. While vertices may leave set $S'(H)$  over the course of the algorithm, set $S(H)$ remains unchanged. 
 
 Our algorithm  recursively applies the algorithm from \Cref{thm: APSP in HSS} to each of the graphs in $\hset$. When an edge $e\in E(H)$ is deleted, then for all $1\leq i\leq r$, for every edge $e'\in E(H_i)$ whose corresponding embedding path $P(e')$ contains $e$, we delete edge $e'$ from graph $H_i$. Since the paths in $\pset$ cause congestion at most $N^{128\eps^2}$, the deletion of a single edge from graph $H$ may trigger the deletion of up to $N^{128\eps^2}$ edges from graphs $H_1,\ldots,H_r$ overall. As the result of these edge deletions, the supported sets of vertices $S'(H_i)$ that the algorithm from \Cref{thm: APSP in HSS} maintains recursively for each of the graphs $H_i\in \hset$ may need to be updated. Once a graph $H_i\in \hset$ undergoes $\ceil{\Lambda_{j-1}}$ edge deletions, we say that it is \emph{destroyed}. Once graph $H_i$ is destroyed, the corresponding set $S'(H_i)$ of vertices is set to $\emptyset$.

 Our algorithm maintains a partition of the set $\hset$ of graphs into three subsets: set $\hset^D$ of \emph{destroyed graphs}, set $\hset^I$ of \emph{inactive graphs}, and set $\hset^A$ of active graphs.
 Set $\hset^D$ contains all graphs that have been destroyed so far. Once a graph joins set $\hset^D$, it remains in $\hset^D$ for the remainder of the algorithm.
 We define the sets $\hset^I$ and $\hset^A$ of graphs later. We will ensure that the set $\hset^A$ of active graphs is decremental, and we will set $S'(H)=\bigcup_{H_i\in \hset^A}S'(H_i)$ throughout the algorithm. Intuitively, we will maintain, for every active graph $H_i\in \hset^A$, an \EST data structure that is rooted at the vertices of $S'(H_i)$. 
 We need the following simple observation bounding the number of graphs in $\hset^D$.
 
 \begin{observation}\label{obs: few destroyed graphs}
 	Over the course of the algorithm, $|\hset^D|< N/32$ always holds.
 \end{observation}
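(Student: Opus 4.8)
\textbf{Proof proposal for Observation \ref{obs: few destroyed graphs}.}
The plan is a counting argument: a graph $H_i\in\hset$ is destroyed only after it has absorbed $\ceil{\Lambda_{j-1}}$ edge deletions, so bounding the total number of edge deletions across all graphs in $\hset$ immediately bounds $|\hset^D|$. First I would observe that graph $H$ undergoes at most $\Lambda_j=N^{j-8-300j\eps^2}$ edge deletions over the whole algorithm, and each deletion of an edge $e\in E(H)$ triggers the deletion of at most one edge $e'$ from each $H_i$ for which $e\in P(e')$; since the embedding $\pset$ of $H'=\bigcup_i H_i$ into $H$ has congestion at most $N^{128\eps^2}$, a single deletion in $H$ causes at most $N^{128\eps^2}$ edge deletions in total among the graphs $H_1,\dots,H_r$. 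Hence the total number of edge deletions summed over all graphs in $\hset$ is at most $\Lambda_j\cdot N^{128\eps^2}=N^{j-8-300j\eps^2+128\eps^2}$.

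Next I would divide by the per-graph destruction threshold. Each destroyed graph accounts for at least $\ceil{\Lambda_{j-1}}\ge \Lambda_{j-1}=N^{(j-1)-8-300(j-1)\eps^2}=N^{j-9-300(j-1)\eps^2}$ edge deletions. Therefore
\[
|\hset^D|\le \frac{N^{j-8-300j\eps^2+128\eps^2}}{N^{j-9-300(j-1)\eps^2}}
= N^{\,1-300j\eps^2+128\eps^2+300(j-1)\eps^2}
= N^{\,1-172\eps^2}.
\]
Finally, since $N$ is sufficiently large — concretely, using $\frac{N^{\eps^4}}{\log N}\ge 2^{128/\eps^6}$ from the statement of \Cref{thm: APSP in HSS}, which forces $N^{\eps^2}$ (and a fortiori $N^{172\eps^2}$) to exceed $32$ — we get $N^{1-172\eps^2}= N/N^{172\eps^2} < N/32$, as desired. (One should double check the arithmetic in the exponent of $\Lambda_j$ and $\Lambda_{j-1}$ against the precise definitions in \Cref{thm: APSP in HSS}; the constant $300$ in the exponent is presumably chosen exactly so that this slack survives after accounting for the $N^{128\eps^2}$ congestion blowup.)

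The only mildly delicate point — the ``main obstacle,'' though it is really just bookkeeping — is making sure the exponent comparison goes through for \emph{all} $1<j\le\ceil{1/\eps}$ simultaneously, i.e. that $300j\eps^2-128\eps^2-300(j-1)\eps^2=172\eps^2$ is a positive constant-factor-of-$\eps^2$ gap independent of $j$, and that this gap is large enough relative to $\log_N 32$ for the given lower bound on $N$. Since $300(j)-300(j-1)=300$ regardless of $j$, the gap is uniformly $172\eps^2$, and the bound $N^{\eps^4}\ge 2^{128/\eps^6}\log N$ (hence $N^{\eps^2}\ge 2^{128/\eps^4}\gg 32$) closes the argument for every relevant $j$. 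No further structural input about the $H_i$'s, their \HSS's, or the well-connectedness is needed for this particular observation — it is purely a consequence of the congestion bound on $\pset$ and the definitions of $\Lambda_j$, $\Lambda_{j-1}$.
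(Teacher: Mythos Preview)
Your proof is correct and follows essentially the same counting argument as the paper: bound the total number of edge deletions across all $H_i$ by $\Lambda_j\cdot N^{128\eps^2}$ via the congestion of the embedding, divide by the per-graph destruction threshold $\ceil{\Lambda_{j-1}}$, and verify that $N^{172\eps^2}>32$. The paper phrases it as a contradiction (assuming $|\hset^D|\ge N/32$ and deriving that more than $\Lambda_j$ edges must have been deleted from $H$) rather than as a direct upper bound, but the arithmetic and the underlying idea are identical.
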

 
 \begin{proof}
 	Assume otherwise, and consider the first time $t$ during the algorithm when $|\hset^D|\geq N/32$ held. 
 	Recall that a graph $H_i\in \hset$ is destroyed once it undergoes $\ceil{\Lambda_{j-1}}$ edge deletions. Therefore, at least $\frac{N}{32}\cdot \ceil{\Lambda_{j-1}}$ edges have been deleted from $\bigcup_{H_i\in \hset}E(H_i)$ by time $t$.
 	On the other hand, the deletion of a single edge from $H$ may trigger the deletion of at most $N^{128\eps^2}$ edges from graphs $H_1,\ldots,H_r$. Therefore, the number of edges that have been deleted from $H$ by time $t$ is at least:
 	
 	\[ \frac{N}{32N^{128\eps^2}}\cdot \ceil{\Lambda_{j-1}}=\frac{N^{1-128\eps^2}}{32}\cdot \ceil{N^{j-9-300(j-1)\eps^2}}>N^{j-8-300j\eps^2}=\Lambda_j,  \] 
 	
 	a contradiction.
 \end{proof}
 
Consider now a graph $H_i\in \hset\setminus \hset^D$ at some time during the algorithm's execution. For every vertex $s\in S'(H_i)$, we denote by $\gset(s)$ the set of all graphs $H_{i'}\in \hset\setminus \hset^D$, such that $S'(H_{i'})\cap B_H(s,d^*_j/32)\neq \emptyset$.
In other words, set $\gset(s)$ contains all graphs $H_{i'}$ that have not been destroyed yet, such that some vertex in the current set $S'(H_{i'})$ of supported vertices is sufficiently close to $s$ in the current graph $H$. Recall that the set $S'(H_{i})$ of vertices is decremental. The following observation will be useful for us.

\begin{observation}\label{obs: critical set of graphs}
	Consider any time $t$ during the algorithm's execution. Let $H_i\in \hset\setminus\hset^D$ be a graph that has not been destroyed by time $t$, and let $s\in S'(H_i)$ be any vertex in the current set of supported vertices for $H_i$.  Then, since the beginning of the algorithm and until time $t$, the collection $\gset(s)$ of graphs has been decremental: that is, graphs may have left it, but no graph may have joined it since the beginning of the algorithm.
\end{observation}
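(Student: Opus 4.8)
The plan is to prove that $\gset(s)$ is decremental by showing that each of the three ingredients in its definition — membership in $\hset\setminus\hset^D$, the supported sets $S'(H_{i'})$, and the ball $B_H(s,d^*_j/32)$ — evolves monotonically in the ``shrinking'' direction over the course of the algorithm. First I would record three monotonicity facts. (i) The set $\hset^D$ of destroyed graphs is incremental by construction (once a graph is destroyed it stays destroyed), so $\hset\setminus\hset^D$ is decremental. (ii) For each graph $H_{i'}\in\hset\setminus\hset^D$, the set $S'(H_{i'})$ maintained recursively by the algorithm of \Cref{thm: APSP in HSS} is decremental, so vertices may leave it but never join it; and when $H_{i'}$ is destroyed $S'(H_{i'})$ is reset to $\emptyset$, which is consistent with this. (iii) Graph $H$ only undergoes edge deletions, hence for every pair $x,y\in V(H)$ the distance $\dist_H(x,y)$ is non-decreasing in time, and in particular the ball $B_H(s,d^*_j/32)$ is non-increasing in time. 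I would also note at the outset that, since vertices are never deleted from $H$, $s\in V(H)$ at all times, and since $S'(H_i)$ is decremental and $s\in S'(H_i)$ at time $t$, we have $s\in S'(H_i)$ at every time $t'\le t$; hence $\gset(s)$ is well-defined at every time $t'\le t$.

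Next I would run the backward-propagation argument. Fix two times $t_1\le t_2\le t$, and suppose some graph $H_{i'}$ belongs to $\gset(s)$ at time $t_2$; the goal is to show that $H_{i'}\in\gset(s)$ already at time $t_1$. By definition of $\gset(s)$ there is a vertex $v$ that lies in $S'(H_{i'})$ at time $t_2$ and satisfies $\dist_H(s,v)\le d^*_j/32$ at time $t_2$. By fact (i), since $H_{i'}\notin\hset^D$ at time $t_2$, also $H_{i'}\notin\hset^D$ at time $t_1$. By fact (ii), since $v\in S'(H_{i'})$ at time $t_2$, also $v\in S'(H_{i'})$ at time $t_1$. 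By fact (iii), the value of $\dist_H(s,v)$ at time $t_1$ is at most its value at time $t_2$, which is at most $d^*_j/32$, so $v\in B_H(s,d^*_j/32)$ at time $t_1$. Therefore at time $t_1$ we have $H_{i'}\notin\hset^D$ and $v\in S'(H_{i'})\cap B_H(s,d^*_j/32)$, i.e. $H_{i'}\in\gset(s)$, as required. Since $t_1\le t_2\le t$ were arbitrary, $\gset(s)$ is decremental on the interval $[0,t]$, which is exactly the claim.

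The argument is essentially a bookkeeping exercise, so I do not expect a genuine obstacle; the only point requiring care is to make sure that the witness vertex $v$ chosen at the later time $t_2$ remains a valid witness at the earlier time $t_1$ — which is precisely what facts (ii) and (iii) guarantee — and to confirm that $\gset(s)$ is well-defined throughout $[0,t]$, i.e. that $s$ stays in $S'(H_i)$ and in $V(H)$, which follows from the decremental nature of $S'(H_i)$ and from the fact that $H$ loses only edges, never vertices.
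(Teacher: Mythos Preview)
Your proof is correct and follows essentially the same approach as the paper's: both arguments pick a witness vertex in $S'(H_{i'})\cap B_H(s,d^*_j/32)$ at the later time and use the three monotonicity properties (decrementality of $\hset\setminus\hset^D$, decrementality of $S'(H_{i'})$, and non-decreasing distances in $H$) to show it remains a witness at the earlier time. The only stylistic difference is that the paper phrases it as a proof by contradiction, whereas you give the direct contrapositive form.
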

\begin{proof}
	Assume for contradiction that $H_{i'}\in \hset$ is some graph that did not belong to set $\gset(s)$ at time $t'$, but belongs to set $\gset(s)$ at time $t''$, where $t'<t''\leq t$. 
	
	From the definition, at time $t''$, $H_{i'}\in \hset\setminus\hset^D$ held. Since graphs may leave set $\hset\setminus\hset^D$ (when they are destroyed) but they may never join $\hset\setminus\hset^D$ over the course of the algorithm, we get that at time $t'$, $H_{i'}\in \hset\setminus\hset^D$ held. Furthermore, from the definition, at time $t''$, some vertex $x\in S'(H_{i'})$ belonged to $B_H(s,d^*_j/32)$. Since set $S'(H_{i'})$ of vertices is decremental, $x\in S'(H_{i'})$ held at time $t'$. Since distances in graph $H$ may only grow over time, $x\in B_H(s,d^*_j/32)$ held at time $t'$. Therefore, $S'(H_{i'})\cap B_H(s,d^*_j/32)\neq \emptyset$ must have held at time $t'$, and graph $H_{i'}$ must have belonged to $\gset(s)$ at time $t'$, a contradiction. 
\end{proof}



Throughout the algorithm, the set $\hset^I$ of inactive graphs will only contain  graphs $H_i\in \hset$ that have not been destroyed yet, for which the following property holds:

\begin{properties}{P}
	\item For every vertex $s\in S'(H_i)$, $|\gset(s)|\leq 7N/8$.  \label{prop P}
\end{properties}

We note that it is possible that some graph $H_i\in \hset\setminus\hset^D$ has Property \ref{prop P} but is not added to $\hset^I$.
 
The following observation shows that once property $\ref{prop P}$ holds for some graph $H_i$, it will continue to hold until the algorithm terminates or the graph is destroyed.

\begin{observation}\label{obs: prop P}
	Let $H_i\in \hset$ be any graph, and assume that Property \ref{prop P} holds for $H_i$ at some time $t$ during the algorithm's execution. Then Property \ref{prop P} holds for $H_i$ from time $t$ and until the algorithm terminates or until $H_i$ is destroyed.
\end{observation}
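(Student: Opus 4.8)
\textbf{Proof plan for Observation~\ref{obs: prop P}.}
The plan is to show that Property~\ref{prop P} is monotone in time, by observing that the quantity $|\gset(s)|$ can only decrease, and the set $S'(H_i)$ can only shrink, as the algorithm proceeds. First I would fix a graph $H_i \in \hset$ and a time $t$ at which Property~\ref{prop P} holds for $H_i$, and let $t' > t$ be any later time at which $H_i$ has not yet been destroyed (if $H_i$ is destroyed at or before $t'$, there is nothing to prove). I need to verify that for every vertex $s \in S'(H_i)$ at time $t'$, we still have $|\gset(s)| \le 7N/8$ at time $t'$.

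The key step is a two-part monotonicity argument. Since the set $S'(H_i)$ of supported vertices is decremental (this is guaranteed recursively by \Cref{thm: APSP in HSS}, and vertices only leave $S'(H_i)$), any vertex $s$ that lies in $S'(H_i)$ at time $t'$ also lay in $S'(H_i)$ at time $t$. Therefore Property~\ref{prop P} applied at time $t$ gives $|\gset(s)| \le 7N/8$ for the set $\gset(s)$ computed at time $t$. It remains to compare $\gset(s)$ at time $t$ with $\gset(s)$ at time $t'$: by \Cref{obs: critical set of graphs}, the collection $\gset(s)$ is decremental over the algorithm's execution — graphs may leave it but none may join it — so the version of $\gset(s)$ at time $t'$ is a subset of the version at time $t$. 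Combining these two facts, $|\gset(s)|$ at time $t'$ is at most $|\gset(s)|$ at time $t$, which is at most $7N/8$. Since this holds for every $s \in S'(H_i)$ at time $t'$, Property~\ref{prop P} holds for $H_i$ at time $t'$ as well.

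I do not anticipate a genuine obstacle here; the statement is essentially a packaging of the two monotonicity facts (\Cref{obs: critical set of graphs} and the decremental nature of $S'(H_i)$) that have already been established. The only mild subtlety to be careful about is the interplay of the two reference times when invoking \Cref{obs: critical set of graphs}: that observation is stated for a fixed "current" time $t$ and compares earlier times against it, so I would apply it with current time $t'$ (at which $H_i$ has not been destroyed and $s \in S'(H_i)$), and take the earlier time to be $t$; this yields exactly the inclusion $\gset(s)^{(t')} \subseteq \gset(s)^{(t)}$ that the argument needs. I would also note explicitly that the claim is vacuous once $H_i$ is destroyed, so the phrase "until the algorithm terminates or until $H_i$ is destroyed" in the statement is automatically respected.
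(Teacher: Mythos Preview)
Your proposal is correct and follows essentially the same argument as the paper's proof: fix $t' > t$ with $H_i$ not destroyed, use that $S'(H_i)$ is decremental so any $s \in S'(H_i)$ at time $t'$ was already there at time $t$, apply Property~\ref{prop P} at time $t$ to get $|\gset(s)| \le 7N/8$, and then invoke \Cref{obs: critical set of graphs} to conclude the same bound at time $t'$. Your note about the two reference times in applying \Cref{obs: critical set of graphs} is a bit more careful than the paper's own wording, but the substance is the same.
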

\begin{proof}
	Assume that Property \ref{prop P} holds for graph $H_i\in \hset$ at some time $t$ during the algorithm's execution, and consider some time $t'>t$ during the algorithm's execution. We assume that graph $H_i$ is not destroyed at time $t'$, and prove that Property  \ref{prop P}  continues to hold for $H_i$ at time $t'$. Indeed, consider any vertex $s$ that lies in set $S'(H_i)$ at time $t'$. Since set $S'(H_i)$ is decremental, vertex $s$ lied in $S'(H_i)$ at time $t$. Since Property \ref{prop P}  held for graph $H_i$ at time $t$, $|\gset(s)|\leq 7N/8$ held at time $t$. From \Cref{obs: critical set of graphs}, set $\gset(s)$ is decremental, so $|\gset(s)|$ may not grow between time $t$ and time $t'$. Therefore, $|\gset(s)|\leq 7N/8$ holds at time $t'$. We conclude that   Property  \ref{prop P}  continues to hold for $H_i$ at time $t'$.
\end{proof}

To summarize, over the course of the algorithm, we maintain a partition of the collection $\hset$ of graphs into three subsets: the set $\hset^D$ of destroyed graphs; the set $\hset^I$ of inactive graphs; and the set $\hset^A$ of all remaining graphs, that are called \emph{active graphs}. We ensure that every graph in $\hset^I$ has Property \ref{prop P}. We also ensure that the set $\hset^A$ of active graphs is decremental -- graphs may leave it but they may not join it over time.
The following claim bounds the cardinality of the collection $\hset^I$ of graphs.

 \begin{claim}\label{claim: few inactive graphs}
 Over the course of the algorithm, $|\hset^I|\leq N/32$ always holds.
 \end{claim}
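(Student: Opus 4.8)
The plan is to bound $|\hset^I|$ by a double-counting argument on the total number of edge deletions that graph $H$ has undergone so far. The key observation is the following: if a graph $H_i$ is inactive, then by Property~\ref{prop P}, \emph{every} vertex $s\in S'(H_i)$ has $|\gset(s)|\le 7N/8$, which means that at least $N/8 - |\hset^D|$ graphs $H_{i'}\in\hset\setminus\hset^D$ have the property that $S'(H_{i'})$ contains no vertex within distance $d^*_j/32$ of $s$ in the current graph $H$. Intuitively, this should be ``expensive'': since $H$ is well-connected with respect to $S(H)=\bigcup_{H_{i'}\in\hset}S(H_{i'})$ at the start, there should initially be a short routing between the supported vertices of $H_i$ and those of each $H_{i'}$, and many edge deletions must have occurred to destroy all these short connections.

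More concretely, here are the steps I would carry out. First, recall from \Cref{claim size of supported} (applied at level $j-1$) that for every graph $H_{i'}\in\hset$, $|S(H_{i'})|\ge \tfrac{3}{4}|V(H_{i'})| = \tfrac{3}{4}N^{j-1}$; moreover $|S'(H_{i'})|$ stays large ($\ge N^{j-1}/16^{j-1}$) as long as $H_{i'}$ is not destroyed. Second, I would fix a time $t$ at which $|\hset^I|$ is as large as possible, let $H_i$ be any inactive graph and $s\in S'(H_i)$ any currently supported vertex. For each of the $\ge N/8-|\hset^D| \ge N/8 - N/32 = 3N/32$ graphs $H_{i'}$ (using \Cref{obs: few destroyed graphs}) that are not destroyed and not in $\gset(s)$: at the beginning of the algorithm, since $H$ was $(\eta_j,\td_j)$-well-connected with respect to $S(H)$ and $\td_j < d^*_j/32$ for $c'$ large enough, there was a path of length at most $\td_j$ from $s$ to every vertex of $S'(H_{i'})$; in fact there was a one-to-one routing of many such short paths. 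The fact that now no vertex of $S'(H_{i'})$ lies within $d^*_j/32$ of $s$ means each such initial short path has been ``broken'' — either an edge on it was deleted, or the endpoint in $S'(H_{i'})$ was removed from $S'(H_{i'})$, which in turn (by the recursive guarantee) was itself caused by edge deletions inside $H_{i'}$, i.e.\ by edge deletions in $H$ via the embedding $\pset$. Quantifying this: each deletion of an edge of $H$ can break at most $\eta_j$ of the short routing paths through $H$, and at most $N^{128\eps^2}$ edges of $\bigcup_{i'}E(H_{i'})$, each of which kills at most a bounded number of supported vertices at level $j-1$; combining, one deletion can be responsible for breaking at most $\eta_j \cdot N^{O(\eps^2)}$ of the initial short connections between $s$ and a fixed $H_{i'}$. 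Summing over the $3N/32$ graphs $H_{i'}$ and over the (at least $N^{j-1}/16^{j-1}$) choices of $s$, each inactive $H_i$ forces $\Omega(N^{j}/(16^j \cdot \eta_j \cdot N^{O(\eps^2)}))$ edge deletions to have occurred; but the total number of deletions is less than $\Lambda_j = N^{j-8-300j\eps^2}$, and plugging in $\eta_j = N^{6+256j\eps^2}$ and $j\le\lceil1/\eps\rceil$ shows the number of inactive graphs is at most $N/32$.

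The main obstacle I expect is making precise the claim that ``at least $3N/32$ graphs $H_{i'}$ had their short connection to $s$ broken, and each broken connection requires edge deletions that can be charged disjointly enough.'' The subtlety is double-counting: a single edge deletion in $H$ affects many routing paths and many graphs $H_{i'}$ simultaneously, so one must be careful to charge ``one unit of damage per edge deletion'' only $\eta_j\cdot N^{O(\eps^2)}$ times rather than once per (path, graph) pair, and one must use that the well-connectedness routing for each pair $(S'(H_i), S'(H_{i'}))$ is a \emph{one-to-one} routing of cardinality $\min\{|S'(H_i)|,|S'(H_{i')}|\}$, so that killing $3N/32$ entire graphs' worth of connections genuinely costs $\Omega(N^{j-1})$ deletions per graph after accounting for congestion. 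I would handle this by fixing, at the start of the algorithm, for each pair $(H_i,H_{i'})$ of non-destroyed graphs a single short one-to-one routing $\pset(S(H_i),S(H_{i'}))$ guaranteed by well-connectedness of $H$, and then arguing that for $H_i\in\hset^I$ and $s\in S'(H_i)$, the path of that routing originating at $s$ must be ``dead'' (contain a deleted edge or end at a vertex no longer in $S'(H_{i'})$) for all but $N/8 - |\hset^D|$ many $H_{i'}$'s — wait, rather for \emph{at least} that many — and then charge deaths to deletions with multiplicity at most $\eta_j$ for the congestion of the routing, plus the extra $N^{128\eps^2}$ factor from the embedding $\pset$ when the death is due to a removed supported vertex at the lower level. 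The arithmetic at the end is routine once the charging scheme is set up correctly.
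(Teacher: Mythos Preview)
Your overall intuition is right—the contradiction must come from the well-connectedness of $H^{(0)}$ together with the bound $\Lambda_j$ on edge deletions—but the specific charging scheme you propose has a real gap, and it is exactly the ``case 2'' you flag without resolving. When you fix the pairwise routings $\pset(S(H_i),S(H_{i'}))$ at the \emph{start} of the algorithm, a pair $(s,H_{i'})$ can be ``dead'' simply because the endpoint $v(s,H_{i'})$ was removed from $S'(H_{i'})$, with the path itself intact. Your proposed fix charges such removals to edge deletions in $H$ with multiplicity $N^{128\eps^2}$ coming from the embedding congestion, but this is not valid: a single edge deletion inside some $H_{i'}$ can cause an entire sub-graph at level $j-2$ to become destroyed or inactive, removing $\Theta(N^{j-2})$ vertices from $S'(H_{i'})$ in one stroke. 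The recursive guarantee only says $|S'(H_{i'})|\ge N^{j-1}/16^{j-1}$; it gives no per-deletion bound on removals. Consequently the total number of removed supported vertices across all $H_{i'}$ can be $\Theta(N^{j})$, which swamps your count $\Omega(N^{j}/16^{j})$ of dead pairs—so case~2 alone could explain all the deaths and you get no lower bound on $|E'|$ at all.

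The paper sidesteps this entirely with a different construction. Rather than many pairwise routings fixed at time zero, it works at the bad time $t$ and builds a single terminal set $T$ out of \emph{currently} supported vertices (a fixed number from each $S'(H_{i'})$, $H_{i'}\in\hset\setminus\hset^D$). The inactive property is used only to show that for every $s\in T$ the ball $B_H(s,d^*_j/64)$ misses a constant fraction of $T$; this is enough to run \procsep and extract two large equal-size subsets $T_1,T_2\subseteq T$ with $\dist_{H^{(t)}}(T_1,T_2)>\tilde d_j$. Now apply well-connectedness of $H^{(0)}$ \emph{once}, to the pair $(T_1,T_2)$: every path in the resulting routing has length $\le\tilde d_j$ and connects two vertices that are currently at distance $>\tilde d_j$, so each path must contain a deleted edge. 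This gives $|E'|\ge |T_1|/\eta_j>\Lambda_j$ directly. There is no case~2 because both endpoints of every routing path lie in $T\subseteq\bigcup_{i'}S'(H_{i'})$ at time $t$ by construction, and there is no multi-routing overcounting because only one routing is used.
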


\begin{proof}
	Assume otherwise, and consider the first time $t$ during the algorithm when $|\hset^I|>N/32$ held. 
	We will construct two large sets $T_1,T_2$ of vertices, so that the distance between the vertices of $T_1$ and the vertices of $T_2$ is large in the current graph $H$. We will then reach a contradiction by using the facts that, at the beginning of the algorithm, graph $H$ was well-connected with respect to $S(H)$, and that the number of edges that were deleted from $H$ is relatively small.

	 Recall that $|\hset|=r=N-\ceil{2N^{1-\eps^4}}\geq 63N/64$, since $N^{\eps^4}\geq 2^{128/\eps^5}$ from the statement of \Cref{thm: APSP in HSS}.  Additionally, from \Cref{obs: few destroyed graphs}, $|\hset^D|< N/32$ holds at time $t$. Therefore, at time $t$, $ |\hset\setminus\hset^D|\geq \frac{63N}{64}-\frac{N}{32}\geq \frac{61N}{64}$.
Let $\tilde \hset\subseteq \hset\setminus\hset^D$ be a collection of graphs that is obtained as follows. We start with $\tilde \hset=\hset\setminus\hset^D$. We then consider the graphs of $\tilde \hset$ one by one, starting with the graphs of $\hset^A$. If, when graph $H_i$ is considered, $|\tilde \hset|>\ceil{\frac{61N}{64}}$ holds, then we discard $H_i$ from set $\tilde \hset$. Otherwise, we terminate the algorithm with the final collection $\tilde \hset$ of graphs, whose cardinality must be $\ceil{\frac{61N}{64}}$. Notice that, if any graph of $\hset^A$ lies in $\tilde \hset$, then $\hset^I\subseteq \tilde \hset$. 
	Recall that, from the induction hypothesis, for every graph $H_i\in \hset\setminus\hset^D$, $|S'(H_i)|\geq \frac{N^{j-1}}{16^{j-1}}$ holds throughout the algorithm.
	We construct a set $T$ of vertices as follows. For every graph $H_i\in \tilde \hset$, we let $S''(H_i)$ be an arbitrary  collection of  $\ceil{\frac{N^{j-1}}{16^{j-1}}}$ vertices that lie in set $S'(H_i)$ at time $t$. We then let $T=\bigcup_{H_i\in \tilde \hset}S''(H_i)$. Clearly:
		
	\[ |T|=\ceil{\frac{61N}{64}}\cdot\ceil{\frac{N^{j-1}}{16^{j-1}}}. \]

Intuitively, we would like to apply Procedure \procsep from \Cref{lem: find separation of terminals} to graph $H$, set $T$ of terminals, and distance parameters $\Delta=64/\eps^2$ and $d=\frac{d^*_j}{64\Delta}$, in order to compute two large subsets $T_1,T_2\subseteq T$ of vertices with $\dist_H(T_1,T_2)\geq d$. However, the procedure may instead return a single vertex $s\in T$, for which the ball $B_H(s,d^*_j/64)$ contains many vertices of $T$. In the next observation we show that this is impossible, that is, for every vertex $s\in T$, $|B_H(s,d^*_j/64)\cap T|$ is sufficiently small.
	
	\begin{observation}\label{obs: few terminals in ball}
		At time $t$, for every vertex $s\in T$, $|B_H(s,d^*_j/64)\cap T|<|T|\cdot \left(1-\frac{1}{256}\right )$ holds.
		\end{observation}
	\begin{proof}
		Consider some vertex $s\in T$, and denote $B=B_H(s,d^*_j/64)$. 
		Let $T'=\bigcup_{H_i\in \hset^I\cap \tilde \hset}S''(H_i)$. Assume first that $B$ does not contain any vertex of $T'$. Notice that in this case, at least one graph of $\hset^A$ lies in $\tilde \hset$, and so $\hset^I\subseteq \tilde \hset$. 
		Since we have assumed that $|\hset^I|>\frac{N}{32}$, we get that:
		
		\[|\tilde \hset\setminus \hset^I|\leq |\tilde \hset|-|\hset^I|\leq \ceil{\frac{61N}{64}}-\frac{N}{32}\leq \ceil{\frac{61N}{64}}\cdot \left(1-\frac 1 {61}\right )\leq \ceil{\frac{61N}{64}}\cdot \left(1-\frac 1 {256}\right ). \]
		
		Therefore:
		
		\[|B\cap T|\leq |T|-|T'|\leq |\tilde \hset\setminus \hset^I|\cdot\ceil{\frac{N^{j-1}}{16^{j-1}}}\leq \ceil{\frac{61N}{64}}\cdot \left(1-\frac 1 {256}\right )\cdot\ceil{\frac{N^{j-1}}{16^{j-1}}}\leq |T|\cdot \left(1-\frac 1 {256}\right ).\]
		
		

		Assume now that $B$ contains at least one vertex of $T'$, and let $s'$ be any such vertex. Note that $B=B_H(s,d^*_j/64)\subseteq B_H(s',d^*_j/32)$. Observe that $s'$ must lie in the current set $S'(H_i)$ of supported vertices of some graph $H_i$ that currently lies in $\hset^I$. From Property \ref{prop P}, $|\gset(s')|\leq \frac{7N}{8}$. Let $\gset'(s')=\hset\setminus \gset(s')$, so $|\gset'(s')|\geq \frac{N}{8}$. 
		
		 From our definitions, for every graph $H_{i'}\in \gset'(s')$, $S'(H_{i'})\cap B_H(s',d^*_j/32)=\emptyset$, and therefore, $S'(H_{i'})\cap B=\emptyset$. 
		 
		 Let $\tilde \hset'=\tilde \hset\cap \gset'(s')$. On the one hand, since $|\tilde \hset|=\ceil{\frac{61N}{64}}$, and $|\gset'(s')|\geq \frac{N}{8}$, while $|\hset|=N$, we get that $|\tilde \hset'|\geq \frac{5N}{64}\geq \ceil{\frac{61N}{64}}\cdot \frac{1}{256}$. On the other hand, the vertices of $\bigcup_{H_i\in \tilde \hset'}S''(H_i)$ may not lie in set $B$. Therefore:
		 
		 \[|B\cap T|\leq \left (|\tilde \hset|-|\tilde \hset'|\right )\cdot \ceil{\frac{N^{j-1}}{16^{j-1}}}\leq \ceil{\frac{61N}{64}}\cdot \left(1-\frac 1 {256}\right )\cdot\ceil{\frac{N^{j-1}}{16^{j-1}}}\leq |T|\cdot \left(1-\frac 1 {256}\right ).\]
%
%
	\end{proof}

We apply Algorithm \procsep from \Cref{lem: find separation of terminals} to graph $H$, set $T$ of terminals, distance parameters $\Delta=64/\eps^2$, and $d=\frac{d^*_j}{64\Delta}$, and parameter $\alpha=\left(1-\frac{1}{256}\right )$. From \Cref{obs: few terminals in ball}, the algorithm may not return a vertex $s\in T$ with $|B_H(s,\Delta\cdot d)\cap T|=|B_H(s,d^*_j/64)\cap T|>\alpha \cdot |T|$. Therefore, it must compute two subsets $T_1,T_2$ of vertices with $|T_1|=|T_2|$, such that $|T_1|\geq \frac{|T|^{1-64/\Delta}}{256}$, and for every pair $s\in T_1,s'\in T_2$ of terminals, $\dist_H(s,s')\geq d$. We will now exploit the facts that graph $H$ was well-connected with respect to set $S(H)$ of vertices at the beginning of the algorithm, and that relatively few edges were deleted from $H$, in order to reach a contradiction.

Observe first that:

\[|T_1|\geq \frac{|T|^{1-\eps^2}}{256}\geq \frac{1}{256}\cdot \left(\frac{61\cdot N^j}{4\cdot 16^j}\right )^{1-\eps^2}\geq \frac{N^{j(1-\eps^2)}}{1024\cdot 16^j}.   \]

On the other hand:

\[ d= \frac{d^*_j}{64\Delta}=\frac{2^{c'j/\eps^5}\cdot \eps^2}{2^{12}}>2^{cj/\eps^4}=\tilde d_j. \]

Let $H^{(0)}$ denote the graph $H$ at the beginning of the algorithm, and let $H^{(t)}$ denote graph $H$ at time $t$.
Recall that, at the beginning of the algorithm, graph $H^{(0)}$ was $(\eta_j,\tilde d_j)$-well-connected with respect to the set $S(H)=\bigcup_{H_i\in \hset}S(H_i)$ of vertices. Since the sets $S'(H_i)$ of vertices for graphs $H_i\in \hset$ are decremental, and since $S'(H_i)=S(H_i)$ at the beginning of the algorithm for each such graph, we get that $T\subseteq S(H)$ held at the beginning of the algorithm. Therefore, there was a collection $\pset(T_1,T_2)$ of paths in graph $H^{(0)}$, routing every vertex of $T_1$ to a distinct vertex of $T_2$, such that the paths in $\pset(T_1,T_2)$ cause congestion at most $\eta_j=N^{6+256j\eps^2}$, and the length of every path is at most $\tilde d_j$. 

Let $E'$ be the set of edges that have been deleted from graph $H$ by time $t$. Note that, in graph $H^{(t)}$, no path of length at most $\td_j$ connecting a vertex of $T_1$ to a vertex of $T_2$ exists. Therefore, set $E'$ must contain at least one edge from every path in $\pset(T_1,T_2)$. We conclude that:

\[|E'|\geq \frac{|T_1|}{\eta_j}\geq  \frac{N^{j(1-\eps^2)}}{1024\cdot 16^j\cdot N^{6+256j\eps^2}}=\frac{N^{j-257j\eps^2-6}}{1024\cdot 16^j} > N^{j-8-257j\eps^2}\geq \Lambda_j,\]

since $j\leq \ceil{1/\eps}$, $\eps\leq 1/400$, and $N^{\eps}>2^{8/\eps}$ from the statement of \Cref{thm: APSP in HSS}. This is a contradiction, since fewer than $\Lambda_j$ edges may be deleted from $H$.
\end{proof}

From \Cref{obs: few destroyed graphs} and \Cref{claim: few inactive graphs}, throughout the algorithm, $|\hset^I|+|\hset^D|\leq N/16$ holds. Since $|\hset|=r=N-\ceil{2N^{1-\eps^4}}\geq 63N/64$, we get that, throughout the algorithm:

\begin{equation}\label{eq: many active}
|\hset^A|\geq |\hset|-|\hset^I|-|\hset^D|\geq \frac{63N}{64}-\frac{N}{16}\geq \frac{59N}{64}.
\end{equation}

The set $S'(H)$ that we maintain throughout the algorithm is defined to be: $S'(H)=\bigcup_{H_i\in \hset^A}S'(H_i)$. Since, for every graph $H_i$, set $S'(H_i)$ of vertices is decremental, and since the collection $\hset^A$ of graphs is decremental, we get that the set $S'(H)$ of vertices is decremental as well. It is also easy to see that $S'(H)=S(H)=\bigcup_{H_i\in \hset}S(H)$ hods at the beginning of the algorithm. Since, from Inequality \ref{eq: many active}, $|\hset^A|\geq \frac{59N}{64}$ holds throughout the algorithm, and since, from the induction hypothesis, for every graph $H_i\in \hset^A$, $|S'(H_i)|\geq \frac{N^{j-1}}{16^{j-1}}$ holds, we get that, throughout the algorithm:

\[|S'(H)|\geq |\hset^A|\cdot  \frac{N^{j-1}}{16^{j-1}}\geq \frac{59N}{64}\cdot \frac{N^{j-1}}{16^{j-1}}\geq \frac{N^{j}}{16^{j}}, \]
 
 as required.

 \subsubsection{Data Structures and Initialization}
 Consider a graph $H_i\in \hset$. Note that, as part of the level-$j$ hierarchical support structure for $H$, we are given a level-$(j-1)$ hierarchical support structure for $H_i$, and we are guaranteed that $H_i$ is $(\eta_{j-1},\td_{j-1})$-well-connected with respect to $S(H_i)$.
 Therefore, from the induction hypothesis, we can apply the algorithm from \Cref{thm: APSP in HSS} with parameter $(j-1)$ to graph $H_i$, and the corresponding level-$(j-1)$ hierarchical support structure. Parameters $N$ and $\eps$ remain unchanged. We denote the corresponding data structure by $\dset_{j-1}(H_i)$. 
 
As part of the initialization procedure, for every graph $H_i\in \hset$, we initialize the corresponding data structure  $\dset_{j-1}(H_i)$.

Our algorithm also maintains, for every edge $e\in E(H)$, a list $L(e)$ of edges $e'\in \bigcup_{H_i\in \hset}E(H_i)$, such that the path $P(e')$,  contains $e$. Here, $P(e')$ is the path in the embedding $\pset$ of $\bigcup_{H\in \hset}H'$ into $H$, that serves as the embedding of edge $e'$.
Together with this list, we maintain a pointer from $e$ to each such edge $e'$ in its corresponding graph $H_i$, and a pointer in the opposite direction. We initialize the lists $L(e)$ for edges $e\in E(H)$ at the beginning of the algorithm.

We also initialize $\hset^A=\hset$, $\hset^I=\hset^D=\emptyset$, and $S'(H)=S(H)$.

Lastly, for every graph $H_i\in \hset$, we initialize an \EST data structure, whose corresponding tree is denoted by $\tau_i$, that, intuitively, is rooted at the set $S'(H_i)$ of vertices. Specifically, in order to construct data structure $\tau_i$, we let $\tilde G_i$ be a graph that is obtained from $H$, after we add a source vertex $s_i$ to it, which connects to every vertex in $S'(H_i)$ with an edge. We then let $\tau_i$ be an \EST data structure in graph $\tilde G_i$, rooted at vertex $s_i$, with depth bound $d^*_j/8+1$.
When an edge is deleted from $H$, we will also delete it from $\tilde G_i$, and update the \EST data structure $\tau_i$ accordingly. When a vertex $s$ is deleted from set $S'(H_i)$, we will delete the edge $(s_i,s)$ from graph $\tilde G_i$, and update $\tau_i$ accordingly.

We maintain, for every pair of graphs $H_i\in \hset^A$ and $H_{i'}\in \hset\setminus\hset^D$, a counter $n_{i,i'}$, that counts the number of vertices of $S'(H_{i'})$ that lie at distance at most $d^*_j/32+1$ from $s_i$ in tree $\tau_i$. Note that, if $n_{i,i'}=0$, then for every vertex $s\in S'(H_i)$,  $S'(H_{i'})\cap B_H(s,d^*_j/32)=\emptyset$. In other words, $H_{i'}\not\in \gset(s)$ holds for all $s\in S'(H_i)$. We also maintain a counter $\tilde n_i$, whose value is the number of graphs $H_{i'}\in \hset\setminus\hset^D$, with $n_{i,i'}>0$. 
We can initialize the values $n_{i,i'}$ for every pair $H_i,H_{i'}\in \hset$ of graphs, and the counters in $\set{\tilde n_i}_{H_i\in \hset}$ at the beginning of the algorithm. The time that is required in order to do so is subsumed by the time required to initialize the \EST data structures.
We need the following simple observation.

\begin{observation}\label{obs: move to Hi}
	Let $H_i\in \hset$ be a graph. Assume that at some time $t$ in the algorithm's execution, $H_i\not\in \hset^D$ holds, and $\tilde n_i\leq 7N/8$. Then graph $H_i$ has Property \ref{prop P} at time $t$.
	\end{observation}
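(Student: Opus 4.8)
The plan is to read off Property~\ref{prop P} for $H_i$ directly from the definitions of the counters $n_{i,i'}$ and $\tilde n_i$, together with the structure of the auxiliary tree $\tau_i$. The guiding idea is that, for the fixed graph $H_i$, every graph that can possibly belong to $\gset(s)$ for some $s\in S'(H_i)$ is ``seen'' by $\tau_i$, in the sense that its counter $n_{i,i'}$ is positive; hence $\bigcup_{s\in S'(H_i)}\gset(s)$ is contained in a collection of size exactly $\tilde n_i$.

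First I would recall the relevant invariants. The tree $\tau_i$ is the \EST rooted at the artificial source $s_i$ in the graph $\tilde G_i$ obtained from $H$ by adding $s_i$ and connecting it to every vertex of $S'(H_i)$, with depth bound $d^*_j/8+1$. Consequently, for any vertex $x$ present in $\tau_i$ we have $\dist_{\tilde G_i}(s_i,x)=1+\dist_H(x,S'(H_i))$, so in particular $\dist_{\tilde G_i}(s_i,x)\le 1+\dist_H(x,s)$ for every $s\in S'(H_i)$. Also recall that $n_{i,i'}$ counts the vertices of $S'(H_{i'})$ at distance at most $d^*_j/32+1$ from $s_i$ in $\tau_i$, and $\tilde n_i=|\{H_{i'}\in\hset\setminus\hset^D : n_{i,i'}>0\}|$.

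Next, fix any $s\in S'(H_i)$ and any $H_{i'}\in\gset(s)$. By the definition of $\gset(s)$, $H_{i'}\in\hset\setminus\hset^D$ and there is a vertex $x\in S'(H_{i'})$ with $\dist_H(s,x)\le d^*_j/32$. Combining this with the previous paragraph, $\dist_{\tilde G_i}(s_i,x)\le 1+d^*_j/32\le d^*_j/8+1$, so $x$ is indeed present in $\tau_i$ and lies at distance at most $d^*_j/32+1$ from $s_i$; therefore $n_{i,i'}\ge 1$. Hence $\gset(s)\subseteq\{H_{i'}\in\hset\setminus\hset^D : n_{i,i'}>0\}$, a set of cardinality $\tilde n_i$. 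Since by hypothesis $\tilde n_i\le 7N/8$ at time $t$, we get $|\gset(s)|\le 7N/8$; as $s\in S'(H_i)$ was arbitrary, this is exactly Property~\ref{prop P} for $H_i$ at time $t$.

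The only point requiring any care — and the single place where the argument could conceivably fail — is the bookkeeping of distance thresholds: one must check that the depth bound $d^*_j/8+1$ of $\tau_i$ is at least the threshold $d^*_j/32+1$ used to define $n_{i,i'}$, so that the witness vertex $x$ is actually tracked by $\tau_i$ rather than cut off by the depth bound. This holds trivially since $d^*_j>0$, but it is precisely what makes the counters a faithful over-count of $\gset(s)$. Everything else is a direct unwinding of definitions, so I do not anticipate any genuine obstacle.
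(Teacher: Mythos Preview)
Your proof is correct and follows essentially the same approach as the paper's: both show that every $H_{i'}\in\gset(s)$ must have $n_{i,i'}>0$, and hence $|\gset(s)|\le\tilde n_i\le 7N/8$. The paper phrases this as a one-line contradiction argument, whereas you give a direct proof and spell out explicitly why the witness vertex $x$ lies within distance $d^*_j/32+1$ of $s_i$ in $\tau_i$ (via the edge $(s_i,s)$); this extra detail is accurate and the depth-bound check you flag is exactly the right sanity check.
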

\begin{proof}
	Assume otherwise. Then at time $t$, there is some vertex $s\in S'(H_i)$ with $|\gset(s)|>7N/8$. If $H_{i'}$ is a graph that lies in $\gset(s)$ at time $t$, then at least one vertex of $S'(H_{i'})$ must lie in $B_H(s,d^*_j/32)$ at time $t$, and so $n_{i,i'}>0$ must hold. But then $\tilde n_i>7N/8$ must hold at time $t$, a contradiction.
\end{proof}

Throughout the algorithm's execution, whenever, for some graph $H_i\in \hset^A$, the value of counter $\tilde n_i$ becomes at most $7N/8$, we move graph $H_i$ from $\hset^A$ to $\hset^I$. Lastly, we need the following simple observation.

\begin{observation}\label{obs: short distances between active}
	Let $H_i,H_{i'}$ be any pair of graphs that lie in set $\hset^A$ at some time $t$ during the algorithm's execution. Let $s$ be any vertex that lies in $S'(H_i)$ at time $t$, and let $s'$ be any vertex that lies in $S'(H_{i'})$ at time $t$. Then $\dist_H(s,s')\leq d^*_j/8$ at time $t$.
\end{observation}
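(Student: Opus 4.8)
The plan is to argue by contradiction, exploiting the fact that, as long as a graph $H_i$ remains in $\hset^A$, it does \emph{not} have Property \ref{prop P}, which in turn forces the existence of many graphs whose supported vertices are close to any given $s \in S'(H_i)$. Suppose at time $t$ we have $H_i, H_{i'} \in \hset^A$, with $s \in S'(H_i)$ and $s' \in S'(H_{i'})$, but $\dist_H(s,s') > d^*_j/8$. Since $H_i \in \hset^A$ at time $t$, by the rule governing the transition to $\hset^I$ (together with \Cref{obs: move to Hi}) the counter $\tilde n_i$ exceeds $7N/8$ at time $t$, and hence $|\gset(s)| > 7N/8$; symmetrically $|\gset(s')| > 7N/8$. (One should state this cleanly: a graph stays in $\hset^A$ only while $\tilde n_i > 7N/8$, and $\tilde n_i \geq |\gset(s)|$ for every $s \in S'(H_i)$.)

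Next I would intersect the two sets $\gset(s)$ and $\gset(s')$. Both are subsets of $\hset$, which has cardinality $r = N - \ceil{2N^{1-\eps^4}}$; since $N^{\eps^4}$ is large (from the hypothesis $\frac{N^{\eps^4}}{\log N}\geq 2^{128/\eps^6}$), we have $r \geq 63N/64 > 3N/4$. By inclusion--exclusion, $|\gset(s)\cap \gset(s')| \geq |\gset(s)| + |\gset(s')| - |\hset| > \frac{7N}{8} + \frac{7N}{8} - N = \frac{3N}{4} > 0$, so there is at least one graph $H_{i''}$ in the intersection. By definition of $\gset$, at time $t$ there is a vertex $x \in S'(H_{i''})$ with $x \in B_H(s, d^*_j/32)$, and a vertex $x' \in S'(H_{i''})$ with $x' \in B_H(s', d^*_j/32)$.

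Finally I would use the well-connectedness of $H_{i''}$ to bound $\dist_H(x,x')$ and then stitch the pieces together. Since $H_{i''}$ is not destroyed and $x, x' \in S'(H_{i''}) \subseteq S(H_{i''})$, and $H_{i''}$ is $(\eta_{j-1}, \td_{j-1})$-well-connected with respect to $S(H_{i''})$, there is a path of length at most $\td_{j-1}$ connecting $x$ to $x'$ within $H_{i''}$; mapping this path through the embedding $\pset$ of $\bigcup_i H_i$ into $H$ (each edge replaced by its embedding path of length at most $2^{64/\eps^4}$, via \Cref{obs: paths from embedding}) yields an $x$--$x'$ path in the current graph $H$ of length at most $\td_{j-1}\cdot 2^{64/\eps^4}$. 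Then by the triangle inequality in $H$,
\[
\dist_H(s,s') \leq \dist_H(s,x) + \dist_H(x,x') + \dist_H(x',s') \leq \frac{d^*_j}{32} + \td_{j-1}\cdot 2^{64/\eps^4} + \frac{d^*_j}{32} = \frac{d^*_j}{16} + \td_{j-1}\cdot 2^{64/\eps^4}.
\]
The remaining (routine) task is a parameter check: since $\td_{j-1} = 2^{c(j-1)/\eps^4}$ and $d^*_j = 2^{c'j/\eps^5}$ with $c'$ sufficiently large relative to $c$, we have $\td_{j-1}\cdot 2^{64/\eps^4} \leq d^*_j/16$, so the right-hand side is at most $d^*_j/8$, contradicting $\dist_H(s,s') > d^*_j/8$. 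The main obstacle, such as it is, is making sure the embedding-composition step is legitimate — i.e.\ that the deleted edges of $H$ have indeed been propagated to $H_{i''}$ so that the well-connectedness guarantee still applies in the \emph{current} graph; this is exactly what the edge-deletion propagation rule for the data structure ensures, but it is worth spelling out that $x$--$x'$ connectivity at time $t$ uses only edges of $H$ that still survive. The counting step and the triangle-inequality step are both short; no genuinely hard estimate arises.
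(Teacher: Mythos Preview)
There is a genuine gap at the very first step. From $H_i \in \hset^A$ you correctly deduce $\tilde n_i > 7N/8$, and you correctly note that $\tilde n_i \geq |\gset(s)|$ for every $s \in S'(H_i)$; but this inequality points the wrong way for what you need. You cannot conclude $|\gset(s)| > 7N/8$ for the \emph{specific} vertex $s$. Indeed the paper explicitly remarks that a graph may satisfy Property~\ref{prop P} (so $|\gset(s)| \le 7N/8$ for all $s$) and still remain in $\hset^A$. What $\tilde n_i > 7N/8$ actually gives you is that more than $7N/8$ graphs $H_{i''}$ have some vertex of $S'(H_{i''})$ within distance $d^*_j/32$ of \emph{some} vertex of $S'(H_i)$---not of $s$ itself.

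The fix, which is what the paper does, is to intersect the two sets $\{H_{i''}: n_{i,i''}>0\}$ and $\{H_{i''}: n_{i',i''}>0\}$ (each of size $>7N/8$) to obtain a common $H_{i''}$. This yields $x \in S'(H_{i''})$ close to some $x' \in S'(H_i)$ and $y \in S'(H_{i''})$ close to some $y' \in S'(H_{i'})$, but $x' \ne s$ and $y' \ne s'$ in general. You then need two extra segments: one from $s$ to $x'$ inside $H_i$ and one from $y'$ to $s'$ inside $H_{i'}$, each obtained via the induction hypothesis (short-path query in $\dset_{j-1}$), mapped through the embedding. The final bound becomes $\frac{d^*_j}{16} + 3\cdot 2^{64/\eps^4}\cdot d^*_{j-1} \le d^*_j/8$.

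A second, smaller issue: you invoke the $(\eta_{j-1},\td_{j-1})$-well-connectedness of $H_{i''}$ to connect $x$ and $x'$. That property is only guaranteed for the \emph{initial} $H_{i''}$; after edge deletions have been propagated, it need not hold. The correct tool is again the induction hypothesis, which guarantees a path of length at most $d^*_{j-1}$ in the \emph{current} $H_{i''}$ between any two vertices of $S'(H_{i''})$. Your remark about propagation is relevant only to the embedding step (ensuring surviving edges of $H_{i''}$ map to surviving paths in $H$), not to restoring well-connectedness.
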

\begin{proof}
	Let $\gset_i$ be the set of all graphs $H_{i''}\in \hset\setminus \hset^D$ with $n_{i,i''}>0$ at time $t$. Since $H_i\in \hset^A$ at time $t$, $|\gset_i|\geq 7N/8$ must hold. Similarly, let $\gset_{i'}$ be the set of all graphs $H_{i''}\in \hset\setminus \hset^D$ with $n_{i',i''}>0$ at time $t$. As before, $|\gset_{i'}|\geq 7N/8$. Therefore, there is some graph $H_{i''}\in \gset_i\cap \gset_{i'}$.
	
	In the remainder of this proof, whenever we refer to vertex sets $S'(H_i),S'(H_{i'}), S'(H_{i''})$, or to graphs $H_i,H_{i'},H_{i''}, H$, we mean the corresponding sets of vertices or the corresponding graphs at time $t$.
	
	Since, at time $t$, $n_{i,i''}>0$, there is some vertex $x\in S'(H_{i''})$, such that the distance from $s_i$ to $x$ in tree $\tau_i$ is at most $d^*_j/32+1$. Therefore, there is some vertex $x'\in S'(H_{i})$, such that $\dist_H(x,x')\leq d^*_j/32$. Let $Q$ be a path of length at most $d^*_j/32$ connecting $x$ to $x'$ in $H$. From a similar reasoning, there is a pair of vertices $y\in S'(H_{i''})$ and $y'\in S'(H_{i'})$, such that $\dist_H(y,y')\leq d^*_j/32$. Let $Q'$ be a path of length at most $d^*_j/32$ connecting $y$ to $y'$ in $H$.
	
	From the induction hypothesis, there is a path $P_1$ of length at most $d^*_{j-1}$ connecting $s$ to $x'$ in graph $H_i$. Recall that we are given an embedding $\pset$ of the edges of $\bigcup_{H_{a}\in \hset}E(H_a)$ into $H$, where the length of every path in $\pset$ is at most $2^{64/\eps^4}$. From \Cref{obs: paths from embedding}, there is a path $P_1'$ in graph $H$, connecting $s$ to $x'$, whose length is at most $2^{64/\eps^4}\cdot d^*_{j-1}$.
	
	Using similar reasonings, there is a path $P_2'$ in graph $H$ connecting $x$ to $y$ of length at most $2^{64/\eps^4}\cdot d^*_{j-1}$, and a path $P_3'$ in graph $H$ connecting $y'$ to $s'$, of length at most $2^{64/\eps^4}\cdot d^*_{j-1}$. By concatenating the paths $P_1',Q,P_2',Q',P_3'$, we obtain a path in graph $H$, connecting $s$ to $s'$, whose length is bounded by:
	
	\[ \frac{d^*_j}{16}+ 3\cdot 2^{64/\eps^4}\cdot d^*_{j-1}=\frac{d^*_j}{16}+3\cdot 2^{64/\eps^4}\cdot 2^{c'(j-1)/\eps^5} \leq \frac{d^*_j}{16}+\frac{2^{c'j/\eps^5}}{16}\leq \frac{d^*_j}{8}  \]
	
	(since $d^*_j=2^{c'j/\eps^5}$).
\end{proof}

Next, we describe an algorithm for updating the data structures after each edge deletion, followed by the analysis of the total update time of the algorithm. We then conclude with an algorithm for responding to queries.

\subsubsection{Maintaining the Data Structures}

For each graph $H_i\in \hset$, our algorithm will only maintain \EST data structure $\tau_i$, together with the corresponding counters $\tilde n_i$ and $n_{i,i'}$ for all $H_{i'}\in \hset\setminus\hset^D$, as long as graph $H_i$ lies in $\hset^A$. Once graph $H_i$ is removed from $\hset^A$, we no longer maintain these data structures.

We now describe an algorithm for updating the data structures following the deletion of an edge $e$ from graph $H$. The algorithm, called $\deledge(e)$, is straightforward and it is summarized in \Cref{alg: delete-edge}.

When an edge $e$ is deleted from graph $H$, we consider every edge $e'\in L(e)$ (that is, edges $e'\in \bigcup_{H_i\in \hset}E(H_i)$, whose embedding path $P(e')$ contains edge $e$). We assume that $e'\in E(H_i)$, and that $H_i\in \hset\setminus\hset^D$ (if $H_i\in \hset^D$, no further action is required for processing edge $e'$). We then delete edge $e'$ from graph $H_i$ and update the corresponding data structure $\dset_{j-1}(H_i)$. As a result, we obtain a set $X_i$ (that may be empty) of vertices that have been deleted from $S'(H_i)$. We also update lists $L(e'')$ of edges $e''\in E(H)$ that contain $e'$, to remove $e'$ from these lists.

If the number of edges deleted so far from $H_i$ reaches at least $\ceil{\Lambda_{j-1}}$, then graph $H_i$ is destroyed. We add the graph to set $\hset^D$, and we update all counters $n_{i',i}$ and $\tilde n_{i'}$ for graphs $H_{i'}\in \hset^A$ as needed. Otherwise, we process every vertex $s\in X_i$ one by one. If $H_i\in \hset^A$, then we delete edge $(s_i,s)$ from graph $\tilde G_i$, and the corresponding \EST data structure $\tau_i$. As the result, some distances in tree $\tau_i$ may have increased, and we need to update all counters in $\set{n_{i,i'}}_{H_{i'}\in \hset\setminus\hset^D}$, as well as counter $\tilde n_{i}$ accordingly.
Additionally, for every graph $H_{i'}\in \hset^A$, if the distance from $s_i$ to $s$ in tree $\tau_i$ was bounded by $d^*_j/32+1$, then we need to decrease $n_{i',i}$ by $1$ (as vertex $s$ no longer lies in $S'(H_i)$), and if needed, we need to update counter $\tilde n_{i'}$. 

Once we finish processing every edge $e'\in L(e)$, we also need to delete edge $e$ from every graph $\tilde G_{i'}$, where $H_{i'}\in \hset^A$, and the corresponding \EST data structure $\tau_{i'}$. As before, this may increase some distances in tree $\tau_{i'}$, and we may need to update counters $\set{n_{i',i''}}_{H_{i''}\in \hset\setminus\hset^D},\tilde n_{i'}$ accordingly. Lastly, we consider every graph  $H_{i'}\in \hset^A$ in turn. If $\tilde n_{i'}\leq 7N/8$ holds for any such graph, then we move it from  $\hset^A$ to $\hset^I$.

\program{Algorithm $\deledge(e)$}{alg: delete-edge}{
	
	\begin{enumerate}
		\item For every edge $e'\in L(e)$, such that the graph $H_i\in \hset$ containing $e'$ lies in $\hset\setminus\hset^D$ do:
		
		\begin{enumerate}
			\item Delete $e'$ from $H_i$ and update the corresponding data structure $\dset_{j-1}(H_i)$. Let $X_i$ be the set of vertices that were deleted from $S'(H_i)$ as the result of this update.
			
			\item For every edge $e''\in E(H)$ with $e'\in L(e'')$, delete $e'$ from $L(e'')$.
			
			\item If the number of edges deleted so far from $H_i$ becomes at least $\ceil{\Lambda_{j-1}}$:
			\begin{enumerate}
				\item Add graph $H_i$ to $\hset^D$ and remove it from the set $\hset^A$ or $\hset^I$ to which it belonged.
				\item For every graph $H_{i'}\in \hset^A$, if $n_{i',i}>0$, set $n_{i',i}$ to $0$ and decrease $\tilde n_{i'}$ by $1$.
			\end{enumerate}
		\item Otherwise: for every vertex $s\in X_i$ do:
		\begin{enumerate}
			\item If $H_i\in \hset^A$, delete edge $(s_i,s)$ from graph $\tilde G_i$ and update the \EST $\tau_i$, together with counters $\set{n_{i,i'}}_{H_{i'}\in \hset\setminus\hset^D},\tilde n_i$ accordingly. \label{step: update counters}
			
			\item For every graph $H_{i'}\in \hset^A$, if $\dist_{\tau_{i'}}(s_{i'},s)\leq d^*_j/32+1$, decrease $n_{i',i}$ by $1$. If $n_{i',i}$ decreases from $1$ to $0$, decrease $\tilde n_{i'}$ by $1$. \label{step: remove s}
		\end{enumerate}
		
		\end{enumerate}

		\item For every graph $H_{i'}\in \hset^A$, delete edge $e$ from graph $\tilde G_{i'}$, and update the  \EST $\tau_{i'}$, together with counters $\set{n_{i',i''}}_{H_{i''}\in \hset\setminus\hset^D},\tilde n_{i'}$ accordingly.
		
		\item For every graph $H_{i'}\in \hset^A$, if $\tilde n_{i'}\leq 7N/8$ holds, move $H_{i'}$ from $\hset^A$ to $\hset^I$.
	\end{enumerate}
}

It is easy to verify that the algoritm maintains all data structures correctly, and, from \Cref{obs: move to Hi}, when graph $H_i$ is added to $\hset^I$,  Property \ref{prop P} holds for it. From \Cref{obs: prop P}, once $H_i$ is added to $\hset^I$, Property \ref{prop P} continues to hold  for it until the end of the algorithm, or until $H_i$ is added to $\hset^D$. From the above discussion, we are guaranteed that, throughout the algorithm, $|\hset^A|\neq \emptyset$.

\subsubsection{Analysis of Total Update Time}

Let $m$ denote the number of edges in graph $H$ at the beginning of the algorithm.
Recall that every edge $e\in E(H)$ participates in at most $N^{128\eps^2}$ paths in $\pset$. Therefore, the length of the list $L(e)$ is bounded by $N^{128\eps^2}$. Every edge of $\bigcup_{H_i\in \hset}E(H_i)$ may be added at most once to list $L(e)$ when the data structure is initialized, and subsequently it may be deleted at most once from $L(e)$. Therefore, the total time required to maintain the lists $L(e)$ for all edges $e\in E(H)$ is at most $O(m\cdot N^{128\eps^2})$.

Consider now some graph $H_i\in \hset$, and denote by $m_i$ the number of edges in $H_i$ at the beginning of the algorithm.
Recall that, from the definition of the hierarchical support structure, 
$|E(H_i)|\leq N^{j-1+32\eps^2}$. From the induction hypothesis, maintaining data structure $\dset_{j-1}(H_i)$ recursively takes time at most:

\[\begin{split}
2c''(j-1)N^{j+2}\cdot 2^{4c'/\eps^6}+c''|E(H_i)|\cdot N^2\cdot 2^{4c'/\eps^6}&\leq 2c''(j-1)N^{j+2}\cdot 2^{4c'/\eps^6}+c''N^{j+1+32\eps^2}\cdot 2^{4c'/\eps^6}\\&\leq c''(2j-1) N^{j+2}\cdot 2^{4c'/\eps^6}.
\end{split}\]

Since $|\hset|=N$, the total update time needed in order to maintain these data structures for all graphs $H_i\in \hset$ is bounded by $c''(2j-1) N^{j+3}\cdot 2^{4c'/\eps^5}$.

Consider now some graph $H_i\in \hset$. The total update time that is needed in order to maintain \EST $\tau_i$ is bounded by:

\[O(jm\cdot d^*_j\cdot \log N)\leq O(jm\cdot 2^{c'j/\eps^5}\log N)\leq  O(m\cdot 2^{4c'/\eps^6}\log N),\]

since $d^*_j=2^{c'j/\eps^4}$ and $j\leq \ceil{1/\eps}$.
We can initialize the counters $n_{i,i'}$ for all graphs $H_{i'}\in \hset$ and $\tilde n_i$ without increasing this asymptotic running time. We can also perform updates to these counters in Step \ref{step: update counters} within the same asymptotic running time. Since $|\hset|=N$, this part of the algorithm takes total update time at most $O(N\cdot m\cdot 2^{4c'/\eps^6}\log N)$.

Whenever a vertex $s$ is deleted from a set $S'(H_i)$ for any graph $H_{i}\in \hset$, we may need to update the counters $n_{i',i}$ and $\tilde n_{i'}$ for some graphs $H_{i'}\in \hset^A$. This can be done in time $O(N)$ per vertex. Since a vertex may be deleted at most once from $\bigcup_{H_i\in \hset}S'(H_i)$, these updates can be done in time $O(N^{j+1})$.

Lastly, every graph $H_i\in \hset$ may be moved to set $\hset^D$ at most once over the course of the algorithm, at which time we may need to update counters $n_{i',i}$ and $\tilde n_{i'}$ for graphs $H_{i'}\in \hset_A$. This takes time $O(N)$ per graph $H_i\in \hset$, and $O(N^2)$ overall.

From the above discussion, the total update time of the algorithm is bounded by:

\[ c''(2j-1) N^{j+3}\cdot 2^{4c'/\eps^6}+O(N\cdot m\cdot 2^{4c'/\eps^6}\log N)+O(N^{j+1})\leq 2c''jN^{j+3}\cdot 2^{4c'/\eps^6}+c''m\cdot N^2\cdot 2^{4c'/\eps^6}.\]

\subsubsection{Response to Queries}

In this subsection we describe an algorithm for responding to a short-path query between a pair $x,y\in S'(H)$ of vertices. Recall that the goal is to return a path $P$ of length at most $d^*_j$ connecting $x$ to $y$ in $H$, in time $O(|E(P)|)$.

Recall that $S'(H)=\bigcup_{H_i\in \hset^A}S'(H_i)$. Therefore, there is a pair of graphs $H_i,H_{i'}\in \hset^A$ (where possibly $H_i=H_{i'}$), with $x\in S'(H_i)$ and $y\in S'(H_{i'})$. From \Cref{obs: short distances between active},  $\dist_H(x,y)\leq d^*_j/8$ must hold. In particular, if we consider the \EST data structure $\tau_i$, then $y\in V(\tau_i)$ must hold, and the distance from $y$ to $s_i$ in the tree must be at most $d^*_j/8+1$. Let $Q_1$ be the path connecting $y$ to $s_i$ in tree $\tau_i$. We delete the last vertex on the path, and let $x'\in S'(H_i)$ be the new last vertex of $Q_1$. Using the induction hypothesis, we can compute a path $P'$ in graph $H_i$ connecting $x$ to $x'$, whose length is at most $d^*_{j-1}$. Assume that the sequence of edges on path $Q'$ is $(e_1,e_2,\ldots,e_z)$. For all $1\leq z'\leq z$, consider the path $P(e_{z'})\in \pset$ that serves as the embedding of edge $z'$ into $H$. Recall that the length of the path is bounded by $2^{64/\eps^4}$. By concatenating the paths $P(e_1),\ldots,P(e_z)$, we obtain a path $Q_2$ in graph $H$, connecting $x$ to $x'$, whose length is at most $d^*_{j-1}\cdot 2^{64/\eps^4}$. Lastly, by concatenating paths $Q_1$ and $Q_2$, we obtain a path $P$ in graph $H$ connecting $x$ to $y$. The length of the path is bounded by:

\[d^*_{j-1}\cdot 2^{64/\eps^4}+\frac{d^*_j}{8}\leq 2^{c'(j-1)/\eps^5}\cdot 2^{64/\eps^4} +\frac{2^{c'j/\eps^5}}{8} \leq 2^{c'j/\eps^5}=d^*_j. \]

since $d^*_j=2^{c'j/\eps^5}$. It is easy to see that the algorithm can be implemented in time $O(|E(P)|)$.

\section{APSP in Expanders -- Proof of \Cref{thm: APSP on expanders main}}
\label{sec: APSP on expanders}

This section is dedicated to the proof of \Cref{thm: APSP on expanders main}. We first prove the following lemma, that can be viewed as a weaker variation of \Cref{thm: APSP on expanders main}, in the sense that it can only withstand a significantly shorter sequence of edge deletions.

\begin{lemma}
	\label{lem: APSP on expanders one phase} 
	There is a deterministic algorithm whose input consists of
	an $n$-vertex graph $G$ with $|E(G)|=m$ that is a $\phi^*$-expander, for some $0<\phi^*<1$, with maximum vertex degree at  most $\Delta$, and a parameter $\frac{2}{(\log n)^{1/12}}< \eps<\frac{1}{400}$, such that $1/\eps$ is an integer. We assume that graph $G$ undergoes an online sequence of at most $ \frac{n^{1-20\eps}(\phi^*)^2}{\Delta^2}$ edge deletions. The algorithm
	maintains a set $U\subseteq V(G)$ of vertices, such that, for every integer $t>0$, after $t$ edges are deleted from $G$, $|U|\leq \frac{4\Delta  t}{\phi^*} $ holds. 
	Vertex set $U$ is incremental, so vertices may join it but they may not leave it.
	The algorithm also supports short-path queries: given a pair of vertices $x,y\in V(G)\setminus U$, return an $x$-$y$  path $P$ in the current graph $G$, of length at most $\frac{2^{O(1/\eps^6)}\cdot \Delta\cdot\log n}{\phi^*}$, with query time $O(|E(P)|)$. 
	The total update time of the algorithm is $O\left(\frac{n^{1+O(\eps)}\cdot \Delta^3}{(\phi^*)^2}\right )$.
\end{lemma}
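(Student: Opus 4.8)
\textbf{Proof plan for \Cref{lem: APSP on expanders one phase}.}
The plan is to bootstrap the \APSP data structure for well-connected graphs from \Cref{thm: APSP in HSS full} by first embedding a well-connected graph $H$ into $G$, using the machinery of \Cref{cor: HSS witness} together with basic path peeling. Concretely, I would set the terminal set $T=V(G)$, pick $\eps$ as given, and invoke \Cref{cor: HSS witness} on $G$ with distance parameter $d=\Theta\!\left(\frac{\Delta\log n}{\phi^*}\right)$ and congestion parameter $\eta=\Theta\!\left(\frac{\Delta\log n}{\phi^*}\right)$. Since $G$ is a $\phi^*$-expander with maximum degree $\Delta$, any two vertices are within distance $O(\Delta\log n/\phi^*)$, and moreover any demand can be routed with low congestion; this rules out the ``distancing'' outcome of \Cref{cor: HSS witness}: a $(\delta,d)$-distancing $(T_1,T_2,E')$ with $|E'|\le d|T_1|/\eta=O(|T_1|)$ would, via \Cref{lem: distancing to sparse cut}, produce a sparse cut of $G$ contradicting its expansion (one must check the numerology, namely that $d\ge (32\log m)/\phi^*$ and $|E'|\le \phi^*|T_1|/4$ hold for our choice of $\phi^*$-sized parameters — this is where the hypothesis $d\ge c\Delta\log n/\phi^*$ with a big enough constant is used). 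Therefore the corollary must return a graph $H$ with $V(H)\subseteq T=V(G)$, $|V(H)|\ge n-n^{1-\eps/2}$, maximum degree at most $k^{32\eps^3}=n^{O(\eps^3)}$, together with an embedding $\pset$ of $H$ into $G$ via paths of length at most $d$ and congestion at most $\eta\cdot n^{O(\eps^3)}$, and a level-$(1/\eps)$ \HSS witnessing that $H$ is $(\eta',\td)$-well-connected with $\eta'=N^{6+O(\eps)}$, $\td=2^{O(1/\eps^5)}$.

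Next I would feed $H$ and its \HSS into the algorithm of \Cref{thm: APSP in HSS full}, which maintains a decremental set $S'(H)\subseteq S(H)$ with $|S'(H)|\ge |V(H)|/2^{O(1/\eps)}$ and answers short-path queries in $H$ with paths of length $2^{O(1/\eps^6)}$. The coupling of the two dynamic processes is the standard one used with embedded expanders: whenever an edge $e$ is deleted from $G$, for every edge $e'\in E(H)$ whose embedding path $P(e')$ contains $e$ we delete $e'$ from $H$. Since the embedding has congestion at most $\eta\cdot n^{O(\eps^3)}$, a single deletion in $G$ causes at most that many deletions in $H$; hence after $t$ deletions from $G$, at most $t\cdot\eta\cdot n^{O(\eps^3)}$ edges are removed from $H$, which stays below the budget $\Lambda=|V(H)|^{1-10\eps}$ of \Cref{thm: APSP in HSS full} precisely because $t\le n^{1-20\eps}(\phi^*)^2/\Delta^2$ and $\eta\cdot n^{O(\eps^3)}= n^{O(\eps^3)}\cdot\Delta\log n/\phi^*$ — again a numerical check, and the main reason this lemma can only withstand the shorter deletion sequence. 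I would then define the set $U$ of unsupported vertices to consist of $V(G)\setminus V(H)$ together with $V(H)\setminus S'(H)$; this is incremental since $V(H)$ is fixed and $S'(H)$ is decremental. The bound $|U|\le 4\Delta t/\phi^*$ after $t$ deletions follows by combining $|V(G)\setminus V(H)|\le n^{1-\eps/2}$ (a ``budget'' charge absorbed into the first few deletions, or handled by noting the lemma is only meaningful once $t$ is large enough that $4\Delta t/\phi^*\ge n^{1-\eps/2}$, which I would state carefully) with the guarantee of \Cref{thm: APSP in HSS full} that $|V(H)\setminus S'(H)|$ is controlled; here I should double-check that the quantitative $|U|$ bound actually matches, possibly re-deriving it from the stronger level-$j$ statement \Cref{thm: APSP in HSS} rather than the clean corollary.

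To answer a short-path query for $x,y\in V(G)\setminus U=S'(H)$: query \Cref{thm: APSP in HSS full} for an $x$--$y$ path $Q$ in $H$ of length at most $2^{O(1/\eps^6)}$, then use \Cref{obs: paths from embedding} with the embedding $\pset$ (paths of length $\le d=O(\Delta\log n/\phi^*)$) to lift $Q$ to an $x$--$y$ path $P$ in $G$ of length at most $2^{O(1/\eps^6)}\cdot O(\Delta\log n/\phi^*)=\frac{2^{O(1/\eps^6)}\Delta\log n}{\phi^*}$, in time $O(|E(P)|)$. For the total update time: the call to \Cref{cor: HSS witness} costs $O\!\left(n^{1+O(\eps)}+m\cdot n^{O(\eps^3)}(\eta+d\log n)\right)=O\!\left(\frac{m^{1+O(\eps)}\Delta^2}{(\phi^*)^2}\right)$ (using $m=O(n\Delta)$); the recursive \APSP structure of \Cref{thm: APSP in HSS full} costs $O(|E(H)|^{1+O(\eps)})=n^{1+O(\eps)}$; and the per-deletion bookkeeping (maintaining the lists $L(e)$ mapping $G$-edges to the $H$-edges whose embedding uses them, and forwarding deletions) is near-linear in $m$ times the embedding congestion $n^{O(\eps^3)}\Delta\log n/\phi^*$ summed over $t$ deletions, i.e.\ $\tilde O(m\cdot n^{O(\eps)}\Delta/\phi^*)$; all of these are absorbed into $O\!\left(\frac{n^{1+O(\eps)}\Delta^3}{(\phi^*)^2}\right)$. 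The main obstacle I anticipate is purely the numerology of Step 1 — ensuring the parameters $d,\eta$ are simultaneously large enough to force the ``embedding'' outcome of \Cref{cor: HSS witness} (via the flow-cut / ball-growing bound of \Cref{lem: distancing to sparse cut}) yet small enough that the induced deletion budget in $H$ is respected and the running time target is met; the constraint $\frac{2}{(\log n)^{1/12}}<\eps$ and the sizes appearing in \Cref{thm: APSP in HSS} are exactly what make this feasible, and getting every exponent to line up is the delicate part.
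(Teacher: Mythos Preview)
Your overall architecture is right — embed a well-connected $H$ into $G$ via \Cref{cor: HSS witness}, run \Cref{thm: APSP in HSS full} on $H$, and forward edge deletions through the embedding — but there is a genuine gap in how you define and bound $U$. You set $V(G)\setminus U = S'(H)$, which forces $U\supseteq V(H)\setminus S'(H)$. But \Cref{thm: APSP in HSS full} only guarantees $|S'(H)|\ge |V(H)|/2^{4/\eps}$; it gives \emph{no} bound on $|V(H)\setminus S'(H)|$ as a function of the number of deletions. So your $U$ can have size $\Omega(n)$ even after a single deletion, and the required bound $|U|\le 4\Delta t/\phi^*$ fails badly. You noticed something was off here, but the fix is not a matter of re-deriving bounds from \Cref{thm: APSP in HSS}: that theorem simply does not control $|S(H)\setminus S'(H)|$ in terms of $t$.

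The paper's solution is to add one more data structure: an \EST\ $\tau$ in $G$ rooted at the (dynamic) set $S'(H)$, with depth $d/2$. It then defines $U$ to be the set of vertices that fall out of $\tau$, i.e.\ those at distance greater than $d/2$ from $S'(H)$ in the current $G$. This set is incremental and, crucially, can be bounded: if at time $t$ one had $|U_t|>4\Delta t/\phi^*$, then $(S'_t,U_t,E'_t)$ (with $E'_t$ the deleted edges) would be a distancing in the \emph{original} graph $G$, and \Cref{lem: distancing to sparse cut} would yield a cut of sparsity below $\phi^*$, contradicting expansion. Queries for $x,y\in V(G)\setminus U$ then first use $\tau$ to route each of $x,y$ to some vertex of $S'(H)$ before invoking the $H$-data structure. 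A secondary numerical issue: your choice $\eta=\Theta(\Delta\log n/\phi^*)$ gives only $|E'|\le d|T_1|/\eta=O(|T_1|)$, which is not small enough to invoke \Cref{lem: distancing to sparse cut} (one needs $|E'|\le \phi|T_1|/4$ with $\phi=\phi^*/\Delta$). The paper takes $\eta=\Theta(\Delta^2\log m/(\phi^*)^2)$, one factor of $\Delta/\phi^*$ larger, precisely so that $d/\eta\le \phi^*/(8\Delta)$.
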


The proof of \Cref{lem: APSP on expanders one phase}  is deferred to \Cref{subsec: APSP in expanders one phase}. We now complete the proof of \Cref{thm: APSP on expanders main} using it.
We partition the execution of our algorithm into phases.  
Let $k'=\floor{\frac{n^{1-20\eps}\phi^2}{2^{11}\cdot \Delta^4}}$. 
The first phase lasts as long as the number of edges deleted from $G$ via the input update sequence is at most $k'$. Once $k'$ edges are deleted from graph $G$, the second phase begins. Each subsequent phase similarly lasts as long as at most $k'$ edges are deleted since the beginning of the phase, except for the last phase which may be shorter, if the input update sequence terminates before $k'$ edges are deleted from $G$ since the beginning of the phase.
For all $i\geq 1$, we denote by $\Sigma_i$ the sequence of edge deletions that graph $G$ undergoes as part of the online input sequence of edge deletions in phase $i$, and we denote by $E_i$ the set of edges that belong to $\Sigma_i$. 
Since the total number of edges in the input sequence of edge deletions is bounded by $ \frac{n\cdot \phi^2}{2^{13} \Delta^4}$, we get that the number of phases is bounded by $\frac{n\cdot \phi^2}{2^{13}\cdot \Delta^4\cdot k'}\leq \frac{n^{20\eps}}4$.

We define another dynamic graph $G'$. At the beginning of the algorithm, we set $G'=G$. As the algorithm progresses, we will delete some edges from $G'$. Specifically, at the end of every phase of the algorithm, we will define a set of edges to be deleted from graph $G'$; we do not delete any edges from $G'$ as long as a phase progresses. We will run the algorithm from  \Cref{thm: expander pruning} (expander pruning) on graph $G'$, and we let $\tilde U$ be the set of vertices of $G'$ that it maintains. We will ensure that the number of edges that are deleted from $G'$ over the course of the entire algorithm is bounded by $k=\frac{\phi n}{16\Delta}$. Clearly, $k\leq \frac{\phi\cdot |E(G)|}{10\Delta}$. From  \Cref{thm: expander pruning}, we are guaranteed that, throughout the algorithm, $|\tilde U|\leq \frac{8k\Delta}{\phi}\leq \frac{n}{2}$ holds.

In order to execute the first phase, we let $H_1=G'=G$, and we apply the algorithm from \Cref{lem: APSP on expanders one phase} to graph $H_1$, and the online sequence $\Sigma_1$ of edge deletions.
Recall that graph $G$ is a $\phi$-expander, and that $|\Sigma_1|\leq k'=\floor{\frac{n^{1-20\eps}\phi^2}{2^{11}\cdot \Delta^4}}$. Recall that the algorithm maintains an incremental set $U\subseteq V(G)$ of vertices, that we denote by $U_1$, such that, for every integer $t>0$, after $t$ edges are deleted from $G$, $|U_1|\leq \frac{4\Delta t}{\phi}$ holds. 
Throughout the first phase, we let the set $U$ of vertices that our algorithm maintains be $U_1$. Recall that the algorithm from \Cref{lem: APSP on expanders one phase} supports queries short-path queries: given a pair of vertices $x,y\in V(G)\setminus U_1$, return an $x$-$y$  path $P$ in the current graph $H_1=G$, of length at most $\frac{2^{O(1/\eps^6)}\cdot \Delta\cdot\log n}{\phi}$, with query time $O(|E(P)|)$.

From the above discussion, at the end of the first phase, $|U_1|\leq \frac{4\Delta k'}{\phi}\leq \frac{n^{1-20\eps}\cdot \phi}{2^{9}\cdot \Delta^3}$ holds. We denote by $\tilde E_1$ the set of all edges that are incident to the vertices of $U_1$ in the current graph $G$. We then update graph $G'$, by deleting the edges of $E_1\cup \tilde E_1$ from it. Therefore, at the end of the first phase, the number of edges that are deleted from $G'$ is bounded by:

\[|E_1|+|\tilde E_1|\leq  \frac{n^{1-20\eps}\cdot \phi}{2^{8}\cdot \Delta^2}<k. \]

We denote by $\tilde U_1$ the set of vertices that the algorithm from \Cref{thm: expander pruning} produces at the end of Phase 1, and the deletion of the edges of $E_1\cup \tilde E_1$ from $G'$. We also denote by $H_2=G'\setminus \tilde U_1$. From \Cref{thm: expander pruning}, graph $H_2$ is a $\phi/(6\Delta)$-expander. 
Notice also that, if $U_1$ is the set of vertices that the algorithm from \Cref{lem: APSP on expanders one phase}  maintains at the end of the phase, then $U_1\subseteq \tilde U_1$ must hold.

The remaining phases are executed similarly, with several minor differences. We let $\tilde U_i$ be the set $\tilde U$ that the algorithm from \Cref{thm: expander pruning} produces at the end of Phase $(i-1)$, and we denote $H_i=G'\setminus \tilde U_i$. We will ensure that the total number of edges that are deleted from graph $G'$ over the course of the first $(i-1)$ phases is bounded by:

\[(i-1)\cdot \frac{n^{1-20\eps}\cdot \phi}{32\cdot \Delta}. \]

Since the total number of phases is bounded by $ \frac{n^{20\eps}}4$, this ensures that the number of edges deleted so far from $G'$ is less than $\frac{n\cdot \phi}{16\Delta}=k$. Therefore, from \Cref{thm: expander pruning}, graph $H_i$ is a $\phi/(6\Delta)$-expander, and furthermore, $|\tilde U_i|\leq n/2$, so $|V(H_i)|\geq n/2$. Denote $\phi^*=\frac{\phi}{6\Delta}$, and note that:

\[k'=\floor{\frac{n^{1-20\eps}\phi^2}{2^{11}\cdot \Delta^4}}\leq \frac{|V(H_i)|^{1-20\eps}(\phi^*)^2}{\Delta^2}  \]

Therefore, we can apply the algorithm from \Cref{lem: APSP on expanders one phase} to graph $H_i$ with the sequence $\Sigma_i$ of edge deletions. We denote by $U_i$ the incremental set of vertices of $H_i$ that the algorithm maintains. The set $U$ of vertices of $G$ that our algorithm maintains over the course of the $i$th phase is defined to be $\tilde U_i\cup U_i$.

Recall that the total number of edges that are deleted over the course of the first $(i-1)$ phases of the algorithm from graph $G$ is $(i-1)\cdot k'\geq (i-1)\cdot \floor{\frac{n^{1-20\eps}\phi^2}{2^{11}\cdot \Delta^4}}$.

Since the total number of edges deleted from graph $G'$ over the course of the first $(i-1)$ phases is bounded by $(i-1)\cdot \frac{n^{1-20\eps}\cdot \phi}{32\cdot \Delta}$, from \Cref{thm: expander pruning}, we get that:

\[ |\tilde U_i|\leq \frac{n^{1-20\eps}\cdot (i-1)}{4}.\]

Consider some integer $0\leq t\leq k'$, and the time during the execution of phase $i$, immediately after the $t$-th edge of $\Sigma_i$ is deleted from graph $G$ in phase $i$. Then at this time, the total number of edges deleted from graph $G$ since the beginning of the algorithm is at least: 

\[m^i_t=(i-1)\cdot \floor{\frac{n^{1-20\eps}\phi^2}{2^{11}\cdot \Delta^4}}+t.\]

At the same time:

\[|U|=|\tilde U_i|+|U_i|\leq  (i-1)\cdot \frac{n^{1-20\eps}}{4}+\frac{4\Delta t}{\phi^*}= (i-1)\cdot \frac{n^{1-20\eps}}{4}+\frac{24\Delta^2 t}{\phi}\leq \frac{2^{11}\Delta^4}{\phi^2}\cdot m^i_t. \]

Notice that, over the course of the $i$th phase, $V(H_i)\setminus U_i=V(G)\setminus(U_i\cup \tilde U_i)=V(G)\setminus U$ holds. Therefore, when short-path query arrives for a pair $x,y\in V(G)\setminus U$ of vertices, it must be the case that $x,y\in V(H_i)\setminus U_i$. We can then perform 
short-path query in the data structure maintained by the algorithm from \Cref{lem: APSP on expanders one phase}, to obtain a path $P$ in the current graph $H_i\subseteq G$, of length at most $\frac{2^{O(1/\eps^6)}\cdot \Delta\cdot\log n}{\phi^*}\leq \frac{2^{O(1/\eps^6)}\cdot \Delta^2\cdot\log n}{\phi}$, in time $O(|E(P)|)$. We return this path as the response to the query.

From the above discussion, at every time $t$ during the execution of phase $i$, if $m_{t}^i$ is the number of edges deleted so far by the algorithm, and $U^{(t)}$ is the current set $U$, then:

\[ |U^{(t)}|\leq \frac{2^{11}\Delta^2}{\phi^2}\cdot m^i_t.  \]

Once all edges of $\Sigma_i$ edges are deleted from graph $G$, we let $\tilde E_i$ be the set of all edges that are incident to the vertices in the current set $U_i$. Observe that:

\[ |\tilde E_i|\leq \Delta\cdot |U_i|\leq \frac{4\Delta^2}{\phi^*}\cdot k'\leq  
\frac{24\Delta^3}{\phi}\cdot \frac{n^{1-20\eps}\phi^2}{2^{11}\cdot \Delta^4}\leq 
\frac{n^{1-20\eps}\cdot \phi}{64\cdot \Delta}.
\]

since $\phi^*=\frac{\phi}{6\Delta}$ and $k'=\floor{\frac{n^{1-20\eps}\phi^2}{2^{11}\cdot \Delta^4}}$.

We delete the edges of $E_i\cup \tilde E_i$ from graph $G'$, and update the data structure maintained by the algorithm from \Cref{thm: expander pruning}. Since $|E_i|\leq k'=\floor{\frac{n^{1-20\eps}\phi^2}{2^{11}\cdot \Delta^4}}$, we get that $|E_i\cup \tilde E_i|\leq \frac{n^{1-20\eps}\cdot \phi}{32\cdot \Delta}$. Recall that we have assumed that, over the course of the first $(i-1)$ phases, the total number of edges deleted from graph $G'$ is bounded by 
$(i-1)\cdot \frac{n^{1-20\eps}\cdot \phi}{32\cdot \Delta}$. We then get that, over the course of the first $i$ phases, the total number of edges deleted from graph $G'$ is bounded by $i\cdot \frac{n^{1-20\eps}\cdot \phi}{32\cdot \Delta}$. We let $\tilde U_{i+1}$ the set $\tilde U$ that the algorithm from \Cref{thm: expander pruning} maintains after the edges of $E_i\cup \tilde E_i$ are deleted from $G'$. Since the vertices of $U_i$ are isolated in graph $G'\setminus \tilde E_i$, while graph $G'\setminus \tilde U_{i+1}$ must be a $\phi/(6\Delta)$-expander, we get that $U_i\subseteq \tilde U_{i+1}$. We then let the set $U$ that the algorithm maintains be $\tilde U_{i+1}$, and we continue to the next phase, with the graph $H_{i+1}=G'\setminus \tilde U_{i+1}$. 

It is easy to verify that the set $U$ of vertices that the algorithm maintains is incremental. Indeed, consider some phase $i$ of the algorithm. Throughout the phase, we let $U=\tilde U_i\cup U_i$, where $\tilde U_i$ is fixed over the course of the phase, and $U_i$ is incremental. If we denote by $U_i'$ the set $U_i$ at the end of Phase $i$, then we are guaranteed that $U'_i\subseteq \tilde U_{i+1}$, and, from \Cref{thm: expander pruning}, $\tilde U_i\subseteq \tilde U_{i+1}$. At the beginning of Phase $(i+1)$, we set $U=\tilde U_{i+1}$, and so $\tilde U_i\cup U_i'\subseteq U$ at this point. Therefore, set $U$ is incremental
throughout the algorithm.

It now only remains to bound the total update time of the algorithm. The algorithm consists of at most $O(n^{20\eps})$ phases. In every phase, we run the algorithm from \Cref{lem: APSP on expanders one phase}, whose total update time is $O\left(\frac{m^{1+O(\eps)}\cdot \Delta^3}{(\phi^*)^2}\right )\leq O\left(\frac{m^{1+O(\eps)}\cdot \Delta^5}{\phi^2}\right )$.

Additionally, the total update time of the algorithm from 
\Cref{thm: expander pruning}, over the course of at most $m$ deletions of edges from $G'$, is bounded by  $\Otilde\left (\frac{m\Delta^2}{\phi^2}\right)$.

Altogether, the total update time of the algorithm is bounded by:

\[O(n^{20\eps})\cdot O\left(\frac{m^{1+O(\eps)}\cdot \Delta^5}{\phi^2}\right )+ \Otilde\left (\frac{m\Delta^2}{\phi^2}\right)\leq   O\left(\frac{m^{1+O(\eps)}\cdot \Delta^5}{\phi^2}\right ).\]

In order to complete the proof of \Cref{thm: APSP on expanders main}, it now remains to prove \Cref{lem: APSP on expanders one phase}.

\subsection{Proof of \Cref{lem: APSP on expanders one phase}}
\label{subsec: APSP in expanders one phase}

	We start by describing the data structures that the algorithm maintains, together with their initialization. We then describe an algorithm for maintaining the data structures under the deletion of edges from $G$. Finally, we describe an algorithm for responding to short-path query.

	Before we do so, we establish some bounds on the parameters that will be useful for us later. All of these bounds follow from the fact that  $\frac{2}{(\log n)^{1/12}}< \eps<\frac{1}{400}$ holds, from the statement of \Cref{lem: APSP on expanders one phase}.

First, since $\eps>\frac{2}{(\log n)^{1/12}}$, we get that $\log n\geq (2/\eps)^{12}$, and: 

\begin{equation}
n\geq 2^{(2/\eps)^{12}}\geq 2^{800^{12}}.\label{eq: bound 2}
\end{equation}	

Additionally:

\begin{equation}\label{eq: bound neps}
n^{\eps^8}>n^{(2/(\log n)^{1/12})^8}= n^{256/(\log n)^{2/3}} >2^{(\log n)^{1/3}}>\log n.
\end{equation}

	\subsubsection{Data Structures and Initialization}
	We start by
	applying the algorithm from \Cref{cor: HSS witness} to graph $G$, with the set $T=V(G)$ of terminals, parameter $\eps$ given by the statement of \Cref{lem: APSP on expanders one phase}, and parameters $d=\frac{64\Delta\log m}{\phi^*}$ and $\eta=\frac{512\Delta^2\log m}{(\phi^*)^2}$.

	We claim that the algorithm may not return a pair 
	$T_1,T_2\subseteq T$ of disjoint subsets of terminals, and a set $E'$ of edges of $G$, with $|T_1|=|T_2|$, $|T_1|\geq \frac{n^{1-4\eps^3}}{4}$ and $|E'|\leq \frac{d\cdot |T_1|}{\eta}$, such that for every pair $t\in T_1,t'\in T_2$ of terminals, $\dist_{G\setminus E'}(t,t')>d$. Indeed, assume for contradiction that the algorithm returns a pair $T_1,T_2$ of disjoint subsets of vertices of $G$ with the above properties. Denote $\phi=\frac{\phi^*}{\Delta}$, and note that $|E'|\leq \frac{d\cdot |T_1|}{\eta}=\frac{\phi^* \cdot |T_1|}{8\Delta}\leq \frac{\phi\cdot |T_1|}{4}$. 
	Note also that $d=\frac{64\Delta\log m}{\phi^*}\geq\frac{32\log m}{\phi}$.
	Clearly, we can view $(T_1,T_2,E')$ as a $(\delta,d)$-distancing in graph $G$, for some parameter $0<\delta<1$. From \Cref{lem: distancing to sparse cut}, there is a cut $(X,Y)$ in graph $G$, with $T_1\subseteq X$ and $T_2\subseteq Y$, such that $|E_G(X,Y)|\leq \phi\cdot \min\set{|E(X)|,|E(Y)|}$. Since the maximum vertex degree in $G$ is bounded by $\Delta$, $|E(X)|< \Delta\cdot |X|$, and similarly $|E(Y)|< \Delta\cdot |Y|$. Therefore, we are guaranteed that $|E(X,Y)|< \phi\cdot \Delta\cdot \min\set{|X|,|Y|}=\phi^*\cdot \min\set{|X|,|Y|}$, contradicting the fact that $G$ is a $\phi^*$-expander.

We conclude that the algorithm from \Cref{cor: HSS witness} must return a graph $H$ with $V(H)\subseteq V(G)$, $|V(H)|=N^{1/\eps}\geq n-n^{1- \eps/2}\geq n/2$, where $N=\floor{n^{\eps}}$, so that the maximum vertex degree in $H$ is at most  $n^{32\eps^3}$. The algorithm also must return  an embedding $\pset$ of $H$ into $G$ via paths of length at most $d$ that cause congestion at most $\eta\cdot n^{32\eps^3}$, and a level-$(1/\eps)$ hierarchical support structure for $H$, such that $H$ is $(\eta',\td)$-well-connected with respect to the set $S(H)$ of vertices defined by the support structure, where $\eta'=N^{6+256\eps}$, and $\td=2^{c/ \eps^5}$, with $c$ being the constant used in the definition of the \HSS.
Recall that the running time of the algorithm is:

\[O\left (n^{1+O(\eps)}+|E(G)|\cdot n^{O(\eps^3)}\cdot(\eta+d\log n)\right )\leq O\left(n^{1+O(\eps)}+ n^{1+O(\eps^3)}\cdot \frac{\Delta^3\log^2 n}{(\phi^*)^2}\right )\leq O\left (\frac{n^{1+O(\eps)}\cdot \Delta^3}{(\phi^*)^2}\right ).\]

(We have used the fact that $d=\frac{64\Delta\log m}{\phi^*}$, $\eta=\frac{512\Delta^2\log m}{(\phi^*)^2}$, and Inequality \ref{eq: bound neps}). Let $q=1/\eps$.
	
For every edge $e'\in E(H)$, let $P(e')\in \pset$ be the path embedding edge $e'$ into $G$. For every edge $e\in E(G)$, we will maintain a list $L(e)$ of all edges $e'\in E(H)$ with $e\in E(P(e'))$. The list also contains, for each edge $e'\in L(e)$, a pointer to edge $e'$ in graph $H$, and each edge $e'\in E(H)$ maintains a pointer to every edge in $E(P(e'))$. Whenever an edge $e\in E(G)$ is deleted from graph $G$, we will delete every edge $e'\in L(e)$ from graph $H$. Since the paths in $\pset$ cause congestion at most $\eta\cdot n^{32\eps^3}=\frac{512\Delta^2\cdot  n^{32\eps^3}\log m}{(\phi^*)^2}\leq \frac{\Delta^2\cdot n^{33\eps^3}}{(\phi^*)^2}$ in $H$, every deletion of an edge in $G$ may trigger the deletion of at most $\tilde \eta=\frac{\Delta^2\cdot n^{33\eps^3}}{(\phi^*)^2}$ edges from $H$. Let $q=1/\eps$. Then the total number of edge deletions from graph $H$ over the course of the algorithm is bounded by:

\[\frac{n^{1-20\eps}(\phi^*)^2}{\Delta^2}\cdot \tilde \eta= n^{1-20\eps+33\eps^2}\leq \frac{n^{1-16\eps}}{2}\leq N^{q-16q\eps}< N^{q-8-300q\eps^2},  \]

as  $q=1/\eps$ and $\frac n 2\leq N^q$.


 Note that $\eta'=N^{6+256\eps}=N^{6+256q\eps^2}=\eta_q$ and $\td=2^{c/ \eps^5}=\td_q$, where $\eta_q$ and $\td_q$ are the parameters from the definition of Hierarchical Support Structure. We will use the algorithm from \Cref{thm: APSP in HSS} in graph $H$, with parameter $j=q$, and parameters $\eps,N$ remaining unchanged.
In order to be able to use the theorem, we need to verify that $\frac{N^{\eps^4}}{\log N}\geq 2^{128/\eps^6}$ holds.
Since $N=\floor{n^{\eps}}$, we get that:

\[\frac{N^{\eps^4}}{\log N}\geq \frac{n^{\eps^6}}{\eps\cdot\log n}\geq n^{\eps^6/2}\geq 2^{128/\eps^6}. \]

(we have used the fact that, from Inequality \ref{eq: bound neps}, $\log n<n^{\eps^8}$, from Inequality \ref{eq: bound 2}, $n\geq 2^{(2/\eps)^{12}}$, and $\eps<1/400$).

As observed already, the number of edge deletions that graph $H$ undergoes over the course of the algorithm is bounded less than $N^{q-8-300q\eps^2}=\Lambda_q$. We will maintain a data structure from \Cref{thm: APSP in HSS} in graph $H$, with parameters $\eps,N$ and $q$ as defined above. We denote this data structure by $\dset(H)$. At the beginning of the algorithm, we initialize this data structure. 
Recall that data structure $\dset(H)$ maintains 
 a decremental set $S'(H)\subseteq S(H)$ of vertices of $H$, called {supported vertices}, such that, at the beginning of the algorithm, $S'(H)=S(H)$. The algorithm ensures that $|S'(H)|\geq \frac{N^q}{16^q}$ holds over the course of the algorithm, and it supports short-path queries between supported vertices: given a pair $x,y\in S'(H)$ of vertices, return a path $P$ connecting $x$ to $y$ in the current graph $H$, whose length is at most $d^*_q=2^{O(q/\eps^5)}=2^{O(1/\eps^6)}$, in time $O(|E(P)|)$. 
 If $m'\leq n^{1+32\eps^3}$ is the number of edges in $H$ at the beginning of the algorithm, then the total update time needed to maintain data structure $\dset(H)$ is bounded by:

 \[O\left( qN^{q+3}\cdot 2^{O(1/\eps^6)}+m'\cdot N^2\cdot 2^{O(1/\eps^6)}\right )\leq O\left (n^{1+O(\eps)}\cdot 2^{O(1/\eps^5)}\right )\leq O\left (n^{1+O(\eps)}\right ) \]

(We have used the fact that, from Inequality \ref{eq: bound 2}, $n^{\eps}\geq 2^{(2/\eps)^{12}\cdot\eps}\geq 2^{O(1/\eps^6)}$).

Lastly, we maintain an \EST data structure $\tau$ in graph $G$, rooted at the set $S'(H)$ of vertices, with depth parameter $d=\frac{64\Delta\log m}{\phi^*}$. Specifically, we maintain a graph $G'$, that is obtained from graph $G$ by adding a source vertex $s$, that connects to every vertex that lies in the current set $S'(H)$ of supported vertices with an edge. We then let $\tau$ be an \EST data structure in graph $G'$, rooted at $s$, with depth $d+1$. Whenever a vertex $x$ is deleted from set $S'(H)$, we will delete edge $(s,x)$ from $G'$, and update the data structure $\tau$ accordingly. Also, whenever an edge $e$ is deleted from graph  $G$, we will also delete $e$ from graph $G'$, and update the data structure $\tau$ accordingly. Throughout the algorithm, we let $U$ be the set of all vertices $v\in V(G)$, such that $v\not\in V(\tau)$. In other words, $\dist_G(S'(H),v)>d$. Clearly, the set $U$ of vertices is incremental: vertices can joint it but they cannot leave it. In the next claim we bound the cardinality of $U$.

\begin{claim}\label{claim: few deleted edges}
Let $t$ be any time during the algorithm's execution, let $E'_t$ be the set of edges that were deleted so far from $G$, and let $U_t$ be the current set $U$ of vertices. Then $|U_t|\leq \frac{4\Delta |E'_t|}{\phi^*}$.
	\end{claim}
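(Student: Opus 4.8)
The goal is to bound the size of the set $U$ of vertices that are ``far'' from the current supported set $S'(H)$ in the dynamic graph $G$, in terms of the number of edges deleted so far. The key point is that initially $G$ was a $\phi^*$-expander, and we want to use this expansion together with the fact that few edges have been deleted, in order to argue that the ball of radius $d$ around $S'(H)$ must contain almost all of $V(G)$.

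First I would set up the core contradiction. Suppose for contradiction that $|U_t| > \frac{4\Delta |E'_t|}{\phi^*}$ at some time $t$. The vertices of $U_t$ are exactly those at distance more than $d=\frac{64\Delta\log m}{\phi^*}$ from $S'(H)$ in the current graph $G^{(t)} = G\setminus E'_t$. I would run a ball-growing argument from the set $S'(H)$ in $G^{(t)}$: define $B_i = B_{G^{(t)}}(S'(H), i)$ for $i=0,1,\dots,d$, and consider the layered structure. Since $U_t = V(G)\setminus B_d$ and $S'(H)\subseteq B_0$, we have a "moat" of $d$ layers separating the (large) set $S'(H)$ from the (assumed large) set $U_t$. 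Each cut $(B_i, V(G)\setminus B_i)$ in $G^{(t)}$ consists only of edges present in $G^{(t)}$. Standard ball-growing (as in the proof of \Cref{lem: distancing to sparse cut}, which I may invoke or at least mirror) shows that among $d/2$ consecutive layers, there must be one with few boundary edges: more precisely, since the number of edges in $G^{(t)}$ is at most $|E(G)| = m$, there is some index $i < d$ with $|E_{G^{(t)}}(B_i, V(G)\setminus B_i)| \le \frac{2m}{d}\cdot \frac{1}{?}$ — I would instead use the cleaner multiplicative version: because distances exceed $d$ and $d \ge \frac{32\log m}{\phi^*/\Delta}$ (here it is the parameter $\phi = \phi^*/\Delta$ of the associated distancing that matters), there is a layer index $i$ such that the cut $(B_i, V(G)\setminus B_i)$ has at most $\frac{\phi^*}{2}\cdot\min\{|E_{G^{(t)}}(B_i)|, |E_{G^{(t)}}(\overline{B_i})|\}$ boundary edges.

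Next, I would pull this back to the original expander $G = G^{(0)}$. Write $(X,Y)$ for the cut where $X = B_i$ and $Y = V(G)\setminus B_i$. In $G^{(t)}$ this cut has sparsity below $\phi^*$ (using $|E_{G^{(t)}}(X)| \le \Delta|X|$ etc., exactly as in the remark following \Cref{lem: distancing to sparse cut}). In $G$ itself, $|E_G(X,Y)| \le |E_{G^{(t)}}(X,Y)| + |E'_t|$. The crucial inequalities to check: $S'(H)\subseteq X$ has size $|S'(H)|\ge \frac{N^q}{16^q} \ge n/2^{O(1/\eps)}$, and $U_t \subseteq Y$ has size $> \frac{4\Delta|E'_t|}{\phi^*}$ by assumption, so both $|X|$ and $|Y|$ are at least $\Omega(|E'_t|\Delta/\phi^*)$ — actually I need both sides large compared to $|E'_t|/\phi^*$; the $S'(H)$ side is large in absolute terms, and the $Y$ side is large by the contradiction hypothesis. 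Then $|E_G(X,Y)| \le \phi^*\cdot\min\{|X|,|Y|\} - |E'_t|\cdot(\text{something}) + |E'_t| \le \phi^*\cdot\min\{|X|,|Y|\}$, after choosing the constant $4$ so that the $|E'_t|$ slack is absorbed. This contradicts $G$ being a $\phi^*$-expander, completing the proof.

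\textbf{Main obstacle.} The delicate part is the bookkeeping of constants and the verification that \emph{both} sides of the cut $(X,Y)$ are sufficiently large to make the expansion contradiction go through — in particular that $\min\{|E_G(X)|, |E_G(Y)|\}$ is genuinely comparable to $\phi^*\cdot\min\{|X|,|Y|\}$ and that adding back $|E'_t|$ deleted edges does not overwhelm the gap. This is exactly why the threshold $d = \frac{64\Delta\log m}{\phi^*}$ and the bound $|U_t| \le \frac{4\Delta|E'_t|}{\phi^*}$ are calibrated the way they are: the factor $64$ in $d$ gives enough layers for ball-growing to find a sparse cut with a factor-$2$ safety margin, and the factor $4$ in the $|U_t|$ bound provides the slack to re-absorb $|E'_t|$ edges when passing from $G^{(t)}$ back to $G$. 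I would handle this by invoking \Cref{lem: distancing to sparse cut} essentially verbatim (viewing $(S'(H), U_t, E'_t)$, or rather a large equal-cardinality sub-pair of it, as a $(\delta, d)$-distancing in $G$ with the parameter $\phi = \phi^*/\Delta$), which packages the ball-growing step cleanly; the only thing left to check by hand is that $|E'_t| \le \phi|X_1|/4$ where $X_1$ is the chosen subset of $S'(H)$, and that the resulting cut contradicts expansion — both of which follow from the assumed violation $|U_t| > \frac{4\Delta|E'_t|}{\phi^*}$ after trimming $S'(H)$ and $U_t$ to equal cardinality $\min\{|S'(H)|, |U_t|\} \ge |U_t|$ — wait, we need $|U_t|$ possibly larger than $|S'(H)|$; in that case trim $U_t$ down, and note $|S'(H)|$ is already $\ge \frac{4\Delta|E'_t|}{\phi^*}$ anyway since $|E'_t| \le \frac{n^{1-20\eps}(\phi^*)^2}{\Delta^2}$ is tiny, so the distancing has the required size on both sides.
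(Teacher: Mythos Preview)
Your proposal is correct and follows essentially the same approach as the paper: assume $|U_t| > \frac{4\Delta|E'_t|}{\phi^*}$, trim $S'(H)$ and $U_t$ to equal-cardinality subsets of size $M=\min\{|S'(H)|,|U_t|\}$, observe that this yields a $(\delta,d)$-distancing in the original graph $G$ with parameter $\phi=\phi^*/\Delta$ (checking $|E'_t|\le \phi M/4$ using both the contradiction hypothesis and the global bound on $|E'_t|$ to handle whichever of $|S'(H)|,|U_t|$ is the minimum), and then invoke \Cref{lem: distancing to sparse cut} to produce a cut of sparsity below $\phi^*$ in $G$, contradicting expansion. Your initial detour through explicit ball-growing in $G^{(t)}$ followed by adding back $|E'_t|$ edges is unnecessary once you apply the lemma to $G$ directly (the lemma already absorbs the deleted edges internally), which is exactly what the paper does.
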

	
\begin{proof}
	Assume otherwise, and let $t$ be some time during the algorithm's execution, when  $|U_t|> \frac{4\Delta |E'_t|}{\phi^*}$ holds. Denote $\phi=\frac{\phi^*}{\Delta}$, so that $|E'_t|<\phi\cdot |U_t|/4$.
	
	Let $S'_t$ be the set $S'(H)$ of vertices at time $t$. Recall that we are guaranteed that $|S'(H)|\geq \frac{N^q}{16^q}\geq \frac{n}{2^{4/\eps+1}}$ holds throughout the algorithm. Since the total number of the edges deleted from $G$ over the course of the algorithm is at most $\frac{n^{1-20\eps}(\phi^*)^2}{\Delta^2}=n^{1-20\eps}\phi^2\leq n^{1-20\eps}\cdot \phi$, we get that $|E'_t|\leq n^{1-20\eps}\cdot \phi$. On the other hand, from Inequality \ref{eq: bound 2}, $n^{20\eps}\geq 2^{4/\eps+4}$, and so:

	\[|S'_t|\geq  \frac{n}{2^{4/\eps+1}}\geq 4n^{1-20\eps}\geq \frac{4|E'_t|}{\phi}.  \]
	
	 We conclude that $|E'_t|\leq \frac{\phi}{4}\cdot\min\set{ |S'_t|,|U_t|}$.
	Denote $M=\min\set{|S'_t|,|U_t|}$. Let $X\subseteq S'_t$, $Y\subseteq U_t$ be arbitrary subsets of vertices of cardinality $M$ each. Let $G^0$ be the graph $G$ at the beginning of the algorithm. Clearly, $\dist_{G^0\setminus E'_t}(X,Y)>d$. Recall that $d=\frac{64\Delta\log m}{\phi^*}=\frac{64\log m}{\phi}$. From \Cref{lem: distancing to sparse cut}, there is a cut $(X',Y')$ in graph $G^0$, with 
	 with $X\subseteq X'$ and $Y\subseteq Y'$, such that $|E_{G^0}(X',Y')|\leq \phi\cdot \min\set{|E_{G^0}(X')|,|E_{G^0}(Y')|}$. Since $|E_{G^0}(X')|<\Delta\cdot |X'|$ and $|E_{G^0}(Y')|< \Delta\cdot |Y'|$, we get that:
	 
	 \[|E_{G^0}(X',Y')|< \phi\cdot \Delta \cdot \min\set{|X'|,|Y'|}=\phi^*\cdot  \min\set{|X'|,|Y'|}, \]
	
	contradicting the fact that graph $G^0$ is a $\phi^*$-expander.
\end{proof}

At the beginning of the algorithm, we initialize the \EST data structure $\tau$, and set $U=\emptyset$. The total update time needed in order to maintain $\tau$ is bounded by $O(md\log m)\leq O\left ( \frac{m\Delta\log^2m}{\phi^*}\right )$.

We now bound the running time that is needed to initialize all data structures, and to maintain data structures $\dset(H)$, $\tau$, and $\set{L(e)}_{e\in E(G)}$ over the course of the algorithm.
The running time of the algorithm from \Cref{cor: HSS witness}, from the above discussion, is bounded by  $O\left (\frac{n^{1+O(\eps)}\cdot \Delta^3}{(\phi^*)^2}\right )$. The time that is needed in order to initialize and maintain the lists $L(e)$ for edges $e\in E(G)$ is subsumed by this running time. Additionally, from the above discussion, the total update time of data structure $\dset(H)$ is bounded by $O\left(n^{1+O(\eps)}\right )$, and the total update time of data structure $\tau$ is bounded by $O\left ( \frac{m\Delta\log^2m}{\phi^*}\right )$. The set $U$ of vertices can be maintained within this asymptotic running time. Overall, the total time needed to initialize all data structures, and to maintain data structures $\dset(H)$, $\tau$, and $\set{L(e)}_{e\in E(G)}$ over the course of the algorithm is bounded by:

\[O\left ( \frac{m\Delta\log^2m}{\phi^*}\right )+O\left (\frac{n^{1+O(\eps)}\cdot \Delta^3}{(\phi^*)^2}\right )\leq O\left(\frac{n^{1+O(\eps)}\cdot \Delta^3}{(\phi^*)^2}\right ).\]

\subsubsection{Maintaining the Data Structures}
Maintaining the data structures under the deletion of edges from $G$ is now straightforward. In \Cref{alg: delete-expander-edge} we describe algorithm $\delexpedge(e)$ that is invoked whenever an edge $e$ is deleted from graph $G$. We start by considering every edge $e'\in E(H)$ whose embedding path $P(e')$ contains edge $e$ -- in other words, all edges of $L(e)$. We delete each such edge $e'$ from graph $H$, and update data structure $\dset(H)$ with this deletion. As a result, it is possible that some vertices are removed from set $S'(H)$. For each such vertex $x$, we delete edge $(s,x)$ from graph $G'$, and update the data structure $\tau$ accordingly. Finally, we delete edge $e$ from data structure $\tau$.
 Whenever a vertex leaves the tree $\tau$, we add it to $U$.

\program{Algorithm $\delexpedge(e)$}{alg: delete-expander-edge}{
	
	\begin{enumerate}
		\item For every edge $e'\in L(e)$ do:
		
		\begin{enumerate}
			\item Delete $e'$ from $H$ and update data structure $\dset(H)$. Let $X$ be the set of vertices that were deleted from $S'(H)$ as the result of this update.
			
			\item For every edge $e''\in E(G)$ with $e'\in L(e'')$, delete $e'$ from $L(e'')$.
			
			\item for every vertex $x\in X$, delete edge $(s,x)$ from graph $G'$ and update the \EST $\tau$ with this deletion. Add every vertex that is removed from $\tau$ to $U$.
	\end{enumerate}
		
		\item Delete edge $e$ from graph $G'$ and update data structure $\tau$ accordingly. Add every vertex that is removed from $\tau$ to $U$.
	\end{enumerate}
}

It is easy to verify that the total update time of the algorithm is dominated by the time needed to initialize the data structures, and to maintain 
 data structures $\dset(H)$, $\tau$, and $\set{L(e)}_{e\in E(G)}$. From the above discussion, the total update time of the algorithm is bounded by  $O\left(\frac{n^{1+O(\eps)}\cdot \Delta^3}{(\phi^*)^2}\right )$.

\subsubsection{Responding to Short-Path Queries}

We assume that we are given a pair of vertices $x,y\in V(G)\setminus U$, and describe an algorithm for responding to short-path query between $x$ and $y$. Recall that our goal is to return an $x$-$y$  path $P$ in the current graph $G$, of length at most $\frac{2^{O(1/\eps^6)}\cdot \Delta\cdot\log n}{\phi^*}$, with query time $O(|E(P)|)$.

Using the \EST $\tau$, we compute a path $Q$ connecting $x$ to some vertex $x'\in S'(H)$, and a path $Q'$ connecting vertex $y$ to some vertex $y'\in S'(H)$, so that the length of each path is bounded by $d$. Next, we query data structure $\dset(H)$ with the pair $x',y'\in S'(H)$ of vertices. The data structure must return a path $\tilde Q$ connecting $x'$ to $y'$ in $H$, whose length is at most $2^{O(1/\eps^6)}$, in time $O(|E(\tilde Q)|)$. Using the embedding $\pset$ of $H$ into $G$, in which the length of every path is bounded by $d$, we can compute a path $Q''$ in graph $G$, connecting $x'$ to $y'$, whose length is bounded by $|E(\tilde Q)|\cdot d\leq 2^{O(1/\eps^6)}\cdot d$. Lastly, by concatenating the paths $Q,Q''$ and $Q'$, we obtain a path $P$ connecting $x$ to $y$ in graph $G$, whose length is at most:

\[  2^{O(1/\eps^6)}\cdot d\leq \frac{2^{O(1/\eps^6)}\cdot \Delta\cdot\log m}{\phi^*}\leq \frac{2^{O(1/\eps^6)}\cdot \Delta\cdot\log n}{\phi^*}.  \]

It is easy to see that the running time of the algorithm is $O(|E(P)|)$.

\section{Advanced Path Peeling -- Proof of \Cref{thm: main main advanced path peeling}}
\label{sec: advanced path peeling}

In this section we prove \Cref{thm: main main advanced path peeling}. 
The main tool in the proof is the following theorem.

\begin{theorem}\label{thm: outer advanced path peeling}
	There is a large enough constant $c^*$, and a deterministic algorithm, whose input consists of a connected $n$-vertex $m$-edge graph $G$,  a collection $\mset=\set{(s_1,t_1),\ldots,(s_k,t_k)}$ of pairs of vertices of $G$, such that $\mset$ is a matching, and  parameters $d,\eta>0$, $0<\alpha\leq 1/2$ and $\frac{2}{(\log n)^{1/24}}< \eps<\frac{1}{400}$, such that $1/\eps$ is an integer 
	and $256d< \eta\leq \frac{d^2}{2^{c^*/\eps^6}\cdot \log m}$ holds. The algorithm computes one of the following:

	\begin{itemize}
		\item either a cut $(A,B)$ with $|E_G(A,B)|\leq \frac{1024d}{\eta}\cdot \min\set{|E_G(A)|,|E_G(B)|}$, and each of $A$, $B$ contains at least $\frac{\alpha k}{8}$ vertices of set $T=\set{s_1,t_1,\ldots,s_k,t_k}$; or
		\item a routing $\pset$ in $G$ of a subset $\mset'\subseteq \mset$ containing at least $(1-\alpha)k$ pairs of vertices, such that every path in $\pset$ has length at most $d$, and the total congestion caused by the paths in $\pset$ is at most $\frac{4\eta}{\alpha}$.
	\end{itemize} 
	
	The running time of the algorithm is bounded by $O\left (m^{1+O(\eps)}(d^2+\eta d)\right )$.
\end{theorem}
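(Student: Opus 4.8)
The plan is to reduce Theorem~\ref{thm: outer advanced path peeling} to the machinery already built in the preceding sections, most importantly Corollary~\ref{cor: HSS witness} (which either produces a sparse distancing or embeds a well-connected graph together with its \HSS) and Theorem~\ref{thm: APSP in HSS full} (decremental \APSP in a well-connected graph with an \HSS). Roughly speaking, I would proceed in ``rounds''. In each round I try to embed, via Corollary~\ref{cor: HSS witness}, a well-connected graph $H$ whose vertex set is (a large subset of) the terminal set $T=\set{s_1,t_1,\ldots,s_k,t_k}$ into $G$, using length parameter $\Theta(d)$ and congestion parameter $\Theta(\eta)$ scaled down appropriately by the polynomial-in-$k$ factors coming from the corollary. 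If Corollary~\ref{cor: HSS witness} returns instead the distancing outcome---two large equal-size subsets $T_1,T_2\subseteq T$ separated by distance more than $d$ in $G\setminus E'$ with $|E'|$ small relative to $\eta$---then I feed $(T_1,T_2,E')$ into Lemma~\ref{lem: distancing to sparse cut} with $\phi=\Theta(d/\eta)$ (the hypothesis $256d<\eta$ guarantees $\phi<1/2$, and $d\ge(32\log m)/\phi$ up to the constant, which is exactly why we need the upper bound $\eta\le d^2/(2^{c^*/\eps^6}\log m)$), obtaining the required sparse cut $(A,B)$; each side contains $\Omega(\alpha k)$ terminals because $|T_1|=|T_2|$ is a large polynomial fraction of $k$ and $T_1\subseteq A$, $T_2\subseteq B$. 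This is the ``cut'' output.

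If the embedding succeeds, I set up the decremental \APSP data structure of Theorem~\ref{thm: APSP in HSS full} on $H$, together with an \EST in $G$ rooted at the current supported set $S'(H)$ of $H$ (at radius $\Theta(d)$, which is valid because the embedding has path length $\le d$ and because $H$ being well-connected gives diameter $\le\tilde d=2^{O(1/\eps^5)}$ between supported vertices, so any supported vertex of $H$ is within $O(d\cdot \tilde d)$ of any other in $G$). Then I greedily route the pairs of $\mset$: for each pair $(s_i,t_i)$ both of whose endpoints are still supported, I use the \APSP data structure (via the embedding and the \EST) to find a short $s_i$--$t_i$ path in the current $G$ of length $2^{O(1/\eps^6)}\cdot d \le d$ after rescaling constants; I add it to $\pset$ and decrement edge-capacity counters, deleting from $G$ (and hence from $H$, via the embedding) any edge that has been used $\Theta(\eta/\alpha)$ times. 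The \APSP structure of Theorem~\ref{thm: APSP in HSS full} withstands up to $|V(H)|^{1-\Theta(\eps)}$ edge deletions and keeps $|S'(H)|\ge |V(H)|/2^{O(1/\eps)}$; translated through the embedding congestion this means we can route all but an $n^{1-\Theta(\eps)}$ fraction before the structure may break. When the structure can no longer answer queries (too many deletions, or a queried endpoint has left $S'(H)$), I discard $H$ and start a fresh round on the \emph{remaining} unrouted pairs, restarting the whole procedure from scratch on the current $G$.

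The accounting is the heart of the argument. Each round either terminates the algorithm with a cut, or routes a number of pairs that is at least a $\bigl(1-1/n^{\Theta(\eps)}\bigr)$-fraction of the terminals still alive at the start of the round (because $H$ covers all but $k^{1-\Theta(\eps)}$ terminals, and we route all but $n^{1-\Theta(\eps)}$ of the supported pairs before the data structure expires). Hence after $O(1/\eps)$ rounds the number of unrouted pairs drops below $\alpha k$, so the union of all routing paths over the rounds routes $\mset'\subseteq\mset$ with $|\mset'|\ge(1-\alpha)k$; I would make the threshold in the deletion rule $\Theta(\eta/\alpha)$ precisely so that the total congestion over all rounds is $O(\eta/\alpha)$ and each path has length $\le d$. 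The running time per round is dominated by initializing Corollary~\ref{cor: HSS witness} ($O(m^{1+O(\eps)}\cdot k^{O(\eps^3)}(\eta+d\log n))$) and by the \APSP total update time of Theorem~\ref{thm: APSP in HSS full} ($O(m^{1+O(\eps)})$) plus the greedy routing which, since each of the $\le k$ queries returns a path of length $O(d)$ and we touch each edge at most $O(\eta/\alpha)$ times, costs $O(kd + m\eta/\alpha)$; over $O(1/\eps)$ rounds this is $O(m^{1+O(\eps)}(d^2+\eta d))$ as claimed (using $k\le m$, $\eta>256d$, and $\log n\le m^{o(1)}$). The main obstacle I anticipate is making the round bookkeeping airtight: carefully choosing the parameters passed to Corollary~\ref{cor: HSS witness} and Theorem~\ref{thm: APSP in HSS full} (the scaling of $d$ and $\eta$ by the various $k^{O(\eps^3)}$, $2^{O(1/\eps^6)}$ factors so that the final path length is $\le d$ and final congestion $\le 4\eta/\alpha$), verifying that the hypothesis $256d<\eta\le d^2/(2^{c^*/\eps^6}\log m)$ is exactly what is needed for Lemma~\ref{lem: distancing to sparse cut} to apply in the cut case and for the ratio $1024d/\eta$ to come out right, and confirming that restarting a round on $G$ with already-deleted edges does not destroy connectivity-type preconditions (handled by noting each round only needs $G$ to be whatever it currently is, and Corollary~\ref{cor: HSS witness} only needs connectivity, which we can restore or work around since isolated low-degree pieces carry few terminals).
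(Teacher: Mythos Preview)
Your high-level plan---embed a well-connected graph $H$ on the terminal set via Corollary~\ref{cor: HSS witness}, run the \APSP\ data structure of Theorem~\ref{thm: APSP in HSS full} together with an \EST\ rooted at $S'(H)$, and greedily peel off routing paths---is precisely what the paper packages as its \emph{inner} subroutine (Lemma~\ref{lem: inner advanced path peeling}). The gap is in your quantitative claim that one such round routes a $(1-1/n^{\Theta(\eps)})$-fraction of the remaining pairs. It does not. The \APSP\ structure tolerates only $\Lambda=|V(H)|^{1-10\eps}\approx k^{1-O(\eps)}$ edge deletions \emph{from $H$}, but each routed pair uses a path of length up to $d$ in $G$; once $z$ pairs are routed, you have saturated up to $\Theta(zd\alpha/\eta)$ edges of $G$, and each such saturation cascades through the embedding (congestion $\eta'\cdot k^{O(\eps^3)}$) into $H$. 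Setting the resulting count of $H$-deletions equal to $\Lambda$ yields $z\approx k^{1-O(\eps)}/d$---the paper's parameter $z=k^{1-22\eps}/d$ in Lemma~\ref{lem: inner advanced path peeling}---not $k-k^{1-\Theta(\eps)}$. You are conflating ``the structure survives $k^{1-\Theta(\eps)}$ deletions'' with ``only $k^{1-\Theta(\eps)}$ pairs fail to route''; the factor $d$ (which can be polynomially large in $n$) sits between them. Consequently the number of rounds is not $O(1/\eps)$ but $\Theta(d\cdot k^{O(\eps)}\log k)$, which is exactly the paper's iteration bound $q'$ (Observation~\ref{obs: bound num of phases}) and the true source of the $(d^2+\eta d)$ factor in the running time.

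Because per-round progress is small, a single round that ends in a distancing cannot certify both sides contain $\ge\alpha k/8$ terminals: it only gives $\ge k_q^{1-O(\eps)}$ terminals per side, and once $k_q$ has dropped to $\Theta(\alpha k)$ this is $\approx(\alpha k)^{1-O(\eps)}\ll\alpha k$. The paper therefore does not output a cut from a single round. Instead, over many ``type-1'' iterations it accumulates a family $\sset_q$ of disjoint cut-off pieces, maintaining the invariant that $\bigl|\bigl(\bigcup_{S\in\sset_q}\delta_G(S)\bigr)\setminus E'_q\bigr|\le\frac{64d}{\eta}\sum_{S\in\sset_q}|E_G(S)|$, and only once $\bigl|\bigl(\bigcup_{S}S\bigr)\cap T\bigr|\ge\alpha k/2$ does it enter a separate Stage~2 that assembles these pieces into the final cut $(A,B)$ via a greedy partition and a nontrivial charging argument (Claim~\ref{claim: sparse cut after greedy}). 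Your proposal is missing both the correct per-round progress bound and this cut-accumulation/recombination mechanism; once you replace your ``$O(1/\eps)$ rounds'' claim with the inner-lemma guarantee ($z$ pairs routed \emph{or} a cut with $k_q^{1-\eps}/16$ terminals per side) and add the outer two-stage loop, you recover the paper's proof.
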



The proof of \Cref{thm: main main advanced path peeling} easily follows from \Cref{thm: outer advanced path peeling}.  
Let $z=\ceil{\frac{1}{\eps}}$, and let $\eps'=\frac{1}{z}$. Clearly, $\frac{1}{\eps}\leq z \leq \frac{2}{\eps}$, and so $\frac{\eps}{2}\leq \eps'\leq \eps$.
Since we have assumed that $\frac{4}{(\log n)^{1/24}}< \eps<\frac{1}{400}$, we get that $\frac{2}{(\log n)^{1/24}}< \eps'<\frac{1}{400}$.
For convenience, in the remainder of the proof we will denote $\eps'$ by $\eps$.

Let $c^*$ be the constant from \Cref{thm: outer advanced path peeling}.
We set $d=\frac{2^{c^*/\eps^6+10}\cdot \log m}{\phi}$ and $\eta=\frac{1024d}{\phi}=\frac{2^{c^*/\eps^6+20}\cdot \log m}{\phi^2}$. It is immediate to verify that $\eta>256d$. Observe also that:

\[\frac{d^2}{2^{c^*/\eps^6}\cdot \log m}=\frac{1024d}{\phi}\geq \eta.  \] 

We start by considering the case where
 $\frac{1}{\alpha}<\log n$. In this case, we  apply the algorithm from \Cref{thm: outer advanced path peeling} to graph $G$, the set $\mset$ of pairs of its vertices, and parameters $d,\eta,
\alpha$ and $\eps$. Assume first that the outcome of the algorithm is a cut  $(A,B)$ with $|E_G(A,B)|\leq \frac{1024d}{\eta}\cdot \min\set{|E_G(A)|,|E_G(B)|}=\phi\cdot \min\set{|E_G(A)|,|E_G(B)|}$, and $|T\cap A|, |T\cap B|\geq \frac{\alpha k} 8$. In this case, we return the cut $(A,B)$ as the outcome of the algorithm. Otherwise, the outcome of the algorithm from \Cref{thm: outer advanced path peeling} is a routing $\pset$ in $G$ of a subset $\mset'\subseteq \mset$ containing at least $k\cdot (1-\alpha)$ pairs of vertices, such that every path in $\pset$ has length at most $d\leq \frac{2^{O(1/\eps^6)}\cdot \log n}{\phi}$, and the total congestion caused by the paths in $\pset$ is at most $\frac{4\eta}{\alpha}\leq \frac{2^{O(1/\eps^5)}\cdot \log n}{\alpha\cdot \phi^2}$. We then return this set of paths as the outcome of the algorithm.

The running time of the algorithm from  \Cref{thm: outer advanced path peeling} is bounded by:

\[O\left (m^{1+O(\eps)}(d^2+\eta d)\right )\leq O\left (m^{1+O(\eps)}\left(\frac{2^{O(1/\eps^6)}\cdot \log^2 n}{\phi^2}+ \frac{2^{O(1/\eps^6)}\cdot \log^2 n}{\phi^3} \right )\right )\leq O\left (\frac{m^{1+O(\eps)}}{\phi^3}\right ). \]

Next, we consider the case where $\frac{1}{\alpha}>\log n$. In this case, we perform at most $\log n$ iterations. At the beginning of iteration $i$, we are given a collection $\mset_i\subseteq \mset$ of pairs of vertices that have been already routed, together with their routing $\pset_i$ in graph $G$. At the beginning of the algorithm, $\mset_1=\emptyset$ and $\pset_1=\emptyset$. The iterations continue as long as $|\mset_i|<(1-\alpha)k$ holds. 

We now describe the execution of the $i$th iteration. Let $\mset'_i=\mset\setminus\mset_i$, and recall that $|\mset'_i|\geq \alpha k$ must hold. 
We  apply the algorithm from \Cref{thm: outer advanced path peeling} to graph $G$, the set $\mset'_i$ of pairs of its vertices, and parameters $d,\eta,
\alpha'=1/2$ and $\eps$. Assume first that the outcome of the algorithm is a cut  $(A,B)$ with $|E_G(A,B)|\leq \frac{1024d}{\eta}\cdot \min\set{|E_G(A)|,|E_G(B)|}=\phi\cdot \min\set{|E_G(A)|,|E_G(B)|}$, and $|T\cap A|, |T\cap B|\geq \frac{|\mset'_i|} {16}\geq \frac{\alpha k}{16}$. In this case, we terminate the algorithm and return the cut $(A,B)$ as the outcome of the algorithm. Otherwise, the outcome of the algorithm from \Cref{thm: outer advanced path peeling} is a routing $\tilde \pset_i$ in $G$ of a subset $\tilde \mset_i\subseteq \mset'_i$ containing at least $|\mset'_i|/2$ pairs of vertices, such that every path in $\tilde \pset_i$ has length at most $d\leq \frac{2^{O(1/\eps^6)}\cdot \log n}{\phi}$, and the total congestion caused by the paths in $\tilde \pset_i$ is at most $8\eta\leq \frac{2^{O(1/\eps^6)}\cdot \log n}{\phi^2}$. We then 
set $\mset_{i+1}=\mset_i\cup \tilde \mset_i$, $\pset_{i+1}=\pset_i\cup \tilde \pset_i$, and continue to the next iteration. 

Assume that the last iteration of the algorithm is iteration $i$. If the algorithm did not terminate with a cut, then $|\mset_{i+1}|\geq (1-\alpha)k$ must hold. We then return the set $\mset'=\mset_{i+1}$ of pairs of vertices, and their routing $\pset=\pset_{i+1}$. It is easy to verify that the cardinality of the set $\mset'_{i'}$ of pairs of vertices that remains to be routed decreases by at least factor $2$ in every iteration, and so the number of iterations in the algorithm is bounded by $\log k$. Since, for every iteration $i'$, the congestion caused by the set $\tilde \pset_{i'}$ of paths is at most $\frac{2^{O(1/\eps^6)}\cdot \log n}{\phi^2}$, the total congestion caused by the set $\pset$ of paths is at most $\frac{2^{O(1/\eps^6)}\cdot \log^2 n}{\phi^2}$.
The running time of a single iteration is bounded by  $O\left (\frac{m^{1+O(\eps)}}{\phi^3}\right )$ as before,  and, since the number of iterations is $O(\log n)$, the total running time of the algorithm remains bounded by $O\left (\frac{m^{1+O(\eps)}}{\phi^3}\right )$.

In the remainder of this section we prove \Cref{thm: outer advanced path peeling}. Following is a key lemma that we use in the proof.

\begin{lemma}\label{lem: inner advanced path peeling}
	There is a large enough constant $c^*$, and a deterministic algorithm, whose input consists of a connected $m$-edge graph $G$ with $|V(G)|\leq n$,  a collection $\mset=\set{(s_1,t_1),\ldots,(s_k,t_k)}$ of pairs of vertices of $G$, such that $\mset$ is a matching, and parameters $d,\eta>0$ and $\frac{2}{(\log n)^{1/24}}< \eps<\frac{1}{400}$, such that $1/\eps$ is an integer, 
	and $128d< \eta\leq \frac{d^2}{2^{c^*/\eps^6}\cdot \log m}$ holds. The algorithm computes one of the following:

	\begin{itemize}
		\item either a cut $(A,B)$ with $|E_G(A,B)|\leq \frac{64d}{\eta}\cdot \min\set{|E_G(A)|,|E_G(B)|}$, and each of $A$, $B$ contains at least $\frac{k^{1-\eps}}{16}$ vertices of set $T=\set{s_1,t_1,\ldots,s_k,t_k}$; or
		\item a routing $\pset$ in $G$ of a subset $\mset'\subseteq \mset$ containing at least $z=\frac{k^{1-22\eps}}{d}$ pairs of vertices, such that every path in $\pset$ has length at most $d$, and the total congestion caused by the paths in $\pset$ is at most $\eta$.
	\end{itemize} 

The running time of the algorithm is bounded by $O\left (m^{1+O(\eps)}(\eta+d\log n)\right )$.
\end{lemma}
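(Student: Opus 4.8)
The plan is to reduce to the Distanced Matching Game: we attempt to embed a well-connected graph on the terminal set $T=\set{s_1,t_1,\ldots,s_k,t_k}$ into $G$ along short, low-congestion paths via \Cref{cor: HSS witness}, and if we fail we read off the desired sparse cut directly; if we succeed, we run greedy path peeling in $G$, answering approximate shortest-path queries by combining the decremental APSP structure of \Cref{thm: APSP in HSS full} inside the embedded graph with an \EST in $G$. Concretely, fix a constant $c_1$ large enough for the path-length bookkeeping below, set $d_1=d/2^{c_1/\eps^6}$ and $\eta_1=\eta/2^{c_1/\eps^6}$, and apply \Cref{cor: HSS witness} to $G$, $T$, precision $\eps$, distance parameter $d_1$ and congestion parameter $\eta_1$. (We may assume $\log k$ is at least $(2/\eps)^{12}$, so that the corollary applies; this also forces $\frac{N^{\eps^4}}{\log N}\ge 2^{128/\eps^6}$ for $N=\floor{k^{\eps}}$, exactly as in the proof of \Cref{lem: APSP on expanders one phase}. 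The degenerate regime of very small $k$ — which, since the hypothesis $128d<\eta\le d^2/(2^{c^*/\eps^6}\log m)$ forces $d$ to be enormous and hence $z<1$, makes the routing part essentially trivial — is handled by running Procedure \procpathpeel directly on the singleton pairs and a ball-growing argument.) If \Cref{cor: HSS witness} returns a triple $(T_1,T_2,E')$, then $128d<\eta$ and $\eta\le d^2/(2^{c^*/\eps^6}\log m)$ give, for $c^*\ge c_1+3$, that $d_1\ge(32\log m)/\phi$ and $|E'|\le d_1|T_1|/\eta_1=\phi|T_1|/4$ with $\phi:=4d_1/\eta_1=4d/\eta<1/2$, so I would feed this distancing to \Cref{lem: distancing to sparse cut}, obtaining a cut $(A,B)$ of sparsity at most $4d/\eta\le 64d/\eta$ with $T_1\subseteq A$ and $T_2\subseteq B$, hence $|A\cap T|,|B\cap T|\ge|T_1|\ge k^{1-4\eps^3}/4\ge k^{1-\eps}/16$ — the first alternative.

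Otherwise we obtain a graph $H$ with $V(H)\subseteq T$, $|V(H)|=N^{1/\eps}\ge k/2$, an embedding $\pset$ of $H$ into $G$ of length $\le d_1$ and congestion $\le\eta_1 k^{32\eps^3}=:\tilde\eta$, and a level-$(1/\eps)$ \HSS exhibiting $H$ as well-connected with exactly the parameters \Cref{thm: APSP in HSS full} requires. I would initialize: (i) the structure $\dset(H)$ of \Cref{thm: APSP in HSS full} on $H$, maintaining a decremental $S'(H)\subseteq S(H)$ with $|S'(H)|\ge|V(H)|/2^{4/\eps}$ and answering short-path queries with paths of length $2^{O(1/\eps^6)}$ in the current $H$, tolerating $\Lambda=|V(H)|^{1-10\eps}$ deletions; (ii) an \EST $\tau$ of depth $d/4$ in $G$ augmented by a source joined to the current $S'(H)$; and (iii) for each $e\in E(G)$ the list $L(e)$ of edges of $H$ whose embedding path uses $e$, so that deleting $e$ from $G$ triggers the deletion of the $\le\tilde\eta$ corresponding edges from $H$, with updates to $\dset(H)$, $\tau$ (also when $S'(H)$ loses a vertex), and the other lists. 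The main loop repeatedly picks an unrouted pair $(s_i,t_i)$ with $s_i,t_i\in V(\tau)$, reads from $\tau$ paths $s_i\leadsto x'$, $t_i\leadsto y'$ of length $\le d/4$ with $x',y'\in S'(H)$, queries $\dset(H)$ for an $x'$–$y'$ path in the current $H$, pulls it back through $\pset$ to a path of length $\le 2^{O(1/\eps^6)}d_1$, and concatenates, getting an $s_i$–$t_i$ path in $G$ of length $\le d/2+2^{O(1/\eps^6)}d_1\le d$ by choice of $c_1$. After adding a path we increment per-edge congestion counters and delete from $G$ any edge reaching congestion $\eta$. The loop stops either once $z=k^{1-22\eps}/d$ pairs are routed — output that routing, every path of length $\le d$ and congestion $\le\eta$ by construction — or once no unrouted pair has both endpoints in $V(\tau)$.

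Two counting facts drive termination. First, were $H$ ever to hit $\Lambda$ deletions, then $\ge\Lambda/\tilde\eta$ edges were deleted from $G$, each after carrying congestion $\eta$, so $\ge\eta\Lambda/(d\tilde\eta)\ge k^{1-22\eps}/d=z$ paths were already routed; hence while fewer than $z$ pairs are routed $\dset(H)$ stays functional and the query step never fails, so the loop can only stop for lack of a routable pair. Second, in that case each of the $\ge k-z$ unrouted pairs puts an endpoint into $U:=V(G)\setminus V(\tau)=\set{v:\dist_G(v,S'(H))>d/4}$ (distances in the current $G$), so $|U\cap T|\ge k-z\ge k/2$, while $S'(H)$ is disjoint from $U$ with $|S'(H)|\ge k/2^{4/\eps+1}$; taking equal-cardinality $X\subseteq S'(H)$, $Y\subseteq U\cap T$ of size $k/2^{4/\eps+1}$ and letting $E^*$ be the (fewer than $zd/\eta$) edges deleted so far, $(X,Y,E^*)$ is a distancing in the original $G$ with $\dist_{G\setminus E^*}(X,Y)>d/4$. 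Since $d/4\ge(32\log m)/\phi$ and $|E^*|\le zd/\eta\le\phi|X|/4$ for $\phi:=64d/\eta$ (using $d\ge 2^{c^*/\eps^6}\gg 2^{4/\eps}$), \Cref{lem: distancing to sparse cut} gives a cut $(A,B)$ of sparsity $\le 64d/\eta$ with $|A\cap T|,|B\cap T|\ge|X|\ge k^{1-\eps}/16$. The running time is a routine sum: \Cref{cor: HSS witness} costs $O(m^{1+O(\eps)}(\eta+d\log n))$ (absorbing $k^{O(\eps^3)}$ and using $k\le m$), maintaining $\tau$ costs $O(md\log n)$, $\dset(H)$ costs $O(m^{1+O(\eps)})$, and the lists plus $O(z)$ routing steps cost $O(m^{1+O(\eps)})$, giving $O(m^{1+O(\eps)}(\eta+d\log n))$.

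The hard part is making the three interlocking budgets consistent: the embedding congestion $\tilde\eta$ (one $G$-deletion causing up to $\tilde\eta$ $H$-deletions), the APSP cap $\Lambda$ on $H$, and the peeling threshold $\eta$ on $G$-edges — all while the pulled-back paths stay of length $\le d$ (the composition $2\cdot d/4+2^{O(1/\eps^6)}d_1$), $|E^*|$ stays $\le\phi|X|/4$, and $d/4$ stays $\ge(32\log m)/\phi$. These are exactly what the hypotheses $128d<\eta\le d^2/(2^{c^*/\eps^6}\log m)$ buy, and checking that the chosen $d_1,\eta_1$ and depth $d/4$ meet all of them at once, together with the lower bounds on $|S'(H)|$ and $|U\cap T|$, is the main technical content. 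A secondary obstacle is the small-$k$ regime, where \Cref{cor: HSS witness} is inapplicable and one falls back on a direct \procpathpeel-plus-ball-growing argument.
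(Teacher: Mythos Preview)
Your proposal is correct and follows essentially the same approach as the paper: apply \Cref{cor: HSS witness} to embed a well-connected graph $H$ on the terminals (or extract a cut via \Cref{lem: distancing to sparse cut} if the corollary returns a distancing), then run greedy routing using the APSP structure of \Cref{thm: APSP in HSS full} on $H$ together with an \EST rooted at $S'(H)$, and if too few pairs get routed, build a distancing between $S'(H)$ and the terminals that fell out of the tree. The only cosmetic differences are that the paper picks each iteration's vertex from a maintained set $S''(H)\subseteq S'(H)$ (so one endpoint is already in $S'(H)$, requiring just one tree walk rather than two), uses tree depth $d/2$ rather than $d/4$, and in the final distancing pairs the surviving vertices of $S'(H)$ with their own mates rather than with an arbitrary equal-size subset of $U\cap T$; none of these changes the argument.
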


We defer the proof of \Cref{lem: inner advanced path peeling} to Section \ref{subsec: proof of inner routing lemma}, after we complete the proof of \Cref{thm: outer advanced path peeling} using it. Our algorithm iteratively applies the algorithm from \Cref{lem: inner advanced path peeling}, while gradually constructing both a routing of some pairs from $\mset$, and a low-conductance cut in $G$.

Let $\eta'=\frac{4\eta}{\alpha}$.
Our algorithm consists of two stages. In the first stage, we either construct the desired routing $\pset$ of a large subset $\mset'\subseteq\mset$ of pairs of vertices, or compute a collection $\sset$ of disjoint subsets of vertices of $G$ with some useful properties. In the former case, we terminate the algorithm and return the resulting routing $\pset$, while in the latter case we continue to Stage 2, in which we exploit the collection $\sset$ of vertex subsets, in order to construct the desired low-conductance cut $(A,B)$. We now describe each of the two stages in turn.

\subsection*{Stage 1: Constructing a Routing}

 Let $q'=4d\cdot k^{22\eps}\cdot \log k$. Our algorithm in Stage 1 consists of at most $q'$ iterations. At the beginning of iteration $q$, we are given a subset $\mset'_q\subseteq \mset$ of pairs of vertices with $|\mset'_q|< (1-\alpha)\cdot k$, and a routing $\pset_q$ of the matching $\mset'_q$ in graph $G$. Additionally, we are given a collection $\sset_q$ of disjoint subsets of vertices of $G$, and we denote $A_q=\bigcup_{S\in \sset_q}S$. We will ensure that the following invariants hold:
 
 \begin{properties}{I}
 	\item every path in $\pset_q$ has length at most $d$; \label{inv: short path}
 	\item $|\pset_q|<(1-\alpha)k$; \label{inv: few paths}
 	\item the paths in $\pset_q$ cause congestion at most $\eta'$ in $G$; \label{inv: low cong}
 	\item if we denote by $E'_q$ the set of all edges that lie on at least $\eta'/2$ paths in $\pset_q$, then $\left |\left(\bigcup_{S\in \sset_q}\delta_G(S)\right )\setminus E'_q\right |\leq \frac{64d}{\eta}\sum_{S\in \sset_q}\left |E_G(S)\right|$; \label{inv: sparse cuts} and 
 	\item $|A_q\cap T|< \frac{\alpha k}{2}$.\label{inv: cuts in terminals}
 	\end{properties}


At the beginning of the algorithm, we set $\pset_1=\emptyset$ and $\sset_1=\emptyset$, so $A_1=\emptyset$ holds. Clearly, all invariants hold for this setting.
We now describe the execution of the $q$th iteration, for some $q\geq 1$.
We assume that we are given a set $\mset'_q\subseteq\mset$ of pairs of vertices, a set $\pset_q$ of paths routing the pairs in $\mset'_q$ in $G$, and a collection $\sset_q$ of disjoint subsets of vertices of $G$, for which invariants \ref{inv: short path}--\ref{inv: cuts in terminals} hold.

We let $\mset_q$ be a set of pairs of vertices of $G$, containing all pairs $(s_i,t_i)\in \mset\setminus\mset'_q$ with $s_i,t_i\in V(G)\setminus A_q$. We also let $G_q$ be the graph obtained from $G$, after we delete from it all vertices of $A_q$, and all edges $e\in E(G)$, such that $e$ that belongs to at least $\eta'/2$ paths of $\pset_q$. In other words, $G_q=(G\setminus A_q)\setminus E'_q$. 
Denote $k_q=|\mset_q|$. Since, from Invariant \ref{inv: cuts in terminals}, $|A_q\cap T|< \frac{\alpha k}{2}$, while from Invariant \ref{inv: few paths}, $|\mset\setminus\mset'_q|\geq k-|\pset_q|\geq \alpha k$, we get that $k_q>\frac{\alpha k}{2}$ must hold.
Notice that graph $G_q$ may not be connected. We assume first that there is some connected component $C_q$ of graph $G_q$, and a subset $\tilde \mset_q\subseteq \mset_q$ containing at least $k_q/2$ pairs, such that all vertices participating in the pairs in $\tilde \mset_q$ lie in $C_q$.

We apply the algorithm from  \Cref{lem: inner advanced path peeling} to graph $C_q$, the set $\tilde \mset_q$ of pairs of vertices, and parameters $d,\eta,\eps$ that remain unchanged.  We now consider two cases. The first case happens if the algorithm from \Cref{lem: inner advanced path peeling} returns a cut $(X_q,Y_q)$ in graph $C_q$, with 
 $|E_{G_q}(X_q,Y_q)|\leq \frac{64d}{\eta}\cdot \min\set{|E_{G_q}(X_q)|,|E_{G_q}(Y_q)|}\leq \frac{64d}{\eta}\cdot \min\set{|E_{G}(X_q)|,|E_{G}(Y_q)|}$. Recall that we are also guaranteed that each of $X_q$, $Y_q$ contains at least $\frac{k_q^{1-\eps}}{32}$ vertices of set $T'=\set{s_i,t_i\mid (s_i,t_i)\in \tilde \mset_q}$. We say that iteration $q$ is a \emph{type-1 iteration}. We assume w.l.o.g. that $|X_q\cap T'|\leq |Y_q\cap T'|$. We set $\sset_{q+1}=\sset_q\cup \set{X_q}$, and we let $\mset'_{q+1}=\mset'_q$ and $\pset_{q+1}=\pset_q$. It is immediate to verify that Invariants \ref{inv: short path}--\ref{inv: low cong} continue to hold for $\pset_{q+1}$. 
 It is also immediate to verify that $E'_{q+1}=E'_q$. Therefore, $\left |\left(\bigcup_{S\in \sset_q}\delta_G(S)\right )\setminus E'_{q+1}\right |\leq \frac{64d}{\eta}\sum_{S\in \sset_q} \left |E_G(S)\right |$.
Let $E^*_q=\left ( \left(\bigcup_{S\in \sset_{q+1}}\delta_G(S)\right )\setminus \left(\bigcup_{S\in \sset_{q}}\delta_G(S)\right )\right )\setminus  E'_{q}$. Then $E^*_q=E_{G_q}(X_q,Y_q)$, and we are guaranteed that $|E^*_q|\leq \frac{64d}{\eta}\cdot|E_G(X_q)|$. Therefore, we get that:

\[
\begin{split}
\left |\left(\bigcup_{S\in \sset_{q+1}}\delta_G(S)\right )\setminus E'_{q+1}\right |&\leq \left |\left(\bigcup_{S\in \sset_q}\delta_G(S)\right )\setminus E'_{q+1}\right |+|E^*_q|\\
&\leq \frac{64d}{\eta} \sum_{S\in \sset_q}|E_G(S)  |+\frac{64d}{\eta}\cdot|E_G(X_q)|\\
&\leq  \frac{64d}{\eta}\sum_{S\in \sset_{q+1}} |E_G(S) |. \end{split}\]
 
 Therefore, Invariant \ref{inv: sparse cuts} continues to hold for $\sset_{q+1}$. If Invariant \ref{inv: cuts in terminals} continues to hold as well, then we continue to the next iteration. Otherwise, we terminate the first stage, and continue to the second stage.
 
 Consider now the second case, when the algorithm from \Cref{lem: inner advanced path peeling} returns 
 a routing $\rset_q$ in $C_q$ of a subset $\mset^*_q\subseteq \tilde \mset_q$ containing at least $\frac{k_q^{1-22\eps}}{2d}$ pairs of vertices, such that every path in $\rset_q$ has length at most $d$, and the total congestion caused by the paths in $\rset_q$ is at most $\eta$. In this case, we say that iteration $q$ is a \emph{type-2 iteration}. We set $\pset_{q+1}=\pset_q\cup \rset_q$ and $\mset'_{q+1}=\mset'_q\cup \mset^*_q$.
 Clearly, $\pset_{q+1}$ is a routing of the pairs in $\mset'_{q+1}$ in graph $G$, and every path in $\pset_{q+1}$ has length at most $d$. Furthermore, since the edges of $G$ that participate in at least $\eta'/2$ paths in $\pset_q$ do not lie in graph $G_q$, and since $\eta<\eta'/2$, we get that the total congestion that the paths of $\pset_{q+1}$ cause in graph $G$ is at most $\eta'$.
 We also set $\sset_{q+1}=\sset_q$. Since $E'_{q}\subseteq E'_{q+1}$, it is easy to verify that Invariant \ref{inv: sparse cuts} continues to hold for $\sset_{q+1}$, and it is immediate to verify that Invariant \ref{inv: cuts in terminals} holds as well. If $|\pset_{q+1}|\geq (1-\alpha)k$, then we terminate the algorithm and return the set $\mset'=\mset_{q+1}$ of pairs of vertices and the set $\pset=\pset_{q+1}$ of paths. From the above discussion, the paths in $\pset$ are a routing of the pairs in $\mset'$; every path in $\pset$ has length at most $d$, and the paths in $\pset$ cause congestion at most $\eta'$. Otherwise, 
 if $|\pset_{q+1}|< (1-\alpha)k$, then
 from the above discussion, all invariants hold for $\pset_{q+1}$ and $\sset_{q+1}$, and we continue to the next iteration. 
 
It remains to consider the case where for every connected component $C$ of graph $G_q$, the number of demand pairs $(s_i,t_i)\in \mset_q$ with $s_i,t_i\in V(C)$ is less than $k_q/2$. Let $\cset$ denote the set of all connected components of $G_q$, and let $T'=\set{s_i,t_i\mid (s_i,t_i)\in \mset_q}$, so $|T'|=2k_q$. For each component $C\in \cset$, let $n_C=|V(C)\cap T'|$. Then for all $C\in \cset$, $n_C\leq 1.5k_q$ must hold. 
Let $(A',B')$ be a partition of $V(G_q)$, that is computed as follows. We denote $\cset=\set{C_1,C_2,\ldots,C_r}$, where the components are indexed so that $n_{C_1}\geq n_{C_2}\geq\cdots\geq n_{C_r}$. We start with $A'=B'=\emptyset$, and consider the components of $\cset$ in the order of their indices. When component $C_i$ is processed, if $|A'\cap T'|\leq |B'\cap T'|$, then we add the vertices of $C_i$ to $A'$, and otherwise we add the vertices of $C_i$ to $B'$. Consider the partition $(A',B')$ of the vertices of $V(G_q)$ that we obtain at the end of the algorithm, and assume w.l.o.g. that $|A'\cap T'|\geq |B'\cap T'|$. It is easy to verify that $|A'\cap T'|-|B'\cap T'|\leq \max_i\set{n_{C_i}}\leq n_{C_1}\leq 1.5k_q$. Since $|T'|=2k_q$, we then get that $|A'\cap T'|,|B'\cap T'|\geq \frac{k_q} 4\geq \frac{\alpha k}8$. We obtain a cut $(A,B)$ in graph $G$ by letting $A=A'\cup A_q$ and $B=B'$. Clearly, $|A\cap T|,|B\cap T|\geq \frac{\alpha k}{8}$. Next, we show that $|E_G(A,B)|\leq \frac{1024d}{\eta}\cdot \min\set{|E_G(A)|,|E_G(B)|}$. Indeed, it is immediate to verify that $E_G(A,B)\subseteq E'_q$. Since the paths in $\pset_q$ have length at most $d$ each, and since $\eta'=\frac{4\eta}{\alpha}$, we get that $|E'_q|\leq \frac{2|\pset_q|\cdot d}{\eta'}\leq \frac{\alpha kd}{2\eta}$. On the other hand, since graph $G$ is connected, and since $|A|\geq \frac{\alpha k}{8}$, we get that $|E_G(A)|+|E_G(A,B)|\geq \frac{\alpha k}{16}$. Similarly,  $|E_G(B)|+|E_G(A,B)|\geq \frac{\alpha k}{16}$. 
Altogether, we get that: $|E'_q|\leq \frac{\alpha kd}{2\eta}\leq \frac{8d}{\eta}\cdot\min\set{|E_G(A)|,|E_G(B)|} +\frac{8d}{\eta}|E_G(A,B)|$.
Since $\eta\geq 128d$, we get that:

\[|E_G(A,B)|\leq |E'_q|\leq \frac{8d}{\eta}\cdot \min\set{|E_G(A)|,|E_G(B)|}+\frac{|E_G(A,B)|}{2}, \]

and so:

\[|E_G(A,B)|\leq \frac{1024d}{\eta}\cdot \min\set{|E_G(A)|,|E_G(B)|}. \]
 In this case, we terminate the algorithm and return the cut $(A,B)$. We say that the current iteration is a type-1 iteration.

 This completes the description of the first stage of the algorithm. 
 We now show that the number of iteration in this stage is bounded by $q'$, and bound the running time of the algorithm.

 \begin{observation}\label{obs: bound num of phases}
 	The number of iterations in the algorithm is bounded by  $q'=4d\cdot k^{22\eps}\cdot \log k$.
 \end{observation}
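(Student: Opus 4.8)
The plan is to bound the number of Stage-1 iterations by a charging argument against the set $\mset$ of demand pairs. Throughout the stage, track $k_q=|\mset_q|$, where $\mset_q$ is the set of pairs that are neither routed yet nor incident to $A_q=\bigcup_{S\in\sset_q}S$. Two monotonicity facts drive the argument. First, since $\mset'_{q+1}\supseteq\mset'_q$ and $A_{q+1}\supseteq A_q$ in both types of iterations, any pair in $\mset_{q+1}$ also lies in $\mset_q$; hence $\mset_{q+1}\subseteq\mset_q$, so $k_q$ is non-increasing, the differences $\mset_q\setminus\mset_{q+1}$ are pairwise disjoint, and $\sum_q|\mset_q\setminus\mset_{q+1}|\le|\mset_1|=k$. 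Second, combining Invariant~\ref{inv: few paths} ($|\pset_q|<(1-\alpha)k$) with Invariant~\ref{inv: cuts in terminals} ($|A_q\cap T|<\alpha k/2$) yields $k_q>\alpha k/2>0$ at every iteration.

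The core step is to show that \emph{every} Stage-1 iteration $q$ permanently removes at least $k_q^{1-22\eps}/(2d)$ pairs, i.e. $|\mset_q\setminus\mset_{q+1}|\ge k_q^{1-22\eps}/(2d)$. For a type-$2$ iteration this is immediate: the routed set $\mset^*_q\subseteq\tilde\mset_q\subseteq\mset_q$ produced by \Cref{lem: inner advanced path peeling} (applied to $C_q$, which carries $|\tilde\mset_q|\ge k_q/2$ pairs) satisfies $|\mset^*_q|\ge k_q^{1-22\eps}/(2d)$, and these pairs move into $\mset'_{q+1}$, hence leave $\mset_{q+1}$ and all later families. For the continuing variant of a type-$1$ iteration, the cut $X_q$ returned by \Cref{lem: inner advanced path peeling} contains at least $(k_q/2)^{1-\eps}/16\ge k_q^{1-\eps}/32$ vertices of $T'=\set{s_i,t_i:(s_i,t_i)\in\tilde\mset_q}$; since $T'$ is a matching, at least $k_q^{1-\eps}/64$ pairs of $\tilde\mset_q\subseteq\mset_q$ acquire an endpoint in $A_{q+1}=A_q\cup X_q$ and so drop out of $\mset_{q+1}$ and of every later family (this is where the monotonicity $A_{q'}\supseteq A_{q+1}$ is used), and $k_q^{1-\eps}/64\ge k_q^{1-22\eps}/(2d)$ since $d>128$. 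The terminating variant of a type-$1$ iteration (no connected component of $G_q$ carrying half of $\mset_q$) occurs at most once, ending Stage 1, so it contributes an additive $1$.

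Finally I would run the dyadic summation. Group the iterations by scale $\ell:=\ceil{\log k_q}$; since $1\le k_q\le k$ there are $O(\log k)$ scales, and because $k_q$ is non-increasing the iterations of a given scale form a contiguous block, with first iteration $q_0$. At every scale-$\ell$ iteration, $k_q\ge 2^{\ell-1}$, so it removes at least $2^{(\ell-1)(1-22\eps)}/(2d)$ pairs, while the total removed during and after scale $\ell$ is at most $k_{q_0}<2^\ell$; hence the number of scale-$\ell$ iterations is at most $2^{\ell+1}d\,/\,2^{(\ell-1)(1-22\eps)}=4d\cdot 2^{22\eps(\ell-1)}\le 4d\,k^{22\eps}$ (using $2^{\ell-1}\le k$). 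Summing over the $O(\log k)$ scales, adding the one terminating iteration, and absorbing constants into the definition $q'=4d\,k^{22\eps}\log k$ — with the handful of values of $k$ too small for this bound to dominate handled separately by the crude estimate that each iteration removes at least $1/(2d)$ pairs, so at most $2dk$ iterations total — gives the claimed bound.

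The main obstacle is the parameter bookkeeping in the last two steps: making the "every iteration removes $\ge k_q^{1-22\eps}/(2d)$ pairs" claim fully rigorous for type-$1$ iterations (the delicate point is that the terminals of $X_q$ come from pairs still present in $\mset_q$, and that the growth $A_{q'}\supseteq A_{q+1}$ keeps them excluded forever afterwards), and tuning the dyadic sum so that the final constant and the choice between paying an extra $\log k$ factor (from counting scales) versus an extra $1/\eps$ factor (from the geometric series in $\ell$, since $22\eps<1$) matches the stated $q'$. Everything else — the disjointness of the removed sets, the lower bound $k_q>\alpha k/2$, and the non-increasingness of $k_q$ — is routine.
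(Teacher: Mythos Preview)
Your proposal is correct and takes essentially the same approach as the paper: both partition the iterations into dyadic phases according to the size $k_q=|\mset_q|$, bound the number of iterations per phase by $O(d\cdot k^{22\eps})$, and sum over $O(\log k)$ phases. The only cosmetic difference is that the paper bounds type-1 and type-2 iterations separately within each phase (getting $2d\cdot k^{22\eps}$ from routing and $32k^{\eps}$ from cuts, summing to at most $3d\cdot k^{22\eps}$), whereas you unify both via the single lower bound $|\mset_q\setminus\mset_{q+1}|\ge k_q^{1-22\eps}/(2d)$, yielding $4d\cdot k^{22\eps}$ per scale; this costs you a slightly worse constant but is otherwise the same argument.
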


\begin{proof}
We partition the execution of the algorithm into phases. For all $i\geq 1$, the $i$th phase includes all iterations $q$, for which $\frac{k}{2^i}<|\mset_q|\leq \frac{k}{2^{i-1}}$.  Clearly, the number of phases is bounded by $\log k$. Next, we bound the number of iterations in a single phase.
	
	Consider some integer $i$, and denote by $n_i=\frac{k}{2^i}$. If iteration $q$ is a type-2 iteration that belongs to phase $i$, then $k_q=|\mset_q|\geq n_i$, and, from \Cref{lem: inner advanced path peeling}, at least $\frac{n_i^{1-22\eps}}{2d}$ pairs of vertices are routed in iteration $q$. Therefore, after $2d\cdot n_i^{22\eps}$ type-2 iterations, the number of pairs that remain to be routed must decrease by at least factor $2$. We conclude that the $i$th phase may contain at most  $2d\cdot n_i^{22\eps}\leq 2d\cdot k^{22\eps}$ type-2 iterations.
	
	If iteration $q$ is a type-2 iteration in phase $i$, then we are guaranteed that set $X_q$ contains at least $\frac{n_i^{1-\eps}}{32}$ terminals that participate in pairs in $\mset_q$. Therefore, after $32n_i^{\eps} \leq 32k^{\eps}$ type-1 iterations, the number of terminals in $V(G)\setminus A_q$ that remain to be routed (that is, the terminals of $\mset_q$), must decrease by at least factor $4$. We conclude that a single phase may contain at most $32k^{\eps}$ type-1 iterations.
	
	Overall, a single phase may contain at most $2d\cdot k^{22\eps}+32k^{\eps}\leq 
	3d\cdot k^{22\eps}$ iterations, and the total number of iterations in the algorithm is bounded by $3dk^{22\eps}\log k\leq q'$.
\end{proof}

 Since the running time of the algorithm from \Cref{lem: inner advanced path peeling} is bounded by 
$O\left (m^{1+O(\eps)}(\eta+d\log n)\right )$, the total running time of the first stage of the algorithm algorithm is bounded by  $O\left (m^{1+O(\eps)}(\eta d+d^2\log n)\right )\leq O\left (m^{1+O(\eps)}(\eta d+d^2)\right ) $,
since $\eps>\frac{2}{(\log n)^{1/24}}$, so $n^{\eps}>n^{2/(\log n)^{1/24}}>2^{(\log n)^{3/24}}>\log n$.

\subsection*{Stage 2: Computing the Cut}

Assume that the last iteration of the algorithm was iteration $q$. 
	Let $E'$ be the set of all edges $e\in E(G)$, such that $e$ belongs to at least $\eta'/2$ paths of $\pset_q$. Since the paths in $\pset_q$ have length at most $d$ each, and since $\eta'=\frac{4\eta}{\alpha}$, we get that $|E'|\leq \frac{2|\pset_q|\cdot d}{\eta'}\leq \frac{\alpha kd}{2\eta}$.
Let $G'=G\setminus E'$. In the second stage, we will compute a cut $(A,B)$ in graph $G'$, with $|E_{G'}(A,B)|\leq \frac{512d}{\eta}\cdot \min\set{|E_G(A)|,|E_G(B)|}$, so that each of $A,B$ contains at least $\frac{\alpha k}{8}$ vertices of $T$. 
Assume w.l.o.g. that $|E_G(A)|\leq |E_G(B)|$.
Since graph $G$ is connected, and since $|A|\geq \frac{\alpha k}{8}$, we get that $|E_G(A)|+|E_G(A,B)|\geq \frac{\alpha k}{16}$. Therefore, $|E'|\leq \frac{\alpha kd}{2\eta}\leq \frac{8d}{\eta}\left(|E_G(A)|+|E_G(A,B)|\right )$. We then get that:

\[\begin{split}
|E_G(A,B)|&\leq |E_{G'}(A,B)|+|E'|\\
&\leq  \frac{512d}{\eta}|E_G(A)|+ \frac{8d}{\eta}\left(|E_G(A)|+|E_G(A,B)|\right )\\& \leq \frac{520d}{\eta}|E_G(A)|+\frac{8d}{\eta}|E_G(A,B)|.
\end{split}\]

Since $\eta\geq 256d$, we get that $|E_G(A,B)|\leq \frac{1024d}{\eta}|E_G(A)|=\frac{1024d}{\eta}\cdot\min\set{|E_G(A)|,|E_G(B)|}$.
In the remainder of the algorithm, it is enough to compute a cut $(A,B)$ in graph $G'$, with $|E_{G'}(A,B)|\leq \frac{512d}{\eta}\cdot \min\set{|E_G(A)|,|E_G(B)|}$, so that each of $A,B$ contains at least $\frac{\alpha k}{8}$ vertices of $T$. 

Recall that the last iteration of the algorithm was iteration $q$, and in iteration $q$ we have computed a cut $(X_q,Y_q)$ of the connected component $C_q$ of the corresponding graph $G_q$. We denote $Y'_q=V(G_q)\setminus X_q$, so $Y_q\subseteq Y'_q$. We have also defined a set $\sset_{q+1}$ of disjoint subsets of vertices, with $X_q\in \sset_{q+1}$. Let $\sset'=\sset_{q+1}\cup \set{Y'_q}$. From the description of the algorithm, it is immediate to verify that the subsets of vertices in $\sset'$ are all disjoint, and they partition $V(G)$. Since the algorithm terminated at iteration $q$, we are guaranteed that $|A_{q+1}\cap T|\geq \frac{\alpha k}{2}$. 
Additionally, if $T'$ denotes the set of all terminals participating in the demand pairs in $\tilde \mset_q$, then at least $k_q$ terminals from $T'$ lie in $C_q$. Since we have assumed that $|X_q\cap T'|\leq |Y_q\cap T'|$, we get that $|Y'_q\cap T'|\geq \frac{k_q}2\geq \frac{\alpha k}{4}$. 


Let $E''=\bigcup_{S\in \sset_{q+1}}\delta_{G'}(S)=\left (\bigcup_{S\in \sset_{q+1}}\delta_{G}(S)\right )\setminus E'=\left (\bigcup_{S\in \sset_{q+1}}\delta_{G}(S)\right )\setminus E'_{q+1}$. Recall that we have established that Invariant \ref{inv: sparse cuts} holds for $\sset_{q+1}$, so $|E''|\leq \frac{64d}{\eta}\sum_{S\in \sset_{q+1}}|E_G(S)|$.

We now consider two cases. The first case happens if $|E_G(Y'_q)|\geq \frac{\eta}{256d}|E''|$. In this case, we consider the cut $(A,B)$ in graph $G'$, where $B=Y'_q$ and $A=V(G)\setminus Y'_q$. From the above discussion $|B\cap T|\geq \frac{\alpha k}{4}$, and $|A\cap T|=|A_{q+1}\cap T|\geq \frac{\alpha k}2$. Moreover, $|E_{G'}(A,B)|\leq |E''|\leq \frac{64d}{\eta}\sum_{S\in \sset_{q+1}}|E_G(S)|\leq \frac{64d}{\eta}|E_G(A)|$. Altogether, we get that $|E_{G'}(A,B)|\leq |E''|\leq \frac{256d}{\eta}\cdot\min\set{|E_G(A)|,|E_G(B)|}$, as required.

From now on we consider the second case, where $|E_G(Y'_q)|< \frac{\eta}{256d}|E''|$. We compute a partition $(A',B')$ of $V(G)\setminus Y'_q$ as follows. Assume w.l.o.g. that $\sset_{q+1}=\set{S_1,S_2,\ldots,S_r}$, where the sets are indexed so that $|E_{G}(S_1)|\geq |E_{G}(S_2)|\geq\cdots\geq |E_{G}(S_r)|$. We start with $A'=B'=\emptyset$, and then consider the sets $S_1,\ldots,S_r$ in this order. When set $S_i$ is considered, if $|E_{G}(A')|\leq |E_{G}(B')|$, then we add the vertices of $S_i$ to $A'$, and otherwise we add the vertices of $S_i$ to $B'$. Consider the  partition $(A',B')$ of $V(G)\setminus Y'_q$ that we obtain at the end of this algorithm. If $|A'\cap T|<|B'\cap T|$, then we set $A=A'\cup Y'_q$ and $B=B'$. Otherwise, we set $A=A'$ and $B=B'\cup Y'_q$. We now show that cut $(A,B)$ has all required properties. Assume w.l.o.g. that $|A'\cap T|<|B'\cap T|$ (the other case is symmetric). Then $|B\cap T|=|B'\cap T|\geq \frac{|A_{q+1}\cap T|}2\geq \frac{\alpha k}4$. Also, $|A\cap T|\geq |Y'_q\cap T|\geq \frac{\alpha k}2$. We next show that  $|E''|\leq \frac{512d}{\eta}\cdot\min\set{|E_G(A)|,|E_G(B)|}$ in the following claim.

\begin{claim}\label{claim: sparse cut after greedy}
\[|E''|\leq \frac{512d}{\eta}\cdot\min\set{|E_G(A)|,|E_G(B)|}.\]
\end{claim}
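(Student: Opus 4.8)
\textbf{Proof plan for Claim \ref{claim: sparse cut after greedy}.}

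The plan is to combine the bound on $|E''|$ coming from Invariant \ref{inv: sparse cuts} (which says $|E''|\leq \frac{64d}{\eta}\sum_{S\in \sset_{q+1}}|E_G(S)|$) with a lower bound on $\min\set{|E_G(A)|,|E_G(B)|}$ in terms of $\sum_{S\in \sset_{q+1}}|E_G(S)|$. Write $W=\sum_{S\in \sset_{q+1}}|E_G(S)|$. The sets in $\sset_{q+1}=\set{S_1,\ldots,S_r}$ are indexed in non-increasing order of $|E_G(S_i)|$, and the greedy balancing procedure assigns each $S_i$ to whichever of $A',B'$ currently has fewer internal edges (in $G$). First I would observe that, since every $S_i$ satisfies $|E_G(S_i)|\leq |E_G(S_1)|$, a standard greedy-balancing argument gives $\bigl||E_G(A')|-|E_G(B')|\bigr|\leq |E_G(S_1)|$, hence $\min\set{|E_G(A')|,|E_G(B')|}\geq \frac{W - |E_G(S_1)|}{2}$ (here I am using that $|E_G(A')|+|E_G(B')| \geq W$, since each $E_G(S_i)$ is a subset of internal edges of $A'$ or $B'$). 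The main point to handle is that one of the two parts, $A$ or $B$, additionally receives $Y'_q$, and $E_G(Y'_q)$ should not hurt us — it can only increase that side's internal-edge count, so $\min\set{|E_G(A)|,|E_G(B)|}\geq \min\set{|E_G(A')|,|E_G(B')|}$.

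Next I would bound $|E_G(S_1)|$ by a constant fraction of $W$. This is exactly where the case assumption of Stage 2 is used: we are in the case $|E_G(Y'_q)| < \frac{\eta}{256d}|E''|$, and, combined with $|E''|\leq \frac{64d}{\eta}W$, this gives $|E_G(Y'_q)|< W/4$. Now I need to argue $|E_G(S_1)|$ is small. Recall that $S_1$ was a set added to the collection $\sset$ during some type-1 iteration $q'$, where $S_1 = X_{q'}$ was one side of a cut of a connected component $C_{q'}$ of $G_{q'}$, and the algorithm chose $|X_{q'}\cap T'|\leq |Y_{q'}\cap T'|$; since $|E_G(X_{q'})|$ is at most $\Delta$ times the number of its vertices, and crucially the number of terminals it absorbs is controlled by the termination condition $|A_{q+1}\cap T|\geq \frac{\alpha k}{2}$ combined with the lower bound $|X_{q'}\cap T'|\geq \frac{k_{q'}^{1-\eps}}{32}$ from \Cref{lem: inner advanced path peeling}, $S_1$ cannot be too large relative to the total. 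Concretely I expect to show $|E_G(S_1)|\leq W/4$ (or some explicit constant fraction), using that $S_1$ is just one of the $r$ pieces and the pieces cover $V(G)\setminus Y'_q$, together with the bound $|E_G(Y'_q)| < W/4$ which forces $W$ to be dominated by the $\sum_i |E_G(S_i)|$ part anyway. The cleanest route is probably: $W \geq |E_G(A')|+|E_G(B')|$, and each of $A',B'$ after greedy contains at least one of $S_1$ removed, so $\min\set{|E_G(A')|,|E_G(B')|}\geq \frac{W-|E_G(S_1)|}{2}\geq \frac{W}{4}$ once $|E_G(S_1)|\leq W/2$.

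Putting the pieces together: $\min\set{|E_G(A)|,|E_G(B)|}\geq \min\set{|E_G(A')|,|E_G(B')|}\geq \frac{W}{4}$, so $W \leq 4\min\set{|E_G(A)|,|E_G(B)|}$, and therefore
\[
|E''|\leq \frac{64d}{\eta}W \leq \frac{256d}{\eta}\min\set{|E_G(A)|,|E_G(B)|}\leq \frac{512d}{\eta}\min\set{|E_G(A)|,|E_G(B)|},
\]
which is the desired inequality (with room to spare). The one genuinely delicate point — the main obstacle — is justifying that $|E_G(S_1)|$ is at most a constant fraction of $W$; this is where I need to be careful about whether a single absorbed piece $S_i$ could contain almost all the internal edges. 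I believe this is ruled out because the type-1 iterations each absorb a piece with a comparatively small number of terminals (the side with fewer terminals of the current demand set), so no single $S_i$ can dominate; if a more robust argument is needed, one can instead observe that the greedy procedure guarantees $|E_G(S_1)|$ goes to whichever side is currently smaller, so even if $S_1$ is huge it lands alone on one side and the bound $\min\geq \frac{W-|E_G(S_1)|}{2}$ is vacuous — in that degenerate case one uses $|E_G(Y'_q)|<W/4$ together with the fact that $A$ (the side containing $Y'_q$, say) then has internal edges at least $|E_G(S_1)|$ while $B$ has at least $W - |E_G(S_1)| - |E_G(Y'_q)|$, and a short case analysis on whether $|E_G(S_1)|\lessgtr W/2$ closes it. I would write up whichever of these two variants turns out to need the fewest hypotheses.
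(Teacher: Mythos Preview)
Your high-level plan --- use the greedy partition together with Invariant~\ref{inv: sparse cuts} and then control $|E_G(S_1)|$ --- is the right shape and matches the paper's approach. But there is a real gap at exactly the point you flag as ``the main obstacle,'' and a secondary technical slip earlier.

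\textbf{The minor slip.} The claim that ``a standard greedy-balancing argument gives $\bigl||E_G(A')|-|E_G(B')|\bigr|\le |E_G(S_1)|$'' is false here: when you add $S_i$ to $A'$, $|E_G(A')|$ increases by $|E_G(S_i)|+|E_G(S_i,A'_{\text{old}})|$, not by $|E_G(S_i)|$, so the usual LPT bound on the gap does not apply. Your weaker conclusion $\min\{|E_G(A')|,|E_G(B')|\}\ge (W-|E_G(S_1)|)/2$ \emph{does} hold, but you need a different justification: if $S_{j^*}$ is the last set added to the larger side, then just before that step the smaller side already had $|E_G(\cdot)|$ exceeding the other side's running value, and combining $|E_G(A')|\ge a$ with $|E_G(A')|> b-|E_G(S_{j^*})|\ge b-|E_G(S_1)|$ gives the bound.

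\textbf{The main gap.} None of your proposed routes actually bounds $|E_G(S_1)|$ by a constant fraction of $W$. The terminal-count argument is about $|S_i\cap T'|$, which says nothing about $|E_G(S_i)|$; a set can have few terminals and many internal edges. The ``case analysis on $|E_G(S_1)|\gtrless W/2$'' does not close either: in the bad case $|E_G(S_1)|>W/2$, your own lemma only gives $\min\ge (W-|E_G(S_1)|)/2$, which can be arbitrarily small relative to $W$, and adding $Y'_q$ to one side only helps that side, not the other. The inequality you wrote for $B$ (subtracting $|E_G(Y'_q)|$) has no justification. So the argument stalls precisely at needing $\sum_{i\ge 2}|E_G(S_i)|\ge \frac{\eta}{512d}|E''|$, which is the paper's \Cref{obs: lots of edges in all but one}.

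That observation is not a generic balancing fact; its proof uses the sparse-cut structure of the type-1 iterations. Writing $S_1=X_{i_{j^*}}$, the cut that produced it satisfies $|E_{G_{i_{j^*}}}(X_{i_{j^*}},Y_{i_{j^*}})|\le \frac{64d}{\eta}|E_G(Y_{i_{j^*}})|$, and crucially $Y_{i_{j^*}}\subseteq X_{i_{j^*}+1}\cup\cdots\cup X_{i_r}\cup Y'_q$. Partitioning $E''$ into pieces $E_j$ ``responsible'' for each $X_{i_j}$, one bounds the piece $E_{j^*}$ attached to $S_1$ by $\frac{64d}{\eta}|E_G(Y_{i_{j^*}})|$, and then bounds $|E_G(Y_{i_{j^*}})|$ by the later $|E_G(X_{i_j})|$'s plus $|E_G(Y'_q)|$ plus lower-order terms. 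It is exactly here that the Stage-2 case assumption $|E_G(Y'_q)|<\frac{\eta}{256d}|E''|$ is spent. Without this structural argument, nothing prevents a single $S_1$ from carrying almost all of $W$, and then the claim fails.
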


\begin{proof}
%
	Recall that we have denoted $\sset_{q+1}=\set{S_1,\ldots,S_r}$, where the sets of vertices $S_i$ are indexed in the non-increasing order of the cardinalities of the corresponding sets of edges $E_G(S_i)$. We use the following observation.
	
	\begin{observation}\label{obs: lots of edges in all but one}
		$\sum_{i=2}^r|E_G(S_i)|\geq \frac{\eta}{512d}|E''|$.
	\end{observation}

	We prove \Cref{obs: lots of edges in all but one} below, after we complete the proof of \Cref{claim: sparse cut after greedy} using it.
	Since $A'\subseteq A$ and $B'\subseteq B$, it is enough to prove that $|E''|\leq \frac{512d}{\eta}\cdot\min\set{|E_G(A')|,|E_G(B')|}$.
	Denote $M=\frac{\eta}{512d}|E''|$.
	 We consider two cases. The first case happens if $|E_G(S_1)|\geq M$. Our algorithm then adds the vertices of $S_1$ to $A'$, and it will keep adding vertices from sets $S_i$ to $B'$ until $|E_{G}(B')|\geq |E_{G}(A')|\geq M$ holds. Therefore, we are guaranteed that, at the end of the algorithm, $|E_G(A')|,|E_G(B')|\geq M=\frac{\eta}{512d}|E''|$, and so $|E''|\leq \frac{512d}{\eta}\cdot\min\set{|E_G(A')|,|E_G(B')|}$.
	 
	 Consider now the second case, where $|E_G(S_1)|<M$. Assume for contradiction that $|E''|> \frac{512d}{\eta}\cdot\min\set{|E_G(A')|,|E_G(B')|}$, and assume w.l.o.g. that $|E_G(A')|\leq |E_G(B')|$, so $|E_G(A')|<\frac{\eta}{512d}|E''|$. Recall that, from  
	Invariant \ref{inv: sparse cuts}, $\sum_{S\in \sset_{q+1}}|E_G(S)|\geq \frac{\eta}{64d}|E''|$. Since $|E_G(A')|<\frac{\eta}{512d}|E''|$, it must be the case that $\sum_{\stackrel{S\in \sset_{q+1}:}{S\subseteq B'}}|E_G(S)|> \frac{\eta}{128d}|E''|$. Let $S_i\in \sset_{q+1}$ be the set whose vertices were added to $B'$ last. 
	Then at the time when the vertices of $S_i$ were added to $B'$, $|E_G(A')|<\frac{\eta}{512d}|E''|$ held. Therefore, from our algorithm, at the same time $|E_G(B')|<\frac{\eta}{512d}|E''|$ held.
	Therefore, at the end of the algorithm, $\sum_{\stackrel{S\in \sset_{q+1}:}{S\subseteq B'}}|E_G(S)|\leq \frac{\eta}{512d}|E''|+|E_G(S_i)|$. But since $|E_G(S_i)|\leq |E_G(S_1)|< M=\frac{\eta}{512d}|E''|$, we get that 
	$\sum_{\stackrel{S\in \sset_{q+1}:}{S\subseteq B'}}|E_G(S)|< \frac{\eta}{256d}|E''|$ holds at the end of the algorithm, a contradiction.

In order to complete the proof of \Cref{claim: sparse cut after greedy}, it is now enough to prove \Cref{obs: lots of edges in all but one}.

		\begin{proofof}{\Cref{obs: lots of edges in all but one}}
		Let $1\leq i_1<i_2<\cdots<i_r=q$ be the indices of type-2 iterations.  Recall that for each $1\leq j\leq r$, in iteration $i_j$ we computed a cut $(X_{i_j},Y_{i_j})$ of the connected component $C_{i_j}$ of graph $G_{i_j}$, with  $|E_{G_{i_j}}(X_{i_j},Y_{i_j})|\leq \frac{64d}{\eta}\cdot \min\set{|E_{G}(X_{i_j})|,|E_{G}(Y_{i_j})|}$.
		
		We can then denote $\sset_{q+1}=\set{X_{i_1},X_{i_2},\ldots,X_{i_r}}$. For all $1\leq j\leq r$, we define a subset $E_j\subseteq E''$ of edges that the set $X_{i_j}$ of vertices is responsible for. We let $E_1=\delta_{G'}(X_{i_1})$, and for $1<j\leq r$, we let $E_j=\delta_{G'}(X_{i_j})\setminus (E_1\cup \cdots\cup E_{j-1})$. It is easy to verify that $(E_1,\ldots,E_r)$ is a partition of $E''$.
		
		Consider now some index $1\leq j\leq r$, and the cut $(X_{i_j},Y_{i_j})$ that our algorithm computed in iteration $i_j$. It is easy to verify that $Y_{i_j}\subseteq X_{i_j+1}\cup\cdots\cup X_{i_r}\cup Y'_{q}$.
		Moreover, $E_j=E_{G_{i_j}}(X_{i_j},Y_{i_j})\setminus E'$, and so, from the above discussion:
		
		\[|E_j|\leq  \frac{64d}{\eta}\cdot \min\set{|E_{G}(X_{i_j})|,|E_{G}(Y_{i_j})|}.\]
		
		Assume w.l.o.g. that $S_1=X_{i_{j^*}}$. We partition the edges of $E''$ into three subsets: set $\hat E_1=\bigcup_{1\leq j<j^*}E_{i_j}$; set $\hat E_2=E_{i_{j^*}}$; and set $\hat E_3=\bigcup_{j^*< j\leq r}E_{i_j}$.
		From the discussion so far, we get that:
		
		\[ |\hat E_1|\leq \frac{64d}{\eta}\cdot\sum_{1\leq j<j^*}|E_G(X_{i_j})|, \]
		
		\[ |\hat E_3|\leq \frac{64d}{\eta}\cdot\sum_{j^*< j\leq q}|E_G(X_{i_j})|, \]
		
		and
		
		\[ |\hat E_2|\leq \frac{64d}{\eta}\cdot |E_{G}(Y_{i_{j^*}})|.\]
		
		Recall that $Y_{i_{j^*}}\subseteq X_{i_{j^*}+1}\cup\cdots\cup X_{i_r}\cup Y'_{q}$. Therefore, for every edge $e\in E_{G}(Y_{i_{j^*}})$, either both endpoints of $e$ lie in one of the sets $X_{i_j^*+1},\cdots, X_{i^*_r}, Y'_{q}$ (in which case $e\in E_G(X_{i_{j^*}+1})\cup \cdots\cup E_G(X_{i_r})\cup E_G(Y'_{q})$), or the endpoints of $e$ lie in different sets of $\set{X_{i_{j^*}+1},\cdots, X_{i_r}, Y'_{q} }$. In the latter case, $e\in E_{i_{j^*}+1}\cup\cdots E_r\cup E'=\hat E_3\cup E'$ must hold.
		Altogether, we get that:

		\[
		\begin{split}
		|E_{G}(Y_{i_{j^*}})|&\leq \sum_{j^*<j\leq r}|E_G(X_{i_j})|+|E_G(Y'_q)|+|\hat E_3|+|E'|\\
		&\leq \left(1+\frac{64d}{\eta}\right )\cdot \sum_{j^*<j\leq r}|E_G(X_{i_j})|+|E_G(Y'_q)|+|E'|\\& \leq \frac{5}{4}\cdot \sum_{j^*<j\leq r}|E_G(X_{i_j})|+|E_G(Y'_q)|+|E'|.
		\end{split}
		\]
		
			(since $\eta>256d$).
			
		Recall that $|E'|\leq \frac{\alpha kd}{2\eta}\leq \frac{\alpha k}{512}$, while $|Y_q|\geq |T\cap Y_q|\geq \frac{\alpha k}{4}$. Since $Y_q$ is a set of vertices of a connected component of $G_q$, $|E_{G_q}(Y_q)|+|E_{G_q}(X_q,Y_q)|\ge \frac{|Y_q|}{2}\geq \frac{\alpha k}{8}$. Moreover, we are guaranteed that $|E_{G_q}(X_q,Y_q)|\leq \frac{64d}{\eta}\cdot |E_{G_q}(Y_q)|\leq \frac{|E_{G_q}(Y_q)|}{4}$. Therefore, $|E_G(Y'_q)|\geq |E_{G}(Y_q)|\geq |E_{G_q}(Y_q)|\geq \frac{\alpha k}{10}$, and so $|E'| \leq \frac{\alpha k}{512}\leq \frac{|E_G(Y'_q)|}{50}$. Overall, we get that:
		
		\[ |E_{G}(Y_{i_{j^*}})|\leq \frac{5}{4}\cdot \sum_{j^*<j\leq r}|E_G(X_{i_j})|+\frac {51 |E_G(Y'_q)|}{50}.\]

		Altogether:
		
			\[ |\hat E_2|\leq \frac{80d}{\eta} \sum_{j^*<j\leq r}|E_G(X_{i_j})|+\frac{80d}{\eta}|E_G(Y'_q)|,\]
		
		and:

		\[ |E''|=|\hat E_1|+|\hat E_2|+|\hat E_3|\leq \frac{144d}{\eta}\sum_{a=2}^r|E_G(S_a)|+ \frac{80d}{\eta}|E_G(Y'_{q})|.\]	
		
		Recall that we have assumed that $|E_G(Y'_q)|< \frac{\eta}{256d}|E''|$. Therefore, we get that $|E''|\leq \frac{144d}{\eta}\sum_{a=2}^r|E_G(S_a)|+\frac{|E''|}{3}$, and so $\sum_{a=2}^r|E_G(S_a)|\geq \frac{\eta}{512d}|E''|$.
	\end{proofof}
\end{proof}

Since $E_{G'}(A,B)\subseteq E''$, we get that, in the second case, $|E_{G'}(A,B)|\leq |E''|\leq \frac{512d}{\eta}\cdot\min\set{|E_G(A)|,|E_G(B)|}$ holds. To conclude, we have computed a cut $(A,B)$ in graph $G$, with $|A\cap T|,|B\cap T|\geq \frac{\alpha k}{8}$ and $|E_G(A,B)|\leq \frac{1024d}{\eta}\cdot \min\set{|E_G(A)|,|E_G(B)|}$.

Recall that the running time of the first stage of the algorithm is at most
$O\left (m^{1+O(\eps)}(d^2+\eta d)\right )$, while the running time of the second stage can be bounded by $O(m)$. Therefore, the total running time of the algorithm is $O\left (m^{1+O(\eps)}(d^2+\eta d)\right )$. In order to prove 
\Cref{thm: outer advanced path peeling} it now remains to prove \Cref{lem: inner advanced path peeling}, which we do next.


\subsection{Proof of \Cref{lem: inner advanced path peeling}}
\label{subsec: proof of inner routing lemma}
	We denote $T=\set{s_1,t_1,\ldots,s_k,t_k}$, and for convenience we denote by $\tilde k=|T|=2k$. 
	
The algorithm consists of two stages. In the first stage, we apply the algorithm from \Cref{cor: HSS witness} to graph $G$ and the set $T$ of terminals. If the algorithm returns two subsets $T_1,T_2\subseteq T$ of terminals and a set $E'$ of edges (essentially defining a distancing $(T_1,T_2,E')$), then we will use the algorithm from \Cref{lem: distancing to sparse cut} in order to convert this distancing into a low-conductance cut $(A,B)$ as required. Otherwise, the algorithm must embed a large graph $H$ into $G$, and construct a level-$(1/\eps)$ hierarchical  support structure for $H$, so that $H$ is $(\eta',\td)$-well-linked with respect to the set $S(H)$ of vertices defined by the hierarchical support structure. In this latter case, we continue to Stage 2. In Stage 2, we will initialize an \EST data structure, rooted at the set $S(H)$ of vertices in graph $G$. Additionally, we will maintain a data structure from \Cref{thm: APSP in HSS} in order to support approximate short-path queries in graph $H$, as it undergoes edge deletions. We will use these data structure to iteratively identify pairs $(s_i,t_i)\in \mset$ of vertices, that can be connected via a short path $P_i$ in $G$. The corresponding path $P_i$ is then added to the routing $\pset$ that we are constructing, and every edge of $G$ that currently appears on $\eta$ paths in $\pset$ is deleted from $G$. Edges of $H$ whose embedding paths have thus been eliminated will be deleted from $H$. The algorithm terminates when we can no longer route the remaining paths of $\pset$ via short paths. If, by that time $|\pset|\geq z$, then we return the routing in $\pset$. Otherwise, we will obtain a distancing in graph $G$, that can again be converted into a low-conductance cut.
Before we describe each of the two stages, we need to consider an easier special case where $k\leq n^{\eps}$.

\subsubsection{Special Case: $k\leq n^{\eps}$}

For all $1\leq i\leq k$, let $A_i=\set{s_i}$ and $B_i=\set{t_i}$.
We apply Procedure \procpathpeel from \Cref{lem: path peel} to graph $G$ and the collections  $A_1,B_1,\ldots,A_k,B_k$ of subsets of its vertices, and parameters $d$ and $\eta$. Let $\pset_1,\ldots,\pset_k$ be the collection of paths that the algorithm outputs. Then for all $1\leq i\leq k$, either $\pset_i=\emptyset$, or $\pset_i$ contains a single path $P_i$, connecting $s_i$ to $t_i$ in $G$. Let $\pset=\bigcup_{i=1}^k\pset_i$, and let $\mset'\subseteq \mset$ contain all pairs $(s_i,t_i)\in \mset$, such that some path in $\pset$ connects $s_i$ to $t_i$. Clearly, $\pset$ is a routing of the pairs in $\mset'$, and we are guaranteed that every path in $\pset$ has length at most $d$, and the paths in $\pset$ cause congestion at most $\eta$. 
Let $E'$ be the set of all edges of $G$ that participate in exactly $\eta$ paths in $\pset$, and let $\mset''=\mset\setminus\mset'$. From Property \ref{pp: distancing}, for every pair $(s_i,t_i)\in \mset''$ of vertices, $\dist_{G\setminus E'}(s_i,t_i)>d$. Notice also that $|E'|\leq \frac{\sum_{P\in \pset}|E(P)|}{\eta}\leq \frac{d\cdot |\pset|}{\eta}$.
Recall that the running time of the algorithm is $O(m\eta+mdk\log n)\leq O(mn^{O(\eps)}d+m\eta)\leq O\left (m^{1+O(\eps)}(\eta+d\log n)\right )$. 

We now consider two cases. The first case happens if $|\pset|\geq z$. In this case, we return the routing $\pset$ of the set $\mset'$ of pairs of vertices.

Consider now the second case, where $|\pset|<z$. In this case, $|\mset''|\geq k-z\geq k/2$. Let $T'\subseteq T$ be the set of all vertices that participate in the pairs in $\mset''$, and let $\tilde G=G\setminus E'$. Since, for every pair $(s_i,t_i)\in \mset''$ of vertices, $\dist_{\tilde G}(s_i,t_i)>d$, we get that for every vertex $x\in T'$, $|B_{\tilde G}(x,d/2)\cap T'|<|T'|/2$. 
Let $\Delta=\frac{64}{\eps}$ and $\hat d=\frac{d}{2\Delta}=\frac{d\eps}{128}$. We apply Procedure \procsep from \Cref{lem: find separation of terminals} to graph $\tilde G$, the set $T'$ of terminals, 
parameters $\Delta$, $\alpha=2/3$, and
distance parameter $\hat d$ replacing $d$. Note that the algorithm may not return a terminal  $t\in T'$ with $|B_{\tilde G}(t,\Delta\cdot \hat d)\cap T'|=|B_{\tilde G}(t, d/2)\cap T'|>\alpha |T'|$, since, as observed above, for each such terminal $t$,  $|B_{\tilde G}(t,d/2)\cap T'|<|T'|/2$. 

Therefore, the algorithm must return two subsets $T_1,T_2$ of terminals, with $|T_1|=|T_2|$, such that $|T_1|\geq \frac{|T'|^{1-64/\Delta}}{3}\geq \frac{k^{1-\eps}}{16}$. Moreover, for every pair $t\in T_1$, $t'\in T_2$ of terminals, $\dist_{\tilde G}(t,t')\geq \hat d=\frac{d\eps}{128}$.
Recall that the running time of Procedure \procsep is bounded by $O(m\cdot |V(G)|^{64/\Delta})\leq O(m^{1+O(\eps)})$.

Recall that $|E'|\leq \frac{d|\pset|}{\eta}\leq \frac{dz}{\eta}=\frac{ k^{1-22\eps}}{\eta}\leq \frac{16d|T_1|}{\eta}$. Clearly, $(T_1,T_2,E')$ is a $(\delta,\hat d)$-distancing in graph $G$, for some parameter $0<\delta<1$. Let $\phi=\frac{64d}{\eta}$, so that $|E'|\leq \frac{\phi |T_1|}{4}$. Since $\eta>128d$, we get that $\phi<1/2$.
Notice also that:

\[\hat d=\frac{d\eps}{256}\geq \frac{\eta\log m}{2d}=\frac{32 \log m}{\phi},  \]

since $\eta\leq   \frac{d^2}{2^{c^*/\eps^6}\cdot \log m}\leq \frac{d^2\eps}{256\log m}$ from the statement of \Cref{lem: inner advanced path peeling}.

We can now use the algorithm from \Cref{lem: distancing to sparse cut} 
to compute a cut $(A,B)$ in graph $G$, with $T_1\subseteq A$, $T_2\subseteq B$, such that $|E_G(A,B)|\leq \phi\cdot\min\set{|E_G(A)|,|E_G(B)|}\leq \frac{64d}{\eta}\cdot \min\set{|E_G(A)|,|E_G(B)|}$. We return the cut $(A,B)$ as the outcome of the algorithm. From the above discussion, $|A\cap T|,|B\cap T|\geq |T_1|\geq \frac{k^{1-\eps}}{16}$.

The running time of the algorithm from \Cref{lem: distancing to sparse cut} is bounded by $O(m)$, and so the total running time in Case 1 is bounded by $O\left (m^{1+O(\eps)}(\eta+d\log n)\right )$.

From now on, we assume that $k\geq n^{\eps}$. The remainder of the algorithm consists of two stages, that we now describe.

\subsubsection{Stage 1: Embedding a Well-Connected Graph}
In this stage, we will apply the algorithm from \Cref{cor: HSS witness} to graph $G$ and the set $T$ of terminals. In order to be able to do so, we need to ensure that $\eps>\frac{2}{(\log \tk)^{1/12}}$ holds.
Recall that $\tk=2k\geq 2n^{\eps}$ from our assumption. Therefore:

\[\frac{2^{12}}{\log \tk}\leq \frac{2^{12}}{\log(2n^{\eps})}\leq \frac{2^{12}}{\eps\cdot \log n}.  \]

From our assumption that $\eps>\frac{2}{(\log n)^{1/24}}$, we get that $\log n>\frac{2^{24}}{\eps^{24}}$. Therefore:

\[\frac{2^{12}}{\log \tk}\leq \frac{\eps^{23}}{2^{12}}.  \]

We conclude that:

\begin{equation}\label{eq: main bound on eps}
\eps> \frac{2}{(\log \tilde k)^{1/23}}\geq  \frac{2}{(\log \tilde k)^{1/12}}.
\end{equation}

Let $d'=\frac{8d}{2^{c^*/\eps^6}}$ and  $\eta'=\frac{8\eta}{2^{c^*/\eps^6}}$.
Since $128d<\eta\leq \frac{d^2}{2^{c^*/\eps^6}\cdot\log m}$, $d',\eta'>1$.
We apply the algorithm from \Cref{cor: HSS witness} to graph $G$, set $T$ of terminals, parameters $d'$, $\eta'$, and parameter $\eps$ that remains unchanged. 
Recall that the running time of the algorithm is:  

\[O\left (k^{1+O(\eps)}+m\cdot k^{O(\eps^3)}\cdot(\eta'+d'\log m)\right )\leq O\left (m^{1+O(\eps)}(d'+\eta')\right )\leq O\left (m^{1+O(\eps)}(d+\eta)\right ). \]

We now consider two cases.

\paragraph{Case 1.} The first case happens if the algorithm from \Cref{cor: HSS witness} returns 
	 a pair $T_1,T_2\subseteq T$ of disjoint subsets of terminals, and a set $E'$ of edges of $G$, such that $|T_1|=|T_2|$ and $|T_1|\geq \frac{\tk^{1-4\eps^3}}{4}\geq \frac{k^{1-4\eps^3}}{4}$. Recall that the algorithm also guarantees 
	 for every pair $t\in T_1,t'\in T_2$ of terminals, $\dist_{G\setminus E'}(t,t')>d'$.
	 
	 Moreover, we are guaranteed that: $|E'|\leq \frac{d'\cdot |T_1|}{\eta'}=\frac{d\cdot |T_1|}{\eta}$.
	Denote $\phi=\frac{4d}{\eta}$, so that $|E'|\leq \frac{\phi|T_1|}{4}$ holds. Since $\eta>128d$, we get that $\phi<1/2$. 
	Since  $\eta\leq \frac{d^2}{2^{c^*/\eps^6}\cdot \log m}$, we get that:
	
	\[\frac{32\log m}{\phi}=\frac{8\eta\log m}{d}\leq \frac{8d}{2^{c^*/\eps^6}}= d'.  \]

We can now apply the algorithm from \Cref{lem: distancing to sparse cut}, to compute a cut $(A,B)$ in graph $G$, with $T_1\subseteq A$, $T_2\subseteq B$, such that $|E_G(A,B)|\leq \phi\cdot \min\set{|E_G(A)|,|E_G(B)|}= \frac{4d}{\eta}\cdot \min\set{|E_G(A)|,|E_G(B)|}$.
The running time of this algorithm is bounded by $O(m)$.	
We return the cut $(A,B)$ as the output of the algorithm. Clearly, $|A\cap T|,|B\cap T|\geq |T_1|\geq 	\frac{k^{1-4\eps^3}}{4}\geq \frac{k^{1-\eps}}{16}$.

\paragraph{Case 2.} In the second case, the algorithm from \Cref{cor: HSS witness} must return  a graph $H$ with $V(H)\subseteq T$, $|V(H)|=N^{1/\eps}\geq \tk-\tk^{1-\eps/2}$, where $N=\floor{\tk^{\eps}}=\floor{(2k)^{\eps}}$,  so that the maximum vertex degree in $H$ is at most   $\tk^{32\eps^3}$.

Recall that the algorithm must also return an embedding $\pset^*$ of $H$ into $G$ via paths of length at most $d'$, that cause congestion at most $\eta'\cdot \tk^{32\eps^3}$, and a level-$(1/\eps)$ hierarchical support structure for $H$, such that $H$ is $(\tilde \eta,\td)$-well-connected with respect to the set $S(H)$ of vertices defined by the support structure, where $\tilde \eta=N^{6+256\eps}$, and $\td=2^{c/ \eps^5}$, with $c$ being the constant used in the definition of the Hierarchical Support Structure.
In this case, we continue to Stage 2 of the algorithm.

This completes the description of the first stage of the algorithm. From the above analysis, the running time of this stage is bounded by  $O\left (m\cdot n^{O(\eps)}(d+\eta)\right )$.

\subsubsection{Stage 2: Computing the Routing}

In this stage, we start with $\pset=\emptyset$, and then gradually add paths to $\pset$. We also maintain a graph $\tilde G$, where initially $\tilde G=G$. As the algorithm progresses, every edge $e\in E(\tilde G)$ that participates in $\eta$ paths of $\pset$ is deleted from $\tilde G$. Therefore, we can think of graph $\tilde G$ as a dynamic graph, with edges deleted from $\tilde G$ over time. Whenever an edge $e$ is deleted from graph $\tilde G$, for every edge $e'\in E(H)$ whose embedding path $P(e')\in \pset^*$ contains $e$, we also delete edge $e'$ from graph $H$. Therefore, graph $H$ can also be viewed as a dynamic graph. We will maintain the following data structures throughout the algorithm.

The first data structure, that we denote by $\dset(H)$, will be used in order to support approximate shortest-path queries in graph $H$. The data structure is maintained using the algorithm from \Cref{thm: APSP in HSS}.  Recall that $N=\floor{\tk^{\eps}}$, and that we have established in Inequality \ref{eq: main bound on eps} that $\frac{2}{(\log \tk)^{1/12}}<\eps<1/400$. In order to be able to use \Cref{thm: APSP in HSS}, we need to verify that $\frac{N^{\eps^4}}{\log N}\geq 2^{128/\eps^6}$ holds.
Indeed observe first that, since  $\eps>\frac{2}{(\log \tk)^{1/12}}$, we get that 

\[
\tk^{\eps^8}>\tk^{(2/(\log \tk)^{1/12})^8}= \tk^{256/(\log \tk)^{2/3}} >2^{(\log \tk)^{1/3}}>\log \tk.
\]

Additionally, from the inequality $\eps>\frac{2}{(\log \tk)^{1/12}}$, we get that $\log \tk\geq (2/\eps)^{12}$, and:

\begin{equation}\label{eq: bound on k'} 
\tk\geq 2^{(2/\eps)^{12}}
\end{equation} 

Lastly,
since $N=\floor{\tk^{\eps}}$ and $\eps<1/400$, we get that:

\[\frac{N^{\eps^4}}{\log N}\geq \frac{\tk^{\eps^6}}{\eps\cdot\log \tk}\geq \tk^{\eps^6/2}\geq 2^{128/\eps^6}. \]

We let $q=1/\eps$. We maintain the data structure from \Cref{thm: APSP in HSS} in graph $H$, with parameters $j=q$, and parameters $N$ and $\eps$ that remain unchanged. Recall that we are given a level-$q$ hierarchical support structure for $H$, such that $H$ is $(\tilde \eta,\td)$-well-connected with respect to the set $S(H)$ of vertices defined by the support structure, where $\tilde \eta=N^{6+256\eps}=N^{6+256q\eps^2}=\eta_q$, and $\td=2^{c/ \eps^5}=2^{cq/\eps^4}=\td_q$, where $\eta_q$ and $\td_q$ are the parameters that are used in the definition of the Hierarchical Support Structure.

We denote the data structure that the algorithm from \Cref{thm: APSP in HSS} maintains by $\dset(H)$. Recall that this data structure can withstand the deletion of up to $\Lambda_q=N^{q-8-300q\eps^2}=N^{q-8-300\eps}$ edge deletions from graph $H$. As long as fewer than $\Lambda_q$ edges are deleted from $H$, the algorithm maintains a decremental set $S'(H)\subseteq V(H)$ of vertices, with $|S'(H)|\geq \frac{N^q}{16^q}=\frac{|V(H)|}{16^q}\geq \frac{k}{2^{8/\eps}}$.

The algorithm supports short-path queries between vertices of $S'(H)$: given a pair $x,y\in S'(H)$ of vertices, return a path $P$ connecting $x$ to $y$ in the current graph $H$, whose length is at most $d^*_q=2^{O(q/\eps^5)}=2^{O(1/\eps^6)}$, in time $O(|E(P)|)$. 

If we denote by $m'$ the number of edges in $H$ at the beginning of the algorithm, then the total update time of the algorithm is bounded by: 
\[O\left(qN^{q+3}\cdot 2^{O(1/\eps^6)}+m'\cdot N^2\cdot 2^{O(1/\eps^6)}\right ).\]

Recall that $N\leq \tk^{\eps}\leq n^{\eps}$, and maximum vertex degree in $H$ is at most $n^{O(\eps^3)}$. Recall also that $N^q\leq n$, and $q=1/\eps$. Furthermore, as established in Inequality \ref{eq: bound on k'}, $n\geq \tk\geq 2^{(2/\eps)^{12}}$, and so 
$2^{O(1/\eps^6)}\leq n^{O(\eps)}$. 
  It is then easy to verify that the running time of the algorithm is bounded by $O\left (n^{1+O(\eps)}  \right )$.

The second data structure is, intuitively, an \EST in $G$ that is rooted at the set $S'(H)$ of vertices, and whose depth parameter is $d/2$. Specifically, we maintain a graph $G'$, that is defined as follows. Initially, we obtain graph $G'$ from $G$, by adding a source vertex $s$, that connects to every vertex $v\in S(H)$ with an edge. As the algorithm progresses and new paths are added to the set $\pset$ of paths that we construct, whenever some edge $e\in E(G)$ appears in $\eta$ paths of $\pset$, we delete $e$ from $G'$. For every edge $e'\in E(H)$, whose embedding path $P(e')\in \pset^*$ contains edge $e$, we also delete $e'$ from $H$, and update data structure $\dset(H)$ accordingly. If, as the result of this update, some vertex $x$ is deleted from set $S'(H)$, then we also delete edge $(s,x)$ from graph $G'$.
We maintain an \EST data structure in graph $G'$, rooted at vertex $s$, with depth bound $d/2+1$. We denote the data structure, and the corresponding tree, by $\tau$. 
The total update time that is needed in order to maintain the \EST data structure $\tau$ is bounded by $O(md\log n)$. 

Lastly, for every edge $e\in E(G)$, we maintain a list $L(e)$ of all edges $e'\in E(H)$, such that the embedding path $P(e')\in \pset^*$ contains $e$. We also maintain a pointer from $e$ to every edge in $L(e)$ and back. Recall that the embedding $\pset^*$ causes congestion at most $\eta'\cdot \tk^{32\eps^3}$, so the length of each such list is bounded by $\eta'\cdot \tk^{32\eps^3}$. Moreover, the deletion of an edge $e\in E(G)$ from graph $G'$ may trigger the deletion of at most $\eta'\cdot \tk^{32\eps^3}$ edges from graph $H$. We will ensure that $|\pset|\leq z$ holds throughout the algorithm, and that every path in $\pset$ has length at most $d$. Therefore, if we denote by $E'\subseteq E(G)$ the collection of all edges that participate in $\eta$ paths in $\pset$, the we are guaranteed that throughout the algorithm:

\[|E'|\leq \frac{\sum_{P\in \pset}|E(P)|}{\eta}\leq \frac{|\pset|\cdot d}{\eta}\leq \frac{zd}{\eta}.  \]
 
Therefore, the total number of edges that may be deleted from graph $H$ over the course of the algorithm is bounded by:

\[\begin{split} 
\frac{zd}{\eta}\cdot \eta'\cdot \tk^{32\eps^3}&\leq
\tk^{1-22\eps}\cdot \tk^{32\eps^3}
\\&\leq \tk^{1-21\eps}\\
&\leq \frac{N^q\cdot 2^q}{N^{21\eps q}}\\
&<N^{q-9}<\Lambda_q\end{split}\]

(For the first inequality, we have used the fact that $z=\frac{k^{1-22\eps}}{d}$, $k\leq \tk$,  and $\eta'=\frac{8\eta}{2^{c^*/\eps^6}}\leq \eta$. For the third inequality we used the fact that $N=\floor{\tk^{\eps}}$, and $q=1/\eps$, so $N^q\leq \tk\leq N^q\cdot 2^q$ holds. The fourth inequality follows since $2^q=2^{1/\eps}<\left(\frac{\tk^{\eps}}{2}\right)^9\leq N^9$ from Inequality \ref{eq: bound on k'}.)

If $(s_i,t_i)$ is a pair of vertices in $\mset$, then we say that $s_i$ is a \emph{mate} of $t_i$, and $t_i$ is a mate of $s_i$. Throughout the algorithm, we will also maintain a subset $S''(H)\subseteq S'(H)$ containing all vertices $x\in S'(H)$, such that no path in $\pset$ has $x$ as its endpoint, and the mate of $x$ still lies in the tree $\tau$.
We also maintain, for every edge $e\in E(G)$, a counter $n(e)$, counting the number of paths in $\pset$ that contain $e$.

We are now ready to describe our algorithm. The algorithm consists of at most $z$ iterations, and they are performed as long as $|\pset|<z$ and $S''(H)\neq \emptyset$ hold. 

In order to perform a single iteration, we let $x$ be any vertex in set $S''(H)$. Assume w.l.o.g. that $x=s_i$, and that its mate is $t_i$. Since $x\in S''(H)$, vertex $t_i$ currently lies in the tree $\tau$. Using the tree, we can compute a path $Q$ in graph $G'$, that connects $t_i$ to some vertex $y\in S(H')$, so that the length of the path is bounded by $d/2$, and the path does not contain any vertices of $S'(H)\setminus \set{y}$. This can be done in time $O(|E(Q)|)$. Next, using data structure $\dset(H)$, we compute a path $\tilde Q'$ connecting $y$ to $s_i$ in graph $H$, such that the length of the path is bounded by $2^{O(1/\eps^6)}$. This can be done in time $O(|E(\tilde Q')|)$. Lastly, by replacing every edge $e'$ on path $\tilde Q'$ with its embedding path $P(e')\in \pset^*$, we obtain a path $Q'$ in graph $G$, connecting $y$ to $s_i$, whose length is bounded by:

\[2^{O(1/\eps^6)}\cdot d'\leq d/2,\]
 
 since $d'=\frac{8d}{2^{c^*/\eps^6}}$, and we can assume that $c^*$ is a large enough constant.
 By combining the paths $Q$ and $Q'$, we obtain a path $P$, connecting $s_i$ to $t_i$ in graph $G$, whose length is at most $d$. If path $P$ is not simple, then we convert it into a simple path, in time $O(d)$. This can be done, for example, by traversing the vertices of $P$ in the order of their appearance on the path, and marking every vertex that has been traversed in an array of length $n$. Whenever we attempt to mark a vertex $v$ that was already marked before, we recognize that we closed a simple cycle $C\subseteq P$. We can then retrace this cycle and un-mark all its vertices except for $v$. We  delete all vertices of $C\setminus \set{v}$ from $P$, and continue the traversal of $P$ starting from $v$. Since every vertex on $P$ may be traversed at most twice (once when we visit it for the first time, and the second time when we remove a cycle on which that vertex lies), this algorithm takes time $O(d)$, assuming that we are provided an empty array of length $n$, that can be used in order to mark and un-mark the vertices of $P$. At the end of this procedure, we un-mark every vertex of $P$, so the array can be reused in the next iteration. We denote the resulting simple path $P$ by $P_i$, and we add $P_i$ to the set $\pset$ of paths that we maintain.
 
 Next, we consider every edge $e\in E(P_i)$ one by one. For each such edge $e$, we increase the counter $n(e)$ by $1$. If $n(e)=\eta$ holds, then we delete $e$ from graph $G'$, and update data structure $\tau$ accordingly. Additionally, for every edge $e'\in L(e)$, we delete $e'$ from graph $H$, updating the data structure $\dset(H)$ accordingly, and we update all lists $L(e'')$ of edges $e''\in E(G)$ with $e'\in L(e'')$, by deleting $e'$ from  $L(e'')$.
 
 As the result of these updates to data structure $\dset(H)$, we may have deleted some vertices from set $S'(H)$. Let $Y$ be the set of all such vertices. For every vertex $v\in S'(H)$, we delete the edge $(s,v)$ from graph $G'$, and update the data structure $\tau$ accordingly. We also delete $v$ from set $S''(H)$ if it belongs to this set. 
 
 Lastly, whenever a vertex $u$ leaves tree $\tau$, if either $u$ or its mate $u'$ lie in $S''(H)$, then we delete both vertices from $S''(H)$. We also delete $s_i$ and $t_i$ from $S''(H)$.
 This completes the description of an iteration. Besides the time that is needed in order to maintain data structures $\dset(H),\tau, S''(H)$, and $\set{L(e),n(e)}_{e\in E(G)}$, the additional time that is needed in order to execute an iteration is bounded by $O(d)$.

We now consider two cases. In the first case, the algorithm terminates with $|\pset|\geq z$. In this case, we return the set $\pset$ of paths as the algorithm's outcome, together with the collection $\mset'\subseteq \mset$ of pairs of vertices, containing every pair $(s_i,t_i)\in \mset$ for which some path connecting $s_i$ to $t_i$ lies in $\pset$. It is easy to verify that the set $\pset$ of paths has all required properties.

Consider now the second case, when $|\pset|<z=\frac{k^{1-22\eps}}{d}<\frac{k}{2^{16/\eps}}$ (from Inequality \ref{eq: bound on k'}). 
In this case, the algorithm must have terminated because $S''(H)=\emptyset$ holds. Since the number of vertices that serve as endpoints of paths in $\pset$ is bounded by $2z<\frac{2k}{2^{16/\eps}}$, while we are guaranteed that
$|S'(H)|\geq \frac{k}{2^{8/\eps}}$, there must be a set $X\subseteq S'(H)$ of at least $\frac{|S'(H)|}{2}\geq \frac{k}{2^{16/\eps}}$ vertices, such that, for every vertex $x\in X$, no path in $\pset$ has $x$ as its endpoint, and the mate of $x$ does not belong to the tree $\tau$. 
Therefore, if we denote by $X'$ the set of vertices that are mates of the vertices of $X$, then, in the current graph $G'$, $\dist_{G'}(X,X')>d/2$. 
Clearly, $X'\cap X=\emptyset$, and $|X'|=|X|$.
Let $E'$ be the set of all edges of graph $G$ that belong to $\eta$ paths of $\pset$. Then $|E'|\leq \frac{|\pset|\cdot d}{\eta}\leq \frac{d}{\eta}\cdot |X|$.
Moreover, $\dist_{G\setminus E'}(X,X')>d/2$. Therefore, $(X,X',E')$ is a $(\delta,d/2)$-distancing in graph $G$, for some parameter $0<\delta<1$.

We denote $\phi=\frac{4d}{\eta}$, so that $|E'|\leq \frac{\phi|X|} 4$ holds.
Notice that:
$\frac{32\log m}{\phi}=\frac{8\eta\log m}{d}<\frac{d}{2}$, since $\eta\leq \frac{d^2}{2^{c^*/\eps^6}\cdot \log m}$ from the statement of \Cref{lem: inner advanced path peeling}.
We can now apply the algorithm from \Cref{lem: distancing to sparse cut}, to compute a cut $(A,B)$ in graph $G$, with $X\subseteq A$, $X'\subseteq B$, and $|E_G(A,B)|\leq \phi\cdot \min\set{|E_G(A)|,|E_G(B)|}=\frac{4d}{\eta}\cdot \min\set{|E_G(A)|,|E_G(B)|} $. Recall that $|X|,|X'|\geq \frac{k}{2^{16/\eps}}\geq \frac{k^{1-\eps}}{16}$, since $2^{16/\eps}\leq 16k^{\eps}$ from Inequality \ref{eq: bound on k'}. We then return the cut $(A,B)$ as the output of the algorithm.
The running time of the algorithm from \Cref{lem: distancing to sparse cut} is $O(m)$.

It now remains to analyze the running time of the algorithm. 
The  running time of Stage 1 is bounded by $O\left (m^{1+O(\eps)}(d+\eta)\right )$, as shown already.

In order to analyze the running time of Stage 2, recall that the total update time for maintaining data structure $\dset(H)$, as established already, is bounded by $O(n^{1+O(\eps)})$, and the total update time for maintaining data structure $\tau$ is bounded by $O(md\log n)$. Since the paths in $\pset^*$ cause congestion at most $\eta'\cdot k^{O(\eps^3)}$, maintaining the lists $\set{L(e)}_{e\in E(G)}$ takes total time $O(m\cdot \eta'\cdot k^{O(\eps^3)})\leq O(m\eta k^{O(\eps^3)})$.
The time required for additional computation in every iteration (that is, computing the path $P$, converting it into a simple path, and reducing the counters $n(e)$ for all edges $e\in E(P)$) is bounded by $O(d)$. The number of iterations is bounded by $O(k)$.
The running time required for the remaining calculations that the algorithm performs (such as, for example, applying the algorithm from  \Cref{lem: distancing to sparse cut} to compute a low-conductance cut at the end of the algorithm if $|\pset|<z$) is subsumed by the above running times. 
The total running time is then bounded by:

\[O\left (m^{1+O(\eps)}(d+\eta)\right )+O(n^{1+O(\eps)})+O(md\log n)+O(m\cdot \eta\cdot k^{O(\eps^3)})\leq O\left (m^{1+O(\eps)}(d+\eta)\right ). \]

\section{An Algorithm for the Cut Player in the Cut-Matching Game -- Proof of \Cref{thm: new cut player}}
\label{sec: cut player}

In this section we provide an algorithm for the Cut Player in the Cut-Matching Game, proving \Cref{thm: new cut player}.
The algorithm consists of a number of phases. At the beginning of phase $q$, we are given a partition $(X_q,Y_q)$ of $V(G)$, with $|X_q|\geq \frac{3n}4$, and $|E_{G}(X_q,Y_q)|\leq \frac{|Y_q|}{100}$. At the beginning of the algorithm, we use the partition $(X_1,Y_1)$ of $V(G)$, with $X_1=V(G)$ and $Y_1=\emptyset$. We now describe the execution of Phase $q$.

Let $G_q=G[X_q]$. Assume first that, for every connected component $C$ of graph $G_q$, $|V(C)|\leq 5n/8$. Let $C$ be the largest connected component of $G_q$. If $|V(C)|\geq n/4$, then we return a cut $(A,B)$ in graph $G$ with $A=V(C)$ and $B=V(G)\setminus V(C)$. Clearly, $|A|\ge n/4$, and, since $|A|\leq 5n/8$, we get that $|B|\geq n/4$ as well. Moreover, $E_G(A,B)\subseteq E_G(X_q,Y_q)$, and so $|E_G(A,B)|\leq \frac{|Y_q|}{100}\leq \frac{n}{100}$. Otherwise, if $|V(C)|<n/4$, then we compute a partition $(A,B)$ of $V(G)$ as follows. We start with $A=\emptyset$ and $B=Y_q$, and then process every connected component $C'$ of $G_q$ one by one. When a component $C'$ is processed, if $|A|\leq |B|$ holds, then we add the vertices of $C'$ to $A$, and otherwise we add them to $B$. Since every connected component of $G_q$ contains at most $n/4$ vertices, and $|Y_q|\leq n/4$, it is easy to verify that at the end of the algorithm, $|A|,|B|\ge n/4$ holds. As before, $E_G(A,B)\subseteq E_G(X_q,Y_q)$, and so $|E_G(A,B)|\leq \frac{|Y_q|}{100}\leq \frac{n}{100}$. We terminate the algorithm and return the cut $(A,B)$.

We assume from now on that there is a connected component $G'_q$ of graph $G_q$ with $|V(G'_q)|\geq 5n/8$. We denote $n_q=|V(G'_q)|$.  We use algorithm \constructexpander from 
\Cref{thm:explicit expander} to construct a graph $H_q$, with $|V(H_q)|=n_q$, such that $H_q$ is an $\alpha_0$-expander, and maximum vertex degree in $H_q$ is at most $9$. The running time of this algorithm is $O(n_q)\leq O(n)$. It will be convenient for us to identify the vertices of $H_q$ with the vertices of $G'_q$. In other words, we assume that $V(H_q)=V(G'_q)$, by arbitrarily mapping every vertex of $H_q$ to a distinct vertex of $G'_q$.

Using a simple standard greedy algorithm, we compute, in time $O(n)$, a partition $\mset_1,\mset_2,\ldots,\mset_{19}$ of the set $E(H_q)$ of edges, so that, for each $1\leq i\leq 19$, $\mset_i$ is a matching. We use a parameter $\rho=\frac{n}{2^{\hat c/\eps^6}\cdot \Delta^3\cdot \log^2n}$, where $\hat c$ is a large enough constant. 

Next, we perform $19$ iterations. 
For $1\leq i\leq 19$, in the $i$th iteration, we use the algorithm for advanced path peeling from \Cref{thm: main main advanced path peeling} in order to embed the edges of $\mset_i$ into graph $G'_q$ with low congestion. 
If we successfully embed all but at most $\rho$ edges of $\mset_i$, then we partition the set $\mset_i$ of edges into two subsets: set $\mset'_i$ containing all edges that we managed to embed, and set $F_i$ containing all remaining edges. We say that the $i$th iteration ended with a routing, and continue to the next iteration. If we failed to embed a large enough subset of $\mset_i$ into $G'_q$, then we will compute a sparse cut in graph $G'_q$, that will allow us to update the partition $(A_q,B_q)$ of $V(G)$. We then say that the $i$th iteration ended with a cut, and terminate the current phase. 
If every one of the $19$ iterations ended with a routing, then, by  letting $G''_q$ be the graph that is obtained from $G'_q$ by adding to it a set $\bigcup_{i=1}^{19}F_i$ of fake edges, we obtain an embedding of $H_q$ into $G''_q$ with low congestion. We can then use Algorithm \extractexpander from \Cref{lem: embedding expander w fake edges gives expander}, to extract a large enough expander graph $G^*\subseteq G'_q$.  
We now describe the execution of a single iteration.

Consider an index $1\leq i\leq 19$. If $|\mset_i|\leq \rho$, then we let $\mset'_i=\emptyset$, $\pset_i=\emptyset$, $F_i=\mset_i$, and continue to the next iteration. From now on we assume that $|\mset_i|>\rho$.

We apply the algorithm from \Cref{thm: main main advanced path peeling} to graph $G'_q$ and the set $\mset_i$ of pairs of its vertices, with parameters $\eps$, $\phi=\frac{1}{100\Delta}$, and $\alpha=\frac{\rho}{2|\mset_i|}$, so $0<\alpha\leq \half$ holds. Recall that  $\frac{2}{(\log n)^{1/25}}< \eps<\frac{1}{400}$, and so $\log n>\left(\frac{2}{\eps}\right )^{25}$. Since $n_q>n/2$, we get that $\log(n_q)\geq \log n-1\geq \left(\frac{2}{\eps}\right )^{24}$. Therefore, the condition of \Cref{thm: main main advanced path peeling} that $\frac{2}{(\log n)^{1/24}}< \eps<\frac{1}{400}$ holds. We denote by $T_i$ the set of all vertices of $G'_q$ that serve as endpoints of the edges of $\mset_i$, so $|T_i|=2|\mset_i|\geq 2\rho$.

We now consider two cases. In the first case, the algorithm from \Cref{thm: main main advanced path peeling} returns a cut $(A_q,B_q)$ in $G'_q$, with $|E_{G'_q}(A_q,B_q)|\leq \phi\cdot \min\set{|E_G(A_q)|,|E_G(B_q)|}$, such that $|A_q\cap T_i|,|B_q\cap T_i|\geq \frac{\alpha\cdot |\mset_i|}{16}\geq \frac{\rho}{32}$. In this case we say that iteration $i$ ended with a cut. Since maximum vertex degree in $G$ is at most $\Delta$, we get that $|E_G(A_q)|\leq \Delta\cdot |A_q|$ and $|E_G(B_q)|\leq \Delta\cdot |B_q|$, and since $\phi=\frac{1}{100\Delta}$, we get that  $|E_{G'_q}(A_q,B_q)|\leq \frac{1}{100\Delta} \min\set{|E_G(A_q)|,|E_G(B_q)|}\leq \frac{1}{100}\cdot\min\set{|A_q|,|B_q|}$.  Assume w.l.o.g. that $|A_q|\geq |B_q|$. Since we have assumed that $|V(G_q')|\geq \frac{5n}{8}$, we get that $|A_q|\geq \frac{5n}{16}\geq \frac{n}{4}$. We set $X_{q+1}=A_q$ and $Y_{q+1}=V(G)\setminus A_q$. Clearly, $(X_{q+1},Y_{q+1})$ is a partition of $V(G)$, and, from our discussion, $|X_{q+1}|\geq \frac{n}{4}$. Moreover, we can think of the cut $(X_{q+1},Y_{q+1})$ as obtained from cut $(X_q,Y_q)$, by moving all vertices of $V(G_q)\setminus A_q$ from $X_q$ to $Y_q$. Therefore, $|E_G(X_{q+1},Y_{q+1})|\leq |E_G(X_q,Y_q)|+|E_G(A_q,B_q)|\leq \frac{1}{100}|Y_q|+\frac{1}{100}|B_q|\leq \frac{1}{100}|Y_{q+1}|$. If $|X_{q+1}|\geq \frac{3n}{4}$, then we terminate the current phase and continue to Phase $(q+1)$. Otherwise, we return the cut $(A,B)=(X_{q+1},Y_{q+1})$. In the latter case, we are guaranteed that $|A|,|B|\geq \frac{n}{4}$, and $|E_G(A,B)|\leq \frac{1}{100}|Y_{q+1}|\leq \frac{n}{100}$.

In the second case, the algorithm from \Cref{thm: main main advanced path peeling} 
computes a routing $\pset_i$ in $G'_q$ of a subset $\mset'_i\subseteq \mset_i$ containing at least $(1-\alpha)|\mset_i|=|\mset_i|-\frac{\rho}2$ pairs of vertices, such that the total congestion caused by the paths in $\pset_i$ is at most $\frac{2^{O(1/\eps^6)}\cdot \log^2 n}{\phi^2}\leq 2^{O(1/\eps^6)}\cdot \Delta^2\cdot \log^2 n$. In this case, we say that iteration $i$ ended with a routing. For every edge $e\in \mset'_i$, we let $P(e)\in \pset_i$ be the embedding path of edge $e$. We set $F_i=\mset_i\setminus\mset'_i$, so $|F_i|\leq \frac{\rho}{2}$, and we continue to the next iteration. 
This concludes the description of the $i$th iteration. We now complete the description of Phase $q$.

If any iteration in Phase $q$ ended with a cut, then the phase is terminated as described above. We assume therefore from now on that every iteration in Phase $q$ ended with a routing. Let $F=\bigcup_{i=1}^{19}F_i$. Recall that, for all $1\leq i\leq 19$, $|F_i|\leq \rho$, so $|F|\leq 19\rho$. Consider the graph $G''_q=G'_q\cup F$, where we treat the edges of $F$ as fake edges. For every fake edge $e\in F$, we let $P(e)=\set{e}$ be a path that embeds the edge $e$ into itself in graph $G''_q$. Note that the maximum vertex degree in $G''_q$ is at most $\Delta+19\leq 19\Delta$. We have also now obtained an embedding $\pset''=\set{P(e)\mid e\in E(H_q)}$ of $H_q$ into $G''_q$, such that the paths in $\pset''$ cause congestion $\eta$, where $\eta\leq  2^{O(1/\eps^6)}\cdot \Delta^2\cdot \log^2 n$.

For convenience, we denote the maximum vertex degree of $G''_q$ by $\Delta_G\leq 19\Delta$, the maximum vertex degree of $H_q$ by $\Delta_H\leq 9$, and $\psi=\alpha_0$.
Notice that:

\[\frac{\psi\cdot n_q}{32\Delta_G\eta}\geq \frac{\alpha_0\cdot n}{ 2^{O(1/\eps^6)}\cdot \Delta^3\cdot  \log^2 n} \geq \frac{ n}{ 2^{O(1/\eps^6)}\cdot\Delta^3\cdot \log^2 n} \geq 20\rho\geq |F|,  \]

since
$\rho=\frac{n}{2^{\hat c/\eps^6}\cdot \Delta^3\cdot \log^2n}$, and $\hat c$ is a large enough constant. We can then use the algorithm \extractexpander from \Cref{lem: embedding expander w fake edges gives expander} to compute a subgraph $G^*\subseteq G'_q$, such that $G^*$ is a $\psi'$-expander, for $\psi'\geq\frac{\psi}{6\Delta_G\cdot \eta}\geq \frac{1}{2^{O(1/\eps^6)}\cdot \Delta^3\cdot \log^2 n }$. Moreover, since $|F|\leq \frac{\psi\cdot n_q}{32\Delta_G\eta}$, we get that $\frac{4|F|\eta}{\psi}\leq \frac{n_q}{8\Delta_G}$, and so $|V(G^*)|\geq n_q-\frac{4|F|\eta}{\psi}\geq \frac{15n_q}{16}$. Since $n_q\geq\frac{5n}{8}$, we get that $|V(G^*)|\geq \frac{n}{2}$. We return the set $S=V(G^*)$ of vertices and terminate the algorithm.

We now bound the running time of a single phase. A phase has at most $19$ iterations, and in every iteration we use the algorithm from \Cref{thm: main main advanced path peeling}, whose running time is bounded by $O\left (\frac{m^{1+O(\eps)}}{\phi^3}\right )\leq O\left (m^{1+O(\eps)}\cdot \Delta^3\right )$. Additionally, the running time of the algorithm from \Cref{lem: embedding expander w fake edges gives expander} is bounded by 
$\tilde O(|E(G)|\Delta_G\cdot\cong/\psi)\leq \tilde O\left (n\cdot \Delta^4\cdot 2^{O(1/\eps^6)}\right )\leq O\left (n^{1+O(\eps)}\cdot \Delta^4\right )$ (since $\eps>\frac{2}{(\log n)^{1/25}}$, so $n\geq 2^{(2/\eps)^{25}}$). Overall, the running time of a single phase is bounded by $O\left (m^{1+O(\eps)}\cdot \Delta^4\right )$.

Next, we bound the number of phases. Note that in every phase, the cardinality of the set $Y_q$ of vertices grows by at least $\frac{\rho}{32}\geq \frac{n}{2^{\Theta(1/\eps^6)}\cdot \Delta^3\cdot \log^2n}$. Therefore, the number of phases is bounded by $2^{O(1/\eps^6)}\cdot \Delta^3\cdot \log^2n$, and so the total running time of the algorithm is bounded by:

\[O\left (m^{1+O(\eps)}\cdot \Delta^7\cdot 2^{O(1/\eps^6)}\right )\leq O\left(m^{1+O(\eps)}\cdot \Delta^7\right ). \]

\section{Further Applications}
\label{sec: applications}

In this section we provide our improved deterministic approximation algorithms for \sparsest, \lowcond, \mbc, and Most-Balanced Sparse Cut. We also provide a new algorithm for expander decompositions. Several (but not all) of these results are obtained in the same manner as their weaker counterparts from \cite{detbalanced}, by plugging in our stronger algorithm for the Cut Player in the \CMG from \Cref{thm: new cut player} instead of its weaker analogue from \cite{detbalanced}. Some of the proofs in this section are therefore essentially identical to the proofs from \cite{detbalanced}, and are only provided here for completeness. We point out explicitly when this is the case. We start by introducing some technical tools that will be useful for us.

\subsection{Main Technical Tools}
\label{subsec: technical tools}

In this subsection we introduce two main technical tools that will be used in order to obtain improved algorithms for \mbc, \sparsest, \lowcond, and Expander Decomposition. Both these tools - degree reduction and a faster algorithm for basic path peeling - appeared in \cite{detbalanced}, and we do not make any changes to them. 

\subsubsection{Degree Reduction}
\label{subsec:constant degree}

Some of our algorithms are easier to describe, and provide better guarantees, when the maximum vertex degree of the input graph is low. However, in general, an input graph may have an arbitrarily large maximum vertex degree. We describe here a standard algorithm for transforming a general graph into a low-degree graph, by replacing every vertex of the input graph with an expander of appropriate size. The algorithm is identical to that from \cite{detbalanced}, and similar algorithms have been used in the past extensively.

We now turn to describe a deterministic algorithm, that we call \reducedegree. The algorithm is given as input an arbitrary graph $G=(V,E)$, and transforms it into a bounded-degree graph $\hat G$. Throughout, we denote $|V|=n$ and $|E|=m$.  
For convenience, we denote $V=\set{v_1,\ldots,v_n}$. For every vertex $v_i\in V$, we denote by $d(v_i)$ the degree of $v_i$ in $G$, and we let $\set{e_1(v_i),\ldots,e_{d(v_i)}(v_i)}$ be the set of edges incident to $v$, indexed in an arbitrary order. For every vertex $v_i\in V$, we use Algorithm \constructexpander from
\ref{thm:explicit expander} to construct a graph $H_i$, whose vertex set $V_i=\set{u_1(v_i),\ldots,u_{d(v_i)}(v_i)}$ contains $d(v_i)$ vertices, such that $H_i$ is an $\alpha_0$-expander, and the maximum vertex degree in $H_i$ is at most $9$. Recall that the running time of the algorithm for constructing $H_i$ is bounded by $O(d(v_i))$. 

In order to obtain the final graph $\hat G$, we start with a disjoint union of all graphs in $\set{H_i\mid v_i\in V}$. All edges lying in such graphs $H_i$ are called \emph{type-1 edges}. 
Additionally, we add to $\hat G$ a collection of type-2 edges, defined as follows. Consider any edge $e=(v,v')\in E$, and assume that $e=e_j(v)=e_{j'}(v')$ (that is, $e$ is the $j$th edge incident to $v$ and it is the $j'$th edge incident to $v'$). We then let $\hat e$ be the edge $(u_j(v),u_{j'}(v'))$. For every edge $e\in E$, we add the corresponding new edge $\hat e$ to graph $\hat G$ as a type-2 edge. This concludes the construction of the graph $\hat G$, that we denote by $\hat G=(\hat V,\hat E)$. Note that the maximum vertex degree in $\hat G$ is at most $10$, and $|\hat V|=2m$. Moreover, the running time of the algorithm for constructing the graph $\hat G$ is $O(m)$.

We say that a set $S\subseteq \hat V$ of vertices of $\hat G$ is \emph{canonical} if, for every vertex $v_i\in V$, either $V_i\subseteq S$, or $V_i\cap S=\emptyset$. Similarly, we say that a cut $(X,Y)$ in a subgraph of $\hat G$ is canonical, if each of $X,Y$ is a canonical subset of $\hat V$.
The following lemma, that was proved in \cite{detbalanced} allows us to convert an arbitrary sparse balanced cuts in a subgraph of $\hat G$ into a canonical one.

\begin{lemma}[Lemma 5.4 from \cite{detbalanced}]\label{lem: degree reduction balanced cut case}
	Let $\alpha_0 >0$ be the constant from \Cref{thm:explicit expander}.
	There is a deterministic algorithm, that we call \makecanonical, that, given a subgraph $\hat G'\subseteq \hat G$, where $V(\hat G')$ is a canonical vertex set, together with a cut $( A, B)$ in $\hat G'$, computes a canonical cut $(A',B')$ in $\hat G'$, such that $|A'|\geq |A|/2$, $|B'|\geq |B|/2$, and moreover, if $|E_{\hat G}(A,B)|\leq \frac{\alpha_0}{2}\cdot  \min\set{|A|,|B|}$, then  $|E_{\hat G}(A',B')|\leq O(|E_{\hat G}(A,B)|)$. The running time of the algorithm is $O(m)$,
\end{lemma}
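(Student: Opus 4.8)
The algorithm \makecanonical\ I would use is the natural ``majority rounding''. Let $I$ be the set of indices $i$ with $V_i\subseteq V(\hat G')$. For each $i\in I$ put $A_i=V_i\cap A$ and $B_i=V_i\cap B$, and call $v_i$ \emph{$A$-heavy} if $|A_i|\ge |B_i|$ and \emph{$B$-heavy} otherwise. Set $A'=\bigcup_{i\in I:\,v_i\text{ is }A\text{-heavy}}V_i$ and $B'=\bigcup_{i\in I:\,v_i\text{ is }B\text{-heavy}}V_i$. By construction $(A',B')$ is a canonical cut of $\hat G'$, and the whole computation is a single scan over the vertices and edges of $\hat G$, so it runs in time $O(m)$. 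It remains to verify the size and cut-size guarantees.

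The key quantity is the ``misclassified mass'' $\mu=\sum_{i:\,v_i\ A\text{-heavy}}|B_i|+\sum_{i:\,v_i\ B\text{-heavy}}|A_i|$. Since each $H_i$ is an $\alpha_0$-expander, for every $i\in I$ we have $|E_{H_i}(A_i,B_i)|\ge \alpha_0\min\set{|A_i|,|B_i|}$; for an $A$-heavy $v_i$ this minimum is $|B_i|$, and for a $B$-heavy $v_i$ it is $|A_i|$. The type-$1$ edges lie inside the vertex-disjoint graphs $H_i$, and whenever $V_i\subseteq V(\hat G')$ an edge of $H_i$ crossing $(A_i,B_i)$ is an edge of $\hat G'$ crossing $(A,B)$; hence $\sum_{i\in I}|E_{H_i}(A_i,B_i)|\le |E_{\hat G}(A,B)|$, and therefore $\mu\le \tfrac{2}{\alpha_0}\,|E_{\hat G}(A,B)|$. (Here one must be careful that type-$2$ edges with an endpoint outside $V(\hat G')$ are not edges of $\hat G'$ and play no role.)

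For the size bounds, note $A\subseteq V(\hat G')$, so $|A|=\sum_{i\in I}|A_i|=\sum_{v_i\ A\text{-heavy}}|A_i|+\sum_{v_i\ B\text{-heavy}}|A_i|\le |A'|+\mu$, giving $|A'|\ge |A|-\tfrac{2}{\alpha_0}|E_{\hat G}(A,B)|$; under the hypothesis $|E_{\hat G}(A,B)|\le \tfrac{\alpha_0}{2}\min\set{|A|,|B|}\le \tfrac{\alpha_0}{4}\cdot 2|A|$ this yields $|A'|\ge |A|/2$, and symmetrically $|B'|\ge |B|/2$ (if the cut is not sparse one can instead fall back to a straightforward greedy balancing of the groups $V_i$ to get the size bounds, and the ``moreover'' clause is then vacuous). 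For the cut size: since each $V_i$ lies entirely on one side of $(A',B')$, no type-$1$ edge crosses $(A',B')$, so $E_{\hat G}(A',B')$ consists only of type-$2$ edges $\hat e=(x,y)$ with $x\in V_i$, $y\in V_j$ and $v_i,v_j$ of opposite heaviness. If such an $\hat e$ does not already cross $(A,B)$, then $x,y$ lie on the same side of $(A,B)$: if both lie in $A$ then $y\in A_j$ with $v_j$ $B$-heavy, and if both lie in $B$ then $x\in B_i$ with $v_i$ $A$-heavy — in either case $\hat e$ has an endpoint contributing to $\mu$. Since every vertex of $\hat G$ is incident to exactly one type-$2$ edge, the number of such ``new'' crossing edges is at most $\mu\le \tfrac{2}{\alpha_0}|E_{\hat G}(A,B)|$, whence $|E_{\hat G}(A',B')|\le \bigl(1+\tfrac{2}{\alpha_0}\bigr)|E_{\hat G}(A,B)|=O(|E_{\hat G}(A,B)|)$ because $\alpha_0$ is an absolute constant.

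The proof is essentially bookkeeping; the only subtle point — and the main thing to get right — is the accounting with vertex sets that are only partially inside the canonical set $V(\hat G')$: one must apply the expansion bound exactly to the $H_i$ with $V_i\subseteq V(\hat G')$, and be sure that type-$2$ edges leaving $V(\hat G')$ are counted in neither $E_{\hat G}(A,B)$ nor $E_{\hat G}(A',B')$. Once the ``misclassified mass'' inequality $\mu\le \tfrac2{\alpha_0}|E_{\hat G}(A,B)|$ is in hand, both the size bounds and the cut-size bound are immediate.
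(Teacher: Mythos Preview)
Your approach is essentially identical to the paper's: majority-round each group $V_i$ to the side containing the majority of its vertices, and charge both the misclassified mass and the new crossing type-2 edges to the internal expander edges $E_{H_i}(A_i,B_i)$. One small slip worth fixing: since for each $i$ the misclassified contribution is exactly $\min\{|A_i|,|B_i|\}\le \tfrac{1}{\alpha_0}|E_{H_i}(A_i,B_i)|$, you in fact get $\mu\le \tfrac{1}{\alpha_0}|E_{\hat G}(A,B)|$ rather than $\tfrac{2}{\alpha_0}$; with your stated $\tfrac{2}{\alpha_0}$ bound the size inequality only gives $|A'|\ge |A|-|A|=0$, whereas the correct constant yields $|A'|\ge |A|-\tfrac{1}{2}\min\{|A|,|B|\}\ge |A|/2$ as desired.
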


\subsubsection{Faster Basic Path Peeling}
The following theorem provides a more efficient algorithm for basic Path Peeling. The theorem was proved in \cite{detbalanced}; a similar result appeared in \cite{NanongkaiS17} (see Lemma
B.18).

\begin{theorem}[Theorem 7.1 from \cite{detbalanced}]\label{thm:efficient matching player}
	There is a deterministic algorithm, that we call \matchorcut, whose input consists of an $m$-edge graph $G=(V,E)$, two disjoint subsets $A,B$ of its vertices with $|A|\leq |B|$, and parameters $z\geq 0$ and $0<\phi<1/2$. The algorithm computes one of the following:
	
	\begin{itemize}
		\item either a matching $\mset \subseteq A\times B$ with $|\mset|> |A|-z$, such that there exists a set $\pset=\set{P(a,b)\mid (a,b)\in \mset}$ of paths in $G$, where for each pair $(a,b)\in \mset$, path $P(a,b)$ connects $a$ to $b$, and the paths in $\pset$ cause congestion at most $O\left (\frac{ \log n}{\phi}\right )$; or
		
		\item a cut $(X,Y)$ in $G$, with $|X|,|Y|\geq z/2$, and $|E_G(X,Y)| \leq \phi\cdot\min\set{|X|,|Y|}$.
	\end{itemize}
	
	The running time of the algorithm is $O\left (m^{1+o(1)}\right)$.
\end{theorem}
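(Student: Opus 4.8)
The plan is to reduce the task to a single $s$–$t$ maximum-flow computation in an auxiliary network and to invoke an \emph{almost-linear-time approximate maximum-flow/minimum-cut} subroutine. This is precisely the ingredient that replaces the \EST-based \procpathpeel of \Cref{lem: path peel}: \procpathpeel would maintain shortest-path trees up to depth $\Theta(\log n/\phi)$, incurring an extra $1/\phi$ factor in the running time, whereas an approximate max-flow oracle runs in $m^{1+o(1)}$ time regardless of $\phi$. Concretely, I would build a network $H$ from $G$ by adding a source $s$ joined to every $a\in A$ by a capacity-$1$ edge, a sink $t$ joined from every $b\in B$ by a capacity-$1$ edge, and by assigning capacity $\kappa=\Theta(\log n/\phi)$ (an integer, with a sufficiently large hidden constant) to every edge of $G$. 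Then compute, in time $m^{1+o(1)}$, a flow $f$ and an $s$–$t$ cut whose values are within a $(1+o(1))$ factor of the optimum, with the slack absorbed into the constants below and into the parameter $z$.

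\textbf{The flow case.} If the value of $f$ is at least $|A|-z$ (after accounting for the approximation error), then, since all capacities in $H$ are integral, the integrality theorem guarantees an integral flow of the same value; decomposing it into unit-flow paths gives a collection of vertex-to-vertex paths, each starting at a distinct vertex of $A$ and ending at a distinct vertex of $B$. This yields a matching $\mset\subseteq A\times B$ with $|\mset|>|A|-z$ together with its routing $\pset$, and since every edge of $G$ has capacity exactly $\kappa$, at most $\kappa=O(\log n/\phi)$ of these unit-flow paths traverse any fixed edge, so the congestion of $\pset$ is $O(\log n/\phi)$ as required. A technical point here is that the almost-linear-time oracle returns a fractional flow, so one has to extract an integral flow (and the actual routing) in near-linear time; the $z$ units of slack make this affordable — round the fractional source/sink flows and augment away, or reroute, a negligible deficit.

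\textbf{The cut case.} If the value of $f$ is below $|A|-z$, then $H$ admits an $s$–$t$ cut of capacity strictly less than $|A|-z$; let $(X,Y)$ be its restriction to $V(G)$, with $s$ on the $X$-side and $t$ on the $Y$-side. Its capacity is $|A\setminus X|+|B\cap X|+\kappa\,\lvert E_G(X,Y)\rvert$, so $\lvert A\cap X\rvert>z$ (hence $\lvert X\rvert\geq z\geq z/2$) and $\lvert E_G(X,Y)\rvert<|A|/\kappa$. This is exactly a distancing-type certificate, and one converts it into a genuinely sparse cut in the spirit of \Cref{lem: distancing to sparse cut} and the classical Leighton–Rao ball-growing technique: grow BFS balls around the $X$-heavy side until the ratio of edge boundary to ball size drops below $\phi$; the choice $\kappa=\Theta(\log n/\phi)$ is what guarantees that such a radius is reached before the budget is exhausted.

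\textbf{The main obstacle.} The delicate part is guaranteeing that \emph{both} sides of the final cut have size at least $z/2$: a raw minimum cut of $H$ can be arbitrarily unbalanced (the $Y$-side could be tiny). The standard remedy is to iterate — if the cut that is produced is too unbalanced, its small side is a low-boundary set that can be peeled off, and the same argument is repeated on the large remaining side — and one must argue that only a controlled number of such rounds occurs, so that the total running time stays $m^{1+o(1)}$ and the returned cut still separates $\Omega(z)$ vertices on each side with sparsity $\phi$. This balancing/termination bookkeeping, together with the near-linear extraction of an integral routing in the flow case, is where essentially all of the work lies; the rest is a direct assembly of the max-flow reduction and ball growing.
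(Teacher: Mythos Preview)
This theorem is quoted from \cite{detbalanced} and is not proved in the present paper, so there is no in-paper argument to compare against; your overall plan --- a single $s$--$t$ approximate max-flow with unit source/sink capacities on $A,B$ and edge capacities $\kappa\approx 1/\phi$ --- is indeed what that reference does.

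Your cut-case analysis, however, takes a wrong turn. From the cut-capacity inequality $|A\setminus X|+|B\cap X|+\kappa\,|E_G(X,Y)|<|A|-z$ you correctly read off $|A\cap X|>z$, but you miss that the \emph{same} inequality already yields both the balance and the sparsity you then try to manufacture via ball-growing and iteration. Since $|B\cap X|<|A|-z\le|B|$ one has $|B\cap Y|=|B|-|B\cap X|>|B|-|A|+z\ge z$, so $|Y|>z$ as well; and rearranging gives $\kappa\,|E_G(X,Y)|<|A\cap X|-z-|B\cap X|\le|X|$, while combining with $|Y|\ge|A\cap Y|+|B|-|B\cap X|$ gives $\kappa\,|E_G(X,Y)|<|Y|$ too. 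Hence $|E_G(X,Y)|<\min\{|X|,|Y|\}/\kappa\le\phi\min\{|X|,|Y|\}$ directly --- no Leighton--Rao layering, no peeling rounds. (A min-cut is not a distancing, so invoking \Cref{lem: distancing to sparse cut} is a category error in any case.) Balance is \emph{not} the obstacle.

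The genuine technical content is elsewhere: (i) extracting the matching $\mset$ from a \emph{fractional} flow in $m^{1+o(1)}$ time without writing down the paths --- the paper explicitly notes that $\pset$ is not output; this requires deterministic flow-rounding together with a dynamic-tree path-endpoint extraction, not the one-line hand-wave you give --- and (ii) absorbing the oracle's $(1{+}o(1))$ approximation error, which with a constant-factor oracle destroys the inequalities above (e.g.\ $|A\cap X|>z$ no longer follows). Managing that slack is presumably what the extra $\log n$ in $\kappa$ buys; with \emph{exact} max-flow, $\kappa=1/\phi$ already suffices by the computation above.
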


We note that the algorithm from \Cref{thm:efficient matching player} does not compute the set $\pset$ of paths explicitly, as even listing all paths in the set may take time that is greater than $m^{1+o(1)}$, if parameter $\phi$ is sufficiently small. It only guarantees that set $\pset$ of paths with the above properties exists.

\subsection{Most-Balanced Sparse Cut}
\label{sec:most-bal cut}

Recall that, given a cut $(X,Y)$ in a graph $G$, the \emph{sparsity} of the cut is $\frac{|E_G(X,Y)|}{\min\set{|X|,|Y|}}$. We sometimes refer to $\min\set{|X|,|Y|}$ as the \emph{size} of the cut $(X,Y)$. 
	
In the \MBSC problem, the input is an $n$-vertex graph $G$, and a parameter $0<\phi\leq 1$. The goal is to compute a cut $(X,Y)$ in $G$ of sparsity at most $\phi$, while maximizing the size $\min\set{|X|,|Y|}$ of the cut. 
An $(\alpha,\beta)$-bicriteria approximation algorithm for the problem, given parameters $0<\phi<1$ and $z\geq 1$, must either compute a cut $(X,Y)$ in $G$ of sparsity at most $\phi$ and size at least $z$; or correctly establish that every cut $(X',Y')$ whose sparsity is at most $\phi/\alpha$ has size at most $\beta\cdot z$.

In \cite{detbalanced} (see Lemma 7.3), an $(\alpha,\beta)$-bicriteria deterministic approximation algorithm was obtained for the \MBSC problem, with $\alpha=(\log n)^{O(1/\eps)}$ and $\beta=(\log n)^{O(1/\eps)}$, in time $O\left(m^{1+O(\eps)+o(1)}\cdot (\log n)^{O(1/\eps^2)}\right )$ for any $\frac 1 {c\log n}\leq \eps\leq 1$, for some fixed constant $c$. 
 By using our algorithm for the Cut Player from \Cref{thm: new cut player}, 
 we immediately obtain the following stronger bicriteria approximation algorithm for the problem. 	The proof is essentially identical to the proof of Lemma 7.3 in \cite{detbalanced}; the only difference is that we use the stronger algorithm for the Cut Player from  \Cref{thm: new cut player}. We provide the proof here for completeness.

\begin{theorem}\label{thm: new sparse edge cut of large profit or witness}
	There is a constant $c_0$, and a deterministic algorithm, that, given an $n$-vertex and $m$-edge graph $G=(V,E)$ and parameters $0<\phi\leq 1$, $0<z\leq n$,  and parameter $\frac{4}{(\log n)^{1/25}}< \eps<\frac{1}{400}$:
	
	\begin{itemize}
		\item either returns a cut $(X,Y)$ in $G$ with $|E_G(X,Y)|\leq \phi\cdot\min\set{|X|,|Y|}$ and $|X|,|Y|\geq z$; 
		
		\item or correctly establishes that, for every cut $(X',Y')$ in $G$ with $|E_G(X',Y')|\leq \frac{\phi}{\alpha}\cdot \min\set{|X'|,|Y'|}$,  $\min\set{|X'|,|Y'|}<\alpha'\cdot z$ holds, for $\alpha=2^{c_0/\eps^6}\cdot \log^7 n$ and $\alpha'= 2^{c_0/\eps^6}\cdot \log^6 n$.
	\end{itemize}
	
	The running time of the algorithm is $O\left (m^{1+O(\eps)+o(1)}\right)$.
\end{theorem}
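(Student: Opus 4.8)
The plan is to follow the standard reduction from \MBSC to the Cut-Matching Game, exactly in the manner of Lemma 7.3 of \cite{detbalanced}, but substituting our stronger Cut Player algorithm from \Cref{thm: new cut player}. First I would reduce to the bounded-degree case: apply Algorithm \reducedegree to $G$ to obtain a graph $\hat G$ with maximum vertex degree at most $10$ and $|V(\hat G)|=2m$, and note that canonical cuts in $\hat G$ correspond (up to constant factors in both size and number of cut edges) to cuts in $G$, by \Cref{lem: degree reduction balanced cut case}. It therefore suffices to solve the problem on $\hat G$, where the maximum degree is constant, so that the expansion-versus-sparsity distinction is immaterial up to constants.

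The core of the argument is the following iterative procedure on (a subgraph of) $\hat G$. We run the Cut-Matching Game of \Cref{subsec:KKOV}, where in each iteration the Cut Player is implemented by \Cref{thm: new cut player} (applied with a suitable $\eps$ derived from the input precision parameter), and the Matching Player is implemented via \matchorcut from \Cref{thm:efficient matching player}, run with the target size parameter set to $z$ (or $\Theta(z)$ after the degree reduction rescaling). In each iteration, either the Cut Player returns a balanced sparse cut of $\hat G$ — in which case the Matching Player, via \matchorcut, must find a large matching routable with congestion $O(\log n/\phi')$ (where $\phi'$ is the sparsity parameter fed to \matchorcut), and we add the corresponding matching edges to the graph $H$ of the Cut-Matching Game — or the Matching Player returns a cut $(X,Y)$ with $|X|,|Y|\ge \Omega(z)$ and sparsity $\le \phi'$; after converting this to a canonical cut via \makecanonical and translating back to $G$, this is the desired output cut, and we halt. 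By \Cref{thm:KKOV-new}, after $O(\log n)$ iterations the Cut Player must return a subset $S$ with $|S|\ge n/2$ such that $\hat G[S]$ is a $\phi^*$-expander with $\phi^*\ge \Omega\!\bigl(1/(2^{O(1/\eps^6)}\cdot \log^2 n)\bigr)$ (using $\Delta=O(\log n)$ as allowed since the number of iterations is $O(\log n)$, and the degree of $H$ grows by $O(1)$ per iteration). Composing this expander certificate with the $O(\log n)$ matchings embedded into $\hat G$ with congestion $O(\log n/\phi')$ each yields an embedding of a $\Theta(1)$-expander on $\ge n/2$ vertices into $\hat G$ with congestion $\eta = O(\log^3 n/(\phi^*\phi'))$ and path length $O(\log n/\phi')$; by the standard flow-cut argument, this certifies that every cut in $\hat G$ of sparsity $\le \Omega(1/\eta)$ has size $<O(z)$ — after accounting for the $O(\log n)$ vertices not in $S$ and the constant-degree rescaling. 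Choosing $\phi'$ appropriately relative to $\phi$ so that the final sparsity threshold equals $\phi/\alpha$, and tracking all the polylogarithmic and $2^{O(1/\eps^6)}$ factors, gives $\alpha = 2^{c_0/\eps^6}\cdot\log^7 n$ and $\alpha' = 2^{c_0/\eps^6}\cdot\log^6 n$; the extra $\log n$ over $\phi^*$'s denominator comes from composing the $O(\log n)$ matchings, and one more factor from the \matchorcut congestion.

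For the running time, each of the $O(\log n)$ iterations costs $O\!\bigl(m^{1+O(\eps)}\cdot\Delta^7\bigr) = O(m^{1+O(\eps)+o(1)})$ for the Cut Player (by \Cref{thm: new cut player}, with $\Delta=O(\log n)$) plus $O(m^{1+o(1)})$ for \matchorcut; the degree reduction and \makecanonical steps each cost $O(m)$. Summing over iterations gives total time $O(m^{1+O(\eps)+o(1)})$, as claimed.

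The main obstacle I expect is the bookkeeping of constants and the precise translation of guarantees through three layers — the degree reduction (a constant-factor loss in both cut size and number of cut edges, but one must verify \makecanonical applies, i.e.\ the intermediate cuts are balanced enough to satisfy its hypothesis $|E_{\hat G}(A,B)|\le \tfrac{\alpha_0}{2}\min\{|A|,|B|\}$), the Cut-Matching Game (matching the $z$-parameter of \matchorcut with the size guarantee we want, and ensuring the Matching Player's failure cut is large on both sides), and the final flow-cut conversion of the embedded expander into the "every sparse cut is small" certificate. In particular one must be careful that when \matchorcut returns a cut rather than a matching, that cut already has the two-sided size bound $|X|,|Y|\ge\Omega(z)$ before canonicalization, so that after the constant-factor loss it still has size $\ge z$ in $G$; this forces running \matchorcut with size parameter a suitable constant multiple of $z$. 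None of these steps is conceptually hard — they are all present in \cite{detbalanced} — but assembling them so that the stated $\log^7 n$ and $\log^6 n$ exponents come out exactly requires care, and is the step most likely to need the full detailed calculation rather than a sketch.
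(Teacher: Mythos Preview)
Your core approach---running the Cut-Matching Game with the new Cut Player from \Cref{thm: new cut player} and the Matching Player via \matchorcut, and using the resulting embedded expander to certify the sparsity lower bound---is exactly what the paper does.

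However, the degree-reduction step you propose at the outset is both unnecessary and incorrect as stated. You claim that canonical cuts in $\hat G$ correspond ``up to constant factors in both size and number of cut edges'' to cuts in $G$, but this is false: a canonical cut $(\hat X,\hat Y)$ in $\hat G$ corresponding to $(X,Y)$ in $G$ has $|\hat X|=\vol_G(X)$, not $|X|$, and these can differ by an arbitrary factor when $G$ has unbounded degree. Consequently, a sparsity bound in $\hat G$ translates to a \emph{conductance} bound in $G$, and a size threshold in $\hat G$ translates to a \emph{volume} threshold in $G$---neither of which is what the theorem asks for. The paper avoids this entirely: it runs the Cut-Matching Game directly on $V(G)$ (so the game graph $H$ has $V(H)=V(G)$), and the Cut Player is applied to $H$, which automatically has maximum degree $\Delta_H=O(\log n)$ since only $O(\log n)$ matchings are ever added. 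No bound on the degree of $G$ is needed anywhere, because \matchorcut works on arbitrary graphs. Simply dropping the degree reduction makes your argument essentially identical to the paper's.

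One arithmetic slip: with $\Delta_H=O(\log n)$, \Cref{thm: new cut player} gives $\phi^*\ge\Omega\bigl(1/(2^{O(1/\eps^6)}\cdot\log^5 n)\bigr)$, not $\log^2 n$---the statement has a $\Delta^3$ in the denominator. The exponents then compose straightforwardly: the total embedding congestion over all $O(\log n)$ iterations is $\eta=O(\log^2 n/\phi)$, and since $H$ is a $\psi$-expander with $\psi\ge\phi^*/2$, any cut $(X',Y')$ satisfies $|E_G(X',Y')|\ge |E_H(X',Y')\setminus F|/\eta\ge\Omega(\psi|X'|/\eta)$, giving sparsity at least $\phi/(2^{O(1/\eps^6)}\log^7 n)$. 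The $\alpha'=2^{c_0/\eps^6}\log^6 n$ threshold is what is needed so that $\psi\cdot\alpha' z$ dominates the $O(z\log n)$ fake edges.
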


\begin{proof}
The algorithm employs the \CMG. We will maintain a set $F$ of fake edges that are added to graph $G$. Initially, $F=\emptyset$.
	We assume that $n$ is an even integer; otherwise we add a new isolated vertex $v_0$ to $G$, and we add a fake edge connecting $v_0$ to an arbitrary vertex of $G$ to $F$. We also maintain a graph $H$, that initially contains the set $V$ of vertices and no edges. We then perform a number of iterations, that correspond to the \CMG. In every iteration $i$, we will add a matching $M_i$ to graph $H$. We will ensure that the number of iterations is bounded by $O(\log n)$, so the maximum vertex degree in $H$ is always bounded by $\Delta_H\leq O(\log n)$. At the beginning of the algorithm, graph $H$ contains the set $V$ of vertices and no edges. We now describe the execution of the $i$th iteration.

	In order to execute the $i$th iteration, we apply the algorithm  from \Cref{thm: new cut player} to the current graph $H$, with parameter $\eps$ remaining unchanged. 
	Assume first that the output of the algorithm from \Cref{thm: new cut player} is a cut $(A_i,B_i)$ in $H$ with $|A_i|,|B_i|\geq n/4$ and $|E_H(A,B)|\leq n/100$. We treat this partition as the move of the Cut Player. Assume w.l.o.g. that $|A_i|\leq |B_i|$. Next, we compute an arbitrary partition $(A'_i,B'_i)$ of $V(G)$ with $|A'_i|=|B'_i|$, such that $A_i\subseteq A'_i$.  We apply Algorithm \matchorcut from \Cref{thm:efficient matching player} to the sets $A'_i,B'_i$ of vertices, a sparsity parameter $\phi'=\phi/2$ and parameter $z'=4z$. 
	If the algorithm returns a cut $(X,Y)$ in $G$, with $|X|,|Y|\geq z'/2\geq 2z$, and $|E_G(X,Y)|\leq \phi'\cdot\min\set{|X|,|Y|}$, then we terminate the algorithm and return the cut $(X,Y)$, after we delete the extra vertex $v_0$ from it (if it exists). It is easy to verify that $|X|,|Y|\geq z$ and that $|E_G(X,Y)|\leq \phi\cdot\min\set{|X|,|Y|}$.
	Otherwise, the algorithm from \Cref{thm:efficient matching player} computes a matching $\mset'_i\subseteq A'_i\times B'_i$ with $|\mset'_i|\geq |A'_i|-4z$, such that there exists a set $\pset'_i=\set{P(a,b)\mid (a,b)\in \mset'_i}$ of paths in $G$, where for each pair $(a,b)\in \mset'_i$, path $P(a,b)$ connects $a$ to $b$, and the paths in $\pset'_i$ cause congestion at most $O\left (\frac{ \log n}{\phi}\right )$.
	We let $A''_i\subseteq A'_i$, $B''_i\subseteq B'_i$ be the sets of vertices that do not participate in the matching $\mset'_i$, and we let $\mset''_i$ be an arbitrary perfect matching between these vertices. We define a set $F_i$ of fake edges, containing the edges of $M''_i$, and an embedding $\pset''_i=\set{P(e)\mid e\in F_i}$ of the edges in $\mset''_i$, where each fake edge is embedded into itself. Lastly, we set $\mset_i=\mset'_i\cup \mset''_i$. We view the matching $\mset_i$ as the response of the matching player in the \CMG. We add the edges of $M_i$ to $H$, and continue to the next iteration.
	Notice that $|F_i|\leq 4z$.

	We perform the iterations as described above, until the algorithm from \Cref{thm: new cut player} returns a subset $S\subseteq V$ of at least $n/2$ vertices, such that graph $H[S]$ is $\phi^*$-expander, for $\phi^*\geq \Omega\left ( \frac{1}{2^{O(1/\eps^6)}\cdot \Delta_H^3\cdot \log^2 n }\right )\geq \Omega\left ( \frac{1}{2^{O(1/\eps^6)}\cdot \log^5 n }\right ) $. Recall that \Cref{thm:KKOV-new} guarantees that this must happen after at most $O(\log n)$ iterations. We then perform one last iteration, whose index we denote by $q$.
	
	We let $B_q=S$ and $A_q=V(G)\setminus S$, and apply Algorithm  \matchorcut from \Cref{thm:efficient matching player} to the sets $A_q,B_q$ of vertices,  a sparsity parameter $\phi'=\phi/2$ and parameter $z'=4z$. As before, if the algorithm returns a cut $(X,Y)$ in $G$, with $|X|,|Y|\geq z'/2\geq 2z$ and $|E_G(X,Y)|\leq \phi'\cdot\min\set{|X|,|Y|}$, then we terminate the algorithm and return the cut $(X,Y)$, after we delete the extra vertex $v_0$ from it (if it exists). As before, we get that $|X|,|Y|\geq z$ and $|E_G(X,Y)|\leq \phi\cdot\min\set{|X|,|Y|}$. 
	Otherwise, the algorithm from \Cref{thm:efficient matching player} computes a matching $\mset'_q\subseteq A'_q\times B'_q$ with $|\mset'_q|\geq |A_q|-4z$, such that there exists a set $\pset'_q=\set{P(a,b)\mid (a,b)\in \mset'_q}$ of paths in $G$, where for each pair $(a,b)\in \mset'_q$, path $P(a,b)$ connects $a$ to $b$, and the paths in $\pset'_q$ cause congestion at most $O\left (\frac{ \log n}{\phi}\right )$.
	We let $A'_q\subseteq A_q$, $B'_q\subseteq B_q$ be the sets of vertices that do not participate in the matching $\mset'_q$, and we let $\mset''_q$ be an arbitrary matching that connects every vertex of $A'_q$ to a distinct vertex of $B'_q$ (such a matching must exist since $|A_q|\leq |B_q|$). As before, we define a set $F_q$ of fake edges, containing the edges of $\mset''_q$, and an embedding $\pset''_q=\set{P(e)\mid e\in F_q}$ of the edges in $\mset''_q$, where each fake edge is embedded into itself. Lastly, we set $\mset_q=\mset'_q\cup \mset''_q$, and we add the edges of $\mset_q$ to graph $H$.
	
	From now on we assume that the algorithm never terminated with a cut $(X,Y)$ with $|X|,|Y|\geq z$ and $|E_G(X,Y)|\le \phi\cdot\min\set{|X|,|Y|}$. Note that, from \Cref{obs: exp plus matching is exp}, the final graph $H$ is a $\psi$-expander, for $\psi\geq \frac{\phi^*}{2}\geq \Omega\left ( \frac{1}{2^{O(1/\eps^6)}\cdot \log^5 n }\right ) $. Moreover, we are guaranteed that there is an embedding of $H$ into $G+F$ with congestion $O\left (\frac{ \log^2 n}{\phi}\right )$, where $F=\bigcup_{i=1}^rF_i$ is a set of $O(z\log n)$ fake edges. Notice that, in the embedding that we constructed, every edge of $H$ is either embedded into a path consisting of a single fake edge, or it is embedded into a path in the graph $G$; every fake edge in $F$ serves as an embedding of exactly one edge of $H$.
	
	We now claim that there is a large enough universal constant $c_0$, such that, if we let $\alpha=2^{c_0/\eps^6}\cdot \log^7 n$ and $\alpha'= 2^{c_0/\eps^6}\cdot \log^6 n$, then for every cut $(X',Y')$ in $G$ with $|E_G(X',Y')\leq \frac{\phi}{\alpha}\cdot \min\set{|X'|,|Y'|}$,  $\min\set{|X'|,|Y'|}<\alpha'\cdot z$ holds.
	
	Indeed, consider any cut $(X',Y')$ in $G$ with $|X'|,|Y'|\geq \alpha'\cdot z$.  We assume w.l.o.g. that $|X'|\leq |Y'|$. It is enough to show that $|E_G(X',Y')|>\frac{\phi\cdot |X'|}{\alpha}$.

	
	Notice that $(X',Y')$ also defines a cut in graph $H$, and, since $H$ is a $\psi$-expander, $|E_{H}(X',Y')|\geq \psi\cdot |X'|\geq \psi\cdot \alpha'\cdot z\geq \psi\cdot z\cdot 2^{c_0/\eps^6}\cdot \log^6 n$.
	Since 	$\psi\geq  \frac{1}{2^{O(1/\eps^6)}\cdot \log^5 n }$, assuming that $c_0$ is a large enough constant, we get that $|E_H(X',Y')|\geq c_0z\log n$.

	 We partition the set  $E_{H}(X',Y')$ of edges into two subsets. The first subset, $E_1$, is the set of edges corresponding to the fake edges (so each edge $e\in E_1$ is embedded into a path $P(e)=\set{e}$ in $G+F$), and $E_2$ contains all remaining edges (each of which is embedded into a path of $G$). Recall that the total number of the fake edges, $|F|\leq O(z\log n)$, while $|E_H(X',Y')|\geq c_0z\log n$.
	Therefore, by letting $c_0$ be a large enough constant, we can ensure that $|E_1|\leq |E_H(X',Y')|/2$.

	The embedding of $H$ into $G+F$ defines, for every edge $e\in E_2$ a corresponding path $P(e)$ in $G$, that must contribute at least one edge to the cut $E_G(X',Y')$. Since the embedding causes congestion $O\left (\frac{ \log^2 n}{\phi}\right )$, we get that:

	\[
	\begin{split}
	|E_G(X',Y')|&\geq \Omega \left (\frac{|E_H(X',Y')|\cdot \phi}{\log^2 n}\right )\\
	& \geq \Omega \left (\frac{\phi \cdot \psi\cdot |X'|}{\log^2 n}\right )\\
	&\geq \Omega  \left (\frac{\phi\cdot |X'|}{ 2^{O(1/\eps^6)}\cdot \log^7 n }\right )\\
	&> \frac{\phi\cdot |X'|}{ 2^{c_0/\eps^6}\cdot \log^7 n }\\
	&=\frac{\phi |X'|}{\alpha},
	\end{split}
	\]
	
	(we have used the fact that $c_0$ is a large enough constant). We note that we have ignored the extra vertex $v_0$ that we have added to $G$ if $|V(G)|$ is odd, but the removal of this vertex can only change the cut sparsity and the cardinalities of $X'$ and $Y'$ by a small constant factor that can be absorbed in $c_0$.
	
	Lastly, we bound the running time of the algorithm. The algorithm consists of $O(\log n)$ iterations. Every iteration employs the algorithm from 
	\Cref{thm: new cut player}, whose running time is  $O\left(|E(H)|^{1+O(\eps)}\cdot \Delta_H^7\right )\leq O\left(n^{1+O(\eps)}\right ) $, since $\Delta_H\leq O(\log n)$, and $\log^8n<n^{4\eps}$ (the latter follows from the assumption that $\eps>\frac{2}{(\log n)^{1/25}}$, and since, from the inequality $\frac{2}{(\log n)^{1/25}}< \eps<\frac{1}{400}$, $n$ must be large enough).
Additionally, in every iteration we use Algorithm \matchorcut from \Cref{thm:efficient matching player}, whose running time is $O\left (m^{1+o(1)}\right)$. Therefore, the total running time is $O\left (m^{1+O(\eps)+o(1)}\right)$.
\end{proof}

We also immediately obtain the following analogue of Lemma 7.4 from \cite{detbalanced}.

\begin{theorem}\label{thm: sparse edge cut or expander}
	There is a constant $c_0$ and a deterministic algorithm, that, given an $n$-vertex and $m$-edge graph $G=(V,E)$ and parameters $0<\phi\leq 1$  and  $\frac{4}{(\log n)^{1/25}}< \eps<\frac{1}{400}$:	
	
	\begin{itemize}
		\item either returns a cut $(X,Y)$ in $G$ with $|E_G(X,Y)|\leq \phi\cdot\min\set{|X|,|Y|}$; 
		
		\item or correctly establishes that $G$ is a $\phi'$-expander, for $\phi'= \frac{\phi}{2^{c_0/\eps^6}\cdot \log^7 n}$.
	\end{itemize}
	
	The running time of the algorithm is $O\left (m^{1+O(\eps)+o(1)}\right )$.
\end{theorem}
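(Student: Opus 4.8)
The plan is to derive Theorem~\ref{thm: sparse edge cut or expander} as an immediate special case of Theorem~\ref{thm: new sparse edge cut of large profit or witness}, exactly as in the corresponding argument of \cite{detbalanced}. Recall that a graph $G$ fails to be a $\phi'$-expander precisely when there exists a cut $(X',Y')$ with $|E_G(X',Y')| \le \phi' \cdot \min\set{|X'|,|Y'|}$; so to certify that $G$ is a $\phi'$-expander it suffices to rule out all such cuts, of \emph{any} size.

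First I would invoke Theorem~\ref{thm: new sparse edge cut of large profit or witness} on $G$ with the given parameter $\phi$, with $\eps$ unchanged, and with the size parameter $z=1$. The theorem then either returns a cut $(X,Y)$ with $|E_G(X,Y)| \le \phi \cdot \min\set{|X|,|Y|}$ and $|X|,|Y|\ge 1$ — which is exactly the first alternative of Theorem~\ref{thm: sparse edge cut or expander}, so we simply output it — or it certifies that every cut $(X',Y')$ with $|E_G(X',Y')| \le \frac{\phi}{\alpha}\cdot \min\set{|X'|,|Y'|}$ has $\min\set{|X'|,|Y'|} < \alpha' \cdot z = \alpha'$, where $\alpha = 2^{c_0/\eps^6}\cdot\log^7 n$ and $\alpha' = 2^{c_0/\eps^6}\cdot\log^6 n$. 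Since $\min\set{|X'|,|Y'|}$ is a positive integer, the conclusion $\min\set{|X'|,|Y'|} < \alpha'$ is vacuous unless $\alpha' \ge 2$, but more to the point we do not need it to be vacuous: the key observation is that, setting $\phi' = \phi/\alpha = \frac{\phi}{2^{c_0/\eps^6}\cdot\log^7 n}$, ruling out \emph{small} cuts of sparsity at most $\phi'$ does not yet rule out all cuts. To close this gap one runs the degree-reduction / self-recursion trick used in \cite{detbalanced}: because $\min\set{|X'|,|Y'|}<\alpha'$ forces a cut of sparsity $\le\phi'$ to be tiny, one can afford to enumerate such tiny cuts directly (there are at most $n^{O(\alpha')}$ of them, which is too many), so instead the standard route is to observe that it suffices to apply Theorem~\ref{thm: new sparse edge cut of large profit or witness} with a slightly larger $z$ and recurse; alternatively, and more simply, one observes that if $G$ is \emph{not} a $\phi'$-expander then it has \emph{some} cut of sparsity $\le\phi'$, and we may assume by a standard ``most balanced'' reduction that it has one of size $\ge 1$, which is exactly what Theorem~\ref{thm: new sparse edge cut of large profit or witness} with $z=1$ detects. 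So with $z=1$ the second alternative of Theorem~\ref{thm: new sparse edge cut of large profit or witness} states that every cut of sparsity at most $\phi'$ has size $< \alpha'$; combined with the observation that if such a cut exists at all we can take it to have size $\ge 1$, we conclude no such cut exists, i.e.\ $G$ is a $\phi'$-expander.

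Concretely, the clean way to phrase the argument is: suppose the second alternative of Theorem~\ref{thm: new sparse edge cut of large profit or witness} (with $z=1$) holds, and suppose for contradiction $G$ is not a $\phi'$-expander, so there is a cut $(X',Y')$ with $|E_G(X',Y')| \le \phi' \cdot \min\set{|X'|,|Y'|} = \frac{\phi}{\alpha}\cdot\min\set{|X'|,|Y'|}$. Then the theorem guarantees $\min\set{|X'|,|Y'|} < \alpha' \cdot 1 = \alpha'$; this is not yet a contradiction. To fix the constant, I would instead invoke Theorem~\ref{thm: new sparse edge cut of large profit or witness} inside a loop over the $O(\log n)$ possible ``scales'' $z \in \{1,2,4,\dots,n\}$, or — following \cite{detbalanced} verbatim — replace the bare application by the degree-reduced version (Section~\ref{subsec:constant degree}) together with \Cref{lem: degree reduction balanced cut case}, which converts an arbitrary sparse cut into a canonical one of comparable sparsity whose size is controlled; this is exactly the mechanism by which \cite{detbalanced} derives their Lemma~7.4 from their Lemma~7.3, and the same reduction applies here word for word, absorbing the extra factors into $c_0$. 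The running time is dominated by the $O(1)$ (or $O(\log n)$, absorbed) calls to Theorem~\ref{thm: new sparse edge cut of large profit or witness}, each taking $O(m^{1+O(\eps)+o(1)})$ time, plus the $O(m)$-time degree reduction, giving the claimed $O(m^{1+O(\eps)+o(1)})$ bound.

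I expect the only real subtlety — the ``hard part'' — to be the bookkeeping that turns the bicriteria guarantee (no \emph{small} low-sparsity cut) into the clean ``either a sparse cut or a $\phi'$-expander'' dichotomy with no residual size slack, which is precisely the role of the canonical-cut reduction of \Cref{lem: degree reduction balanced cut case}; everything else is a direct substitution into the machinery of Theorem~\ref{thm: new sparse edge cut of large profit or witness}. Since this is exactly the passage from Lemma~7.3 to Lemma~7.4 in \cite{detbalanced}, and the only change is that our $\alpha,\alpha'$ are polylogarithmic rather than $(\log n)^{\omega(1)}$, the proof there transfers unchanged, and I would present it as such, noting explicitly that it is identical to the argument in \cite{detbalanced} and including it only for completeness.
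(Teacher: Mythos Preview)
Your proposal has a real gap. You correctly identify that invoking Theorem~\ref{thm: new sparse edge cut of large profit or witness} as a black box with $z=1$ only yields ``every cut of sparsity $\leq \phi/\alpha$ has size $<\alpha'$'', which is not the same as ``no such cut exists''. But the fixes you then suggest do not close this gap: looping over scales $z\in\{1,2,4,\ldots\}$ only produces progressively \emph{weaker} size bounds ($<\alpha' z$), and the canonical-cut machinery of \Cref{lem: degree reduction balanced cut case} translates cuts between $G$ and $\hat G$---it does nothing to turn a size bound into a non-existence statement. Your appeal to ``the passage from Lemma~7.3 to Lemma~7.4 in \cite{detbalanced}'' is also off target: that passage does not use Lemma~7.3 as a black box either.

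The paper's proof is short and different in kind: rather than invoking Theorem~\ref{thm: new sparse edge cut of large profit or witness} as a black box, it \emph{re-runs that proof} with one internal parameter changed---the parameter $z$ passed to \matchorcut (Theorem~\ref{thm:efficient matching player}) in each iteration of the Cut-Matching Game is set to $1$. With $z=1$, \matchorcut either returns a sparse cut (and we output it) or a matching of size $>|A'_i|-1$, i.e.\ a \emph{perfect} matching, so $F_i=\emptyset$ in every iteration and hence $F=\emptyset$. The expander $H$ therefore embeds into $G$ itself (not $G+F$) with congestion $O(\log^2 n/\phi)$, and the standard embedding argument then shows \emph{every} cut in $G$ has sparsity at least $\Omega(\phi^*\cdot \phi/\log^2 n)=\phi'$, with no residual size slack at all. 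The idea you missed is that eliminating the fake edges at the source---not post-processing the bicriteria output---is what removes the slack.
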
 

\begin{proof}
	The proof is almost identical to the proof of \Cref{thm: new sparse edge cut of large profit or witness}. The only difference is that we set the parameter $z$ that is used in the calls to Algorithm \matchorcut from \ref{thm:efficient matching player} to $1$. This ensures that no fake edges are introduced.
\end{proof}

\subsection{Sparsest Cut and Lowest-Conductance Cut -- Proof of \Cref{thm: sparsest and lowest cond}}
\label{subsec: sparsest and low-cong}

In this section we prove \Cref{thm: sparsest and lowest cond}.
 \Cref{thm: sparse edge cut or expander} immediately gives a deterministic $\left (2^{O(1/\eps^6)}\cdot \log^7n\right )$-approximation algorithm for the
Sparsest Cut problem with running time $O\left (m^{1+O(\eps)+o(1)}\right )$, for all $\eps\geq \frac{4}{(\log n)^{1/25}}$. Indeed, let $G$ be the input $m$-edge graph. For  $1\leq i\leq \ceil{\log m}$, let $\phi_i=1/2^i$. For $1\leq i\leq \ceil{\log m}$, we apply the algorithm from   \Cref{thm: sparse edge cut or expander} to graph $G$, with the parameter $\phi_i$. Let $i$ be the smallest integer, for which the algorithm returned a cut $(X,Y)$ with $|E_G(X,Y)|<\phi_i\cdot \min\set{|X|,|Y|}$. Then, when applied to $G$ with parameter $\phi_{i+1}=\phi_i/2$, the algorithm correctly established that $G$ is a $\phi'$-expander, for $\phi'=\frac{\phi_i}{2^{O(1/\eps^6)}\cdot \log^7n}$. In other words, the sparsity of the sparsest cut in $G$ is at least $\phi'$. Therefore, $(X,Y)$ is an $O(2^{O(1/\eps^6)}\cdot \log^7n)$-approximate sparsest cut. The running time of the algorithm remains $O\left (m^{1+O(\eps)+o(1)}\right )$. By setting $\eps=(1/c\log\log\log n)^{1/6}$, for a large enough constant $c$, we obtain a factor-$O(\log^7n \log \log n)$-approximation, in time  $O\left (m^{1+o(1)}\right )$.

We now show that we can obtain an algorithm with similar guarantees for the \lowcond problem. The algorithm follows easily from the algorithm for the \sparsest problem, and is identical to that of \cite{detbalanced}. The only difference is that we are using the stronger algorithm for \sparsest that we obtained. Let $G=(V,E)$ be an input to the \lowcond problem, with $|V|=n$ and $|E|=m$. Let $\frac{4}{(\log n)^{1/25}}< \eps<\frac{1}{400}$ be a parameter. We start by obtaining a factor-$\left (2^{O(1/\eps^6)}\cdot \log^7n\right )$-approximation algorithm with running time $O\left (m^{1+O(\eps)+o(1)}\right )$.

Denote by $\psi$ the conductance of the lowest-conductance cut in $G$. We can assume without loss of generality that $\psi<\frac 1 {2^{c/\eps^6}\cdot \log^7n}$ for some large enough constant $c$, since otherwise we can let $v$ be a lowest-degree vertex in $G$, and return the cut $(\set{v},V\setminus\set{v})$, whose conductance is $1$.
We use Algorithm \reducedegree from \Cref{subsec:constant degree}, in order to construct, in time $O(m)$, a graph $\hG$, whose maximum vertex degree is bounded by $10$, and $|V(\hG)|=2m$.
 
Note that, if we denote $\phi$ the value of the sparsest cut in $\hat G$, then $\phi\leq \psi$ must hold. This is since every cut $(A,B)$ in $G$ naturally defines a cut $(A',B')$ in $\hat G$, with $|A'|=\vol_G(A), |B'|=\vol_G(B)$, and $|E_{\hat G}(A',B')|=|E_G(A,B)|$. We use our approximation algorithm for the Sparsest Cut problem in graph $\hat G$, to obtain a cut $(X',Y')$ of $\hat G$, whose sparsity is at most $\left (2^{O(1/\eps^6)}\cdot \log^7n\right )\cdot \phi$, in time $O\left (m^{1+O(\eps)+o(1)}\right )$.

Using  Algorithm \makecanonical from \Cref{lem: degree reduction balanced cut case}, we obtain a cut $(X'',Y'')$ of $\hat G$, with $|X''|\geq |X'|/2$, $|Y''|\geq |Y'|/2$, and $|E_{\hat G}(X'',Y'')|\leq O(|E_{\hat G}(X',Y')|)\leq \left (2^{O(1/\eps^6)}\cdot \log^7n\right )\cdot \phi\cdot\min\set{|X'|,|Y'|}\leq \left (2^{O(1/\eps^6)}\cdot \log^7n\right )\cdot \psi\cdot \min\set{|X''|,|Y''|}$, such that both $X''$ and $Y''$ are canonical vertex sets. This cut naturally defines a cut $(X,Y)$ in $G$, with $\vol_G(X)=|X''|$, $\vol_G(Y)=|Y''|$, and $|E_G(X,Y)|=|E_{\hat G}(X'',Y'')|$. Therefore: 

\[
\begin{split}
|E_G(X,Y)|&=|E_{\hat G}(X'',Y'')|\\
&\leq  \left(2^{O(1/\eps^6)}\cdot \log^7n\right )\cdot \psi\cdot \min\set{|X''|,|Y''|}\\
&\leq \left (2^{O(1/\eps^6)}\cdot \log^7n\right )\cdot \psi\cdot\min\set{\vol_G(X),\vol_G(Y)}.\end{split} \]

We conclude that cut $(X,Y)$ is a factor $(2^{O(1/\eps^6)}\cdot \log^7n)$-approximate solution to instance $G$ of \lowcond.
Since the running time of Algorithm \makecanonical is $O(m)$, the running time of the whole algorithm remains 
bounded by $O\left (m^{1+O(\eps)+o(1)}\right )$. As before, by setting $\eps=(1/c\log\log\log n)^{1/6}$, for a large enough constant $c$, we obtain a factor-$O(\log^7n \log \log n)$-approximation for \lowcond, in time  $O\left (m^{1+o(1)}\right )$.

\subsection{Minimum Balanced Cut -- Proof of Theorems  \ref{thm: balanced cut high cond} and \ref{thm: balanced cut low cond}}

In this section we prove  \Cref{thm: balanced cut high cond} and \Cref{thm: balanced cut low cond}. Our proof is somewhat more involved than that of \cite{detbalanced}, who iteratively used the algorithm for \MBSC from (the weaker version of) \Cref{thm: new sparse edge cut of large profit or witness}.  The reason is that, while our bounds for the parameters $\alpha$ and $\alpha'$ are better than those obtained by \Cref{thm: new sparse edge cut of large profit or witness}, they are still super-logarithmic. Therefore, if we follow the framework of \cite{detbalanced}, who apply the algorithm from \Cref{thm: new sparse edge cut of large profit or witness} over the course of $O(1/\eps)$ iterations, we will still accumulate an approximation factor that is at least as high as $(\log n)^{\Theta(1/\eps)}$. 

The proofs of \Cref{thm: balanced cut high cond} and \Cref{thm: balanced cut low cond}  are very similar to each other, and we start with presenting the part that is common to both proofs.
Recall that we are given as input a graph $G$ with $|V(G)|=n$, $|E(G)|=m$ and a parameter $0<\psi\leq 1$.
We can assume w.l.o.g. that both $m$ and $n$ are greater than a large enough constant, since otherwise we can use the algorithm of \cite{detbalanced}, whose running time can now be bounded by $O(m)$. 

We use a parameter $\eps=1/(\log\log\log m)^{1/25}$. It is easy to verify that $m^{\eps}=2^{O(\log n/(\log\log \log n)^{1/25})}>\log^4 n$ must hold. 
We can also assume that $\psi<1/(2^{c/\eps^6}\log^8n)$ for a large enough constant $c$, since otherwise we can compute an arbitrary partition $(A,B)$ of $V(G)$ with $\vol_G(A),\vol_G(A)\geq \vol(G)/3$ (since for every vertex $v\in V(G)$, $\deg_G(v)<\vol(G)/2$, such a partition can be computed by a simple greedy algorithm, that iteratively adds each vertex to a set of $\set{A,B}$, whose current volume is smaller). Clearly, $|E_G(A,B)|\leq m\leq \frac{\vol(G)}{2}\leq \psi\cdot 2^{c/\eps^6} \cdot (\log^8n)\cdot \vol(G)\leq  \psi\cdot (\log n)^{8+o(1)}\cdot \vol(G)$. Therefore, from now on we assume that $\psi<1/(2^{c/\eps^6}\log^8n)$ for a large enough constant $c$.

As in the algorithm of \cite{detbalanced},
we start by applying Algorithm \reducedegree from \Cref{subsec:constant degree} to graph $G$, in order to construct, in time $O(m)$, a graph $\hG$ whose maximum vertex degree is bounded by $10$, and $|V(\hG)|=2m$. Denote $V(G)=\set{v_1,\ldots,v_n}$. Recall that graph $\hG$ is constructed from graph $G$ by replacing each vertex $v_i$ with an $\alpha_0$-expander $H(v_i)$ on $\deg_G(v_i)$ vertices, where $\alpha_0=\Theta(1)$. For convenience, we denote the set of vertices of $H(v_i)$ by $V_i$. Therefore, $V(\hG)=V_1\cup V_2\cup \cdots\cup V_n$.  Denote $|V(\hat G)|=\hat n$. Consider now some subset $S$ of vertices of $\hG$. As before, we say that $S$ is a \emph{canonical} set of vertices if, for all $1\leq i\leq n$, either $V_i\subseteq S$ or $V_i\cap S=\emptyset$ holds. 
Our starting point is the following lemma, that is an easy application of the \CMG, combined with Algorithm \makecanonical from \Cref{lem: degree reduction balanced cut case}. The proof is included in Section \ref{sec: cutting lemma} of Appendix.

\begin{lemma}\label{lem: single stage new}
	There is a deterministic algorithm, whose input consists of a canonical set $V'\subseteq V(\hat G)$ of vertices of $\hat G$, with $|V'|\geq 2\hat n/3$, 
	and parameters $0<\phi<1$ and $\rho\geq \log n$. The algorithm
	computes one of the following:
		
	\begin{itemize}
		\item either a partition $(X,Y)$ of $V'$, where both $X,Y$ are canonical subsets of $V(\hat G)$, $|X|,|Y|\geq \rho$, and $|E_{\hG}(X,Y)|\leq \phi \cdot \min\set{|X|,|Y|}$;

	\item or a $\phi^*$-expander graph $H$ with $V(H)=V'$ and maximum vertex degree $O(\log n)$, where $\phi^*\geq \Omega\left ( \frac{1}{2^{O(1/\eps^6)}\cdot \log^5 n }\right )$, together with a set $F$ of at most $O(\rho\cdot \log n)$ edges of $H$, such that there exists an embedding $\pset$ of $H\setminus F$ into $\hat G[V']$ with congestion at most $O\left (\frac{ \log^2 n}{\phi}\right )$. 
		\end{itemize}
	
	The running time of the algorithm is $O\left(m^{1+O(\eps)+o(1)}\right )$.
\end{lemma}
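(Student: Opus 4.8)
The plan is to run the Cut-Matching Game on a graph $H$ whose vertex set is $V'$, using the algorithm for the Cut Player from \Cref{thm: new cut player} to implement the Cut Player, and the fast basic path peeling algorithm \matchorcut from \Cref{thm:efficient matching player} to implement the Matching Player inside $\hat G[V']$. We maintain a graph $H$ on vertex set $V'$, starting with no edges; since $|V'|$ may be odd we add a dummy isolated vertex and a single fake edge if needed, so that $|V(H)|$ is even. We also maintain a growing set $F$ of fake edges that, conceptually, are added to $\hat G[V']$. In each iteration $i$, we invoke the Cut Player algorithm of \Cref{thm: new cut player} on the current graph $H$ (with precision parameter $\eps$). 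If it returns a cut $(A_i,B_i)$ of $H$ with $|A_i|,|B_i|\ge |V'|/4$ and $|E_H(A_i,B_i)|\le |V'|/100$, we extend $A_i$ arbitrarily to a set $A'_i$ of half the vertices, and call \matchorcut on $\hat G[V']$ with the vertex sets $A'_i$ and $B'_i=V'\setminus A'_i$, sparsity parameter $\phi/2$, and size parameter $z'=\Theta(\rho)$. If \matchorcut returns a cut, we convert it into a canonical cut via \makecanonical (\Cref{lem: degree reduction balanced cut case}) and output it; otherwise we obtain a large matching routed in $\hat G[V']$ with congestion $O(\log n/\phi)$, we add a perfect matching on the $O(\rho)$ unmatched vertices as fake edges to $F$ (embedded into themselves), add the union as the Matching Player's response to $H$, and continue. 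Once the Cut Player algorithm instead returns a subset $S\subseteq V(H)$ of size $\ge |V(H)|/2$ with $H[S]$ a $\phi^*$-expander, we perform one last iteration with $A_q=V'\setminus S$, $B_q=S$, handled identically; if no cut is produced, \Cref{obs: exp plus matching is exp} guarantees the final $H$ is a $\phi^*/2$-expander, and the accumulated embedding certifies the second outcome.

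The first step is to bound the number of iterations. By \Cref{thm:KKOV-new}, the Cut-Matching Game in this variant terminates after $O(\log n)$ iterations, so the maximum degree in $H$ stays $O(\log n)$ throughout, which is needed to apply \Cref{thm: new cut player} with the running time and expansion guarantees $\phi^*\ge\Omega(1/(2^{O(1/\eps^6)}\log^5 n))$ (plugging $\Delta_H=O(\log n)$ into the bound of that theorem). The number of fake edges added is $O(\rho\log n)$ since each of the $O(\log n)$ iterations contributes $O(\rho)$ of them. The congestion claim follows by composing: the matchings routed by \matchorcut each cause congestion $O(\log n/\phi)$ in $\hat G[V']$, and over $O(\log n)$ iterations the total congestion of the embedding of $H\setminus F$ into $\hat G[V']$ is $O(\log^2 n/\phi)$.

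The second step is to argue that when a cut is returned it has the required properties. The cut $(X,Y)$ output by \matchorcut satisfies $|X|,|Y|\ge z'/2=\Omega(\rho)$ and $|E_{\hat G}(X,Y)|\le(\phi/2)\min\{|X|,|Y|\}$; since we chose $\phi\le\alpha_0/2$ (we may assume $\phi$ small, as the problem is trivial otherwise — or rescale), \makecanonical converts it into a canonical cut $(X',Y')$ with $|X'|\ge|X|/2$, $|Y'|\ge|Y|/2$, and $|E_{\hat G}(X',Y')|\le O(|E_{\hat G}(X,Y)|)\le\phi\min\{|X'|,|Y'|\}$ after adjusting constants; choosing $z'$ a large enough constant multiple of $\rho$ ensures $|X'|,|Y'|\ge\rho$. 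We must also remove the dummy vertex and its fake edge, which changes sizes and cut value by at most a constant. The running time is dominated by $O(\log n)$ calls to \Cref{thm: new cut player} (each $O(m^{1+O(\eps)})$ since $|E(H)|=O(m\log n)$ and $\Delta_H=O(\log n)$) and $O(\log n)$ calls to \matchorcut (each $O(m^{1+o(1)})$), plus $O(\log n)$ calls to \makecanonical (each $O(m)$), for a total of $O(m^{1+O(\eps)+o(1)})$.

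The main obstacle I anticipate is not any single step but the careful bookkeeping needed to make the canonicality of the cut survive the whole pipeline: \matchorcut operates on $\hat G[V']$ and returns an arbitrary (non-canonical) cut, so \makecanonical must be invoked on the induced subgraph $\hat G[V']$ — which requires $V'$ itself to be canonical, hence the hypothesis — and one must check that the size-halving incurred by \makecanonical is compatible with the final lower bound $|X|,|Y|\ge\rho$, which is why $z'$ is set to a suitably large multiple of $\rho$. A secondary subtlety is tracking the effect of fake edges: in the expander outcome we only embed $H\setminus F$, and one must verify $|F|=O(\rho\log n)$ is small enough that it does not interfere with later applications (this is exactly what makes the lemma usable for \mbcwc, where $F$ must later be charged against $\rho$-sized cut sizes).
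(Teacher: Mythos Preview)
Your proposal is correct and follows essentially the same approach as the paper's proof: run the Cut-Matching Game on $H$ with vertex set $V'$, implement the Cut Player via \Cref{thm: new cut player} and the Matching Player via \matchorcut on $\hat G[V']$ with size parameter $z=\Theta(\rho)$ and sparsity $\phi/c$ for a large constant $c$, convert any returned cut to a canonical one via \makecanonical, collect $O(\rho)$ fake edges per iteration, and in the terminal iteration where $H[S]$ is a $\phi^*$-expander use \Cref{obs: exp plus matching is exp} to conclude $H$ itself is a $\phi^*/2$-expander. Your identification of the canonicality bookkeeping (choosing $z'$ a large enough multiple of $\rho$ so that the halving in \makecanonical still yields $|X|,|Y|\ge\rho$, and using $\phi/c$ so that the sparsity hypothesis of \makecanonical is met) and the dummy-vertex cleanup matches the paper's handling exactly.
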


We emphasize that the algorithm from the above lemma does not compute the embedding $\pset$ explicitly, and instead it only guarantees its existence.
We obtain the following immediate corollary of \Cref{lem: single stage new}.
The corollary uses the parameter $\phi^*\geq \Omega\left ( \frac{1}{2^{O(1/\eps^6)}\cdot \log^5 n }\right )$ from \Cref{lem: single stage new}.




\begin{corollary}\label{cor: balancd cut or near-expander}
	There is a deterministic algorithm, that, given a parameter $\rho\geq \log n$, and a parameter $c'>1$, computes a cut $(X^*,Y^*)$ in graph $\hat G$, such that both $X^*$ and $Y^*$ are canonical sets of vertices, $|X^*|\geq |Y^*|$, and $|E_{\hat G}(X^*,Y^*)|\leq \frac{c'\psi\log^3n}{\phi^*}\cdot \hat n$. Moreover, if $|Y^*|<\hat n/3$, then the algorithm also computes a $\phi^*$-expander graph $H$ with $V(H)=X^*$ and maximum vertex degree $O(\log n)$, together with a set $F$ of at most $O(\rho \log n)$ edges of $H$, such that there exists an embedding $\pset$ of $H\setminus F$ into $\hat G[X^*]$ with congestion $\eta\leq O\left(\frac{\phi^*}{c'\psi\log n}\right )$.
	The running time of the algorithm is $O\left(\frac{m^{2+O(\eps)+o(1)}\cdot }{\rho}\right )$.
\end{corollary}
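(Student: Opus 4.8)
```latex
\textbf{Plan.} The corollary is obtained by iteratively applying \Cref{lem: single stage new}, starting with $V'=V(\hat G)$ and repeatedly ``peeling off'' small sparse cuts. I would maintain a canonical set $V'\subseteq V(\hat G)$ that shrinks over time, keeping the invariant $|V'|\geq 2\hat n/3$ as long as the process continues. In each iteration, apply \Cref{lem: single stage new} to the current set $V'$ with the parameter $\rho$ given in the corollary, and with a sparsity parameter $\phi$ chosen so that $\phi=\Theta\!\left(\frac{c'\psi\log n}{\phi^*}\cdot(\text{something})\right)$ — the precise choice is dictated by wanting the accumulated cut edges to sum to at most $\frac{c'\psi\log^3 n}{\phi^*}\cdot\hat n$ and the final embedding congestion to be $O(\phi^*/(c'\psi\log n))$. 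If \Cref{lem: single stage new} returns a partition $(X,Y)$ of $V'$ with $|E_{\hat G}(X,Y)|\leq\phi\min\{|X|,|Y|\}$, remove the smaller side from $V'$ (say, WLOG $|Y|\leq|X|$, so we set $V'\leftarrow V'\setminus Y$), add $Y$ to an accumulator $Y^*$ of discarded vertices, and continue. If instead it returns a $\phi^*$-expander $H$ on $V'$ with the small set $F$ of missing edges and the claimed embedding, we stop: set $X^*=V'$, $Y^*=$ the accumulated discarded set, and we are done.

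\textbf{Key steps.} First, I would argue termination and the volume bound: every iteration removes at least $\rho$ vertices (by the guarantee $|Y|\geq\rho$ in \Cref{lem: single stage new}), and as long as $|Y^*|<\hat n/3$ we have $|V'|>2\hat n/3$, so the hypothesis of \Cref{lem: single stage new} remains satisfied; thus the loop runs at most $O(\hat n/\rho)=O(m/\rho)$ times. Second, the edge-count bound: each peeled cut $(X,Y)$ contributes at most $\phi|Y|$ edges to $E_{\hat G}(X^*,Y^*)$, and since the sets $Y$ peeled across iterations are disjoint and contained in $Y^*$ with $|Y^*|\le\hat n$, the total is at most $\phi\hat n$; choosing $\phi$ as above makes this at most $\frac{c'\psi\log^3 n}{\phi^*}\hat n$ (here $\phi^*=\Omega(1/(2^{O(1/\eps^6)}\log^5 n))$ is treated as a fixed function of $n$ and $\eps$, so $\phi$ is polynomially related to $\psi\log n/\phi^*$, well within the range where \Cref{lem: single stage new} applies, using $\phi<1/(2^{c/\eps^6}\log^8 n)$ on $\psi$). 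Third, the terminating case: when \Cref{lem: single stage new} outputs an expander $H$ on $X^*=V'$, it comes with $|F|=O(\rho\log n)$ missing edges and an embedding of $H\setminus F$ into $\hat G[X^*]$ of congestion $O(\log^2 n/\phi)$; substituting the chosen value of $\phi$ shows this congestion is $\eta\le O(\phi^*/(c'\psi\log n))$, as required. If the loop exits because $|Y^*|\ge\hat n/3$ instead, then we simply take any iteration's $(X^*,Y^*)$ with $|X^*|\ge|Y^*|$ — we can always re-balance by moving whole expander blocks $V_i$ between the canonical sides without increasing $|E_{\hat G}(X^*,Y^*)|$ by more than a constant factor via \Cref{lem: degree reduction balanced cut case} / the canonical structure — so that $|X^*|\ge|Y^*|$ and in this branch no expander certificate is needed.

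\textbf{Running time.} The loop has $O(m/\rho)$ iterations, each invoking \Cref{lem: single stage new} at cost $O(m^{1+O(\eps)+o(1)})$, giving total $O(m^{2+O(\eps)+o(1)}/\rho)$, matching the claimed bound.

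\textbf{Main obstacle.} The delicate point is the bookkeeping of constants: \Cref{lem: single stage new} has a \emph{fixed} sparsity parameter $\phi$ that we get to choose, but its congestion guarantee $O(\log^2 n/\phi)$ and its fake-edge count $O(\rho\log n)$ must simultaneously be driven below the targets $\eta=O(\phi^*/(c'\psi\log n))$ and the $\frac{c'\psi\log^3 n}{\phi^*}\hat n$ edge budget. Since $\phi$ appears in the numerator of the edge budget but the denominator of the congestion, there is essentially a unique valid regime $\phi\asymp \psi\log^2 n/\phi^*$ (up to the constant $c'$), and one must check that this value lies in $(0,1)$ — which follows from the a priori assumption $\psi<1/(2^{c/\eps^6}\log^8 n)$ — and that it still satisfies whatever lower-bound constraints \Cref{lem: single stage new} implicitly imposes through $\rho\ge\log n$. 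I expect this constant-chasing, together with the re-balancing argument in the $|Y^*|\ge\hat n/3$ branch, to be the only non-routine part; everything else is disjointness-of-peeled-sets accounting.
```
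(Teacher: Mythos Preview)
Your proposal is essentially the same approach as the paper's: iterate \Cref{lem: single stage new} on the current canonical set $V'$, peel off the smaller side of each sparse cut into an accumulator $Y^*$, stop either when the lemma returns the expander certificate or when $|Y^*|\ge \hat n/3$, and bound the number of iterations by $O(\hat n/\rho)=O(m/\rho)$.

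Two small points. First, your guess $\phi\asymp \psi\log^2 n/\phi^*$ is off by a $\log n$: the paper sets $\phi=\frac{c'\psi\log^3 n}{\phi^*}$, which is exactly what makes both targets meet (edge budget $\phi\hat n=\frac{c'\psi\log^3 n}{\phi^*}\hat n$ and congestion $O(\log^2 n/\phi)=O(\phi^*/(c'\psi\log n))$). Second, the ``re-balancing'' you flag as the main obstacle is not needed: when the loop exits because the accumulated $Y^*$ first reaches size $\hat n/3$, the previous $V'$ had $|V'|\ge 2\hat n/3$, and in that last iteration you removed at most half of $V'$ (since \Cref{lem: single stage new} returns $|X|,|Y|\ge\rho$ and you peel the smaller side), so the remaining $X^*$ automatically has $|X^*|\ge \hat n/3$ as well. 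Both sides are already canonical by the lemma's guarantee; if it happens that $|X^*|<|Y^*|$ you simply swap labels. No block-moving or appeal to \Cref{lem: degree reduction balanced cut case} is required.
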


\begin{proof}
	The proof easily follows by iteratively applying the algorithm from  \Cref{lem: single stage new}.
	Let  $\phi= \frac{c'\psi\log^3n}{\phi^*}$. Our algorithm consists of at most $r=\frac{\hat n}{\rho}$ iterations. For all $1\leq i\leq r$, at the beginning of iteration $i$, we are given  a partition $(X_i,Y_i)$ of $V(\hat G)$, such that both $X_i,Y_i$ are canonical sets, $|X_i|\geq \frac{2\hat n}3$, and $|E_{\hat G}(X_i,Y_i)|\leq \phi\cdot |Y_i|$. At the beginning of the algorithm, we use the partition $(X_1,Y_1)$ of $V(\hat G)$, with $X_1=V(\hat G)$ and $Y_1=\emptyset$. We now describe the execution of the $i$th iteration.
	
	In order to execute the $i$th iteration, we apply the algorithm from \Cref{lem: single stage new} to set $V'=X_i$ of vertices of $\hat G$, and parameters $\rho$ and $\phi$, that remain unchanged. We now consider two cases. 
	
	Assume first, that the algorithm returned a partition $(X',Y')$ of $X_i$, with $|X'|,|Y'|\geq \rho$, and $|E_{\hG}(X',Y')|\leq \phi \cdot \min\set{|X'|,|Y'|}$, such that both $X'$ and $Y'$ are canonical sets of vertices. Assume w.l.o.g. that $|X'|\geq |Y'|$. We then construct a new cut $(X_{i+1},Y_{i+1})$ in $\hat G$, by letting $X_{i+1}=X'$ and $Y_{i+1}=V(\hat G)\setminus X'=Y'\cup Y_i$. Note that cut  $(X_{i+1},Y_{i+1})$ can be obtained from cut  $(X_{i},Y_{i})$ by moving the vertices of $Y'$ from $X_{i}$ to $Y_i$. Therefore,  $E_{\hat G}(X_{i+1},Y_{i+1})\subseteq E_{\hat G}(X_{i},Y_{i})\cup E_{\hat G}(X',Y')$, and:
	
	\[|E_{\hat G}(X_{i+1},Y_{i+1})|\leq  |E_{\hat G}(X_{i},Y_{i})|+|E_{\hat G}(X',Y')|\leq \phi\cdot |Y_i|+\phi\cdot |Y'|\leq \phi \cdot |Y_{i+1}|. \]

	If $|X_{i+1}|\geq 2\hat n/3$ holds, then we continue to the next iteration. Otherwise, since $|X_i|\geq 2\hat n/3$, we get that $|X_{i+1}|\geq |X_i|/2\geq \hat n/3$. Therefore, we obtain a partition $(X_{i+1},Y_{i+1})$ of $V(\hat G)$ with $|X_{i+1}|,|Y_{i+1}|\geq \hat n/3$, and $|E_{\hat G}(X_{i+1},Y_{i+1})|\leq \phi \cdot |Y_{i+1}|\leq \phi\cdot \hat n =\frac{c'\psi\log^3n}{\phi^*}\cdot\hat n$. We then return cut $(X^*,Y^*)=(X_{i+1},Y_{i+1})$ and terminate the algorithm. 


Next, we assume that the algorithm from \Cref{lem: single stage new} returned 
a $\phi^*$-expander graph $H$ with $V(H)=X_i$ and maximum vertex degree $O(\log n)$, together with a set $F$ of at most $O(\rho\cdot \log n)$ edges of $H$, such that there exists an embedding $\pset$ of $H\setminus F$ into $\hat G[X_i]$ with congestion at most $O\left (\frac{ \log^2 n}{\phi}\right )\leq  O\left(\frac{\phi^*}{c'\psi\log n}\right )$. 
In this case, we return the cut $(X^*,Y^*)=(X_i,Y_i)$, graph $H$, and set $F$ of edges. 

It now remains to bound the running time of the algorithm. Notice that in every iteration, the cardinality of the set $Y_i$ of vertices grows by at least  $\rho$, and, since $|V(\hat G)|=2m$, the number of iterations is bounded by $2m/\rho$. In each iteration we use the algorithm from \Cref{lem: single stage new}, whose running time is at most $O\left(m^{1+O(\eps)+o(1)}\right )$. Overall, the running time of the algorithm is bounded by $O\left(\frac{m^{2+O(\eps)+o(1)}}{\rho}\right )$.
\end{proof}

We are now ready to complete the proofs of \Cref{thm: balanced cut high cond} and \Cref{thm: balanced cut low cond}.

\subsubsection{Proof of \Cref{thm: balanced cut high cond}}
We first consider a special case, where $\psi<\frac{\log^4m}{m}$. In this case, our algorithm simply repeatedly cuts off low-conductance cuts from graph $G$. Specifically, we perform a number of iterations, and we maintain a vertex-induced subgraph $G'\subseteq G$. We also maintain a cut $(A,B)$ in $G$, with $A=V(G')$ and $B=V(G)\setminus V(G')$. Initially, we set $G'=G$, $A=V(G)$, and $B=\emptyset$. The algorithm continues as long as $\vol_G(A)\geq 2\vol(G)/3$. 
Let $\alpha=O(\log^7n\log\log n)$ be the approximation factor that the algorithm for the \lowcond problem from \Cref{thm: sparsest and lowest cond} achieves.
Throughout the algorithm, we will ensure that $|E_G(A,B)|\leq \alpha\cdot \psi\cdot \vol_G(B)$ holds.

In order to execute a single iteration, we apply the approximation algorithm for the \lowcond problem from \Cref{thm: sparsest and lowest cond} to the current graph $G'$. Let $(X,Y)$ be the cut that the algorithm returns, and let $\psi'=\frac{|E_{G'}(X,Y)|}{\min\set{\vol_{G'}(X),\vol_{G'}(Y)}}$ be the conductance of the cut. Assume first that $\psi'> \alpha\psi$. Then we are guaranteed that graph $G'$ has conductance at least $\psi$. We return the cut $(A,B)$ and terminate the algorithm. Since $|E_G(A,B)|\leq \alpha\cdot \psi\cdot \vol_G(B)\leq \psi\cdot (\log n)^{7+o(1)}\cdot \vol(G)$, this is a valid output for the algorithm.

Assume now that $\psi'\leq \alpha\psi$, and assume w.l.o.g. that $\vol_G(X)\geq\vol_G(Y)$. Then $|E_{G'}(X,Y)|\leq \psi'\cdot \vol_{G'}(Y)\leq \alpha\psi\vol_G(Y)$. Consider a new cut $(A',B')$ in $G$, where $A'=X$ and $B'=V(G)\setminus X=B\cup Y$. It is then easy to verify that:

\[|E_G(A',B')|\leq |E_G(A,B)|+|E_{G'}(X,Y)|\le \alpha\psi\vol_G(A)+\alpha\psi\vol_G(Y)\leq \alpha\psi\vol_G(B').\]

If $\vol_G(A')=\vol_G(X)\geq 2\vol(G)/3$, then we replace cut $(A,B)$ with cut $(A',B')$, and continue to the next iteration. Otherwise, $\vol_G(A')\geq \vol_G(A)/2\geq \vol(G)/3$ must hold, and $\vol_G(B')\geq \vol(G)-\vol_G(A')\geq \vol(G)/3$ holds as well. We return cut $(A',B')$ and terminate the algorithm. As before, since $|E_G(A',B')|\leq \alpha\cdot \psi\cdot \vol_G(B')\leq \psi\cdot (\log n)^{7+o(1)}\cdot \vol(G)$, this is a valid output of the algorithm.

It now remains to analyze the running time of the algorithm. The number of iterations in the algorithm is bounded by $O(m)$, and the running time of a single iteration is dominated by the running time of the algorithm from 
\Cref{thm: sparsest and lowest cond}, which is bounded by $O\left (m^{1+o(1)}\right )$. Therefore, the running time of the algorithm is bounded by $O\left (m^{2+o(1)}\right )\leq O\left (m^{1+o(1)}/\psi\right )$, since we have assumed that $\psi<\frac{\log^4m}{m}$. 

In the remainder of the proof, we assume that $\psi\geq \frac{\log^4m}{m}$.
We start by computing a graph $\hat G$ exactly as described above, and then applying the algorithm from \Cref{cor: balancd cut or near-expander} to it, with parameter $\rho=\frac{\psi m}{\tilde c^2\log n}$, where $\tilde c$ is a large constant, whose value we set later, and parameter $c'$, that is a large enough constant, whose value we set later. Since $\psi\geq \frac{\log^4m}{m}$, and $m$ is large enough, we get that $\rho>\log n$. Let $(X^*,Y^*)$ be the outcome of the algorithm.

Recall that both $X^*$ and $Y^*$ are canonical sets of vertices, $|X^*|\geq |Y^*|$, and  $|E_{\hat G}(X^*,Y^*)|\leq \frac{c'\psi\log^3n}{\phi^*}\cdot \hat n$. Notice that cut $(X^*,Y^*)$ in $\hat G$ naturally defines a cut $(A,B)$ in $G$: for every vertex $v_i\in V(G)$, we add $v_i$ to $A$ if $V_i\subseteq X^*$, and we add it to $B$ otherwise. It is immediate to verify that $\vol_G(A)=|X^*|$, $\vol_G(B)=|Y^*|$, and $|E_G(A,B)|=|E_{\hat G}(X^*,Y^*)|\leq \frac{c'\psi\log^3n}{\phi^*}\cdot \hat n\leq \psi\cdot 2^{O(1/\eps^6)} \cdot (\log^8n)\cdot \vol(G)\leq \psi \cdot (\log n)^{8+o(1)}\cdot \vol(G)$, since $\phi^*\geq \Omega\left ( \frac{1}{2^{O(1/\eps^6)}\cdot \log^5 n }\right )$ and $\eps=1/(\log\log\log m)^{1/25}$.

Consider first the case that $|Y^*|\geq \hat n/3$. Then, from the above discussion, $\vol_G(A),\vol_G(B)\geq \hat n/3=\vol(G)/3$. In this case,
we return cut $(A,B)$ as the outcome of the algorithm. 

We assume from now on that $|Y^*|<\hat n/3$, and so the algorithm from \Cref{cor: balancd cut or near-expander} computed a $\phi^*$-expander graph $H$ with $V(H)=X^*$ and maximum vertex degree $O(\log n)$, together with a set $F$ of at most $O(\rho\log n)$ edges of $H$, such that 
there exists an embedding $\pset$ of $H\setminus F$ into $\hat G[X^*]$ with congestion at most $\eta$, where $\eta\leq O\left(\frac{\phi^*}{c'\psi\log n}\right )$. 
Since $\rho=\frac{\psi m}{\tilde c^2\log n}$, and $\tilde c$ is a sufficiently large constant, we get that $|F|\leq  \frac{\psi m}{\tilde c}$.
Let $\hat G'$ be the graph that is obtained from $\hat G[X^*]$, by adding the edges of $F$ to it. We need the following simple observation.

\begin{observation}\label{obs: we get expander}
	Graph $\hat G'$ is a $\phi'$-expander, for $\phi'= \Omega(c'\psi \log n)$.
\end{observation}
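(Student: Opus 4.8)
The plan is to show that $\hat G' = \hat G[X^*] \cup F$ is an expander by transferring the expansion of the abstract expander $H$ to $\hat G'$ via the low-congestion embedding $\pset$ of $H \setminus F$. First I would recall the setup: $H$ is a $\phi^*$-expander on vertex set $X^*$ with maximum degree $O(\log n)$, the set $F \subseteq E(H)$ has $|F| \le O(\rho \log n)$, and $\pset$ embeds $H \setminus F$ into $\hat G[X^*]$ with congestion at most $\eta \le O\left(\frac{\phi^*}{c'\psi\log n}\right)$. The key point is that $\hat G' = \hat G[X^*] + F$ contains, for every edge $e \in E(H)$, a "representative" in $\hat G'$: either $e \in F$ and it is literally an edge of $\hat G'$, or $e \in E(H) \setminus F$ and it is embedded via a path $P(e) \in \pset$ lying inside $\hat G[X^*] \subseteq \hat G'$.

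The main step is a standard cut-counting argument. Fix any cut $(X,Y)$ of $\hat G'$ (equivalently, of $X^*$) with $|X| \le |Y|$; we want $|E_{\hat G'}(X,Y)| \ge \phi' \cdot |X|$ for $\phi' = \Omega(c'\psi\log n)$. Since $H$ is a $\phi^*$-expander, $|E_H(X,Y)| \ge \phi^* \cdot |X|$. Now every edge $e \in E_H(X,Y)$ either lies in $F$ — in which case it contributes to $E_{\hat G'}(X,Y)$ directly — or lies in $E(H)\setminus F$, in which case its embedding path $P(e)$ starts in $X$ and ends in $Y$, hence contains at least one edge of $E_{\hat G'}(X,Y)$. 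Since $\pset$ causes congestion at most $\eta$, each edge of $E_{\hat G'}(X,Y)$ is used by at most $\eta$ embedding paths, so at least $\frac{|E_H(X,Y) \setminus F|}{\eta}$ distinct edges of $E_{\hat G'}(X,Y)$ are produced this way. Combining, $|E_{\hat G'}(X,Y)| \ge \max\left\{|E_H(X,Y) \cap F|,\ \frac{|E_H(X,Y)\setminus F|}{\eta}\right\} \ge \frac{|E_H(X,Y)|}{2\eta} \ge \frac{\phi^*|X|}{2\eta}$, using $\eta \ge 1$. Since $\eta \le O\left(\frac{\phi^*}{c'\psi\log n}\right)$, this gives $|E_{\hat G'}(X,Y)| \ge \Omega(c'\psi\log n) \cdot |X|$, as desired. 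This is essentially the same argument as in \Cref{lem: embedding expander w fake edges gives expander} / \Cref{obs: exp plus matching is exp}, specialized to the current parameters, and I would cite those in passing.

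The one place requiring mild care — and the step I expect to be the only real obstacle — is checking that the constant in $\eta \le O\left(\frac{\phi^*}{c'\psi\log n}\right)$ from \Cref{cor: balancd cut or near-expander} is genuinely small enough relative to the claimed $\phi' = \Omega(c'\psi\log n)$, i.e. that $\frac{\phi^*}{2\eta} \ge c'' \cdot c'\psi\log n$ for the intended absolute constant $c''$. This amounts to tracking the hidden constants through \Cref{lem: single stage new} and the corollary; since $c'$ appears as a free large constant that we get to choose, and the congestion bound $\eta = O\left(\frac{\log^2 n}{\phi}\right)$ with $\phi = \frac{c'\psi\log^3 n}{\phi^*}$ gives $\eta = O\left(\frac{\phi^*}{c'\psi\log n}\right)$ with an absolute constant in the $O(\cdot)$, the inequality $\frac{\phi^*}{2\eta} = \Omega(c'\psi\log n)$ holds with an absolute constant independent of $c'$, so $\phi' = \Omega(c'\psi\log n)$ follows. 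I would state this computation explicitly but briefly, noting that the implied constants are absolute, and conclude that $\Phi(\hat G') \ge \phi'$, completing the proof of the observation.
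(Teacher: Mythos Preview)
Your proposal is correct and follows essentially the same argument as the paper: fix a cut, use expansion of $H$ to lower-bound $|E_H(X,Y)|$, then transfer across the low-congestion embedding to lower-bound $|E_{\hat G'}(X,Y)|$ by $\phi^*|X|/O(\eta) = \Omega(c'\psi\log n)\cdot|X|$. The only cosmetic difference is that the paper first augments $\pset$ by embedding each edge of $F$ into itself (a length-$1$ path in $\hat G'$), obtaining a single embedding of all of $H$ into $\hat G'$ with congestion still at most $\eta$; this lets it handle all edges of $E_H(X,Y)$ uniformly and write $|E_{\hat G'}(X,Y)|\ge |E_H(X,Y)|/\eta$ directly, avoiding your case split and the factor~$2$.
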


\begin{proof}
	Consider any partition $(A',B')$ of $V(\hat G')$, with $|A'|\leq |B'|$. It is enough to prove that $|E_{\hat G'}(A',B')|\geq \phi'\cdot |A'|$. Recall that the set $\pset$ of paths defines an embedding of $H\setminus F$ into $\hat G[X^*]$ with congestion at most $\eta$. By embedding every edge of $F$ into itself, we can augment the set $\pset$ of paths to obtain an embedding of $H$ into $\hat G'=\hat G[X^*]\cup F$, with congestion at most $\eta$. 
	
	Notice that cut $(A',B')$ in $\hat G'$ also defines a cut in graph $H$. Since graph $H$ is a $\phi^*$-expander, if we denote by $E'=E_{H}(A',B')$, then $|E'|\geq \phi^*\cdot |A'|$. Consider now the set $\pset'=\set{P(e)\mid e\in E'}$ of paths, where for each edge $e\in E'$, $P(e)$ is the path of $\pset$ that serves as an embedding of the edge $e$. Then every path in $\pset'$ must contain an edge of $E_{\hat G'}(A',B')$, and the paths in $\pset'$ cause congestion at most $\eta$. Therefore, $|E_{\hat G'}(A',B')|\geq \frac{|E'|}{\eta}\geq \frac{\phi^*\cdot |A'|}{\eta}\geq \Omega(c'\psi |A'|\log n)=\phi'\cdot |A'|$, since $\eta \leq O\left(\frac{\phi^*}{c'\psi\log n}\right )$.
%
\end{proof}

Recall that we have used the cut $(X^*,Y^*)$ in $\hat G$ to define a cut $(A,B)$ in $G$, with $\vol_G(A)=|X^*|$, $\vol_G(B)=|Y^*|$, and  $|E_G(A,B)|=|E_{\hat G}(X^*,Y^*)|\leq \psi\cdot (\log n)^{8+o(1)}\cdot \vol(G)$. Consider now a graph $G'$, that, intuitively, is obtained from $G[A]$, by adding the edges of $F$ to it. Formally, in order to obtain graph $G'$, we start from graph $G[A]$, and the consider the edges $e\in F$ one by one. Let $e=(x,y)$ be any such edge, and let $v_i,v_j\in A$ be the vertices with $x\in V_i$ and $y\in V_j$. If $i\neq j$, then we add edge $e'=(v_i,v_j)$ to graph $G'$, and we think of $e'$ as a \emph{copy of edge $e$}. Notice that graph $G'$ can be equivalently obtained from graph $\hat G'$ by contracting, for every vertex $v_i\in A$, all vertices of $V_i$ into a single node. 
Next, we use the following easy observation:

\begin{observation}\label{obs: contracted graph high conductance}
 	Graph $G'$ has conductance at least $24\psi$.
 \end{observation}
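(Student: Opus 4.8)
The plan is to reduce the statement to the expansion bound for $\hat G'$ already obtained in \Cref{obs: we get expander}, by showing that every cut of the contracted graph $G'$ corresponds to a canonical cut of $\hat G'$, with the number of crossing edges exactly preserved and the relevant volume changing by at most an $O(\log n)$ factor; since $\phi'=\Omega(c'\psi\log n)$ carries a spare $\log n$, this factor is harmless and a large enough choice of the constant $c'$ yields conductance at least $24\psi$. Recall that $G'$ is obtained from $\hat G'=\hat G[X^*]\cup F$ by contracting, for every $v_i\in A$, the vertex set $V_i$ of the expander $H(v_i)$ into the single node $v_i$, keeping type-$2$ edges and those fake edges whose two endpoints lie in distinct sets $V_i,V_j$. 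We may assume $G$ has no isolated vertices, so each $V_i\neq\emptyset$.

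First I would set up the lift. Given any cut $(Z,Z')$ of $G'$, put $\tilde Z=\bigcup_{v_i\in Z}V_i$ and $\tilde Z'=\bigcup_{v_i\in Z'}V_i$; these form a partition of $V(\hat G')=X^*$ into nonempty canonical sets. I would then verify the identity $|E_{G'}(Z,Z')|=|E_{\hat G'}(\tilde Z,\tilde Z')|$ by going over the three edge types of $\hat G'$: type-$1$ edges are internal to some $V_i$ and never cross; type-$2$ edges connect vertices of distinct sets $V_i,V_j$ and correspond bijectively to their copies in $G'$, a copy crossing $(Z,Z')$ iff the corresponding $\hat G'$ edge crosses $(\tilde Z,\tilde Z')$; and fake edges of $F$ with endpoints in one $V_i$ are self-loops discarded in $G'$ and never cross in $\hat G'$, while those with endpoints in distinct $V_i,V_j$ again match their copies. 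Hence crossing edges are in bijection.

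Next I would bound $\vol_{G'}$ against vertex counts in $\hat G'$. For $v_i\in A$, every edge of $G'$ incident to $v_i$ is either a copy of an original edge of $G$ at $v_i$ (there are $\deg_G(v_i)=|V_i|$ of those, at most) or a copy of a fake edge of $F$ with one endpoint in $V_i$; since $F\subseteq E(H)$ and $H$ has maximum degree $O(\log n)$, at most $O(\log n)\cdot|V_i|$ fake edges are incident to $V_i$. Therefore $\deg_{G'}(v_i)\le O(\log n)\,|V_i|$, and summing over $v_i\in Z$ and using $\sum_{v_i\in Z}|V_i|=|\tilde Z|$ gives $\vol_{G'}(Z)\le O(\log n)\,|\tilde Z|$, and symmetrically $\vol_{G'}(Z')\le O(\log n)\,|\tilde Z'|$; so $\min\{\vol_{G'}(Z),\vol_{G'}(Z')\}\le O(\log n)\cdot\min\{|\tilde Z|,|\tilde Z'|\}$.

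Finally I would combine these with \Cref{obs: we get expander}: since $\hat G'$ is a $\phi'$-expander with $\phi'=\Omega(c'\psi\log n)$,
\[|E_{G'}(Z,Z')| \;=\; |E_{\hat G'}(\tilde Z,\tilde Z')| \;\ge\; \phi'\cdot \min\{|\tilde Z|,|\tilde Z'|\} \;\ge\; \frac{\phi'}{O(\log n)}\cdot \min\{\vol_{G'}(Z),\vol_{G'}(Z')\}.\]
As $\phi'/O(\log n)=\Omega(c')\cdot\psi$, choosing the constant $c'$ large enough (larger than all other constraints imposed on it in this proof) makes the ratio at least $24\psi$, so every cut of $G'$ has conductance at least $24\psi$, i.e. $\Psi(G')\ge 24\psi$. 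The only mildly delicate part is the edge-type bookkeeping behind the identity $|E_{G'}(Z,Z')|=|E_{\hat G'}(\tilde Z,\tilde Z')|$ together with the maximum-degree accounting for $\hat G'$; once these are in place the conclusion is a one-line estimate.
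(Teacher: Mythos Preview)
Your proposal is correct and follows essentially the same approach as the paper: lift a cut $(Z,Z')$ of $G'$ to the canonical cut $(\tilde Z,\tilde Z')$ of $\hat G'$, use the identity $|E_{G'}(Z,Z')|=|E_{\hat G'}(\tilde Z,\tilde Z')|$, bound $\vol_{G'}(Z)\le O(\log n)\,|\tilde Z|$ via the $O(\log n)$ maximum degree of $H$, apply \Cref{obs: we get expander}, and absorb the $O(\log n)$ loss into the choice of $c'$. Your edge-type bookkeeping is slightly more explicit than the paper's, but the argument is the same.
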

\begin{proof}
	Consider any cut $(X,Y)$ in $G'$. We can naturally define a corresponding cut $(\hat X,\hat Y)$ in graph $\hat G'$: for every vertex $v_i\in X$, we add all vertices of $V_i$ to $\hat X$, and for every vertex $v_i\in Y$, we add all vertices of $V_i$ to $\hat Y$. It is easy to verify that $|\hat X|=\vol_G(X)$. Since every vertex of $X^*$ is incident to at most $c^*\log n$ edges of $F$, for some constant $c^*$ (as maximum vertex degree in $H$ is $O(\log n)$), it is easy to verify that, for every vertex $v_i\in A$, $\deg_{G'}(v_i)\leq (c^*\log n)\deg_G(v_i)$. Therefore, $\vol_{G'}(X)\leq (c^*\log n)\vol_G(X)$, and so $|\hat X|\geq \frac{\vol_{G'}(X)}{c^*\log n}$. Using similar reasoning, $|\hat Y|\geq \frac{\vol_{G'}(Y)}{c^*\log n}$. Lastly, since, from \Cref{obs: we get expander},
	graph $\hat G'$ is a $\phi'$-expander, for $\phi'=\Omega(c'\psi \log n)$, we get that $|E_{G'}(X,Y)|=|E_{\hat G'}(\hat X,\hat Y)|\geq \phi'\cdot \min\set{|\hat X|,|\hat Y|}\geq  \Omega\left(\frac{c'\psi}{c^*}\right)\cdot \min\set{\vol_{G'}(X),\vol_{G'}(Y)}$. Since $c^*$ is a fixed constant, and we can set $c'$ to be a large enough constant, we get that $|E_{G'}(X,Y)|\geq 24\psi\cdot\min\set{\vol_{G'}(X),\vol_{G'}(Y)}$.  We conclude that the conductance of graph $G'$ is at least $24\psi$.
\end{proof}

Next, we use the following pruning theorem of \cite{expander-pruning}, that works with graph conductance instead of expansion.

\begin{theorem}[Restatement of Theorem 1.3 from~\cite{expander-pruning}]\label{thm: expander pruning2}
	There is a deterministic algorithm, that, given access to adjacency list of a graph $G=(V,E)$ that has conductance $\psi$, for some $0<\psi\leq 1$, and a collection $E'\subseteq E$ of $ {k\leq \psi |E|/10}$ edges of $G$, computes a subgraph $G'\subseteq G\setminus E'$, that has conductance at least $ \psi/6$. Moreover, if we denote $A=V(G')$ and $B=V(G)\setminus A$, then $|E_G(A,B)| \leq 4k$, and $\vol_G(B)\leq 8k/\psi$.
	The running time of the algorithm is $ {\tilde O(k/\psi^2)}$.
\end{theorem}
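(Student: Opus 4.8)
The statement is the original conductance-based expander-pruning theorem of Saranurak and Wang, i.e.\ Theorem~1.3 of \cite{expander-pruning} itself; the only discrepancy from the verbatim statement there is that here the $k$ edges of $E'$ are handed to us as a single batch rather than as an online sequence, which is exactly the special case where the pruned set is read off after all deletions have been performed. So the cleanest route is simply to invoke \cite{expander-pruning} as a black box, in the same way that its degree-bounded, expansion-phrased adaptation was invoked in \Cref{thm: expander pruning}. In the write-up I would state the theorem as a direct quotation of Theorem~1.3 of \cite{expander-pruning} and note the batch-vs.-online simplification.

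If instead one wants a self-contained argument using only tools already developed in this paper, the plan would be to derive the conductance version from the degree-bounded expansion version \Cref{thm: expander pruning} by expander replacement. Concretely: run \reducedegree on $G$ to obtain $\hat G$ with maximum degree at most $10$ and $|V(\hat G)|=2m$, in which each vertex $v_i$ is blown up into an $\alpha_0$-expander $H(v_i)$ on $\deg_G(v_i)$ vertices. First I would establish the replacement lemma $\Phi(\hat G)=\Omega(\psi)$: any cut of $\hat G$ can be made canonical at the cost of only constant factors in both its size and its edge count by \makecanonical (\Cref{lem: degree reduction balanced cut case}; if a cut is too large for that lemma to apply then its sparsity already exceeds a constant and we are done since $\psi\le 1$), and a canonical cut of $\hat G$ is in bijection with a cut $(A,B)$ of $G$ with the same crossing-edge set and with side-cardinalities equal to the $G$-volumes, so the conductance lower bound for $G$ transfers to an expansion lower bound for $\hat G$. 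Then I would map the $k$ edges of $E'$ to their images $\hat e$ in $\hat G$, feed these deletions one at a time to the algorithm of \Cref{thm: expander pruning} with $\Delta=10$ and $\phi=\Phi(\hat G)$, obtain the pruned set $\tilde U\subseteq V(\hat G)$, and read off $B=\{v_i : V_i\cap \tilde U\neq\emptyset\}$, $A=V(G)\setminus B$, $G'=G[A]$. The guarantees $|E_{\hat G}(\tilde U,\cdot)|=O(k)$, $|\tilde U|=O(k\Delta/\phi)$, and expansion $\Omega(\Phi(\hat G)/\Delta)$ of the pruned subgraph then translate into $|E_G(A,B)|=O(k)$, $\vol_G(B)=O(k/\psi)$, and conductance $\Omega(\psi)$ of $G'$, while the running time $\tilde O(k\Delta^2/\phi^2)=\tilde O(k/\psi^2)$ matches.

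The main obstacle in that self-contained route is keeping the constants exactly as stated, together with the fact that $\tilde U$ need not be canonical. The expansion lost in the replacement is a constant factor of order $1/\alpha_0$ (plus the \makecanonical overhead), and combined with $\Delta=10$ in the precondition $k\le \phi|E(\hat G)|/(10\Delta)$ of \Cref{thm: expander pruning} this only yields $k\le \psi m/O(1)$ rather than the sharper $k\le \psi m/10$ written here; recovering the precise constant would require a tighter replacement lemma, or the observation that every downstream use of this theorem is robust to constant factors. Moreover, since $\tilde U$ may contain some but not all copies of a split vertex, $G[A]$ can be a proper induced subgraph of $\hat G^{(k)}\setminus\tilde U$ and hence does not inherit expansion automatically; this I would repair by a final cleanup pass in the spirit of \makecanonical that absorbs any split vertex with a partially-pruned gadget entirely into $B$ and charges its few extra crossing edges to the pruning bound. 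Because of these nuisances I would prefer, for the actual proof, to cite \cite{expander-pruning} directly rather than carry out the reduction.
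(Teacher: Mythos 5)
Your primary route — citing Theorem~1.3 of \cite{expander-pruning} directly as a black box — is exactly what the paper does: it states the result as a restatement of that theorem and gives no proof, so there is nothing further to verify. Your correctly note that the batch setting is a special case of the online one, and your fallback reduction via \reducedegree is unnecessary here (and, as you observe, would not recover the stated constants anyway).
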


We apply the algorithm from \Cref{thm: expander pruning2} to graph $G'$, conductance parameter $24\psi$, and the set $E'=F$ of edges. 
Since we have assumed that $|Y^*|<\hat n/3=\vol(G)/3$, we get that $|X^*|\geq 2\hat n/3\geq 2\vol(G)/3$. At the same time, $|E_G(X^*,Y^*)|<\psi\cdot 2^{O(1/\eps^6)} \cdot (\log^8n)\cdot \vol(G)<\vol(G)/10$, since we have assumed that $\psi<\frac{1}{2^{c/\eps^6} \cdot (\log^8n)}$ for a large enough constant $c$. Therefore, $|E(G')|\geq \frac{\vol_G(X^*)-|E_G(X^*,Y^*)|}{2}\geq \frac{\vol(G)}{4}\geq \frac{m}{2}$.
 Recall that $|F|\leq \frac{\psi m}{\tilde c}$ holds, where $\tilde c$ is a large enough constant. Therefore, $|F|\leq  \frac{24\psi |E(G')|}{10}$.

 The algorithm must then return a partition $(Z,Z')$ of $V(G')=A$, such that graph $G[Z]$ has conductance at least $\frac{24\psi}{6}\geq \psi$, and $\vol_{G'}(Z')\leq \frac{8|F|}{24\psi}\leq \frac{m}{3\tilde c}$, while $|E_{G}(Z,Z')|\leq 4|F|\leq \frac{4\psi m}{\tilde c}\leq O(\psi)\cdot \vol(G)$.

 We construct a cut $(A^*,B^*)$ in graph $G$, by letting $A^*=Z$ and $B^*=Z'\cup B$. 
Notice that $|E_G(A^*,B^*)|\subseteq |E_G(A,B)|+|E_G(Z,Z')|\leq \psi \cdot (\log n)^{8+o(1)}\cdot \vol(G)$. Additionally, we get that:

\[\vol_G(A^*)=\vol_G(Z)\geq \vol_G(A)-\vol_G(Z')\geq \frac{2\vol(G)}{3}-\frac{m}{3\tilde c}\geq \frac{7\vol(G)}{12}.  \]

If $\vol_G(B^*)\geq \vol(G)/3$, then we get that $\vol_G(A^*),\vol_G(B^*)\geq \vol(G)/3$. Otherwise, we are guaranteed that $\vol_G(A^*)\geq 2\vol(G)/3$, and that the conductance of graph $G[A^*]$ is at least $\psi$.

It now remains to bound the running time of the algorithm. The running time of the algorithm from \Cref{cor: balancd cut or near-expander} is bounded by 
$O\left(\frac{m^{2+O(\eps)+o(1)}}{\rho}\right )\leq  O\left(\frac{m^{1+o(1)}}{\psi}\right )$, since $\rho=\frac{\psi m}{\tilde c^2\log n}$ and $\eps=1/(\log\log\log m)^{1/25}$. The running time of the algorithm from \Cref{thm: expander pruning2} is $\tilde O\left(\frac{|F|}{\psi^2}\right )\leq \tilde O\left(\frac{m}{\psi}\right )$. Therefore, the total running time of the algorithm is bounded by $O\left(\frac{m^{1+o(1)}}{\psi}\right )$.

\subsubsection{Proof of \Cref{thm: balanced cut low cond}}


The proof is very similar to the proof of  \Cref{thm: balanced cut high cond}. The main difference is that we set $\rho=m^{1-2\eps}$, and that we do not employ
\Cref{thm: expander pruning2}.

We start by computing a graph $\hat G$ exactly as before, and then applying the algorithm from \Cref{cor: balancd cut or near-expander} to it, with parameter $\rho=m^{1-2\eps}$, and $c'$ a large constant whose value we set later. Let $(X^*,Y^*)$ be the outcome of the algorithm.
Recall that both $X^*$ and $Y^*$ are canonical sets of vertices, $|X^*|\geq |Y^*|$, and $|E_{\hat G}(X^*,Y^*)|\leq \frac{c'\psi\log^3n}{\phi^*}\cdot \hat n\leq (\log n)^{8+o(1)}\cdot \vol(G)$, since $\phi^*\geq \Omega\left ( \frac{1}{2^{O(1/\eps^6)}\cdot \log^5 n }\right )$ and $\eps=\frac{1}{(\log\log\log n)^{1/25}}$. We use cut $(X^*,Y^*)$ in $\hat G$ in order to define a cut $(A,B)$ in $G$ exactly as before. As before, $\vol_G(A)=|X^*|$, $\vol_G(B)=|Y^*|$, and $|E_G(A,B)|=|E_{\hat G}(X^*,Y^*)|\leq  \psi \cdot (\log n)^{8+o(1)}\cdot \vol(G)$.
As before, if $|Y^*|\geq \hat n/3$, then  $\vol_G(A),\vol_G(B)\geq \hat n/3=\vol(G)/3$, and
we return cut $(A,B)$ as the outcome of the algorithm.

We assume from now on that $|Y^*|<\hat n/3$, and so the algorithm from \Cref{cor: balancd cut or near-expander} computed a $\phi^*$-expander graph $H$ with $V(H)=X^*$ and maximum vertex degree $O(\log n)$, together with a set $F$ of at most $O(\rho\log n)$ edges of $H$, such that 
there exists an embedding $\pset$ of $H\setminus F$ into $\hat G[X^*]$ with congestion at most $\eta$, where $\eta\leq O\left(\frac{\phi^*}{c'\psi\log n}\right )$. 
Notice that in this case, $\vol_G(A)\geq |X^*|\geq 2\hat n/3=2\vol(G)/3$ holds.

Consider any partition $(Z,Z')$ of $A$, with   $|E_G(Z,Z')|< \psi\cdot \vol(G)$, and assume w.l.o.g. that $\vol_G(Z)\leq \vol_G(Z')$. We claim that $\vol_G(Z)< \vol(G)/100$ holds. Indeed, assume otherwise. Consider a cut $(\hat Z,\hat Z')$ in graph $H$, obtained as follows: for every vertex $v_i\in Z$, we add all vertices of $V_i$ to $\hat Z$, and for every vertex $v_j\in Z'$, we add all vertices of $V_j$ to $\hat Z'$. Clearly, $|\hat Z|=\vol_G(Z)\geq \vol(G)/100$, and similarly $|\hat Z'|\geq\vol(G)/100$. Since graph $H$ is a $\phi^*$-expander, $|E_H(\hat Z,\hat Z')|\geq \phi^*\cdot\min\set{|\hat Z|,|\hat Z'|}\geq \frac{\phi^*\cdot \vol(G)}{100}$. Denote $E'=E_H(\hat Z,\hat Z')$.
Recall that $\phi^*\geq \Omega\left ( \frac{1}{2^{O(1/\eps^6)}\cdot \log^5 n }\right )$, and so $|E'|\geq \Omega\left ( \frac{m}{2^{O(1/\eps^6)}\cdot \log^5 n }\right )$. On the other hand, $|F|\leq O(\rho\log n)\leq O(m^{1-2\eps}\cdot \log n)$. From our choice of $\eps=\frac{1}{(\log\log\log n)^{1/25}}$, we get that $|F|<|E'|/2$, and so $|E'\setminus F|\geq   \frac{\phi^*\cdot \vol(G)}{200}$. Since there exists an 
 embedding of $H\setminus F$ into $\hat G[X^*]$ with congestion at most $\eta$, and $\eta\leq O\left(\frac{\phi^*}{c'\psi\log n}\right )$, we get that: 
 
 \[|E_{\hat G}(\hat Z,\hat Z')|\geq \frac{|E'|}{2\eta}\geq \Omega(c'\psi \vol(G)\log n).\]

Since $|E_G(Z,Z')|=|E_{\hat G}(\hat Z,\hat Z')|$, we reach a contradiction to our assumption that $|E_G(Z,Z')|< \psi\cdot \vol(G)$. We conclude that for every partition $(Z,Z')$ of $A$ with $\vol_G(Z),\vol_G(Z')\geq \vol(G)/100$, $|E_G(Z,Z')|\geq \psi\cdot \vol(G)$ must hold.

The running time of the algorithm is asymptotically bounded by the running time of the algorithm from \Cref{cor: balancd cut or near-expander}, which is in turn bounded by 
$O\left(\frac{m^{2+O(\eps)+o(1)}}{\rho}\right )\leq  O\left(m^{1+O(\eps)+o(1)}\right )\leq O(m^{1+o(1)})$, since $\rho=m^{1-2\eps}$ and $\eps=1/(\log\log\log m)^{1/25}$.

\subsection{Expander Decomposition -- Proof of \Cref{thm:expander decomp}}

In this section we prove \Cref{thm:expander decomp}. The proof is essentially identical to the proof of Corollary 6.1 from \cite{detbalanced}. The only difference is that we use our algorithm from \Cref{thm: balanced cut high cond} instead of the algorithm of \cite{detbalanced}.

We maintain a collection $\hset$ of disjoint vertex-induced subgraphs of $G$ that we call \emph{clusters}, which is partitioned into two subsets, set $\hset^A$ of \emph{active clusters}, and set $\hset^I$ of \emph{inactive clusters}. We ensure that for every inactive cluster $H\in \hset^I$, the conductance of $H$ is at least $\psi$. We also maintain a set $E'$ of ``deleted'' edges, that are not contained in any cluster in $\hset$. At the beginning of the algorithm, we let $\hset=\hset^A=\set{G}$, $\hset^I=\emptyset$, and $E'=\emptyset$. The algorithm proceeds as long $\hset^A\neq \emptyset$, and consists of iterations. 
	For convenience, we denote $\alpha=(\log n)^{8+o(1)}$, so that the algorithm from \Cref{thm: balanced cut high cond}, when applied to an $n$-vertex graph $G$ and some parameter $\psi'$, is guaranteed to return a cut $(A,B)$ in $G$ with  $|E_G(A,B)|\leq \psi'\cdot \alpha \cdot \vol(G)$.
	We set $\psi=\frac{\delta}{c\alpha\cdot \log n}$, where $c$ is the constant from the theorem statement. Clearly,  $\psi=\Omega\left (\frac{\delta}{(\log n)^{9+o(1)}}\right )$.
	
	In every iteration, we apply the algorithm from \Cref{thm: balanced cut high cond} to every graph $H\in \hset^A$, with the parameter $\psi$. Consider the cut $(A,B)$ in $H$ that the algorithm returns, with $|E_H(A,B)|\leq \alpha \psi\cdot \vol(H)\leq \frac{\delta \cdot \vol(H)}{c \log n}$. We add the edges of $E_H(A,B)$ to set $E'$. If $\vol_{H}(A),\vol_H(B)\ge \vol(H)/3$, then we replace $H$ with $H[A]$ and $H[B]$ in $\hset$ and in $\hset^A$. Otherwise, we are guaranteed that 
	$\vol_H(A)\geq 2\vol(H)/3$, and graph $H[A]$ has conductance at least $\psi$. Then we remove $H$ from $\hset$ and $\hset^A$, add $H[A]$ to $\hset$ and $\hset^I$, and add $H[B]$ to $\hset$ and $\hset^A$.

	When the algorithm terminates, $\hset^A=\emptyset$, and so every graph in $\hset$ has conductance at least $\psi$. Notice that in every iteration, the maximum volume of a graph in $\hset^A$ must decrease by a constant factor. Therefore, the number of iterations is bounded by $O(\log m)$. It is easy to verify that the number of edges added to set $E'$ in every iteration is at most $\alpha\cdot \psi\cdot \vol(G)\leq \frac{\delta\cdot \vol(G)}{c\log m}$. Therefore, by letting $c$ be a large enough constant, we can ensure that $|E'|\leq \delta\cdot \vol(G)$. The output of the algorithm is the partition $\Pi=\set{V(H)\mid H\in \hset}$ of $V$. From the above discussion, we obtain a valid $(\delta, \psi)$-expander decomposition, for $\psi=\Omega\left (\frac{\delta}{(\log m)^{9+o(1)}}\right )$.

	It remains to analyze the running time of the algorithm.  The running time of a single iteration is bounded by $O(m^{1+o(1)}/\psi)$, and, since the number of iterations is $O(\log m)$, the total running time of the algorithm is bounded by $O(m^{1+o(1)}/\psi)\leq O(m^{1+o(1)}/\delta) $.

\newpage
\appendix

\section{Proof of \Cref{lem: distancing to sparse cut}}
\label{sec:ball growning}

The proof uses a standard ball-growing technique. 
Let $H=G\setminus E'$.
Let $S$ be any set of vertices of $H$, such that not all vertices of $S$ are isolated in $H$. We define $L_0=S$, and, for integers $i>0$, we define $L_i=B_H(S,i)$. We refer to the sets $L_0,L_1,\ldots$ of vertices as \emph{BFS layers} defined with respect to $S$. For an index $1\leq i<\floor{d/2}$, we say that layer $L_i$ is \emph{acceptable} if $|\delta_H(L_i)|<\frac{\phi}{2}\cdot |E_H(L_i)|$. We use the following simple claim:

\begin{claim}\label{claim: ball growing calculation}
	Let $B=B_H(S,d)$, and assume that $|E_H(B)|<|E(H)|$. Then there is an index $1\leq i<\floor{d/2}-1$, such that layer $L_i$ is acceptable.
\end{claim}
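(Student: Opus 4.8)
The plan is a standard ball-expansion argument by contradiction. Suppose that no layer $L_i$ with $1\le i<\floor{d/2}-1$ is acceptable, i.e.\ $|\delta_H(L_i)|\ge \frac{\phi}{2}|E_H(L_i)|$ holds for every such index $i$. The key structural fact I would establish first is a one-step growth inequality: for every $i\ge 0$, $|E_H(L_{i+1})|\ge |E_H(L_i)|+|\delta_H(L_i)|$. Indeed, any edge counted in $\delta_H(L_i)$ has one endpoint in $L_i=B_H(S,i)$ and the other at $\dist_H$-distance at most $i+1$ from $S$, hence in $L_{i+1}$; so $\delta_H(L_i)\subseteq E_H(L_{i+1})$, while trivially $E_H(L_i)\subseteq E_H(L_{i+1})$, and $E_H(L_i)\cap\delta_H(L_i)=\emptyset$. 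Combining with non-acceptability gives $|E_H(L_{i+1})|\ge (1+\phi/2)\,|E_H(L_i)|$ for every $1\le i\le \floor{d/2}-2$.

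Next I would run the base case and iterate. Since not all vertices of $S$ are isolated in $H$, some $s\in S$ is incident to an edge of $H$ whose two endpoints both lie in $L_1=B_H(S,1)$, so $|E_H(L_1)|\ge 1$. Iterating the growth inequality yields $|E_H(L_{\floor{d/2}-1})|\ge (1+\phi/2)^{\floor{d/2}-2}$. Using the hypothesis $d\ge (32\log m)/\phi$ we have $\floor{d/2}-2\ge (16\log m)/\phi-3$, and from the elementary bound $\ln(1+x)\ge x-x^2/2$ applied with $x=\phi/2<1/4$ (valid since $\phi<1/2$) we get $\ln(1+\phi/2)\ge \phi/4$, hence $(1+\phi/2)^{(16\log m)/\phi}\ge e^{4\log m}=m^{4/\ln 2}>m^{5}$. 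Absorbing the harmless constant factor $(1+\phi/2)^{-3}\ge 1/2$ then gives $|E_H(L_{\floor{d/2}-1})|>m\ge |E(H)|$ for every relevant value of $m$ (the only nontrivial range being $m\ge 2$).

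Finally, for the contradiction, note that $d>2$ (again because $d\ge (32\log m)/\phi$), so $\floor{d/2}-1<d$ and therefore $L_{\floor{d/2}-1}=B_H(S,\floor{d/2}-1)\subseteq B_H(S,d)=B$. Hence $|E_H(L_{\floor{d/2}-1})|\le |E_H(B)|<|E(H)|$ by the hypothesis of the claim, contradicting the lower bound just derived. This contradiction shows that some layer $L_i$ with $1\le i<\floor{d/2}-1$ must be acceptable. The only real content is the exponential-growth estimate, which is routine; the main care needed is in the floor/additive-slack bookkeeping so that the $m^{5}$-versus-$m$ comparison goes through for all $m$ for which the claim is non-vacuous.
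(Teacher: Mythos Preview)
Your proof is correct and follows essentially the same approach as the paper: argue by contradiction, establish the one-step growth inequality $|E_H(L_{i+1})|\ge(1+\phi/2)|E_H(L_i)|$ from non-acceptability, seed with $|E_H(L_1)|\ge 1$, and iterate to exceed $m$. The paper is somewhat terser (it uses the bound $(1+1/k)^{k+1}>e$ and stops at $e^{2\log m}>m$ without invoking the hypothesis $|E_H(B)|<|E(H)|$ explicitly), but the argument is the same.
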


\begin{proof}
	Assume otherwise. Then for all $1\leq i<\floor{d/2}-1$, layer $L_i$ is not acceptable, and so $|\delta_H(L_i)|\geq \frac{\phi}{2}\cdot |E_H(L_i)|$. Therefore, $|E_H(L_{i+1})|\geq \left (1+\frac{\phi}{2}\right )|E_H(L_i)|$. Since not all vertices of $S$ are isolated in $H$, we get that $|E_H(L_1)|\geq 1$. Overall, we get that:

	\[|E_H(L_{\floor{d/2}-1})|\geq \left (1+\frac{\phi}{2}\right )^{\floor{d/2}-2}\geq  \left (1+\frac{\phi}{2}\right )^{(8\log m)/\phi}\geq e^{2\log m}>m,  \] 

(we have used the fact that for all $k>1$, $\left(1+\frac 1 k\right )^{k+1}>e$). This is impossible, so there must be an index $1\leq i<\floor{d/2}-1$ for which layer $L_i$ is acceptable.
\end{proof}

We are now ready to complete the proof of \Cref{lem: distancing to sparse cut}. 
We denote the BFS layers in graph $H$ defined with respect to $X$ by $L_0',L_1',\ldots$, and BFS layers defined with respect to $Y$ by $L_0'',L_1'',\ldots$. We 
run two algorithms in parallel. The first algorithm performs a BFS search in $H$ starting from $X$ to compute layers $L_0',L_1',\ldots$ one by one. When a layer $L_i'$ is computed, the algorithm checks whether it is acceptable. The second algorithm similarly performs a BFS search starting from $Y$ to compute layers $L_0'',L_1'',\ldots$ one by one, and, when a layer $L_i''$ is computed, the algorithm checks whether it is acceptable. The two algorithms run in parallel, so that at every time point both algorithms have explored the same number of edges of $H$. The moment one of the two algorithms finds an acceptable layer, we terminate both algorithms.

Assume w.l.o.g. that the first acceptable layer that was computed by either algorithm is $L_i'$. We then set $X'=L_i'$ and $Y'=V(G)\setminus X'$. 
Note that, from \Cref{claim: ball growing calculation}, $i<d/2$ must hold, so $X'\cap Y=\emptyset$ (as $\dist_H(X,Y)\geq d$).  Clearly, $X\subseteq X'$ and $Y\subseteq Y'$.
From the definition of an acceptable layer, we are guaranteed that $|E_H(X',Y')|=|\delta_H(X')|<\frac{\phi}{2}\cdot |E_H(L'_i)|=\frac{\phi}{2}\cdot |E_H(X')|$. Since we ensured that the number of edges that the two BFS searches explore at every time point is the same, we are guaranteed that $|E_H(X')|\leq |E(Y')|$. The running time of the algorithm is bounded by $O(|E_H(X')|+|\delta_H(X')|+n)\leq O(|E_H(X')|+n)\leq O(|E_G(X')|+n)$.

Lastly, observe that $|E_G(X')|\geq |E_H(X')|$ and $|E_G(Y')|\geq |E_H(Y')|$. Furthermore:

\[|E_G(X',Y')|\leq |E'|+|E_H(X',Y')|\leq \frac{\phi}{4}|X|+\frac{\phi}{2}\cdot \min\set{|E_H(X')|,|E_H(Y')|}\leq \phi\cdot \min\set{|E_G(X')|,|E_G(Y')|},
\]

since $|X|=|Y|$, and graph $G$ is connected, so $|X|\leq 2 \min\set{|E_G(X')|,|E_G(Y')|}$ .

\section{Proof of \Cref{lem: single stage new}}
\label{sec: cutting lemma}
	The proof of the lemma is a simple application of the \CMG. 
	Let $G'=\hat G[V']$, and let $n'=|V(G')|$. If $n'$ is an odd integer, then we add an extra vertex $v_0$ to $G'$, and connect it with an edge to an arbitrary vertex of $G'$. 
	We let $H$ be a graph with $V(H)=V(G')$ and $E(H)=\emptyset$. 
	
	We will execute the \CMG on graph $H$, while simultaneously computing an embedding of $H$ into $G'$. Some of the edges of $H$ will be designated as fake edges and added to the set $F$ of fake edges. These edges do not need to be embedded into $G'$. Initially, $F=\emptyset$.

	We  perform a number of iterations, that correspond to the \CMG. In every iteration $i$, we will add a matching $\mset_i$ to graph $H$, and a set $F_i\subseteq \mset_i$ of fake edges to set $F$. We will also implicitly maintain embedding $\pset_i$ of the set $\mset_i\setminus F_i$ of edges into $G'$ (in other words, the paths in $\pset_i$ are not computed explicitly, but are only guaranteed to exist). We will ensure that the number of iterations is bounded by $O(\log n')\leq O(\log n)$, so the maximum vertex degree in $H$ is always bounded by $\Delta_H\leq O(\log n)$. At the beginning of the algorithm, graph $H$ contains the set $V[G']$ of vertices and no edges. We now describe the execution of the $i$th iteration.

	In order to execute the $i$th iteration, we apply Algorithm  from \Cref{thm: new cut player} to the current graph $H$, with parameter $\eps$ remaining unchanged. 
	Notice that, since $\eps=\frac{1}{(\log\log\log n)^{1/25}}$, and $m$ is large enough, $\eps>\frac{8}{(\log m)^{1/25}}$ holds. Since $|V(G')|\geq \frac{2\hat n}{3}\geq m$, we are guaranteed that $\eps>\frac{2}{(\log n')^{1/25}}$, so that condition of \Cref{thm: new cut player}  holds.

	Assume first that the output of the algorithm from \Cref{thm: new cut player} is a cut $(A_i,B_i)$ in $H$ with $|A_i|,|B_i|\geq n'/4$ and $|E_H(A,B)|\leq n'/100$. We treat this partition as the move of the Cut Player. Assume w.l.o.g. that $|A_i|\leq |B_i|$. Next, we compute an arbitrary partition $(A'_i,B'_i)$ of $V(G')$ with $|A'_i|=|B'_i|$ and $A'_i\subseteq A_i$. 
	We apply Algorithm \matchorcut from \Cref{thm:efficient matching player} to graph $G'$, the sets $A'_i,B'_i$ of vertices, a sparsity parameter $\phi'=\phi/c$, where $c$ is a large enough constant, and parameter $z=8\rho$. Next, we consider two cases. The first case happens if the algorithm returns a cut $(X,Y)$ in $G'$, with $|X|,|Y|\geq z/2\geq 4\rho$, and $|E_{G'}(X,Y)|\leq \phi'\cdot\min\set{|X|,|Y|}=\frac{\phi}{c}\cdot \min\set{|X|,|Y|}$. Once we delete the extra vertex $v_0$ (if it exists), we obtain a cut $(X',Y')$ in the original graph $G'$, with $|X'|,|Y'|\geq 2\rho$ and $|E_{G'}(X',Y')|\leq \frac{2\phi}{c}\cdot \min\set{|X'|,|Y'|}$. Next, we apply Algorithm \makecanonical from \Cref{lem: degree reduction balanced cut case}, to compute a canonical cut $(X'',Y'')$ in $G'$, such that:
	 $|X''|\geq |X'|/2\geq \rho$, $|Y''|\geq |Y'|/2\geq \rho$, and: 
	 
\[\begin{split}
|E_{\hat G}(X'',Y'')|& \leq O(|E_{\hat G}(X',Y')|)\\
&=O(|E_{G'}(X',Y')|)\\
&\leq O\left(\frac{\phi}{c}\cdot \min\set{|X'|,|Y'|}\right )\\
&\leq
\phi\cdot \min\set{|X''|,|Y''|},
\end{split}\]

if $c$ is sufficiently large. We then terminate the algorithm and return the partition $(X'',Y'')$ of $V'$. From the above discussion, it has all required properties.

Consider now the second case, where the algorithm from \Cref{thm:efficient matching player} computes a matching $\mset'_i\subseteq A'_i\times B'_i$ with $|\mset'_i|\geq |A'_i|-z=|A'_i|-8\rho$, such that there exists a set $\pset'_i=\set{P(a,b)\mid (a,b)\in \mset'_i}$ of paths in $G'$, where for each pair $(a,b)\in \mset'_i$, path $P(a,b)$ connects $a$ to $b$, and the paths in $\pset'_i$ cause congestion at most $O\left (\frac{ \log n}{\phi}\right )$.
	We let $A''_i\subseteq A'_i$, $B''_i\subseteq B'_i$ be the sets of vertices that do not participate in the matching $\mset'_i$, and we let $F_i$ be an arbitrary perfect matching between these vertices. Lastly, we set $\mset_i=\mset'_i\cup F_i$. We view the matching $\mset_i$ as the response of the matching player in the \CMG. We add the edges of $M_i$ to $H$, and continue to the next iteration.
	Notice that $|F_i|\leq 8\rho$.

	We perform the iterations as described above, until the algorithm from \Cref{thm: new cut player} returns a subset $S\subseteq V(G')$ of at least $|V(G')|/2$ vertices, such that graph $H[S]$ is $\phi^*$-expander, for $\phi^*\geq \Omega\left ( \frac{1}{2^{O(1/\eps^6)}\cdot \Delta_H^3\cdot \log^2 n }\right )\geq \Omega\left ( \frac{1}{2^{O(1/\eps^6)}\cdot \log^5 n }\right ) $. Recall that \Cref{thm:KKOV-new} guarantees that this must happen after at most $O(\log n)$ iterations. We then perform one last iteration, whose index we denote by $q$.

	We let $B_q=S$ and $A_q=V(G)\setminus S$, and apply Algorithm  \matchorcut from \Cref{thm:efficient matching player} to the sets $A_q,B_q$ of vertices,  a sparsity parameter $\phi'=\phi/c$ and parameter $z=8\rho$. As before, we consider two cases. The first case happens if the algorithm returns a cut $(X,Y)$ in $G$, with $|X|,|Y|\geq z/2\geq 4\rho$ and $|E_{G'}(X,Y)|\leq \phi'\cdot\min\set{|X|,|Y|}$. In this case, we compute a partition $(X'',Y'')$ of $V(G')\setminus\set{v_0}$ exactly as before, so that both $X'',Y''$ are canonical sets of vertices of cardinality at least $\rho$ each, and $|E_{\hat G}(X'',Y'')|\leq
	\phi\cdot \min\set{|X''|,|Y''|}$. We return the cut $(X'',Y'')$ and terminate the algorithm.  
	In the second case, the algorithm from \Cref{thm:efficient matching player} computes a matching $\mset'_q\subseteq A'_q\times B'_q$ with $|\mset'_q|\geq |A_q|-z=|A_q|-8\rho$, such that there exists a set $\pset'_q=\set{P(a,b)\mid (a,b)\in \mset'_q}$ of paths in $G'$, where for each pair $(a,b)\in \mset'_q$, path $P(a,b)$ connects $a$ to $b$, and the paths in $\pset'_q$ cause congestion at most $O\left (\frac{ \log n}{\phi}\right )$. As before, we let $A'_q\subseteq A_q$, $B'_q\subseteq B_q$ be the sets of vertices that do not participate in the matching $\mset'_q$, and we let $F_q$ be an arbitrary matching that connects every vertex of $A'_q$ to a distinct vertex of $B'_q$ (such a matching must exist since $|A_q|\leq |B_q|$). We then set $\mset_q=\mset'_q\cup \mset''_q$, and we add the edges of $\mset_q$ to graph $H$.

	From now on we assume that the algorithm never terminated with a partition $(X'',Y'')$ of $V(G')\setminus\set{v_0}$, where both $X'',Y''$ are canonical sets of vertices of cardinality at least $\rho$ each, and $|E_{\hat G}(X'',Y'')|\leq
	\phi\cdot \min\set{|X''|,|Y''|}$. Note that, from \Cref{obs: exp plus matching is exp}, the final graph $H$ is a $\phi^*/2$-expander, for $\phi^*\geq \Omega\left ( \frac{1}{2^{O(1/\eps^6)}\cdot \log^5 n }\right ) $. 
	Let $F=\bigcup_iF_i$. Since, for all $i$, $|F_i|\leq 8\rho$, and since, from \Cref{thm:KKOV-new}, the number of iterations is bounded by $O(\log n)$, we get that $|F|\leq O(\rho\log n)$. 
	Lastly, consider the set $\pset=\bigcup_i\pset_i$ of paths in graph $G'$. It is immediately to verify that the paths in $\pset$ embed graph $H\setminus F$ into $G'$. Since every set $\pset_i$ of paths causes congestion at most $O\left (\frac{ \log n}{\phi}\right )$, the paths in $\pset$ cause congestion at most $O\left (\frac{ \log^2 n}{\phi}\right )$ in $G'$.
	
	One remaining subtlety is that graph $H$, as well as current graph $G'$ may contain the extra vertex $v_0$, that needs to be removed from both graphs. Recall that the degree of $v_0$ in graph $H$ is at most $O(\log n)$. Let $u_1,\ldots,u_r$ denote the neighbor vertices of $v_0$ in $H$. Let $H'$ be obtained from graph $H$ by deleting vertex $v_0$ from it, and adding, for every pair $u_j,u_{j'}$ of neighbor vertices of $v_0$, and edge $(u_j,u_{j'})$ connecting them. Each such new edge is added to the set $F$ of fake edges. It is easy to verify that $H'$ remains a $\phi^*/2$-expander. Since $\rho\geq \log  n$, while the degree of $v_0$ in $H$ is at most $O(\log n)$, $|F|\leq O(\rho\log n)$ continues to hold, and all vertex degrees in $H'$ are at most $O(\log n)$. 
	Since vertex $v_0$ has degree $1$ in $G'$, we can assume that it does not lie on any path in $\set{P(e)\mid e\in E(H')\setminus F}$, and so $v_0$ can be safely deleted from $G'$ as well. The output of the algorithm in this case is graph $H'$ and set $F$ of its edges.
	
	Lastly, we bound the running time of the algorithm. The algorithm consists of $O(\log n)$ iterations. Every iteration employs the algorithm from 
	\Cref{thm: new cut player}, whose running time is  $O\left(|E(H)|^{1+O(\eps)}\cdot \Delta_H^7\right )\leq O\left(n^{1+O(\eps)}\right ) $, since $\Delta_H\leq O(\log n)$, and $\log^8n<n^{4\eps}$ (since, as we have observed, $n^{2\eps}\geq m^{\eps}>\log^4 n$).
	Additionally, in every iteration we use Algorithm \matchorcut from \Cref{thm:efficient matching player}, whose running time is $O\left (m^{1+o(1)}\right)$, and Algorithm \makecanonical from \Cref{lem: degree reduction balanced cut case}, whose running time is $O(m)$. Therefore, the total running time is $O\left (m^{1+O(\eps)+o(1)}\right)$.

\bibliographystyle{alpha}

\bibliography{APSP-expanders}

\newcommand{\etalchar}[1]{$^{#1}$}
\def\cprime{$'$} \def\cprime{$'$}
\begin{thebibliography}{DJWW22}

\bibitem[ABF22]{abboud2022stronger}
Amir Abboud, Karl Bringmann, and Nick Fischer.
\newblock Stronger 3-sum lower bounds for approximate distance oracles via
  additive combinatorics.
\newblock {\em arXiv preprint arXiv:2211.07058}, 2022.

\bibitem[ABKZ22]{abboud2022hardness}
Amir Abboud, Karl Bringmann, Seri Khoury, and Or~Zamir.
\newblock Hardness of approximation in p via short cycle removal: Cycle
  detection, distance oracles, and beyond.
\newblock {\em arXiv preprint arXiv:2204.10465}, 2022.

\bibitem[ACL07]{AndersenCL07}
Reid Andersen, Fan R.~K. Chung, and Kevin~J. Lang.
\newblock Using pagerank to locally partition a graph.
\newblock {\em Internet Mathematics}, 4(1):35--64, 2007.

\bibitem[ADK22]{agassy2022expander}
Daniel Agassy, Dani Dorfman, and Haim Kaplan.
\newblock Expander decomposition with fewer inter-cluster edges using a
  spectral cut player.
\newblock {\em arXiv preprint arXiv:2205.10301}, 2022.

\bibitem[Alo86]{Alon86}
Noga Alon.
\newblock Eigenvalues and expanders.
\newblock {\em Combinatorica}, 6(2):83--96, 1986.

\bibitem[ARV09]{ARV}
Sanjeev Arora, Satish Rao, and Umesh~V. Vazirani.
\newblock Expander flows, geometric embeddings and graph partitioning.
\newblock {\em J. ACM}, 56(2), 2009.

\bibitem[BGS22]{bernstein2022deterministic}
Aaron Bernstein, Maximilian~Probst Gutenberg, and Thatchaphol Saranurak.
\newblock Deterministic decremental sssp and approximate min-cost flow in
  almost-linear time.
\newblock In {\em 2021 IEEE 62nd Annual Symposium on Foundations of Computer
  Science (FOCS)}, pages 1000--1008. IEEE, 2022.

\bibitem[CGL{\etalchar{+}}20]{detbalanced}
Julia Chuzhoy, Yu~Gao, Jason Li, Danupon Nanongkai, Richard Peng, and
  Thatchaphol Saranurak.
\newblock A deterministic algorithm for balanced cut with applications to
  dynamic connectivity, flows, and beyond.
\newblock In {\em 2020 IEEE 61st Annual Symposium on Foundations of Computer
  Science (FOCS)}, pages 1158--1167. IEEE, 2020.
\newblock Full version at arXiv:1910.08025.

\bibitem[Chu21]{APSP-previous}
Julia Chuzhoy.
\newblock Decremental all-pairs shortest paths in deterministic near-linear
  time.
\newblock In {\em Proceedings of the 53rd Annual ACM SIGACT Symposium on Theory
  of Computing}, pages 626--639, 2021.
\newblock Full version at arXiv:2109.05621.

\bibitem[CK19]{fast-vertex-sparsest}
Julia Chuzhoy and Sanjeev Khanna.
\newblock A new algorithm for decremental single-source shortest paths with
  applications to vertex-capacitated flow and cut problems.
\newblock In {\em Proceedings of the 51st Annual ACM SIGACT Symposium on Theory
  of Computing}, pages 389--400, 2019.

\bibitem[CKL{\etalchar{+}}22]{max-flow}
Li~Chen, Rasmus Kyng, Yang~P Liu, Richard Peng, Maximilian~Probst Gutenberg,
  and Sushant Sachdeva.
\newblock Maximum flow and minimum-cost flow in almost-linear time.
\newblock {\em arXiv preprint arXiv:2203.00671}, 2022.

\bibitem[CS19]{ChangS19}
Yi{-}Jun Chang and Thatchaphol Saranurak.
\newblock Improved distributed expander decomposition and nearly optimal
  triangle enumeration.
\newblock In {\em Proceedings of the 2019 {ACM} Symposium on Principles of
  Distributed Computing, {PODC} 2019, Toronto, ON, Canada, July 29 - August 2,
  2019.}, pages 66--73, 2019.

\bibitem[CS21]{APSP-old}
Julia Chuzhoy and Thatchaphol Saranurak.
\newblock Deterministic algorithms for decremental shortest paths via layered
  core decomposition.
\newblock In {\em Proceedings of the 2021 ACM-SIAM Symposium on Discrete
  Algorithms (SODA)}, pages 2478--2496. SIAM, 2021.

\bibitem[CZ22]{APSP-new}
Julia Chuzhoy and Ruimin Zhang.
\newblock A new deterministic algorithm for fully dynamic all-pairs shortest
  paths, 2022.
\newblock Manuscript.

\bibitem[DHZ00]{DorHZ00}
Dorit Dor, Shay Halperin, and Uri Zwick.
\newblock All-pairs almost shortest paths.
\newblock {\em {SIAM} J. Comput.}, 29(5):1740--1759, 2000.

\bibitem[Din06]{Dinitz}
Yefim Dinitz.
\newblock Dinitz' algorithm: The original version and {Even's} version.
\newblock In {\em Theoretical computer science}, pages 218--240. Springer,
  2006.

\bibitem[DJWW22]{dalirooyfard2022approximation}
Mina Dalirooyfard, Ce~Jin, Virginia~Vassilevska Williams, and Nicole Wein.
\newblock Approximation algorithms and hardness for $ n $-pairs shortest paths
  and all-nodes shortest cycles.
\newblock {\em arXiv preprint arXiv:2204.03076}, 2022.

\bibitem[ES81]{EvenS}
Shimon Even and Yossi Shiloach.
\newblock An on-line edge-deletion problem.
\newblock {\em Journal of the ACM (JACM)}, 28(1):1--4, 1981.

\bibitem[Fle00]{Fleischer00}
Lisa Fleischer.
\newblock Approximating fractional multicommodity flow independent of the
  number of commodities.
\newblock {\em {SIAM} J. Discrete Math.}, 13(4):505--520, 2000.

\bibitem[GG81]{GabberG81}
Ofer Gabber and Zvi Galil.
\newblock Explicit constructions of linear-sized superconcentrators.
\newblock {\em J. Comput. Syst. Sci.}, 22(3):407--420, 1981.
\newblock announced at FOCS'79.

\bibitem[GK98]{GK98}
Naveen Garg and Jochen K{\"{o}}nemann.
\newblock Faster and simpler algorithms for multicommodity flow and other
  fractional packing problems.
\newblock In {\em 39th Annual Symposium on Foundations of Computer Science,
  {FOCS} '98, November 8-11, 1998, Palo Alto, California, {USA}}, pages
  300--309, 1998.

\bibitem[GLN{\etalchar{+}}19]{GaoLNPSY19}
Yu~Gao, Jason Li, Danupon Nanongkai, Richard Peng, Thatchaphol Saranurak, and
  Sorrachai Yingchareonthawornchai.
\newblock Deterministic graph cuts in subquadratic time: Sparse, balanced, and
  k-vertex.
\newblock {\em arXiv preprint arXiv:1910.07950}, 2019.

\bibitem[GVY95]{GVY}
N.~Garg, V.V. Vazirani, and M.~Yannakakis.
\newblock Approximate max-flow min-(multi)-cut theorems and their applications.
\newblock {\em SIAM Journal on Computing}, 25:235--251, 1995.

\bibitem[HHG22]{hop-constr-CMG}
Bernhard Haeupler, Jonas Huebotter, and Mohsen Ghaffari.
\newblock A cut-matching game for constant-hop expanders.
\newblock {\em arXiv preprint arXiv:2211.11726}, 2022.

\bibitem[HK95]{HenzingerKing}
Monika~Rauch Henzinger and Valerie King.
\newblock Fully dynamic biconnectivity and transitive closure.
\newblock In {\em Foundations of Computer Science, 1995. Proceedings., 36th
  Annual Symposium on}, pages 664--672. IEEE, 1995.

\bibitem[HKNS15]{HenzingerKNS15}
Monika Henzinger, Sebastian Krinninger, Danupon Nanongkai, and Thatchaphol
  Saranurak.
\newblock Unifying and strengthening hardness for dynamic problems via the
  online matrix-vector multiplication conjecture.
\newblock In {\em Proceedings of the forty-seventh annual ACM symposium on
  Theory of computing}, pages 21--30, 2015.

\bibitem[HRG22]{haeupler2022hop}
Bernhard Haeupler, Harald R{\"a}cke, and Mohsen Ghaffari.
\newblock Hop-constrained expander decompositions, oblivious routing, and
  distributed universal optimality.
\newblock In {\em Proceedings of the 54th Annual ACM SIGACT Symposium on Theory
  of Computing}, pages 1325--1338, 2022.

\bibitem[HWZ20]{haeupler2020network}
Bernhard Haeupler, David Wajc, and Goran Zuzic.
\newblock Network coding gaps for completion times of multiple unicasts.
\newblock In {\em 2020 IEEE 61st Annual Symposium on Foundations of Computer
  Science (FOCS)}, pages 494--505. IEEE, 2020.

\bibitem[Kar08]{Karakostas08}
George Karakostas.
\newblock Faster approximation schemes for fractional multicommodity flow
  problems.
\newblock {\em {ACM} Trans. Algorithms}, 4(1):13:1--13:17, 2008.

\bibitem[KKOV07]{KhandekarKOV2007cut}
Rohit Khandekar, Subhash Khot, Lorenzo Orecchia, and Nisheeth~K Vishnoi.
\newblock On a cut-matching game for the sparsest cut problem.
\newblock {\em Univ. California, Berkeley, CA, USA, Tech. Rep.
  UCB/EECS-2007-177}, 6(7):12, 2007.

\bibitem[KMP12]{KelnerMP12}
Jonathan~A. Kelner, Gary~L. Miller, and Richard Peng.
\newblock Faster approximate multicommodity flow using quadratically coupled
  flows.
\newblock In {\em Proceedings of the 44th Symposium on Theory of Computing
  Conference, {STOC} 2012, New York, NY, USA, May 19 - 22, 2012}, pages 1--18,
  2012.

\bibitem[KRV09]{KRV}
Rohit Khandekar, Satish Rao, and Umesh Vazirani.
\newblock Graph partitioning using single commodity flows.
\newblock {\em Journal of the ACM (JACM)}, 56(4):19, 2009.

\bibitem[LR99]{LR}
F.~T. Leighton and S.~Rao.
\newblock Multicommodity max-flow min-cut theorems and their use in designing
  approximation algorithms.
\newblock {\em Journal of the ACM}, 46:787--832, 1999.

\bibitem[Mad10]{Madry10_stoc}
Aleksander Madry.
\newblock Faster approximation schemes for fractional multicommodity flow
  problems via dynamic graph algorithms.
\newblock In {\em Proceedings of the 42nd {ACM} Symposium on Theory of
  Computing, {STOC} 2010, Cambridge, Massachusetts, USA, 5-8 June 2010}, pages
  121--130, 2010.

\bibitem[Mar73]{Margulis}
G.~A. Margulis.
\newblock Explicit construction of concentrators.
\newblock {\em Problemy Peredafi Iqfiwmacii}, 9(4):71--80, 1973.
\newblock (English translation in Problems Inform. Transmission (1975)).

\bibitem[NS17]{NanongkaiS17}
Danupon Nanongkai and Thatchaphol Saranurak.
\newblock Dynamic spanning forest with worst-case update time: adaptive, {L}as
  {V}egas, and ${O}(n^{1/2 - \epsilon})$-time.
\newblock In {\em Proceedings of the 49th Annual {ACM} {SIGACT} Symposium on
  Theory of Computing, {STOC} 2017, Montreal, QC, Canada, June 19-23, 2017},
  pages 1122--1129, 2017.

\bibitem[ST04]{SpielmanT04}
Daniel~A. Spielman and Shang{-}Hua Teng.
\newblock Nearly-linear time algorithms for graph partitioning, graph
  sparsification, and solving linear systems.
\newblock In {\em {STOC}}, pages 81--90. {ACM}, 2004.

\bibitem[SW19]{expander-pruning}
Thatchaphol Saranurak and Di~Wang.
\newblock Expander decomposition and pruning: Faster, stronger, and simpler.
\newblock In {\em Proceedings of the Thirtieth Annual {ACM-SIAM} Symposium on
  Discrete Algorithms, {SODA} 2019, San Diego, California, USA, January 6-9,
  2019}, pages 2616--2635, 2019.

\bibitem[Wul17]{Wulff-Nilsen17}
Christian Wulff{-}Nilsen.
\newblock Fully-dynamic minimum spanning forest with improved worst-case update
  time.
\newblock In {\em Proceedings of the 49th Annual {ACM} {SIGACT} Symposium on
  Theory of Computing, {STOC} 2017, Montreal, QC, Canada, June 19-23, 2017},
  pages 1130--1143, 2017.

\end{thebibliography}

\end{document}